\begin{document}

\title{Hitting Topological Minors is FPT}

\author{
Fedor V. Fomin\thanks{Department of Informatics, University of Bergen, Norway. \texttt{fomin@ii.uib.no}}
 \and Daniel Lokshtanov\thanks{University of California Santa Barbara, USA. \texttt{daniello@ucsb.edu}}
 \and Fahad Panolan\thanks{Department of Computer Science and Engineering, IIT Hyderabad, India. \texttt{fahad@iith.ac.in}}
 \and  Saket Saurabh\thanks{The Institute of Mathematical Sciences, HBNI, Chennai, India. \texttt{saket@imsc.res.in}}
 \and Meirav Zehavi\thanks{Ben-Gurion University, Beersheba, Israel. \texttt{meiravze@bgu.ac.il}}
}

\maketitle

\thispagestyle{empty}

\begin{abstract} 
In the \TMHlong\ (\TMH)  problem input consists of an undirected graph  $G$, a family of undirected graphs   ${\cal F}$ and an integer $k$. The task is to determine whether $G$ contains a set of vertices $S$ of size at most $k$, such that the graph $G\setminus S$ obtained from $G$ by removing the vertices of $S$, contains no graph from  ${\cal F}$  as a topological minor. We give an algorithm for \TMH{} with running time $f(h^\star,k)\cdot |V(G)|^{4}$. Here $h^\star$ is the maximum size of a graph in ${\cal F}$ and $f$ is a computable function of $h^\star$ and $k$. This is the first fixed parameter tractable algorithm (FPT) for the problem. In fact, even for the restricted case of planar inputs the first FPT algorithm was found only recently by Golovach et al.~[SODA 2020].  For this case we improve upon the algorithm of Golovach et al.~[SODA 2020] by designing an FPT algorithm with explicit dependence on $k$ and $h^\star$. 

\end{abstract}

\newpage
\pagestyle{plain}
\setcounter{page}{1}

 \newcommand{\fd}{{\sc $\mathcal{F}$-Minor Deletion}}
\newcommand{\isobound}{\beta}
\section{Introduction}\label{sec:intro} 

Many   important graph optimization problems can be phrased as $\Pi$-{\sc Deletion} problems. Here input is a graph $G$ on $n$ vertices and the task is to find a minimum size vertex subset $S$ such that the graph $G \setminus S$ obtained from $G$ by removing $S$ and incident edges has the property $\Pi$. 
By a well known result of Lewis and Yannakakis~\cite{LewisY80} almost all $\Pi$-{\sc Deletion} problems are \NP-complete. For this reason the study of such problems has mostly been from the perspective of methods for coping with computational intractability, such as approximation~\cite{BafnaBF99,Bar-YehudaE81,BarYGJ98,CaiDZ00,ENSS98,FominLMS12,
Fujito98,DBLP:conf/soda/GuptaLLM019,KawarabayashiS17,DBLP:journals/corr/abs-1809-08437,LundY93,
NemT74,vazirani2013approximation,williamson2011design,Yannakakis79,Yannakakis94}, exact~\cite{BliznetsFPV16,FominGK09,FominTV15,FominGLS19,FominK10,XiaoN17}, or parameterized algorithms~\cite{Cai1996,CaoM15,ChenLLOR08,FellowsL88,FlumGroheBook,FominLMS12,JansenLS14,KLPRRSS16,Nie06,ReedSV04,
DBLP:journals/corr/abs-1906-12298,paramalgoCFKLMPPS,ParameterizedComplexityBook}.
In this paper we focus on parameterized algorithms for $\Pi$-{\sc Deletion} problems: more concretely, for every property $\Pi$ the aim is to design an algorithm for $\Pi$-{\sc Deletion} that given a graph $G$ and integer $k$, determines in time $f(k)n^{\OO(1)}$ time whether a solution set $S$ of size at most $k$ exists. Such algorithms are called {\em fixed parameter tractable} (FPT). We refer to the textbooks~\cite{paramalgoCFKLMPPS,ParameterizedComplexityBook,FlumGroheBook,Nie06} for an introduction to parameterized algorithms and complexity. 

Some of the most attractive results in parameterized complexity, called {\em meta-theorems}, simultaneously establish fixed parameter tractability of entire classes of problems~\cite{ArnborgLS91,Courcelle90,CourcelleMR01,FellowsL88,FrickG01, 
GajarskyHLOORS15,GroheKS17,RobertsonS04,RobertsonS95b}. Most of these results are of the form: problems expressible in certain fragment  of logic are FPT on some restricted classes of graphs, such as graphs of bounded treewidth or cliquewidth or satisfy certain sparsity conditions. 
Unfortunately it appears unlikely that a meta-theorem on this form will apply to wide classes of vertex deletion problems on general graphs, since even very restricted logic (such as FO logic) already capture intractable problems  like {\sc Independent Set} or {\sc Dominating Set} (i.e. the problems that do not admit algorithms with running time of the form $f(k)n^{\OO(1)}$, unless an unlikely collapse of the W-hierarchy occurs).

%
%

On general graphs,   one of the  deepest and prominent  generic results about vertex deletion problems is that $\Pi$-{\sc Deletion} is (non-uniformly) FPT for every {\em minor-closed} property $\Pi$~\cite{FellowsL88,RobertsonS04}.  
 Here a graph $H$ is a \emph{minor} of a graph $G$ if $H$ can be obtained from $G$ by vertex and edge deletions and edge contractions. A property   is minor-closed if every minor of every graph that has the property also has the property. 
This meta-algorithm rests on two pillars: first, by the Graph Minors Theorem~\cite{RobertsonS04},  the set of undirected graphs is {\em well-quasi ordered} by the minor relation. Therefore, for every minor-closed family $\Pi$,  there exists a {\em finite} list of graphs ${\cal F}$ (called  {\em forbidden minors}) such that a graph $G$ is in $\Pi$ if and only if no graph $H \in {\cal F}$ is a minor of $G$. Second, by another celebrated result of Robertson and Seymour~\cite{RobertsonS95b}, there is an algorithm that given graphs $G$ and $H$ determines whether $H$ is a minor of $G$ in time $f(H) \cdot n^{3}$. Both of these results are cornerstones of the celebrated Graph Minors project of Robertson and Seymour.
Together the two results yield for every $k$ and minor closed family $\Pi$ an algorithm for $\Pi$-{\sc Deletion} with running time $f(\Pi,k)\cdot n^{3}$: the family of graphs having a solution set $S$ of size at most $k$ is itself a minor closed property. Hence it has a finite set of forbidden minors (that depends only on $\Pi$ and $k$), and whether $G$ contains any one of these forbidden minors can be checked in 
$f(\Pi,k) \cdot n^{3}$ time. This result, first observed by Fellows and Langston~\cite{FellowsL88}, together with the techniques developed for it~\cite{RobSeym91,RobertsonS86,RobertsonS-V,RobertsonS03,RobertsonS95b,RobertsonS04},  has been a driving force for a wealth of research within parameterized algorithms~\cite{FominLST10,DemaineHaj05,DemaineFHT05,GroheKR13,DBLP:conf/stoc/GroheKMW11,DBLP:journals/jct/KawarabayashiKR12,KakimuraK15,KawarabayashiK12,KawarabayashiRW11,KawarabayashiR10,Kawarabayashi09}. 

\medskip
Given the significance of the result for $\Pi$-{\sc Deletion} for minor-closed families, it is very natural to ask whether the success can be repeated. More concretely, our starting point is the following question. 
\begin{mdframed}[backgroundcolor=yellow!20] 
Can $\Pi$-{\sc Deletion} be shown to be \FPT{} for properties $\Pi$ that are not (necessarily) closed under taking minors, but instead closed under a different natural graph containment relation?
\end{mdframed}





In this paper we answer this question in the affirmative 
by studying deletion to families characterized by forbidding a finite family of  {\em topological minors}.
 A graph $H$ is a topological minor of $G$ if $H$ can be obtained from $G$ by deleting vertices or edges, and contracting edges {\em incident to at least one vertex of degree precisely $2$}. Closure under topological minors and forbidden topological minors are defined just as for minors. More precisely we study the following problem.

\medskip

\defparproblem{\TMHlong\ (\TMH)}{An  undirected graph $G$,  a family of undirected graphs   ${\cal F}$ such that every graph in  ${\cal F}$ has at most $h^\star$ vertices, and an integer $k$.}{$k+h^\star$}{Is there a vertex subset $S\subseteq V(G)$ of size $k$ such that $G \setminus S$ contains  no  graph from ${\cal F}$ as a topological minor?}

\medskip

Very recently Golovach et al.~\cite{GolovachST19} proved that the special case of \TMH\ where the input graph $G$ is required to be planar is \FPT{}. They conjecture that this is the case for also for general input graphs $G$. Our main result is a proof of this conjecture. 



\begin{theorem}\label{thm:mainIntro}
There exists an algorithm for \TMH{} with running time 
$f(k,h^\star)\cdot n^{4}$, for a computable function $f$.
\end{theorem}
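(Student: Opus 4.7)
The plan follows the classical irrelevant vertex strategy from the Graph Minors project, adapted to the richer setting of topological minors. My approach rests on two interlocking subroutines, hinted at by the macros in the preamble: \FindFolio, which given a graph $G$ computes its \emph{folio}---a bounded-size summary of all topological minor models of graphs on at most $h^\star$ vertices that appear in $G$---and \HitFolio, which uses this summary to output a hitting set of size at most $k$ or certify that none exists. Since the folio has size bounded by a function of $h^\star$ only, \HitFolio{} reduces to a bounded hitting problem solvable by branching on branch vertices of each cached model, at cost $g(k,h^\star)$.

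The engine of \FindFolio{} is a win/win recursion on the host graph $G$. I first attempt to compute a tree decomposition of $G$ of width at most some threshold $w = w(h^\star)$. If $\tw(G) \le w$, the folio is computed by dynamic programming along the decomposition: each bag stores every partial topological minor model of a pattern on at most $h^\star$ vertices restricted to how it interfaces with the bag. Otherwise, by the Excluded Grid Theorem together with the (algorithmic) Flat Wall Theorem, one obtains in time $\OO(n^3)$ either (i) a large clique minor, handled by a dedicated \cFindFolio{} exploiting the clique's universality for topological minors of bounded size, or (ii) a large flat wall with a bounded apex set. In case (ii)---the crux of the argument---\fFindFolio{} locates an \emph{irrelevant vertex} $v$ in the interior of the flat wall: any topological minor model of a pattern on at most $h^\star$ vertices that uses $v$ can be rerouted through the wall to avoid it. We then recurse on $G \setminus \{v\}$. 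Since each step removes a vertex and costs $\OO(n^3)$, the total running time is $f(k,h^\star)\cdot n^{4}$.

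The principal obstacle is the irrelevant-vertex lemma for topological minors. In the minor setting, rerouting a branch set around $v$ only needs to reconnect the same pair of endpoints; in the topological-minor setting, one must preserve the \emph{exact} subdivision structure---every pattern vertex of degree at least three must correspond to a genuine degree-at-least-three branch vertex in the host, and pattern edges correspond to internally vertex-disjoint paths whose internal vertices have degree exactly two in the model. Consequently, the classical Robertson--Seymour rerouting, which only guarantees a topological path through the wall, must be strengthened so that the rerouted path respects the degree constraints imposed by neighbouring branch vertices and other models in the folio simultaneously.

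To overcome this obstacle I plan to prove a \emph{topological irrelevant vertex lemma}: in a sufficiently large flat wall (whose size is a function of $h^\star$ and the size of the apex set), the central vertex is irrelevant for the folio. The proof will combine the planar structure of the flat wall with a case analysis on how each pattern's branch vertices can be distributed between the apex set, the wall interior, and the rest of the graph; an apex-addition routine (\apex) handles branch vertices that land on the apex set, a \splits{} routine separates the models according to where they cross the wall's boundary noose, and the restriction/extension operators (\restrict, \extend) glue together the folio on the exterior with a precomputed folio on the flat part. The planar subroutine \pFindFolio{} of Golovach et al.\ then computes the folio on the flat piece correctly, and a careful compatibility argument yields that the center of the wall can be deleted without affecting the overall folio. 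This piecewise construction is the technical heart of the argument and is what upgrades the existing planar result to arbitrary host graphs, thereby proving Theorem~\ref{thm:mainIntro}.
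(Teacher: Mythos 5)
There is a genuine gap, and it sits exactly where the difficulty of \TMH{} (as opposed to \TMC{}) lives: in your \HitFolio{} step. You claim that once the folio of $G$ is computed, the deletion problem "reduces to a bounded hitting problem solvable by branching on branch vertices of each cached model." This fails for two reasons. First, a realization of a forbidden topological minor is a subdivision whose subdivision paths may be arbitrarily long; an optimal deletion set may consist entirely of internal (degree-two) vertices of such paths, so restricting the branching to branch vertices is not exhaustive and loses solutions. Second, and more fundamentally, destroying one cached realization of a pattern $H$ does not remove $H$ from the folio --- there can be $n^{\Omega(1)}$ essentially different realizations of the same pattern --- so the branching neither terminates in a bounded number of steps nor certifies a \no-instance. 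The folio of $G$ simply does not determine, for a candidate set $S$, the folio of $G\setminus S$.

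What the problem actually requires is a notion of irrelevance that is robust under deletion: a vertex $v$ such that for \emph{every} $S$ of size at most $k$, the $\delta$-folios of $G\setminus S$ and $G\setminus(S\cup\{v\})$ coincide (the paper calls this $(\delta,k)$-irrelevance). Your flat-wall lemma and the win/win recursion only produce a $0$-irrelevant vertex, i.e., one irrelevant for the folio of $G$ itself; such a vertex can stop being irrelevant after some $S$ is removed, so deleting it can turn a \yesinstance{} of \TMH{} into a \noinstance. The paper's proof spends most of its effort upgrading $0$-irrelevance to $(\delta,k)$-irrelevance: in the clique-minor case it branches over the possible interactions of the unknown set $S$ with a bounded family of important separators, marking ``relevant'' clique branch sets in each branch so that any unmarked vertex is $(\delta,k)$-irrelevant in \emph{all} branches simultaneously; in the flat-wall case it strengthens the irrelevant-vertex lemma to an irrelevant \emph{subwall} lemma (a whole subwall deletable without changing the folio, depending only on the wall), nests $k+2$ such subwalls, and applies a pigeonhole argument over $S$. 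Your outline contains neither mechanism, and without one the architecture proves (a variant of) the Grohe et al.\ result for \TMC{}, not Theorem~\ref{thm:mainIntro}.
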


\noindent
{\bf Remarks to Theorem~\ref{thm:mainIntro}.} Theorem~\ref{thm:mainIntro} is a {\em strict generalization} of the algorithm for $\Pi$-{\sc Deletion} for minor closed families $\Pi$. Indeed, every minor-closed family $\Pi$ has a {\em finite} list of forbidden minors, and for every fixed graph $H$ there is a finite list ${\cal H}$ of graphs such that $G$ contains $H$ as a minor if and only if $G$ contains some graph in ${\cal H}$ as a topological minor. To see that the generalization is {\em strict} observe that the family $\Pi$ of graphs of maximum degree $3$ has a finite list of topological minors (the star $K_{1,4}$ with four leaves), but it is not minor closed.

The restriction to topological-minor-closed families with a {\em finite} set ${\cal F}$ of forbidden topological minors is necessary. Indeed, there exist topological-minor-closed families $\Pi$ such that it is {\em undecidable} to determine whether an input graph $G$ belongs to the family\footnote{Let $\{G_i\}$ be any infinite set of graphs such that no graph $G_i$ is a topological minor of a graph $G_j$ for $i \neq j$. Then one can put property $\Pi$ be the set of all topological minors of graphs $G_i$ where the binary encoding of $i$ encodes a Turing machine that halts on the empty string.}. 

Theorem~\ref{thm:mainIntro} shows that \TMH{} is \FPT\ on arbitrary graphs. For the restricted case when the input graph $G$ is planar or embeddable in a surface of constant genus, we obtain a faster algorothm. 
\begin{restatable}{theorem}{hittingboundedgenus}
\label{thm:mainIntroGenus}
For every fixed integer $g \geq 0$, \TMH{} on graphs of Euler genus at most $g$ admits an algorithm with running time $2^{2^{k \cdot 2^{{\OO( (h^\star)^2)}}}}n^2$. 
\end{restatable}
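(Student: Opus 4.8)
I would turn the win/win framework behind Theorem~\ref{thm:mainIntro} into a concrete treewidth reduction by feeding it the \emph{linear} Excluded Grid Theorem for graphs of bounded Euler genus. First, since every graph in $\mathcal{F}$ has at most $h^\star$ vertices, I would replace $\mathcal{F}$ by a set of at most $2^{\binom{h^\star}{2}}=2^{\OO((h^\star)^2)}$ pairwise non-isomorphic representatives, which does not change the answer. Fix $g$ and compute, in linear time, an embedding of $G$ in a surface of Euler genus at most $g$. Let $\rho:=2^{\OO((h^\star)^2)}$ be the order at which the irrelevant-vertex lemma underlying Theorem~\ref{thm:mainIntro} applies --- the order of a flat wall whose centre vertex can be rerouted out of any topological-minor model of a graph on at most $h^\star$ vertices, even after an arbitrary apex set has been removed --- and set $\tau:=(k+1)\cdot\rho$. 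Then run a constant-factor treewidth approximation targeting width $\OO(\tau)$: on a graph of Euler genus at most $g$ this costs $2^{\OO(\tau)}n$ time and either returns a tree decomposition of width $\OO(\tau)$ or certifies $\tw(G)>\tau$.

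\textbf{The irrelevant vertex.} If $\tw(G)>\tau$, then by the bounded-genus Excluded Grid Theorem (only an $\OO_g(1)$ factor separates treewidth from grid-minor order here) $G$ has a wall of order $\Omega_g(\tau)$, and since only $\OO(g)$ of the concentric cycles of a large wall drawn in a genus-$\le g$ surface can meet a handle, I would extract from it a flat wall $W$ of $G-A$, for some apex set $A$ with $|A|=\OO_g(1)$, of order at least $(k+\OO(h^\star))\cdot\rho$, computable in $\OO_g(n)$ time; let $v$ be the centre of $W$. I claim $(G,\mathcal{F},k)$ and $(G-v,\mathcal{F},k)$ are equivalent. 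One direction is easy: if $S$ is a solution for $G$ then $S\setminus\{v\}$ is a solution for $G-v$, because every topological-minor model of a member of $\mathcal{F}$ in $G-v$ is such a model in $G$ avoiding $v$. For the other direction, suppose $S\subseteq V(G-v)$ with $|S|\le k$ solves $G-v$ but not $G$; then some model $M$ of some $H\in\mathcal{F}$ lies in $G-S$ and uses $v$. Deleting from $W$ the set $S\cup A$ and the branch vertices of $M$ other than $v$ destroys at most $k+\OO(h^\star)+\OO_g(1)$ of the pairwise vertex-disjoint concentric cycles of $W$, so a flat subwall of order at least $\rho$ with $v$ at its centre survives; the irrelevant-vertex lemma then reroutes the portion of $M$ inside that subwall's disk out along the surviving cycles, producing a model $M'$ of $H$ in $G-S$ with $v\notin V(M')$. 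But then $M'$ is a model of $H$ in $G-v$ not hit by $S$, contradicting the choice of $S$. So I would delete $v$ and recurse on $(G-v,\mathcal{F},k)$; Euler genus cannot grow under vertex deletion, so the recursion stays within the problem and has depth at most $n$.

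\textbf{Bounded treewidth and the running time.} When the approximation returns a decomposition of width $w=\OO(\tau)$, I would finish by dynamic programming. ``There is $S$ with $|S|\le k$ such that $G-S$ contains no member of $\mathcal{F}$ as a topological minor'' is expressible in monadic second-order logic by a sentence whose length is bounded by a function of $k$ and $h^\star$, so Courcelle's theorem already gives a linear-time algorithm on width $w$; for the explicit dependence I would instead invoke the bounded-treewidth dynamic-programming component of the proof of Theorem~\ref{thm:mainIntro} (whose table at each bag records an $h^\star$-bounded folio of the associated boundaried graph together with the number of deletion vertices used), which on width $w$ runs in time $2^{2^{\OO(w)}}\cdot n=2^{2^{k\cdot 2^{\OO((h^\star)^2)}}}\cdot n$. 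Each of the at most $n$ recursive calls above spends $2^{\OO(\tau)}\cdot n=2^{k\cdot 2^{\OO((h^\star)^2)}}\cdot n$ time on the treewidth approximation and the wall extraction, so summing, the overall running time is $2^{2^{k\cdot 2^{\OO((h^\star)^2)}}}\cdot n^{2}$.

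\textbf{Main obstacle.} The hard part is the reverse direction of the equivalence above: it is the topological-minor analogue of the Disjoint-Paths irrelevant-vertex argument, so it must reroute \emph{paths of a topological-minor model}, whose interiors may wind through the wall, rather than a clean linkage, and it must certify that the surviving subwall is genuinely flat in the presence of the $\le g$ handles. Almost all of this is exactly what the flat-wall and folio machinery developed for Theorem~\ref{thm:mainIntro} already supplies, so it can be invoked rather than rebuilt; the genuinely genus-specific contributions are the quantitative ``a large wall in a genus-$\le g$ surface contains a large flat subwall'' statement and the bookkeeping that converts the linear grid bound into the treewidth threshold $\tau=(k+1)\cdot 2^{\OO((h^\star)^2)}$.
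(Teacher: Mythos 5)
Your overall win/win architecture is the same as the paper's: reduce to a constant-factor treewidth test using the linear excluded-grid theorem for bounded genus, solve by dynamic programming when the width is small, and otherwise delete an irrelevant vertex and iterate (giving the $n^2$), with the single-exponential bound on the unique-linkage insulation function for bounded-genus graphs~\cite{mazoit2013single,DBLP:journals/jct/AdlerKKLST17} supplying the triple-exponential dependence. However, your central step --- the ``reverse direction of the equivalence'' --- has a genuine gap, and it is precisely the gap the paper spends Sections~\ref{sec:workspace}--\ref{sec:finallyend} closing.

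The problem is your claim that, after removing $S\cup A$ and the branch vertices of the offending model $M$, ``a flat subwall of order at least $\rho$ with $v$ at its centre survives'' to which ``the irrelevant-vertex lemma'' applies. The unique linkage theorem (\autoref{lem:RoSe}, \autoref{corr:RoSe1}, \autoref{prop:disPathIrrelevant}) requires that \emph{all terminals lie outside the outermost insulating cycle around $v$}. A branch vertex of $M$ (or a neighbour of an apex used by $M$) may sit adjacent to $v$, strictly inside the innermost concentric cycle of $W$. Such a terminal does not ``destroy'' any of the pairwise disjoint concentric cycles, yet it invalidates every candidate insulating sequence around $v$: there is then no way to set up the separation $(\Gamma,K)$ with $Z\subseteq V(K)$ and $v$ insulated from $V(\Gamma\cap K)$. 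So the centre of the wall is simply not known to be irrelevant by this argument. This is exactly the obstacle the paper flags (``we do not know which are the terminal vertices!''): the irrelevant region cannot be fixed in advance at the centre; it must be located \emph{after} establishing, via the frame/untangling lemmas (\autoref{lem:fewCrossingsFrame}, \autoref{lem:untangledFrame}) and the snapshot/album dynamic programming (\autoref{lem:redundantAlbumExistence}, \autoref{lem:internalLayerTermFree}), a band of frames that \emph{some} rerouted solution avoids with its terminals. Two quantitative consequences follow. First, the wall order needed is $\widehat g(\delta^\star,t)=(t+\delta^\star+r)^{\OO((t+\delta^\star)^2+r)}=2^{2^{\OO((h^\star)^2)}}$ (doubly exponential, because of the $r^{\OO(r)}$ album bookkeeping), not your singly exponential $\rho=2^{\OO((h^\star)^2)}$. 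Second, the dependence on $k$ enters through $k+2$ \emph{nested} applications of the irrelevant-subwall theorem (\autoref{thm:finalFlatWall}: a chain $W_0\supseteq\cdots\supseteq W_{k+1}$ plus a pigeonhole over which annulus $S$ misses), so the treewidth threshold is $\widehat g^{\,k+2}=2^{(k+2)\cdot 2^{\OO((h^\star)^2)}}$ rather than your $(k+1)\rho$; your linear-in-$k$ accounting of ``$S$ destroys at most $k$ cycles'' is not how the paper handles $S$, and it inherits the same terminal-location defect. Your closing remark that the flat-wall and folio machinery ``can be invoked rather than rebuilt'' is correct --- but that machinery is the proof, not a technicality one can wave at; invoked properly (\autoref{thm:finalFlatWall} with the bounded-genus bound on $h$), it yields exactly the paper's argument and the stated running time.
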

The algorithm of Theorem~\ref{thm:mainIntroGenus} matches the quadratic dependence on $n$ of Golovach et al.~\cite{GolovachST19}, and simultaneously improves the dependence on $k$ and $h^\star$ from an un-specified computable function to triple exponential. 



The very special case of  \TMH\ with $k=0$ and ${\cal F}$ consisting of one graph $H$, is known as  the \TMClong\ (\TMC) problem. In \TMC\  input consists of two undirected graphs $G$ and $H$, and the task is to determine whether $G$ contains $H$ as a topological minor. 

\medskip
\defparproblem{\TMClong\ (\TMC)}{Two undirected graphs, $G$ and $H$.}{$h^\star=|V(H)|$}{Does $G$ contain $H$ as a topological minor?}

Since the cycle on $|V(G)|$ vertices is a topological minor of $G$ if and only if $G$ is Hamiltonian, the problem of deciding whether a graph $G$ contains a graph $H$ as a topological minor is \NP-complete. 
The complexity  study of  \TMC,  which is also known as the {\sc Subgraph Homeomorphism} problem, can be traced back to the 1970s \cite{LapaughR78}. \TMC\ admits an algorithm running in time $n^{\OO(|E(H)|)}$ by a reduction to  $n^{\OO(|E(H)|)}$ instances of the \DisjointPathsLong\ problem (given an undirected graph $G$ and a set of $k$ pairs, $\{s_i,t_i\}_{i=1}^k$,  the objective is to find $k$ vertex-disjoint paths connecting $s_i$ to $t_i$)~\cite{RobertsonS95b}). Each of these \DisjointPathsLong\ instances can be solved by the $f(k)n^3$ algorithm of Robertson and Seymour~\cite{RobertsonS95b}. One of the longstanding open questions in parameterized complexity was whether \TMC{} is \FPT{}, that is, solvable in time $f(H)\cdot n^{\OO(1)}$ time. In 2010, Grohe et al.~\cite{DBLP:conf/stoc/GroheKMW11} resolved this question in the affirmative by designing an algorithm with running time $f(H)\cdot n^{3}$ for a computable function $f$.

Because  the \TMH\ contains \TMC{} as a special case, Theorem~\ref{thm:mainIntro} generalizes the result of  Grohe et al.~\cite{DBLP:conf/stoc/GroheKMW11}. It appears difficult to obtain an \FPT{} algorithm for \TMH\ by invoking the results of Grohe et al.~\cite{DBLP:conf/stoc/GroheKMW11} in a black box fashion. Indeed, a part of our proof of  Theorem~\ref{thm:mainIntro} is a new \FPT{} algorithm for \TMC{}. This new algorithm has some appealing features, such as a ``only'' triple-exponential dependence on $h^\star$ when the input graph $G$ is planar or embedded on a surface of constant Euler genus.

\subsection{Related Work}






For containment relations $\preceq$ that are not well-quasi-ordered we cannot hope for algorithmic results for $\Pi$-{\sc Deletion} for all $\preceq$-closed families $\Pi$, because, just as for topological minors, there exist  $\preceq$-closed families $\Pi$ such that determining whether $G$ is in the family is undecidable. Hence one has to settle for results that handle only some $\preceq$-closed properties $\Pi$, e.g. ones that have a finite number of minimal elements that do not have the property.
Even results of this type are rare: to the best of our knowledge, prior to our work, the {\em subgraph} and {\em induced subgraph} relations were the only relations $\preceq$ that on the hand are not a well quasi order, and on the other hand admit an \FPT{} algorithm for $\Pi$-{\sc Deletion} for every $\preceq$-closed family $\Pi$ with a finite number of $\preceq$-forbidden graphs. For subgraphs and induced subgraphs, a simple and elegant \FPT-algorithm based on branching was obtained by Cai~\cite{Cai1996}. For most other natural containment relations (such as {\em induced minors} or {\em contractions}) an \FPT{} algorithm for $\Pi$-{\sc Deletion} for every $\preceq$-closed family $\Pi$ with a finite number of $\preceq$-forbidden graphs would imply that ${\sf P}=\NP$ (see the discussion in Golovach et al.~\cite{GolovachST19}).

For minor closed families $\Pi$ excluding at least one planar graph, algorithms for $\Pi$-{\sc Deletion} with improved running times~\cite{Bodlaender97,FellowsL88,FominLMS12,KLPRRSS16,GiannopoulouPRT17}  have been found. For certain restricted immersion-closed families 
$\Pi$, Giannopoulou et al.~\cite{GiannopoulouPRT17} obtained improved algorithms for the edge deletion variant of $\Pi$-{\sc Deletion}. A substantial body of work focuses on developing \FPT{} algorithms for $\Pi$-{\sc Deletion} for concrete instantiations of $\Pi$, and on optimizing the running times of these algorithms~\cite{Cai1996,ChenKX10,BeckerBRG00-Ra,CaoM15,ChenLLOR08,FellowsL88,FominLMS12,JansenLS14,KLPRRSS16,Nie06,ReedSV04,
DBLP:journals/corr/abs-1906-12298,DBLP:journals/corr/abs-1905-12233}.

\subsection{Our Methods}
Our algorithm is built on the template of Robertson and Seymour's algorithm for {\sc Disjoint Paths}~\cite{RobertsonS95b}. This approach has previously successfully been deployed for an impressive array of different problems~\cite{GroheKR13,JansenLS14,DBLP:journals/jct/KawarabayashiKR12,KakimuraK15,
KawarabayashiK12,KawarabayashiRW11,KawarabayashiR10,Kawarabayashi09}, including the algorithm for \TMC\ by Grohe et al.~\cite{DBLP:conf/stoc/GroheKMW11}. Algorithms using this scheme distinguish between the following three cases.

\begin{description}[noitemsep,topsep=0pt]
\item[Case 1.] The treewidth of the input graph $G$ is {\em small}. Here small means that it is upper bounded by a computable function of the parameters. 
\item[Case 2.] The input graph $G$ has a {\em large} clique minor. Here large means that the size of the clique minor is lower bounded by a computable function of the parameters. 
\item[Case  3.] Neither Case 1 nor Case 2 occurs, which means that treewidth of $G$ is ``large", while $G$ excludes a ``small" clique as a minor. In this case the ``weak structure theorem''~\cite{RobertsonS95b}  implies that the graph $G$ contains a ``large flat wall". This is (essentially) a large grid subgraph of $G$ such that only the outer face vertices of the grid subgraph have neighbors in $G$ that are not also in the subgraph. 
\end{description}

The algorithm for Case 1 is easy, as we can write a MSO (monadic second order) formula for \TMH\ and use the meta-theorems for graphs of bounded treewidth to obtain the algorithm ``for free"~\cite{ArnborgLS91,Courcelle90}.  The template for Cases $2$  and $3$ is to identify an {\em irrelevant vertex}, that is a vertex $v$ such that the answer to the problem under consideration is the same in $G$ and in $G\setminus v$. The crucial and problem-dependent piece of algorithms using this template is how they identify the irrelevant vertex, and this is where the novelty of such algorithms (including ours) lies. 




We now give a {\em very} high level outline of how our algorithm identifies irrelevant vertices. A more detailed overview is provided in the next section. We will need to distinguish between different types of irrelevance, so we will now introduce simplified versions of the technical notions of irrelevance that we use. 

Let us define {\em $0$-irrelevant}, to mean that the $\delta$-folio (the family of all graphs $H$ of size at most $\delta$ that are topological minors of $G$) is the same in $G$ and $G\setminus v$. This is the notion of irrelevance used for \TMC, since removing a $0$-irrelevant vertex will not change whether a small graph $H$ is a topological minor of $G$ or not. Grohe et al.~\cite{DBLP:conf/stoc/GroheKMW11} give efficent algorithms to compute a $0$-irrelevant vertex when $G$ contains a large clique minor or a large flat wall.   
The notion of $0$-irrlevance is not strong enough for \TMH, because even though the $\delta$-folio of $G$ and $G-v$ is the same there could exist a vertex set $S$ of size at most $k$ such that some small graph $H$ is in the $\delta$-folio of $G\setminus S $, but not in the $\delta$-folio of $G \setminus (S\cup \{v\})$. Thus $S$ is a solution for \TMH{} with ${\cal F} = \{H\}$ in $G\setminus (S \cup \{v\})$ but not in $G$. This motivates the definition of $k$-irrelevant vertices: a vertex $v$ is $k$-{\em irrelevant} if for every set $S$ of size at most $k$, the $\delta$-folio of $G\setminus S$ and 
$G\setminus (S\cup \{v\})$ is the same. Observe that removing a $k$-irrelevant vertex $v$ from $G$ will not change whether $G$ is a ``yes'' instance of \TMH{}. By varying $k$ we get a smooth transition between $0$-irrelevance, for which we already have an efficient algorithm from Grohe et al.~\cite{DBLP:conf/stoc/GroheKMW11}, and the notion of $k$-irrelevance that we need.





In the large clique minor case the algorithm to compute a $k$-irrelevant vertex is a branhcing algorithm with a twist. Branching algorithms are ubiquitous in parameterized algorithms (see e.g.~\cite{paramalgoCFKLMPPS}). Typically branching algorithms are employed to find some object (in our case we are searching for a $k$-irrelevant vertex $v$). The branching algorithm non-deterministically ``guesses" some features of the object we are looking for. In the branch that guesses the ``correct" features of $v$ the algorithm uses the guess to find $v$. Such a scheme cannot possibly work for us, because we need to iteratively find many (perhaps as many as $\Omega(n)$) irrelevant vertices, and we can not afford to guess the features of all of these vertices.

Our branching algorithm instead guesses features of the set $S$. In each branch the algorithm marks some vertices of the clique as ``relevant". The important properties of our branching algorithm are: (a) The number of different branches is a function of $k$ and $h$; (b) In each branch the number of vertices in the clique minor marked as ``relevant" is a function of $k$ and $h$; and (c) For every vertex $v$ and set $S$ of size at most $k$ such that the $\delta$-folio $G\setminus (S\cup \{v\})$ and $G\setminus S$ is {\em not} the same, $v$ is marked as relevant in the branch that correctly guesses the features of $S$. Observe that if we start with a sufficiently large clique minor then some vertex $v$ in the clique minor will not be marked as relevant in any branch. By property (c) of the branching algorithm this vertex is $k$-irrelevant and can be removed. As opposed to traditional branching algorithms, here the irrelevance of $v$ is not contingent on being in the correct branch, therefore we can run this algorithm again and again to find new irrelevant vertices without getting a combinatorial explosion of the number of non-deterministic guesses the algorithm needs to make. We believe that this trick of mixing branching and irrelevant vertex techniques is interesting in its own right and will find further applications.

For reasons that are too technical to go into in this brief overview the branching strategy employed for the large clique minor case does not quite work out for the large flat wall case. 
However, here a strengthening of the irrelevant vertex rule for \TMC{} comes to the rescue. We are able to show that in a sufficiently large flat wall $W$ in $G$, one can efficiently find a large subwall $W'$ such that the $\delta$-folio of $G$ and $G \setminus W'$ are the same. Furthermore, $W'$ depends only on $W$ and not on the rest of $G$. By $W'$ being large we mean that the size of $W'$ tends to infinity as the size of $W$ tends to infinity. Notice that this is stronger than finding just one $0$-irrelevant vertex in the flat wall case, because all vertices in $W'$ can be deleted simultaneously without changing the $\delta$-folio.

This strengthening gives a simple way of locating $k$-irrelevant vertices in a flat wall: assuming $W=W_0$ is sufficiently large, apply the irrelevant wall lemma and find a large subwall $W_1$ in it. Re-apply the lemma to $W_1$ to obtain $W_2$ and so on, until we have a chain of sub-walls $W_0, W_1, \ldots, W_{k+1}$. We claim that every vertex $v$ in $W_{k+1}$ is $k$-irrelevant.
To see this, consider an arbitrary set $S$ of size at most $k$.
There exists some $i$ such that $(W_{i} \setminus W_{i+1} )\cap S = \emptyset$. Set $S_{out} = S \setminus W_{i}$, we have that $S \cap W_{i} \subseteq W_{i+1}$. Therefore, $W_{i}$ is a flat wall in $G \setminus S_{out}$ and hence all of $W_{i+1}$ is irrelevant in $G \setminus  S_{out}$. In other words, $G \setminus S_{out}$ and $G \setminus (S_{out} \cup W_{i+1})$ have the same $\delta$-folio. But the $\delta$-folio of $G\setminus S$ is contained in the $\delta$-folio of $G \setminus S_{out}$ while the $\delta$-folio of $G \setminus(S_{out} \cup W_{i+1})$ contains the $\delta$-folio of $G \setminus (S \cup \{v\})$. So $G \setminus S$ and $G \setminus  (S \cup \{v\})$ must have the same $\delta$-folio, proving that $v$ is $k$-irrelevant.  We believe that the methods employed to find an irrelevant vertex/wall should be applicable to other problems as well.


%

\paragraph{Topological Minor Containment} 
Our algorithm for \TMH{} relies on several crucial subroutines for \TMC{}. First we need an \FPT{} algorithm for \TMH{} on bounded treewidth graphs - this automatically yields such an algorithm for \TMC{} on bounded treewidth graphs as well. 
Second, in the large clique minor case of \TMH{} we need an \FPT{} algorithm for \TMC{} (for this we can invoke Grohe et al.~\cite{DBLP:conf/stoc/GroheKMW11}), and an efficient algorithm that finds $0$-irrelevant vertices in the case of a large grid minor. For this step we can also directly invoke the results of Grohe et al.~\cite{DBLP:conf/stoc/GroheKMW11}. 
Finally, for the large flat wall case we need to find a large irrelevant flat subwall of a huge flat wall. At present we do not know how to directly invoke the results of Grohe et al.~\cite{DBLP:conf/stoc/GroheKMW11} to achieve this. For this reason we have included our own proof of the ``irrelevant wall" lemma. 
The irrelevant wall lemma and the bounded treewidth algorithm puts us quite close to having an alternative \FPT{} algorithm for \TMC{} - all that is missing is an algorithm for finding $0$-irrelevant vetices in the case of large clique minors. For this reason we include that as well even if the ideas here are essentially same as what is presented by Grohe et al.~\cite{DBLP:conf/stoc/GroheKMW11}. Our algorithm for TMC has some appealing features, making it interesting in its own right.




The proof of the irrelevant subwall lemma crucially relies on the Unique Linkage Theorem of Robertson and Seymour~\cite{RobertsonS12}. The Unique Linkage Theorem 
essentially states that there exists a computable function $h : \mathbb{N} \rightarrow \mathbb{N}$, such that for any instance $(G, \{s_i,t_i\}_{i=1}^k)$ of \DisjointPathsLong{}, if the graph $G$ can be partitioned in two pieces $A$ and $B$, such that $A$ can be drawn in the plane without edges crossing, at most a constant number of vertices of $B$ are adjacent to a vertex of $A$ that is not on the outer face of $A$, all terminals $\{s_i,t_i\}_{i=1}^k$ are in $B$, and and a vertex $v$ in $A$ is insulated from the outer face of $A$ by at least $h(k)$ pairwise disjoint concentric cycles, then $v$ is irrelevant (for the {\sc Disjoint Paths} problem, see (3.1) in \cite{RobertsonS12} or \autoref{lem:RoSe} for a formal statement). This means that $(G,  \{s_i,t_i\}_{i=1}^k)$ is a ``yes'' instance of \DisjointPathsLong{} if and only if $(G \setminus v, \{s_i,t_i\}_{i=1}^k)$ is.
Robertson and Seymour state at the end of~\cite{RobertsonS95b} that the function $h$ is computable, however, to the best of our knowledge no concrete upper bound for $h$ has been published. For the special cases of  planar graphs and graphs of bounded genus a single exponential bound is known~\cite{mazoit2013single}.


The size bounds on $W$ in our irrelevant subwall lemma depend on $\delta$, but also on the bound $\isobound(h^\star)$ for \DisjointPathsLong. As a consequence we get an algorithm for  \TMC{}  whose running time depends on the upper bound $h$ for the ``cycle insulation bound'' in the Unique Linkage Theorem for the \DisjointPathsLong{}. Specifically, we obtain the following theorem:


\begin{theorem}\label{thm:generalGraphsAlgorithmTMC}
\label{thm:maingen}
\TMC{} admits an algorithm with running time 
$f(h^\star)\cdot n^3$ on instances where 
$f(h^\star)=2^{2^{\OO(r_1^{3})}} 2^{2^{2^{(r_2 \cdot 2^{\OO((h^{\star})^4  )})} }}$, 
$r_2=\isobound (2^{c \cdot (h^{\star })^4})$, 
$r_1=\isobound(2^{2^{(c\cdot (h^{\star })^4)}r_2})$,
$\isobound(\cdot)$ is the cycle insulation bound for the Unique Linkage Theorem for \DisjointPathsLong{},
and $c$ is a constant.
\end{theorem}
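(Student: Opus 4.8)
The plan is to follow the classical irrelevant-vertex scheme of Robertson and Seymour~\cite{RobertsonS95b} --- the same backbone as the \TMC{} algorithm of Grohe et al.~\cite{DBLP:conf/stoc/GroheKMW11} --- but to certify every vertex-elimination step by the stronger invariant that the entire $\delta$-folio is preserved, with $\delta = h^\star$. On input $(G,H)$, $|V(H)| = h^\star$, we iteratively replace $G$ by a graph $G'$ on strictly fewer vertices with the same $\delta$-folio, stopping once $\tw(G')$ is below a threshold $w = w(h^\star)$ fixed below; at that point we compute the $\delta$-folio of $G'$ directly, which in particular decides whether $H$ is a topological minor of $G'$, hence of the original $G$.

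A single shrinking step is as follows. Compute an approximate tree decomposition of $G'$ in time $2^{\OO(w)}n$. If its width is at most $w$, compute the $\delta$-folio by a dynamic program over the decomposition that tracks, at each bag, the realizable ``partial folios'' on the boundary (alternatively, and with a non-explicit running time, express topological-minor containment of each of the finitely many graphs on at most $\delta$ vertices in monadic second-order logic and apply Courcelle's theorem~\cite{Courcelle90,ArnborgLS91}); either way the running time is linear in $|V(G')|$ with a multiplicative factor depending on $h^\star$. Otherwise $\tw(G') > w$, and by the algorithmic form of the weak structure theorem~\cite{RobertsonS95b} one of two sub-cases holds. In the first, $G'$ has a clique minor $K_t$ for a large $t = t(h^\star)$; then we run the $0$-irrelevant-vertex routine for large clique minors --- in essence that of Grohe et al.~\cite{DBLP:conf/stoc/GroheKMW11} --- to find a vertex $v$ whose deletion does not change the $\delta$-folio, and delete it. In the second, $G'$ has a large flat wall $W$ modulo a bounded apex set; then we invoke the \emph{irrelevant wall lemma}, which extracts a large subwall $W'$ of $W$, depending only on $W$, such that the $\delta$-folio of $G' \setminus W'$ equals that of $G'$, and we delete all of $W'$. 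Each non-terminating sub-case removes at least one vertex, so after $\OO(n)$ rounds we land in the bounded-treewidth case, and since each round preserves the $\delta$-folio the answer on the final graph is the answer on $G$.

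For the running time there are $\OO(n)$ rounds, each dominated by finding the clique minor or flat wall and then the irrelevant vertex/wall inside it, which costs $n^2$ times a factor depending on $h^\star$, followed by a final linear-time bounded-treewidth computation; a careful accounting in the style of~\cite{RobertsonS95b} gives the claimed $f(h^\star)\cdot n^3$. The precise form of $f$ is dictated by how large the flat wall $W$ must be for the irrelevant wall lemma to apply. A topological-minor model of a graph on at most $\delta$ vertices has only $\delta$ branch vertices and $\OO(\delta^2)$ linking paths, so rerouting such a model out of the insulated centre of $W$ via the Unique Linkage Theorem~\cite{RobertsonS12} requires that centre to be fenced off by $\isobound(p(h^\star))$ concentric cycles for a polynomial $p$; tracking this through the flat-wall machinery, which itself replaces $p(h^\star)$ by $2^{\OO((h^\star)^4)}$, and through a second nested invocation needed to handle the apex vertices of the wall alongside the rerouting, produces exactly the quantities $r_2 = \isobound(2^{c\cdot(h^\star)^4})$ and $r_1 = \isobound(2^{2^{c\cdot(h^\star)^4}r_2})$ of the statement. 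The treewidth threshold $w$ then comes out polynomial in $r_1$, so the bounded-treewidth dynamic program contributes the $2^{2^{\OO(r_1^3)}}$ factor and the clique/flat-wall routines the $2^{2^{2^{r_2\cdot 2^{\OO((h^\star)^4)}}}}$ factor.

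The main obstacle is the irrelevant wall lemma; the rest is assembly of known components. Its core is a rerouting claim: any topological-minor model of any $H$ with $|V(H)|\le\delta$ that lives in $G'$ can be transformed into another model of $H$ that avoids the deep centre $W'$ of the flat wall, and --- the subtle point --- $W'$ can be chosen depending on $W$ alone, not on $H$ or on $G'\setminus W$, so that \emph{all} of $W'$ may be removed at once. (This wholesale-deletion property is precisely what later drives the ``nested chain of subwalls'' argument establishing $k$-irrelevance for \TMH{}.) The rerouting is where the Unique Linkage Theorem enters: the concentric cycles of the flat wall, together with the near-planarity of its interior, allow one to apply the theorem to the bounded family of model paths and push them out of the insulated region. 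The delicate aspects are that the model may also pass through the apex vertices, and that $H$ is not fixed but ranges over all graphs of size at most $\delta$, so the argument must be uniform over the finitely many such $H$ and must account for the two layers of cycle insulation responsible for $r_1$ and $r_2$. Certifying that this rerouting never needs to ``look outside'' $W$ --- i.e.\ that the chosen $W'$ is genuinely intrinsic to $W$ --- will, I expect, be the most technical part of the whole argument.
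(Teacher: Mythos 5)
Your skeleton for the treewidth and flat-wall cases matches the paper (iterate an irrelevant-vertex/irrelevant-subwall rule, certified by preservation of the whole $\delta$-folio, until the treewidth drops below a threshold, then run a dynamic program), and your reading of where $r_1$ and $r_2$ come from is essentially right. The genuine gap is the clique-minor case. You dispatch it with ``run the $0$-irrelevant-vertex routine for large clique minors to find a vertex whose deletion does not change the $\delta$-folio, and delete it,'' but no such standalone routine is available. The paper does prove that such a vertex \emph{exists} (\autoref{lem:representativeClique}), but that proof presupposes that one already knows a realization of every graph in the folio: it enriches the root set with the branch vertices and their neighbours in those realizations and only then applies \autoref{prop:RS13:6.1} to locate the irrelevant vertex. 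Locating those terminals is the whole problem --- a huge clique minor does \emph{not} make the topological-minor folio generic (e.g.\ $K_{1,4}$ is not a topological minor of any cubic graph, however large its clique minors), so unlike the minor case you cannot simply answer ``yes'' or reroute blindly through the clique. Even the paper's algorithmic clique-minor result, \autoref{thm:dellargeclique}, calls the finished \FindFolio{} algorithm as its first step, so it cannot serve as an ingredient here.

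What actually closes this hole in the paper is a layer your proposal omits entirely: recursive understanding over \emph{rooted} graphs. The statement is strengthened to \autoref{thm:main1General} (compute the extended $\delta$-folio of a rooted graph with up to $16\delta^2$ roots), and the clique case is handled by \autoref{lem:clique}, a dichotomy that either solves the instance outright --- using the algorithm recursively with parameter $\delta-1$ --- or returns a balanced separation of order at most $4\delta^2$; in the latter branch one recursively ``understands'' the smaller side, replaces it by a bounded-size $\delta$-representative (\autoref{cor:representativeAlg}, \autoref{prop:equi_graph_replacement}), and recurses on a strictly smaller graph. This double recursion (on $\delta$ and on $n$) is also what produces the $n^3$ bound, rather than your ``$\OO(n)$ rounds at $n^2$ each'' accounting. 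A smaller point: to capture a pattern $H$ on $h^\star$ vertices in a $\delta$-folio you need $\delta=\Theta((h^\star)^2)$, since the folio is bounded by $|E(H)|+\isolated(H)\le\delta$, not by $|V(H)|$.
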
 

While \autoref{thm:generalGraphsAlgorithmTMC} does not yield a concrete bound on the running time  of our algorithm, it shows that it is sufficient to provide such a bound for the Unique Linkage Theorem. 
On graphs of bounded genus our algorithm performs better - indeed it is the first algorithm for \TMC{} on planar graphs, and more generally on graph of bounded genus, with explicit bounds on the running time dependence on $h^\star$. 

\begin{theorem}
\label{thm:main}
For every fixed integer $g \geq 0$, \TMC{} on graphs of Euler genus at most $g$ admits an algorithm with running time $2^{2^{2^{\OO((h^\star)^2)}}}n^2$.
\end{theorem}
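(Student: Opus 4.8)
The plan is to run the irrelevant vertex/subwall scheme of Robertson and Seymour, in exactly the form sketched above, specialized to bounded genus. The key simplification is that a graph of Euler genus at most $g$ cannot contain $K_{\OO(\sqrt{g})}$ as a minor, so after setting the clique-minor threshold of Case~2 to this constant value, for fixed $g$ Case~2 never occurs and only Case~1 (bounded treewidth) and Case~3 (large flat wall) remain. I would maintain the input graph $G$ together with an embedding of $G$ in a surface of Euler genus at most $g$, computed once in $\OO(n)$ time at the outset (possible for fixed $g$, e.g.\ by Mohar's algorithm) and inherited by every vertex-deletion we perform, since deleting vertices does not increase the Euler genus.

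Each round would proceed as follows. First compute a constant-factor approximation of $\tw(G)$; on bounded-genus graphs this is doable in $\OO(n)$ time. If $\tw(G)\le t(h^\star,g)$ for an appropriate threshold $t$, solve \TMC{} directly with a dynamic-programming algorithm over a tree decomposition that tracks the partial topological-minor models of $H$; this runs in $2^{\OO(t\log t)}\cdot n$ time, and we return the answer. Otherwise $\tw(G)>t(h^\star,g)$, and since bounded Euler genus excludes a fixed clique minor, high treewidth forces a large wall (by the linear relation between treewidth and grid-minor size on bounded-genus graphs); working with the surface embedding one then extracts from it a genuinely \emph{flat} wall $W$ (no apex vertices) of height at least $h(h^\star)$, at the cost of only poly$(g)$ factors in the height. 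All of this can be carried out in $\OO(n)$ time for fixed $g$. I then apply the irrelevant wall lemma discussed above to $W$: it produces a subwall $W'$, of height growing with that of $W$, such that the $\delta$-folio of $G$ and of $G\setminus W'$ coincide (with $\delta=h^\star$); in particular $G$ contains $H$ as a topological minor iff $G\setminus W'$ does. I delete $V(W')$ from $G$ and repeat. As each round removes at least one vertex there are at most $n$ rounds, each costing $\OO(n)$ apart from the single bounded-treewidth solve, for $\OO(n^2)$ total up to the parameter-dependent factor.

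It remains to choose the thresholds so that the claimed bound falls out. The minimum wall height $h(h^\star)$ demanded by the irrelevant wall lemma is governed by two quantities: the cycle-insulation bound $\isobound(\OO(h^\star))$ of the Unique Linkage Theorem, which on bounded-genus graphs is only single-exponential in a polynomial of $h^\star$ by the bound of \cite{mazoit2013single}; and the number $2^{2^{\OO((h^\star)^2)}}$ of distinct $\delta$-folios, which dominates. Hence $h(h^\star)=2^{2^{\OO((h^\star)^2)}}$, and since for fixed $g$ the detour through the wall and the surface embedding costs only polynomial-in-$g$ factors, we may take $t(h^\star,g)=2^{2^{\OO((h^\star)^2)}}$ as well. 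Feeding this into the bounded-treewidth algorithm gives $2^{\OO(t\log t)}\cdot n = 2^{2^{2^{\OO((h^\star)^2)}}}\cdot n$, which together with the $\OO(n^2)$ spent on the irrelevant-vertex rounds yields $2^{2^{2^{\OO((h^\star)^2)}}}n^2$.

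The main obstacle is the irrelevant wall lemma itself — showing that a bounded piece of a sufficiently large flat wall may be deleted without changing the $\delta$-folio — but that rests on the Unique Linkage Theorem and is the substantive ingredient established separately. Within the present theorem the delicate points are rather (i) extracting a flat wall of good height from a wall drawn on a genus-$\le g$ surface, with parameters small enough not to spoil the triple-exponential bound and within linear time, and (ii) keeping the per-round cost and the number of rounds low enough to improve the general $n^3$ bound of \autoref{thm:maingen} to $n^2$ — which is precisely where bounded genus pays off, as embedding, treewidth approximation, and flat-wall extraction are all (near-)linear there rather than needing the slower general-graph machinery. I would also verify the (routine) invariant that Euler genus at most $g$ is preserved throughout, so that these fast subroutines remain applicable in every round.
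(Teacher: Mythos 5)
Your proposal follows essentially the same route as the paper: test whether $\tw(G)$ is below a double-exponential threshold and solve by dynamic programming if so, otherwise use the linear treewidth/grid relation on bounded-genus graphs to extract a flat wall with no apex vertices, invoke the irrelevant-subwall lemma (whose wall-size requirement is governed by the single-exponential insulation bound of \cite{mazoit2013single} for bounded genus), delete the irrelevant part, and iterate at most $n$ times for the $n^2$ total. The only cosmetic differences are that the paper deletes a single irrelevant vertex per round (taking $w''=1$) rather than a whole subwall, and that the double-exponential wall threshold in the paper arises as $2^{\OO(r^2)}$ from the square of the insulation bound $r=2^{\OO((h^\star)^2)}$ rather than from the folio count you cite — both of which are absorbed by the stated $2^{2^{2^{\OO((h^\star)^2)}}}$ bound.
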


 \paragraph*{Organization of the paper.} In Section~\ref{sec:overviewintro} we give a technical overview of our results and 
 methodologies. Section~\ref{sec:prelims} contains graph theoretic notions needed in the paper,  such as treewidth,  (topological) minors, grid, flat wall, etc. It also contains several known results. Section~\ref{sec:mainthms} gives a roadmap of statements  using which we derive our main results. In Section~\ref{sec:irr} we define the notion of  $\delta$-representative and prove some auxiliary lemmas which are used in later sections. In Section~\ref{sec:cliqueminor} we explain how to find a $(\delta,k)$-irrelevant vertex when  the input graph has a large clique minor.  Finding a $\delta$-representative when the input graph has bounded treewidth is explained in Section~\ref{sec:smalltw}. In Section~\ref{sec:recursive}, we explain how all the pieces can be put together using recursive understanding to solve \TMC. Sections~\ref{sec:disjirr} to \ref{sec:finallyend} are devoted to  explain the computation of $(\delta,k)$-irrelevant vertex when the input graph has a large flat wall.   Finally, we conclude with some future research avenues in Section~\ref{sec:conclusion}. 

\medskip 
\noindent 
\section{Overview of our Proof and Techniques}
\label{sec:overviewintro}
In this section we give a brief overview of our proof. 
Our algorithm is built on the template of Robertson and Seymour's algorithm for {\sc Disjoint Paths}~\cite{RobertsonS95b}. 
 Algorithms using this scheme distinguish between the following three cases: {\bf Case 1:} The treewidth of the input graph $G$ is {\em small}. Here small means that it is upper bounded by a computable function of the parameters. {\bf Case 2:} The input graph $G$ has a {\em large} clique minor. Here large means that the size of the clique minor is lower bounded by a computable function of the parameters. {\bf Case 3:} Neither Case 1 nor Case 2 occurs, which means that treewidth of $G$ is ``large", while $G$ excludes a ``small" clique as a minor. In this case the ``weak structure theorem''~\cite{RobertsonS95b}  implies that the graph $G$ contains a ``large flat wall". This is (essentially) a large grid subgraph of $G$ such that only the outer face vertices of the grid subgraph have neighbors in $G$ that are not also in the subgraph. 

These three cases are enveloped under the ``recursive understanding technique''~\cite{DBLP:conf/stoc/GroheKMW11}. This  top level structure of our algorithm is same as that of Grohe et al.~\cite{DBLP:conf/stoc/GroheKMW11}. The base of this recursive scheme falls into Cases $1$, $2$ and $3$, as mentioned above.  
Case $(1)$ can be handled easily using a standard dynamic programming algorithm~\cite{ArnborgLS91,Courcelle90}.  For case $(2)$ we design a new algorithm to find an irrelevant vertex for $\TMH$, and for $\TMC$ we make use of the algorithm by  Grohe et al.~\cite{DBLP:conf/stoc/GroheKMW11}. For Case $3$, as a subroutine for \TMH, we need an algorithm that finds {\em not} an irrelevant vertex but rather an irrelevant flat subwall. 
At present we do not know how to directly invoke the results of Grohe et al.~\cite{DBLP:conf/stoc/GroheKMW11} to achieve this. For this reason we have included our own proof of the ``irrelevant wall" lemma.  
Thus, to find this irrelevant flat subwall  
 we give a complete stand alone proof with black box calls to the unique linkage  theorem of Robertson and Seymour~\cite{RobertsonS12}. Instead of solving \TMC\ and \TMH\, we solve more general problems called ``finding folio'' and ``hitting folio'' in rooted graphs. These general problems are essentially required to be solved in the recursive calls  for \TMH and \TMC. We start with some essential notations and definitions in the next subsection and then we move to the technique of the proof.

\subsection{Notations and (Informal) Definitions}
\label{sec:notations}
For a graph $G$, we use $n=\vert V(G)\vert$ and $m=\vert E(G)\vert$. 
A {\em rooted graph} is a graph with a specified subset of vertices, called roots, denoted by $R(G)$, and with an injective map $\rho_G(R(G))\rightarrow {\mathbb N}$.  We say that two rooted graphs $G_1$ and $G_2$ are {\em compatible} if  $\rho_{G_2}(R(G_1))=\rho_{G_2}(R(G_2))$. In this case we also say that the roots of $G_1$ and $G_2$ are the same, where we consider $u\in R(G_1)$  and  $v\in R(G_2)$ as identical if $\rho_{G_1}(u)=\rho_{G_2}(v)$. For two (not necessarily vertex-disjoint) graphs $G_1$ and $G_2$, we use $G_1\cup G_2$ to denoted the graph $(V(G_1)\cup V(G_2),E(G_1)\cup E(G_2))$. A {\em separation} of a graph $G$ is a pair $(G_1,G_2)$ with the property that $G=G_1\cup G_2$ and $E(G_1)\cap E(G_2)=\emptyset$.  The order of the separation $(G_1,G_2)$ is $\vert V(G_1)\cap V(G_2)\vert$. For a separation $(G_1,G_2)$ of $G$ and a graph $G_1'$ with $V(G_1)\cap V(G'_1)=V(G_1)\cap V(G_2)$, replacing $G_1$ with $G_1'$ in the separation $(G_1,G_2)$ will lead to the graph $G'_1\cup G_2$.

A rooted graph $H$ is a minor in a rooted graph $G$ if $H$ is a minor in $G$ with the additional  restriction that a root vertex $v$ in $H$ is mapped to a vertex subset in $G$ that contains the identical vertex of $v$ in $G$. 
We say that $H$ has {\em detail} $\leq \delta$  if $\vert E(H)\vert \leq \delta$ and $\vert V(H)\setminus R(H)\vert \leq \delta$. The {\em $\delta$-\mfolio} of $G$ is the set of all minors of $G$ with detail $\leq \delta$.

A rooted graph $H$ is a topological minor in a rooted graph $G$ if $H$ is a topological minor in $G$ with the additional restriction that a root vertex in $H$ can only be mapped to its identical vertex in $G$. That is, there is a pair of injective functions 
$(\phi,\varphi)$, where $(a)$ $\phi$ is a map from $V(H)$ to $V(G)$ with the restriction that for any $u\in V(H)$, $\rho_{G}(\phi(u))=\rho_H(u)$, $(b)$ for any $\{x,y\}\in E(H)$, $\varphi(\{x,y\})$ is a path from $\phi(x)$ to $\phi(y)$ in $G$ such that $\{\varphi(e)\colon e\in E(H)\}$ is a set of pairwise internally vertex disjoint paths and no vertex in $\phi(V(H))$ is an internal vertex in any of these paths. We note that there is no restriction on where a non-root vertex of $H$ could be mapped to; in fact, it can be mapped to a root vertex in $G$. The pair  $(\phi,\varphi)$ corresponds to a subgraph $G'$ of $G$, which is a subdivision of $H$ where the roots of $H$ are mapped to their corresponding vertices in $G$. We call $G'$ a {\em realization} of $H$ in $G$ witnessed by $(\phi,\varphi)$.  Moreover, we call the vertices of $\phi(H)$ terminals of $(H,\phi,\varphi)$. For an integer $\delta\in {\mathbb N}$, the {\em $\delta$-folio} of a rooted graph $G$ is the set of topological minors $H$ in $G$ with $|E(H)|+\isolated(H)\leq \delta$, where $\isolated(H)$ is the number of degree zero vertices in $H$. 
Given a rooted graph $G$ and a graph $X$ such that $V(X)=R(G)$, 
The {\em $(X,\delta)$-folio} of $G$ is the $\delta$-folio of $(G\cup X)$ with $R(G)$ as the set of roots.
The {\em extended $\delta$-folio} of $G$ is the function $f$ whose domain is the set of all graphs on vertex set $R
(G)$, and for each graph $X$ on $R(G)$, $f(X)$ is equal to the $(X,\delta)$-folio of $G\cup X$. The general problem of finding the folio of a rooted graph is the following. 

\defparproblem{\FindFolio}{A rooted undirected graph $G$ with $\vert R(G)\vert \leq 16\delta^{2}$, and a non-negative integer $\delta$.}{$\delta$}{What is the extended $\delta$-folio of $G$?}

\noindent 
 The bound $16\delta^2$ in \FindFolio\ is required to handle the base case of the recursion when there is a {\em large} (depending on $\delta$) clique minor in the graph, by the algorithm of Grohe et al.~\cite{DBLP:conf/stoc/GroheKMW11}.  
 
 A vertex $v$ in the input rooted graph $G$ is  {\em irrelevant} for \FindFolio\ if the extended $\delta$-folio of $G$ is same as the extended $\delta$-folio of $G\setminus v$. For \TMH, we need a stronger notion of irrelevance. We say that a vertex $v$ in a rooted graph $G$ is 
 {\em $(\delta,k)$-irrelevant}  if for any graph $X$ on $R(G)$ and any vertex subset $S\subseteq V(G)$ of size at most $k$, the $\delta$-folio of $G'=(G\cup X)\setminus S$ (where $R(G')=R(G)\setminus S$) is equal to the the $\delta$-folio of $G' \setminus v$.  Notice that a vertex $v$ is irrelevant for \FindFolio\ if and only if it is $(\delta,0)$-irrelevant.

\subsection{(Nice) Flat Wall and (Nice) Flat Wall Theorem}
\label{subsec:flat}
Let $[n]$ denote the set $\{1,2,\ldots,n\}$. 
An {\em $a\times b$-grid} is a graph $G$ whose vertex set can be denoted as $\{v_{i,j}: i\in[a],j\in[b]\}$, so that the edge set of $G$ is exactly $\{\{v_{i,j},v_{i',j'}\}: i,i'\in[a], j,j'\in[b],|i-i'|+|j-j'|=1\}$. Let $J$ be an $h\times(2r)$ grid. For any column $C_j$ (the path $v_{1,j}-\ldots-v_{h,j}$) of $J$,  where $j\in[2r]$, let $e^j_1 , e^j_2,\ldots, e^j_{h-1}$ be the edges of $C_j$, in the order of their appearance on $C_ j$, where $e^j_1$ is incident with  $v_{1, j}$.  For any column $C_j$, if $j$ is odd, then delete from the graph $J$ all edges $e^j_i$ where $i$ is even. For any column $C_j$, if $j$ is even, then delete from the graph all edges $e^j_i$ where $i$ is odd. After this, delete all vertices of degree $1$. The resulting graph $\widehat{W}$ is the {\em elementary wall} of height $h$ and width $r$ (also called the $(h\times r)$-elementary wall) and the vertices of degree $2$ in $\widehat{W}$ are called the {\em pegs} of $\widehat{W}$. 
(See \autoref{fig:elementarywall} for an example). 
\begin{wrapfigure}{l}{0.5\textwidth}
\begin{center}
\begin{tikzpicture}[scale=0.45]
\node[red] (a1) at (0,0) {$\bullet$}; 
\node[red] (a1) at (1,0) {$\bullet$}; 
\node[] (a1) at (2,0) {$\bullet$}; 
\node[red] (a1) at (3,0) {$\bullet$}; 
\node[] (a1) at (4,0) {$\bullet$}; 
\node[red] (a1) at (5,0) {$\bullet$}; 
\node[] (a1) at (6,0) {$\bullet$}; 
\node[red] (a1) at (7,0) {$\bullet$}; 
\node[red] (a1) at (8,0) {$\bullet$}; 

\node[red] (a1) at (0,1) {$\bullet$}; 
\node[] (a1) at (1,1) {$\bullet$}; 
\node[] (a1) at (2,1) {$\bullet$}; 
\node[] (a1) at (3,1) {$\bullet$}; 
\node[] (a1) at (4,1) {$\bullet$}; 
\node[] (a1) at (5,1) {$\bullet$}; 
\node[] (a1) at (6,1) {$\bullet$}; 
\node[] (a1) at (7,1) {$\bullet$}; 
\node[] (a1) at (8,1) {$\bullet$}; 
\node[red] (a1) at (9,1) {$\bullet$}; 

\node[red] (a1) at (0,2) {$\bullet$}; 
\node[] (a1) at (1,2) {$\bullet$}; 
\node[] (a1) at (2,2) {$\bullet$}; 
\node[] (a1) at (3,2) {$\bullet$}; 
\node[] (a1) at (4,2) {$\bullet$}; 
\node[] (a1) at (5,2) {$\bullet$}; 
\node[] (a1) at (6,2) {$\bullet$}; 
\node[] (a1) at (7,2) {$\bullet$}; 
\node[] (a1) at (8,2) {$\bullet$}; 
\node[red] (a1) at (9,2) {$\bullet$}; 

\node[red] (a1) at (0,3) {$\bullet$}; 
\node[red] (a1) at (1,3) {$\bullet$}; 
\node[] (a1) at (2,3) {$\bullet$}; 
\node[red] (a1) at (3,3) {$\bullet$}; 
\node[] (a1) at (4,3) {$\bullet$}; 
\node[red] (a1) at (5,3) {$\bullet$}; 
\node[] (a1) at (6,3) {$\bullet$}; 
\node[red] (a1) at (7,3) {$\bullet$}; 
\node[red] (a1) at (8,3) {$\bullet$}; 

\draw (0,3)--(1,3)--(2,3)--(3,3)--(4,3)--(5,3)--(6,3)--(7,3)--(8,3);
\draw (0,2)--(1,2)--(2,2)--(3,2)--(4,2)--(5,2)--(6,2)--(7,2)--(8,2)--(9,2)--(9,1); 
\draw (0,1)--(1,1)--(2,1)--(3,1)--(4,1)--(5,1)--(6,1)--(7,1)--(8,1)--(9,1);
\draw (0,0)--(1,0)--(2,0)--(3,0)--(4,0)--(5,0)--(6,0)--(7,0)--(8,0);

\draw (0,0)--(0,1);
\draw (0,2)--(0,3);
\draw (2,0)--(2,1);
\draw (2,2)--(2,3);
\draw (4,0)--(4,1);
\draw (4,2)--(4,3);
\draw (6,0)--(6,1);
\draw (6,2)--(6,3);
\draw (8,0)--(8,1);
\draw (8,2)--(8,3);
\draw (1,2)--(1,1);
\draw (3,2)--(3,1);
\draw (5,2)--(5,1);
\draw (7,2)--(7,1);
\end{tikzpicture}
\end{center}
\caption{Example of a $(4\times 5)$-elementary wall $\widehat{W}$. The red colored vertices are the pegs of $\widehat{W}$.}
\label{fig:elementarywall}
\end{wrapfigure}
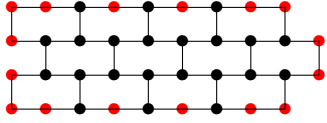
A {\em wall} $W$ of size $w$ (also called a $(w\times w)$-wall) is a subdivision of a $(w\times w)$-elementary wall $\widehat{W}$ and the pegs of $W$ are defined to be the vertices of $W$ that serve as the pegs of $\widehat{W}$. In other words,  $\widehat{W}$ is a topological minor in $W$ where $W$ itself is the realization of it. If there is an $(h\times 2r)$-grid minor in a graph $G$, then there is an $(h\times r)$-wall (as a subgraph) in $G$. On the other hand if there is an $(h\times r)$-wall in a graph $G$, then there is an $(h\times r)$-grid minor in $G$.

Informally, a wall $W$ in a graph $G$ is called a {\em flat wall} if there is a ``portion of  the graph $G$" that contains $W$ and this portion,  say, $G'$, can ``almost" be drawn in a disc in the plane where the pegs of $W$ appear on the boundary of the disc in the order in which they appear on the outer boundary of $W$. Moreover, the vertices in $V(G)\setminus V(G')$ do not have neighbors strictly inside the disc.  
More formally, a wall $W$ is a  flat wall in $G$  if there is a tuple $(A',B',C, \tilde{G},G_0,G_1,\ldots,G_k)$, called {\em flatness tuple}, 
with the following properties.  
\begin{enumerate}[noitemsep,topsep=0pt]
\item $(A',B')$ is a separation of  $G$ with $V(C)=V(A')\cap V(B')\subseteq V(D)$ and $V(W)\subseteq V(B')$, where $D$ is the outer boundary of $W$;
\item $C$ is a cycle graph, the pegs of $W$ are in $V(C)$, and the order in which the vertices of $V(C)$ appear in $C$ is same as the order in which they appear on the outer boundary $D$ of $W$;
\item $B'\cup C=G_0 \cup G_1 \cup\ldots \cup G_k$, and the graphs $G_0,G_1,\ldots,G_k$ are pairwise edge disjoint; 
\item $G_0$ and $C$ are subgraphs of $\tilde{G}$, with $V (\tilde{G}) = V(G_0)$. Moreover, $\tilde{G}$ is a plane graph, and the cycle $C$ bounds its outer face;
\item \label{condition4intro} For $i\in \{1,\ldots,k\}$, $|V(G_i)\cap V(G_0)|\leq 3$ and $V(G_i)\cap V(G_0)$ forms a clique in $\tilde{G}$. 
Moreover, if $V(G_i)\cap V(G_0) = \{u,v,w\}$, then some finite face of $\tilde{G}$ is incident with $u,v,w$ and no other vertex;
\item\label{condition5intro} For all distinct $i,j\in [k]$, $V(G_i)\cap V(G_j)\subseteq V(G_0)$.
\end{enumerate}

We say that a flat wall $W$ is an {\em $\ell$-nice flat wall}, where $\ell\in {\mathbb N}$, if the tree-width of $G_i$, denoted by $\tw(G_i)$, is at most $\ell$, for all $i\in \{1,\ldots,k\}$.  In this case we say that $(A',B',C, \tilde{G},G_0,G_1,\ldots,G_k)$ is an {\em $\ell$-nice flatness tuple}.

The flat wall theorem of Robertson and Seymour states that there are functions $f$ and $p$ such that for any two positive integers $t$ and $w$, any graph $G$ containing a wall of size $f(t,w)$ (equivalently, there is a lower bound on the treewidth of $G$ depending on $f(t,w)$), must contain either $(a)$ a $K_t$-minor, or $(b)$ a subset $A\subset V(G)$ of at most $p(t)$ vertices  and a flat wall of size $w$ in $G\setminus A$.  Kawarabayashi, Thomas and Wollan~\cite{newFlatWall} proved that there is an $\OO(t^{24}m+n)$ time algorithm to output $(a)$ or $(b)$ with parameters  $f(t,w)=\Theta(t^{24}(t^2+w))$ and $p(t)=\OO(t^{24})$ (see \autoref{prop:newFlatWall}).  Using this result and the fact that there is a linear time algorithm to output a $K_t$-minor in a graph with at least $2^{t-3}n$ edges~\cite{ReedWood2009}, we prove the following lemma. 

\begin{lemma}\label{lem:flatWallCombinedintro}
There is a constant $c\in\mathbb{N}$ and an algorithm that given a graph $G$, and integers $g,w,t\geq 1$ such that $g\geq ct^{48}(t^2+w)$, runs in time  $2^{\OO({g^{58}})}n\log^2n$ and outputs one of the following.
\begin{itemize}[noitemsep,topsep=0pt]
\item[$(a)$] A tree decomposition of $G$ of width at most $\tw = g^{\OO(1)}$.
\item[$(b)$] A function $\phi$ witnessing that $K_t$ is a minor of $G$.
\item[$(c)$] A subset $A\subseteq V(G)$ of at most $ct^{24}$ vertices and a $\tw$-nice flat wall $W$  of size at least $(w\times w)$ in $G\setminus A$, along with a $\tw$-nice flatness tuple $(A',B',C,\tilde{G},G_0,G_1,\ldots,G_k)$. 
\end{itemize}
\end{lemma}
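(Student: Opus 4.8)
The plan is to chain three classical ingredients --- an edge‑density test, a treewidth approximation, and the algorithmic flat wall theorem of \autoref{prop:newFlatWall} --- after committing to a polynomial $g^{\OO(1)}$, to be used as $\tw$, that is large enough to absorb every polynomial loss along the way. Throughout, the hypothesis $g\geq ct^{48}(t^2+w)$ is used only through its consequences that $g$ dominates $f(t,w)=\Theta(t^{24}(t^2+w))$ and any fixed polynomial of it, that $t\leq(g/c)^{1/48}$, and that $ct^{24}$ dominates $p(t)=\OO(t^{24})$; the constant $c$ is taken large enough for all of these. First I scan the edges of $G$ until either all of $E(G)$ has been read or $\lceil 2^{t-3}n\rceil$ edges have been seen, which costs $2^{\OO(t)}n$ time. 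If the threshold is met, the scanned subgraph has at least $2^{t-3}n$ edges, so by \cite{ReedWood2009} we find a model of $K_t$ in it, hence in $G$, in linear time; return it as output~$(b)$. Otherwise $m<2^{t-3}n=2^{\OO(t)}n=2^{\OO(g^{1/48})}n$, so from now on every quantity of the form $\OO(t^{24}m+n)$ equals $2^{\OO(t)}n$, and only the treewidth step below can cost more than a polynomial in $g$ times $n$.

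Next I set $k:=g^{c'}$ for a large constant $c'$ (to be fixed last) and run a constant‑factor treewidth approximation on $G$ with parameter $k$ in time $2^{\OO(k)}n\log^{\OO(1)}n=2^{\OO(g^{\OO(1)})}n\log^{\OO(1)}n$. It returns either a tree decomposition of $G$ of width $\OO(k)=:\tw=g^{\OO(1)}$, which we output as~$(a)$, or a certificate that the treewidth of $G$ exceeds $k$. In the latter case an algorithmic form of the polynomial excluded‑grid theorem yields, within the same time budget, a wall of $G$ of size $k^{\Omega(1)}$; since $f(t,w)=\OO(g)$, taking $c'$ large enough makes $k^{\Omega(1)}\geq f(t,w)$, so we have in hand a wall of $G$ of size at least $f(t,w)$. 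Feeding $G$ together with this size‑$f(t,w)$ wall into \autoref{prop:newFlatWall}, in $\OO(t^{24}m+n)=2^{\OO(t)}n$ time it returns either a model of $K_t$ in $G$ --- output~$(b)$ --- or a set $A\subseteq V(G)$ with $|A|\leq p(t)=\OO(t^{24})\leq ct^{24}$ together with a flat wall $W$ of size at least $w$ in $G\setminus A$ and a flatness tuple $(A',B',C,\tilde G,G_0,G_1,\ldots,G_k)$. Summing the three stages and making $c'$ and all hidden constants explicit yields the claimed running time $2^{\OO(g^{58})}n\log^2n$, with $\tw=g^{\OO(1)}$ in every branch.

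The remaining --- and by far the hardest --- task is to make the flatness tuple $\tw$‑nice, i.e.\ to bring the treewidth of every piece $G_i$ down to $\tw$. This is not immediate from \autoref{prop:newFlatWall}: each $G_i$ meets $G_0$ in at most three vertices but may otherwise reach far into $G\setminus A$, and we are precisely in the regime where $G$ has treewidth far above $\tw$, so nothing a priori stops some $G_i$ from having large treewidth. I would obtain the niceness from a closer inspection of the output of \autoref{prop:newFlatWall} in the spirit of the flat wall analysis of \cite{DBLP:conf/stoc/GroheKMW11}: a piece $G_i$ of treewidth above $\tw$ is subjected to the very same dichotomy, and --- using that $G_i$ is only $3$‑attached to the already‑isolated flat part --- the only outcome requiring further work is a flat wall inside $G_i$, which can be folded back into the global flat structure without disturbing $W$. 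Since the pieces are edge‑disjoint, a careful amortization keeps the total cost of this clean‑up near‑linear in $n$; verifying the correctness of the recursion and carrying out this amortization so that the treewidth bound remains $g^{\OO(1)}$ and the total time remains $2^{\OO(g^{58})}n\log^2n$ --- rather than paying separately for each of the possibly $\Omega(n)$ pieces $G_i$ --- is where the substance of the formal proof lies.
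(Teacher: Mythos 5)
Your first two stages (the edge-density test for a $K_t$-minor, the treewidth test, and the invocation of \autoref{prop:newFlatWall} on a large wall) coincide with the paper's proof. The gap is exactly where you say "the substance of the formal proof lies": the niceness of the flatness tuple. Your sketch for that step — recurse into every piece $G_i$ of large treewidth, find a flat wall inside it, and "fold it back into the global flat structure without disturbing $W$" — does not work as described. Finding a flat wall inside a bad piece $G_i$ does not lower $\tw(G_i)$ and hence does not make the original tuple nice; conversely, if you intend to discard $W$ and output the inner flat wall instead, you must recurse, and recursing into \emph{all} bad pieces gives you no control on the recursion depth (edge-disjointness bounds total edge work, but not the number of levels, and the vertex counts of the pieces need not shrink). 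You also never explain why a flat wall found deep inside some $G_i$ is still a flat wall of the \emph{original} graph $G$ — this requires the wall to avoid the accumulated separator vertices, which your construction does not arrange.

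The paper's mechanism, which your proposal is missing, is a case distinction on \emph{how many} pieces $G_i$ have treewidth above $\tw$, combined with asking \autoref{prop:newFlatWall} for a wall of inflated size $\widehat{w}=4(12288t^{24}+3)w$ rather than $w$. If at least three pieces are bad, then (since distinct pieces meet only inside $G_0$) two of them contain at most half the vertices each, and one of those has at most $12288t^{24}+3$ vertices on the current separator; the algorithm recurses into that \emph{single} piece, maintaining a separation of $G$ of order at most $2(12288t^{24}+3)$ and halving the vertex count, which caps the recursion depth at $\OO(\log n)$ and yields the $n\log^2 n$ factor. If at most two pieces are bad, no recursion is needed at all: the bad pieces contribute at most $6$ vertices to $V(G_0)$, so together with the $\le 2(12288t^{24}+3)$ separator vertices they hit fewer than $\widehat{w}/w$ rows and columns of the $\widehat{w}\times\widehat{w}$ wall, and one extracts a $w\times w$ subwall avoiding all of them; the flatness tuple of that subwall simply excludes the bad pieces, so it is $\tw$-nice, and it remains a flatness tuple for $G$ (not just for the current recursive piece) because it avoids the separator. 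Your version, which requests a wall of size only $w$, leaves no slack for this shifting step, and your amortization argument has no replacement for the halving that controls the depth.
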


We would like to mention that a result similar to \autoref{lem:flatWallCombinedintro} is not new. For example, see result~(9.9) in~\cite{RobertsonS95b} and Theorem~1.8 in~\cite{newFlatWall}.  To prove \autoref{lem:flatWallCombinedintro}, we use ``binary search" type of arguments on the output of the flat wall theorem.

\subsection{Case 2: Graph with Large Clique Minor}
In this section we outline an algorithm to find a $(\delta,k)$-irrelevant vertex (for the definition see Section~\ref{sec:notations}) in a rooted graph $G$, when $G$ has a large clique minor.   Our irrelevant vertex rule could be viewed as  $(\delta,k)$-irrelevant vertex version of a result of  Robertson and Seymour (\autoref{prop:RS13:6.1}). 
Our algorithm  exploits the notion of {\em important separators}, defined by Marx~\cite{Marx06i}. 
It is a branching algorithm that goes over different guesses for how the deletion set $S$ could  look. For each guess and all sets $S$ that match the guess, the algorithm marks all the vertices in the clique minor model that are relevant ({\em that is, not irrelevant}) if we delete such a set $S$. Since we go over all possible guesses we would have marked all relevant vertices for all $S$. By our choice of $t$ (the size of the clique minor), there will be one model set which is not marked and we can prove that any vertex in that model set is $(\delta,k)$-irrelevant.

Let us briefly recall important separators. 
 Given a graph $G$ and two vertex sets $X$ and $Y$, an $X$-$Y$-{\em separator} is a vertex set $Z \subseteq V(G)$ such that there is no path from $X \setminus Z$ to $Y \setminus Z$ in $G \setminus Z$. An $X$-$Y$-separator $Z$ is called a {\em minimal} $X$-$Y$-{\em separator} if no proper subset of $Z$ is also an $X$-$Y$-separator. Given a vertex set $Z$, we define the {\em reach of $X$ in $G\setminus Z$} as the set $R_G(X,Z)$ of vertices reachable from $A$ by a path in $G \setminus Z$. We can now define a partial order on the set of minimal $A$-$B$ separators as follows. Given two minimal $X$-$Y$ separators $Z_1$ and $Z_2$, we say that $Z_1$ is ``at least as good as''  $Z_2$ if $|Z_1| \leq |Z_2|$ and $R_G(X, Z_2) \subset R_G(Y, Z_1)$. In simple words, $Z_1$ \enquote{costs less} than  $Z_2$ in terms of the number of vertices deleted and $Z_1$ \enquote{is pushed further towards $Y$} than $Z_2$ is. A minimal $X$-$Y$-separator $Z$ is an {\em important} $X$-$Y$-{\em separator} if no minimal $X$-$Y$-separator other than $Z$ is at least as good as $S$. 
 The important properties  of these separators are the following. 
 There is a unique  important $X$-$Y$-separator of minimum size and  for every $k$, the number of important $X$-$Y$-separators of size at most $k$ is at most $4^k$~\cite{ChenLL09}. 

Let $G$ be a graph and $Z\subseteq V(G)$.  We say that the $\delta$-\mfolio\ of $G$ relative to $Z$ is {\em generic} if  the $\delta$-\mfolio\ of the rooted graph $G$ with $R(G)=Z$ contains every graph with $\vert Z\vert$ roots and with detail at most $\delta$ (see Section~\ref{sec:notations} for the definition of detail). 
Let $G_1,\ldots, G_t$ be the connected subgraphs in the input graph $G$, where $V(G_1),\ldots, V(G_t)$ are the model sets of $K_t$ in $G$. The result of Robertson and Seymour (\autoref{prop:RS13:6.1}) informally states the following. For a large $t$ (depending on $\delta$ and $\vert R(G)\vert$), there exists a separation $(A,B)$ of minimum order such that  $R(G)\subseteq V(A)$, $V(G_i)\subseteq V(B)\setminus V(A)$ for some $i\in \{1,\ldots,t\}$, and any vertex in $V(B)\setminus V(A)$ is 
irrelevant for the $\delta$-\mfolio\ of $G$. Moreover $B\setminus v$ relative to $V(A)\cap V(B)$ is generic.  
It turns out that $R'=V(A)\cap V(B)$ is the unique \imsep{R(G)}{N(V(G_i))}.

In our branching algorithm we first enrich the roots of $G$ as follows. For each graph $X$ on $R(G)$ and each $H$ 
in the $\delta$-folio of $G\cup X$ we find a realization $G_H$ of $H$ as a topological minor in $G\cup X$ (of course if it exists).  
Then we add all the branch vertices (i.e., the vertices in $G_H$ to which $V(H)$ is mapped) and its neighbors in $G_H$ to $R(G)$.  Let $R=R(G)$ be the new set of roots. Notice that the cardinality of $R$ is bounded by a function of $\delta$ and $\vert R\vert$ alone.   
For each such graph $H$ one can get a graph $H'$ of detail at most $4\delta$ (see \autoref{fig:exampleHprime}) such that any minor model of the rooted graph $H'$ (where all vertices are roots) in $G$ can be turned into a topological minor $H$ of $G$. By   \autoref{prop:RS13:6.1} any vertex in $B\setminus v$ 
is $(\delta,0)$-irrelevant. Notice that most of the vetices in the minor model are irrelevant. 
Towards obtaining $(\delta,k)$-irrelevance, we branch based on the behavior of $S$ as follows. From above recall $R$ and the unique \imsep{R(G)}{N(V(G_i))}
$R'=V(A)\cap V(B)$. 

\noindent 
{\bf Case 1:} $S$ contains some vertex ``{\em between''} $R$ and $R'$. That is $S\cap V(A)\neq \emptyset$. 
In this case we call the algorithm recursively to mark 
$(\delta,k-1)$-relevant vertices in $B'=B\setminus (S\cap R')$. Towards this we guess the intersection of $S$ with $R'$ and for each guess we 
recursively mark $(\delta,k-1)$-relevant vertices in $B'=B\setminus (S\cap R')$. This implies that the branching factor of the algorithm 
is bounded by a function of $\vert R'\vert$ and $k$ alone. Since the extended $\delta$-folio of $G\setminus S$ is a function of the extended $\delta$-folios of 
$A\setminus S$ and $B\setminus S$, and $\vert V(B')\cap S\vert <k$, any $(\delta,k)$-irrelevant vertex is 
also ``$(\delta,k)$-irrelevant for sets that have same behavior as $S$''.

\smallskip
\noindent 
{\bf Case 2:} All of $S$ is on the ``{\em right hand side}'' of $R'$. That is $S\subseteq V(B)\setminus V(A)$.  
Notice that $(A,B\setminus S)$ is a separation of $G\setminus S$. If $(A,B\setminus S)$ satisfies the conditions of 
\autoref{prop:RS13:6.1}, then any vertex $v\in V(B)\setminus V(A)$ is irrelevant for the $4\delta$-\mfolio\ in $G\setminus S$, and hence ``$(\delta,k)$-irrelevant for sets that have same behavior as $S$''. 
If $(A,B\setminus S)$ does not satisfy the conditions of \autoref{prop:RS13:6.1}, then one of the following happens: 
either there is a \sep{R'}{N(V(G_i))} of size strictly less than $\vert R'\vert$ or there is a \sep{R'}{N(V(G_i))} of size $\vert R'\vert$ which is at least as good as $R'$. In either case there is a  minimal \sep{R'}{N(V(G_i))} of size at most $\vert R'\vert+k$ containing at least one vertex from $S$. So if we guess an important separator $R''$ that is better than this separator,  then we know that $S$ contains one vertex ``{\em to the left}'' of $R''$ and so we are back to Case $1$ (for $R''$ this time). Here of course it is crucial that the number of important separators is small and that each important separator contains vertices from a small number of model sets of $K_t$.

\subsection{Irrelevant Vertex in a Large Flat Wall}
\label{subsec:kirrelevant}

Our objective is to find a $(\delta,k)$-irrelevant vertex. Here, we briefly explain how to find a $(\delta,k)$-irrelevant vertex under the assumption that we already have an algorithm that finds a  {\em large} $(\delta,0)$-irrelevant subwall within a given (``slightly'' larger) wall, which is $(\delta,0)$-irrelevant irrespective of what is the graph $G$ outside the given wall. That is, 
suppose we have an algorithm ${\cal A}$ that given a {\em large} flat wall $W$ outputs a large subwall $W'$ 
such that the $\delta$-folios of $G$ and $G \setminus W'$ are the same, and the subwall $W'$ depends only on $W$ 
in the following sense. For any $Y\subseteq V(G)$ disjoint from $W$, the $\delta$-folio of $G\setminus Y$ and $(G\setminus Y) \setminus W'$ are the same. The overview of such an algorithm is given in \autoref{subsec:irrelevant}. 
Here, the size of $W'$ depends on the size of $W$. That is, there is a function $g(\delta,t)$  (where $t$ is the size of clique minor used in the previous subsection) such that if the size of $W$ is at least $g(\delta,t)w'\times g(\delta,t) w'$, then the size of $W'$ is at least $w'\times w'$.  Notice that this is stronger than finding just one $(\delta,0)$-irrelevant vertex in the flat wall case, because all vertices in $W'$ can be deleted {\em simultaneously} without changing the $\delta$-folio (critically, without consideration of the part of the graph outside the input wall).

Now we explain how to use algorithm ${\cal A}$ to find a $(\delta,k)$-irrelevant vertex. We set $w=g(\delta,t)^{k+2}$. We start with a wall $W_0=W$ of size $w\times w$.
In step 1, we apply algorithm ${\cal A}$ to find a large subwall $W_1$ of order $g(\delta,t)^{k+1}$. Re-apply  ${\cal A}$ on $W_1$ to obtain $W_2$ and so on, until we have a chain of subwalls $W_0, W_1, \ldots, W_{k+1}$. 
That is, in step $i$, we apply ${\cal A}$ on $W_{i-1}$ and obtain  a subwall $W_i$ of size $g(\delta,t)^{k+2-i}$.  
We claim that every vertex $v$ in $W_{k+1}$ is $(\delta,k)$-irrelevant.
To see this, consider an arbitrary set $S$ of size at most $k$. By pigeonhole principle, 
there exists some $i\in \{0,\ldots,k\}$ such that $(W_{i} \setminus W_{i+1} )\cap S = \emptyset$. Set $S_{out} = S \setminus W_{i}$, we have that $S \cap W_{i} \subseteq W_{i+1}$. Therefore, $W_{i}$ is a flat wall in $G \setminus S_{out}$ and hence all of $W_{i+1}$ is irrelevant in $G \setminus  S_{out}$. In other words, $G \setminus S_{out}$ and $G \setminus (S_{out} \cup W_{i+1})$ have the same $\delta$-folio. But the $\delta$-folio of $G\setminus S$ is contained in the $\delta$-folio of $G \setminus S_{out}$ while the $\delta$-folio of $G \setminus(S_{out} \cup W_{i+1})$ contains the $\delta$-folio of $G \setminus (S \cup \{v\})$. So $G \setminus S$ and $G \setminus  (S \cup \{v\})$ must have the same $\delta$-folio, proving that $v$ is $(\delta,k)$-irrelevant.

\subsection{Irrelevant Subwall in a Large Flat Wall}
\label{subsec:irrelevant}

Here, the objective is to find an irrelevant subwall with respect to the {\em extended} $\delta$-folio of a rooted graph $G$. One can easily show that any irrelevant subwall 
for the problem of  finding the (non-extended) ``$\delta+\vert R(G)\vert$-folio" of $G$ (call this problem  \FindFolio$^{\star}$) is also an irrelevant subwall for \FindFolio. Let $\delta^{\star}=\delta+\vert R(G)\vert$. 
In this subsection we explain how to find an irrelevant subwall for the problem of finding the $\delta^{\star}$-folio of  a given graph $G$. 
For \FindFolio$^{\star}$, we define a {\em solution} as a set $${\cal S}=\{(H,\phi,\varphi)\colon H \mbox{ is in the }\delta^{\star}\mbox{-folio and witnessed by }(\phi,\varphi)\}.$$ For exposition purposes, we will suppose (of course, only in the overview) that for each rooted graph $H$, there is at most one tuple in ${\cal S}$ that contains $H$. For a solution ${\cal S}$, we say that $\bigcup_{(H,\phi,\varphi)\in {\cal S}} \phi(V(H))$ is the set of terminals with respect to ${\cal S}$.

Assume that we are given a $\tw$-nice flat wall of size $w\times w$ in $G\setminus A$. That is, in this case we have a flatness tuple $(A',B',C,\tilde{G},G_0,G_1,\ldots,G_{k'})$ in 
$G\setminus A$.  Also for exposition purposes, we assume that $A=\emptyset$ and $k'=0$. The presence of apex vertices $A$ and ``protrusions" $G_1,\ldots,G_{k'}$ will add technical difficulty to the material, but the conceptual explanation will remain the same at its core. That is, we have a flat wall $W$ in $G$ with a flatness tuple $(A',B',C,\tilde{G},G_0)$. Without loss of generality, we  assume that $\tilde{G}=G_0\cup C$. That is, $G=A'\cup B'$, $(A',B')$ is a separation of  $G$ and $B'\cup C=G_0\cup C$ is a planar graph. Moreover, $V(W)\subseteq V(B')$ and the vertices from the ``non-planar portion" $A'$ of $G$ are not adjacent to $G_0$ except on $C$. 
Notice that we can choose the size $w$ of the wall $W$ in \autoref{lem:flatWallCombinedintro} and the value we choose will be a function of $\delta$ and $\vert R(G)\vert$.    

At this point, let us briefly revisit the \DisjointPathsLong\ problem, see how we can find an irrelevant vertex with respect to it, and then figure out the difficulties in the case of \FindFolio$^{\star}$. 

 \defparproblem{\DisjointPathsLong\ (\DisjointPaths)}{An undirected graph $G$, and a set $T=\{\{s,t\}: s,t\in V(G)\}$.}{$|T|=k^{\star}$}{Does $G$ contain a set of internally vertex disjoint paths that for every $\{s,t\}\in T$, contains one path whose endpoints are $s$ and $t$?}

Robertson and Seymour~\cite{RobertsonS12} proved that a vertex that is sufficiently “insulated” in the aforementioned  planar piece $G_0$ of $G$ is an  irrelevant vertex for \DisjointPathsLong. In other words, there is a function $h$ such that if there is a set  of $h(k^{\star})$ many ``concentric cycles"  $\{C_0,\ldots, C_{h(k^{\star})}\}$ in $G_0$  (see \autoref{def:concentricCycles}), where no terminal vertex in $T$ is in the inner face of any of these cycles, then all  vertices in the inner face of $C_0$ are (in fact, simultaneously) irrelevant (see \autoref{lem:RoSe}).  The difficulty in the case of \FindFolio$^{\star}$ is that {\em we do not know which are the terminal vertices!} (Where, roughly speaking, the terminal vertices are those to belong to the image of $\phi$, and the sought internally vertex disjoint paths reflect the computation of $
\varphi$.) However, because the terminal vertices should belong {\em somewhere} (for every $H$ in the folio), the existence of an irrelevant subwall is guaranteed for a large enough $w$ by the aforementioned result in~\cite{RobertsonS12}.  However, as we are given no information as to where the terminal vertices are, we face major difficulty in the computation of where this irrelevant subwall is.

The starting point of our method is finding a noose (i.e., a closed curve in the plane) enclosed {\em workspace} that contains a ``large'' {\em grid of inner nooses} (clarified below), where the graph  induced on the vertices inside the enclosing (outer) noose has bounded treewidth. Here, both the size of the grid and the treewidth are a function of $\delta$ and $\vert R(G)\vert$.   
For  a noose $N$, we use $\inNoose(N)$ to denote the set of vertices and edges fully contained in the interior (including boundary) of the noose. Informally, an $(a\times b)$-noose grid ${\cal N}$ is a grid of disjoint nooses $\{N_{i,j}\colon i\in [a], j\in [b]\}$ such that $\tilde{G}[\inNoose(N_{i,j})\cap V(G)]$ is connected and for each edge in the grid there is an edge between two vertices in the corresponding nooses.   
A $(p,2q)$-workspace is a pair $(M,{\cal N})$, where ${\cal N}$ is a $(2q\times 2q)$-noose grid ${\cal N}$ and $M$ is a noose with the following properties.
\begin{itemize}
\item $\tilde{G}[\inNoose(M)\cap V(G)]$ is connected and $M$ is a minimum noose that encloses $\tilde{G}[\inNoose(M)\cap V(G)]$. 
\item The treewidth of $\tilde{G}[\inNoose(M)\cap V(G)]$ is at most $p$. 
\item Each vertex  in $\tilde{G}[\inNoose(M)\cap V(G)]$ is exactly in the interior of one noose $N\in {\cal N}$. 
\end{itemize}
The nooses in ${\cal N}$ can be partitioned into $q$ ``frames'': the nooses in the outer boundary of the grid ${\cal N}$ yield $\fr[q-1]$, the nooses in the outer boundary after deletion of $\fr[q-1]$ yield $\fr[q-2]$, and so on (see \autoref{fig:frame}).  Now, we come to a critical definition: a solution ${\cal S}$ of \FindFolio$^{\star}$ is {\em $(\ell,\eta)$-untangled}, for some $0<\ell-3<\ell+3<q-1$ and $\eta\in {\mathbb N}$, if the following holds (with the most critical property being the second one). 
\begin{itemize}
\item[$(a)$] There is no terminal with respect to ${\cal S}$ in the nooses in $\bigcup_{\ell-3\leq j\leq \ell+3}\fr[j]$. 
\item[$(b)$] For any $(H,\phi,\varphi)\in {\cal S}$, the realization $G_H$ witnessed by $(\phi,\varphi)$ uses at most $\eta$ nooses in $\fr[\ell]$, and each used noose in $\fr[\ell]$ is exactly in ``one crossing subpath'' of a path in $\{\varphi(e)\colon e\in E(H)\}$. That is, the restriction of $G_H$ (i.e., partial solution) is small (depending on $\eta$ and $\delta$) and it ``behaves nicely'' on the frame $\ell$.  
\item[$(b)$] No vertices in the ``horizontal nooses'' of $\fr[\ell]$ (i.e., $\{N_{i,j}\colon i\in \{q-\ell,q+1+\ell\}, q-\ell\leq j\leq q+1+\ell\}$) are used by ${\cal S}$. 
\end{itemize}

\begin{figure}[t]
\begin{center}
\begin{tikzpicture}[scale=0.8]
\draw[red] (-5,-5)--(9.5,-5)--(9.5,9.5)--(-5,9.5)--(-5,-5); 
\draw[red] (-4,-4)--(8.5,-4)--(8.5,8.5)--(-4,8.5)--(-4,-4); 
\draw[red] (-3,-3)--(7.5,-3)--(7.5,7.5)--(-3,7.5)--(-3,-3); 
\draw[red] (-2,-2)--(6.5,-2)--(6.5,6.5)--(-2,6.5)--(-2,-2); 
\draw[red] (-1,-1)--(5.5,-1)--(5.5,5.5)--(-1,5.5)--(-1,-1); 

\node[] (a1) at (2,10) {$\fr[q-1]$}; 

\node[] (a1) at (2,4.65) {$\fr[s_1]$}; 


\node[] (a1) at (2,7.65) {$\fr[s_{\mu}]$};


\node[] (a1) at (-5,9.5) {$\bullet$}; 
\node[] (a1) at (-4.5,9.5) {$\bullet$}; 
\node[] (a1) at (-4,9.5) {$\bullet$}; 
\node[] (a1) at (-3.5,9.5) {$\bullet$}; 
\node[] (a1) at (-3,9.5) {$\bullet$}; 
\node[] (a1) at (-2.5,9.5) {$\bullet$}; 
\node[] (a1) at (-2,9.5) {$\bullet$}; 
\node[] (a1) at (-1.5,9.5) {$\bullet$}; 
\node[] (a1) at (-1,9.5) {$\bullet$}; 
\node[] (a1) at (-0.5,9.5) {$\bullet$}; 
\node[] (a1) at (0,9.5) {$\bullet$}; 
\node[] (a1) at (0.5,9.5) {$\bullet$}; 
\node[] (a1) at (1,9.5) {$\bullet$}; 
\node[] (a1) at (1.5,9.5) {$\bullet$}; 
\node[] (a1) at (2,9.5) {$\bullet$}; 
\node[] (a1) at (2.5,9.5) {$\bullet$}; 
\node[] (a1) at (3,9.5) {$\bullet$};
\node[] (a1) at (3.5,9.5) {$\bullet$};  
\node[] (a1) at (4,9.5) {$\bullet$};
\node[] (a1) at (4.5,9.5) {$\bullet$}; 
\node[] (a1) at (5,9.5) {$\bullet$};
\node[] (a1) at (5.5,9.5) {$\bullet$};  
\node[] (a1) at (6,9.5) {$\bullet$};
\node[] (a1) at (6.5,9.5) {$\bullet$};  
\node[] (a1) at (7,9.5) {$\bullet$};
\node[] (a1) at (7.5,9.5) {$\bullet$};  
\node[] (a1) at (8,9.5) {$\bullet$};
\node[] (a1) at (8.5,9.5) {$\bullet$};  
\node[] (a1) at (9,9.5) {$\bullet$};
\node[] (a1) at (9.5,9.5) {$\bullet$};


\node[] (a1) at (-5,-5) {$\bullet$};
\node[] (a1) at (-5,-4.5) {$\bullet$};
\node[] (a1) at (-5,-4) {$\bullet$};
\node[] (a1) at (-5,-3.5) {$\bullet$};
\node[] (a1) at (-5,-3) {$\bullet$};
\node[] (a1) at (-5,-2.5) {$\bullet$};
\node[] (a1) at (-5,-2) {$\bullet$};
\node[] (a1) at (-5,-1.5) {$\bullet$};
\node[] (a1) at (-5,-1) {$\bullet$};
\node[] (a1) at (-5,-0.5) {$\bullet$};
\node[] (a1) at (-5,0) {$\bullet$};
\node[] (a1) at (-5,0.5) {$\bullet$};
\node[] (a1) at (-5,1) {$\bullet$};
\node[] (a1) at (-5,1.5) {$\bullet$};
\node[] (a1) at (-5,2) {$\bullet$};
\node[] (a1) at (-5,2.5) {$\bullet$};
\node[] (a1) at (-5,3) {$\bullet$};
\node[] (a1) at (-5,3.5) {$\bullet$};
\node[] (a1) at (-5,4) {$\bullet$};
\node[] (a1) at (-5,4.5) {$\bullet$};
\node[] (a1) at (-5,5) {$\bullet$};
\node[] (a1) at (-5,5.5) {$\bullet$};
\node[] (a1) at (-5,6) {$\bullet$};
\node[] (a1) at (-5,6.5) {$\bullet$};
\node[] (a1) at (-5,7) {$\bullet$};
\node[] (a1) at (-5,7.5) {$\bullet$};
\node[] (a1) at (-5,8) {$\bullet$};
\node[] (a1) at (-5,8.5) {$\bullet$};
\node[] (a1) at (-5,9) {$\bullet$};


\node[] (a1) at (9.5,-5) {$\bullet$};
\node[] (a1) at (9.5,-4.5) {$\bullet$};
\node[] (a1) at (9.5,-4) {$\bullet$};
\node[] (a1) at (9.5,-3.5) {$\bullet$};
\node[] (a1) at (9.5,-3) {$\bullet$};
\node[] (a1) at (9.5,-2.5) {$\bullet$};
\node[] (a1) at (9.5,-2) {$\bullet$};
\node[] (a1) at (9.5,-1.5) {$\bullet$};
\node[] (a1) at (9.5,-1) {$\bullet$};
\node[] (a1) at (9.5,-0.5) {$\bullet$};
\node[] (a1) at (9.5,0) {$\bullet$};
\node[] (a1) at (9.5,0.5) {$\bullet$};
\node[] (a1) at (9.5,1) {$\bullet$};
\node[] (a1) at (9.5,1.5) {$\bullet$};
\node[] (a1) at (9.5,2) {$\bullet$};
\node[] (a1) at (9.5,2.5) {$\bullet$};
\node[] (a1) at (9.5,3) {$\bullet$};
\node[] (a1) at (9.5,3.5) {$\bullet$};
\node[] (a1) at (9.5,4) {$\bullet$};
\node[] (a1) at (9.5,4.5) {$\bullet$};
\node[] (a1) at (9.5,5) {$\bullet$};
\node[] (a1) at (9.5,5.5) {$\bullet$};
\node[] (a1) at (9.5,6) {$\bullet$};
\node[] (a1) at (9.5,6.5) {$\bullet$};
\node[] (a1) at (9.5,7) {$\bullet$};
\node[] (a1) at (9.5,7.5) {$\bullet$};
\node[] (a1) at (9.5,8) {$\bullet$};
\node[] (a1) at (9.5,8.5) {$\bullet$};
\node[] (a1) at (9.5,9) {$\bullet$};


\node[] (a1) at (-4.5,-5) {$\bullet$}; 
\node[] (a1) at (-4,-5) {$\bullet$}; 
\node[] (a1) at (-3.5,-5) {$\bullet$}; 
\node[] (a1) at (-3,-5) {$\bullet$}; 
\node[] (a1) at (-2.5,-5) {$\bullet$}; 
\node[] (a1) at (-2,-5) {$\bullet$}; 
\node[] (a1) at (-1.5,-5) {$\bullet$}; 
\node[] (a1) at (-1,-5) {$\bullet$}; 
\node[] (a1) at (-0.5,-5) {$\bullet$}; 
\node[] (a1) at (0,-5) {$\bullet$}; 
\node[] (a1) at (0.5,-5) {$\bullet$}; 
\node[] (a1) at (1,-5) {$\bullet$}; 
\node[] (a1) at (1.5,-5) {$\bullet$}; 
\node[] (a1) at (2,-5) {$\bullet$}; 
\node[] (a1) at (2.5,-5) {$\bullet$}; 
\node[] (a1) at (3,-5) {$\bullet$};
\node[] (a1) at (3.5,-5) {$\bullet$};  
\node[] (a1) at (4,-5) {$\bullet$};
\node[] (a1) at (4.5,-5) {$\bullet$}; 
\node[] (a1) at (5,-5) {$\bullet$};
\node[] (a1) at (5.5,-5) {$\bullet$};  
\node[] (a1) at (6,-5) {$\bullet$};
\node[] (a1) at (6.5,-5) {$\bullet$};  
\node[] (a1) at (7,-5) {$\bullet$};
\node[] (a1) at (7.5,-5) {$\bullet$};  
\node[] (a1) at (8,-5) {$\bullet$};
\node[] (a1) at (8.5,-5) {$\bullet$};  
\node[] (a1) at (9,-5) {$\bullet$};

\draw[red] (4.5,0)--(4.5,4.5);
\draw[red] (0,4.5)--(0,0);
\draw[red] (4.5,0)--(0,0);



\node[] (a1) at (1,1) {$\bullet$}; 
\node[] (a1) at (1.5,1) {$\bullet$}; 
\node[] (a1) at (2,1) {$\bullet$}; 
\node[] (a1) at (2.5,1) {$\bullet$}; 
\node[] (a1) at (3,1) {$\bullet$};
\node[] (a1) at (3.5,1) {$\bullet$};  

\node[] (a1) at (1,1.5) {$\bullet$}; 
\node[] (a1) at (1.5,1.5) {$\bullet$}; 
\node[] (a1) at (2,1.5) {$\bullet$}; 
\node[] (a1) at (2.5,1.5) {$\bullet$}; 
\node[] (a1) at (3,1.5) {$\bullet$};
\node[] (a1) at (3.5,1.5) {$\bullet$};  

\node[] (a1) at (1,2) {$\bullet$}; 
\node[] (a1) at (1.5,2) {$\bullet$}; 
\node[] (a1) at (2,2) {$\bullet$}; 
\node[] (a1) at (2.5,2) {$\bullet$}; 
\node[] (a1) at (3,2) {$\bullet$};
\node[] (a1) at (3.5,2) {$\bullet$};  

\node[] (a1) at (1,2.5) {$\bullet$}; 
\node[] (a1) at (1.5,2.5) {$\bullet$}; 
\node[] (a1) at (2,2.5) {$\bullet$}; 
\node[] (a1) at (2.5,2.5) {$\bullet$}; 
\node[] (a1) at (3,2.5) {$\bullet$};
\node[] (a1) at (3.5,2.5) {$\bullet$};  

\node[] (a1) at (1,3) {$\bullet$}; 
\node[] (a1) at (1.5,3) {$\bullet$}; 
\node[] (a1) at (2,3) {$\bullet$}; 
\node[] (a1) at (2.5,3) {$\bullet$}; 
\node[] (a1) at (3,3) {$\bullet$};
\node[] (a1) at (3.5,3) {$\bullet$};  

\node[] (a1) at (1,3.5) {$\bullet$}; 
\node[] (a1) at (1.5,3.5) {$\bullet$}; 
\node[] (a1) at (2,3.5) {$\bullet$}; 
\node[] (a1) at (2.5,3.5) {$\bullet$}; 
\node[] (a1) at (3,3.5) {$\bullet$};
\node[] (a1) at (3.5,3.5) {$\bullet$};  


\draw[red] (0,4.5)--(4.5,4.5);

%

\end{tikzpicture}
\end{center}
\caption{Example of frames. The bullets represent nooses in the noose grid ${\cal N}$.}
\label{fig:frame}
\end{figure}
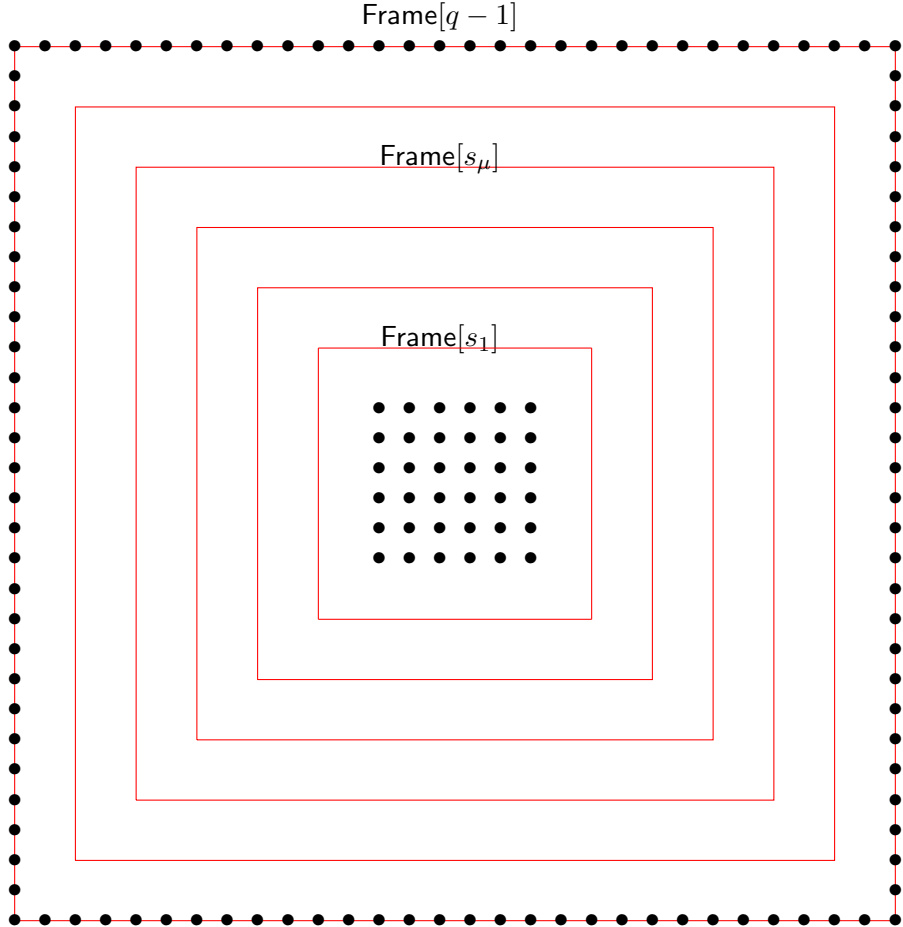

The proof that there exists an $(\ell,\eta)$-untangled solution (for certain $\ell$ and $
\eta$ where $\eta$ is ``small'') is very technical, and requires several steps (where in each step, we assert the existence of a ``nicer'' solution than before, until we eventually arrive at an $(\ell,\eta)$-untangled solution).

Afterwards, we prove that there exist $\eta,\lambda,\mu$ (depending on $\delta$) and $s\in [q/\lambda -\mu-1]$ with the following property. Let $s_i=s(i\lambda -1)$ for all $1\leq i \leq \mu$. Then, the set of ``partial solutions'' satisfying conditions $(b)$ and $(c)$ within $\tilde{G}[\bigcup_{j\leq s_1, N\in \fr[j]}\inNoose(N)\cap V(G)]$ and the set of partial solutions satisfying these conditions within $\ldots, \tilde{G}[\bigcup_{j\leq s_{\mu}, N\in \fr[j]}\inNoose(N)\cap V(G)]$ are ``identical''.\footnote{There are properties beyond isomorphism needed to be satisfied by partial solutions on different frames so that we will consider them as identical. Those technical details are omitted here.} We choose $\mu$ in such a way that there {\em exists} an integer $\ell\in \{s_1,\ldots,s_{\mu}\}$ such that there is a solution ${\cal S}$ that is $(\ell,\eta)$-untangled. Since the treewidth of $\tilde{G}[\inNoose(M)\cap V(G)]$ is at most $p$ (which is upper bounded by a function of $\delta$), using standard dynamic programming on tree decomposition we compute the set of partial solutions corresponding to each of the frames indexed $s_1,\ldots,s_{\mu}$. Even though we do {\em not} know $\ell$, we prove that there is a ``patch'' from the partial solution of the frame indexed $s_1$ to get a solution without using any vertex in any of the ``upper'' (top horizontal) nooses in $\fr[s_1],\fr[s_1+1],\ldots, \fr[s_1+w']$. Thus,  the subwall within the graph formed by these nooses of $\fr[s_1],\fr[s_1+1],\ldots, \fr[s_1+w']$ is irrelevant.

\subsection{Planar  and Bounded Genus Graphs}
\label{subsec:representative}

Lastly, we explain the reason for faster running time with respect to  planar graphs, or, more generally, graphs of bounded genus (\autoref{thm:mainIntroGenus} and \autoref{thm:main}). Recall that the starting point of our algorithm is  \autoref{lem:flatWallCombinedintro} with respect to general graphs. 
However, with respect to a  planar graph, or, more generally, a graph of bounded  genus graph $G$, using the linear dependence between treewidth and the size of the largest grid minor in  $G$, in time $\OO(w^2 n)$ we get either  a $2w\times 2w$ grid minor in $G$ (equivalently, a $w\times w$-flat wall without apex vertices and protrusions) or a tree decomposition of $G$ of width $\OO(w)$. 
(Clearly, we do get a large clique minor because graphs of bounded genus do not have them.) 
 In the case of bounded treewidth, we use standard technique to solve the problem. In the case of a flat wall, we find an irrelevant vertex as explained in Section~\ref{subsec:irrelevant}. However, here instead of 
\autoref{lem:RoSe}, 
we use its planar version proved by Adler et al.~\cite{DBLP:journals/jct/AdlerKKLST17} (see \autoref{prop:disPathIrrelevant}) or the bounded genus version proved by Mazoit~\cite{mazoit2013single}, where $h(k^{\star})\leq 2^{ck^{\star}}$ for some constant $c$ depending on $g$.  
These simplifications and tools yield the running times 
$2^{2^{k \cdot 2^{\OO((h^\star)^2)}}}n^2$ and  
$2^{2^{2^{\OO((h^\star)^2)}}}n^2$ in \autoref{thm:mainIntroGenus} and \autoref{thm:main}, respectively. 

\section{Preliminaries}\label{sec:prelims}

We use $[n]$ and $[n]_0$ as shorthands for $\{1,2,\ldots,n\}$ and $\{0,1,\ldots,n\}$, respectively. The domain and image of a function $f: A\rightarrow B$ are denoted by $\domain(f)$ and $\image(f)$, respectively. For a subset $A'\subseteq A$, $f(A')$ denotes the set of elements $b\in B$ for which there exists $a\in A'$ such that $b=f(a)$. Given two tuples of $m$ integers, $t=(n_1,n_2,\ldots,n_m)$ and $t'=(n_1',n_2',\ldots,n_m')$, we define the relation $<$ lexicographically, that is, we have $t<t'$ if and only if there exists $i\in[m]$ such that $n_i<n_i'$ and for all $j\in[i]$, $n_j\leq n_j'$.

\subsection{Minors, Plane Graphs and Treewidth}

Given a graph $G$, we denote its vertex set and its edge set by $V(G)$ and $E(G)$, respectively. When $G$ is clear from context, denote $n=|V(G)|$ and $m=|E(G)|$. In this paper we consider only graphs without multiple edges, self loops and labels. 
The set of isolated vertices of $G$ is denoted by $\isolated(G)$. Given a subset $U\subseteq V(G)$, $G[U]$ denotes the subgraph of $G$ induced by $U$, and $G\setminus U$ denotes the subgraph $G[V(G)\setminus U]$. Given a vertex $v\in V(G)$, $G\setminus v$ denotes the subgraph $G\setminus\{v\}$. Given a vertex $v\in V(G)$, $d_G(v)$ denotes the degree of $v$ in $G$, and $N_G(v)$ denotes the set of neighbors of $v$ in $G$. For a subset $U\subseteq V(G)$, $N_G(U)=(\bigcup_{v\in U}N_G(v))\setminus U$. Given (not necessarily vertex-disjoint) graphs $G_1$ and $G_2$, denote the graph $G'=(V(G_1)\cup V(G_2), E(G_1)\cup E(G_2))$ by $G_1\cup G_2$.

Given paths $P=v_1-v_2-\cdots-v_t$ and $P'=u_1-u_2-\cdots-u_\ell$ where $v_t=u_1$, $PP'$ denotes the path $v_1-v_2-\cdots-(v_t=u_1)-u_2-\cdots-u_\ell$. An {\em $a\times b$-grid} is a graph $G$ whose vertex set can be denoted as $\{v_{i,j}: i\in[a],j\in[b]\}$, so that the edge set of $G$ is exactly $\{\{v_{i,j},v_{i',j'}\}: i,i'\in[a], j,j'\in[b],|i-i'|+|j-j'|=1\}$. The {\em height} of the grid is $a$ and the {\em width} of the grid is $b$. For $i\in[a]$ and $j\in[b]$, the {\em row} $R_i$ of the grid is the path $v_{i,1}-v_{i,2}-\ldots-v_{i,b}$, and the {\em column} $C_j$ of the grid is $v_{1,j}-v_{2,j}-\ldots-v_{a,j}$. 
Given a vertex set $V$, $\allgraphs(V)$ denotes the set of all graphs on $V$. Given $t\in\mathbb{N}$, $K_t$ is the clique on $t$ vertices.

\begin{observation}\label{obs:allGraphsNumber}
For a vertex set $V$, the number of graphs in $\allgraphs(V)$ is $2^{{|V|\choose 2}}$.
\end{observation}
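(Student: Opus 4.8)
\textbf{Proof plan for Observation~\ref{obs:allGraphsNumber}.}

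The plan is to count graphs on the fixed vertex set $V$ directly, by observing that a graph on $V$ is determined entirely by its edge set, which is an arbitrary subset of the set of all potential edges, namely the set of unordered pairs of distinct vertices of $V$. First I would recall that the paper's graphs have no multiple edges, no self-loops, and no labels on edges, so the only data in a graph on the prescribed vertex set $V$ is \emph{which} of the possible pairs $\{u,v\}$ with $u,v\in V$ and $u\neq v$ are edges. Hence the map sending a graph $G\in\allgraphs(V)$ to its edge set $E(G)$ is a bijection from $\allgraphs(V)$ onto the power set of $\binom{V}{2}$, the collection of all $2$-element subsets of $V$.

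The key step is then the elementary count $\left|\binom{V}{2}\right| = \binom{|V|}{2}$, i.e.\ there are exactly $\binom{|V|}{2}$ candidate edges, since each is an unordered pair of distinct vertices. Combining this with the standard fact that a finite set of size $N$ has exactly $2^{N}$ subsets, we conclude that $|\allgraphs(V)| = 2^{\binom{|V|}{2}}$, as claimed.

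I do not expect any genuine obstacle here: the statement is a routine consequence of the definitions, and the only thing to be careful about is to use the right notion of ``graph'' from the preliminaries — unlabeled edges, no loops, no multiedges, and a \emph{fixed} vertex set $V$ (so that no further quotient by isomorphism is taken, which is exactly why the answer is a clean power of two rather than something more complicated).
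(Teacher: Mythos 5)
Your proof is correct and is exactly the routine edge-subset counting argument that the paper implicitly relies on (the observation is stated without proof there, precisely because it follows immediately from the definition of $\allgraphs(V)$ as graphs on a fixed, labeled vertex set). Nothing to add.
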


We proceed to present the notion of a separation of a graph.

\begin{definition}[{\bf Separation}]
A separation in graph $G$ is a pair $(G_1, G_2)$ of subgraphs of $G$, such that 
$G = G_1 \cup G_2$ and $E(G_1) \cap E(G_2) = \emptyset$. The order of the separation is 
$|V(G_1) \cap V (G_2)|$.
\end{definition}

\begin{definition}[{\bf $(s,t)$-Separator}]
For a graph $G$ and $s,t\in V(G)$, a subset $S\subseteq V(G)\setminus \{s,t\}$ is called an $(s,t)$-separator, if $s$ and $t$ are in two different connected components of $G\setminus S$.  
\end{definition}

\paragraph{Minor, Topological Minor.} Let $G$ and $H$ be two undirected graphs. We say that $H$ is a {\em minor} of $G$ if there exists a function $\phi: V(H)\rightarrow 2^{V(G)}$ such that for all $h\in V(H)$, $G[\phi(h)]$ is a connected graph, for all distinct $h,h'\in V(H)$, $\phi(h)\cap\phi(h')=\emptyset$, and for all $\{h,h'\}\in E(H)$, there exist $u\in \phi(h)$ and $v\in \phi(h')$ such that $\{u,v\}\in E(G)$. The sets $\phi(h_1),\ldots,\phi(h_{\ell})$ where 
$V(H)=\{h_1,\ldots,h_{\ell}\}$, are called the model sets. 
Let $\paths(G)$ be the set of all (simple) paths in $G$.
We say that $H$ is a {\em topological minor} of $G$ if there exist injective functions $\phi: V(H)\rightarrow V(G)$ and $\varphi: E(H)\rightarrow\paths(G)$ such that for all $e=\{h,h'\}\in E(H)$, the endpoints of $\varphi(e)$ are $\phi(h)$ and $\phi(h')$, for all distinct $e,e'\in E(H)$, the paths $\varphi(e)$ and $\varphi(e')$ are internally vertex-disjoint, and there do not exist a vertex $v$ in the image of $\phi$ and an edge $e\in E(H)$ such that $v$ is an internal vertex on $\varphi(e)$. The vertices and edges in $\phi(V(H))$ and $\varphi(E(H))$ form a subgraph $G'$ in $G$. That is $V(G')$ contains the vertices in $\phi(V(H))$ and the vertices in the paths in $\varphi(E(H))$.  The edge set of $G'$ is the edges in that paths in $\varphi(E(H))$. Then we call $G'$ is a {\em realization} of $H$ in $G$.  

%

\paragraph{Plane Graphs.} A graph $G$ is {\em planar} if there exists a mapping from every vertex in $V(G)$ to a point on the plane, and from every edge $e\in E(G)$ to a curve on the plane where the extreme points of the curve are the points mapped to the endpoints of $e$, and all curves are disjoint except on their extreme points. Such a mapping is called an {\em embedding in the plane}, or simply an {\em embedding}.
A {\em plane graph} is a planar graph having a fixed embedding. For a planar graph $G$, we can define its faces as follows: delete all the edges and vertices of $G$ from the plane. Then, the remaining part of the plane is a collection of disjoint areas. Each such area is called a {\em face}. The face whose area is unbounded is called the {\em outer-face}, and every other face is an {\em interior face}. When we say that a point is contained in a face, we mean that the point may also lie on the boundary of the face. 
Deletion of a cycle $C$ in a plane graph from the plane results in two disjoint areas. The area which contains the exterior-face of $C$, while the other area is called the interior-face of $C$. The union of $C$ and  interior(exterior)-face of $C$ is called the inner(outer)-face of $C$. A {\em noose} is a closed curve in the plane or surface. We say that a noose $N$ encloses a graph $G$ if all of the vertices as well as edges of $G$ lies in the interior (including the boundary) of $G$. The {\em minimum noose} that encloses $G$ is the noose that encloses minimum area among all nooses that enclose $G$.

\begin{observation}\label{obs:nooseOuterface}
Let $G$ be a plane graph. The minimum noose that encloses $G$ is the noose that encloses exactly those points in the plane that do not belong to the outer-face of $G$. 
\end{observation}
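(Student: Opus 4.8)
The plan is to prove the statement via two set inclusions. Write $F_{\mathrm{out}}$ for the outer-face of $G$, and let $Q:=\mathbb{R}^2\setminus F_{\mathrm{out}}$ be the set of points not lying in the outer-face; equivalently $Q$ is the bounded set consisting of the points of $V(G)$, the curves representing the edges of $G$, and the closures of all interior faces of $G$. For a noose $N$, let $U_N$ be the unbounded connected component of $\mathbb{R}^2\setminus N$ and let $\mathrm{reg}(N):=\mathbb{R}^2\setminus U_N$ be the region it encloses (so $N$ encloses $G$ exactly when $V(G)\cup E(G)\subseteq \mathrm{reg}(N)$, and ``minimum noose'' means a noose minimizing $\mathrm{area}(\mathrm{reg}(N))$). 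I will show: (i) every noose $N$ enclosing $G$ satisfies $Q\subseteq\mathrm{reg}(N)$; and (ii) there is a noose $N^\star$ enclosing $G$ with $\mathrm{reg}(N^\star)=Q$. These two facts immediately give the claim: $N^\star$ attains the minimum enclosed area, and by (i) every noose of that minimum area encloses exactly $Q=\mathbb{R}^2\setminus F_{\mathrm{out}}$.

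First I would prove (i). Let $N$ be a noose enclosing $G$. Since every vertex and every edge of $G$ lies in $\mathrm{reg}(N)$, the open connected set $U_N$ is disjoint from $G$; because the faces of $G$ are by definition the connected components of $\mathbb{R}^2\setminus G$, the set $U_N$ lies inside a single face of $G$, and since $U_N$ is unbounded that face is $F_{\mathrm{out}}$. Hence $U_N\subseteq F_{\mathrm{out}}$, which rearranges to $Q=\mathbb{R}^2\setminus F_{\mathrm{out}}\subseteq \mathbb{R}^2\setminus U_N=\mathrm{reg}(N)$, proving (i); in particular $\mathrm{area}(\mathrm{reg}(N))\ge\mathrm{area}(Q)$.

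For (ii) I would build $N^\star$ from the facial walk of the outer-face. If $G$ is $2$-connected this walk is a simple cycle $C$ of $G$, and a short Jordan-curve argument shows that the outer-face of $G$ coincides with the exterior-face of $C$ (any point of $F_{\mathrm{out}}$ can be joined to infinity inside $\mathbb{R}^2\setminus G$ and so cannot be separated from infinity by $C$, and conversely $\mathrm{ext}(C)$ is an unbounded connected open set disjoint from $G$); therefore $\mathrm{reg}(C)=C\cup\mathrm{int}(C)$ equals $Q$ and we may take $N^\star=C$. In general the facial walk of $F_{\mathrm{out}}$ is a closed walk that is not simple, as it revisits cut vertices and traverses bridges twice; here I would take $N^\star$ to be the closed curve obtained by pushing this walk a small distance $\varepsilon>0$ into $Q$ and closing it up, and then let $\varepsilon\to 0$. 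The resulting (possibly non-simple) closed curve encloses all of $G$ and has enclosed region exactly $Q$, so together with (i) it witnesses that the minimum enclosed area equals $\mathrm{area}(Q)$ and that every minimizer encloses exactly the points outside $F_{\mathrm{out}}$.

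The routine half is (i), which is nothing more than the fact that a connected subset of $\mathbb{R}^2\setminus G$ lives in a single face of $G$. The delicate half is (ii): when $G$ is not $2$-connected one must check that the curve traced along a perturbation of the outer facial walk really is an admissible noose --- it helps that a noose need only be a closed curve, not a simple one --- and that its enclosed region is exactly $Q$ rather than $Q$ together with thin slivers; the degenerate case of $G$ a tree, where $Q$ has zero area and the minimum noose simply traces $G$, is covered by the same convention. Finally, the statement is meant for connected $G$ (so that $Q$ is connected): for disconnected $G$ a single noose enclosing $G$ is forced to enclose area lying between the components, and the minimum noose need not enclose exactly $Q$. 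Since the observation is later invoked only for the connected graphs $\tilde G[\inNoose(M)\cap V(G)]$, I would state it under the hypothesis that $G$ is connected.
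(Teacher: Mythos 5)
The paper states this observation without any proof at all --- it is offered as self-evident --- so there is no ``paper's argument'' to compare against; I can only assess your proof on its own terms. Your part (i) is exactly the right justification and is complete: a noose enclosing $G$ has its unbounded complementary component disjoint from $G$, hence contained in a single face, which by unboundedness is the outer face. Your connectivity caveat is also a genuine and correct observation: for a disconnected plane graph with far-apart components the statement is false as written (a connected closed curve cannot have $\mathbb{R}^2\setminus F_{\mathrm{out}}$ as its enclosed region when that set is disconnected), and the paper indeed only ever invokes the observation for connected graphs.

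The one place your argument wobbles is the construction in (ii). Pushing the outer facial walk ``a small distance $\varepsilon$ into $Q$'' produces, for every fixed $\varepsilon>0$, a curve that does \emph{not} enclose $G$: the vertices and edges of $G$ lying on the boundary of the outer face end up in the unbounded component of the complement of the pushed curve, so the curve is not an admissible noose and the limit $\varepsilon\to 0$ is doing all the work. The cleaner route is to take $N^\star$ to be the outer facial walk itself, traced as a closed (generally non-simple) curve in the plane --- admissible precisely because the paper defines a noose as an arbitrary closed curve. Since $N^\star$ coincides as a point set with the frontier of $F_{\mathrm{out}}$, any path from a non-outer-face point to infinity must meet that frontier, so the unbounded component of $\mathbb{R}^2\setminus N^\star$ is exactly $F_{\mathrm{out}}$ and $\mathrm{reg}(N^\star)=Q$. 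Finally, a small logical gap: from (i) and area-minimality you conclude that ``every noose of minimum area encloses exactly $Q$,'' but $Q\subseteq\mathrm{reg}(N)$ together with $\mathrm{area}(\mathrm{reg}(N))=\mathrm{area}(Q)$ only forces $\mathrm{reg}(N)\setminus Q$ to have measure zero, not to be empty (a minimizer could carry a zero-area whisker into the outer face). Given how informally the paper defines ``minimum noose,'' this is a quibble about the statement rather than about your proof, but if you want set equality for every minimizer you need an extra word about discarding such whiskers, or you should read the observation as exhibiting one canonical minimizer rather than characterizing all of them.
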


Given a plane graph $G$ and a noose $N$, $\inNoose_G(N)$ denotes the set of vertices and edges of $G$ that lie entirely in the interior (including the boundary) of $N$, and
$\outNoose_G(N)$ denotes the set of vertices and edges of $G$ that lie in the exterior (including the boundary) of $N$. 
In addition, $\inNoose^\star_G(N)$ denotes the set of vertices of $G$ that lie in the strict interior (excluding the boundary) of $N$ and the edges of $G$ that (excluding their endpoints) lie in the strict interior of $N$. 
Note that if a noose encloses a vertex set $U$, it does not imply that the noose encloses the graph induced by $U$ as some edges of that graph may not be enclosed by the noose. In other words, given a plane graph $G$ and a noose $N$, it might hold that $G[\inNoose(N)\cap V(G)]$ is not enclosed by $N$. We say that a plane graph $G$ is {\em nicely drawn} if every connected component of $G$ can be enclosed by a noose that does not contain vertices of other connected components. In other words, for every two connected components $C$ and $C'$ of $G$, it holds that $C$ lies in the outer-face of $C'$ and vice versa (see \autoref{fig:nicelydrawn} for an illustration). 

\begin{observation}
\label{obs:connnoose}
Let $G$ be a nicely drawn plane graph and $U\subseteq V(G)$ such that $G'=G[U]$ is a connected subgraph of $G$. Then, there is a noose 
$N$ such that $V(G')\cup E(G')=\inNoose^\star_G(N)$. 
\end{observation}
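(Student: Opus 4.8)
\noindent\emph{Overall plan.} The plan is to take $N$ to be the outer boundary of a sufficiently thin regular neighbourhood of the drawing of $G'$, and, if $G'$ happens to have bounded faces that ``hide'' other parts of $G$, to additionally carve those regions out.

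\smallskip
\noindent\emph{Step 1 (a thin neighbourhood of $G'$).} Since $G'=G[U]$ is connected, the subset $\Gamma\subseteq\mathbb{R}^2$ covered by the inherited drawing of $G'$ (the points of $U$ together with the arcs of $E(G')$) is compact and connected. As $G$ is a finite plane graph, every vertex of $V(G)\setminus U$ lies at positive distance from $\Gamma$, and every edge $e\in E(G)\setminus E(G')$ meets $\Gamma$ only at endpoints of $e$ lying in $U$, so the relative interior of $e$ contains a subarc at positive distance from $\Gamma$. Hence for all sufficiently small $\varepsilon>0$ the closed $\varepsilon$-neighbourhood $\Gamma_\varepsilon$ of $\Gamma$ (i) contains no vertex of $V(G)\setminus U$; (ii) for each $e\in E(G)\setminus E(G')$ omits some internal point of $e$; and (iii) is a regular neighbourhood of $\Gamma$, meaning it deformation retracts onto $\Gamma$ and its topological boundary is a disjoint union of finitely many Jordan curves, exactly one of which, call it $N_{\mathrm{out}}$, bounds the unbounded complementary region of $\Gamma_\varepsilon$ (by Observation~\ref{obs:nooseOuterface} this $N_{\mathrm{out}}$ is, morally, the minimum noose enclosing $G'$ pushed outward by $\varepsilon$), while each of the remaining boundary curves lies strictly inside a bounded face of the plane graph $G'$.

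\smallskip
\noindent\emph{Step 2 (the benign case).} Call a bounded face of $G'$ \emph{active} if its strict interior meets $V(G)\setminus U$ or contains an internal point of an edge of $E(G)\setminus E(G')$. If $G'$ has no active face, I would set $N:=N_{\mathrm{out}}$. Then $\Gamma$ lies in the strict interior of $N$, so $V(G')\cup E(G')\subseteq\inNoose^\star_G(N)$. Conversely, no vertex of $U$ lies strictly inside a face of $G'$; a vertex of $V(G)\setminus U$ inside a bounded face of $G'$ would make it active, and one outside all bounded faces of $G'$ is excluded by (i); an edge of $G$ with both endpoints outside $U$ cannot enter a bounded face of $G'$ at all (it would have to cross the cycle bounding that face, impossible in a plane graph), and by (ii) an edge of $E(G)\setminus E(G')$ is not contained modulo its endpoints in the strict interior of $N$; an edge with exactly one endpoint in $U$ entering a bounded face of $G'$ would again make that face active. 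Hence $\inNoose^\star_G(N)=V(G')\cup E(G')$. I would also observe that whenever this statement is invoked one has $U=V(G)\cap\inNoose_G(M)$ for some noose $M$, which already forbids active faces (a bounded face of $G'$ is then trapped inside $M$, so it contains no vertex of $V(G)\setminus U$, while the vertices of $U$ sit on face boundaries of $G'$), so $N:=N_{\mathrm{out}}$ suffices there.

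\smallskip
\noindent\emph{Step 3 (general case; the main obstacle).} To handle active faces I would enclose, inside each active face, all the active material by a small Jordan curve oriented oppositely to $N_{\mathrm{out}}$, let $N$ be the union of $N_{\mathrm{out}}$ with these curves, and read ``strict interior'' off the winding number; a routine check then again gives $\inNoose^\star_G(N)=V(G')\cup E(G')$. The step I expect to be the genuine obstacle is to compress this into a \emph{single} closed curve, as a noose literally is: joining $N_{\mathrm{out}}$ to an inner curve by a thin corridor forces that corridor to cross some edge $e'\in E(G')$ bounding the relevant face (a bounded face of $G'$ is enclosed by arcs of $E(G')$), which deletes a subarc of $e'$ from the strict interior of $N$ and would break $E(G')\subseteq\inNoose^\star_G(N)$. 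Making such a corridor harmless — or, more cleanly, reducing to the situation $U=V(G)\cap\inNoose_G(M)$ of Step~2, in which no active face occurs — is therefore the delicate part of the proof I would focus on.
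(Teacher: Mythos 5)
The paper states this observation with no proof at all, so there is nothing of the authors' to compare your argument against; I can only judge the proposal on its own terms. Your Steps 1 and 2 are correct and are presumably the intended justification: taking $N$ to be the outer boundary curve of a thin regular neighbourhood of the drawing of $G'$ gives $V(G')\cup E(G')=\inNoose^\star_G(N)$ whenever no bounded face of $G'$ contains a vertex of $V(G)\setminus U$ or an interior point of an edge of $E(G)\setminus E(G')$. Your closing remark in Step 2 is also essentially right: the observation is invoked exactly once, in the proof of \autoref{lem:fewCrossingsFrame}, and there $U$ is the full set $\bigcup_{N\in{\cal A}_i}\inNoose_{\tilde G}(N)\cap V(\tilde G)$ for a contiguous rectangular block ${\cal A}_i$ of nooses of a noose grid; modulo a short argument from properties (a)--(d) of \autoref{def:nooseGrid} (any vertex outside the block can be joined to the boundary of the grid by a path of $\tilde G$ avoiding $U$, hence lies in the outer face of $\tilde G[U]$), no active face can occur, so your benign case covers everything the paper actually uses.

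The difficulty you isolate in Step 3 is a genuine gap, and you are right not to paper over it: it cannot be closed, because the observation is false as literally stated once one reads ``noose'' as a simple closed curve (which is what the paper's definitions of interior/exterior of a cycle presuppose). Take $G$ to be a triangle $abc$ together with a vertex $v$ drawn in its bounded face and joined to $a$; $G$ is connected, hence nicely drawn, and $G'=G[\{a,b,c\}]$ is connected. If $N$ is any Jordan curve whose strict interior $D$ contains the three closed edges of the triangle $T$, then $N$ is disjoint from $T$ and must lie in the unbounded component of $\mathbb{R}^2\setminus T$ (otherwise the unbounded component of $\mathbb{R}^2\setminus T$ would be forced into the exterior of $N$, contradicting $T\subseteq D$), whence the bounded component of $\mathbb{R}^2\setminus T$ is contained in $D$; so $v$ and the open arc of $va$ both land in $\inNoose^\star_G(N)$, and $\inNoose^\star_G(N)\supsetneq V(G')\cup E(G')$. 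No corridor or winding-number trick rescues a single closed curve here. The honest repair is not to finish Step 3 but to weaken the hypothesis of the observation to the saturated situation of your Step 2 (no vertex of $V(G)\setminus U$ and no interior point of an edge of $E(G)\setminus E(G')$ lies in a bounded face of $G'$), which is what is satisfied at the unique point of use.
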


\begin{center}

\begin{figure}
\begin{subfigure}{.47\linewidth}
\begin{center}
\begin{tikzpicture}[scale=1]
\node[] (a1) at (0,0) {$\bullet$}; 
\node[] (a2) at (0,3) {$\bullet$}; 
\node[] (a3) at (3,0) {$\bullet$}; 
\node[] (a4) at (3,3) {$\bullet$};
\draw (0,0)--(0,3)--(3,3)--(3,0)--(0,0)--(3,3);
\node[] (a4) at (1,0.3) {$\bullet$};
\node[] (a4) at (2.5,1) {$\bullet$};

\node[] (a4) at (0.5,2) {$\bullet$};
\node[] (a4) at (2,2.5) {$\bullet$};

\draw (1,0.3)--(2.5,1);
\draw (0.5,2)--(2,2.5);
\end{tikzpicture}
\end{center}
\caption{This drawing is not nice}
\end{subfigure}
\hspace{0.2cm}
\begin{subfigure}{.47 \linewidth}
\begin{center}
\begin{tikzpicture}[scale=1]
\node[] (a1) at (0,0) {$\bullet$}; 
\node[] (a2) at (0,3) {$\bullet$}; 
\node[] (a3) at (3,0) {$\bullet$}; 
\node[] (a4) at (3,3) {$\bullet$};
\draw[blue] (0,0)--(0,3)--(3,3)--(3,0)--(0,0);
\draw (0,0)--(3,3);
\node[] (a4) at (3.5,1) {$\bullet$};
\node[] (a4) at (4,1) {$\bullet$};

\node[] (a4) at (3.5,2) {$\bullet$};
\node[] (a4) at (4,2) {$\bullet$};

\draw (3.5,1)--(3.5,2);
\draw (4,1)--(4,2);
\end{tikzpicture}
\end{center}
\caption{Nicely drawn embedding, where the boundary of outer face is blue edges}
\end{subfigure}
%
%
\caption{Graph $G$ with two embeddings}
\label{fig:nicelydrawn}
\end{figure}
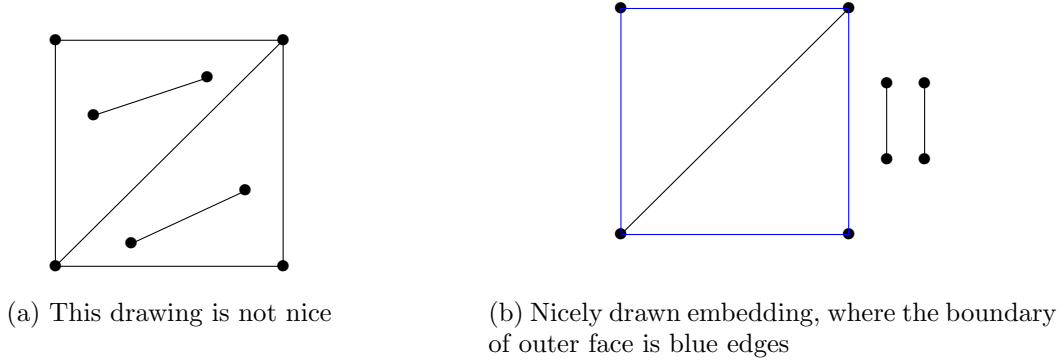
\end{center}

\paragraph{Treewidth.} Treewidth is a structural parameter indicating how much a graph resembles a tree. Formally, treewidth is defined as follows.

\begin{definition}[{\bf Tree Decomposition}]\label{def:treewidth}
A \emph{tree decomposition} of a graph $G$ is a pair $(T,\beta)$ of a rooted tree $T$
and $\beta:V(T) \rightarrow 2^{V(G)}$, such that
\vspace{-0.5em}
\begin{itemize}
\itemsep0em 
\item\label{item:twedge} for any edge $\{x,y\} \in E(G)$ there exists a node $v \in V(T)$ such that $x,y \in \beta(v)$, and
\item\label{item:twconnected} for any vertex $x \in V(G)$, the subgraph of $T$ induced by the set $T_x = \{v\in V(T): x\in\beta(v)\}$ is a non-empty tree.
\end{itemize}
The {\em width} of $(T,\beta)$ is $\max_{v\in V(T)}\{|\beta(v)|\}-1$. The {\em treewidth} of $G$, denoted by $\tw(G)$, is the minimum width over all tree decompositions of $G$.
\end{definition}

We also define a form of a tree decomposition that simplifies the design of DP algorithms.

\begin{definition}[{\bf Nice Tree Decomposition}]
A tree decomposition $(T,\beta)$ of a graph $G$ is {\em nice} if for the root $r$ of $T$, $\beta(r)=\emptyset$, and each node $v\in V(T)$ is of one of the following types.
\vspace{-0.5em}
\begin{itemize}
\itemsep0em 
\item {\bf Leaf}: $v$ is a leaf in $T$ and $\beta(v)=\emptyset$.
\item {\bf Forget}: $v$ has one child, $u$, and there is a vertex $x\in\beta(u)$ such that $\beta(v)=\beta(u)\setminus\{x\}$.
\item {\bf Introduce}: $v$ has one child, $u$, and there is a vertex $x\in\beta(v)$ such that $\beta(v)\setminus \{x\}=\beta(u)$.
\item {\bf Join}: $v$ has two children, $u$ and $w$, and $\beta(v)=\beta(u)=\beta(w)$.
\end{itemize}
\end{definition}

For $v \in V(T)$, we say that $\beta(v)$ is the \emph{bag} of $v$, and $\gamma(v)$ denotes the union of the bags of $v$ and the descendants of $v$ in $T$. Given a tree decomposition $(T,\beta)$, Bodlaender \cite{DBLP:journals/siamcomp/Bodlaender96} showed how to construct a {\em nice} tree decomposition of the same width as $(T,\beta)$ in linear time. As formally stated by the following proposition, a nice tree decomposition of small width can be computed by a parameterized algorithm whose dependency on $n$ is linear.

\begin{proposition}[\cite{DBLP:journals/siamcomp/Bodlaender96}]\label{prop:twExact}
Given a graph $G$ and $t\in\mathbb{N}$, in time $\OO(2^{\OO(t^3\log t)}n)$ we can decide whether $\tw(G)\leq t$, and if the answer is positive, compute a nice tree decomposition of $G$ of width at most $t$.
\end{proposition}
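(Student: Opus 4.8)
The plan is to prove this by a \emph{self-reduction} that, at each step, either shrinks the instance by a fraction $c(t)>0$ depending only on $t$ or correctly answers ``no'', so that the running time obeys $T(n)\le T((1-c(t))n)+2^{\OO(t^3\log t)}n$ and hence sums to $2^{\OO(t^3\log t)}n$. The base case is a graph on at most $\lambda(t)$ vertices for a suitable threshold $\lambda(t)$; there one decides $\tw(G)\le t$ and, if so, builds a nice width-$t$ decomposition by brute force in time depending only on $t$. Throughout the recursion we maintain two invariants: a width-$t$ (nice) tree decomposition of the smaller graph can be rebuilt into one of $G$ in linear time, and a ``no'' answer for the smaller graph implies a ``no'' answer for $G$.

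The combinatorial heart is a dichotomy lemma: if $\tw(G)\le t$ then $G$ contains \emph{either} a set $S$, $|S|=\Omega(n)$, of pairwise non-adjacent vertices, each of degree $\OO(t)$ and simplicial in the \emph{improved graph} $G^+$ (obtained from $G$ by adding an edge between every pair of vertices with at least $t+1$ common neighbours --- such an edge is \emph{forced} into every width-$t$ decomposition, so $\tw(G^+)=\tw(G)$), \emph{or} a matching $M$ with $|M|=\Omega(n)$ whose endpoints all have degree $\OO(t)$ in $G$, where the hidden constants depend only on $t$. Accordingly the algorithm first searches for such an $S$; failing that it searches for such an $M$; failing that it safely reports $\tw(G)>t$. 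Making this effective requires computing only the restriction of $G^+$ to low-degree vertices (the full $G^+$ may be dense), testing simpliciality there, and extracting $S$ or $M$, all in time $f(t)\cdot n$; this uses that a width-$t$ graph has at most $tn$ edges, so a constant fraction of its vertices already have degree $\OO(t)$ in $G$.

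For the reduction itself there are two cases. If a suitable $S$ is found, recurse on $G'=G\setminus S$: each $v\in S$ has a closed neighbourhood that is a clique of size $\le t+1$ in $G^+$, so $v$ can be placed in a fresh leaf bag attached to any bag containing $N_{G^+}(v)$; hence $\tw(G)\le t$ iff $\tw(G')\le t$, and a width-$t$ decomposition of $G'$ lifts to one of $G$ in $\OO(|S|)$ time. If instead a suitable $M$ is found, recurse on $G'=G/M$: contraction produces a minor, so $\tw(G')\le\tw(G)$; if the recursive call reports $\tw(G')>t$ we return ``no'', and otherwise we un-contract a width-$t$ decomposition of $G/M$ (replacing each contracted vertex by the two endpoints of its matching edge in every bag it occurs in) to obtain a tree decomposition of $G$ of width at most $2t+1$ in linear time. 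Feeding this approximate decomposition into the Bodlaender--Kloks dynamic-programming routine then decides $\tw(G)\le t$ exactly and, if the answer is positive, outputs a width-$t$ decomposition in time $2^{\OO(t^3\log t)}n$; finally a nice decomposition of the same width is obtained in linear time by the standard transformation. In both cases $|V(G')|\le(1-c(t))n$, and since $1/c(t)$ is polynomial in $t$ it is absorbed into $2^{\OO(t^3\log t)}$, giving the claimed bound.

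The main obstacle is the dichotomy lemma: it is where almost all of the work lies, and its proof requires a careful structural analysis of an optimal tree decomposition of $G$ --- bounding the number of bags, charging each low-degree vertex to a bag in which it, together with its neighbourhood, ``lives'', and then an averaging argument that either exposes a large independent family of such vertices that are simplicial in $G^+$ or, when that fails, produces a large matching among low-degree vertices. The other genuinely delicate point is implementing every step in \emph{linear} time with the stated dependence on $t$: one must never materialize the dense improved graph, must find $S$ or $M$ by a greedy/bucketing scheme in $\OO(f(t)n)$ time, and must run the Bodlaender--Kloks DP with the $2^{\OO(t^3\log t)}$ bound on the number of table entries and on the cost of the join, introduce and forget transitions when the input width is $2t+1$.
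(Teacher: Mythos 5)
The paper does not prove this proposition at all: it is imported verbatim as a black box from Bodlaender's SIAM J.\ Comput.\ 1996 paper, so there is no in-paper argument to compare against. What you have written is, in outline, a faithful reconstruction of Bodlaender's original algorithm --- the improved graph with forced edges, the dichotomy between many low-degree $I$-simplicial vertices and a large low-degree matching, the contract-and-uncontract step yielding an approximate decomposition of width $\le 2t+1$, the Bodlaender--Kloks dynamic program on that decomposition, and the geometric shrinkage giving linear total time --- and the pieces you defer (the dichotomy lemma, the linear-time implementation without materializing $G^+$) are exactly the pieces that carry the weight in the original proof.

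One step would fail as literally written. In the simplicial case you recurse on $G'=G\setminus S$ and then lift a width-$t$ decomposition of $G'$ by attaching, for each $v\in S$, a leaf bag to ``any bag containing $N_{G^+}(v)$.'' But a width-$t$ tree decomposition of the bare graph $G\setminus S$ need not contain any bag covering $N_{G^+}(v)$: the edges that make $N_{G^+}(v)$ a clique are forced edges of $G^+$ (or edges incident to $v$) and are not all present in $G\setminus S$, so nothing compels an optimal decomposition of $G\setminus S$ to group these vertices together. The recursion must instead be performed on $G\setminus S$ \emph{with the clique on $N_{G^+}(v)$ added for each $v\in S$} (equivalently, on the improved graph restricted to $V\setminus S$); one then checks that adding these cliques does not raise the treewidth above $t$ precisely because they consist of forced edges, and the lift goes through. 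This is how Bodlaender's argument handles it, and with that correction your outline is sound; the remaining issues are only the quantitative details of the dichotomy lemma and of keeping every step at $f(t)\cdot n$ time, which you have correctly identified as the hard part.
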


With respect to treewidth and \TMC, the following proposition is known.

\begin{proposition}[\cite{scheffler1994practical}]\label{prop:twDPNoDec}
\TMC\ is solvable in time $\OO(t^{\OO(t)} n)$ if every input instance $(G,H,k)$ is given along with a tree decomposition of $G$ of width $t$.
\end{proposition}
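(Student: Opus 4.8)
The plan is to prove the proposition by bottom-up dynamic programming over a nice tree decomposition of $G$, which is the standard route for subgraph-type problems on bounded-treewidth graphs \cite{scheffler1994practical}. First I would make the supplied width-$t$ tree decomposition nice in linear time (Bodlaender's transformation, as recalled just before Proposition~\ref{prop:twExact}), root it at a node $r$ with $\beta(r)=\emptyset$, and for $v\in V(T)$ write $\gamma(v)$ for the union of the bags in the subtree rooted at $v$. Recall that $G$ contains $H$ as a topological minor iff $G$ has a subgraph $G'$ that is a subdivision of $H$; I will call such a $G'$, together with the correspondence identifying the vertices of $H$ with the branch vertices of $G'$ and the edges of $H$ with the subdivision paths of $G'$, a \emph{model}, and solve the problem by deciding whether a model exists.

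The core of the argument is the definition, for each node $v$, of the \emph{signature} of a model restricted to $\gamma(v)$: the restriction is a partial model, and the signature records only how it meets the bag $\beta(v)$. Concretely, a signature assigns to every $u\in\beta(v)$ a \emph{role} --- ``unused by $G'$'', ``branch vertex realizing a named vertex of $H$'', or ``degree-$2$ internal vertex of the subdivision path of a named edge of $H$'' --- together with a \emph{linkage pattern} describing which used bag vertices are joined by a connected piece of the partial model lying in $\gamma(v)$ and, for each such piece, the vertex or edge of $H$ it belongs to; finally it records the set of vertices of $H$ whose branch vertex already lies in $\gamma(v)\setminus\beta(v)$ and the set of edges of $H$ whose entire subdivision path already lies in $\gamma(v)$ (the ``finished'' part). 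A model exists in $G$ iff the empty-bag root $r$ carries a reachable signature in which every vertex and every edge of $H$ is finished.

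Next I would bound the number of signatures and describe the transitions. A bag has at most $t+1$ vertices, each taking one of $\OO(|V(H)|+|E(H)|)$ roles; the linkage pattern is one of at most $(t+2)^{t+1}$ labelled partitions of the used bag vertices; and the finished-sets contribute a factor $2^{|V(H)|+|E(H)|}$. Hence each node carries $t^{\OO(t)}$ signatures, the hidden constant absorbing the $\OO((h^\star)^2)$ dependence on the size of $H$. The leaf table is trivial; at an introduce node the new bag vertex is given each role consistent with its already-introduced neighbours and the linkage pattern updated; at a forget node one projects the forgotten vertex away, but only after checking its local constraints (a branch vertex has attained the degree prescribed by $H$; a path-internal vertex has degree exactly $2$ in $G'$; no loose path end is left behind); at a join node two child signatures over the same bag are merged, taking the union of linkage pieces and finished-sets while rejecting any merge that closes a cycle, double-uses a vertex, overloads a branch vertex, or makes two subdivision paths share an internal vertex. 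Each transition enumerates (pairs of) child signatures and performs a local check in $t^{\OO(t)}$ time, and there are $\OO(n)$ nodes, giving the claimed $\OO(t^{\OO(t)}n)$ running time.

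The main obstacle --- requiring care rather than new ideas --- is the correctness of the glue in the join and forget steps: one must splice the partial subdivision paths of a common edge of $H$ across the two subtrees without accidentally forming a cycle or fusing pieces that should stay separate, and must ensure that degree and internal-disjointness constraints remain decidable from the signature once a vertex is forgotten. This is a finite but delicate case analysis and constitutes the bulk of the proof. As a cross-check, ``$G$ contains $H$ as a topological minor'' is expressible by an $\mathrm{MSO}_2$ formula of length $\OO((h^\star)^2)$ over $G$, so Courcelle's theorem already yields an $f(t,h^\star)\cdot n$ algorithm; the explicit dynamic program above is what produces the elementary $t^{\OO(t)}$ dependence asserted by the proposition.
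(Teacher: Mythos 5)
The paper does not prove this proposition at all: it is imported as a black box from \cite{scheffler1994practical}, so there is no internal proof to compare against. Your reconstruction --- a bottom-up dynamic program over a nice tree decomposition whose states record, for each bag vertex, its role (unused, branch vertex of a named $h\in V(H)$, or internal vertex of the subdivision path of a named $e\in E(H)$), together with a labelled linkage pattern of the partial subdivision paths meeting the bag and the ``finished'' part of $H$ --- is the standard table for subgraph homeomorphism on bounded treewidth and is, in substance, the argument of the cited work; the join/forget splicing you defer is a routine (if tedious) case analysis, and your observation that role-consistency on the bag already enforces internal disjointness at join nodes is the right one.

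The one place you are too quick is the running-time accounting. The state space per node is not $t^{\OO(t)}$: the role assignment contributes a factor of order $(|V(H)|+|E(H)|)^{t+1}$ and the finished-sets a factor $2^{|V(H)|+|E(H)|}$, and neither is ``absorbed into the hidden constant'' unless $h^\star$ is itself bounded by a polynomial in $t$ --- the dependence on the pattern is real and cannot be made to vanish for arbitrary $H$. The proposition as printed shares this imprecision, but note that the paper's own later invocation is more careful: in the proof of Lemma~\ref{lem:representativeTW} the bound quoted for exactly this subroutine is $(|R(G)|+\tw)^{\OO(\delta+|R(G)|+\tw)}\cdot n$, i.e.\ with the size of the pattern appearing in the exponent. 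So your derivation is sound, but you should state the resulting bound as $(t+h^\star)^{\OO(t+(h^\star)^2)}\cdot n$ (or make explicit the standing assumption that $h^\star=t^{\OO(1)}$, which holds in every use the paper makes of the proposition) rather than claim the $H$-dependence disappears.
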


As a corollary to Propositions \ref{prop:twExact} and \ref{prop:twDPNoDec}, we have the following proposition.

\begin{proposition}\label{prop:twDP}
\TMC\ is solvable in time $\OO(2^{\OO(\tw^3\log \tw)} n)$, where $\tw=\tw(G)$.
\end{proposition}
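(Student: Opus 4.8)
The plan is to obtain the proposition as a direct corollary of Propositions~\ref{prop:twExact} and~\ref{prop:twDPNoDec}: first construct a tree decomposition of optimal width, then run the dynamic program of Proposition~\ref{prop:twDPNoDec} on it. The only point needing attention is that $\tw=\tw(G)$ is not part of the input, so we have to search for it.

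First I would, for $t=1,2,3,\ldots$ in increasing order, invoke the algorithm of Proposition~\ref{prop:twExact} with parameter $t$, stopping at the first value $t^\star$ for which it reports $\tw(G)\le t$. By minimality of the treewidth we have $t^\star=\tw(G)=\tw$, and this last call also returns a nice tree decomposition $(T,\beta)$ of $G$ of width at most $\tw$. The time spent in this search is
\[
\sum_{t=1}^{\tw}\OO(2^{\OO(t^3\log t)}n)=\OO(2^{\OO(\tw^3\log\tw)}n),
\]
because the ratio of two consecutive summands is at least $2$, so the last term dominates the entire sum up to a constant factor.

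Next I would feed the instance $(G,H,k)$ together with the decomposition $(T,\beta)$ of width $\tw$ to the algorithm of Proposition~\ref{prop:twDPNoDec}, which decides \TMC\ in time $\OO(\tw^{\OO(\tw)}n)=\OO(2^{\OO(\tw\log\tw)}n)$. Adding the two phases, the total running time is $\OO(2^{\OO(\tw^3\log\tw)}n)+\OO(2^{\OO(\tw\log\tw)}n)=\OO(2^{\OO(\tw^3\log\tw)}n)$, as claimed.

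I do not expect any genuine obstacle here: the statement is essentially bookkeeping on top of the two cited propositions. The only things to be careful about are (i) that the treewidth is unknown and is recovered by the incremental search above (one could instead plug in any constant-factor treewidth approximation running within the same bound, but Proposition~\ref{prop:twExact} already suffices), and (ii) checking that the super-geometric sum of the running times of Proposition~\ref{prop:twExact} over $t=1,\ldots,\tw$ collapses to its last term, which is what yields the stated $2^{\OO(\tw^3\log\tw)}$ factor.
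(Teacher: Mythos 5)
Your proposal is correct and is exactly the route the paper intends: Proposition~\ref{prop:twDP} is stated there as an immediate corollary of Propositions~\ref{prop:twExact} and~\ref{prop:twDPNoDec}, with the decomposition computed first and the dynamic program run on it. The incremental search over $t$ and the observation that the super-geometric sum collapses to its last term are the standard bookkeeping the paper leaves implicit.
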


\subsection{Flat Wall}

Here, we recall the definition of a flat wall and a result to compute 
a large flat wall in a graph $G$ if the treewidth of $G$ is large and it has no large clique minor~\cite{newFlatWall,DBLP:conf/soda/Chuzhoy15}.  We first present the definition of an elementary wall.

\begin{definition}[{\bf Elementary wall}]\label{def:elemWall}
Let $J$ be an $h\times(2r)$ grid. For 
any column $C_j$ of $J$,  where $j\in[2r]$, 
let $e^j_1 , e^j_2,\ldots, e^j_{h-1}$ be the edges of $C_j$, in the order of their
appearance on $C_ j$, where $e^j_1$ is incident on $v_{1, j}$.  
For any column $C_j$, if $j$ is odd, then delete from the graph $J$ all edges
$e^j_i$ where $i$ is even. For any column $C_j$, if $j$ is even, then delete from the graph all edges $e^j_i$ where $i$ is odd. After 
this, delete all vertices of degree $1$. The resulting graph $\widehat{W}$ is the {\em elementary wall} of height $h$ and width $r$. 
(See \autoref{fig:elementarywall} for an example).
\end{definition}

The definition of an elementary wall directly implies the following observation.

\begin{observation}\label{obs:gridToWall}
Every $h\times(2r)$ grid contains an $h\times r$ elementary wall as a subgraph.
\end{observation}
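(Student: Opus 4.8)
\textbf{Proof plan for Observation~\ref{obs:gridToWall}.}

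The plan is to unwind the definition of the elementary wall (Definition~\ref{def:elemWall}) and track exactly which vertices and edges survive the deletion process, showing that the resulting graph sits inside the original $h \times (2r)$ grid $J$ as a subgraph. First I would fix the vertex labelling $\{v_{i,j} : i \in [h], j \in [2r]\}$ of $J$. The construction of $\widehat{W}$ from $J$ proceeds purely by \emph{deletions}: we delete some vertical edges according to the parity rule (for odd columns delete the $e^j_i$ with $i$ even, for even columns delete those with $i$ odd), and then delete all vertices of degree $1$. Since every step is a deletion of a vertex or an edge of $J$, the resulting graph $\widehat{W}$ is by definition a subgraph of $J$; no edges or vertices are ever added or relabelled. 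The only content to verify is the implicit claim in the statement, namely that the elementary wall obtained from an $h\times(2r)$ grid has height $h$ and width $r$ (so that it is indeed ``an $h \times r$ elementary wall''), rather than collapsing to something smaller.

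The main step is therefore a bookkeeping argument. I would argue that the horizontal edges of $J$ are never deleted by the parity rule (it only removes vertical edges), and that the vertex-of-degree-$1$ cleanup removes at most the two ``corner stubs'' of each row — the endpoints $v_{i,1}$ or $v_{i,2r}$ that, after removing the appropriate vertical edge, are left with a single incident (horizontal) edge. One checks that each of the $h$ rows retains at least $2r - 1$ of its vertices and hence remains a connected path of length $\geq 2r-2$, so all $h$ rows survive; and that the surviving vertical edges, one per $2\times 2$ ``brick'', connect consecutive rows so that the height is exactly $h$. Counting the bricks across the $2r$ columns gives width $r$. The cleanest way to present this is to exhibit the standard explicit embedding: the $(h \times r)$-elementary wall is precisely the subgraph of $J$ on the vertex set $V(J)$ minus the handful of degree-$1$ vertices, with edge set = (all horizontal edges of $J$) $\cup$ (the parity-selected vertical edges) $\cup$ (remove edges left incident to deleted vertices). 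Since all of these are edges of $J$, containment as a subgraph is immediate.

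I do not expect any real obstacle here: the statement is essentially a restatement of Definition~\ref{def:elemWall}, since the elementary wall is \emph{defined} as the outcome of a sequence of deletions applied to a grid. The only mild subtlety is making sure the height and width parameters come out right — i.e., that deleting degree-$1$ vertices does not shrink the wall below $h \times r$ — which is handled by the row-survival count above. So the whole proof amounts to one sentence of principle (deletions preserve the subgraph relation) plus a short paragraph confirming the dimensions, and can be stated without any nontrivial calculation.
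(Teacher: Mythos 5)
Your proposal is correct and matches the paper, which offers no proof beyond the remark that the observation is ``directly implied'' by Definition~\ref{def:elemWall}: the $h\times r$ elementary wall is by definition the graph obtained from an $h\times(2r)$ grid by deletions, so it is automatically a subgraph, and your dimension bookkeeping (a minor detail: the degree-$1$ cleanup only affects a few vertices in the first and last rows/columns, not two per row) is not even needed since ``height $h$ and width $r$'' is simply the name the definition assigns to the resulting graph.
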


Let $\widehat{W}$ be an elementary wall. Let $E_1$ be the set of edges of $\widehat{W}$ that correspond to the horizontal edges of the original grid  and let $E_2$ be the set of the edges of $\widehat{W}$ that correspond to the vertical edges of the original grid. 
That is, $E_1\subseteq \{v_{i,j}v_{i,j+1}~:~i\in [h],j\in [2r-1]\}$ and $E_2\subseteq \{v_{i,j}v_{i+1,j}~:~i\in [h-1],j\in [2r]\}$. The subgraph of $\widehat{W}$ consisting of $E_1$ forms a collection of $h$  vertex disjoint paths, 
denoted  by $R_1, \ldots, R_h$, where for $i\in [h]$, $R_i$ is incident on $v_{i,1}$.  
Let $V_1$ and $V_h$ denote the set of all vertices of $R_1$ and $R_h$, respectively. There is a unique set ${\cal C}$ of $r$ vertex disjoint paths, where each path $C \in {\cal C}$ starts at a vertex of $V_1$, ends at a vertex of 
$V_h$, and is internally vertex disjoint from $V_1 \cup V_h$. We let $C_1$ and $C_r$ denote the leftmost and rightmost paths in ${\cal C}$. The subgraph $Z=R_1 \cup C_1 \cup R_h \cup C_r$ of $\widehat{W}$ is a simple cycle and it is called the {\em outer boundary} of $\widehat{W}$.
All the vertices of $Z$ that have degree $2$ in $\widehat{W}$ are called {\em pegs} of $\widehat{W}$. 

A graph $W$ is a {\em wall} of height $h$ and width $r$, or an {\em $h\times r$-wall}, if and only if it is a subdivision of the elementary wall $\widehat{W}$ of height $h$ and width $r$.  Notice that, in this case, $\widehat{W}$ is a 
topological minor of $W$.  Let $(\phi,\psi)$ be a pair of functions that witnesses that $\widehat{W}$ is a topological 
minor. The {\em pegs} of $W$ are the vertices to which the pegs of $\widehat{W}$ are mapped by the map $\phi$. The following observation is immediate.

\begin{observation}
\label{obs:wlalltogrid}
An $(h,r)$-wall contains the $h\times r$-grid as a minor.  
\end{observation}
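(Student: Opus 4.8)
The plan is to first reduce the claim to a statement about the elementary wall, and then prove the latter by an explicit contraction inside a grid. Since an $h\times r$-wall $W$ is, by definition, a subdivision of the $h\times r$-elementary wall $\widehat{W}$, the graph $\widehat{W}$ is a topological minor of $W$ (with $W$ itself as a realization), and in particular $\widehat{W}$ is a minor of $W$. As the minor relation is transitive, it suffices to show that $\widehat{W}$ contains the $h\times r$-grid as a minor.

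To prove this I would work inside the $h\times(2r)$-grid $J$ with vertex set $\{v_{i,j}: i\in[h],\, j\in[2r]\}$ from which $\widehat{W}$ is built. Recall that $E(\widehat{W})$ is obtained from $E(J)$ by keeping all horizontal edges, keeping in each odd column $2k-1$ only the vertical edges $v_{i,2k-1}v_{i+1,2k-1}$ with $i$ odd, keeping in each even column $2k$ only the vertical edges $v_{i,2k}v_{i+1,2k}$ with $i$ even, and finally deleting the resulting degree-$1$ vertices. The key observation is that, for every $k\in[r]$ and every $i\in[h-1]$, exactly one of the two edges $v_{i,2k-1}v_{i+1,2k-1}$ and $v_{i,2k}v_{i+1,2k}$ survives in $\widehat{W}$ (the first if $i$ is odd, the second if $i$ is even). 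Now pair up consecutive columns: for every $i\in[h]$ and $k\in[r]$ contract the edge $v_{i,2k-1}v_{i,2k}$ whenever it is present. These edges form a matching, so this is a legitimate sequence of contractions. Writing $w_{i,k}$ for the vertex resulting from $v_{i,2k-1}$ and $v_{i,2k}$, the surviving vertical edge between rows $i$ and $i+1$ in columns $2k-1,2k$ becomes $w_{i,k}w_{i+1,k}$, yielding a full vertical path in every merged column, while the horizontal edge $v_{i,2k}v_{i,2k+1}$ of $J$ becomes $w_{i,k}w_{i,k+1}$, yielding the horizontal edges between consecutive merged columns. Hence the contracted graph contains the $h\times r$-grid on the vertices $\{w_{i,k}\}$ as a spanning subgraph, so the $h\times r$-grid is a minor of $\widehat{W}$, and therefore of $W$.

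The only point that needs care -- and where I expect whatever bookkeeping there is to live -- is the last step of the construction of $\widehat{W}$, which removes the degree-$1$ vertices: a short parity check shows these are exactly the top-right corner $v_{1,2r}$ together with one bottom corner ($v_{h,1}$ if $h$ is odd, $v_{h,2r}$ if $h$ is even), and that no further vertex becomes degree $1$ and no cascade occurs. For each of these at most two missing vertices I would verify, by the same odd/even parity bookkeeping on the kept vertical edges, that all the edges invoked above to build the grid (the surviving vertical edges incident to the affected $w_{i,k}$, and the horizontal edges $v_{i,2k}v_{i,2k+1}$) are still present, so the contracted graph still contains the full $h\times r$-grid. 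An alternative route, avoiding $J$, is to take the $h$ pairwise disjoint horizontal paths $R_1,\dots,R_h$ and the $r$ pairwise disjoint vertical ``brick'' paths $C_1,\dots,C_r$ of $\widehat{W}$, observe (using planarity) that $C_k$ meets the $R_i$'s in the order $R_1,\dots,R_h$ and $R_i$ meets the $C_k$'s in the order $C_1,\dots,C_r$, and use $V(R_i)\cap V(C_k)$ -- enlarged on the top and bottom rows -- as the branch set of the $(i,k)$-vertex of the grid; but the contraction argument above seems the cleanest to write.
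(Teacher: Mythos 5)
Your proof is correct. The paper states this observation without proof (it is introduced as ``immediate''), and your argument --- passing to the elementary wall $\widehat{W}$ via the subdivision/topological-minor relation and then contracting the matching of horizontal edges $v_{i,2k-1}v_{i,2k}$ to merge each brick's two columns --- is exactly the standard argument the paper leaves implicit; your parity bookkeeping (exactly one of the two vertical edges between rows $i$ and $i+1$ in columns $2k-1,2k$ survives, and the only degree-$1$ deletions are $v_{1,2r}$ plus one bottom corner, neither of which destroys any edge used in the contraction) checks out.
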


A wall minor in a graph can be turned into a wall subgraph in the graph.     

\begin{observation}
\label{obs:wallminortosub}
If a graph $G$  contains a $(w\times w)$-wall as a minor, then there is a $(w\times w)$-wall as subgraph in $G$.
Moreover, there is a linear time algorithm which given a graph $G$ and a minor model of a $(w\times w)$-wall in $G$, outputs a $(w\times w)$-wall in $G$. 
\end{observation}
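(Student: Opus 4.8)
The plan is to reduce the statement to the classical fact that for graphs $H$ of maximum degree at most $3$ being a minor is the same as being a topological minor, and to make that reduction constructive and linear-time. First I would note that a $(w\times w)$-wall $W$, being a subdivision of the elementary wall $\widehat{W}$, has maximum degree at most $3$: every vertex of $\widehat{W}$ has degree $2$ or $3$, and subdividing edges only creates new degree-$2$ vertices. Hence it suffices to give a linear-time procedure that turns a minor model of a subcubic graph $H$ in $G$ into a subgraph of $G$ that is a subdivision of $H$; applying it with $H=W$ produces a subgraph of $G$ that is a subdivision of $W$, and a subdivision of $W$ is in turn a subdivision of $\widehat{W}$, i.e.\ a $(w\times w)$-wall.

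For the procedure, let $\phi\colon V(H)\to 2^{V(G)}$ be the given minor model: the sets $\phi(h)$ are pairwise disjoint, each $G[\phi(h)]$ is connected, and for every $e=\{h,h'\}\in E(H)$ some edge of $G$ has one endpoint in $\phi(h)$ and one in $\phi(h')$; fix one such edge $a_e$ per $e$. For each $h$ I compute a spanning tree $T_h$ of $G[\phi(h)]$ and let $P_h\subseteq\phi(h)$ be the set of $\phi(h)$-endpoints of the edges $a_e$ over the (at most three) edges $e$ incident to $h$. Take the minimal subtree $S_h\subseteq T_h$ containing $P_h$; since $|P_h|\le 3$ it is a single vertex, a path, or a subdivided $K_{1,3}$, and in every case it has a distinguished ``hub'' vertex $c_h$ such that the $S_h$-paths from $c_h$ towards the $\phi(h)$-endpoints of the (at most three) edges $a_e$ incident to $h$ (one path per such edge) are pairwise internally vertex-disjoint. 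Set $\phi'(h)=c_h$ as the branch vertex representing $h$, and for $e=\{h,h'\}$ let $\psi(e)$ be the concatenation of the $S_h$-path from $c_h$ to the $\phi(h)$-endpoint of $a_e$, the edge $a_e$, and the $S_{h'}$-path from the $\phi(h')$-endpoint of $a_e$ to $c_{h'}$.

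I would then verify that $(\phi',\psi)$ is a topological-minor model of $H$ in $G$: the vertices $\phi'(h)$ are distinct because the sets $\phi(h)$ are disjoint; two paths $\psi(e),\psi(e')$ are internally vertex-disjoint because $a_e\ne a_{e'}$, their portions lying in distinct model sets lie in disjoint vertex sets, and their portions inside a common $\phi(h)$ are internally disjoint by the choice of $c_h$; and no $\phi'(h'')$ is an internal vertex of any $\psi(e)$, since $\psi(e)$ with $e=\{h,h'\}$ meets only $\phi(h)\cup\phi(h')$ among the model sets, where $c_h$ and $c_{h'}$ occur as its endpoints. The union of the paths $\psi(e)$ is thus a subdivision of $H$ contained in $G$. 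For the running time, the spanning trees cost $\OO(n+m)$ in total over the vertex-disjoint sets $\phi(h)$; for each $h$ the subtree $S_h$, its hub, and the required paths are obtained by a single rooted traversal of $T_h$ in time $\OO(|\phi(h)|)$, which sums to $\OO(n)$; and assembling the paths is $\OO(n)$, so the whole procedure is linear. The only point that requires care --- and hence the ``main obstacle'', modest as it is --- is the structural description of $S_h$ together with the right choice of $c_h$ when several edges $a_e$ incident to $h$ share a $\phi(h)$-endpoint (in which case $c_h$ must be taken to be that endpoint) or when $|P_h|<3$; but these amount to elementary case checks.
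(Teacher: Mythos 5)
Your proof is correct and is exactly the standard argument the paper leaves implicit (the observation is stated without proof): a wall has maximum degree at most $3$, so a minor model can be converted into a topological-minor model by contracting each branch set to a Steiner subtree with at most three attachment points and routing the three internally disjoint legs from a hub vertex. The case analysis for the hub (shared attachment endpoints, fewer than three legs) and the linear-time accounting are both handled properly.
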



In the following definition, we consider the notion of a subwall.

\begin{definition}[\cite{DBLP:conf/soda/Chuzhoy15}]
Let $W$ and $W'$ be two walls, where $W'$ is a subgraph of $W$. We say that $W'$ is a sub-wall of $W$  if and only if 
 every row of $W'$ is a subpath of a row of $W$, and every column of $W'$ is a subpath of a column of $W$. 
\end{definition}

Towards the definition of a flat wall, we also need to recall the notion of $X$-reduction.

\begin{definition}[{\bf $X$-reduction}]\label{def:Xreduction}
Let $G$ be a graph, $X \subseteq V (G)$, and let $(L, R)$ be a separation of G of order at most $3$ with 
$X \subseteq V(R)$. Moreover, assume that the vertices of $V(L) \cap V(R)$ are connected in $L$. Let $H$ be the graph obtained from $R$ by adding edges connecting every pair of vertices in $V(L) \cap V(R)$. We say that $H$ is an {\em elementary 
$X$-reduction} in $G$, determined by $(L,R)$. We say that a graph $J$ is an {\em $X$-reduction} of $G$ if it can be obtained from $G$ by a series of elementary $X$-reductions.
\end{definition}

Now, we define $C$-flat graphs, which are closely related to flat walls (defined later).

\begin{definition}[{\bf $C$-flat graph}]
\label{def:Cflat1}
Let $G$ be a graph, and let $C$ be a cycle in $G$. We say that $G$ is {\em $C$-flat} if there exists a plane graph $J$ that is a $C$-reduction of $G$ where $C$ bounds the outer-face of $J$.
\end{definition}

The following is an equivalent definition of $C$-flat graphs, which is conceptually easier to work with. The terminology that we use to phrase this definition follows from \cite{DBLP:conf/soda/Chuzhoy15} (which also states the equivalence). The equivalence itself is proved in \cite{newFlatWall}, where a $C$-flat graph is phrased as an $\Omega$-rendition.

\begin{definition}[{\bf $C$-flat graph}]
\label{def:KWT}
Let $G$ be a graph, and let $C$ be a cycle in $G$. We say that $G$ is {\em $C$-flat} if there exist subgraphs 
$G_0,G_1,\ldots,G_k$ of $G$, and a plane graph $\tilde{G}$  such that:
\begin{enumerate}
\item $G=G_0 \cup G_1 \cup\ldots \cup G_k$, and the graphs $G_0,G_1,\ldots,G_k$ are pairwise edge disjoint; 
\item $C$ is a subgraph of $G_0$;
\item $G_0$ is a subgraph of $\tilde{G}$, with $V (\tilde{G}) = V(G_0)$. Moreover, $\tilde{G}$ is a plane graph, and the cycle $C$ bounds its outer face;
\item \label{condition4} For $i\in [k]$, $|V(G_i)\cap V(G_0)|\leq 3$;
\begin{itemize}
\item If $|V(G_i)\cap V(G_0)|=2$, then $u$ and $v$ are adjacent in $\tilde{G}$;
\item If $V(G_i)\cap V(G_0) = \{u,v,w\}$, then some finite face of $\tilde{G}$ is incident with $u,v,w$ and no other vertex;\footnote{That is, $\{u,v\},\{u,w\},\{v,w\}\in E(\tilde{G})$, and the curves corresponding to these edges form a face of $\tilde{G}$.}
\end{itemize}
\item\label{condition5} For all distinct $i,j\in [k]$, $V(G_i)\cap V(G_j)\subseteq V(G_0)$.
\end{enumerate}
\end{definition}

Now, we specify two additional properties that we would like $G_0,G_1,\ldots,G_k,\tilde{G}$ to satisfy.

\begin{observation}\label{obs:cflatmore}
There is an algorithm that, given a connected graph $G$, a cycle $C$ in $G$, subgraphs 
$G_0,G'_1,\ldots,G'_{k'}$ of $G$ and a plane graph $\tilde{G}'$ witnessing that $G$ is $C$-flat, outputs subgraphs 
$G_0,G_1,\ldots,G_k$ of $G$ and a plane graph $\tilde{G}$ witnessing that $G$ is $C$-flat and also satifying the following conditions:
\begin{enumerate}
\setcounter{enumi}{5}
\item\label{conditionsix} For any $i\in [k]$, $u\in V(G_i)\setminus V(G_0)$ and $v\in V(G_0)\cap V(G_i)$, 
there is a path from $u$ to $v$ with all internal vertices being in $V(G_i)\setminus V(G_0)$, and 
\item\label{conditionseven} For any edge $\{u,v\}\in E(\tilde{G})\setminus E(G_0)$, there exists $i\in [k]$ such that $\{u,v\}\subseteq V(G_i)\cap V(G_0)$. 
\end{enumerate} 
The running time of the algorithm is bounded by $\OO(n)$.
\end{observation}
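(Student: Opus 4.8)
The plan is to start from the witnessing structure $(G_0,G_1',\ldots,G_{k'}',\tilde G')$ and massage it into one that additionally satisfies conditions~\ref{conditionsix} and~\ref{conditionseven}, without violating conditions~1--\ref{condition5}. First I would handle condition~\ref{conditionseven}: for every edge $\{u,v\}\in E(\tilde G)\setminus E(G_0)$, Definition~\ref{def:KWT}(\ref{condition4}) tells us that such an edge arises because some $G_i'$ has $\{u,v\}\subseteq V(G_i')\cap V(G_0)$ (the edge was added to $\tilde G$ precisely to record the attachment pattern of a two- or three-vertex bridge). If the original structure already records this, condition~\ref{conditionseven} holds; otherwise, we can simply merge/relabel the attachment edges so that each such edge is certified by the corresponding bridge. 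Concretely, one processes the bridges $G_1',\ldots,G_{k'}'$ of $G$ with respect to $G_0$: for each bridge we look at its attachment set $V(G_i')\cap V(G_0)$, and the edges of $\tilde G$ not in $G_0$ are exactly the ones spanned by these attachment sets. No edge of $\tilde G\setminus E(G_0)$ can fail to be covered, since $\tilde G$ was built from $G_0$ by adding only such attachment edges; so condition~\ref{conditionseven} is essentially automatic once we take $G_0,\tilde G$ unchanged and re-derive the $G_i$'s as the bridges of $G$ at $G_0$.

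Next I would enforce condition~\ref{conditionsix}, which is a connectivity/well-formedness requirement on the pieces $G_i$: every vertex $u\in V(G_i)\setminus V(G_0)$ should reach every attachment vertex $v\in V(G_0)\cap V(G_i)$ through a path internal to $V(G_i)\setminus V(G_0)$. The natural way to get this is to re-define the $G_i$'s from scratch: let $B_1,\ldots,B_k$ be the connected components of $G\setminus V(G_0)$ (equivalently, the ``$G_0$-bridges'' of $G$ minus their attachments), and set $G_i$ to be the subgraph induced by $V(B_i)$ together with all edges from $B_i$ to $V(G_0)$ and all edges of $\tilde G$ among the (at most $3$) attachment vertices of $B_i$. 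Because $G$ is connected, and because each $B_i$ is connected, every vertex of $B_i$ reaches every attachment vertex of $B_i$ via a path whose internal vertices lie in $B_i\subseteq V(G_i)\setminus V(G_0)$ — this is exactly condition~\ref{conditionsix}. One then needs to check that conditions~1--\ref{condition5} are preserved: edge-disjointness and the pairwise intersection condition~\ref{condition5} hold because distinct components $B_i,B_j$ share no vertices outside $V(G_0)$; condition~\ref{condition4} holds because each $B_i$ had at most $3$ attachment vertices to begin with (the new pieces only coarsen the old ones by unioning bridges with the same attachment set, and a face of $\tilde G$ still witnesses the triangle case); $G=\bigcup_i G_i$ and $G_0\subseteq\tilde G$, $C\subseteq G_0$, etc., are untouched.

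For the running time: computing $V(G_0)$, the components of $G\setminus V(G_0)$, the attachment set of each component, and the induced subgraphs $G_i$ can all be done by a constant number of passes over $G$ (BFS/DFS plus bucketing components by attachment set), which is $\OO(n+m)$; since a $C$-flat graph obtained this way is sparse ($\tilde G$ plane, each bridge attaching on $\le 3$ vertices), this is $\OO(n)$.

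The main obstacle I anticipate is not any single step but the bookkeeping needed to simultaneously maintain \emph{all} of conditions~1--\ref{conditionseven} while re-grouping the bridges: in particular, verifying that after coalescing bridges with a common attachment set into a single $G_i$, the triangle case of condition~\ref{condition4} (``some finite face of $\tilde G$ is incident with exactly $u,v,w$'') is still witnessed — one must be careful that the plane graph $\tilde G$ is left exactly as in the input (it is, since we never add or remove vertices or edges of $\tilde G$), so the faces are unchanged and the witness survives verbatim. Everything else is routine graph surgery.
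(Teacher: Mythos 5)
Your core idea --- re-forming the pieces so that the interior of each $G_i$ is connected to all of its attachment vertices --- is the same as the paper's. The paper keeps the given pieces and splits each $G_i'$ into $G_i'(S)=G_i'[S\cup V_S]$, where $V_S$ collects the vertices of $V(G_i')\setminus V(G_0)$ whose reachable attachment subset is exactly $S$; your version (components of $G\setminus V(G_0)$ plus their attachments) is the same decomposition computed globally, and condition~\ref{conditionsix} follows in both cases from connectivity of the components. However, there are two concrete problems in your write-up.

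First, your treatment of condition~\ref{conditionseven} rests on a false premise. You assert that ``no edge of $\tilde G\setminus E(G_0)$ can fail to be covered, since $\tilde G$ was built from $G_0$ by adding only such attachment edges.'' Definition~\ref{def:KWT} does not say this: it only requires that $G_0\subseteq\tilde G$, that $\tilde G$ is plane with outer face $C$, and that each attachment set of size $2$ or $3$ is realized by an edge, respectively a triangular face, of $\tilde G$. Nothing forbids $\tilde G'$ from containing additional edges on $V(G_0)$ that are spanned by no attachment set, so condition~\ref{conditionseven} is not automatic, and ``merging/relabelling the attachment edges'' is not a meaningful remedy (there is no bridge to certify such an edge). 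The correct and simple fix --- the one the paper uses --- is to \emph{delete} from $\tilde G'$ every edge of $E(\tilde G')\setminus E(G_0)$ not contained in some $V(G_i)\cap V(G_0)$; this preserves planarity, $G_0\subseteq\tilde G$, and all the face witnesses for the surviving attachment triples.

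Second, by redefining the $G_i$'s as the connected components of $G\setminus V(G_0)$ together with their incident edges, you lose the edges of $G$ with both endpoints in $V(G_0)$ that lie in $E(G)\setminus E(G_0)$ (``trivial bridges''). In the input these edges live inside some $G_j'$, but they touch no component of $G\setminus V(G_0)$, so your new pieces do not cover them and condition~1 of Definition~\ref{def:KWT} ($G=G_0\cup G_1\cup\dots\cup G_k$) fails. This is repairable --- make each such edge its own piece with a two-vertex attachment set, whose $\tilde G$-adjacency is inherited from condition~\ref{condition4} for the original $G_j'$ --- but it must be said. The paper sidesteps this entirely by only splitting the existing $G_i'$'s rather than recomputing the pieces from scratch.
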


\begin{proof}[Proof sketch]
To ensure that condition \ref{conditionseven} is satisfied, we can simply remove all edges $\{u,v\}\in E(\tilde{G}')\setminus E(G_0)$ from $\tilde{G}'$ if there is no $i\in [k']$ with $\{u,v\}\subseteq V(G'_i)\cap V(G_0)$.

To ensure that condition \ref{conditionsix} is satisfied, if there exist $i\in [k']$, $u\in V(G'_i)\setminus V(G_0)$ and $v\in V(G_0)\cap V(G'_i)$ such that there is no path from $u$ to $v$ with all internal vertices being in $V(G'_i)\setminus V(G_0)$, then we ``split'' $G_i'$ into 
smaller graphs. For each $\emptyset\neq S\subseteq V(G_i')\cap V(G_0)$, let $V_S$ be the set of vertices in $G_i'\setminus (V(G_i')\cap V(G_0))$ reachable from any vertex in $S$ with internal vertices from $V(G'_i)\setminus V(G_0)$ and not reachable from 
$(V(G_i')\cap V(G_0))\setminus S$ with internal vertices from $V(G'_i)\setminus V(G_0)$. Then we replace $G_i'$ 
with $G'_i(S)=G'_i[S\cup V_S]$. Clearly for any $S\neq S'$, there are no edges between vertices in $V_S$ and $V_{S'}$. Since $G$ is connected $\bigcup_{S\subseteq V(G_i')\cap V(G_0)} V_S=V(G_i')\setminus V(G_0)$. Moreover, for each $G'_i(S)$, $u\in V(G'_i(S))\setminus V(G_0)$ and $v\in V(G_0)\cap V(G'_i(S))$, there is a path from $u$ to $v$ with all internal vertices from $V(G'_i(S))\setminus V(G_0)$. 
Therefore after doing the above mentioned splitting operations, condition \ref{conditionsix} will be satisfied. 
 All required modifications can be done in time $\OO(n)$.
\end{proof}

Condition~\ref{conditionseven} in \autoref{obs:cflatmore} implies that $\tilde{G}$ is edge minimal (that is, no edge can be removed from $\tilde{G}$ such that all the above seven properties will still be satisfied).  
We are now ready to present the definition of a flat wall.

\begin{definition}[{\bf Flat wall}]\label{def:flatWall}
Let $G$ be a graph, and let $W$ be a wall in $G$ with outer boundary $D$. Suppose there is a separation $(A,B)$ of $G$, such that $V(A)\cap V(B) \subseteq V(D)$, $V(W) \subseteq V(B)$, and there is a choice of pegs of $W$, such that every peg belongs to $A$. If some $(V(A) \cap V(B))$-reduction $J$ of $B$ can be drawn in a disc with the vertices in 
$V(A) \cap V(B)$ drawn on the boundary of the disc in the order determined by $D$, then we say that the wall $W$ is {\em flat} in $G$.
\end{definition}

\paragraph{Simple Lemmata and Known Propositions.} We now state several results regarding flat walls. We first relate Definitions \ref{def:KWT} and \ref{def:flatWall}. This observation will be encompassed by \autoref{prop:newFlatWall}, but we explicitly state it for the sake of clarity of the connections between the notions introduced earlier.

\begin{observation}\label{obs:CFlat}
Let $G,W,D,A,B$ and $J$ be defined as in \autoref{def:flatWall}. Consider the disc $\mathscr{Q}$ where $J$ is drawn such that the vertices in
$V(A) \cap V(B)$ are drawn on the boundary of $\mathscr{Q}$ in the order determined by $D$.
Let $C$ be the cycle with vertex set $V(A)\cap V(B)$ and edge set defined by the boundary $\mathscr{Q}$, that is, every two consecutive vertices on the boundary of $\mathscr{Q}$ are connected by an edge in $E(C)$. Then, $G^\star=B+E(C)$ is $C$-flat.
\end{observation}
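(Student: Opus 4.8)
\textbf{Proof proposal for Observation~\ref{obs:CFlat}.}

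The plan is to take the drawing of $J$ in the disc $\mathscr{Q}$ guaranteed by \autoref{def:flatWall} and directly read off from it the data $(G_0,G_1,\ldots,G_k,\tilde{G})$ required by \autoref{def:KWT}, applied to the graph $G^\star=B+E(C)$ and the cycle $C$. Recall that $J$ is obtained from $B$ by a series of elementary $(V(A)\cap V(B))$-reductions; each such reduction replaces a ``bubble'' $L$ hanging off a small ($\leq 3$) separator by a clique on that separator. First I would trace back the sequence of elementary reductions that produced $J$ from $B$: reversing them exhibits $B$ as $J$ together with a collection of vertex-disjoint (apart from the separators) pieces $B_1,\ldots,B_k$, where each $B_i$ is the graph $L$ of one reduction step, attached to $J$ along a set $S_i=V(B_i)\cap V(J)$ with $|S_i|\le 3$ whose vertices are pairwise adjacent in $J$. (Some bookkeeping is needed because reductions can be nested, i.e.\ a later reduction may absorb the clique edges introduced by an earlier one; in that case one assigns each original bubble to the graph $G_i$ that contains it and notes that the clique edges on $S_i$ already live in $J$, hence in $\tilde G$.)

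Next I would set $\tilde{G}$ to be the plane graph $J\cup C$ drawn inside $\mathscr{Q}$: since the vertices of $V(A)\cap V(B)$ sit on the boundary of $\mathscr{Q}$ in the cyclic order determined by $D$, and $C$ is precisely the cycle on those vertices joining consecutive ones, the edges of $C$ can be drawn along the boundary of $\mathscr{Q}$ without crossing $J$, so $\tilde{G}=J\cup C$ is indeed a plane graph in which $C$ bounds the outer face. Then set $G_0=\tilde{G}=J\cup C$ (so condition~3 of \autoref{def:KWT} holds with $V(\tilde G)=V(G_0)$ and $C$ bounding the outer face, and condition~2 holds since $C\subseteq G_0$), and set $G_i=B_i$ for $i\in[k]$. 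Condition~1, that $G^\star=G_0\cup G_1\cup\cdots\cup G_k$ with the pieces pairwise edge-disjoint, follows from the reversal of the reductions: the edges of $B$ are partitioned into those surviving in $J$ (which, together with the added boundary edges $E(C)$, are exactly $E(G_0)$) and those of the bubbles $B_i$, and distinct bubbles share no edges. Condition~\ref{condition4} ($|V(G_i)\cap V(G_0)|\le 3$, with the two- and three-vertex cases giving an edge resp.\ a face of $\tilde G$) is exactly the defining property of an elementary reduction together with the fact that the clique edges on $S_i$ were drawn in $J$; for the three-vertex case one uses that in the disc drawing the triangle on $S_i$ bounds a face into which the bubble $B_i$ was contracted. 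Condition~\ref{condition5} ($V(G_i)\cap V(G_j)\subseteq V(G_0)$ for $i\ne j$) holds because the interiors of distinct bubbles are disjoint and a vertex shared by two bubbles must be a separator vertex, hence in $J\subseteq G_0$.

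The main obstacle I anticipate is the bookkeeping around \emph{nested} or \emph{overlapping} elementary reductions: the definition of $X$-reduction allows a series of reduction steps, and a single vertex of $J$ may be the attachment point of several bubbles, or a bubble may sit inside a face created by contracting a previous bubble. One has to argue that these bubbles can still be organized into edge-disjoint subgraphs $G_1,\ldots,G_k$ each meeting $G_0$ in a clique of size $\le 3$ bounding a face of $\tilde G$ — essentially that the reduction process has a well-defined ``laminar'' structure. I would handle this by inducting on the number of elementary reduction steps: removing the last step replaces one bubble by a clique, and by the inductive hypothesis the remaining $C$-flat structure exists; re-inserting the bubble adds one more $G_i$, and one checks the added face/edge conditions are preserved since the clique it was drawn over was a face of the intermediate plane graph. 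Everything else is a routine translation between the two equivalent formulations of flatness, and the running-time claim is immediate since tracing the reductions and drawing the $O(n)$ boundary edges of $C$ takes linear time.
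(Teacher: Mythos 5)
Your proposal proves the statement by constructing the data $(G_0,G_1,\ldots,G_k,\tilde G)$ of \autoref{def:KWT} directly, whereas the paper's proof never touches that definition at all. The paper simply observes that, by \autoref{def:Xreduction}, the series of elementary $(V(A)\cap V(B))$-reductions $(L_1,R_1),\ldots,(L_t,R_t)$ taking $B$ to $J$ is verbatim a series of elementary $V(C)$-reductions taking $G^\star=B+E(C)$ to $J^\star=J+E(C)$ (the edges of $C$ live on $V(A)\cap V(B)\subseteq V(R_i)$ for every step, so they are never absorbed), that $J^\star$ is a plane graph with $C$ bounding its outer face, and then invokes \autoref{def:Cflat1}. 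That is a three-line argument. Your route amounts to re-proving one direction of the equivalence between \autoref{def:Cflat1} and \autoref{def:KWT}, which the paper explicitly imports as a black box from the Kawarabayashi--Thomas--Wollan reference; there is no need to do this work to establish the observation as stated.

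Beyond being longer than necessary, your sketch has a genuine gap exactly where that equivalence is non-trivial. In the three-vertex case of condition~\ref{condition4} of \autoref{def:KWT} you must show that the triangle on $S_i$ bounds a \emph{face} of $\tilde G$ incident with $u,v,w$ and no other vertex; you assert that ``the bubble $B_i$ was contracted into'' that face, but nothing in \autoref{def:flatWall} ties the given drawing of $J$ in $\mathscr Q$ to the reduction sequence, so an arbitrary such drawing could place other parts of $J$ inside the triangle on $S_i$. Fixing this (e.g., by modifying the drawing or the decomposition) is precisely the content of the cited equivalence, and your induction on reduction steps only gestures at it. Similarly, the organization of nested and overlapping bubbles into pairwise edge-disjoint $G_i$'s with laminar structure is stated as an obstacle and then assumed to work out. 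If you want to keep your approach you would need to either carry out these arguments in full or, as the paper does, reduce to \autoref{def:Cflat1} and cite the equivalence.
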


\begin{proof}
Note that $J$ is a $(V(A) \cap V(B))$-reduction of $B$ with the vertices in $V(A) \cap V(B)$ drawn on the boundary of $\mathscr{Q}$ in the order determined by $D$. Clearly, the definition of $C$ directly implies that $J^\star=J+E(C)$ is a plane graph whose outer-face is bounded by $C$.
By \autoref{def:Xreduction}, there is a series of elementary $(V(A) \cap V(B))$-reductions, denoted by $(L_1,R_1),(L_2,R_2),\ldots,(L_t,R_t)$, that when applied to $B$ eventually result in the graph $J$. Because $V(C)=V(A)\cap V(B)$, we have that $(L_1,R_1),(L_2,R_2),\ldots,(L_t,R_t)$ is also a series of elementary $V(C)$-reductions, that when applied to $G^\star$ eventually result in the graph $J^\star$. By \autoref{def:Cflat1}, we conclude that $G^\star$ is $C$-flat.
\end{proof}

In light of \autoref{obs:CFlat} and \autoref{def:KWT}, we explicitly define a type of tuple as follows for the sake of clarity of presentation.

\begin{definition}[{\bf Flatness tuple}]
Let $G$ be a graph, $w,t\in\mathbb{N}$, $A\subseteq V(G)$ be a set of at most $t$ vertices and $W$ be a wall of size at least $(w\times w)$, such that $V(W)\cap A=\emptyset$ and $W$ is a flat wall in $G\setminus A$. Let $D$ denote the outer boundary of $W$.

We say that $(A',B',C,\tilde{G}, G_0,G_1,\ldots,G_k)$ is a {\em flatness tuple} for $(G,w,t,A,W)$ if $(A',B')$ is a separation of $G\setminus A$ such that $V(A')\cap V(B') \subseteq V(D)$ and $V(W) \subseteq V(B')$, and $C,G_0,G_1,\ldots,G_k$ and $\tilde{G}$ witness that $B'+E(C)$ is $C$-flat as specified by \autoref{def:KWT}, where $V(A')\cap V(B')=V(C)\subseteq V(D)$ and the order in which vertices appear on $C$ is the same as the order in which they appear on $D$.
\end{definition}

%

In the above definition we have a subset $A\subseteq V(G)$ and the flatness tuple is defined for a flat wall in $G\setminus A$. The set $A$ is part of the definition to align with the flat wall theorem (\autoref{prop:newFlatWall}).  Specifically, the following proposition is the flat wall theorem as proved in~\cite{newFlatWall}. A theorem with better bounds regarding the relation between $w,t,R$ is proved in \cite{DBLP:conf/soda/Chuzhoy15}. However, the running time of the algorithm of the theorem in \cite{DBLP:conf/soda/Chuzhoy15} is stated to be polynomial (rather than linear) in $(n+m)$ and therefore we cite the theorem of~\cite{newFlatWall}.

\begin{proposition}[\cite{newFlatWall}]\label{prop:newFlatWall}
There exists an algorithm that given a graph $G$, integers $w,t\geq 1$, and an $(R\times R)$-wall $W$ in $G$ where $R=49152t^{24}(60t^2+w)$, outputs either a function $\phi$ witnessing that $K_t$ is a minor of $G$ or a subset $A\subseteq V(G)$ of at most $12288t^{24}$ vertices, and a sub-wall $W^*$ of $W$ of size at least $(w\times w)$, such that $V(W^*)\cap A=\emptyset$ and $W^*$ is a flat wall in $G\setminus A$. In addition, the algorithm outputs a flatness tuple $(A',B',C,\tilde{G},G_0,G_1,\ldots,G_k)$ for $(G,w,\vert A\vert ,A,W^\star)$.
The running time of the algorithm is bounded by $\OO(t^{24}m+n)$.
\end{proposition}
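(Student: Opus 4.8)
The plan is to obtain Proposition~\ref{prop:newFlatWall} as a packaging of the flat wall theorem of Robertson and Seymour in the quantitative and algorithmic form established by Kawarabayashi, Thomas and Wollan; since the statement is essentially a restatement of their main theorem, ``proving'' it amounts to (i) recalling the combinatorial core, (ii) reading off the cell decomposition to build the flatness tuple, and (iii) checking that every step is constructive in time $\OO(t^{24}m+n)$. For (i): the given $(R\times R)$-wall $W$ induces a wall tangle $\mathcal{T}$ of order $\Theta(R)$ in $G$ (orient a separation of small order by the side containing few branch vertices of $W$). If $G$ has a $K_t$ minor we output a model and stop, so assume it does not. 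The heart of the argument is a local structure statement: a graph with no $K_t$ minor carrying a high-order tangle controlled by a wall becomes, after deleting a bounded vertex set $A$, flat near that wall --- i.e., $\mathcal{T}\setminus A$ is rendered by a plane graph $\tilde G$ together with a cell structure as in Definition~\ref{def:KWT}. One extracts this by iteratively ``cleaning'' crossings inside the wall: a crossed pattern either yields progress toward a $K_t$ minor (route disjoint paths using the grid-like connectivity of $W$) or is pushed onto the apex set $A$, and bounding the number of such steps by a function of $t$ gives $|A|\le 12288t^{24}$.

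For (ii): once $\tilde G$, the separation $(A',B')$ and the flat region are in hand, the cells $G_0,G_1,\ldots,G_k$ of Definition~\ref{def:KWT} are read off directly --- $G_0$ is the part of $B'$ drawn in the disc (with outer face bounded by the cycle $C$ on $V(A')\cap V(B')$), and each $G_i$ with $i\ge 1$ is a bridge attaching to $G_0$ on at most three vertices that lie on a common face of $\tilde G$. Conditions (1)--(5) are then immediate from the rendition, and if one also wants the normalized conditions (6)--(7), Observation~\ref{obs:cflatmore} supplies them in additional $\OO(n)$ time. Because the flat region has ``linkedness'' $\Omega(R/t^{O(1)})$, a subwall $W^\star$ of size $(w\times w)$ with all its pegs on the $A$-side survives provided $R$ exceeds the stated threshold $49152t^{24}(60t^2+w)$; the $t^{24}$ and $60t^2$ factors are exactly the loss incurred in the clean-up phase and in passing from a flat society to a flat subwall.

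For (iii): each step above can be implemented by a bounded number of disjoint-paths/linkage computations localized to separations of order $O(t)$ --- certifying a $K_t$ minor or its absence relative to $\mathcal{T}$, locating the apex set $A$, and carving out $W^\star$ --- which is where the $\OO(t^{24}m+n)$ bound comes from. I expect the main obstacle, and the reason to invoke~\cite{newFlatWall} as a black box rather than reprove it here, to be precisely this quantitative-and-algorithmic bookkeeping: the existence of a flat wall is classical, but pinning down the polynomial dependence $R=\Theta(t^{24}(t^2+w))$ and $|A|=\Theta(t^{24})$ while keeping the running time near-linear requires the full machinery of~\cite{newFlatWall} (sharpened, at the cost of a polynomial rather than linear running time, in~\cite{DBLP:conf/soda/Chuzhoy15}), and adapting those constants to the exact form of Definition~\ref{def:KWT} and Definition~\ref{def:flatWall} is the only genuinely technical point.
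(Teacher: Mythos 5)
The paper offers no proof of this statement at all: Proposition~\ref{prop:newFlatWall} is imported verbatim (up to notation) from Kawarabayashi, Thomas and Wollan~\cite{newFlatWall}, with the surrounding text only explaining why that reference is preferred over the sharper but slower bound of~\cite{DBLP:conf/soda/Chuzhoy15}, and why the output of an $\Omega$-rendition translates into a flatness tuple in the sense of \autoref{def:KWT} (that translation is itself credited to~\cite{newFlatWall} and to \autoref{obs:CFlat}). Your proposal correctly identifies this, and your closing paragraph — that the only legitimate move here is to invoke~\cite{newFlatWall} as a black box — is exactly the paper's stance. So in terms of approach you and the paper agree.

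That said, be clear with yourself that what you wrote is a survey of the KTW argument, not a proof. Every quantitative claim in the statement — $|A|\le 12288t^{24}$, the threshold $R=49152t^{24}(60t^2+w)$, and the running time $\OO(t^{24}m+n)$ — is asserted in your sketch rather than derived; the ``iterative cleaning of crossings'' step, where a crossed pattern either makes progress toward a $K_t$ minor or contributes to the apex set, is precisely the part of~\cite{newFlatWall} that consumes most of that paper, and your account of it would not survive being asked ``why $t^{24}$ and not $t^{25}$?''. None of this is a defect relative to the paper under review, which makes the same black-box call; it only means your text should be read as justification for the citation, not as a replacement for it.
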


\autoref{prop:newFlatWall} give us a clique minor or a flatness tuple if the treewidth of the input graph is {\em large}. For our purpose we need a flatness tuple where the treewidth of $G_i$, $i\in [k]$ is bounded. 

\begin{definition}[{\bf Nice Flat Wall}]
Let $G$ be a graph, $w,t\in\mathbb{N}$, $A\subseteq V(G)$ be a set of at most $t$ vertices and $W$ be a wall of size at least $(w\times w)$, such that $V(W)\cap A=\emptyset$ and $W$ is a flat wall in $G\setminus A$. Let $D$ denote the outer boundary of $W$.
We say that the {flatness tuple} $(A',B',C,\tilde{G},G_0,G_1,\ldots,G_k)$   for $(G,w,t,A,W)$ is an {\em $\ell$-nice flatness tuple} if for all $i\in [k]$, $\tw(G_i)\leq \ell$. Then, we also call $W$ a {\em $\ell$-nice flat wall} in $G\setminus A$. 
\end{definition}

We will use \autoref{prop:newFlatWall} to get a nice flat wall. 
As the algorithm of \autoref{prop:newFlatWall} requires a wall, to be able to exploit it algorithmically, we need the two following propositions as well.

\begin{proposition}[\cite{DBLP:journals/jacm/ChekuriC16,DBLP:conf/swat/Chuzhoy16}]\label{prop:gridTWGeneral}
There exists $c\in\mathbb{N}$ such that for every $g\in\mathbb{N}$, every graph $G$ either contains a $(g\times g)$-grid as a minor or has treewidth at most $cg^{19}\log^cg$.
\end{proposition}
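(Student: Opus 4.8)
This proposition is the polynomial-bound form of the Robertson--Seymour Grid Minor Theorem, and since the excerpt only cites it, the plan is to reproduce the two-phase argument underlying \cite{DBLP:journals/jacm/ChekuriC16,DBLP:conf/swat/Chuzhoy16}: Phase~1 turns large treewidth into a highly structured routing object, and Phase~2 extracts a grid minor from that object. Assume $\tw(G) > cg^{19}\log^c g$; we must produce a $(g\times g)$-grid minor.

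\emph{Phase 1: from treewidth to a path-of-sets system.} First I would use the standard duality between treewidth and balanced separators, together with an uncrossing argument, to extract a \emph{well-linked} set $X\subseteq V(G)$ with $|X|=\Omega(\tw(G))$, i.e.\ such that for every bipartition $(A,B)$ of $X$ there are $\min\{|A|,|B|\}$ vertex-disjoint $A$--$B$ paths in $G$. The heart of the phase is the iterative construction of a \emph{path-of-sets system} of length $\ell$ and width $w$: a sequence of vertex-disjoint connected clusters $C_1,\dots,C_\ell$, each carrying two interfaces of size $w$, with consecutive clusters joined by $w$ vertex-disjoint paths and each interface well-linked inside its own cluster. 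One builds this greedily: at each step a max-flow computation either routes the next batch of $w$ paths or exposes a small separator that is used to ``boost'' the linkedness of the remaining part. A careful amortized analysis balances $\ell$ and $w$ against $|X|$, and the polynomial and polylogarithmic losses incurred here are exactly what produces the exponent $19$ and the $\log^c g$ factor in the statement.

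\emph{Phase 2: from a path-of-sets system to a grid.} With $\ell=\Theta(g)$ and $w=\Theta(g^{\,c'})$ for a suitable constant $c'$, I would use the clusters as carriers of $g$ vertex-disjoint ``horizontal'' paths running through the entire chain, and then, cluster by cluster, exploit the internal well-linkedness of the interfaces together with a routing lemma (another flow argument, or the cut-matching game) to reroute $g$ ``vertical'' paths so that each vertical path crosses each horizontal path within a controlled number of clusters. Each application of the routing lemma shrinks $w$ by a polynomial factor, so starting with $w$ large enough leaves room for all $g$ vertical paths; contracting the resulting crossing structure yields the $(g\times g)$-grid minor.

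\emph{Main obstacle.} The difficulty, and the reason the real proof is long, lies entirely in the routing: well-linkedness is a fractional, existential guarantee, and converting it into genuinely vertex-disjoint integral paths realizing a prescribed crossing pattern, \emph{without} destroying the well-linkedness still needed downstream, requires the delicate tracking of linkedness parameters through many rounds that constitutes the technical core of \cite{DBLP:journals/jacm/ChekuriC16,DBLP:conf/swat/Chuzhoy16}. For our purposes it suffices to invoke the statement as a black box.
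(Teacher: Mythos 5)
The paper offers no proof of this proposition: it is imported verbatim from \cite{DBLP:journals/jacm/ChekuriC16,DBLP:conf/swat/Chuzhoy16} and used as a black box, exactly as you conclude at the end of your write-up. Your two-phase sketch (well-linked set $\rightarrow$ path-of-sets system $\rightarrow$ grid extraction via routing) is an accurate summary of the argument in those references, so there is nothing to reconcile with the paper itself.
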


The following proposition can be proved using Propositions~\ref{prop:twExact} and \ref{prop:twDPNoDec}. 

\begin{proposition}\label{prop:gridTWGeneralalgo}
There exists $c\in\mathbb{N}$ and an algorithm that given a graph $G$ and integer $g\in {\mathbb N}$, 
runs in time $2^{\OO(g^{58})}n \log n$, and outputs either a $g\times g$-wall in $G$ or outputs a nice tree-decomposition of $G$ of width at most  $cg^{19}\log^cg$. 
\end{proposition}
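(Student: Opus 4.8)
The plan is a routine win--win argument, combining the polynomial grid--minor theorem (\autoref{prop:gridTWGeneral}) with Bodlaender's exact treewidth algorithm (\autoref{prop:twExact}). Let $c_0\in\mathbb N$ be the constant from \autoref{prop:gridTWGeneral}, and set $t:=\lceil c_0(2g)^{19}\log^{c_0}(2g)\rceil$. For $g$ above an absolute constant we have $t=\Theta\big(g^{19}(\log g)^{c_0}\big)$, hence $t^{3}\log t=\Theta\big(g^{57}(\log g)^{3c_0+1}\big)=\OO(g^{58})$ (polylogarithmic factors are absorbed, and for bounded $g$ everything is a constant). Consequently, running the algorithm of \autoref{prop:twExact} on $(G,t)$ takes time $2^{\OO(t^{3}\log t)}n=2^{\OO(g^{58})}n$.

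First I would run \autoref{prop:twExact} on $(G,t)$. If it reports $\tw(G)\le t$, it also returns a nice tree decomposition of width at most $t=\OO\big(g^{19}(\log g)^{c_0}\big)$, and taking a constant $c\ge c_0$ large enough makes this a nice tree decomposition of width at most $cg^{19}\log^{c}g$, i.e., the second output.

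Otherwise $c_0(2g)^{19}\log^{c_0}(2g)\le t<\tw(G)$, and I would invoke the algorithmic form of the polynomial grid--minor theorem of Chekuri and Chuzhoy~\cite{DBLP:journals/jacm/ChekuriC16,DBLP:conf/swat/Chuzhoy16} with parameter $2g$. That algorithm either returns a tree decomposition of $G$ of width at most $c_0(2g)^{19}\log^{c_0}(2g)$ --- impossible here, since this bound is strictly below $\tw(G)$ --- or a minor model of a $(2g\times 2g)$-grid in $G$, so in this branch it necessarily outputs the latter. A $(2g\times 2g)$-grid contains a $(g\times 2g)$-subgrid, which by \autoref{obs:gridToWall} contains the $(g\times g)$-elementary wall as a subgraph; restricting the grid-minor model to the branch sets of the vertices of that elementary wall yields a minor model of a $(g\times g)$-wall in $G$, and feeding it to the linear-time procedure of \autoref{obs:wallminortosub} produces a $(g\times g)$-wall that is an actual subgraph of $G$, i.e., the first output. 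Each subroutine is called at most once, so the total running time is $2^{\OO(g^{58})}n\log n$, with the grid-minor subroutine accounting for the extra logarithmic factor and being otherwise dominated by \autoref{prop:twExact}.

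The only step I expect to require care is making the high-treewidth case constructive: \autoref{prop:gridTWGeneral} is stated existentially, so one must appeal to the algorithmic content of~\cite{DBLP:journals/jacm/ChekuriC16,DBLP:conf/swat/Chuzhoy16} and verify that the running time of grid extraction stays within $2^{\OO(g^{58})}n\log n$. Everything else --- the choice of $t$, the estimate $t^{3}\log t=\OO(g^{58})$, and the passage from a grid minor through an elementary wall and a wall minor to a wall subgraph via \autoref{obs:gridToWall} and \autoref{obs:wallminortosub} --- is routine.
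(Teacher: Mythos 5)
Your first branch (run Bodlaender's algorithm with threshold $t=\Theta(g^{19}\log^{c_0}g)$, output the nice tree decomposition if the treewidth is small) matches the paper exactly, and your bound $t^{3}\log t=\OO(g^{58})$ is the right calculation. The gap is in the high-treewidth branch. You appeal to ``the algorithmic form'' of the Chekuri--Chuzhoy grid-minor theorem and then assert that its running time ``stays within $2^{\OO(g^{58})}n\log n$'' and ``accounts for the extra logarithmic factor.'' This is exactly the step that cannot be taken as a black box: \autoref{prop:gridTWGeneral} is cited in the paper purely existentially, the constructive versions of the polynomial grid-minor theorem are polynomial-time of unspecified (and certainly not near-linear) degree, and the original Chekuri--Chuzhoy construction is randomized. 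The paper is explicitly careful about this distinction elsewhere (it rejects the flat-wall theorem of Chuzhoy's SODA~2015 paper precisely because its running time is only stated as polynomial in $n+m$), so you cannot attribute the $n\log n$ factor to a grid-extraction subroutine whose complexity you have not verified.

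The paper's actual proof avoids the algorithmic grid-minor theorem entirely, and this is where the $\log n$ really comes from. It performs a binary search over subsets of $E(G)$: maintaining edge sets $F_i,A_i$ with the invariants $\tw(G[F_i])\le 2t$ and $\tw(G[A_i\cup F_i])>t$, it repeatedly halves $A_i$ and tests treewidth with \autoref{prop:twExact}, terminating with an edge set $A\cup F$ such that $t<\tw(G[A\cup F])\le 2t$. On this subgraph the \emph{existential} theorem guarantees a $(2g\times 2g)$-grid minor, hence a $(g\times g)$-wall subgraph, and because the subgraph now has treewidth at most $2t$ the wall can actually be \emph{found} by the bounded-treewidth topological-minor DP (\autoref{prop:twDP}) in time $t^{\OO(t)}n\le 2^{\OO(g^{58})}n$. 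The $\OO(\log m)$ binary-search iterations, each costing $2^{\OO(t^3\log t)}n$, give the stated $2^{\OO(g^{58})}n\log n$. To repair your proof you would either need to supply this localization-plus-DP argument (or an equivalent one), or cite and verify a deterministic grid-minor extraction algorithm with running time $f(g)\cdot n\log n$, which is not available in the references the paper uses.
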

\begin{proof}[Proof sketch]
Let $\widehat{c}$ be the constant mentioned in \autoref{prop:gridTWGeneral}. Let $c$ be a constant such that  $\widehat{c}(2g)^{19}\log^{\widehat{c}}(2g)\leq cg^{19}\log^c g$. 
First we run the algorithm of \autoref{prop:twExact} to test whether $\tw(G)\leq \widehat{c}(2g)^{19}\log^{\widehat{c}}(2g)$. If the answer is yes, then it is a valid output and this step takes time  $2^{g^{57} (\log g)^{\OO(1)}} n$.  

Otherwise, we use ``binary search type" of arguments to output an induced subgraph of $G$ with treewidth at least $\widehat{c} (2g)^{19} \log^{\widehat{c}}(2g)$, but at most  $2\widehat{c} (2g)^{19} \log^{\widehat{c}}(2g)$. Let  $t=\widehat{c} (2g)^{19} \log^{\widehat{c}}(2g)$. At step $i$ of the algorithm we will have two subsets of edges $F_i$ and $A_i$ with the following invariant: 
$\tw(G[F_i])\leq 2t$ and $G[A_i\cup F_i]> t$. The stopping condition of the algorithm is when $\tw(G[A_i\cup F_i])\leq 2t$. Initially we set $A_0=E(G)$ and $F_0=\emptyset$. Clearly the invariant holds initially. Now we explain how to perform step $i$, if step $i-1$ does not satisfy the stopping condition. Let $S_i$ be an arbitrary subset of $A_{i-1}$ of size $\lceil\frac{|A_i|}{2}\rceil$. If $\tw(G[F_{i-1}\cup S_i]) \leq t$, then we set $F_i=F_{i-1}\cup S_i$ and $A_i=A_{i-1}\setminus S_i$.  Clearly, $A_i$ and $F_i$ satisfy the invariant. If $\tw(G[F_{i-1}\cup S_i]) > t$, then we set $F_i=F_{i-1}$ and $A_i=S_i$. It is easy to see that in this case as well the invariant follows. We stop at step $i$ if $\tw(G[A_i\cup F_i])\leq 2t$. Let $A$ and $F$ be the edge sets of the last step of the above procedure. We have that $\tw(G[A\cup F])\leq 2t$ and $\tw(G[A\cup F])>t$. Then by \autoref{prop:gridTWGeneral}, there is a $(2g\times 2g)$-grid minor in $G[A\cup F]$. 
By Observations~\ref{obs:gridToWall} and \ref{obs:wallminortosub}, there is a $(g\times g)$-wall as subgraph in $G[A\cup F]$. Now we use \autoref{prop:twDP} to output a $(g\times g)$-wall in  $G[A\cup F]$, in time $t^{\OO(t)}n$.

Notice that the number of steps in the above procedure to output $A$ and $F$ is $\log m$. In each step we run the algorithm of 
\autoref{prop:twExact}, which runs in time $2^{\OO(t^3\log t)}n$. Therefore the total running time of the algorithm follows.  
\end{proof}

A linear relationship between the size of a grid minor and treewidth is known for planar graph, which is much better than the relationship in \autoref{prop:gridTWGeneral}.

\begin{proposition}[(6.2)~\cite{ROBERTSON1994323}]
\label{lem:planargridtw}
Let $g\geq 1$ be an integer. Any planar graph with treewidth larger than $6g-5$ has a $g\times g$ grid minor. 
\end{proposition}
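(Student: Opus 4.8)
The natural route is through \emph{branchwidth}. Recall the standard relationship of Robertson and Seymour: for every graph $G$ one has $\mathrm{bw}(G)\leq \tw(G)+1\leq \lfloor\tfrac{3}{2}\mathrm{bw}(G)\rfloor$, where $\mathrm{bw}$ denotes branchwidth. For $g=1$, or when $|E(G)|\leq 2$, the claim is immediate (then $\tw(G)\leq 2\leq 6g-5$), so assume $g\geq 2$ and $|E(G)|\geq 3$. It is now enough to establish the purely combinatorial fact that \emph{a planar graph with no $(g\times g)$-grid minor has branchwidth at most $4g-3$}; indeed, in that case $\tw(G)\leq \lfloor\tfrac{3}{2}(4g-3)\rfloor-1 = 6g-6\leq 6g-5$, so a planar graph with $\tw(G)>6g-5$ must contain a $(g\times g)$-grid minor. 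Equivalently, the task reduces to showing that \emph{a planar graph of branchwidth at least $4g-2$ contains a $(g\times g)$-grid minor}, which is the combinatorial core of \cite{ROBERTSON1994323}.

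For this core statement I would follow Robertson, Seymour and Thomas and argue via \emph{sphere-cut decompositions}. Fix a planar embedding and assume $G$ is connected with at least two edges. A connected plane graph with $\geq 2$ edges admits an optimal branch decomposition that is a sphere-cut decomposition: every separation of $E(G)$ that it displays is induced by a \emph{noose}, i.e.\ a closed curve in the plane meeting $G$ only in vertices, with the two sides drawn inside and outside the noose. Since branchwidth equals the maximum order of a tangle, large branchwidth yields a tangle $\mathcal{T}$ of order $\theta=4g-2$. In a plane graph such a tangle is geometrically localized: using the noose structure one finds an open disk $\Delta$ such that, for every separation $(A,B)\in\mathcal{T}$ of order $<\theta$ with $B$ the large side, the small side $A$ can be drawn in the complement of $\Delta$. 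One then iterates inward, repeatedly pushing a noose toward the interior of $\Delta$ while losing only a bounded number of vertices per step, to obtain $g$ pairwise disjoint nested nooses $N_1\supset N_2\supset\cdots\supset N_g$, each meeting $G$ in at least $2g$ vertices that are \emph{linked} through $\mathcal{T}$. Linkedness then supplies $2g$ vertex-disjoint paths crossing all of the $N_i$ in order; contracting each annular band between consecutive nooses to a single cycle and using these transversal paths as ``columns'' exhibits the $(g\times g)$-grid as a minor.

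\textbf{Main obstacle.} Producing a highly connected planar region --- equivalently, a well-linked set of size $\Omega(g)$ on the boundary of a disk --- from large branchwidth is routine. The delicate part is the final extraction: turning that region into an \emph{honest} $(g\times g)$-grid minor while keeping the dependence on $g$ linear with the precise constant. A lossy version (some fixed large constant in place of $4$) is comparatively easy, but pinning down the threshold $4g-2$, and hence the sharp $6g-5$ bound in the statement, is exactly the careful topological uncrossing and rerouting of nooses and of the transversal paths carried out in \cite{ROBERTSON1994323}. Since none of our algorithms relies on the optimal constant --- only on a \emph{linear} grid--treewidth relation for planar and bounded-genus graphs --- one could in principle make do with a weaker self-contained bound; but as the proposition is quoted in its sharp form, the branchwidth analysis above (or a direct appeal to (6.2) of \cite{ROBERTSON1994323}) is what is needed.
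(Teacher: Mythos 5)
The paper gives no proof of this proposition at all---it is quoted as a black box, namely result (6.2) of Robertson, Seymour and Thomas~\cite{ROBERTSON1994323}, exactly as the final sentence of your plan anticipates. Your reduction via the branchwidth--treewidth inequality and your sketch of the tangle/noose argument faithfully outline the route taken in that cited source, so there is nothing in the paper itself to compare against and no gap to report.
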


The following proposition  provides a constant-factor linear-time algorithm to compute the treewidth of a planar graph, which also outputs a grid minor in case the treewidth is large.

\begin{proposition}[\cite{DBLP:journals/tcs/KammerT16}]\label{prop:surfaceMinor}
There exists a constant $c$ such that for any planar graph $G$ and integer $r\in\mathbb{N}$, in time $\OO(r^2n)$ one can compute either  an $r\times r$-grid as a minor of $G$ or a tree decomposition of width at most $cr$.
\end{proposition}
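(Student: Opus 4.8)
The plan is to run the argument behind \autoref{prop:gridTWGeneralalgo} in the planar world, where both of its expensive ingredients have much cheaper replacements: the polynomial grid--treewidth relation of \autoref{prop:gridTWGeneral} is replaced by the linear relation of \autoref{lem:planargridtw}, and the exact treewidth routine of \autoref{prop:twExact} --- running in time $2^{\OO(\tw^3\log\tw)}n$, far too slow for an $\OO(r^2n)$ bound --- is replaced by a fast treewidth approximation tailored to planar graphs, namely the linear-time constant-factor approximation of Kammer and Tholey~\cite{DBLP:journals/tcs/KammerT16}. Morally, that approximation works because a planar graph can be contracted so as to control its BFS-layering, and a shallow (that is, $\OO(r)$-outerplanar) planar graph has treewidth $\OO(r)$, so the layered pieces glue into a decomposition of width $\OO(r)$.

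Concretely, I would set the threshold $k=6r-5$ and first invoke the Kammer--Tholey algorithm on $G$ with target value $k$, within the stated $\OO(r^2n)$ time. It returns one of two things. If it returns a tree decomposition, that decomposition has width $\OO(k)=\OO(r)$, so for a suitable constant $c$ (absorbing the approximation ratio) it has width at most $cr$ and we output it. Otherwise it certifies $\tw(G)>k=6r-5$, and then \autoref{lem:planargridtw} guarantees that $G$ contains an $(r\times r)$-grid as a minor; what remains is to \emph{produce} a minor model of it within the same budget. For this I would use the constructive form of the planar excluded-grid theorem of \cite{ROBERTSON1994323}: from the approximate decomposition already in hand together with the planar embedding, one builds the $r$ ``horizontal'' plus $r$ ``vertical'' pairwise-disjoint connected branch sets that realize the grid.

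The running time is thus one call to the planar treewidth approximation plus one grid-extraction sweep, each implementable in $\OO(r^2n)$ time. The main obstacle is exactly this second alternative: \autoref{lem:planargridtw} is purely existential, so one genuinely needs an \emph{algorithmic} planar excluded-grid theorem whose running time fits into $\OO(r^2n)$, rather than the larger bounds of generic branch-decomposition routines. Since the linear-time treewidth approximation \emph{and} the grid-extraction step are exactly what \cite{DBLP:journals/tcs/KammerT16} supplies, in the final write-up I would simply cite \cite{DBLP:journals/tcs/KammerT16} as a black box delivering both alternatives simultaneously, keeping \autoref{lem:planargridtw} on record for the qualitative implication ``large treewidth forces a large grid minor''.
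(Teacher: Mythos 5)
The paper gives no proof of this proposition at all: it is stated purely as a black-box citation of Kammer and Tholey~\cite{DBLP:journals/tcs/KammerT16}, which is exactly where your proposal lands after correctly flagging that \autoref{lem:planargridtw} alone is only existential and cannot produce the grid model within the time budget. Your derivation sketch is reasonable but unnecessary; the conclusion --- cite \cite{DBLP:journals/tcs/KammerT16} for both alternatives simultaneously --- is precisely what the paper does.
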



The following proposition follows from  \autoref{prop:surfaceMinor} and \autoref{obs:wallminortosub}. 


\begin{proposition}\label{prop:gridTWplanaralgof}
There exists a constant $c$ such that for any planar graph $G$ and integer $g\in\mathbb{N}$, in time $\OO(g^2n)$ one can compute either  a $g\times g$-wall in $G$ or a tree decomposition of width at most $cg$.
\end{proposition}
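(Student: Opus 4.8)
The plan is to reduce the statement to \autoref{prop:surfaceMinor}, invoking it with a grid parameter large enough to host a $(g\times g)$-wall, and then, in the grid-minor outcome, to convert the grid into a wall \emph{subgraph} using \autoref{obs:gridToWall} and \autoref{obs:wallminortosub}.

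Concretely, first I would run the algorithm of \autoref{prop:surfaceMinor} on the planar graph $G$ with parameter $r=2g$. In time $\OO((2g)^2 n)=\OO(g^2 n)$ it returns one of two things. If it returns a tree decomposition of $G$ of width at most $c\cdot 2g$ (where $c$ is the constant of \autoref{prop:surfaceMinor}), then we simply output it, and the constant in the statement is taken to be $2c$. Otherwise it returns a minor model of the $(2g\times 2g)$-grid in $G$; in this case, note that the $(2g\times 2g)$-grid contains the $(g\times 2g)$-grid (on the first $g$ rows and all $2g$ columns) as a subgraph, and by \autoref{obs:gridToWall} the $(g\times 2g)$-grid contains the $(g\times g)$-elementary wall, hence a $(g\times g)$-wall, as a subgraph. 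Restricting the given grid-minor model to the branch sets corresponding to the vertices of this $(g\times g)$-wall yields a minor model of a $(g\times g)$-wall in $G$, to which I would apply the linear-time algorithm of \autoref{obs:wallminortosub} to extract a $(g\times g)$-wall as a subgraph of $G$. The total time is $\OO(g^2n)$ for the call to \autoref{prop:surfaceMinor} plus $\OO(n)$ for the wall extraction, i.e.\ $\OO(g^2n)$.

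I do not expect a genuine obstacle here; the only points that need a little care are choosing the inflation factor (calling \autoref{prop:surfaceMinor} with $r=2g$ rather than $r=g$) so that the grid handed back is guaranteed to contain a $(g\times g)$-wall, and observing that the restriction of a grid-minor model to a substructure is again a valid minor model, both of which are immediate from the cited observations.
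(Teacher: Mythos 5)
Your proof is correct and follows exactly the route the paper intends: the paper states that this proposition "follows from Proposition~\ref{prop:surfaceMinor} and Observation~\ref{obs:wallminortosub}", and your argument fills in precisely those steps (inflating the parameter to $2g$, passing from the grid minor to a wall minor via Observation~\ref{obs:gridToWall}, and extracting the wall subgraph in linear time). No issues.
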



\autoref{prop:newFlatWall} is linear in $(n+m)$. Towards designing an $\OO(n\log n)$-time  algorithm to output a clique minor or a {\em nice flatness tuple} in a {\em large} treewidth graph, we require the following result. 

\begin{proposition}[\cite{ReedWood2009}]
\label{prop:almostlinearcliqueminor}
There is an algorithm which given a graph $G$ and $t \in {\mathbb N}$ with $m \geq 2^{t-3}\cdot n$, runs in time $\OO(t(n+m))$, and outputs a function $\phi$ witnessing that $K_t$ is a minor of~$G$. 
\end{proposition}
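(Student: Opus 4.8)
The statement is the algorithmic form of a classical extremal result (a theorem of Mader): every graph with $m\ge 2^{t-3}n$ edges contains $K_t$ as a minor. My plan is to first reprove this combinatorial fact by an induction that is manifestly constructive (it tells us which vertices to group into branch sets), and then to turn the resulting recursion into an algorithm of the claimed complexity. Observe first that $m\le\binom{n}{2}$ together with $m\ge 2^{t-3}n$ forces $n\ge 2^{t-2}+1\ge t$, so the hypothesis is never vacuous and $G$ contains an edge; in particular we only need to treat $t\ge 3$, the cases $t\le 2$ being trivial.

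For the induction on $t$, fix a vertex $u$ and note that $2\,|E(G[N(u)])|=\sum_{v\in N(u)}|N(u)\cap N(v)|$, where for an edge $uv$ the quantity $|N(u)\cap N(v)|$ counts the common neighbours of $u$ and $v$. Suppose first that some vertex $u$ satisfies $|E(G[N(u)])|\ge 2^{t-4}\,|N(u)|$. Then $G[N(u)]$ meets the hypothesis of the theorem with $t-1$ in place of $t$, so by induction it contains a model $\{B_1,\dots,B_{t-1}\}$ of $K_{t-1}$ with all branch sets inside $N(u)$; since $u\notin N(u)$ and $u$ is adjacent to every vertex of $N(u)$, the family $\{B_1,\dots,B_{t-1},\{u\}\}$ is a model of $K_t$ in $G$. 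Otherwise $|E(G[N(u)])|<2^{t-4}\,|N(u)|$ for every vertex $u$; picking any $u$ with $\deg(u)\ge 1$ we get $\sum_{v\in N(u)}|N(u)\cap N(v)|<2^{t-3}\deg(u)$, so some neighbour $v$ of $u$ has $|N(u)\cap N(v)|\le 2^{t-3}-1$. Contracting the edge $uv$ (merging parallel edges) yields a simple graph $G'$ with $n-1$ vertices and $|E(G')|=m-1-|N(u)\cap N(v)|\ge m-2^{t-3}\ge 2^{t-3}(n-1)$, so $G'$ again satisfies the hypothesis for $t$; by induction on $n$ it has a model of $K_t$, which lifts to a model of $K_t$ in $G$ by putting $u$ and $v$ into the same branch set. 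Since each step strictly decreases either $t$ or $n$, the recursion is finite.

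To obtain the algorithm I would implement precisely this recursion, carrying along a union--find structure that records, for each current vertex, the set of original vertices contracted into it; at the end this data is exactly the map $\phi$ assigning branch sets to the vertices of $K_t$. Degrees are kept in a bucket queue, so a vertex $u$ of minimum degree is produced in $O(1)$ amortised time; because $\deg(u)\le 2m/n$ for such $u$, one can test whether $G[N(u)]$ is dense and, failing that, find a neighbour $v$ with $|N(u)\cap N(v)|\le 2^{t-3}-1$ in time $O\!\big(\sum_{v\in N(u)}\deg(v)\big)=O(m)$ (mark $N(u)$ in a Boolean array and scan each $N(v)$). The recursion descends to a smaller value of the parameter at most $t$ times, and each such descent (the first case) as well as every contraction (the second case) only shrinks the graph; hence, if one can guarantee that the total work performed while the parameter has a fixed value is $O(n'+m')$ for the current sizes, the whole procedure runs in $O(t(n+m))$.

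That last proviso is the crux. A single value of the parameter may entail $\Theta(n)$ contractions of the second kind, and recomputing a minimum-degree neighbourhood after each of them costs $O(nm)$, which is too much. Reconciling the \emph{sharp} density threshold $2^{t-3}$ (which is exactly what forces the common-neighbourhood test: the cruder reduction ``contract a minimum-degree vertex into a neighbour'' loses a factor of $2$ and therefore fails under the hypothesis $m\ge 2^{t-3}n$) with a truly linear running time is the technical content of Reed and Wood's algorithm, achieved by updating only the local information affected by a contraction rather than recomputing from scratch. I expect this amortisation to be the main obstacle; in the present paper it is simply invoked as \cite{ReedWood2009}.
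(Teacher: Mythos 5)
The paper does not prove this proposition at all: it is imported verbatim from \cite{ReedWood2009}, so there is no internal proof to match against. Your combinatorial induction is correct and is exactly the Mader-type argument that underlies the cited result: the threshold arithmetic in both cases checks out (the dense-neighbourhood case hands $G[N(u)]$ the hypothesis for $t-1$, and the contraction case loses at most $1+|N(u)\cap N(v)|\le 2^{t-3}$ edges while losing one vertex, preserving the hypothesis for $t$), the lifting of branch sets is handled properly, and the termination and base cases are in order.

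The one genuine gap is the one you flag yourself, and it is not cosmetic: the proposition \emph{is} the running-time claim, and your recursion as described does not achieve it. At a fixed value of $t$ you may perform $\Theta(n)$ contractions, and re-running the density test on a fresh minimum-degree neighbourhood after each one costs $O\bigl(\sum_{v\in N(u)}\deg(v)\bigr)=O(m)$, giving $\Theta(nm)$ in the worst case rather than $O(t(n+m))$; the union--find bookkeeping and bucket queue do not by themselves repair this. So as a self-contained proof of the stated algorithmic result the proposal is incomplete, and the amortised update scheme you defer to \cite{ReedWood2009} is precisely the missing content. That said, since the paper itself treats the proposition as a black box, invoking the citation for exactly this step is consistent with how the result is used here; just be aware that what you have independently established is the existence statement, not the algorithm.
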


Because of \autoref{prop:almostlinearcliqueminor}, by just considering arbitrary $2^{t-3}\cdot n$ edges in a graph (if $m \geq 2^{t-3}\cdot n$) we get a $K_{t}$ minor model in time $\OO(2^{t}n)$.  

\begin{proposition}
\label{prop:linearcliqueminor}
There is an algorithm that given a graph $G$ and $t \in {\mathbb N}$ with $m \geq 2^{t-3}\cdot n$, runs in time $\OO(2^{t}n)$, and outputs a function $\phi$ witnessing that $K_{t}$ is a minor of $G$.  
\end{proposition}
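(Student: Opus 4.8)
This statement is a one-line corollary of \autoref{prop:almostlinearcliqueminor}; the plan is to first \emph{sparsify} $G$ down to $\OO(2^t n)$ edges and then invoke that algorithm. Concretely, given $G$ with $m \geq 2^{t-3}n$, I would first select an arbitrary subset $F \subseteq E(G)$ of size $\lceil 2^{t-3}n \rceil$. Since $m$ is an integer and $m \geq 2^{t-3}n$, we have $m \geq \lceil 2^{t-3}n \rceil$, so such an $F$ exists, and it can be produced by scanning the edge list of $G$ and retaining the first $\lceil 2^{t-3}n \rceil$ edges. The point to note here is that this costs only $\OO(2^t n)$ time and does \emph{not} require reading all $m$ edges --- which matters because $m$ may be as large as $\Theta(n^2)$, whereas the target running time is $\OO(2^t n)$. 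Set $G' = (V(G), F)$, so that $|V(G')| = n$ and $|E(G')| = \lceil 2^{t-3}n \rceil \geq 2^{t-3}\cdot |V(G')|$.

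Next I would run the algorithm of \autoref{prop:almostlinearcliqueminor} on the instance $(G', t)$; its hypothesis holds by the previous paragraph. It outputs a function $\phi$ witnessing that $K_t$ is a minor of $G'$, in time $\OO(t(|V(G')| + |E(G')|)) = \OO(t\cdot 2^{t-3}n) = \OO(2^t n)$.

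Finally I would argue that $\phi$ also witnesses $K_t$ as a minor of $G$. Since $G'$ is a subgraph of $G$ with the same vertex set, for every $v \in V(K_t)$ the set $\phi(v)$ induces a connected subgraph $G'[\phi(v)]$, which is a spanning connected subgraph of $G[\phi(v)]$, so $G[\phi(v)]$ is connected as well; the branch sets $\{\phi(v) : v \in V(K_t)\}$ remain pairwise disjoint; and for every edge $\{u,v\}$ of $K_t$ the edge of $G'$ joining $\phi(u)$ to $\phi(v)$ is also an edge of $G$. Hence $\phi$ is a valid $K_t$-minor model in $G$, and the algorithm returns it. The overall running time is $\OO(2^t n)$.

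There is no real obstacle: the claim follows immediately from \autoref{prop:almostlinearcliqueminor}. The only two points requiring (routine) care are the bookkeeping observation that the sparsification can be carried out in $\OO(2^t n)$ time rather than $\OO(m)$ time, and the elementary fact that minor containment is monotone under passing to subgraphs, so a $K_t$-minor model found in the sparsified graph transfers back to $G$ unchanged.
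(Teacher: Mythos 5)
Your proposal is correct and is exactly the paper's argument: the paper derives this proposition by taking an arbitrary subset of $2^{t-3}\cdot n$ edges and invoking \autoref{prop:almostlinearcliqueminor} on the resulting subgraph, which is precisely your sparsification step plus the monotonicity of minor containment under subgraphs. (Both you and the paper gloss over the fact that $t\cdot 2^{t-3}n$ is, strictly speaking, $\OO(t\,2^{t}n)$ rather than $\OO(2^{t}n)$, but this is the paper's own convention and harmless for its purposes.)
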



Now we are ready to prove a lemma similar to the flat wall theorem, but here we will output a {\em nice} flat wall.

\begin{lemma}\label{lem:flatWallCombined}
There is a constant $c\in\mathbb{N}$ and an algorithm that given a graph $G$, and integers $g,w,t\geq 1$ such that $g\geq ct^{48}(t^2+w)$, outputs one of the following.
\begin{itemize}
\setlength\itemsep{0em}
\item A nice tree decomposition of $G$ of width at most $cg^{19}\log^cg$.
\item A function $\phi$ witnessing that $K_t$ is a minor of $G$.
\item A subset $A\subseteq V(G)$ of at most $ct^{24}$ vertices and a wall $W$ of size at least $(w\times w)$, such that $V(W)\cap A=\emptyset$ and $W$ is a $(cg^{19}\log^cg)$-nice flat wall in $G\setminus A$. In addition, it outputs a $(cg^{19}\log^cg)$-nice flatness tuple $(A',B',C,\tilde{G},G_0,G_1,\ldots,G_k)$ for $(G,w,\vert A\vert ,A,W)$.
\end{itemize}
The running time of the algorithm is upper bounded by $2^{\OO({g^{58}})}n\log^2n$. 
\end{lemma}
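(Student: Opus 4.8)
The plan is to compose the sparsification test (\autoref{prop:linearcliqueminor}), the wall-versus-tree-decomposition dichotomy (\autoref{prop:gridTWGeneralalgo}) and the flat wall theorem (\autoref{prop:newFlatWall}), and then to spend extra effort upgrading the resulting flatness tuple to a \emph{nice} one. Write $\ell:=cg^{19}\log^cg$ for the (large) constant $c$ to be fixed. \textbf{Step 1 (sparsify).} If $m\geq 2^{t-3}n$, invoke \autoref{prop:linearcliqueminor} to output a $K_t$-minor of $G$ in time $\OO(2^tn)$; henceforth $m<2^{t-3}n$, so $m=2^{\OO(t)}n$. \textbf{Step 2 (wall or decomposition).} Run \autoref{prop:gridTWGeneralalgo} on $G$ with parameter $g$; in time $2^{\OO(g^{58})}n\log n$ it returns either a nice tree decomposition of $G$ of width at most $\widehat{c}g^{19}\log^{\widehat{c}}g\leq\ell$ (output it; first bullet), or a $(g\times g)$-wall $W$ in $G$. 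Since $g\geq ct^{48}(t^2+w)\geq 49152t^{24}(60t^2+w)=:R$ for $c$ large, $W$ contains an $(R\times R)$-sub-wall $W_R$. \textbf{Step 3 (flat wall theorem).} Apply \autoref{prop:newFlatWall} to $G$ with $w,t$ and $W_R$; as $m=2^{\OO(t)}n$ and $t\leq g$, this costs $\OO(t^{24}m+n)=2^{\OO(g)}n$, and returns either a $K_t$-minor of $G$ (output it) or an apex set $A$ with $|A|\leq 12288t^{24}$, a flat sub-wall $W^*$ of size at least $(w\times w)$ in $G\setminus A$, and a flatness tuple $\mathcal{T}=(A',B',C,\tilde{G},G_0,G_1,\ldots,G_k)$ for $(G,w,|A|,A,W^*)$.

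What remains is to replace $\mathcal{T}$ by an $\ell$-nice flatness tuple, after which one call to \autoref{obs:cflatmore} enforces the auxiliary conditions~6--7 (an operation that only deletes edges of $\tilde{G}$ and splits pieces, hence never raises $\tw(G_i)$) and produces a legal third-bullet output. I would do this by recursion on $|V(G)|$. If $\tw(G_i)\leq\ell$ for all $i\in[k]$ we are done. Otherwise pick a ``bad'' piece, i.e. an $i$ with $\tw(G_i\setminus Z_i)>\ell$ where $Z_i:=V(G_i)\cap V(G_0)$, $|Z_i|\leq 3$; this is decided by \autoref{prop:twExact} applied to $G_i\setminus Z_i$ with parameter $\ell$ (if no such $i$ exists every piece has treewidth at most $\ell+3$, which fits the claimed bound after absorbing a constant into $c$). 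For a bad $i$, $G_i\setminus Z_i$ lies above the grid threshold of \autoref{prop:gridTWGeneral}, and it is strictly smaller than $G$ because $W^*$ supplies $\Omega(w^2)$ vertices of $G_0$ disjoint from $V(G_i)\setminus Z_i$. Recursing on $G_i\setminus Z_i$ cannot return a bounded-width tree decomposition, so it returns a $K_t$-minor of $G_i\setminus Z_i\subseteq G$ (output it) or an apex set $A_i$ together with an $\ell$-nice flat wall $W_i$ in $(G_i\setminus Z_i)\setminus A_i$. Since $G_i$ meets the rest of $G$ only in $Z_i$, after deleting $Z_i\cup A_i$ the component containing $W_i$ is detached from the remainder of $G$; hence $W_i$ is an $\ell$-nice flat wall in $G\setminus(Z_i\cup A_i)$ (same pieces as inside $G_i\setminus Z_i$, with $A'$ swollen to contain the detached remainder), with output apex set $Z_i\cup A_i$.

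I expect this last recursion to be the main obstacle, on two counts. First, \emph{keeping the apex set of size $\leq ct^{24}$}: each level of the recursion contributes up to three new apex vertices ($Z_i$) on top of the $\leq 12288t^{24}$-sized apex produced by the bottom invocation of \autoref{prop:newFlatWall}, so one must show that only $\OO(t^{24})$ levels actually contribute --- equivalently, that the chain of nested $3$-separations one descends through either terminates after $\OO(t^{24})$ steps or can be replaced by a single separator of size $\OO(t^{24})$. This is exactly where the ``binary search on the output of the flat wall theorem'' enters: instead of descending naively, one uses treewidth tests (\autoref{prop:twExact}) and the dichotomy of \autoref{prop:gridTWGeneralalgo} to locate quickly a high-treewidth piece whose flattening no longer exposes high-treewidth sub-pieces, so that the $12288t^{24}$-apex is charged only once and $c$ can be chosen to absorb everything. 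Second, \emph{the running time}: the recursion strictly shrinks the vertex set and the pieces produced at one level essentially partition $V(G)\setminus V(G_0)$, so summing the $2^{\OO(g^{58})}n\log n$ cost of \autoref{prop:gridTWGeneralalgo} over the recursion tree --- where the $3$-separation structure forces an $\OO(\log n)$ nesting depth --- yields the stated $2^{\OO(g^{58})}n\log^2 n$; making this accounting rigorous (in particular that recursive calls do not inflate the total processed size) is routine but delicate. Everything else --- passing from $W$ to a sub-wall, verifying that the glued object satisfies the flatness-tuple axioms of \autoref{def:KWT} and \autoref{def:flatWall}, and the concluding application of \autoref{obs:cflatmore} --- is bookkeeping.
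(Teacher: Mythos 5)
Your skeleton (sparsify via \autoref{prop:linearcliqueminor}, then \autoref{prop:gridTWGeneralalgo}, then \autoref{prop:newFlatWall}, then a recursion to enforce niceness) matches the paper's, and you have correctly located the two hard points. But you have not resolved either of them, and as stated your recursion does not prove the lemma. The problem is your decision to output $Z_i\cup A_i$ as the apex set of the recursive call's answer: each level of nesting adds up to $3$ vertices, and nothing in your argument bounds the nesting depth by $\OO(t^{24})$ — nested order-$3$ separations can have depth $\Theta(n)$, and your observation that each descent removes only the $\Omega(w^2)$ vertices of $G_0$ gives depth $\Omega(n/w^2)$, not $\OO(\log n)$. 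So neither the $ct^{24}$ bound on $|A|$ nor the $\log^2 n$ factor in the running time follows. Your appeal to ``binary search'' to find a piece ``whose flattening no longer exposes high-treewidth sub-pieces'' is exactly the missing idea, not a proof of it.

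The paper closes both gaps with two devices you do not have. First, it recurses only when at least \emph{three} pieces $G_{j_1},G_{j_2},G_{j_3}$ have treewidth above the threshold: since these pieces pairwise meet only inside $V(G_0)$ (in at most $3$ vertices), at most one of them can contain more than half of $V(H_i)$, so one can always descend into a piece $G_r$ with $|V(G_r)|\leq |V(H_i)|/2$ — this is what forces $\OO(\log n)$ depth. The same three-way disjointness lets one pick $G_r$ so that it contains at most half of the current separator, whence the separator carried as recursion state never exceeds $2(12288t^{24}+3)$ regardless of depth; it is an invariant, not an accumulating sum. Second, and crucially, the separator is \emph{never placed into the apex set}. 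The flat wall theorem is invoked for a wall of size $\widehat{w}\times\widehat{w}$ with $\widehat{w}=4(12288t^{24}+3)w$, and at termination — which is declared as soon as at most \emph{two} pieces still have large treewidth, not when all pieces are nice — one extracts a $w\times w$ subwall whose rows and columns avoid the $\leq 2(12288t^{24}+3)$ separator vertices and the $\leq 6$ attachment vertices of the two surviving bad pieces. Those bad pieces then fall outside the restricted flatness tuple, every remaining piece has small treewidth, and the output apex set is just the single set $A$ of size $\leq 12288t^{24}$ produced by the last call to \autoref{prop:newFlatWall}. Without the ``three bad pieces'' trigger and the ``shrink the wall to dodge the separator'' step, your induction does not close.
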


\begin{proof}
%
Recall that $n=\vert V(G)\vert$ and $m=\vert E(G)\vert$.  We choose the value of $c\geq 12288$, in such a way that 
$c$ is at least the constant mentioned in \autoref{prop:gridTWGeneralalgo} and $ct^{48}(t^2+w)\geq 49152t^{24}(60t^2+4(12288t^{24}+3) w)=49152t^{24}(60t^2+\widehat{w})$, where $\widehat{w}=4(12288t^{24}+3) w$. Let $\tw=cg^{19}\log^cg$.  
First we use the algorithm of \autoref{prop:twExact} to test whether the treewidth of $G$ is $\tw$ or not. If the answer is yes, then the algorithm will output a  nice tree decomposition of $G$ of width at most $\tw$, which is a valid output for our algorithm. 
This step takes time $2^{\OO(g^{57}(\log g)^{\OO(1)})}n \leq 2^{\OO(g^{58})}n$. If $m\geq 2^{t-3} n$, then we run the algorithm of \autoref{prop:linearcliqueminor} and it will output a function $\phi$ witnessing that $K_t$ is a minor in $G$. The time to output $\phi$, by the algorithm of \autoref{prop:linearcliqueminor}, is $\OO(2^t n)$. So, from now on we assume that $m <2^{t-3}n$ and $\tw(G)>\tw$. 

Now we design a recursive algorithm ${\cal A}$ as follows. At step $i$ we output either a valid output for our original task (i.e., either a function $\phi$ witnessing that $K_t$ is a minor of $G$ or a $(\tw)$-nice flat wall $W$ of size at least $(w\times w)$ in $G\setminus A$ for some $A\subseteq V(G)$, $\vert A\vert \leq 12288t^{24}$) or construct a separation $(H_i,H_i')$ of $G$ order at most $2(12288t^{24}+3)$ and  $\tw(H_i)>\tw$ (these are the invariants we maintain). Then, we will be searching for a $K_t$-minor or a nice flat wall in $H_i$ in the subsequent steps of the algorithm.


Initially w e set $(H_0,H_0')=(G,(\emptyset,\emptyset))$. Since $\tw(G)>\tw$, clearly the invariants hold initially. 
At step $i+1$, we do the following. 
Notice that $\tw(H_{i})>\tw$, because of the invariant. First we use \autoref{prop:gridTWGeneralalgo} to get a $(g\times g)$-wall $W_i$ in time $2^{\OO(g^{58})} n\log n$. Since $g\geq 49152t^{24}(60t^2+\widehat{w})$, by using the algorithm of \autoref{prop:newFlatWall} in time $2^{\OO(t)}n$ (because $m< 2^{t-3}n$), we get either a subset $A\subseteq V(H_i)$ and  a $(\widehat{w}\times \widehat{w})$ flat wall $W^{\star}_i$ in $H_i\setminus A$, or a function $\phi$ witnessing that $K_t$ is a minor in $H_{i}$ (and hence in $G$). In the latter case, the function $\phi$ is a valid output of our algorithm. In the former case, consider the flatness tuple $(A',B',C,\tilde{G}, G_0,G_1,\ldots,G_k)$ for $(H_i,\widehat{w},\vert A\vert, A,W^{\star}_i)$, outputted by the algorithm of \autoref{prop:newFlatWall}. Now we use the algorithm of \autoref{prop:twExact} to test whether the treewidth $G_i$ is at most $\tw$ or not, for all $i\in [k]$.  These computations of treewidth altogether take time $2^{\OO(g^{58})} n$. 
Let $S_{i}=V(H_{i})\cap V(H_{i}')$. Now we have two cases. 

\medskip
\noindent{\bf Case 1:  There exist $1\leq j_1<j_2<j_3\leq k$ such that  $\tw(G_{j_1}),\tw(G_{j_2}),\tw(G_{j_3})>\tw$.}
 Notice that $\vert S_{i}\vert \leq 2(12288t^{24}+3)$. There exit two distinct $i',j'\in \{j_1,j_2,j_3\}$ such that $\vert V(G_{i'})\vert , \vert V(G_{j'})\vert \leq \frac{\vert V(H_{i})\vert}{2}-\vert A\vert$, because $\tw(G_{j_1}),\tw(G_{j_2}),\tw(G_{j_3})>\tw$ and $\tw$ is much larger than $2\vert A\vert$. Then, since $V(G_{i'})\cap V(G_{j'})\subseteq V(G_0)$, there exists $r\in \{i',j'\}$ such that $\vert (V(G_r)\setminus V(G_0))\cap S_{i}\vert \leq 12288t^{24}+3$. In this case, we set $H_{i+1}=G_r\cup G[A]$ and let $H_{i+1}'$ be the minimal subgraph of $G$ such that $G=H_{i+1}\cup H_{i+1}'$. 
That is, $H_{i+1}'$ is a subgraph of $(G\setminus (V(G_r)\setminus V(G_0)))\cup G[A]$. 
Here, we have that $V(H_{i+1})\cap V(H_{i+1}') \subseteq (A\cup (V(G_r)\cap V(G_0))\cup ((V(G_r)\setminus V(G_0))\cap S_{i})$, 
because the only vertices in $H_{i+1}$ which have neighbours outside $V(H_{i+1})$ are from $A\cup S_i\cup (V(G_0)\cap V(G_r))$. Since $\vert (V(G_r)\setminus V(G_0))\cap S_{i}\vert \leq 12288t^{24}+3$,  $\vert A\vert \leq 12288t^{24}$ and $\vert V(G_0)\cap V(G_r)\vert \leq 3$, the order of the separation $(H_{i+1},H_{i+1}')$ is at most $2(12288t^{24}+3)$. Since $\tw(G_r)>\tw$ and $H_{i+1}=G_r+G[A]$, we have that $\tw(H_{i+1})>\tw$. Therefore, the separation $(H_{i+1},H_{i+1}')$ satisfies the invariants. 
Here, $\vert V(H_{i+1})\vert \leq \frac{\vert V(H_i)\vert}{2}$, because $\vert V(G_{r})\vert \leq \frac{\vert V(H_{i})\vert}{2}-\vert A\vert$. 
This completes the step $i+1$ of algorithm ${\cal A}$. 

\medskip
\noindent{\bf Case 2: Case 1 is false.} 
In this case, we directly output a $(\tw)$-nice flat wall in $G\setminus A$. Let $j_1,j_2\in [k]$ be such that $\tw(G_{j_1}),\tw(G_{j_2})>\tw$. That is, for 
all $j\in [k]\setminus \{i_1,i_2\}$, $\tw(G_j)\leq \tw$.  Let $T=(V(G_{j_1})\cup V(G_{j_2}))\cap V(G_0)$. Notice that $\vert T\vert \leq 6$.  
Since $W^{\star}_i$ is a flat wall of size $\widehat{w}\times \widehat{w}$,  $\widehat{w}=4(12288t^{24}+3)\cdot w$ and $\vert S_{i}\cup T \vert \leq 2(12288t^{24}+3)+6$, there exist $w$ distinct and consecutive rows and $w$ distinct and consecutive columns of $W^{\star}_i$ that are not hit by $S_{i}\cup T $. Let $W$ be the subwall of  $W^{\star}_i$ formed by these rows and columns. We claim that $W$ is a $(\tw)$-nice flat wall in $G\setminus A$.  Let $D'$ be the boundary of $W$. Let $X$ be the set of vertices in $\tilde{G}$ contained in the inner face of $D'$. Let $G_0'=G_0[X]$ and $\tilde{G}'=\tilde{G}[X]$. Let $I\subseteq [k]$ be such that  $j\in I$ if and only if $V(G_j)\cap V(G_0)\subseteq X$. Since $D'$ is a cycle in the plane graph $\tilde{G}$ and for any $j\in [k]$, $V(G_i)\cap V(G_0)$ forms a clique in $\tilde{G}$, we have that there is no $j\in [k]$ such that $V(G_j)\cap V(G_0)$ has a vertex in the interior face of $D'$ and another vertex in the exterior face of $D'$. Let $B^{\star}=G_0'\cup (\bigcup_{j\in I}G_j)$ and $A^{\star}$ be a minimal subgraph of $H_{i}$ such that $A^{\star}\cup B^{\star}=H_{i}$. Therefore, $(A^{\star},B^{\star},D',\tilde{G}',G_0',G_{i_1},\ldots, G_{i_{\ell}})$, where $I=\{i_1,\ldots, i_{\ell}\}$, is a flatness tuple for $(H_i,w,\vert A\vert,A,W)$.  Since  $S_{i}\cap V(B^{\star})=\emptyset$, $W$ is a flat wall in $G\setminus A$ and  $(A^{\star}\cup H_i',B^{\star},D',\tilde{G}',G_0',G_{i_1},\ldots, G_{i_{\ell}})$ is a flatness tuple for $(G,w,\vert A\vert,A,W)$. Moreover, since $\tw(G_{i_j})\leq \tw$
for all $j\in [\ell]$, $(A^{\star}\cup H_i',B^{\star},D',\tilde{G}',G_0',G_{i_1},\ldots, G_{i_{\ell}})$ is a $(\tw)$-nice flatness tuple. 

\medskip
\noindent Now we analyze the running time of algorithm ${\cal A}$. Since in step $i+1$, either we output a valid output or construct a separation $(H_{i+1},H_{i+1}')$, where $\vert H_{i+1}\vert \leq \frac{\vert V(H_i)\vert}{2}$, the number of steps of the algorithm is upper bounded by $\OO(\log n)$. In each step, first we run the algorithm 
of \autoref{prop:gridTWGeneralalgo}, 
which takes time $2^{\OO({g^{58}})}n \log n$. Then we run the algorithm of \autoref{prop:newFlatWall},which takes time $2^{\OO(t)}n$. Afterwards, we use \autoref{prop:twExact} to test the treewidth of each of the graphs $G_1,\ldots,G_k$ and these computations together take time  $2^{\OO({g^{58}})}n$, because any pair of these graphs intersect on at most three vertices. The execution of the above mentioned two cases takes time $\OO(n)$. Therefore, the total running time of ${\cal A}$ is upper bounded by $2^{\OO({g^{58}}+t)}n\log^2n=2^{\OO({g^{58}})}n\log^2n$. As the initial computation (before the execution of algorithm ${\cal A}$) takes time $2^{\OO({g^{58}})}n$, the total running time of the algorithm follows. 
\end{proof}

Finally, we show that the plane graph $\tilde{G}$ in the flatness tuple, has a large grid as a minor.
 

\begin{lemma}\label{lem:largeWall}
Let $G$ be a graph,  $W$ be a $(w\times w)$-wall in $G$, and $C$ be a cycle in $G$ such that there is a choice of pegs of $W$ where every peg belongs to $V(C)$. In addition, let $G_0,G_1,\ldots,G_k$ be subgraphs of $G$ with a plane graph $\tilde{G}$ that witness that $G$ is $C$-flat.
Then, $\tilde{G}$ has a $(w-2 \times w-2)$-grid as a minor.
\end{lemma}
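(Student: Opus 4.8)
The plan is to realize the grid minor inside a convenient spanning subgraph of $\tilde G$, so that planarity of $\tilde G$ is never used directly. Set $S_i:=V(G_i)\cap V(G_0)$ for $i\in[k]$, and let $G^\dagger$ be the graph obtained from $G_0$ by adding all edges of each clique on $S_i$. By conditions (3) and (4) of \autoref{def:KWT}, every such clique edge already lies in $\tilde G$ (an edge if $|S_i|=2$, a facial triangle if $|S_i|=3$), and $V(G^\dagger)=V(G_0)=V(\tilde G)$; hence $G^\dagger$ is a spanning subgraph of $\tilde G$, and it suffices to find a $(w-2)\times(w-2)$-grid minor in $G^\dagger$. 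Along the way I would record two elementary facts: writing $P_i:=V(G_i)\setminus V(G_0)$, condition (5) forces every vertex of $P_i$ to lie in no other $G_j$, so $N_G(v)\subseteq V(G_i)=P_i\cup S_i$ for $v\in P_i$; consequently there are no $G$-edges between distinct $P_i$ and $P_j$, each component $D$ of $W[P_i]$ satisfies $N_W(D)\subseteq S_i$ (so $|N_W(D)|\le 3$), and every $G$-edge with both endpoints in $V(G_0)$ is an edge of $G^\dagger$ (if not in $G_0$ it is in some $G_j$ with both ends in $S_j$).

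Next I would take the $w\times w$-grid $\Gamma$ as a minor of $W$ (\autoref{obs:wlalltogrid}), with branch sets $\{\mu(z)\}_{z\in V(\Gamma)}\subseteq V(W)$. Since $W$ is connected and the pegs of $W$ lie on $C\subseteq V(G_0)$ (so $V(W)\not\subseteq P_i$), and since there are no edges between different $P_i$'s, any branch set disjoint from $V(G_0)$ is contained in a single $P_i$; call these \emph{hidden} and put $Z_i:=\{z: \mu(z)\subseteq P_i\}$. The main combinatorial step is to show that $\bigcup_i Z_i$ avoids the interior $(w-2)\times(w-2)$-subgrid $\Gamma'$ of $\Gamma$. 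Indeed, if $z\in Z_i$ and $z'$ is a $\Gamma$-neighbour of $z$ with $z'\notin Z_i$, then the $W$-edge realizing $zz'$ leaves $P_i$ and hence passes through $S_i$, so $\mu(z')\cap S_i\ne\emptyset$; therefore $N_\Gamma(Z_i)\setminus Z_i\subseteq\{z':\mu(z')\cap S_i\ne\emptyset\}$, a set of at most $|S_i|\le 3$ vertices. Thus each $Z_i$ is separated from the rest of $\Gamma$ by a set of at most three vertices of the $w\times w$-grid, and an easy case analysis shows that any such $3$-separator can only cut off a region of $\Gamma$ contained in the first or last row or the first or last column (to isolate even a single interior vertex one needs its four neighbours). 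Hence $\bigcup_iZ_i\subseteq V(\Gamma)\setminus V(\Gamma')$. (For $w\le 3$ the lemma is trivial.)

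Finally I would define branch sets in $G^\dagger$ by $\mu'(z):=\mu(z)\cap V(G_0)$ for $z\in V(\Gamma')$. These are pairwise disjoint, and nonempty since $z\notin\bigcup_iZ_i$. For connectivity, given $a,b\in\mu'(z)$ take a path in the connected graph $W[\mu(z)]$ and replace every maximal subpath whose internal vertices lie in some $P_i$ — such a subpath runs between two vertices of $N_W(D)\subseteq S_i$ for one component $D$ of $W[P_i]$ — by the corresponding $S_i$-clique edge of $G^\dagger$; the surviving edges have both ends in $V(G_0)$, so they too are edges of $G^\dagger$, giving a walk inside $\mu'(z)$. For adjacency, an edge of $W$ realizing $zz'\in E(\Gamma')$ either has both ends in $V(G_0)$ (a $G^\dagger$-edge between $\mu'(z)$ and $\mu'(z')$), or has an endpoint in some $P_i$, in which case both $\mu(z)$ and $\mu(z')$ meet $N_W(D)\subseteq S_i$ for the relevant component $D$ (they are not hidden), and distinct vertices of $S_i$ in $\mu'(z)$ and $\mu'(z')$ are joined by an $S_i$-clique edge of $G^\dagger$. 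Hence $\Gamma'$, a $(w-2)\times(w-2)$-grid, is a minor of $G^\dagger$, and therefore of $\tilde G$.

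I expect the only nonroutine point to be the separator claim in the second paragraph: pinning down exactly which vertices of a $w\times w$-grid can be cut off by a $3$-element separator, which is what pins down the loss of precisely one boundary layer on each side (the ``$-2$''). The rest is bookkeeping with the conditions of the flatness tuple, in particular the repeated use of ``$N_W(D)\subseteq S_i$ and $S_i$ is a clique of $G^\dagger$''.
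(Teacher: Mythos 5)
There is a genuine gap, and it sits exactly at the step you flagged as the ``only nonroutine point''. The assertion that a set $Y\subseteq V(\Gamma)$ with $|N_\Gamma(Y)|\le 3$ can only be ``a region contained in the first or last row or the first or last column'' is false as a statement about grids: take $Y=V(\Gamma)\setminus\{(1,1)\}$. Then $N_\Gamma(Y)=\{(1,1)\}$ has size $1$, yet $Y$ contains every vertex of the interior subgrid $\Gamma'$. In other words, a $3$-element neighbourhood tells you that one side of the induced separation is small, but it does not tell you \emph{which} side, and your argument silently assumes that $Z_i$ (resp.\ each component of $\Gamma[Z_i]$) is the small side. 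Nothing you have established rules out the ``co-small'' scenario in which a single component $D$ of $W[P_i]$ (with $|N_W(D)|\le 3$) swallows the branch sets of all but a handful of grid vertices; the observation that the pegs lie in $V(G_0)$ is used only to place each hidden branch set inside a single $P_i$, not to orient the separator. The intended ``to isolate even a single interior vertex one needs its four neighbours'' argument only applies when $Y$ is a singleton.

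Closing this gap requires an anchoring argument tying the branch sets to the pegs, and this is precisely what the paper's proof supplies and your transfer loses. The paper takes as branch sets the segments $S_{i,j}$ of the wall's rows between consecutive degree-$3$ vertices; since each such segment contains two degree-$3$ vertices of $W$, there are four vertex-disjoint paths in $W$ from $S_{i,j}$ to the pegs (which lie on $C\subseteq G_0$), so $S_{i,j}$ cannot lie behind a separator of order $3$ and hence must meet $V(\tilde G)$. Your model $\{\mu(z)\}$ comes from the unspecified Observation~\ref{obs:wlalltogrid} and carries no such guarantee: a branch set of an interior grid vertex could a priori be a single subdivision vertex, and more importantly the union of all interior branch sets is not shown to have flow $\ge 4$ to the pegs. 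The proof can be repaired along your lines — e.g.\ by choosing the grid model of $W$ so that every branch set of a vertex of $\Gamma'$ contains two degree-$3$ wall vertices with four disjoint paths to the pegs, which directly forbids $\mu(z)\subseteq P_i$ for $z\in V(\Gamma')$ and makes the grid-separator case analysis unnecessary — but that flow argument is the actual content of the lemma, not bookkeeping. The remaining parts of your write-up (that $G^\dagger$ is a spanning subgraph of $\tilde G$, the connectivity and adjacency of the truncated branch sets $\mu'(z)$ via the $S_i$-cliques) are correct and give a clean alternative to the paper's projection-of-paths argument.
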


\begin{proof}[Proof sketch]
We first observe that \autoref{def:KWT} directly implies the following claim.

\begin{claim}\label{claim:inLargeWall}
Let $P$ be a path in $G$. Define $\tilde{P}$ as the sequence of vertices in $V(P)\cap V(\tilde{G})$ where vertices appear in the same order in which they appear on $P$. Then, $\tilde{P}$ is a path in $\tilde{G}$.
\end{claim}

We define a collection of sets that will exhibit the existence of a $(w \times w)$-grid. Let $R_1,R_2,\ldots,R_w$ denote the rows of the wall $W$. For all $i,j\in\{2,\ldots,w-1\}$, we define a set $S_{i,j}$ as follows: 
$S_{i,j}$ is the set that contains the vertices of the subpath of $R_i$ between the $(2(j-1)-1)^{th}$ and $(2(j-1)+1)^{th}$ degree-3 vertices on $R_i$.

We now claim that for all $i,j\in\{2,\ldots w-1\}$, $V(\tilde{G})\cap S_{i,j}\neq \emptyset$. 
Suppose, by way of contradiction, that $V(\tilde{G})\cap S_{i,j}=\emptyset$. This means that there exists $G_p$, $p\in[k]$, such that $S_{i,j}\subseteq U$ where we denote $U=V(G_p)\setminus V(\tilde{G})$. By \autoref{def:KWT}, we have that $(G_p,G[V(G)\setminus U])$ is a separation of $G$ of order at most $3$. Recall that all the pegs of $W$ belong to $V(G)\setminus U$. Since the flow between $S_{i,j}$ (which contains two vertices of degree-3 in $W$) and these pegs is at least 4, we have reached a contradiction. For all $i,j\in\{2,\ldots,w-1\}$, denote $X_{i,j}=V(\tilde{G})\cap S_{i,j}$. 
Observe that for all $i,j\in\{2,\ldots,w-2\}$, $\tilde{G}[X_{i,j}]$ is a connected graph. Indeed, by noting that for all $i,j\in\{2,\ldots,w-1\}$, $S_{i,j}$ is a path, the correctness of this claim directly follows from \autoref{claim:inLargeWall}.

Finally, for all $2\leq i\leq w-2$ and $2\leq j\leq w-1$, we define the path $P_{i,j}$ in $G$ as follows. Because $W$ is a wall, there exists a subpath of $C_{j-1}$ from a vertex in $S_{i,j}$ to a vertex in $S_{i+1,j}$, that uses only vertices of degree 2 in $W$ (and these internal vertices are not from any $R_{i'}$). Let $\tilde{P}_{i,j}$ be the path specified by \autoref{claim:inLargeWall} for the path $P_{i,j}$. 
Similarly, we define the path $P'_{i,j}$ in $G$ for any $2\leq j\leq w-2$ and $2\leq i\leq w-1$, where now we take a subpath of $R_i$ rather than a subpath of $C_{j-1}$. 
Let $\tilde{P}'_{i,j}$ be the path specified by \autoref{claim:inLargeWall} for the path $P'_{i,j}$. 
Observe that the paths $\tilde{P}_{i,j}$ and  $\tilde{P}'_{i,j}$  are internally vertex-disjoint.
Furthermore, the sets $X_{i,j}$, $i,j\in\{2,\ldots, w-1\}$, are also pairwise disjoint.
Then, because  $V(\tilde{P}_{i,j})\subseteq V(P_{i,j})$, and as $\tilde{G}[X_{i,j}]$ is a connected graph, we have exhibited a $(w-2)\times (w-2)$-grid in $\widetilde{G}$ (which consists of the paths $\tilde{P}_{i,j}$ and the sets $X_{i,j}$). 
\end{proof}

\section{Statements of Main Theorems}
\label{sec:mainthms}

In this section we state our main theorems and prove them in the subsequent sections. 
To state the theorems, we first need to give several definitions and then define problems that are more general than \TMC\ and \TMH.  

\subsection{Rooted Graph, Folio, and Extended Folio}\label{sec:prelimsRooted}

We first present the definition of a rooted graph.

\begin{definition}[{\bf Rooted graph}~\cite{DBLP:conf/stoc/GroheKMW11}]
A {\em rooted graph} is an undirected graph $G$ with a set $R(G) \subseteq V(G)$ of vertices specified as {\em roots} and an injective
mapping $\rho_G : R(G) \rightarrow \mathbb{N}$ assigning a distinct positive integer label to each root vertex. We say that two rooted graphs $G_1$ and $G_2$ are {\em compatible} if $\rho_{G_1}(R(G_1))=\rho_{G_2}(R(G_2))$.  We also say that two graphs $G_1$ and $G_2$ have the same set of roots when they are compatible. 
\end{definition}

\begin{definition}[{\bf Replacement}~\cite{DBLP:conf/stoc/GroheKMW11}]
Let $G$ be a rooted graph and $(G_1,G_2)$ be a separation such that $S=V(G_1)\cap V(G_2)\subseteq R(G)$.  Let $G_1'$ be a graph compatible with $G_1$. Replacing $G_1$ with $G_1'$ in the separation $(G_1,G_2)$ gives the following graph $G'$ (which is denoted by $G_1'\cup G_2$). 
\begin{itemize}
\item $V(G')=V(G'_1)\cup (V(G_2)\setminus V(G_1))$, and 
\item $E(G')=E(G_1')\cup E(G_2-S)\cup \{\{u',v\}\colon \{u,v\}\in E(G_2),u\in S, v\notin S, \rho_{G_1}(u)=\rho_{G_1'}(u')\}$. 
\end{itemize}
\end{definition}
That is, we replace  $G_1$ with $G_1'$ in $G$ such that the role of $S$  is taken by $\rho_{G_1'}^{-1}(\rho(S))$. 
We now adapt the notion of topological minors to the presence of roots.

\begin{definition}[{\bf Topological minor of rooted graph}~\cite{DBLP:conf/stoc/GroheKMW11}]
Let $G$ and $H$ be two undirected rooted graphs. We say that $H$ is a {\em topological minor} of $G$ if there exist injective functions $\phi: V(H)\rightarrow V(G)$ and $\varphi: E(H)\rightarrow\paths(G)$ such that 
\begin{itemize}
\setlength\itemsep{0em}
\item for all $e=\{h,h'\}\in E(H)$, the endpoints of $\varphi(e)$ are $\phi(h)$ and $\phi(h')$,
\item for all distinct $e,e'\in E(H)$, the paths $\varphi(e)$ and $\varphi(e')$ are internally vertex-disjoint,
\item there do not exist a vertex $v$ in the image of $\phi$ and an edge $e\in E(H)$ such that $v$ is an internal vertex on $\varphi(e)$, and
\item for all $v\in R(H)$, $\rho_H(v) = \rho_G(\phi(v))$.
\end{itemize}
\end{definition}

Note that if $R(G)=\emptyset$, then the definition above coincides with the standard definition of a topological minor. The {\em folio} of a rooted graph $G$ is the collection of all topological minors of $G$. We are only interested in topological minors of bounded size, which gives rise to the following definition.

\begin{definition}[{\bf $\delta$-folio}]
Let $\delta\in\mathbb{N}$. The {\em $\delta$-folio} of a rooted graph $G$ is the collection of all topological minors $H$ of $G$ such that $|E(H)|+\isolated(H)\leq \delta$.
\end{definition}

Clearly, every graph in a $\delta$-folio has at most $2\delta$ vertices. We say that a vertex $v$ is {\em irrelevant} 
to the $\delta$-folio of $G$ (w.r.t roots $R(G)$), if the $\delta$-folio of $G$  is same as the $\delta$-folio of $G\setminus v$. 

\begin{definition}[{\bf Extended $\delta$-folio}]
\label{def:extfolio}
Let $\delta\in\mathbb{N}$, and let $G$ be a rooted graph. Given a graph $X$ such that $V(X)=R(G)$, 
the {\em $(X,\delta)$-folio} of $G$ is the $\delta$-folio of $(G\cup X)$ with $R(G)$ as the set of roots.
The {\em extended $\delta$-folio} of $G$ is the function $f$ whose domain is $\allgraphs(R(G))$, and for each graph $X$ in $\allgraphs(R(G))$, $f(X)$ is equal to the $(X,\delta)$-folio of $G$.
\end{definition}

We remark that an extended $\delta$-folio can also be thought of as a tuple where each entry is uniquely identified with a graph on $R(G)$. 

\begin{definition}
We say that a $U\subseteq V(G)$ is {\em irrelevant} 
to the extended $\delta$-folio of $G$ (w.r.t.~roots $R(G)$), if the extended $\delta$-folio of $G$ is same as the extended $\delta$-folio of $G\setminus U$. When $U=\{v\}$, we say that $v$ is an irrelevant vertex to   the extended $\delta$-folio of $G$. 
\end{definition}

We use the following simple observation later in the paper (see Proposition~2.3~\cite{DBLP:conf/stoc/GroheKMW11}). 
\begin{observation}
\label{obs:flatfolioonly}
Let $G$ be a graph and $\delta\in {\mathbb N}$. Let $Q\subseteq R \subseteq V(G)$. 
\begin{itemize}
\item[(a)] The extended $\delta$-folio of $G$ with respect to roots $R$ can be computed from the $(\delta+\vert R\vert)$-folio of $G$ w.r.t roots $R$. Moreover, if a vertex $v$ is irrelevant to the $(\delta+\vert R\vert)$-folio of $G$ w.r.t.~roots $R$, then $v$ is irrelevant to the extended $\delta$-folio of $G$ w.r.t.~roots $R$.
\item[(b)] The extended $\delta$-folio of $G$ with respect to roots $Q$ can be obtained from the 
extended $\delta$-folio of $G$ with respect to roots $R$. Moreover, if a vertex $v$ is irrelevant to the extended $\delta$-folio of $G$ w.r.t.~roots $R$, then $v$ is irrelevant to the extended $\delta$-folio of $G$ w.r.t.~roots $Q$.
\end{itemize}
\end{observation}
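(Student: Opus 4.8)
The plan is to prove the two parts separately, in both cases reducing the statement to a purely combinatorial relationship between folios, exactly along the lines of Proposition~2.3 of~\cite{DBLP:conf/stoc/GroheKMW11}.

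For part (a), recall that the extended $\delta$-folio of $G$ w.r.t.~$R$ is the function $X\mapsto (X,\delta)\text{-folio of }G$ for $X\in\allgraphs(R)$, where the $(X,\delta)$-folio is the $\delta$-folio of $G\cup X$ with roots $R$. The plan is to show that, for every fixed $X$, whether a graph $H$ with $|E(H)|+\isolated(H)\le\delta$ is a topological minor of $G\cup X$ is determined by $X$ and the $(\delta+|R|)$-folio of $G$ alone. The claimed criterion is: $H$ is a topological minor of $G\cup X$ if and only if there is a graph $H^\star$ in the $(\delta+|R|)$-folio of $G$ such that $H$ is a topological minor of $H^\star\cup X$ (where $H^\star\cup X$ is formed so that every vertex of $R$ is present, with roots $R$). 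The ``if'' direction is easy: $H^\star$ has a realization in $G$ whose map fixes the root vertices of $H^\star$ (they lie in $R$), so $H^\star\cup X$ is a topological minor of $G\cup X$, and rooted topological-minor containment is transitive. For the ``only if'' direction, I would take a realization $G'$ of $H$ in $G\cup X$ witnessed by $(\phi,\varphi)$, delete from $G'$ the edges of $X$ that it uses, and, in the resulting subgraph of $G$, suppress every degree-at-most-$2$ vertex that is neither a root of $G$ nor in $\phi(V(H))$; the suppressed graph realizes a topological minor $H^\star$ of $G$ whose roots we take to be the roots of $G$ on which it is incident.

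The step I expect to be the main obstacle is the detail bound for $H^\star$, i.e.\ verifying $|E(H^\star)|+\isolated(H^\star)\le\delta+|R|$. The point is that the only ``junction'' vertices of $H^\star$ beyond the images $\phi(V(H))$ are root vertices of $G$, and since the paths $\varphi(e)$ are internally vertex-disjoint each such root lies in the interior of at most one of them; a careful accounting shows that each root of $G$ contributes either one extra edge or one extra isolated vertex to $H^\star$ (but not both), while the images $\phi(V(H))$ together with the isolated vertices of $H$ account for at most $|E(H)|+\isolated(H)$, yielding the bound. Granting the criterion, the $(X,\delta)$-folio of $G$ is a fixed function of $X$ and the $(\delta+|R|)$-folio of $G$ for each $X$, so the whole extended $\delta$-folio of $G$ is a fixed function of the $(\delta+|R|)$-folio of $G$ and of $R$ — in particular it does not depend on $G$ beyond its $(\delta+|R|)$-folio. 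Hence if $v\notin R$ is irrelevant to the $(\delta+|R|)$-folio of $G$ w.r.t.~$R$, then $G$ and $G\setminus v$, which share the root set $R$, feed the same input into this function and therefore have the same extended $\delta$-folio.

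For part (b) the required combinatorial fact is much simpler: for a fixed rooted graph $G'$ and $Q\subseteq R\subseteq V(G')$, the $\delta$-folio of $G'$ w.r.t.~roots $Q$ is exactly the set of those graphs in the $\delta$-folio of $G'$ w.r.t.~roots $R$ all of whose root labels lie in $\rho_{G'}(Q)$. Indeed, the root set of $G'$ only constrains where the roots of a topological minor may be mapped, and the constraint imposed by $Q$ is weaker than the one imposed by $R\supseteq Q$; conversely, since $\rho_{G'}$ is injective, a vertex of $G'$ whose label lies in $\rho_{G'}(Q)$ must belong to $Q$, so an $R$-respecting realization whose root labels all lie in $\rho_{G'}(Q)$ is automatically $Q$-respecting. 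I would apply this with $G'=G\cup Y$ for $Y\in\allgraphs(Q)$, observing that $G\cup Y$ is, as an unrooted graph, equal to $G\cup X_0$ where $X_0\in\allgraphs(R)$ is the graph on vertex set $R$ with the same edge set as $Y$ (so $R\setminus Q$ is isolated in $X_0$); therefore the $(Y,\delta)$-folio of $G$ w.r.t.~$Q$ is obtained by the above label-filter from the $(X_0,\delta)$-folio of $G$ w.r.t.~$R$, which is the value at $X_0$ of the extended $\delta$-folio of $G$ w.r.t.~$R$. Since $X_0$ is determined by $Y$ and the filtering is a fixed operation, the extended $\delta$-folio of $G$ w.r.t.~$Q$ is a fixed function of the extended $\delta$-folio of $G$ w.r.t.~$R$; as in part (a), this function does not depend on $G$, so irrelevance of $v$ (with $v\notin R\supseteq Q$) to the extended $\delta$-folio w.r.t.~$R$ transfers to irrelevance w.r.t.~$Q$.
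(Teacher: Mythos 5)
Part (b) of your proposal is correct, and your detail accounting for part (a) is essentially right; the paper itself offers no proof here and simply defers to Proposition~2.3 of Grohe et al., so there is nothing to compare against line by line. The genuine gap is in the ``if'' direction of your part (a) criterion, and in fact the criterion as you state it is false. A realization of $H^\star$ in $G$ may route a subdivision path through a root $r\in R\setminus V(H^\star)$ as an \emph{internal} vertex --- this is permitted, since the definition only forbids vertices of $\phi(V(H^\star))$ from being internal --- and then attaching the edges of $X$ at $r$ does not yield a realization of $H^\star\cup X$ in $G\cup X$: the vertex $r$ would have to be simultaneously a branch vertex of $H^\star\cup X$ and an internal vertex of a path. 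Concretely, let $R=\{r_1,r_2,r_3\}$, let $G$ be the path $r_1-a-r_2-b-r_3$, let $X$ consist of the single edge $\{r_1,r_2\}$, and let $H$ be the path $r_3-r_1-r_2$ with all three vertices as roots. The rooted edge $H^\star=(\{r_1,r_3\},\{\{r_1,r_3\}\})$ lies in the $(\delta+|R|)$-folio of $G$ (realized by the whole path, with $r_2$ internal), and $H^\star\cup X=H$, so your criterion declares $H$ to be in the $(X,\delta)$-folio of $G$; but every $r_1$--$r_3$ path in $G\cup X$ passes through $r_2$, which must be a branch vertex of $H$, so $H$ is not a rooted topological minor of $G\cup X$.

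The repair, which is what makes the observation true, is to insist that $H^\star$ contain \emph{every} vertex of $R$ as a root vertex (possibly isolated). Then any realization of $H^\star$ in a graph rooted at $R$ maps each $r\in R$ to itself as a branch vertex, so no root can be internal to a subdivision path, your map ``add the edges of $X$'' really does witness $H^\star\cup X\preceq_{tm} G\cup X$, and transitivity finishes the ``if'' direction. On the ``only if'' side you must correspondingly add the roots untouched by the realization to $H^\star$ as isolated root vertices; your per-root accounting still absorbs this, since each root contributes at most one to $|E(H^\star)|+\isolated(H^\star)$ in total --- either as a subdivision point creating one extra edge, or as an isolated vertex (whether created by deleting $X$-edges or added because it was untouched), but never both --- so the bound $\delta+|R|$ survives.
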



\begin{definition}
Let $G$ be a rooted graph and $\delta,k\in {\mathbb N}$. 
We say that a vertex $v\in V(G)$ is a $(\delta,k)$-irrelevant for $G$, if for a graph $X$ on $R(G)$ and a vertex subset $S\subseteq V(G)$ of size at most $k$, the $\delta$-folio of $G'=(G\cup X)\setminus S$ (where $R(G')=R(G)\setminus S$), is equal to the the $\delta$-folio of $G' \setminus v$. 
In other words, a vertex $v$ is $(\delta,k)$-irrelevant if for any vertex subset $S\subseteq V(G)$ of size at most $k$, the extended $\delta$-folio of $G\setminus S$ is same as the extended $\delta$-folio of $(G\setminus S)\setminus v$. 
\end{definition}

%
%

In what follows, we define a generalization of \TMC\ where the pattern $H$ is not specified.

\defparproblem{\FindFolio}{A rooted undirected graph $G$, and a non-negative integer $\delta$.}{$\delta$}{What is the extended $\delta$-folio of $G$?}

\smallskip

The special case of \FindFolio\ where the input graph $G$ is planar is called \pFindFolio.
In addition, we define the problem \fFindFolio\ similarly to \FindFolio, but where the input, along with an instance $(G,\delta)$ of \FindFolio, also consists of $t,w,s\in\mathbb{N}$, a subset $A\subseteq V(G)$ of size at most $t$, a flat wall $W$ of size $(w\times w)$ in $G\setminus A$, a separation $(A',B')$ of $G\setminus A$ such that $V(A')\cap V(B') \subseteq V(D)$, and $C,G_0,G_1,\ldots,G_k$ and $\tilde{G}$ witnessing that $B'+E(C)$ is $C$-flat as specified by \autoref{def:KWT} and with the properties in  \autoref{obs:cflatmore} and \autoref{lem:largeWall}, where $V(C)\subseteq V(D)$ for the outer-boundary $D$ of $W$, and the order in which vertices appear on $C$ is the same as the order in which they appear on $D$. Furthermore, we also demand that for all $i\in[k]$, $\tw(V(G_i))\leq s$.

\subsection{Main Theorems}

Finally, let us state the main theorems proved in this paper, which will lead us to the proofs of Theorems \ref{thm:mainIntro}, \ref{thm:mainIntroGenus}, \ref{thm:maingen} and \ref{thm:main}. Throughout the paper, we use $h$ to denote the function mentioned for 
{\sc Disjoint Paths} \cite[result (3.1)]{RobertsonS12} (see \autoref{lem:RoSe}). For planar graphs it is known that $h(k)=2^{ck}$ for a constant $c$ (see \autoref{prop:disPathIrrelevant}). Moreover, we assume that $h(k)\geq k$.

\paragraph{Results for \TMH.}  
As explained before, \autoref{thm:mainIntro} first 
applies \autoref{lem:flatWallCombinedintro} and obtains one of the following structure: $(i)$ a tree decomposition of $G$ of width   bounded by a function of $k$ and $h^{\star}$, $(ii)$ a large clique minor in the input graph, and $(iii)$ a large flat wall in $G$.    As \TMH\ is expressible in MSO (Monadic Second Order logic), Courcelle's theorem, we get the following theorem. 

\begin{theorem}[\cite{ArnborgLS91,Courcelle90,GolovachST19}]
\label{thm:golovach}
There is an algorithm for \TMH\ running in time $f(k,h^{\star},\tw)n$, where $\tw$ is the tree-width of the input graph. 
\end{theorem}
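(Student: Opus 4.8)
The plan is to reduce the problem to model checking a monadic second-order (MSO) sentence on a graph of bounded treewidth. We may assume, up to isomorphism, that $\mathcal{F}$ contains no two isomorphic graphs; since every graph in $\mathcal{F}$ has at most $h^\star$ vertices, this gives $|\mathcal{F}|\le 2^{\OO((h^\star)^2)}$, so $\mathcal{F}$ itself contributes only an $h^\star$-bounded amount to what follows. I would then construct an MSO sentence $\varphi$, of length bounded by a computable function of $k$ and $h^\star$, such that $G\models\varphi$ if and only if $(G,\mathcal{F},k)$ is a \yes-instance of \TMH, and finish by invoking the algorithmic metatheorem for MSO on graphs of bounded treewidth~\cite{ArnborgLS91,Courcelle90}.

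To build $\varphi$, first I would encode, for each fixed $H\in\mathcal{F}$ with $V(H)=\{x_1,\dots,x_p\}$ ($p\le h^\star$), a formula $\psi_H(S)$ with a single free vertex-set variable $S$ expressing ``$H$ is a topological minor of $G\setminus S$''. Since $p$ and $|E(H)|$ are bounded, this needs only a bounded number of quantifiers: existentially quantify the $p$ branch vertices $b_1,\dots,b_p$ (pairwise distinct and disjoint from $S$) and, for each $e=\{x_i,x_j\}\in E(H)$, a set $P_e$ describing the corresponding subdivided path; the quantifier-free part then states, in MSO, that each $P_e$ is (the vertex/edge set of) a path of $G\setminus S$ with endpoints $b_i$ and $b_j$, that distinct sets $P_e$ are internally vertex-disjoint and pairwise meet only in branch vertices, and that no $b_\ell$ is an internal vertex of any $P_e$; isolated vertices of $H$ merely contribute extra distinct $b_\ell$'s outside $S$. (That a subdivided path may be assumed induced, so that vertex-set quantification suffices, follows from repeatedly shortcutting chords; alternatively one quantifies over edge sets and uses $\mathrm{MSO}_2$, which the metatheorem also supports.) Then I would set
\[
\varphi \;=\; \exists s_1\cdots\exists s_k\;\bigwedge_{H\in\mathcal{F}}\neg\,\psi_H(\{s_1,\dots,s_k\}),
\]
which is equivalent to $(G,\mathcal{F},k)$ being a \yes-instance (the tuple $s_1,\dots,s_k$ ranges over all multisets, hence over all vertex sets of size at most $k$), and whose length is bounded by a function of $k$ and $h^\star$.

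Finally, I would apply the MSO model-checking algorithm of~\cite{ArnborgLS91,Courcelle90}: given $G$ together with a tree decomposition of width $\tw$ (either supplied, or computed via \autoref{prop:twExact} in time $2^{\OO(\tw^3\log\tw)}n$), it decides $G\models\varphi$ in time $g(|\varphi|,\tw)\cdot n=f(k,h^\star,\tw)\cdot n$ for a computable $f$, which yields the claimed bound. Correctness is immediate from the construction of $\varphi$.

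I do not expect any genuinely hard step here: the whole content is the faithful MSO encoding of the topological-minor relation, which is routine precisely because $|V(H)|\le h^\star$ bounds both the number of branch vertices and the number of subdivided edge-paths, and each individual local condition (path-ness, internal disjointness, branch vertices not appearing in path interiors, the size constraint on $S$) is separately MSO-expressible. Indeed, this theorem is, up to the explicit statement, already observed in~\cite{GolovachST19}.
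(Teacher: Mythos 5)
Your proposal is correct and is exactly the route the paper takes: the paper gives no proof of this theorem, simply observing that \TMH\ is MSO-expressible (with the deletion set handled by $k$ existentially quantified vertex variables and each forbidden topological minor by a bounded-size formula over branch vertices and subdivided paths) and then invoking the metatheorem of~\cite{ArnborgLS91,Courcelle90}, as in~\cite{GolovachST19}. Your additional remarks on handling non-induced paths (shortcutting chords, or using $\mathrm{MSO}_2$) are sound and only make explicit what the paper leaves implicit.
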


When the input graph contains a large clique minor, we use the following theorem to find a $(\delta,k)$-irrelevant vertex, where $\delta\leq (h^{\star})^2$.


\begin{restatable}{theorem}{dellargeclique}
\label{thm:dellargeclique}
There are two computable functions $f,g \colon {\mathbb N}\times {\mathbb N} \times {\mathbb N}\rightarrow {\mathbb N}$  and an algorithm  that given a rooted graph $G$, two integers $\delta,k\in {\mathbb N}$, and mutually vertex-disjoint connected subgraphs $G_1,\ldots,G_t$ such that for $1\leq i< j\leq t$ there is an edge of $G$ between $G_i$ and $G_j$ (i.e., $(G_1,\ldots,G_t)$ is a minor model of $K_t$ in $G$), where $t> f(\vert R(G)\vert,\delta,k)$, runs in time $g(\vert R(G)\vert,\delta,k)\cdot n^3$, and outputs a $(\delta,k)$-irrelevant vertex $v$ in $G$. 
\end{restatable}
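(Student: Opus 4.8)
The plan is to convert the Robertson--Seymour irrelevant-vertex rule for the minor folio (\autoref{prop:RS13:6.1}) into a \emph{marking procedure} that, by induction on $k$, flags a bounded number of branch sets of the given clique model as ``possibly relevant'', so that any unflagged branch set consists entirely of $(\delta,k)$-irrelevant vertices. Since the clique model has $t>f(|R(G)|,\delta,k)$ branch sets while the total number of flagged branch sets will be bounded purely in terms of $|R(G)|$, $\delta$ and $k$, at least one branch set survives, and we output an arbitrary vertex of it.

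The first step is to \emph{enrich the roots}: for every graph $X\in\allgraphs(R(G))$ and every $H$ in the $\delta$-folio of $G\cup X$, fix (if it exists) a realization $G_H$ of $H$ in $G\cup X$ and add to the root set all branch vertices of $G_H$ together with their $G_H$-neighbours. Since the $\delta$-folio has size bounded in $\delta$ and each member has at most $2\delta$ vertices, the resulting root set $R$ has $|R|$ bounded by a function of $|R(G)|$ and $\delta$. Root enrichment is what makes the folio ``decompose'': for any separation $(A,B)$ with $V(A)\cap V(B)\subseteq R$, the extended $\delta$-folio of $G\setminus S$ is determined by the extended $\delta$-folios of $A\setminus S$ and $B\setminus S$ (via the replacement lemma), and moreover one may replace ``$H$ is a topological minor of $G$'' by ``$H'$ is a rooted minor of $G$'' for a rooted graph $H'$ of detail at most $4\delta$ all of whose vertices are roots; this is how \autoref{prop:RS13:6.1} (stated for the $\delta$-\mfolio) will be applied.

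Next I would define, by induction on $j\le k$, a procedure $\mathsf{Mark}_j$ acting on a rooted graph carrying a clique model with at least $f(|R|,\delta,j)$ branch sets, with the guarantee: it marks at most $M(|R|,\delta,j)$ branch sets, and for every branch set $B_\ell$, every $v\in B_\ell$, every graph $X$ on the original roots, and every $S\subseteq V(G)$ with $|S|\le j$, if the $\delta$-folio of $(G\cup X)\setminus S$ differs from that of $(G\cup X)\setminus(S\cup\{v\})$ then $B_\ell$ is marked. For $j=0$ this is precisely \autoref{prop:RS13:6.1}: it yields a minimum-order separation $(A,B)$ with $R\subseteq V(A)$, a branch set $V(G_i)\subseteq V(B)\setminus V(A)$, the set $R'=V(A)\cap V(B)$ equal to the unique \imsep{R}{N(V(G_i))}, and every interior vertex $(\delta,0)$-irrelevant; so we mark the boundedly many branch sets met by $R'$ together with the distinguished interior one. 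For $j\ge 1$ we first run \autoref{prop:RS13:6.1} to obtain $(A,B)$, $i$, $R'$, and then branch on the location of $S$. If $S\cap V(A)\ne\emptyset$, guess $S\cap R'$ (at most $|R'|^{\le j}$ choices), set $B'=B\setminus(S\cap R')$ with root set $R'\setminus S$ — which still carries a huge clique model since the vertex of $S$ in $V(A)$ is either in $R'$ (hence removed) or strictly on the $A$-side (hence not in $B$), so at most $j-1$ vertices of $S$ survive in $B'$ — recurse with $\mathsf{Mark}_{j-1}$ on $B'$, and pull the marks back to $G$; soundness follows because the folio of $G\setminus S$ is a function of those of $A\setminus S$ and $B\setminus S$, so removing an interior $v$ can only affect the $B$-side. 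If instead $S\subseteq V(B)\setminus V(A)$, then $(A,B\setminus S)$ is a separation of $G\setminus S$ with the same boundary $R'$; either it still satisfies the hypotheses of \autoref{prop:RS13:6.1} (so every interior $v$ remains $(\delta,0)$-irrelevant in $G\setminus S$ and nothing new is marked), or $S$ destroys the minimality or importance of $R'$, in which case there is a minimal \sep{R}{N(V(G_i))} of size at most $|R'|+j$ meeting $S$; enumerating all \imseps{R}{N(V(G_i))} that are at least as good as such a separator (at most $4^{|R'|+j}$ of them by~\cite{ChenLL09}, each meeting boundedly many branch sets) puts us back in the first case with such an $R''$ in place of $R'$.

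Running $\mathsf{Mark}_k$ marks $M(|R(G)|,\delta,k)$ branch sets; taking $f$ large enough to exceed this bound (plus the branch sets consumed by the top-level and recursive applications of \autoref{prop:RS13:6.1}), some branch set is unmarked, and each of its vertices is $(\delta,k)$-irrelevant by the guarantee. The running time is a bounded (function of $|R(G)|,\delta,k$) number of calls to \autoref{prop:RS13:6.1}, to the enumeration of important separators, and to folio-/minor-testing subroutines, each costing $g'(|R(G)|,\delta)\cdot n^3$ via the algorithm of Grohe et al.~\cite{DBLP:conf/stoc/GroheKMW11}; as the recursion has depth $k$ and branching bounded in $|R|,\delta,k$, the total is $g(|R(G)|,\delta,k)\cdot n^3$. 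The main obstacle is the second case of the inductive step: proving that when $(A,B\setminus S)$ fails the hypotheses of \autoref{prop:RS13:6.1}, the damage done by $S$ is always captured by a minimal \sep{R}{N(V(G_i))} of size only $|R'|+|S|$, and that the better important separators collectively meet only boundedly many branch sets, so that marking does not blow up through the recursion. Equally delicate is making precise that the extended $\delta$-folio of $G\setminus S$ is a function of the two sides of each relevant separation — this is exactly what the root enrichment and the reduction of topological-minor realizations to bounded-detail rooted-minor models buy us, and it is what lets the marking be based on the $B$-side while covering simultaneously all deletion sets $S$ with the given ``behaviour''.
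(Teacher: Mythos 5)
Your proposal follows essentially the same route as the paper's proof: root enrichment via fixed realizations, reduction of topological minors to bounded-detail rooted minor models so that Proposition~6.1 of Robertson--Seymour applies, a recursive marking procedure that branches on the intersection of $S$ with the relevant important separators, and enumeration of important separators of size at most $|R'|+k$ to handle the case $S\subseteq V(B)\setminus V(A)$. The only point where the paper is more careful is that it precomputes the unique minimum important separators $Z_i$ for $2\eta+k+1$ distinct branch sets (rather than a single separation $(A,B)$), so that for every deletion set $S$ at least one candidate separation both avoids $S$ and satisfies conditions (i)--(iv) of Proposition~6.1 -- a multiplicity your sketch would need to make the Case-3 argument go through.
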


When the input graph contains a large flat wall, the following theorem gives us a  $(\delta,k)$-irrelevant vertex, where $\delta\leq (h^{\star})^2$.


\begin{restatable}{theorem}{dellargewall}
\label{thm:finalFlatWall}
There is a computable function $\widehat{g}$ and an algorithm that, given $k\in\mathbb{N}$ and an instance $(G,\delta,t,w',s')$ of \fFindFolio\ such that  $|R(G)|\leq \boundary$ and $w'\geq (\widehat{g}(\delta^{\star},t))^{k+2}$, finds a $(\delta,k)$-irrelevant vertex in time 
$ 2^{2^{\OO(k((t+\delta^{\star})^2+r)\log (t+\delta^{\star}+r))}}(s')^{\OO(s')}n$
where  $\delta^{\star}=\delta+\boundary$, $\widehat{g}(\delta^{\star},t)=(t+\delta^{\star}+r)^{\OO((t+\delta^{\star})^2+r)}$ and $r=h(\delta^{\star}+t)$.
\end{restatable}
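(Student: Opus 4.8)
The plan is to reduce the computation of a $(\delta,k)$-irrelevant vertex to repeated application of an ``irrelevant subwall'' subroutine for \fFindFoliostar, exactly along the lines sketched in Section~\ref{subsec:kirrelevant}. First I would invoke Observation~\ref{obs:flatfolioonly}(a) to replace the extended $\delta$-folio by the (non-extended) $\delta^\star$-folio with $\delta^\star=\delta+\boundary$; any vertex irrelevant to the latter is irrelevant to the former, and this relationship is preserved under deleting vertices outside the wall. Then I would assume as a black box the main technical engine --- call it algorithm ${\cal A}$ --- which, given an instance of \fFindFoliostar\ with a $\tw$-nice flat wall $W$ of size $N\times N$ inside $G\setminus A$, outputs a \emph{subwall} $W'$ of size at least $(N/\widehat{g}(\delta^\star,t))\times(N/\widehat{g}(\delta^\star,t))$ such that the $\delta^\star$-folio of $G\setminus Y$ equals that of $(G\setminus Y)\setminus W'$ for \emph{every} $Y\subseteq V(G)$ disjoint from $W$. (This is the content of Sections~\ref{sec:disjirr}--\ref{sec:finallyend} as previewed in Section~\ref{subsec:irrelevant}; here I only use its interface.) Its running time bound feeds directly into the claimed running time.

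Next I would iterate. Set $w'=(\widehat{g}(\delta^\star,t))^{k+2}$, start from $W_0=W$ of size $(w'\times w')$, and apply ${\cal A}$ to get $W_1$ of size $(\widehat{g}^{\,k+1}\times\widehat{g}^{\,k+1})$, then re-apply to $W_1$ to get $W_2$, and so on until we have a chain $W_0\supseteq W_1\supseteq\cdots\supseteq W_{k+1}$, where $W_i$ has size roughly $(\widehat{g}^{\,k+2-i}\times\widehat{g}^{\,k+2-i})$; in particular $W_{k+1}$ is nonempty. Each application requires that $W_{i-1}$ still be (the wall part of) a valid \fFindFoliostar\ instance; since $W_i$ is a subwall of $W_{i-1}$ and the flatness tuple restricts to subwalls (cf.\ the restriction argument in the proof of Lemma~\ref{lem:flatWallCombined}, Case~2), this is maintained. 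I then pick any vertex $v\in V(W_{k+1})$ and claim it is $(\delta,k)$-irrelevant.

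The correctness argument is the pigeonhole argument reproduced verbatim in Section~\ref{subsec:kirrelevant}: given any $S\subseteq V(G)$ with $|S|\le k$, there is an index $i\in\{0,\ldots,k\}$ with $(W_i\setminus W_{i+1})\cap S=\emptyset$; writing $S_{out}=S\setminus W_i$ we have $S\cap W_i\subseteq W_{i+1}$, so $W_i$ is still a flat wall in $G\setminus S_{out}$ (with the induced flatness tuple) and $W_{i-1}\supseteq W_i$ together with the ``for every $Y$ disjoint from the outer wall'' guarantee of ${\cal A}$ gives that the $\delta^\star$-folio of $G\setminus S_{out}$ equals that of $(G\setminus S_{out})\setminus W_{i+1}$. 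Since the $\delta^\star$-folio of $G\setminus S$ is contained in that of $G\setminus S_{out}$, and the $\delta^\star$-folio of $(G\setminus S_{out})\setminus W_{i+1}$ contains that of $(G\setminus S)\setminus v$, all four coincide, so $v$ is irrelevant to the $\delta^\star$-folio of $G\setminus S$; by Observation~\ref{obs:flatfolioonly}(a) it is irrelevant to the extended $\delta$-folio of $G\setminus S$, i.e.\ $v$ is $(\delta,k)$-irrelevant.

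The main obstacle is \emph{not} in this outer loop --- that part is a clean pigeonhole --- but in establishing the existence and efficient computation of algorithm ${\cal A}$, i.e.\ finding a \emph{large} irrelevant subwall that is irrelevant uniformly over all $Y$ disjoint from the wall. That is where the workspace/noose-grid machinery, the notion of an $(\ell,\eta)$-untangled solution, the identification of a frame index $\ell$ via counting isomorphism types of partial solutions across $\mu$ candidate frames, the bounded-treewidth dynamic programming inside the noose $M$, and the ``patching'' argument that removes all uses of the top horizontal nooses (hence freeing a subwall) all enter --- crucially invoking the Unique Linkage Theorem (Lemma~\ref{lem:RoSe}) as a black box to bound how many concentric layers are needed. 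The size requirement $w'\ge(\widehat{g}(\delta^\star,t))^{k+2}$ and the triple-exponential-in-$k$ running time both trace back to the parameters of that subroutine (one factor $\widehat{g}$ per iteration, and the DP over the bounded-treewidth workspace contributing the $(s')^{\OO(s')}n$ term and the $2^{2^{\OO(\cdots)}}$ term). I would present the outer loop in full here and defer the construction of ${\cal A}$ to the dedicated sections.
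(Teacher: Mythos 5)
Your proposal is correct and follows essentially the same route as the paper: iterate the irrelevant-subwall subroutine (the paper's Theorem~\ref{thm:flatdetail}) $k{+}1$ times to build a nested chain of flat walls, then use the pigeonhole argument over the $k{+}2$ layers to show any vertex of the innermost wall is $(\delta,k)$-irrelevant, deferring all the hard work to that subroutine. The only cosmetic difference is the interface you assume for the black box (folio equality of $G\setminus Y$ and $(G\setminus Y)\setminus W'$ for all $Y$ disjoint from $W$, versus the paper's rerouting of a solution representation to one identical outside the input wall and avoiding the inner wall); these are equivalent, and the paper's own correctness argument is phrased in the latter terms.
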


\autoref{thm:mainIntro} follows immediately from  \autoref{lem:flatWallCombinedintro}, \autoref{thm:golovach}, 
\autoref{thm:dellargeclique}, and \autoref{thm:finalFlatWall}. 
Next, using \autoref{thm:finalFlatWall}, we prove the following result for \TMH\ on planar graphs.

\begin{theorem}
\label{thm:mainTMHplanar}
\TMH{}   is solvable in $\OO({2^{2^{k \cdot 2^{{\sf poly(h^\star)}}}}}n^2)$-time on planar graphs. 
\end{theorem}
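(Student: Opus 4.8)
The plan is a win/win recursion on the planar input $G$: as long as the treewidth is large enough to expose a huge flat wall, delete a $(\delta,k)$-irrelevant vertex via \autoref{thm:finalFlatWall}; once the treewidth has dropped below the threshold, finish by dynamic programming. Two features of planar graphs keep the parameters small: there is no $K_5$-minor, so the large-clique-minor branch of \autoref{thm:dellargeclique} never occurs, and the cycle-insulation function for \DisjointPathsLong\ is single-exponential on planar graphs, $h(k^\star)\le 2^{ck^\star}$ (\autoref{prop:disPathIrrelevant}). First I would fix $\delta=\OO((h^\star)^2)$ to be an integer with $|E(H)|+\isolated(H)\le\delta$ for every $H\in{\cal F}$; then $H$ is a topological minor of a graph $G'$ iff $H$ lies in the $\delta$-folio of $G'$, so $(G,{\cal F},k)$ is a \yes-instance iff there is $S\subseteq V(G)$ with $|S|\le k$ such that the $\delta$-folio of $G\setminus S$ omits every member of ${\cal F}$. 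With $R(G)=\emptyset$, this property survives deletion of a $(\delta,k)$-irrelevant vertex $v$: for every such $S$ the $\delta$-folios of $G\setminus S$ and of $(G\setminus v)\setminus(S\setminus\{v\})$ coincide, so $(G,{\cal F},k)$ and $(G\setminus v,{\cal F},k)$ are equivalent --- a witness $S$ avoiding $v$ carries over directly, and one containing $v$ is replaced by $S\setminus\{v\}$.

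For the parameter setting, $R(G)=\emptyset$ gives $\delta^\star=\OO((h^\star)^2)$, and the planar flat wall we use carries no apex vertices, so $t=0$; hence $r=h(\delta^\star+t)\le 2^{\OO((h^\star)^2)}$, and $\widehat g(\delta^\star,t)=(t+\delta^\star+r)^{\OO((t+\delta^\star)^2+r)}=2^{2^{\OO((h^\star)^2)}}$, so the wall-size threshold demanded by \autoref{thm:finalFlatWall} is $w':=\widehat g(\delta^\star,t)^{k+2}=2^{k\cdot 2^{\OO((h^\star)^2)}}$; put $g:=\Theta(w')$. The loop now repeats the following on the current graph $G$. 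Run \autoref{prop:gridTWplanaralgof} in time $\OO(g^2n)=2^{k\cdot 2^{\OO((h^\star)^2)}}n$. If it returns a tree decomposition of width $\OO(g)$, solve \TMH\ by the folio-based dynamic programming of Case~1 (see also \autoref{thm:golovach}): on a graph of treewidth $\tau=2^{k\cdot 2^{\OO((h^\star)^2)}}$, tracking in each bag the at most $k$ deleted vertices, this runs in time $2^{2^{\OO(\delta^2+\delta\log\tau)}}\cdot\tau^{\OO(k)}\cdot n=2^{2^{k\cdot 2^{\mathsf{poly}(h^\star)}}}n$, and we output its answer. Otherwise it returns a $(g\times g)$-wall, which --- since $G$ is planar --- its embedding turns into a flat wall of size at least $(w'\times w')$ with empty apex set and no protrusions, i.e.\ an instance of \fFindFolio\ with $t=0$, $s'=0$ and $|R(G)|=0$ (cf.\ \autoref{subsec:representative}); then \autoref{thm:finalFlatWall} produces a $(\delta,k)$-irrelevant vertex $v$ in time $2^{2^{\OO(k\cdot((t+\delta^\star)^2+r)\cdot\log(t+\delta^\star+r))}}\cdot n=2^{2^{k\cdot 2^{\OO((h^\star)^2)}}}n$, using that $(t+\delta^\star)^2+r=2^{\OO((h^\star)^2)}$ and $\log(t+\delta^\star+r)=\OO((h^\star)^2)$. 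Replace $G$ by $G\setminus v$ and iterate.

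Each iteration deletes a vertex and keeps the graph planar, so after at most $n$ iterations the treewidth drops below the threshold, the dynamic-programming branch is reached, and the algorithm halts with the correct answer. Summing the per-iteration cost $2^{2^{k\cdot 2^{\mathsf{poly}(h^\star)}}}n$ over the at most $n$ iterations, and adding the single final DP call of the same cost, gives total running time $\OO(2^{2^{k\cdot 2^{\mathsf{poly}(h^\star)}}}n^2)$, as claimed. The genuine difficulty is entirely encapsulated in \autoref{thm:finalFlatWall}; what remains here is routine but needs attention: producing the planar flatness tuple (with $A=\emptyset$ and no protrusions) from a large grid minor, and verifying that the chain of substitutions $\delta\mapsto\delta^\star\mapsto r\mapsto\widehat g\mapsto w'\mapsto g$ never leaves the regime $2^{k\cdot 2^{\mathsf{poly}(h^\star)}}$.
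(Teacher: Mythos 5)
Your proposal is correct and follows essentially the same route as the paper's own proof: the planar grid/treewidth dichotomy of \autoref{prop:gridTWplanaralgof}, the substitutions $t=0$, $s'=0$, $R(G)=\emptyset$, $\delta=\OO((h^\star)^2)$ together with the single-exponential planar bound on $h$ from \autoref{prop:disPathIrrelevant} to get $\widehat g(\delta^\star,t)^{k+2}=2^{k\cdot 2^{\mathsf{poly}(h^\star)}}$, repeated deletion of $(\delta,k)$-irrelevant vertices via \autoref{thm:finalFlatWall}, and a final treewidth DP, summing to $\OO(2^{2^{k\cdot 2^{\mathsf{poly}(h^\star)}}}n^2)$. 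The only additions beyond the paper's sketch are your explicit (and correct) justification that deleting a $(\delta,k)$-irrelevant vertex preserves the answer to \TMH{}, and a slightly more detailed DP bound, neither of which changes the argument.
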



\begin{proof}[Proof sketch]


If the tree-width of $G$ is ``unbounded'', then we apply \autoref{thm:finalFlatWall} to get a $(\delta,k)$-irrelevant vertex and delete it, and we apply the same procedure until we get a bounded tree-width graph.  
For our problem,  it is enough to find a $(\delta,k)$-irrelevant  vertex for $G$  where $\delta=(h^{\star})^2$ and $R(G)=\emptyset$. 
Notice that $G$ is a plane graph and  we set $\delta=(h^{\star})^2$, $t=0$, $s'=0$, and $\boundary=0$. 
Therefore we have that $\widehat{g}(\delta,t)=2^{\OO(r^2)}$, where $r=h((h^{\star})^2)=2^{c(h^{\star})^2}$, because $G$ is a plane graph 
(see \autoref{prop:disPathIrrelevant} later in the paper), and $c$ is a constant. Hence, $\widehat{g}(\delta,t)=2^{2^{p(h^{\star})}}$ for a polynomial function $p$ in $h^{\star}$. Then, by \autoref{prop:gridTWplanaralgof}, there is a constant $c$ such that if $\tw(G)\geq c \cdot  (2^{2^{\cdot p(h^{\star})}})^{k+2}$, then the algorithm of \autoref{prop:gridTWplanaralgof} outputs $(w'\times w')$-wall, where $w'\geq \widehat{g}(\delta,t)^{k+2}$. Now we apply \autoref{thm:finalFlatWall} to find an irrelevant vertex. The running time of finding one $(\delta,k)$-irrelevant vertex is upper bounded 
by $2^{2^{k \cdot 2^{{\sf poly(h^\star)}}}}n$. 

When $\tw(G)\leq c\cdot (2^{2^{ p(h^{\star})}})^{k+2} \leq c \cdot 2^{(k+2)\cdot 2^{ p(h^{\star})}}$, by standard dynamic programming we solve the problem in time 
$2^{2^{k\cdot 2^{{\sf poly}(h^{\star})}}}n$. Thus, the overall running time follows. 
\end{proof}

To generlize  \autoref{thm:mainTMHplanar} to  graphs of bounded genus, we need to use the known single exponential bound on $h$~\cite{mazoit2013single}. That is, the proof for graphs of bounded genus is identical to  \autoref{thm:mainTMHplanar}, except that we need to use the known single exponential bound on $h$ for graphs of Euler genus $g$~\cite{mazoit2013single}. This leads to \autoref{thm:mainIntroGenus}.



%

\paragraph{The Case of Planar Graphs for \TMC.} In the context of \TMC\ (or \pTMC), for an instance $(G,H,k)$, we say that a vertex $v$ is {\em irrelevant} if $(G,H,k)$ is a \yes-instance if and only if $(G\setminus v,H,k)$ is a \yes-instance.  In order to prove \autoref{thm:main} for $g=0$ (i.e., planar graphs), it is sufficient to prove the following theorem.

\begin{theorem}\label{thm:findIrrelevant}
There exist a constant $c\in\mathbb{N}$ and an algorithm that, given an instance $(G,H,h^{\star})$ of \pTMC\ such that $\tw(G)\geq 2^{2^{c (h^{\star})^2}}$, finds an irrelevant vertex in time \irrTime.
\end{theorem}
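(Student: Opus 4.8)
The plan is to reduce to the irrelevant-vertex machinery already developed for \fFindFolio{} in the flat-wall case, specialized to planar inputs. First I would observe that it suffices to work with $\delta=(h^\star)^2$ and $R(G)=\emptyset$: since \TMC{} with pattern $H$ on $|V(H)|\le h^\star$ vertices is entirely determined by whether $H$ lies in the $\delta$-folio of $G$, any vertex that is $(\delta,0)$-irrelevant for \FindFolio{} (equivalently, irrelevant for the extended $\delta$-folio, by \autoref{obs:flatfolioonly}) is automatically irrelevant for \pTMC{}. So the target is to locate a vertex that is irrelevant to the $\delta$-folio of $G$.

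Next I would invoke the grid/treewidth dichotomy for planar graphs. By \autoref{prop:gridTWplanaralgof}, in time $\OO(g^2 n)$ one can, from a planar graph of treewidth at least $cg$, extract a $(g\times g)$-wall $W$ in $G$. Because $G$ is planar it has no $K_5$ minor, so the flat-wall theorem degenerates: the wall $W$ is itself flat in $G$ with empty apex set $A=\emptyset$ and no protrusions of large treewidth, so we obtain an instance of \fFindFolio{} with parameters $t=0$, $s'=0$, $\boundary=0$. Now I would set $r=h(\delta^\star)=h((h^\star)^2)=2^{c'(h^\star)^2}$, using the single-exponential bound on the cycle-insulation function $h$ for planar graphs (\autoref{prop:disPathIrrelevant}). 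Then the threshold $\widehat g(\delta^\star,t)=(t+\delta^\star+r)^{\OO((t+\delta^\star)^2+r)}$ from \autoref{thm:finalFlatWall} becomes $2^{2^{\mathsf{poly}(h^\star)}}$, so a wall of size $w'\times w'$ with $w'\ge \widehat g(\delta^\star,0)^{k+2}$ with $k=0$, i.e.\ $w'\ge 2^{2^{\mathsf{poly}(h^\star)}}$, suffices. Choosing the grid parameter $g$ so that $cg\ge 2^{2^{c(h^\star)^2}}$ — which is exactly the treewidth hypothesis $\tw(G)\ge 2^{2^{c(h^\star)^2}}$ in the statement, for a suitable constant $c$ — guarantees that \autoref{prop:gridTWplanaralgof} returns a wall large enough to feed into \autoref{thm:finalFlatWall} (with $k=0$).

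Finally I would apply \autoref{thm:finalFlatWall} with $k=0$: it returns a $(\delta,0)$-irrelevant vertex $v$ of $G$, which by the first paragraph is irrelevant for \pTMC{}. The running time contributed by \autoref{thm:finalFlatWall} with these parameters ($t=0$, $s'=0$, $k=0$, $r=2^{c'(h^\star)^2}$) is $2^{2^{\OO((\delta^\star)^2+r)\log(\delta^\star+r)}}n = 2^{2^{2^{\OO((h^\star)^2)}}}n$, and the wall-extraction step costs $\OO(g^2 n)=2^{2^{\OO((h^\star)^2)}}n$; the dominant term is the triple-exponential one, matching \irrTime. The main obstacle I expect is purely bookkeeping: verifying that the planar case genuinely produces a bona fide \fFindFolio{} instance with $t=0$ and no high-treewidth protrusions (so that \autoref{lem:largeWall} and the extra conditions of \autoref{obs:cflatmore} are met by the trivial flatness tuple $(A',B',C,\tilde G,G_0)$ with $\tilde G=G_0\cup C$ essentially equal to $G$ plus the outer cycle), and tracking the precise constant $c$ so that the treewidth threshold in the hypothesis exactly matches the wall-size requirement of \autoref{thm:finalFlatWall} after substituting the single-exponential planar bound on $h$.
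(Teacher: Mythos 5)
Your proposal is correct and follows essentially the same route as the paper: reduce to finding a $(\delta,0)$-irrelevant vertex for the $\delta$-folio with $\delta=(h^\star)^2$ and $R(G)=\emptyset$, extract a large wall via \autoref{prop:gridTWplanaralgof} (which in a planar graph is automatically a flat wall with $t=s'=\boundary=0$), substitute the single-exponential planar bound $h(k)=2^{ck}$ from \autoref{prop:disPathIrrelevant}, and invoke the flat-wall irrelevant-vertex machinery. The only cosmetic difference is that you call \autoref{thm:finalFlatWall} with $k=0$ while the paper calls \autoref{thm:main1Flat} directly with $w''=1$; since the former is just a $(k+1)$-fold iteration of the latter, these coincide and yield the same \irrTime{} bound.
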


Indeed, having \autoref{thm:findIrrelevant} at hand, it is straightforward to prove \autoref{thm:main} as follows.
First, by \autoref{prop:twExact}, we can test whether $\tw(G)\leq 2^{2^{c (h^{\star})^2}}$. If the answer is positive, we call the algorithm given by \autoref{prop:twDP} to solve the problem in time \mainTime. Otherwise, we call the algorithm given by \autoref{thm:findIrrelevant} to find an irrelevant vertex in time \irrTime, remove the outputted vertex from the graph, and return to the first step. As the first step can be applied at most $n$ times, the total running time is bounded as stated in \autoref{thm:main}. Thus, from now on, we focus only on the proof of \autoref{thm:findIrrelevant} for planar graph. 
A theorem for graphs of bounded genus is identical to  \autoref{thm:findIrrelevant}, except that we need to use the known single exponential bound on $h$ for graphs of Euler genus $g$~\cite{mazoit2013single}. This leads to \autoref{thm:main}. 

\paragraph{The Case of General Graphs for \TMC.} The function $\boundary$ to which we refer in the statements below is defined to be $16\delta^2$ in the proof of \autoref{thm:main1General} in Section \ref{sec:recursive}. In the context of \FindFolio\ (or \pFindFolio), for an instance $(G,\delta)$, we say that a vertex $v$ is {\em irrelevant} if the extended $\delta$-folio of $G$ is equal to the extended $\delta$-folio of $G\setminus v$. Towards the proof of \autoref{thm:maingen}, we will need to prove a stronger result as stated below.

\begin{theorem}\label{thm:main1General}
\FindFolio\ is solvable in time $f(\delta)\cdot n^3$ on instances $(G,\delta)$ such that $|R(G)|\leq \alpha(\delta)$, 
where 
$f(\delta)=2^{2^{\OO(r_1^{3})}} 2^{2^{2^{(r_2 \cdot 2^{\OO( \delta^{4})})} }}$, $r_2=h(2^{c(\delta)^4})$ and $r_1=h(2^{2^{(c\cdot \delta^{4})}r_2})$ for some constant $c$. Moreover, the algorithm outputs the following: for each graph $X$ on $R(G)$ and $H$ in the $\delta$-folio of $G\cup X$, a realization of $H$ as the topological minor in $G\cup X$.  
\end{theorem}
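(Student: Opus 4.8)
The plan is to prove \autoref{thm:main1General} by the \emph{recursive understanding} framework of Grohe, Kawarabayashi, Marx and Wollan~\cite{DBLP:conf/stoc/GroheKMW11}, using \autoref{lem:flatWallCombined} to resolve the ``irreducible'' instances into the three classical cases (bounded treewidth, large clique minor, large flat wall). Concretely, I would design a recursive procedure that, given a rooted graph $G$ with $|R(G)|\le\boundary$, either (i) detects a separation $(G_1,G_2)$ of $G$ of order at most $\boundary$ with $R(G)\subseteq V(G_2)$ and with $G_1$ ``large'' (more than a threshold $N(\delta)$ of vertices outside $V(G_2)$), or (ii) certifies that no such separation exists, i.e.\ $G$ is ``unbreakable'' at the relevant scale. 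This dichotomy is precisely the recursion/base-case split, and detecting (i) or certifying (ii) can be done by the separator-finding routine of the recursive understanding technique~\cite{DBLP:conf/stoc/GroheKMW11}.

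\textbf{The recursive step.} In case (i) I would recurse on $G_1$ with its roots set to the separator $S=V(G_1)\cap V(G_2)$; this is a legal instance because $|S|\le\boundary=16\delta^2$ (this is exactly the reason $\boundary$ is chosen as $16\delta^2$, and the reason the separator order in the recursion is capped at $\boundary$). Having recursively computed the extended $\delta$-folio of $G_1$ together with realizations, I would replace $G_1$ by a \emph{representative} graph $G_1'$ on $S$ having the same extended $\delta$-folio --- such a $G_1'$ of size bounded by a function of $\delta$ exists since there are only finitely many extended $\delta$-folios over at most $\boundary$ roots, and it can be found by brute-force search over small graphs --- and then recurse on the strictly smaller graph $G_1'\cup G_2$. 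Correctness of the replacement rests on compositionality of the extended folio: the extended $\delta$-folio of $G_1'\cup G_2$ equals that of $G_1\cup G_2$ whenever $G_1$ and $G_1'$ share the boundary $S$ and have the same extended $\delta$-folio with respect to $S$ (this is precisely the role of the ``extended'' folio and follows along the lines of \autoref{obs:flatfolioonly} and Proposition~2.3 of~\cite{DBLP:conf/stoc/GroheKMW11}). A realization of a pattern $H$ in $G_1'\cup G_2$ is then lifted to a realization in $G_1\cup G_2$ by substituting, for the part of the realization that lives in $G_1'$, the matching realization recorded in the recursive call on $G_1$.

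\textbf{The base cases.} In case (ii) I would invoke \autoref{lem:flatWallCombined} with $t$ set to the clique-minor threshold $f(|R(G)|,\delta,0)+1$ of \autoref{thm:dellargeclique} taken at $k=0$, with $w$ set to $\widehat{g}(\delta^{\star},t)^{2}$ so that the size requirement of \autoref{thm:finalFlatWall} at $k=0$ is met (here $\delta^{\star}=\delta+\boundary$), and with $g\ge ct^{48}(t^{2}+w)$. This returns one of three outcomes. If it returns a tree decomposition of width $\tw=g^{\OO(1)}$, a function of $\delta$ only, I would compute the extended $\delta$-folio directly: it suffices to decide, for each of the $2^{\OO(\delta^{4})}$ graphs $X$ in $\allgraphs(R(G))$ and each of the finitely many candidate patterns $H$, whether $H$ is a topological minor of $G\cup X$ (whose treewidth is still $\tw+\boundary$), which is one call to the bounded-treewidth algorithm for \TMC\ (\autoref{prop:twDP}), augmented to output a realization. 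If it returns a $K_t$-minor model, I would apply \autoref{thm:dellargeclique} with $k=0$ to obtain an irrelevant vertex, delete it, and restart; by \autoref{obs:flatfolioonly} the extended $\delta$-folio is unchanged. If it returns a set $A$ and a nice flat wall of size at least $(w\times w)$ together with the required flatness tuple --- that is, exactly an instance of \fFindFolio\ --- I would apply \autoref{thm:finalFlatWall} with $k=0$ to obtain an irrelevant vertex, delete it, and restart. Each deletion strictly shrinks $G$, so the process halts; substituting $r=h(\delta^{\star}+t)$ (and hence $r_{2}=h(2^{\OO(\delta^{4})})$, $r_{1}=h(2^{2^{\OO(\delta^{4})}}r_{2})$) into the running times of the cited results and carrying out the amortized analysis of~\cite{DBLP:conf/stoc/GroheKMW11} yields the stated bound $f(\delta)\cdot n^{3}$, with the cubic factor inherited from \autoref{thm:dellargeclique}.

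\textbf{Main obstacle.} Since the two deep ingredients --- the irrelevant-vertex rule for large clique minors (\autoref{thm:dellargeclique}) and, above all, the irrelevant-subwall machinery behind \autoref{thm:finalFlatWall} --- are available as black boxes here, the real work in \autoref{thm:main1General} is the glue: verifying that the extended $\delta$-folio is genuinely the right compositional invariant for the replacement step and that realizations lift through it, and --- what I expect to be the most delicate point --- threading the parameters so that the separator order never exceeds $\boundary=16\delta^{2}$, the demanded flat-wall size $w$, the resulting treewidth bound $\tw$, and the thresholds $t$ and $\delta^{\star}$ all simultaneously satisfy the hypotheses of \autoref{lem:flatWallCombined}, \autoref{thm:dellargeclique} and \autoref{thm:finalFlatWall}, while the recursion-plus-deletion process still decreases a well-founded measure and admits the claimed $n^{3}$ amortized running-time analysis.
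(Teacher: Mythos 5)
Your overall architecture (recurse across small separators, replace the understood side by a bounded-size representative with the same extended $\delta$-folio, and in the irreducible case split via \autoref{lem:flatWallCombined} into bounded treewidth / clique minor / flat wall) matches the paper's, and your treewidth and flat-wall base cases are essentially correct: the paper handles bounded treewidth via \autoref{cor:dp} and the flat wall via the irrelevant-subwall algorithm of \autoref{thm:main1Flat} (your use of \autoref{thm:finalFlatWall} at $k=0$ amounts to the same non-circular machinery). The lifting of realizations through the representative replacement is also the intended argument.

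However, your treatment of the large-clique-minor case contains a genuine circularity. You propose to invoke \autoref{thm:dellargeclique} with $k=0$ to find an irrelevant vertex, treating it as a black box. But the proof of \autoref{thm:dellargeclique} begins (Step~1) by calling the algorithm of \autoref{thm:main1General} on $G^{\star}$ to compute the extended $\delta$-folio together with realizations (this is needed to define the enriched root set $Z_0$), so it cannot be used inside the proof of \autoref{thm:main1General}; the paper explicitly remarks that the $(\delta,k)$-irrelevant-vertex lemma is \emph{not} used for \autoref{thm:main1General}. The existential version (\autoref{lem:representativeClique}) does not help either, since it presupposes knowledge of the realizations. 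What the paper does instead is invoke \autoref{lem:clique}: given a large clique minor, it either solves the instance outright by making $2^{\OO(\boundary)}$ calls to the \FindFolio\ algorithm for parameter $\delta-1$, or it outputs a balanced separation of order at most $4\delta^2$ on which to recurse. This is why the paper's proof is organized as an induction on $\delta$ (with $\delta=1$ as the trivial base case), a structural ingredient entirely absent from your proposal; without it (or some other non-circular resolution of the clique-minor case) your argument does not go through. Relatedly, you do not need a separate up-front ``breakable vs.\ unbreakable'' test: in the paper the separation emerges only as one of the two possible outputs of the clique-minor subroutine.
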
 

Clearly, \autoref{thm:maingen} is a corollary of \autoref{thm:main1General}.
%
%
The main component in our proof of  \autoref{thm:main1General} is the following theorem.


\begin{theorem}
\label{thm:main1Flat}
There is a computable function $g$ and an algorithm that, given an instance $(G,\delta,t,w',s')$ of \fFindFolio\ such that  $|R(G)|\leq \boundary$ and $w'\geq g(\delta^{\star},t)$, and $w''\in\mathbb{N}$, finds a $w''\times w''$ flat wall within the input $w'\times w'$ flat wall such that the set of all vertices of the output inner flat wall is irrelevant. The algorithm 
runs in 
$ 2^{2^{\OO(((t+\delta^{\star})^2+r)\log (t+\delta^{\star}+r))}}(s')^{\OO(s')}(w'')^{\OO(w'')} n$ time 
where  $\delta^{\star}=\delta+\boundary$, $g(\delta^{\star},t)=(t+\delta^{\star}+r)^{\OO((t+\delta^{\star})^2+r)}w''$  and $r=h(\delta^{\star}+t)$. 
\end{theorem}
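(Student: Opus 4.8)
The plan is to follow the outline of Section~\ref{subsec:irrelevant}, making each informal step precise. \emph{Reduction.} By \autoref{obs:flatfolioonly}(a), the extended $\delta$-folio of $G$ is determined by the $\delta^\star$-folio of $G$, where $\delta^\star=\delta+|R(G)|\le\delta+\boundary$, and any vertex set irrelevant to the $\delta^\star$-folio is irrelevant to the extended $\delta$-folio; so it suffices to produce a $w''\times w''$ subwall $W''$ of the input wall with the $\delta^\star$-folio of $G$ equal to that of $G\setminus V(W'')$ (the nontrivial direction being that no topological minor of detail $\le\delta^\star$ is lost). Such a $W''$ is automatically flat, since restricting the given nice flatness tuple to the region enclosed by $W''$ yields a flatness tuple for it. \emph{Workspace.} Using the plane graph $\tilde G$ of the flatness tuple and \autoref{lem:largeWall}, $\tilde G$ contains a large grid minor and hence, inside the disc bounded by $C$, many pairwise disjoint concentric nooses; by carving out of the flat wall a $(p,2q)$-workspace $(M,{\cal N})$ as in Section~\ref{subsec:irrelevant}, I obtain a minimum noose $M$ enclosing a connected induced subgraph of $\tilde G$ of treewidth at most $p$, together with a $(2q\times 2q)$-noose grid ${\cal N}$ having one inner noose per vertex inside $M$. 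Here $p$ and $q$ are functions of $\delta^\star$ and $t$ chosen large enough for the later steps, with $q$ scaling with $w''$; this is where the hypothesis $w'\ge g(\delta^\star,t)$ and the shape of $g$ are used. The nooses of ${\cal N}$ split into frames $\fr[0],\ldots,\fr[q-1]$ as in Figure~\ref{fig:frame}.

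\emph{Key structural lemma (the main obstacle).} The heart of the argument is to show that there exist a ``small'' $\eta=\eta(\delta^\star)$ and some $\ell$ in a controlled range for which \FindFolio$^{\star}$ on $G$ has an $(\ell,\eta)$-untangled solution $\mathcal S$ --- a choice of realization $(H,\phi,\varphi)$ for every $H$ in the $\delta^\star$-folio with (a) no terminal of $\mathcal S$ in $\fr[\ell-3],\ldots,\fr[\ell+3]$, (b) each realization using at most $\eta$ nooses of $\fr[\ell]$, each inside one crossing subpath, and (c) none of $\fr[\ell]$'s horizontal nooses used. I would prove this by starting from an arbitrary solution and iteratively replacing it by a ``nicer'' one: since the $\delta^\star$-folio has boundedly many graphs, each with at most $2\delta^\star$ vertices, the total number of terminals is bounded, so a pigeonhole over the $q$ frames produces a terminal-free band, giving~(a); within such a band the realization paths merely traverse a planar annulus insulated by $r=h(\delta^\star+t)$ concentric nooses, so the Unique Linkage Theorem of Robertson and Seymour~\cite{RobertsonS12} together with planar rerouting lets me reroute each path so it crosses $\fr[\ell]$ in a single crossing subpath using few nooses, giving~(b); a further reroute, exploiting that the frames are wide and the paths few, pushes the realization off the horizontal nooses of $\fr[\ell]$, giving~(c). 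The delicate point --- and the main obstacle --- is that, unlike in \DisjointPaths, the terminals are unknown and one must simultaneously tame the realizations of all graphs in the folio, rerouting through a terminal-free annulus without ever changing the detail of any topological minor, while accommodating the apex set $A$ and the protrusions $G_1,\ldots,G_k$; every reroute must be checked to preserve the properties established in earlier steps.

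\emph{Dynamic programming and pigeonhole over frames.} The graph $\tilde G[\inNoose(M)\cap V(G)]$ has treewidth at most $p$; adding the protrusions $G_i$ (each of treewidth at most $s'$, attached on at most three vertices) and handling $A$ by guessing how the realizations use its at most $t$ vertices, the relevant part of $G$ admits a tree decomposition of width $\OO(p+s')$ computable in $2^{\OO(p^3\log p)}(s')^{\OO(s')}n$ time via \autoref{prop:twExact} and standard protrusion bookkeeping. On this decomposition I would run a ``partial folio'' dynamic program, a rooted-graph analogue of \autoref{prop:twDPNoDec}, that computes, for each frame $\fr[j]$, the set of all partial solutions realizable inside $\fr[j]$ restricted to interfaces of the bounded size permitted by~(b)--(c); this is a bounded-size object whose number of possible values is doubly exponential in $\mathrm{poly}(t+\delta^\star+r)$. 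Choosing $\lambda,\mu$ as functions of $\delta^\star$ so that fewer than $\mu$ distinct such sets exist, and setting $s_i=s(i\lambda-1)$ for a suitable $s$, a pigeonhole over the $\mu$ candidate frames $\fr[s_1],\ldots,\fr[s_\mu]$ forces all of them to carry identical partial-solution-sets, and $\mu$ is taken large enough that the $\ell$ of the untangled solution lies in $\{s_1,\ldots,s_\mu\}$.

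\emph{Patching and running time.} Finally, take the $(\ell,\eta)$-untangled $\mathcal S$. For each $H$ in the folio, $(H,\phi,\varphi)$ splits along the non-horizontal part of $\fr[\ell]$ into an inner part realizing some bounded-interface partial solution and an outer part. By the identical-frames property that inner partial solution also occurs inside $\fr[s_1]$, and since the annulus between $\fr[s_1]$ and $\fr[\ell]$ is a large piece of the wall, the inner partial solution on $\fr[s_1]$ can be wired to the outer part of $\mathcal S$ through that annulus while avoiding the top horizontal nooses of $\fr[s_1],\fr[s_1+1],\ldots,\fr[s_1+w'']$. This yields, for every $H$ in the $\delta^\star$-folio of $G$, a realization in $G\setminus V(W'')$, where $W''$ is the subwall of $W$ formed by exactly those top horizontal nooses; hence $V(W'')$ is irrelevant and $W''$ is a $w''\times w''$ flat subwall of the input wall. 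Summing the costs --- carving the workspace and computing its tree decomposition, the partial-folio dynamic program (table doubly exponential in $\mathrm{poly}(t+\delta^\star+r)$, times $(s')^{\OO(s')}$ from the protrusions), and constructing and outputting the $w''\times w''$ subwall (the $(w'')^{\OO(w'')}$ factor) --- gives the claimed running time.
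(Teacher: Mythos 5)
Your proposal follows essentially the same route as the paper's own proof: the reduction to the $\delta^{\star}$-folio via \autoref{obs:flatfolioonly}, the $(p,2q)$-workspace of \autoref{lem:workspace}, the existence of an $(\ell,\eta)$-untangled solution obtained by iterated rerouting through terminal-free annuli using the Unique Linkage Theorem (the paper's chain of vacant, few-crossings, regret-free and untangled frames in Section~\ref{sec:frame}), the bounded-treewidth dynamic program computing per-frame sets of bounded-interface partial solutions together with a monotonicity-plus-pigeonhole argument yielding a long run of identical frames (the snapshots, albums and redundancy stamps of Sections~\ref{sec:snapshot}--\ref{sec:album}), and finally the patch from the innermost identical frame that avoids the up-nooses. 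You correctly identify the untangling step as the crux; the paper fills that in with the vacant/few-crossings/regret-free hierarchy, but your outline matches its structure and no step of your plan is wrong.
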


The following lemma says that the existence of a smaller flat wall is already good enough to prove that there {\em exists} an  irrelevant vertex.

\begin{lemma}\label{lem:mainexrele}
There is a computable function $g$ such that for  any  instance $(G,\delta,t,w,s)$ of \fFindFolio\ with  $|R(G)|\leq \boundary$ and $w\geq g(\delta,t)$, there is an irrelevant vertex in $G$, where 
$g(\delta,t)=2^{\OO((\delta+\boundary)^2)} t \cdot r$, 
and $r=h(\delta+\boundary+t)$. 
\end{lemma}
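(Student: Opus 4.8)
Since the statement is purely existential we need not compute the irrelevant vertex, only exhibit one, which lets us be wasteful. First I would reduce to the ordinary folio: by \autoref{obs:flatfolioonly}(a) it suffices to produce a vertex $v$ irrelevant to the $\delta^{\star}$-folio of $G$, where $\delta^{\star}=\delta+\boundary$ (this dominates $\delta+|R(G)|$, and equality of $\delta^{\star}$-folios forces equality of every smaller folio). The $\delta^{\star}$-folio is a fixed finite set of rooted graphs, each on at most $2\delta^{\star}$ vertices, so by \autoref{obs:allGraphsNumber} (applied to graphs on at most $2\delta^{\star}$ vertices, times the root relabelings) its cardinality is at most $2^{\OO((\delta^{\star})^2)}$. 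For every graph $H$ in this folio that is a topological minor of some $G\cup X$ with $X\in\allgraphs(R(G))$, fix one realization $G_{H,X}$ with branch-vertex set $T_{H,X}=\phi(V(H))$, so $|T_{H,X}|\leq 2\delta^{\star}$; since $|\allgraphs(R(G))|\leq 2^{\OO(\boundary^2)}$ and $\boundary\leq\delta^{\star}$, the set $\mathcal T:=A\cup R(G)\cup\bigcup_{H,X}T_{H,X}$ satisfies $|\mathcal T|\leq t+\boundary+2^{\OO((\delta^{\star})^2)}\leq 2^{\OO((\delta^{\star})^2)}\cdot t$.

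Second, I would locate inside the flat wall a vertex $v$ deeply insulated from $\mathcal T$. By \autoref{lem:largeWall} the plane graph $\tilde G$ contains a $(w-2)\times(w-2)$-grid as a minor; since $V(\tilde G)=V(G_0)$ and every edge of $\tilde G$ outside $G_0$ is spanned by one of the protrusions $G_1,\dots,G_k$ (\autoref{obs:cflatmore}), this grid minor yields, after rerouting its cycles through the relevant protrusions, a system of concentric cycles lying in the ``planar-plus-protrusions'' part of $G$ that \autoref{lem:RoSe} tolerates. Set $r=h(\delta^{\star}+t)$. Because $w\geq g(\delta,t)=2^{\OO((\delta^{\star})^2)}\,t\,r$, the grid is large enough that, partitioning it into subgrids of side $\Theta(r)$, a counting argument applies: each vertex of $\mathcal T$ lies in a single face of the realised grid minor and hence obstructs only a bounded number of subgrids, so since there are more than $|\mathcal T|$ subgrids, some subgrid $J$ has no vertex of $\mathcal T$ in the region of $\tilde G$ bounded by its outer cycle $C_{r}$. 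Around the centre $v$ of $J$ there are then $r$ further concentric cycles $C_0\subset\cdots\subset C_{r-1}$ contained in the interior of $C_{r}$; in particular $v\notin\mathcal T$ and $v$ is insulated from $\mathcal T$ by $r$ concentric cycles, none of which (nor any region between them) meets $\mathcal T$.

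Third, I would invoke the Unique Linkage Theorem. Fix $X\in\allgraphs(R(G))$ and a graph $H$ in the $(X,\delta^{\star})$-folio of $G$, with its fixed realization $G_{H,X}$; I must show $H$ is still a topological minor of $(G\cup X)\setminus v$, the reverse inclusion of folios being trivial. If $G_{H,X}$ avoids $v$ there is nothing to do; otherwise the at most $\delta^{\star}$ paths of $G_{H,X}$, together with the at most $t$ apex vertices of $A$ they traverse, form a linkage whose terminals all lie in $\mathcal T$, hence outside the $r$ concentric cycles surrounding $v$. Since $r=h(\delta^{\star}+t)$ — the ``$+t$'' absorbing the apex vertices exactly as in \autoref{lem:RoSe} — applying \autoref{lem:RoSe} to the near-embedded structure furnished by the flatness tuple reroutes this linkage off $v$ while fixing its endpoints, and splicing the remainder of $G_{H,X}$ back in produces a realization of $H$ in $(G\cup X)\setminus v$. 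As this holds for all $H$ and all $X$, the extended $\delta^{\star}$-folios of $G$ and $G\setminus v$ coincide, and by \autoref{obs:flatfolioonly}(a) so do the extended $\delta$-folios.

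The main obstacle is making the third step rigorous in the presence of apices and protrusions: \autoref{lem:RoSe} concerns (near-)planar pieces with a bounded set of terminals, whereas a fixed realization may wander arbitrarily through the $\leq t$ apex vertices and the protrusions $G_1,\dots,G_k$ before returning to the planar region, so one must verify that insulation inside $G_0$ (equivalently $\tilde G$) still suffices — that is, that after contracting each protrusion to a virtual edge and each apex excursion to a terminal, the relevant linkage is a genuine planar linkage avoiding the concentric cycles, and that the rerouting lifts back to $G\cup X$. This near-planar bookkeeping, which also explains why the generous parameter $\delta^{\star}+t$ suffices, is precisely the technical heart carried out in full (and constructively) in \autoref{thm:main1Flat}; for the present existence claim it is enough to invoke it once.
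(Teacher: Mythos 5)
Your proposal is essentially sound and reaches the right conclusion by a more direct route than the paper. The paper reduces \autoref{lem:mainexrele} to \autoref{lem:mainexrelestar} (plain $\delta^{\star}$-folio, exactly your first step) and then proves the latter by setting up a $(p,q)$-workspace (\autoref{lem:workspace}), pigeonholing over $q>2^{c'(\delta^{\star})^2}td$ frames to find an $(\ell,d)$-terminal-free band, invoking \autoref{lem:fewCrossingsFrame} to obtain an $(\ell,d_1,\alpha)$-vacant solution, and concluding via \autoref{obs:irr:vacant}. You skip the workspace/frames formalism and instead pigeonhole directly over $\Theta(r)$-sized subgrids of the grid minor of $\tilde G$ to find a terminal-free insulated centre, then apply the Unique Linkage Theorem per folio member. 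Both arguments rest on the same two ingredients — a terminal count of $2^{\OO((\delta^{\star})^2)}t$ and the ULT in the near-planar setting — so the quantitative bound comes out the same; your version does less work because, for a pure existence claim, vacancy/few-crossings/untangling are not needed. The paper's detour through frames is there because the same machinery is reused constructively later.

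Two points to tighten. First, your terminal set $\mathcal{T}=A\cup R(G)\cup\bigcup_{H,X}T_{H,X}$ omits the neighbours \emph{on the fixed realizations} of the apex vertices: when you delete $A$ from $G_{H,X}$ to form the linkage, its terminal set is $\phi(V(H))\cup N_{G_{H,X}}(A\cap V(G_{H,X}))$ (this is exactly the decomposition in \autoref{cor:wrapped}), and the second set is not contained in your $\mathcal{T}$. These vertices must also be excluded from the insulated region, as must any protrusion $G_i$ attached inside it that contains a terminal (cf.\ the hypotheses of \autoref{lemma:irrelevantflat}). There are at most $\OO(\delta^{\star}t)$ such vertices per folio member, so adding them to $\mathcal{T}$ preserves $|\mathcal{T}|\leq 2^{\OO((\delta^{\star})^2)}t$ and the pigeonhole still goes through — but as written the claim ``whose terminals all lie in $\mathcal{T}$'' is false. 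Second, your fallback of invoking \autoref{thm:main1Flat} to discharge the apex/protrusion bookkeeping would not yield the stated bound: that theorem needs $w'\geq(t+\delta^{\star}+r)^{\OO((t+\delta^{\star})^2+r)}$, far larger than $2^{\OO((\delta+\boundary)^2)}t\cdot r$. The correct citations are \autoref{lemma:irrelevantflat} and \autoref{cor:wrapped}, which the paper proves precisely to handle apices and protrusions at this cost, and which suffice for your direct argument.
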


\section{Finding a Smaller Representative and Auxiliary Lemmas}\label{sec:irr}
%

Our algorithms use the technique of irrelevant vertices as well as recursive understanding (for \FindFolio). 
That is, at each stage  we will separate our graphs into two subgraphs, ``understand'' the behavior of one of these subgraphs, and then replace it with a so called representative that has the same ``behavior'' and is smaller.
To this end, let us first formally define the notion of a representative.


\begin{definition}[{\bf Representative}]
\label{def:rep}
Let $G$ and $F$ be two rooted graphs such that $R(G)=R(F)$ and $\delta\in {\mathbb N}$. 
We say that $F$ is a {\em $\delta$-representative of $G$} if $F$ and $G$ have the same extended $\delta$-folio.
\end{definition}


\begin{proposition}
\label{prop:no.ofgraphsinfolios}
Let $G$ be a  rooted graph and $\delta\in {\mathbb N}$. Then the number of distinct graphs (up to isomorphism) in the $\delta$-folio of $G$ is upper bounded by  $2^{\OO(\delta^2)}\cdot  \vert R(G)\vert^{\OO(\delta)}$. 
\end{proposition}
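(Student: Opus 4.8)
The plan is to bound separately the number of possible vertex sets, the number of possible edge sets, and the number of possible root labelings that can occur in a graph belonging to the $\delta$-folio of $G$, and then multiply these counts. Recall that a graph $H$ in the $\delta$-folio of $G$ satisfies $|E(H)| + \isolated(H) \leq \delta$, and its roots are a subset of $R(G)$ with labels inherited from $\rho_G$ (since a root of $H$ must map to a vertex of $G$ with the same label under $\rho_G$).

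First I would observe that a graph $H$ with $|E(H)| + \isolated(H) \leq \delta$ has at most $2\delta$ vertices: the non-isolated vertices number at most $2|E(H)| \leq 2\delta$, and the isolated ones number at most $\delta$. Crucially, every non-root, non-isolated vertex contributes to an edge, so the number of vertices of $H$ that are neither roots nor isolated is at most $2\delta$. Thus $H$ is determined (up to isomorphism, once we fix which vertices are roots) by: (i) a choice of how many isolated non-root vertices it has, at most $\delta+1$ options; (ii) the subgraph induced on the non-isolated vertices together with the identification of which of those are roots; and (iii) which root vertices are isolated, i.e.\ a subset of $R(G)$ of size at most $\delta$.

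The main counting step is for part (ii): the non-isolated part of $H$ has at most $\min(2\delta, 2|E(H)|)$ vertices and at most $\delta$ edges. The number of labeled graphs on a fixed vertex set of size $s \leq 2\delta$ is $2^{\binom{s}{2}} = 2^{\OO(\delta^2)}$, and for each such graph we must additionally record, for each vertex, whether it is a root and if so what its label is — but a label must come from $\rho_G(R(G))$, so the number of ways to assign roots/labels to at most $2\delta$ vertices is at most $(|R(G)|+1)^{2\delta} = |R(G)|^{\OO(\delta)}$ (the ``$+1$'' accounting for ``not a root''). Combining with part (iii), the number of ways to choose which roots appear as isolated vertices is at most $\sum_{i \le \delta}\binom{|R(G)|}{i} = |R(G)|^{\OO(\delta)}$. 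Multiplying the three contributions gives $(\delta+1) \cdot 2^{\OO(\delta^2)} \cdot |R(G)|^{\OO(\delta)} \cdot |R(G)|^{\OO(\delta)} = 2^{\OO(\delta^2)} \cdot |R(G)|^{\OO(\delta)}$, as claimed.

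I expect the main (minor) obstacle to be bookkeeping the interaction between roots, isolated vertices, and the ``up to isomorphism'' clause correctly: two topological minors are identified when they are isomorphic as rooted graphs, so I must be careful that the count is over isomorphism classes of rooted graphs and that isolated root vertices are handled consistently (they are distinguishable from each other by their labels but a non-root isolated vertex is interchangeable with any other). Once the vertex set is pinned down to size $\OO(\delta)$ and the root labels are restricted to lie in $\rho_G(R(G))$, the remaining estimates are routine. A clean way to present it is to fix an abstract vertex set $\{1, \dots, 2\delta\} \cup \{$some set standing for the non-root isolated vertices$\}$ and argue every isomorphism class has a representative on this set, then count; I would write it in that style to avoid any overcounting worries.
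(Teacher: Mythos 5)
The paper states this proposition without proof, treating it as a routine count, and your argument is exactly the intended one: every graph in the $\delta$-folio has $\OO(\delta)$ vertices, so an isomorphism class is determined by a labeled graph on $\OO(\delta)$ vertices ($2^{\OO(\delta^2)}$ choices) together with a root-label assignment drawn from $\rho_G(R(G))\cup\{\text{non-root}\}$ and the isolated-vertex data ($|R(G)|^{\OO(\delta)}$ choices). Your proof is correct; the only cosmetic point is that factors like $(|R(G)|+1)^{2\delta}$ and $(\delta+1)$ should be absorbed into the $2^{\OO(\delta^2)}$ term when $|R(G)|\le 1$, which is standard.
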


\begin{proposition}[Lemma 2.4~\cite{DBLP:conf/stoc/GroheKMW11}]
\label{prop:equi_graph_replacement}
Let $(G_1,G_2)$ be a separation of a rooted graph $G$ and let $V(G_1)\cap V(G_2)\subseteq R(G)$. Let $G_1'$ be a rooted graph compatible with $G_1$ such that $G_1$ and $G_1'$ have the same extended $\delta$-folio. Let $G'$ be the graph obtained by replacing $G_1$ with $G_1'$ in the separation $(G_1,G_2)$. Then both $G$ and $G'$ have the same extended $\delta$-folio.
\end{proposition}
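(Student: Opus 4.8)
\textbf{Proof plan for Proposition~\ref{prop:equi_graph_replacement}.}

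The plan is to prove that replacing $G_1$ with an extended-$\delta$-folio-equivalent graph $G_1'$ across a separation whose order-vertices are all roots preserves the extended $\delta$-folio of the whole graph. The key conceptual point is that whether a pattern $H$ of detail at most $\delta$ is a topological minor of $G \cup X$ depends on $G_1$ only through the behaviour of $G_1$ ``as seen from the separator'' $S = V(G_1) \cap V(G_2)$, and this behaviour is precisely captured by the extended $\delta$-folio (or more precisely by a folio of size $\delta + |R(G)|$, which by \autoref{obs:flatfolioonly}(a) is recoverable from the extended $\delta$-folio). So the entire argument reduces to a local exchange lemma for topological minors, carried out on the two sides of the separation independently.

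First I would fix an arbitrary graph $X \in \allgraphs(R(G))$ and a rooted graph $H$ with $|E(H)| + \isolated(H) \le \delta$, and show that $H$ is a topological minor of $G \cup X$ if and only if $H$ is a topological minor of $G' \cup X$; since $R(G') = R(G)$ and the construction of $G' = G_1' \cup G_2$ keeps $S$ in the root set, $X$ is a legitimate graph on the roots of $G'$ as well, so this equivalence for all $X$ and all such $H$ is exactly the statement that the extended $\delta$-folios coincide. For the forward direction, suppose $(\phi, \varphi)$ witnesses $H \preceq_{\text{tm}} G \cup X$ with realization $G^{\text{real}}$. Split $G^{\text{real}}$ along the separation: let $H_1$ be the ``trace'' of this realization on $G_1$ together with $S$ made into roots, i.e.\ contract/restrict the realization to $G_1$, turning every vertex of $S$ into a root and turning every maximal subpath of a $\varphi(e)$ lying in $G_1$ into an edge of a new small rooted pattern $H_1$; one checks that $|E(H_1)| + \isolated(H_1)$ stays bounded by $\delta + |S| \le \delta + |R(G)|$, so $H_1$ lies in the $(\delta+|R(G)|)$-folio of $G_1$ (with roots $S$), and moreover $H_1$ is present ``with the edges of $X$ restricted to $S$'' so it sits inside the appropriate $(X', \delta)$-folio entry. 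Since $G_1$ and $G_1'$ have the same extended $\delta$-folio (hence the same $(\delta+|R(G)|)$-folio relative to $S$, again by \autoref{obs:flatfolioonly}(a)), the same rooted pattern $H_1$ is realized in $G_1'$; gluing this new realization inside $G_1'$ with the untouched part of $G^{\text{real}}$ inside $G_2 \cup X$ — which is possible precisely because the two realizations agree on the roots $S$ and are internally vertex-disjoint from each other — yields a realization of $H$ in $G' \cup X$. The reverse direction is symmetric with the roles of $G_1$ and $G_1'$ exchanged (equivalence of extended folios is symmetric).

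The step I expect to be the main obstacle, and the one requiring the most care, is the bookkeeping in the ``split and reglue'' argument: one must verify that the pieces $H_1$ obtained by cutting the realization at the separator $S$ really have detail at most $\delta + |R(G)|$ (the branch vertices of $H$ landing in $S$, plus the at most $\delta$ subdivision vertices and $\delta$ edges of $H$, all contribute), that the internal-vertex-disjointness and the ``no branch vertex is internal on another path'' conditions are preserved after substitution (this is where compatibility of $G_1'$ with $G_1$ and the fact that $V(G_1') \cap V(G_2) = V(G_1) \cap V(G_2)$ in $G'$ are used), and that the labelling by $X$ restricted to $S$ is consistently handled on both sides — this is exactly why one works with the \emph{extended} folio rather than the plain $\delta$-folio. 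I would present these verifications as a short claim-by-claim check rather than grinding every case. Since this is Lemma~2.4 of~\cite{DBLP:conf/stoc/GroheKMW11}, I would keep the exposition brief and refer there for the full details, reproducing only the cut-and-reglue skeleton above.
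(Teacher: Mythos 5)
First, note that the paper gives no proof of this statement at all: it is imported verbatim as Lemma~2.4 of Grohe, Kawarabayashi, Marx and Wollan, so there is no in-paper argument to compare yours against. Judged on its own, your proposal has a genuine gap, and it sits exactly where you lean on \autoref{obs:flatfolioonly}(a). That observation says the extended $\delta$-folio is computable \emph{from} the $(\delta+|R|)$-folio; you invoke it in the reverse direction, claiming the $(\delta+|R(G)|)$-folio of $G_1$ is ``recoverable from the extended $\delta$-folio.'' That converse is false. Take $\delta=1$, $R(G_1)=S=\{a,b,c,d\}$, $G_1$ the path $a\mbox{-}b\mbox{-}c\mbox{-}d$ and $G_1'$ the star with centre $m$ and leaves $a,b,c,d$: for every graph $X_1$ on the four roots, both $G_1\cup X_1$ and $G_1'\cup X_1$ realize exactly the same patterns of detail at most $1$ (these only test connectivity and the presence of roots/edges), so the extended $1$-folios coincide; yet the detail-$2$ pattern consisting of the two disjoint rooted edges $\{a,b\}$ and $\{c,d\}$ lives in $G_1$ but not in $G_1'$. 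Consequently your central step — cut the realization at $S$, obtain a trace pattern $H_1$ of detail up to $\delta+|S|$, and re-realize \emph{that same} $H_1$ in $G_1'$ — is not licensed by the hypothesis: with $G_2$ supplying the edges $ua$, $bc$, $dv$, the single-edge pattern $H$ is realized in $G_1\cup G_2$ by the path $u\mbox{-}a\mbox{-}b\mbox{-}c\mbox{-}d\mbox{-}v$, whose trace on $G_1$ is precisely the unrealizable-in-$G_1'$ pattern above. (The conclusion of the lemma still holds there via a different realization, which is exactly the point: the proof cannot insist on transferring the original trace.)

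The mechanism you are missing is the one the \emph{extended} folio was designed for, and it runs in the opposite direction to your compression: rather than cutting $H$ into a big piece $H_1$ to be re-found in $G_1'$, one compresses the connections that the realization makes \emph{through the other side} into a virtual graph $X_1\in\allgraphs(R(G_1))$ (one edge of $X_1$ per maximal segment of the realization running through $G_2\cup X$ between two root vertices), so that $H$ itself — still of detail at most $\delta$ — is realized in $G_1\cup X_1$. Equality of extended $\delta$-folios then hands you a realization of $H$ in $G_1'\cup X_1$, and the used virtual edges are re-expanded into the original internally disjoint $G_2$-paths. Your $X'$, defined as ``$X$ restricted to $S$,'' only records edges of $X$ and captures none of the connectivity provided by $G_2$, so the universal quantification over all $X_1$ in the definition of the extended folio — the whole point of the hypothesis — is never used. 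The general case, where branch vertices of $H$ land on both sides, needs additional care (one must split $H$ and simultaneously keep the $G_2$-anchored part fixed), which is presumably why the paper defers entirely to the cited lemma; if you wish to keep your write-up as a citation-plus-sketch, the sketch should at least reflect this $X_1$-encoding rather than the cut-and-reglue of an over-sized trace.
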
 


%
%

By \autoref{prop:equi_graph_replacement}, we easily relate the irrelevancy of a vertex with respect to a subgraph of the input graph to the input graph itself.

\begin{corollary}\label{lem:irrelInSubgraph}
Let $(G,\delta)$ be an instance of \FindFolio. Let $(G_1,G_2)$ be a separation of $G$ such that $V(G_1)\cap V(G_2)\subseteq R(G)$ and let $v$ be a vertex irrelevant for $(G_1,\delta)$. Then, $v$ is also irrelevant for $(G,\delta)$.
\end{corollary}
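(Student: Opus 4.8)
The plan is to derive this immediately from \autoref{prop:equi_graph_replacement}, using $G_1' = G_1 \setminus v$ as the replacement graph. First I would observe that since $v$ is irrelevant for the instance $(G_1,\delta)$ of \FindFolio, by definition the extended $\delta$-folio of $G_1$ equals the extended $\delta$-folio of $G_1 \setminus v$. Next I would note that $v \notin V(G_1)\cap V(G_2)$: this is because $V(G_1)\cap V(G_2)\subseteq R(G)$, and a root vertex cannot be irrelevant to its own graph's extended $\delta$-folio — deleting a root changes the root set and hence trivially changes the extended folio (alternatively, one simply takes for granted, as is standard in this setting, that irrelevant vertices are always non-roots, so $v \in V(G_1)\setminus R(G)$). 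Consequently $G_1 \setminus v$ is still compatible with $G_1$ (it has the same root set $R(G_1)$ with the same labelling $\rho_{G_1}$ restricted to $R(G_1)$, since $v$ is not a root), and $(G_1\setminus v, G_2)$ is a separation of $G\setminus v$ with $V(G_1\setminus v)\cap V(G_2) = V(G_1)\cap V(G_2)\subseteq R(G)$.

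Now I would apply \autoref{prop:equi_graph_replacement} directly: taking $G_1' = G_1\setminus v$, which is compatible with $G_1$ and has the same extended $\delta$-folio as $G_1$, the proposition tells us that the graph $G'$ obtained by replacing $G_1$ with $G_1\setminus v$ in the separation $(G_1,G_2)$ has the same extended $\delta$-folio as $G$. Finally I would identify $G'$ with $G\setminus v$: replacing $G_1$ by $G_1\setminus v$ in $(G_1, G_2)$ simply deletes $v$ (and its incident edges) from $G$, since $v$ is an internal vertex of $G_1$ (not in $V(G_2)$, as it is not in the separator and not a root), and no re-identification of separator vertices occurs because $v$ is not among them. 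Therefore the extended $\delta$-folio of $G$ equals the extended $\delta$-folio of $G\setminus v$, i.e.\ $v$ is irrelevant for $(G,\delta)$.

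I do not expect any real obstacle here; the only point requiring a word of care is the bookkeeping that $v$ is a non-root internal vertex of $G_1$, so that ``replacing $G_1$ by $G_1\setminus v$'' coincides literally with ``deleting $v$ from $G$'' and the compatibility hypothesis of \autoref{prop:equi_graph_replacement} is met. Everything else is a direct invocation of that proposition.
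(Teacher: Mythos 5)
Your proposal is correct and follows essentially the same route as the paper's proof: apply \autoref{prop:equi_graph_replacement} with $G_1' = G_1\setminus v$ and observe that the replacement yields $G\setminus v$. The extra bookkeeping you include (that $v$ is a non-root vertex outside the separator, so compatibility holds) is a point the paper leaves implicit, but it does not change the argument.
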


\begin{proof}
Since $v$ is irrelevant for $(G_1,\delta)$, we have that $G_1$ and $G_1\setminus v$ have the same extended $\delta$-folio. 
By \autoref{prop:equi_graph_replacement}, $G$ and $(G_1\setminus v)\cup G_2$ have the same extended $\delta$-folio. Since $G\setminus v = (G_1\setminus v)\cup G_2$, the proof is complete.
\end{proof}

We now define \emph{important separators}, which have played  
an important role in resolving some of the long standing open problems in the field of parameterized complexity. 

\begin{definition}[Important Separators, \cite{Marx06i}]
Let $G$ be a graph. For subsets $X, Y, S \subseteq V(G)$, the set of vertices reachable from $X\setminus S$ in $G-S$ is denoted
by $R_G(X, S)$. An \sep{X}{Y} $S$ {\em dominates} an \sep{X}{Y} $S'$ if  $\vert S \vert \leq \vert S'\vert$ 
and $R_G(X, S')\subset R_G(X, S)$. A subset $S$ is an {\em \imsep{X}{Y}} if it is minimal, and there is no \sep{X}{Y} $S'$ that dominates $S$. 
\end{definition}

\begin{proposition}
\label{prop:unimp}
Let $G$ be a graph and $X, Y \subseteq V (G)$.  Let $\lambda$ be the minimum size of an \sep{X}{Y} in $G$.  Then there is exactly one  \imsep{X}{Y} of size $\lambda$ in $G$. 
\end{proposition}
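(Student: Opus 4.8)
The plan is to prove the two halves of the statement—\emph{existence} and \emph{uniqueness} of a minimum-size important $X$-$Y$-separator—separately, with uniqueness being the substantive part. As is standard I would first reduce to the clean situation $X\cap Y=\emptyset$ with all separators disjoint from $X\cup Y$ (the general case in the paper's definition reduces to this by a routine preprocessing: every separator must contain $X\cap Y$, so we may delete it, and vertices of $X\cup Y$ inside a separator are dealt with directly). For a minimal $X$-$Y$-separator $S$ write $R(S)=R_G(X,S)$. Any vertex outside $R(S)$ that is adjacent to $R(S)$ is itself reachable from $X$ in $G-S$, hence lies in $R(S)$; so $N_G(R(S))\subseteq S$, and minimality of $S$ forces $N_G(R(S))=S$. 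Consequently $S\mapsto R(S)$ is injective on minimal separators, and $R(S)$ is a union of connected components of $G-S$ that contains $X$ and avoids $Y$. The only external tool is submodularity of the vertex-boundary function: $|N_G(A\cup B)|+|N_G(A\cap B)|\le |N_G(A)|+|N_G(B)|$ for all $A,B\subseteq V(G)$, which follows from a vertex-by-vertex case check.

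For existence, I would start from an arbitrary minimum $X$-$Y$-separator $S_0$, of size $\lambda$. If $S_0$ is not important it is dominated by some $X$-$Y$-separator $S_1$; then $|S_1|\le\lambda$ forces $|S_1|=\lambda$, so $S_1$ is again minimum, hence minimal, and $R(S_0)\subsetneq R(S_1)$. Iterating yields minimum separators $S_0,S_1,S_2,\ldots$ whose reachability sets form a strictly increasing chain inside the finite set $V(G)$, so the process halts at a minimum separator $S^\star$ not dominated by any $X$-$Y$-separator; being minimum it is minimal, so $S^\star$ is an important $X$-$Y$-separator of size $\lambda$.

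For uniqueness, suppose $S$ and $S'$ were two distinct important $X$-$Y$-separators of size $\lambda$, and set $A=R(S)$, $B=R(S')$. Since both are minimal, $A\ne B$. If $A\subseteq B$ then $A\subsetneq B$, and as $|S'|\le|S|$ this says $S'$ dominates $S$, a contradiction; symmetrically $B\not\subseteq A$, so $A$ and $B$ are incomparable. Each of $A$ and $B$ contains $X$, misses $Y$, and is a union of components of $G$ minus its own boundary, so both $N_G(A\cup B)$ and $N_G(A\cap B)$ are $X$-$Y$-separators, giving $|N_G(A\cup B)|,|N_G(A\cap B)|\ge\lambda$. Submodularity then yields $|N_G(A\cup B)|+|N_G(A\cap B)|\le |N_G(A)|+|N_G(B)|=|S|+|S'|=2\lambda$, so $|N_G(A\cup B)|=\lambda$. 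Put $S''=N_G(A\cup B)$; it is a minimum separator disjoint from $A\cup B$. For each $v\in A$ a path witnessing $v\in R(S)$ stays inside $A$ and hence avoids $S''$, so $A\subseteq R(S'')$, and likewise $B\subseteq R(S'')$; thus $R(S)=A\subsetneq A\cup B\subseteq R(S'')$. Hence $S''$ dominates $S$, contradicting that $S$ is important, and uniqueness follows.

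The step I expect to require the most care is the bookkeeping around degenerate configurations—when a separator meets $X\cup Y$—which is why I would dispatch it at the very start; once the setup is clean, the remaining delicate points are purely local verifications, namely that $N_G(A\cup B)$ and $N_G(A\cap B)$ really are $X$-$Y$-separators and that $A\cup B\subseteq R_G(X,N_G(A\cup B))$.
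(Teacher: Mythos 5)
The paper does not actually prove this proposition: it is stated as a known fact about important separators (it is the standard ``unique minimum important separator'' result from Marx's work and the parameterized-algorithms literature), so there is no in-paper argument to compare against. Your proof is the canonical one and it is correct: existence by iterating domination among minimum separators (noting that minimum implies minimal, and that any dominator of a size-$\lambda$ separator again has size $\lambda$, so the strictly growing reachability sets force termination), and uniqueness via submodularity of $A\mapsto|N_G(A)|$ applied to $A=R_G(X,S)$ and $B=R_G(X,S')$, observing that $N_G(A\cup B)$ and $N_G(A\cap B)$ are again \seps{X}{Y} and that $N_G(A\cup B)$ would dominate $S$. All the local verifications you flag (that $N_G(R_G(X,S))=S$ for minimal $S$, that the two boundary sets separate, and that $A\cup B\subseteq R_G(X,N_G(A\cup B))$) check out.

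The one soft spot is exactly the one you name: the paper's definition allows a separator to intersect $X\cup Y$ (indeed, if $x\in X$ is adjacent to $y\in Y$ every separator meets $\{x,y\}$), and in that regime $N_G(R_G(X,S))=S$ can fail for minimal $S$ (e.g.\ $S=\{x\}$ gives $R_G(X,S)=\emptyset$ when $X=\{x\}$), which would break the injectivity of $S\mapsto R_G(X,S)$ that your uniqueness step relies on. Your proposed preprocessing does repair this --- every separator contains $X\cap Y$, so one may delete it and then attach a universal source adjacent to $X$ and a universal sink adjacent to $Y$, which preserves separators, their sizes up to a fixed offset, and their reachability sets --- but you assert rather than execute it. Since this reduction is genuinely routine and the core argument is sound, I would not call this a gap in the idea, only in the write-up; a complete version should spell out that one paragraph.
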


\begin{proposition}[\cite{ChenLL09,Marx06i,paramalgoCFKLMPPS}] 
\label{prop:impsepcout}
Let $G$ be a graph, $X, Y \subseteq V (G)$, and  $k\in {\mathbb N}$.  
Let ${\cal S}_k$ be the 
set of all \imseps{X}{Y} of size at most $k$.
Then $\vert {\cal S}_k\vert$ is upper bounded by $4^k$, 
and these separators can be enumerated in time $\OO(\vert {\cal S}_k\vert \cdot k^2 \cdot (n+m))$.
\end{proposition}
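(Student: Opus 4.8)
The plan is to establish both assertions through a single recursive branching procedure $\textsc{Enum}(G,X,Y,k)$ that outputs exactly the \imseps{X}{Y} of size at most $k$; the cardinality bound and the running-time bound both fall out of an analysis of the recursion tree it produces. For an \sep{X}{Y} $S$ write $R(S)=R_G(X,S)$, and recall that if $S$ is inclusion-minimal then $N_G(R(S))=S$. After the standard preprocessing that makes $X\cup Y$ ``undeletable'' (e.g.\ replace every vertex of $X\cup Y$ by $k+2$ copies), we may assume no \sep{X}{Y} of size at most $k$ meets $X\cup Y$, so $X\subseteq R(S)$ for every relevant $S$.

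First I would collect the structural ingredients. By submodularity of $A\mapsto |N_G(A)|$, the minimum-size \seps{X}{Y} have a unique one, $S^{\star}$, whose reach $R^{\star}:=R(S^{\star})$ is inclusion-wise maximal; this $S^{\star}$ is precisely the \imsep{X}{Y} of minimum size $\lambda:=|S^{\star}|$ provided by \autoref{prop:unimp}, and it is computable by one max-flow computation in time $\OO(k(n+m))$ whenever $\lambda\le k$. The key claim is: for every \imsep{X}{Y} $S$, $R^{\star}\subseteq R(S)$. Indeed, set $R=R(S)$ and suppose $R^{\star}\not\subseteq R$; here $N_G(R^{\star})=S^{\star}$ and $N_G(R)=S$. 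Submodularity gives $|N_G(R^{\star}\cup R)|+|N_G(R^{\star}\cap R)|\le \lambda+|S|$; as $X\subseteq R^{\star}\cap R$ and $Y\cap(R^{\star}\cap R)=\emptyset$, the set $N_G(R^{\star}\cap R)$ is an \sep{X}{Y}, hence of size at least $\lambda$, so $S':=N_G(R^{\star}\cup R)$ is an \sep{X}{Y} with $|S'|\le|S|$ and $R(S')\supseteq R^{\star}\cup R\supsetneq R=R(S)$; thus $S'$ dominates $S$, contradicting the importance of $S$.

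Next, the branching. If $\lambda>k$ return $\emptyset$; if $\lambda=0$ return $\{\emptyset\}$ (minimality forces $\emptyset$ to be the only \sep{X}{Y}). Otherwise pick any $v\in S^{\star}$ and recurse: branch~(a) places $v$ into the separator, returning $\{S\cup\{v\}:S\in\textsc{Enum}(G-v,X,Y,k-1)\}$; branch~(b) keeps $v$ out, returning $\textsc{Enum}(G,X\cup\{v\},Y,k)$. Correctness of~(b) uses the key claim: if $S$ is an \imsep{X}{Y} with $v\notin S$, then $v$, a neighbour of $R^{\star}\subseteq R(S)$, lies in $R(S)$, so $S$ is an $(X\cup\{v\})$-$Y$-separator, and its minimal/important status is unchanged since $R_G(X\cup\{v\},S')=R_G(X,S')$ for every $(X\cup\{v\})$-$Y$-separator $S'$. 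For the cardinality bound set $\mu=2k-\lambda$: in branch~(a), $k$ drops by $1$ and $\lambda$ by at most $1$, so $\mu$ strictly decreases; in branch~(b), moving $v\in S^{\star}$ into $X$ strictly increases the minimum cut (again by submodularity and the maximality of $R^{\star}$), so $\mu$ strictly decreases. Branching occurs only when $1\le\lambda\le k$, where $\mu\ge k\ge1$; hence the recursion tree has depth at most $2k$, and being binary it has at most $4^{k}$ leaves. Distinct \imseps{X}{Y} are produced at distinct leaves, so $|{\cal S}_k|\le 4^{k}$.

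Finally, for the sharper running time I would show the tree is small whenever ${\cal S}_k$ is. Branch~(a) always has minimum cut at most $k-1$ (since $S^{\star}\setminus\{v\}$ is an \sep{X}{Y} of size $\lambda-1$ in $G-v$), so it is never pruned and, by chasing the minimum separator recursively, its subtree outputs at least one \imsep{X}{Y}; branch~(b) is either pruned immediately or again has minimum cut at most $k$ and likewise outputs at least one. Hence every internal node's subtree outputs a separator, so every internal node lies on some root-to-leaf path ending in an output separator; these paths have length at most $2k$, so there are $\OO(k|{\cal S}_k|)$ internal nodes, and each pruned leaf is the only possible branch-(b) child of such a node, so the whole tree has $\OO(k|{\cal S}_k|)$ nodes. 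Each node costs one max-flow computation, $\OO(k(n+m))$, for a total of $\OO(|{\cal S}_k|\cdot k^2\cdot(n+m))$. The step I expect to require the most care is making the key claim and the ``productive subtree'' charging fully rigorous under the convention that separators may intersect $X\cup Y$; the preprocessing above is the clean way around this, after which the submodularity and charging arguments go through verbatim. All of this is classical; see~\cite{Marx06i,ChenLL09,paramalgoCFKLMPPS}.
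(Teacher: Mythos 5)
The paper does not prove this proposition at all: it is imported verbatim from the cited literature (Marx, Chen--Liu--Lu, and the textbook treatment), so there is no in-paper proof to compare against. Your reconstruction is the standard argument from exactly those sources — unique minimum separator $S^{\star}$ with maximal reach via submodularity, the pushing lemma $R^{\star}\subseteq R(S)$ for every important $S$, binary branching on a vertex $v\in S^{\star}$, and the measure $\mu=2k-\lambda$ giving depth $2k$ and hence $4^{k}$ leaves — and it is essentially correct.

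One caveat worth tightening: under this paper's definition a separator may intersect $X\cup Y$, and your preprocessing only protects the \emph{original} terminals. In branch (b), after moving $v$ into $X$, the set $S^{\star}$ (which contains $v$) is still a valid $(X\cup\{v\})$-$Y$-separator of size $\lambda$ under the permissive definition, so the minimum cut does not strictly increase and the measure argument as written stalls. The standard fix is to keep every vertex added to $X$ undeletable as well (or, equivalently, to restrict branch (b) to separators avoiding $v$, which is exactly the set you need to enumerate there); with that convention maintained through the recursion, your submodularity and charging arguments go through. Relatedly, your one-line justification that importance is preserved in branch (b) ($R_G(X\cup\{v\},S')=R_G(X,S')$ for \emph{every} such $S'$) is not literally true; what is needed and true is only that an important $X$-$Y$-separator avoiding $v$ remains important as an $(X\cup\{v\})$-$Y$-separator, which requires the slightly more careful standard argument.
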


\begin{lemma}
\label{lem:supreach}
Let $G$ be a graph and $Z,V'\subseteq V(G)$ such that $G[V']$ is connected. Let $U=N_G(V')$. 
Let $X$ and $Y$ be two (inclusion) minimal \seps{Z}{U} such that $Y$ dominates $X$. Let $C_X$ and $C_Y$ be the 
unique components containing $V'$ in $G\setminus X$ and $G\setminus Y$, respectively. 
Let $A_X=G\setminus V(C_X)$, $B_X=G[X\cup V(C_X)]$, $A_Y=G\setminus V(C_Y)$, $B_Y=G[Y\cup V(C_Y)]$. 
Then, $V(A_X)\subset V(A_Y)$  (and hence $X\subseteq V(A_Y)$). 
\end{lemma}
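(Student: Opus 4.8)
The plan is to reduce everything to the single claim $V(C_Y)\subsetneq V(C_X)$: since $V(A_X)=V(G)\setminus V(C_X)$ and $V(A_Y)=V(G)\setminus V(C_Y)$, this is literally $V(A_X)\subsetneq V(A_Y)$, and since $C_X$ is a component of $G\setminus X$ we have $X\cap V(C_X)=\emptyset$, so $X\subseteq V(A_X)\subseteq V(A_Y)$ follows at once. Before the core argument I would record two facts about inclusion-minimal $Z$-$U$-separators that are used throughout. First, the hypothesis that $C_X$ and $C_Y$ are well-defined components containing $V'$ forces $V'\cap X=V'\cap Y=\emptyset$; combined with $U=N_G(V')$ and connectedness of $G[V']$ one gets, whenever $U\not\subseteq X$, that $U\setminus X\subseteq V(C_X)$ and in fact $V(C_X)=R_G(V',X)=R_G(U,X)$, and symmetrically for $Y$. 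Second, the standard fact that every vertex of a minimal $Z$-$U$-separator lies on a $Z$-$U$ path avoiding all other separator vertices gives: every $x\in X$ has a neighbour in $R_G(U,X)$ (in particular $X=N_G(R_G(U,X))$, using also $N_G(R_G(U,X))\subseteq X$), and every $x\in X\setminus Z$ has a neighbour in $R_G(Z,X)$.

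\textbf{The core step.} The heart of the proof is $V(C_Y)\subseteq V(C_X)$. Put $R_Y=R_G(Z,Y)$; then $N_G(R_Y)\subseteq Y$, so no edge of $G\setminus Y$ joins a vertex of $R_Y$ to a vertex of $V(G)\setminus(R_Y\cup Y)$. I would first check $V'\cap R_Y=\emptyset$: otherwise $V'\subseteq V(C_Y)\subseteq R_Y$, and picking $u\in U\setminus Y$ (nonempty when $Y\neq U$) a neighbour of $u$ in $V'$ together with a $Z$-$V'$ path in $G\setminus Y$ would produce a $Z$-$U$ path in $G\setminus Y$, contradicting that $Y$ is a $Z$-$U$-separator. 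Now take $w\in V(C_Y)=R_G(V',Y)$ and a path $P$ from some $v\in V'$ to $w$ in $G\setminus Y$. Since $v\notin R_Y\cup Y$ and $P$ avoids $Y$, the boundary property forces all of $P$ to lie in $V(G)\setminus(R_Y\cup Y)$. If $P$ met $X$ at a vertex $x$, then: if $x\in Z$ we would have $x\in Z\setminus Y\subseteq R_Y$, contradicting $P\cap R_Y=\emptyset$; and if $x\notin Z$ then $x$ has a neighbour in $R_G(Z,X)\subseteq R_G(Z,Y)=R_Y$ (using the strict inclusion from ``$Y$ dominates $X$''), so $x\in N_G(R_Y)\subseteq Y$, contradicting $P\cap Y=\emptyset$. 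Hence $P$ avoids $X$ and witnesses $w\in R_G(V',X)=V(C_X)$. For properness, if $V(C_X)=V(C_Y)$ then $R_G(U,X)=R_G(U,Y)$, so $X=N_G(R_G(U,X))=N_G(R_G(U,Y))=Y$ and thus $R_G(Z,X)=R_G(Z,Y)$, contradicting the strict inclusion. Therefore $V(C_Y)\subsetneq V(C_X)$, which is exactly the desired conclusion.

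\textbf{Expected obstacle.} I expect the main nuisance to be the degenerate cases where a minimal separator equals $U=N_G(V')$ (so the convenient identity $V(C)=R_G(U,\cdot)$ breaks) and where $Z$ meets $V'$. Using minimality one shows $X\neq U$ always (if $X=U$, a witness $w\in R_G(Z,Y)\setminus R_G(Z,U)$ either lies in $V'$, forcing a $Z$-$U$ path in $G\setminus Y$, or lies in a component of $G\setminus U$ whose only attachment to the rest is through $U$, again forcing such a path), and the remaining case $Y=U$ is immediate: then $V(C_Y)=V'$, and $V(C_X)\supsetneq V'$ precisely because $U\not\subseteq X$. The conceptual content — that a separator pushed further from $U$ reaches strictly less from $U$ — is entirely carried by the short ``$P$ cannot cross the boundary of $R_G(Z,Y)$, and $X$ sits inside $R_G(Z,Y)\cup Y$'' argument; everything else is bookkeeping about minimal separators.
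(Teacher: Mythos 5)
Your proposal is correct and follows essentially the same route as the paper: both arguments reduce to showing that a path from $V'$ to a vertex of $C_Y$ inside $G\setminus Y$ cannot meet $X$, because any crossing vertex $x\in X$ would, by minimality of $X$, have a neighbour in $R_G(Z,X)\subset R_G(Z,Y)$ and hence lie in $Y$ — a contradiction — with strictness then obtained from $X=N_G(V(C_X))$. Your write-up is somewhat more careful than the paper's (explicitly handling $x\in Z$ and the degenerate separators equal to $U$), but the underlying idea is identical.
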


\begin{proof}
Notice that $(A_X,B_X)$ and $(A_Y,B_Y)$ are two separations of $G$ such that $A_X\cap B_X=X$ and 
$A_Y\cap B_Y=Y$. 
 Fix a vertex $u\in V(A_X)$. We need to show that $u\in A_Y$. Suppose not. Then,
$u\in V(B_Y)\setminus V(A_Y)=V(C_Y)$. 
Since $C_Y$ is connected there is a path $P$ from $u$ to $V'$ in $C_Y$. 
Since $(A_X,B_X)$ is a separation, $u\in V(A_X)$, $V'\subseteq V(C_X)$, path $P$  contains a vertex $x\in X$. Since $X$ is a minimal separator there is a vertex $y\in R_{G}(Z,X)$ such that 
$x$ is adjacent to $y$.  Since $P$ is a path in $C_Y$ and $x\in V(P)$, $y\notin R_{G}(Z,Y)$. This contradicts the assumption that  $Y$ dominates $X$. Thus, we have proved that $V(A_X)\subseteq V(A_Y)$.  If $V(A_X)=V(A_Y)$ then $X=Y$, because $X$ and $Y$ are minimal separators.  Therefore, since $Y$ dominates $X$, we have that $V(A_X)\subset(A_Y)$. 
\end{proof}

We also need a result (\autoref{prop:RS13:6.1}) by Robertson and Seymour~\cite{RobertsonS95b}. Towards that we first need to recall definitions related to minors in a rooted graph from~\cite{RobertsonS95b}. 

\begin{definition}[\cite{RobertsonS95b}]
\label{def:minorfoilo}
Let $G$ and $H$ be two rooted graphs. A minor model  of $H$ in $G$ is function $\phi: V(H)\rightarrow 2^{V(G)}$ such that $(i)$ for all $h\in V(H)$, $G[\phi(h)]$ is a connected graph, $(ii)$ for all distinct $h,h'\in V(H)$, $\phi(h)\cap\phi(h')=\emptyset$, $(iii)$ for all $\{h,h'\}\in E(H)$, there exist $u\in \phi(h)$ and $v\in \phi(h')$ such that $\{u,v\}\in E(G)$ and $(iv)$ for all $u\in R(H)$, $u'\in \phi(u)$, where $u'\in R(G)$ and $\rho_G(u')=\rho_H(u)$. Then, we say that $H$ is a minor of $G$, and we further say that $H$ has {\em detail} $\leq \delta$ (where $\delta\in{\mathbb N}$) if $\vert E(H)\vert \leq \delta$ and $\vert V(H)\setminus R(H)\vert \leq \delta$. The {\em $\delta$-\mfolio} of $G$ is the set of all minors of $G$ with detail $\leq \delta$. 
\end{definition}

We remark that in~\cite{RobertsonS95b}, the term $\delta$-folio is used in place of $\delta$-\mfolio. We adopted this term as we used $\delta$-folio with respect to topological minors.

\begin{definition}
Let $G$ be a graph and $Z\subseteq V(G)$.  We say that the $\delta$-\mfolio\ of $G$ relative to $Z$ is generic if  the $\delta$-\mfolio\ of the rooted graph $G$ with $R(G)=Z$ contains every graph with $\vert Z\vert$ roots and with detail at most $\delta$. 
\end{definition}

We say that a vertex $v$ in a rooted graph $G$ is {\em irrelevant}  with respect to the $\delta$-\mfolio\ of $G$ if the $\delta$-\mfolio\ of $G$ is same as the $\delta$-\mfolio\ of $G\setminus v$.

\begin{proposition}[6.1~\cite{RobertsonS95b}]
\label{prop:RS13:6.1}
Let $G$ be a rooted graph and let $Z=R(G)$ with $\vert Z\vert \leq \xi$. Let $\delta' \in {\mathbb N}$, $t=\lfloor\frac{5}{2}\xi \rfloor+3\delta'+1$ and let $G_1,\ldots,G_t$ be mutually vertex-disjoint connected subgraphs such that for $1\leq i< j\leq t$ there is an edge of $G$ between $G_i$ and $G_j$. Let $(A,B)$ be a separation of $G$ such that 
\begin{itemize}
\setlength{\itemsep}{-2pt}
\item[$(i)$] $V(A)\cap V(G_i)=\emptyset$ for some $i\in [t]$,
\item[$(ii)$] $Z\subseteq V(A)$, 
\item[$(iii)$] subject to $(i)$ and $(ii)$, $(A,B)$ has minimum order, and 
\item[$(iv)$] subject to $(i)$, $(ii)$ and $(iii)$, $A$ is maximal.
\end{itemize}
Let $v\in V(B)\setminus V(A)$. Then $v$ is irrelevant to the $\delta'$-\mfolio\ of $G$. Moreover, the $\delta'$-\mfolio\ of $B\setminus v$ relative to $V(A)\cap V(B)$ is generic. 
\end{proposition}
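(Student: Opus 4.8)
Write $S:=V(A)\cap V(B)$; the two conclusions of the proposition are that (a) $v$ is irrelevant to the $\delta'$-\mfolio\ of $G$, and (b) the $\delta'$-\mfolio\ of $B\setminus v$ relative to $S$ is generic. The plan is to prove (b) and then derive (a) from it almost for free. Indeed, once (b) holds, $B$ and $B\setminus v$ have the \emph{same} $\delta'$-\mfolio\ relative to $S$: it contains that of $B\setminus v$ because $B\supseteq B\setminus v$, and it cannot exceed ``all rooted graphs with root set $\rho_G(S)$ and detail $\le\delta'$'', which is already attained. A minor-folio combining lemma --- adjoin $S$ to the root sets of both $A$ and $B$, invoke the minor analogue of \autoref{prop:equi_graph_replacement} for the resulting compatible separation of $G$, and then forget the roots of $S$ again --- shows that the $\delta'$-\mfolio\ of $G=A\cup B$ is a function of the $\delta'$-\mfolio\ of $A$ relative to $S$ and the $\delta'$-\mfolio\ of $B$ relative to $S$ (a minor model in $G$ of any $H$ of detail $\le\delta'$ is cut along $S$ into an $A$-part and a $B$-part, each of detail $\le\delta'$ over roots $S$, that communicate only through $S$; conversely such parts glue back). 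Since $v\notin S$, the pair $(A,B\setminus v)$ is a separation of $G\setminus v$ with the same interface $S$, so replacing $B$ by $B\setminus v$ changes neither input to that function; hence $G$ and $G\setminus v$ have the same $\delta'$-\mfolio, i.e.\ (a) holds.

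For (b) I would first read off the combinatorics of the separation. As $|Z|\le\xi$ and $(A,B)$ has minimum order among separations with $Z\subseteq V(A)$ and $V(G_j)\cap V(A)=\emptyset$ for \emph{some} $j$, Menger's theorem gives $\lambda:=|S|\le\xi$ and, for the index $i$ from hypothesis~(i), exactly $\lambda$ vertex-disjoint $Z$-$V(G_i)$ paths; truncating each at its last vertex of $S$ yields pairwise vertex-disjoint paths $P_1,\dots,P_\lambda$, where $P_k$ joins the $k$-th vertex $s_k$ of $S$ to $V(G_i)$ and is internally disjoint from $S$, and by the maximality of $A$ (which forces $B\setminus A$ to be exactly the component of $G\setminus S$ containing $V(G_i)$) these $P_k$ stay inside $B$ after leaving $S$. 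Now let $H$ be an arbitrary rooted graph of detail $\le\delta'$ with root set identified with $S$; it has $\le\delta'$ non-root vertices and $\le\delta'$ edges. The crucial observation is that the branch sets $V(G_1),\dots,V(G_t)$ are pairwise adjacent in $G$, so if every vertex $x$ of $H$ is assigned a connected branch set $\psi(x)$ that \emph{contains} some $V(G_{j_x})$ with the $j_x$ pairwise distinct, then \emph{every} edge of $H$ is automatically realised. Thus it suffices to produce pairwise disjoint connected sets $\psi(x)\subseteq V(B)\setminus\{v\}$ with $\psi(x)\supseteq V(G_{j_x})$ ($j_x$ distinct) and $\psi(s_k)\ni s_k$: for a non-root $u$ take $\psi(u)=V(G_{j_u})$ for a fresh index; for a root $s_k$ one must route $s_k$ --- via $P_k$ and through the interior of $G_i$ --- out to a fresh $V(G_{j_k})$, keeping these $\lambda$ sets disjoint.

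This last routing, together with its accounting, is where the value $t=\lfloor\tfrac52\xi\rfloor+3\delta'+1$ is used and is, I expect, the genuine obstacle. The paths $P_1,\dots,P_\lambda$ all enter the \emph{same} connected graph $G_i$, so turning them into $\lambda$ pairwise disjoint paths that reach $\lambda$ \emph{distinct} branch sets requires spending $G_i$ plus a further number of the $G_j$'s, bounded in terms of $\xi$, as ``connectors'' --- e.g.\ apportioning $V(G_i)$ (augmented by a few extra branch sets to supply the missing clique edges) into $\lambda$ disjoint connected pieces, the $k$-th containing the entry point of $P_k$ and reaching a private $V(G_{j_k})$ through one clique edge --- and one must check this is always possible. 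After that the surviving branch sets must disjointly cover the $\le\delta'$ non-root branch sets, the $\le\lambda\le\xi$ root branch sets $V(G_{j_k})$, the $\le\xi$ branch sets meeting $S$, the at most one branch set meeting $v$, and the connector branch sets, and the stated value of $t$ is exactly what makes this count close. Everything else --- the Menger setup, the minor-folio combining lemma, and the remark that swallowing a distinct $G_j$ into each branch set trivialises the edges of $H$ --- is routine bookkeeping.
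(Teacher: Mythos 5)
First, a point of context: the paper does not prove this proposition at all --- it is quoted verbatim as result (6.1) of Robertson and Seymour's Graph Minors XIII and used as a black box. So there is no in-paper proof to match your argument against; what you are really attempting is a reconstruction of Robertson--Seymour's own proof.

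Your reduction of conclusion (a) to conclusion (b) is sound: genericity of the $\delta'$-\mfolio\ of $B\setminus v$ relative to $S=V(A)\cap V(B)$ forces $B$ and $B\setminus v$ to have the same $\delta'$-\mfolio\ relative to $S$, and the combining step you invoke is exactly \autoref{prop:RS13:2.3} (also imported from Robertson--Seymour), so (a) follows. The Menger setup for (b), the truncation to paths $P_1,\dots,P_\lambda$ internally disjoint from $S$, and the observation that pairwise-adjacent branch sets make every edge of $H$ come for free once each vertex of $H$ swallows a distinct $V(G_{j})$, are all correct and are indeed the routine part.

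The genuine gap is the step you yourself flag as an expectation: converting the $\lambda$ disjoint $S$--$V(G_i)$ paths, all of which terminate in the \emph{single} branch set $V(G_i)$, into $\lambda$ pairwise disjoint connected sets each containing one vertex of $S$ and each containing a \emph{distinct} branch set, while reserving enough untouched branch sets for the $\le\delta'$ non-root vertices and discarding the one branch set that may contain $v$. This linkage-and-counting argument is the entire content of (6.1); it is precisely what Robertson and Seymour prove (via an exchange argument that re-chooses the paths so that they reach distinct branch sets directly, at a cost governed by the $\lfloor\frac{3}{2}\lambda\rfloor$-type bookkeeping that produces the constant $\lfloor\frac{5}{2}\xi\rfloor+3\delta'+1$). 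Your proposed substitute --- partitioning $V(G_i)$ (plus a few auxiliary branch sets) into $\lambda$ connected pieces, each containing a prescribed entry point of some $P_k$ and each adjacent to a private fresh branch set --- does not work as stated: a connected graph need not admit a partition into connected pieces with prescribed representatives and prescribed external adjacencies, and minimality/maximality of $(A,B)$ gives you no control over the internal structure of $G_i$. Without the rerouting argument and the accounting that certifies the stated value of $t$, the proof of (b), and hence of the proposition, is incomplete.
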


\autoref{lem:supreach} and \autoref{prop:unimp} imply the following observation. 

\begin{observation}
Let $G$ be a rooted graph and let $Z=R(G)$ and $t\in {\mathbb N}$. Let $G_1,\ldots,G_t$ be mutually vertex-disjoint connected subgraphs such that for $1\leq i< j\leq t$ there is an edge of $G$ between $G_i$ and $G_j$. Let $(A,B)$ be a separation of $G$ satisfying the conditions $(i)-(iv)$ of \autoref{prop:RS13:6.1}. Then, $V(A)\cap V(B)$ is the unique \imsep{Z}{U_i} of minimum size, where $U_i=N_G(V(G_i))$. 
\end{observation}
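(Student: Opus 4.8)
The plan is to verify directly that $S:=V(A)\cap V(B)$ is a minimum-size \sep{Z}{U_i} that is moreover important, and then to read off uniqueness from \autoref{prop:unimp}. Note first that conditions $(i)$ and $(ii)$ together force $Z\cap V(G_i)=\emptyset$, and $(i)$ gives $S\cap V(G_i)=\emptyset$; hence $V(G_i)$ lies in $V(B)\setminus S$ and therefore in a single component of $G\setminus S$, which I will call $C_0$. Since $(A,B)$ is a separation, $G$ has no edge between $V(A)\setminus S$ and $V(B)\setminus S$; as $Z\subseteq V(A)$ and $U_i=N_G(V(G_i))\subseteq V(B)$ (every edge leaving $V(G_i)\subseteq V(B)\setminus V(A)$ belongs to $E(B)$), the set $S$ is a \sep{Z}{U_i}. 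Also, $G[V(G)\setminus V(G_i)]$ together with the subgraph of $G$ on $V(G_i)\cup U_i$ carrying all edges incident to $V(G_i)$ is a separation of order $|U_i|$ satisfying $(i)$ and $(ii)$, so $(iii)$ already gives $|S|\le|U_i|$.

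Next I would extract the structural consequences of $(iii)$ and $(iv)$. If $v\in S$ is such that $S\setminus\{v\}$ is still a \sep{Z}{U_i}, let $\widehat{C}$ be the component of $G\setminus(S\setminus\{v\})$ containing $V(G_i)$; then the separation $\bigl(G\setminus V(\widehat{C}),\,G[(S\setminus\{v\})\cup V(\widehat{C})]\bigr)$ has order $|S|-1$ and still satisfies $(i)$ and $(ii)$ (using $|S\setminus\{v\}|<|U_i|$, so $U_i\not\subseteq S\setminus\{v\}$ and hence $Z\cap V(\widehat{C})=\emptyset$), contradicting $(iii)$; so $S$ is an inclusion-minimal \sep{Z}{U_i}. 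The same construction applied to an arbitrary \sep{Z}{U_i} $S'$ --- first passing to an inclusion-minimal subset of it, which cannot contain a vertex of $V(G_i)$ (a minimal separator through such a vertex would leave, along the witnessing $Z$-$U_i$ path, a subpath from $Z$ to the first vertex of $U_i$ that it meets, and this subpath avoids the separator entirely) --- shows $|S|\le|S'|$, so $S$ has minimum size. Finally, if $V(B)\setminus S$ contained a component $C_1\ne C_0$ of $G\setminus S$, then moving $C_1$ to the $A$-side would keep $(i)$--$(iii)$ and strictly enlarge $A$, contradicting $(iv)$; hence $V(B)\setminus S=C_0$. In the notation of \autoref{lem:supreach}, taken with $V'=V(G_i)$ and $U=U_i$, this means $C_S=C_0$ and thus $V(A_S)=V(A)$.

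It remains to show that no \sep{Z}{U_i} dominates $S$. Suppose one did; passing to an inclusion-minimal subseparator $S'$ (which still dominates $S$), we get $|S'|\le|S|$, hence $|S'|=|S|$ since $S$ has minimum size. Applying \autoref{lem:supreach} with $X=S$ and $Y=S'$ gives $V(A)=V(A_S)\subsetneq V(A_{S'})$. But $\bigl(G\setminus V(C_{S'}),\,G[S'\cup V(C_{S'})]\bigr)$ is a separation of order $|S'|=|S|$ that satisfies $(i)$ and $(ii)$ (as $S'\cap V(G_i)=\emptyset$ and $Z\cap V(C_{S'})=\emptyset$), while its $A$-side $G\setminus V(C_{S'})$ has vertex set strictly larger than $V(A)$ --- contradicting the maximality of $A$ in $(iv)$. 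Hence $S$ is an \imsep{Z}{U_i}, and by the previous paragraph it has minimum size; by \autoref{prop:unimp} it is the unique such separator.

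The point that needs care is the collection of degenerate configurations --- a branch set $G_j$ reduced to a single vertex, a competing separator (almost) swallowing $U_i$, or $Z$ meeting the separator --- which is exactly why $(iii)$ and $(iv)$ are stated via ``minimum order'' and ``$A$ maximal'' instead of directly in the language of separators; in each of the arguments above these possibilities are ruled out by the inequality $|S|\le|U_i|$ together with the fact that inclusion-minimal \seps{Z}{U_i} are disjoint from $V(G_i)$. The remaining verifications --- that the pairs we build are genuine separations and that they satisfy $(i)$ and $(ii)$ --- are routine.
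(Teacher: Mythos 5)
Your argument is correct and is exactly the route the paper intends: the paper offers no written proof of this observation, merely asserting that it follows from \autoref{lem:supreach} and \autoref{prop:unimp}, and your write-up supplies precisely the missing verifications (that $V(A)\cap V(B)$ is an inclusion-minimal \sep{Z}{U_i} of minimum size with $V(A_S)=V(A)$, and that a dominating separator would violate condition $(iv)$ via \autoref{lem:supreach}). The only cosmetic caveat is the edge-disjointness of the separations you construct by taking induced subgraphs on both sides, but the paper uses the same convention throughout (e.g., in \autoref{lem:supreach} and \autoref{lem:pointtovalidsep}), so this is not a gap.
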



\begin{proposition}{\rm \cite[2.3]{RobertsonS95b}}
\label{prop:RS13:2.3}
Let $(A,B)$ be a separation of a graph $G$, $Z\subseteq V(G)$, and $\delta\in {\mathbb N}$. If $v\in V(B)$ is irrelevant to the $\delta$-\mfolio\ of $B$ relative to $V(A)\cap V(B)$, then $v$ is irrelevant to the $\delta$-\mfolio\ of $G$ relative to $Z$. 
\end{proposition}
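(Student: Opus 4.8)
The plan is to establish equality of the $\delta$-\mfolio\ of $G$ relative to $Z$ and the $\delta$-\mfolio\ of $G\setminus v$ relative to $Z$ by proving two inclusions. One direction is immediate: any minor model (respecting the root labels of $Z$) of a rooted graph $H$ in $G\setminus v$ is also such a model in $G$, so the $\delta$-\mfolio\ of $G\setminus v$ is contained in that of $G$. The whole content lies in the reverse inclusion. Write $S=V(A)\cap V(B)$. We may assume $v\in V(B)\setminus V(A)$ (a vertex of $S$ is a root of $B$, so the hypothesis would not have its intended meaning otherwise); then $(A,B\setminus v)$ is again a separation of $G\setminus v$ with interface $S$. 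We also assume, as in the applications, that $Z\subseteq V(A)$, so that all roots of any $H$ in the folio map into the $A$-side.

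First I would record a standard structural fact about minimal minor models: if $H$ has detail $\le\delta$ and $\eta$ is a minor model of $H$ in $G$ respecting $Z$ that is chosen minimal (each $\eta(h)$ induces a tree, one fixed edge realizes each edge of $H$, and the realization $C$ has as few vertices as possible, in particular as few vertices in $V(B)$ as possible), then $C$ is a subdivision of a graph $C^-$ whose vertices are the roots, the non-root branch vertices, and the branching/attachment vertices of the trees; hence the number of vertices and edges of $C^-$ is bounded by a function of $\delta$ and $|Z|$, and no subpath of $C^-$ can be rerouted to cross the separator $S$ fewer times.

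The core step is a composition lemma. Given such a minimal $\eta$, if $v\notin V(C)$ then $C\subseteq G\setminus v$ and $H$ is already in the $\delta$-\mfolio\ of $G\setminus v$; so assume $v\in V(C)$, hence $v\in V(C)\cap V(B)$. I would define a rooted graph $J$ with root set $S$ (each root mapped to itself) whose non-root vertices are the connected components of $(C\cap B)-S$ and whose edges record the adjacencies present in $C$ (between two $S$-vertices, or between a component and an $S$-vertex). By construction $J$ is a minor of $B$ relative to $S$, and using the structural fact above — in particular the bound on how often $C^-$ meets $S$ — one shows $J$ has detail $\le\delta$, so $J$ belongs to the $\delta$-\mfolio\ of $B$ relative to $S$. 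By the hypothesis that $v$ is irrelevant there, $J$ is also a minor of $B\setminus v$ relative to $S$; fix a witnessing model $\psi$. Finally I would glue: keep the images under $\eta$ of the roots and of all branch material lying in $V(A)$, and for each non-root vertex $x$ of $J$ replace the corresponding old component of $(C\cap B)-S$ by $\psi(x)\subseteq V(B)\setminus\{v\}$, re-attached to the same $S$-vertices. One verifies that this yields a minor model of $H$ in $G\setminus v$ respecting $Z$: branch sets remain pairwise disjoint, each remains connected because connectivity through $S$ is preserved (the attachment pattern to $S$ is exactly what the edges of $J$, hence $\psi$, encode), and every edge of $H$ is still realized. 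Thus $H$ lies in the $\delta$-\mfolio\ of $G\setminus v$ relative to $Z$, which completes the reverse inclusion.

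I expect the main obstacle to be precisely this composition lemma: showing on the one hand that the trace $J$ of a minimal model has detail at most $\delta$ (so that the hypothesis, which only concerns the $\delta$-\mfolio, is applicable), and on the other hand that an arbitrary model of $J$ in $B\setminus v$ can be reattached to the fixed $A$-part without destroying connectivity or disjointness of branch sets. The delicate bookkeeping — choosing the right minimality notion for the model and bounding the number of components of $(C\cap B)-S$ in terms of $\delta$ — is where the real work sits; once this interface statement is isolated, the argument above is routine. This is, in essence, the argument of Robertson and Seymour; the analogous gluing statement for topological minors and extended folios in this paper is \autoref{prop:equi_graph_replacement}.
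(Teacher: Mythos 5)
First, a point of comparison: the paper does not prove this proposition at all — it is imported verbatim as result (2.3) of Robertson and Seymour \cite{RobertsonS95b} — so your argument can only be judged on its own terms. As written it has two genuine gaps, both sitting exactly where you yourself say ``the real work sits,'' namely in the composition lemma. The first is that your trace $J$ is too coarse: taking the non-root vertices of $J$ to be the connected components of $(C\cap B)-S$ erases both the partition of $C$ into branch sets and the realizing edges of $H$ that lie inside $B$. Concretely, if $H$ is a single edge $h_1h_2$ realized by $\eta(h_1)=\{u\}$ and $\eta(h_2)=\{w\}$ with $u,w\in V(B)\setminus S$ and realizing edge $uw\in E(B)$, then $(C\cap B)-S$ has the single component $\{u,w\}$, so $J$ degenerates to one isolated non-root vertex; a model of $J$ in $B\setminus v$ is then just one connected set, and your gluing rule (``replace the old component by $\psi(x)$, re-attached to the same $S$-vertices'') cannot recover two disjoint branch sets joined by an edge. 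The trace must keep the components of each $\eta(h)\cap(V(B)\setminus V(A))$ separate per branch set and must carry, as edges, the realizing edges of $E(H)$ lying in $B$ together with the attachments to $S$ needed to keep each $\eta(h)$ connected.

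The second, more serious gap is the claim that the trace has detail at most $\delta$. You only establish a size bound ``by a function of $\delta$ and $|Z|$,'' but the hypothesis concerns the $\delta$-\mfolio\ of $B$, so you need a bound by $\delta$ alone; the natural counts instead involve $|S|=|V(A)\cap V(B)|$, which is unconstrained in the statement. For example, even in a vertex-minimal model, a single branch set of a detail-$1$ pattern can be forced to meet $V(B)\setminus V(A)$ in $|S|-1$ components, each bridging between distinct vertices of $S$ and none deletable without disconnecting the branch set; recording these components, or the connectivity they induce on $S$, costs on the order of $|S|$ non-roots or edges in the trace. So the hypothesis cannot be applied to the trace as you construct it, and the argument does not close. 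This is precisely the difficulty that the paper's genuinely proved analogue, \autoref{lem:folioafterdel}, sidesteps by enlarging the root set to all of $N_{G_H}[\phi(V(H))]$ (so the trace has no non-root vertices) and passing to the $4\delta$-\mfolio; supplying the corresponding interface argument at the same $\delta$ is the actual content of Robertson and Seymour's (2.3), and your write-up assumes it rather than proving it. A smaller issue: the statement does not hypothesize $Z\subseteq V(A)$, whereas your proof needs it; this is harmless for the paper's applications but is an added assumption you should flag.
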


\begin{lemma}
\label{lem:pointtovalidsep}
Let $G$ be a rooted graph and let $Z=R(G)$ with $\vert Z\vert \leq \xi$. Let $\delta' \in {\mathbb N}$, $t=\lfloor\frac{5}{2}\xi \rfloor+3\delta'+1$ and let $G_1,\ldots,G_t$ be mutually vertex-disjoint connected subgraphs such that for $1\leq i< j\leq t$ there is an edge of $G$ between $G_i$ and $G_j$. For each $i\in [t]$, let $V_i=V(G_i)$,  $U_i=N_G(V_i)$, and $Z_i$ be the unique \imsep{Z}{U_i} of minimum size. For each $i\in [t]$, let $(A_i,B_i)$ be the separation of $G$ such that $Z_i=A_i\cap B_i$, $A_i=G\setminus V(C)$, and $B_i=G[Z_i\cup V(C)]$, where 
$C$ is the connected component containing $V_i$ in $G\setminus Z_i$. Then, for any $I\subseteq [t]$ of cardinality at least $\xi+1$, at least one separation among $\{(A_r,B_r) \colon r\in I\}$ satisfies the conditions $(i)-(iv)$ of \autoref{prop:RS13:6.1}. 
\end{lemma}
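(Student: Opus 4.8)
The plan is to use the (unlabeled) observation stated just before this lemma to pin down one canonical separation, and then combine a counting argument with \autoref{lem:supreach} to show that many of the $(A_r,B_r)$ actually coincide with that canonical one. Throughout, let $\lambda^\star$ denote the minimum order of a separation $(A,B)$ of $G$ with $Z\subseteq V(A)$ and $V(A)\cap V(G_j)=\emptyset$ for some $j\in[t]$; such separations exist since at least one $G_j$ is root-free (there are $t-\vert Z\vert\geq 1$ of them).

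First I would fix a separation $(A^\star,B^\star)$ satisfying $(i)$--$(iv)$ of \autoref{prop:RS13:6.1}, and let $i^\star\in[t]$ be an index with $V(A^\star)\cap V(G_{i^\star})=\emptyset$ (it exists by $(i)$). By the observation preceding this lemma, $V(A^\star)\cap V(B^\star)=Z_{i^\star}$; in particular $Z_{i^\star}\cap V_{i^\star}=\emptyset$, so $V_{i^\star}$ lies in a single component $C_{i^\star}$ of $G\setminus Z_{i^\star}$, and $B^\star\setminus A^\star$ is a union of components of $G\setminus Z_{i^\star}$ that contains $C_{i^\star}$. If $B^\star\setminus A^\star\supsetneq C_{i^\star}$, then the separation obtained by letting $B=G[Z_{i^\star}\cup V(C_{i^\star})]$ and $A$ minimal with $A\cup B=G$ still has order $\lambda^\star$ and still satisfies $(i)$--$(iii)$, but has a strictly larger $V(A)$, contradicting $(iv)$; hence $B^\star\setminus A^\star=C_{i^\star}$ and $(A^\star,B^\star)=(A_{i^\star},B_{i^\star})$. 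Consequently $\lambda^\star=\vert Z_{i^\star}\vert$, and since $Z=R(G)$ is itself a $Z$-$U_{i^\star}$-separator of size at most $\xi$ (removing $Z$ leaves nothing on the $Z$-side), we get $\lambda^\star=\vert Z_{i^\star}\vert\leq\xi$.

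Next comes the counting step. For $j\in[t]$, if $V_j\not\subseteq C_{i^\star}$ then $V_j\cap Z_{i^\star}\neq\emptyset$: otherwise $V_j$ is connected in $G\setminus Z_{i^\star}$ and is joined by an edge of the $K_t$-model to $V_{i^\star}\subseteq C_{i^\star}$, whose endpoints both avoid $Z_{i^\star}$, forcing $V_j\subseteq C_{i^\star}$. Since the $V_j$ are pairwise disjoint and $\vert Z_{i^\star}\vert=\lambda^\star\leq\xi$, at most $\xi$ of the indices $j\in[t]$ have $V_j\not\subseteq C_{i^\star}$. As $\vert I\vert\geq\xi+1$, there is some $r\in I$ with $V_r\subseteq C_{i^\star}$; fix such an $r$. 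It remains to check that $(A_r,B_r)$ satisfies $(i)$--$(iv)$. Because $V_r\subseteq C_{i^\star}$ and the component $C_{i^\star}$ is disjoint from $V(A^\star)\supseteq Z$, the set $Z_{i^\star}$ is a $Z$-$U_r$-separator (a $Z$-$U_r$-path in $G\setminus Z_{i^\star}$ would have to enter $C_{i^\star}$). Moreover $\vert Z_{i^\star}\vert=\lambda^\star$ is the minimum size of a $Z$-$U_r$-separator: it is an upper bound by the previous sentence, and a lower bound because any $Z$-$U_r$-separator avoiding $V_r$ (by Menger's theorem a minimum one may be so chosen) induces a separation witnessing $(i)$ and $(ii)$, hence of order $\geq\lambda^\star$. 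Being of minimum size, $Z_{i^\star}$ is in particular a minimal $Z$-$U_r$-separator, and (using the standard fact that the minimum important $Z$-$N_G(V_r)$-separator does not meet the connected set $V_r$) $Z_r$ and $C_r$ are well defined with $\vert Z_r\vert=\lambda^\star$ and $V_r\subseteq C_r$, $Z\cap C_r=\emptyset$. This already gives $(i)$ (witness $r$), $(ii)$, and $(iii)$ for $(A_r,B_r)$.

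Finally, for $(iv)$ I would show $(A_r,B_r)=(A_{i^\star},B_{i^\star})$, which does satisfy $(iv)$. Suppose $Z_r\neq Z_{i^\star}$; both are minimum-size — hence minimal — $Z$-$U_r$-separators avoiding $V_r$, and by the submodular-lattice structure of minimum separators (the same phenomenon underlying \autoref{prop:unimp}) the important one $Z_r$ has the largest reach among them, so $R_G(Z,Z_{i^\star})\subsetneq R_G(Z,Z_r)$, i.e.\ $Z_r$ dominates $Z_{i^\star}$. Then \autoref{lem:supreach}, applied with $V'=V_r$, $U=U_r$, $X=Z_{i^\star}$, $Y=Z_r$, yields $V(A_{i^\star})=V(G)\setminus V(C_{i^\star})\subsetneq V(A_r)$; but $(A_r,B_r)$ satisfies $(i)$--$(iii)$, contradicting the maximality of $A_{i^\star}=A^\star$ guaranteed by $(iv)$. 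Hence $Z_r=Z_{i^\star}$, so $C_r=C_{i^\star}$ and $(A_r,B_r)=(A_{i^\star},B_{i^\star})$, completing the proof. The main obstacle is precisely this last identification step: it is where the uniqueness of the minimum important separator, the monotonicity in \autoref{lem:supreach}, and the maximality clause $(iv)$ have to be orchestrated together, and where one must be slightly careful that the relevant separators do not meet the connected model sets $V_j$.
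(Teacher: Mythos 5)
Your proof is correct and follows essentially the same route as the paper's: fix a separation satisfying $(i)$--$(iv)$, use the bound $\vert Z_{i^\star}\vert\le\xi$ together with disjointness of the model sets to find $r\in I$ whose model set avoids the separator, observe the separator is then a minimum \sep{Z}{U_r}, and conclude via uniqueness of the minimum important separator, \autoref{lem:supreach}, and the maximality clause $(iv)$ that $Z_r=Z_{i^\star}$. The only differences are presentational — you first identify the canonical separation with $(A_{i^\star},B_{i^\star})$ via the preceding observation and are somewhat more explicit about the separators avoiding $V_r$, neither of which changes the argument.
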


\begin{proof}
Let $(A,B)$ be a separation of $(G)$ satisfying  conditions $(i)-(iv)$ of \autoref{prop:RS13:6.1}. Let $i\in [t]$ be the index such that condition $(i)$ is true. That is $V(A)\cap V(G_i)=\emptyset$. Since $(Z,V(G))$ is a separation of $G$ satisfying conditions $(i)$ and $(ii)$, we have that $\vert V(A)\cap V(B)\vert \leq \xi$. Let $S=V(A)\cap V(B)$.  Notice that $S$ is a \sep{Z}{U_i}, where $U_i=N_G(V(G_i))$. Since $t>\vert I\vert$, there exists $j\in I$ 
such that $S\cap V(G_j)=\emptyset$. Also, since $G_i$ and $G_j$ are connected and there is an edge between $V(G_i)$ and $V(G_j)$, $S$  is a \sep{Z}{U_j} as well. Moreover, $S$ is a minimum size \sep{Z}{U_j}. Otherwise, we will contradict condition $(iii)$ of \autoref{prop:RS13:6.1} for the separation $(A,B)$. 
We claim that $S$ is the unique \imsep{Z}{U_j} of the minimum size. If $S$ is not an  \imsep{Z}{U_j}, then the unique  \sep{Z}{U_j} $Z_j$ of the minimum size  dominates $S$. Then,by  \autoref{lem:supreach}, we get a contradiction 
to the fact that $(A,B)$ satisfies condition $(iv)$ of \autoref{prop:RS13:6.1}. 
This implies that $(A_j,B_j)$ satisfies the conditions $(i)-(iv)$ of \autoref{prop:RS13:6.1}. This completes the proof of the lemma. 
\end{proof}

\begin{lemma}
\label{lem:folioafterdel}
Let $G$ be a rooted graph, $(A,B)$ be a separation of $G$. Let $H$ be a rooted graph in the $\delta$-folio of $G$ and $G_H$ be a realization of $H$ in $G$ witnessed by a pair of functions $(\phi,\psi)$. Let $Z=N_{G_H}[\phi(V(H))]$ and $R(G)\cup Z\subseteq V(A)$. Let $S\subseteq V(B)\setminus V(A)$ be such that $\delta'$-\mfolio\ of $B\setminus S$ relative to $V(A)\cap V(B)$ is generic where $\delta'=4\delta$. Then, $H$ belongs to the $\delta$-folio of $G\setminus S$. 
\end{lemma}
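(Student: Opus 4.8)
The plan is to keep the part of the given realization $G_H$ that lies in $A$ (which $S$ cannot touch), discard the excursions of the paths $\psi(e)$ into $B$, and re-create the lost connections through $B\setminus S$ using the genericity of its $\delta'$-\mfolio. To begin, I would record the immediate consequences of the hypotheses: since $R(G)\cup Z\subseteq V(A)$ and $S\subseteq V(B)\setminus V(A)$, the pair $(A,B\setminus S)$ is a separation of $G\setminus S$ with the same separator $V(A)\cap V(B)=:Q$, the root set and root labels of $G$ are unchanged in $G\setminus S$, and every branch vertex $\phi(h)$ together with each of its $G_H$-neighbours lies in $V(A)$; moreover any edge of $G$ with an endpoint in $V(B)\setminus V(A)$ has its other endpoint in $V(B)$, so the $V(A)$-endpoint of any edge crossing out of $A$ lies in $Q$. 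Hence for every $e=\{h,h'\}\in E(H)$ the path $\psi(e)$ either uses only edges of $A$, or its maximal initial subpath using only edges of $A$ is a nontrivial path $P^{\mathrm{in}}_e$ from $\phi(h)$ to some portal $q^{\mathrm{in}}_e\in Q$, and its maximal final subpath using only edges of $A$ is a nontrivial path $P^{\mathrm{out}}_e$ from some portal $q^{\mathrm{out}}_e\in Q$ to $\phi(h')$, with $q^{\mathrm{in}}_e\neq q^{\mathrm{out}}_e$ because $\psi(e)$ is simple. (The only delicate point is a branch vertex lying on $Q$, which forces an initial or final subpath above to be trivial; I would dispose of this degenerate case by a routine preparatory step, e.g.\ moving such incident edges to the $A$-side, and otherwise ignore it.)

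Next I would build a small minor-model request for the $B$-side. Let $M$ be the graph with vertex set $Q$, every vertex declared a root, and edge set $\{\{q^{\mathrm{in}}_e,q^{\mathrm{out}}_e\}\colon e\in E(H)\text{ using some edge of }B\}$. The subpaths $P^{\mathrm{in}}_e,P^{\mathrm{out}}_e$ over all $e$ are pairwise internally vertex-disjoint subpaths of a valid realization, and each portal $q^{\mathrm{in}}_e,q^{\mathrm{out}}_e$ is an \emph{internal} vertex of $\psi(e)$; thus these portals are pairwise distinct over all $e$, $M$ is simple and loopless, has no non-root vertex, and $|E(M)|\le|E(H)|\le\delta\le 4\delta=\delta'$, so $M$ has detail at most $\delta'$ and $|Q|=|V(A)\cap V(B)|$ roots. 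By the assumed genericity of the $\delta'$-\mfolio\ of $B\setminus S$ relative to $V(A)\cap V(B)$, the graph $M$ is a minor of $B\setminus S$ relative to $Q$; I fix a minor model $\mu$ with $q\in\mu(q)$ for all $q\in Q$. For each $e$ using some edge of $B$, the connected subgraphs $(B\setminus S)[\mu(q^{\mathrm{in}}_e)]$ and $(B\setminus S)[\mu(q^{\mathrm{out}}_e)]$ are joined by an edge, so their union contains a simple path $\Pi_e$ from $q^{\mathrm{in}}_e$ to $q^{\mathrm{out}}_e$; since the $\mu$-sets are pairwise disjoint and each contains precisely its own root, $\Pi_e$ meets $Q$ only in $\{q^{\mathrm{in}}_e,q^{\mathrm{out}}_e\}$ (its internal vertices lie in $V(B)\setminus V(A)$), and for distinct such $e,e'$ the paths $\Pi_e,\Pi_{e'}$ are vertex-disjoint because the four portals involved are distinct.

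Finally I would reassemble a realization $(\phi',\psi')$ of $H$ in $G\setminus S$: set $\phi'=\phi$; for $e$ using only edges of $A$ set $\psi'(e)=\psi(e)$; for $e$ using some edge of $B$ set $\psi'(e)=P^{\mathrm{in}}_e\,\Pi_e\,P^{\mathrm{out}}_e$. It then remains to verify the three requirements. Each $\psi'(e)$ is a simple path from $\phi(h)$ to $\phi(h')$ lying in $A\cup(B\setminus S)=G\setminus S$, since $P^{\mathrm{in}}_e$ and $P^{\mathrm{out}}_e$ are disjoint subpaths of the simple path $\psi(e)$ and each meets $\Pi_e$ only at its portal endpoint. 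The paths $\{\psi'(e)\}_{e\in E(H)}$ are pairwise internally vertex-disjoint: the retained $A$-segments are disjoint subpaths of the original realization; the $\Pi_e$ are mutually vertex-disjoint and meet each $A$-segment only at shared portals; and a portal $q^{\mathrm{in}}_e$ or $q^{\mathrm{out}}_e$ is an internal vertex of its $\psi(e)$, hence is not visited by any other original path. Finally, no branch vertex is internal to any $\psi'(e)$: branch vertices lie in $V(A)$ so cannot be internal vertices of any $\Pi_e$, and they were not internal to any original $\psi(e)$. As $\phi'$ is injective and root-preserving, $(\phi',\psi')$ witnesses that $H$ is a topological minor of $G\setminus S$, so $H$ belongs to the $\delta$-folio of $G\setminus S$.

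The essential idea — and the only real obstacle — is the observation that $\psi(e)$ may cross $V(A)\cap V(B)$ an unbounded number of times, so one cannot simply reuse the part of $G_H$ inside $B$; instead one must discard all of $\psi(e)$'s middle excursions and reconnect only its first and last portals, which produces a request pattern $M$ of detail at most $\delta$ that genericity can satisfy. Everything after that is careful but routine bookkeeping of disjointness, together with the minor technicality of branch vertices on the separator.
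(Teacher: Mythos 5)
Your proof is correct, but it takes a genuinely different route from the paper's. The paper enlarges the root set to $R(G)\cup Z$, forms the graph $H'$ on vertex set $Z$ obtained by subdividing each edge of $H$ at most twice (so that $H'$ records the branch vertices \emph{and} the first and last edges of each realization path), notes that $H'$ is a minor of the re-rooted graph with detail at most $4\delta$, transfers it to $G\setminus S$ via genericity together with Proposition 2.3 of Robertson--Seymour (the paper's \autoref{prop:RS13:2.3}), and then converts the resulting minor model back into a topological-minor realization. You instead keep the $A$-side of the given realization verbatim, cut each path at its first entry into and last exit from the separator, encode only these reconnection demands as a rooted graph $M$ on $V(A)\cap V(B)$, and apply genericity directly inside $B\setminus S$. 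Your version is more self-contained (no appeal to \autoref{prop:RS13:2.3}), only needs detail $\delta$ rather than $4\delta$ for the demand graph, and makes transparent why the hypothesis requires the \emph{closed neighbourhood} $Z=N_{G_H}[\phi(V(H))]$ to lie in $V(A)$ rather than just the branch vertices: without the neighbours in $V(A)$, a branch vertex sitting on the separator could be the "portal" of two different edges of $H$, and the two rerouted paths would then collide inside the same model set. The paper's version, by contrast, localizes the minor-to-topological-minor conversion at the subdivision vertices and reuses mfolio machinery already present elsewhere. The one soft spot in your write-up is the degenerate case of a trivial initial or final $A$-segment; it is indeed routine — since every edge of $G_H$ incident to a branch vertex has both endpoints in $Z\subseteq V(A)$, you can either shift those edges to the $A$-side of the separation (same separator) or, more cleanly, define the initial/final segments by maximal runs of \emph{vertices} in $V(A)$ rather than edges in $E(A)$, after which every portal is automatically an internal vertex of its path — but a complete proof would have to spell this out rather than gesture at it.
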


%

\begin{proof}
We begin by constructing a new rooted graph $G'$. The underlying graph of $G'$ is same as $G$, except that the set of roots 
will change. That is, the graph $G'$ is same as $G$, with $R(G')= R(G)\cup Z$. 
Let $\rho_{G'}$ be an arbitrary injective map from $R(G')$ to ${\mathbb N}$ such that $\rho_{G'}(v)=\rho_{G}(v)$ for all $v\in R(G)$. 
%
Now we prove that $H$ belongs to the $\delta$-folio of $G'\setminus S$ and this will imply that 
$H$ belongs the $\delta$-folio of $G\setminus S$. 
%
%
Towards that  we construct a new graph $H'$ on the vertex set $Z$.  There is an edge $\{u,v\}\in E(H')$ if and only if there is a path from $u$ to $v$ in $G_{H}$, with internal vertices from $V(G)\setminus Z$. That is, $H'$ is obtained by subdividing the edges $e$ of $H$ once, if $\varphi(e)$ is a path of length $2$ and by subdividing the edges $e$ of $H$ twice, if $\varphi(e)$ is a path of length strictly more than $2$. As a result we also consider that $V(H)\subseteq V(H')$. 
Since $|E(H)|+\isolated(H)\leq \delta$, the number of vertices as well as the number of edges in $H'$ is at most $4\delta=\delta'$. 
See \autoref{fig:exampleHprime} for an illustration. Notice that $Z\subseteq R(G')$. Moreover we set $R(H')=V(H')$ with a natural map $\rho_{H'}$, which is $\rho_{G'}$ restricted to the domain ${R(H')}$. That is $\rho_{H'}(v)=\rho_{G'}(v)$ for all $v\in Z$.  
For an ease of presentation, for any $u\in V(H')=R(H')$, we also use $u$ to denote the vertex $v$ in $G'\cup X$ such that $\rho_{G'}(v)=\rho_{H'}(u)$.
In other words, the graph $H'$ is  obtained from $G_{H}$ by contracting edges of $G_{H}$ which are incident with at least one vertex in $V(G_{H})\setminus Z$. This implies that the rooted graph $H'$ is a minor of $G'$ and it has detail $\leq \delta'$. Since the $\delta'$-\mfolio\ of $B\setminus S$ relative to $V(A)\cap V(B)$ is generic, by \autoref{prop:RS13:2.3} we have that $H'$ is a minor of $G'\setminus S$. That is,  there is minor model $\phi'$ of $H'$ in the rooted graph $G^{\star}=G'\setminus S$. Here $R(G^{\star})=R(G')$ and $\rho_{G^{\star}}=\rho_{G'}$. 
 That is, $\phi' \colon V(H')\rightarrow 2^{V(G^{\star})}$ such that $(i)$ for all $y\in V(H')$, $G^{\star}[\phi'(y)]$ is a connected graph, $(ii)$ for all distinct $y,y'\in V(H')$, $\phi'(y)\cap\phi'(y')=\emptyset$, $(iii)$ for all $\{y,y'\}\in E(H')$, there exist $u\in \phi'(y)$ and $v\in \phi'(y')$ such that $\{u,v\}\in E(G^{\star})$ and $(iv)$ for all $u\in R(H')=V(H')$, 
$\rho_{G^{\star}}^{-1}(\rho_{H'}(u))\in \phi'(u)$ (That is, the vertex $u$ in $Z$ will be in $\phi'(u)$). 
%

Since $H$ is a topological minor in $H'$, to prove that $H$ is also a topological 
minor in $G^{\star}$, it is enough to prove that $H'$ is a topological minor in $G^{\star}$. Towards that   we construct a pair of functions $(\phi_{H'},\varphi_{H'})$ as follows. For any $u\in V(H')$, $\phi_{H'}(u)=\rho_{G'}^{-1}(\rho_{H'}(u))=u$. (Notice that, here $u$ also belongs  to $ Z\subseteq V(G^{\star})$). For any $\{u,v\}\in E(H')$ with $u\in V(H)$, $\{u,v\}$ is also present in $G^{\star}$. So we set $\varphi_{H'}(\{u,v\})=u-v$ whenever $u\in V(H)$. Now consider an edge of the form $\{u',v'\}\in E(H')$, where $u',v'\in V(H')\setminus V(H)$. In this case both $u'$ 
 and $v'$ have degree exactly $2$ in $H'$ and there is a path $u-u'-v'-v$ in $H'$ with $u,v\in V(H)$. Notice that $\{u,u'\},\{v,v'\}\in E(G^{\star})$ and 
there is an edge $\{w,w'\}\in E(G^{\star})$ where $w\in \phi'(u')$ and $w'\in \phi'(v')$ (because $\phi'$ is a minor model of $H'$ in $G^{\star}$). 
This implies that there a path $P_{u'v'}$ from $u'$ to $v'$ in $G^{\star}$ using internal vertices from $\phi'(u')\cup \phi'(v')$. We set $\varphi_{H'}(\{u',v'\})=P_{u'v'}$. This completes the definition  $(\phi_{H'},\varphi_{H'})$ and it witnesses that $H'$ is a topological minor in $G^{\star}$. In turn, this completes the proof of the lemma. 
\end{proof}

\section{Graphs with Large Clique Minor}\label{sec:cliqueminor}


In this section we prove that if there is a large clique minor in the input graph of \FindFolio, then there is a small $\delta$-representative. After that, we prove that if there is a large clique minor, then we can actually find a $(\delta,k)$-irrelevant vertex by employing \autoref{thm:main1General}. We remark that the lemma about the $(\delta,k)$-irrelevant vertex is not used for \autoref{thm:main1General}. 
To handle the case considered in this section we need to define an additional variant of \FindFolio.
Specifically, we define the problem \cFindFolio\ analogous to \FindFolio. Here the input, along with an instance $(G,\delta)$ of \FindFolio, also consists of an integer $t\in\mathbb{N}$ and a function $\phi: V(K_t)\rightarrow 2^{V(G)}$ that witnesses that $K_t$ is a clique minor of $G$.

\begin{lemma}\label{lem:representativeClique}
There exists a computable function $h'$ such that for any instance $(G,\delta,t,\phi)$ of \cFindFolio\ where $|R(G)|\leq \boundary$ and $t\geq h'(\delta)$, there is a vertex $v\in V(G)$ such that $G\setminus v$ is a  $\delta$-representative of $G$ (with the same set of roots), where $h'(\delta)=2^{\OO((\delta+\boundary)^2)}$.
\end{lemma}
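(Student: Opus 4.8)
The plan is to combine the irrelevant-vertex theorem for minor-folios of Robertson and Seymour (\autoref{prop:RS13:6.1}) with \autoref{lem:folioafterdel}, which converts genericity of the minor-folio of a ``deleted'' side of a separation into membership in the topological-minor-folio. By \autoref{obs:flatfolioonly}(a) it suffices to produce a vertex $v\in V(G)\setminus R(G)$ that is irrelevant to the $\delta^{\star}$-folio of $G$ with respect to $R(G)$, where $\delta^{\star}=\delta+|R(G)|\le\delta+\boundary$; for then $G\setminus v$ still has root set $R(G)$, has the same extended $\delta$-folio as $G$, and hence is a $\delta$-representative of $G$.

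First I would \emph{enrich the roots}. For every rooted graph $H$ in the $\delta^{\star}$-folio of $G$ fix one realization $G_H$ of $H$ in $G$, witnessed by $(\phi_H,\varphi_H)$, and put $Z_H=N_{G_H}[\phi_H(V(H))]$. Since $G_H$ is a subdivision of $H$, every branch vertex $\phi_H(h)$ has exactly $d_H(h)$ neighbours in $G_H$, so $|Z_H|\le |V(H)|+2|E(H)|\le 4\delta^{\star}$; and by \autoref{prop:no.ofgraphsinfolios} the number of non-isomorphic graphs in the $\delta^{\star}$-folio is $2^{\OO((\delta^{\star})^2)}\cdot|R(G)|^{\OO(\delta^{\star})}=2^{\OO((\delta+\boundary)^2)}$, using $|R(G)|\le\boundary=16\delta^2$. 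Hence $Z:=\bigcup_H Z_H$ has size $2^{\OO((\delta+\boundary)^2)}$, and so does $Z':=R(G)\cup Z$; write $\xi=|Z'|$. Now regard $G$ as a rooted graph with root set $Z'$ (extending $\rho_G$ injectively to $Z'$), and set $G_j=G[\phi(j)]$ for $j\in V(K_t)$, which gives mutually vertex-disjoint connected subgraphs of $G$ with an edge of $G$ between any two of them. Define $h'(\delta):=\lfloor\tfrac52\xi\rfloor+12\delta^{\star}+1=2^{\OO((\delta+\boundary)^2)}$, so that the hypothesis $t\ge h'(\delta)$ is exactly what \autoref{prop:RS13:6.1} needs with root set $Z'$ and parameter $\delta'=4\delta^{\star}$ (note $\lfloor\tfrac52\xi\rfloor+3\delta'+1=h'(\delta)$).

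Next I would apply \autoref{prop:RS13:6.1}. Since $\xi<t$ and the $V(G_j)$ are pairwise disjoint, at most $\xi$ of them meet $Z'$, so there is $j_0$ with $V(G_{j_0})\cap Z'=\emptyset$; then $(G\setminus V(G_{j_0}),\,G[V(G_{j_0})\cup N_G(V(G_{j_0}))])$ is a separation satisfying conditions $(i)$ and $(ii)$, and taking a separation $(A,B)$ that satisfies $(i),(ii)$, has minimum order $(iii)$, and has $A$ inclusion-maximal $(iv)$, the proposition applies. It yields that every $v\in V(B)\setminus V(A)$ is irrelevant to the $(4\delta^{\star})$-\mfolio\ of $G$, and moreover the $(4\delta^{\star})$-\mfolio\ of $B\setminus v$ relative to $V(A)\cap V(B)$ is generic. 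The set $V(B)\setminus V(A)$ is nonempty (it contains $V(G_i)$ for the index $i$ witnessing $(i)$, since $V(G_i)\cap V(A)=\emptyset$ forces $V(G_i)\subseteq V(B)$), so fix any such $v$; note $v\notin Z'\supseteq R(G)$, hence $G\setminus v$ indeed has root set $R(G)$.

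Finally I would check that $v$ is irrelevant to the $\delta^{\star}$-folio of $G$ with respect to $R(G)$. The inclusion of the $\delta^{\star}$-folio of $G\setminus v$ in that of $G$ is immediate. For the converse, take any $H$ in the $\delta^{\star}$-folio of $G$ and apply \autoref{lem:folioafterdel} with this $H$, its fixed realization $G_H$, the separation $(A,B)$ above, $S=\{v\}$, and with that lemma's folio parameter taken to be $\delta^{\star}$ (so its ``$\delta'=4\delta$'' becomes $4\delta^{\star}$). Its hypotheses hold: $R(G)\cup Z_H\subseteq Z'\subseteq V(A)$ by construction and condition $(ii)$; $\{v\}\subseteq V(B)\setminus V(A)$; and the $(4\delta^{\star})$-\mfolio\ of $B\setminus v$ relative to $V(A)\cap V(B)$ is generic by the previous step. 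Hence $H$ lies in the $\delta^{\star}$-folio of $G\setminus v$, completing the argument. I expect the only real work to be bookkeeping: pinning down the size bound on $Z$ (and thereby the exact value of $h'$) via \autoref{prop:no.ofgraphsinfolios}, and keeping straight at each invocation which folio is meant --- minor versus topological minor, with which parameter, and with respect to which root set --- since all the genuine content is already packaged in \autoref{prop:RS13:6.1} and \autoref{lem:folioafterdel}.
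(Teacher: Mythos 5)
Your proposal is correct and follows essentially the same route as the paper: enrich the root set with the closed neighbourhoods of the branch vertices of one fixed realization per folio element, apply \autoref{prop:RS13:6.1} to obtain a separation $(A,B)$ and a vertex $v\in V(B)\setminus V(A)$ with generic $(4\delta')$-\mfolio, and then restore folio membership after deleting $v$ via \autoref{lem:folioafterdel}. The only difference is bookkeeping: the paper enumerates all pairs $(X,H)$ with $X\in\allgraphs(R(G))$ and enriches with $Z_{H,X}$ in $G\cup X$, whereas you first pass to the non-extended $\delta^{\star}$-folio via \autoref{obs:flatfolioonly}(a); both yield the same bound $h'(\delta)=2^{\OO((\delta+\boundary)^2)}$.
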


\begin{proof}
We begin by constructing a new rooted graph $G'$. The graph $G'$ is same as the graph $G$, with $R(G')\supseteq R(G)$. First, we add $R(G)$ to $R(G')$. Then, for each graph $X$ on $R(G)$ and  $H$ in the $\delta$-folio of $G\cup X$, we add a set $Z_{H,X}$ to $R(G')$, which is defined as follows. 
Let $G_{H,X}$ be an arbitrary realization of $H$ in $G\cup X$ and  $(\phi_H,\varphi_H)$ be the corresponding pair of functions  which witnesses that $H$ is a topological minor of $G$. The set $Z_{H,X}$ is equal to $N_{G_H}[\phi_H(V(H))]$. Let $\rho_{G'}$ be an arbitrary injective map from $R(G')$ to ${\mathbb N}$ such that $\rho_{G'}(v)=\rho_{G}(v)$ for all $v\in R(G)$. By \autoref{prop:no.ofgraphsinfolios}, and the fact that $\vert V(X)\vert \leq \vert R(G)\vert \leq \boundary$, we have that $\vert R(G')\vert$ is upper bounded by $2^{\OO((\delta+\boundary)^2)}\cdot  \vert R(G)\vert^{\OO(\delta)}=\alpha'(\delta)$, for some computable function $\alpha'$. 
Let $\xi=\alpha'(\delta)$, $\delta'=4\delta$ and $k'=\lfloor\frac{5}{2}\xi \rfloor+3\delta'+1=2^{\OO((\delta+\boundary)^2)}\cdot  (\boundary)^{\OO(\delta)}=2^{\OO((\delta+\boundary)^2)}$. We choose the function $h'$ such that $h'(\delta)=k'$.  We are given a minor model $\phi$ of $K_t$ in $G$. 
Notice that since $\vert R(G')\vert < t$, $(G'[R(G')], G'\setminus E(R(G')))$ is a separation of $G'$  satisfying conditions $(i)$ and $(ii)$ of \autoref{prop:RS13:6.1}. This implies that there exists a separation $(A,B)$ of $G'$ such that conditions $(i)$--$(iv)$ of \autoref{prop:RS13:6.1} hold. Moreover, $(A\cup X,B)$ is a separation of $G'\cup X$, for any graph $X$ on $R(G)$ such that  conditions $(i)$--$(iv)$ of \autoref{prop:RS13:6.1} hold (because $R(G')\subseteq A$). Then by \autoref{prop:RS13:6.1}, we know that there is a vertex $v\in V(B)\setminus V(A)$ such that $(a)$ the $\delta'$-minor folio of $B\setminus v$  relative to $V(A)\cap V(B)$ 
is generic (for any graph $X$ on $R(G)$). 

We claim that $G\setminus v$ is a $\delta$-representative of $G$ (with the same set of roots). Clearly $v\notin R(G')$ and hence $R(G)\subseteq V(G\setminus v)$. Towards proving that $G\setminus v$ is a $\delta$-representative of $G$, it is enough to prove that for each graph $X$ on $R(G)$  and $H$ in the $\delta$-folio of $G\cup X$, $H$ is also a topological minor in $(G\cup X)\setminus v$. Fix an  arbitrary graph $X$ and another arbitrary graph $H$ in the $\delta$-folio of $G\cup X$. Consider the function $(\phi_H,\varphi_H)$ and the subgraph $G_{H,X}$ which is the (above mentioned fixed) realization of $H$ in $G\cup X$. We know that $(b)$ $Z_{H,X}\subseteq R(G')\subseteq V(A)$.  Therefore, by \autoref{lem:folioafterdel} and statements $(a)$ and $(b)$, we conclude that $H$ belongs to the $\delta$-folio of $(G\cup X)\setminus v$. This completes the proof of the lemma. \end{proof} 

\begin{center}

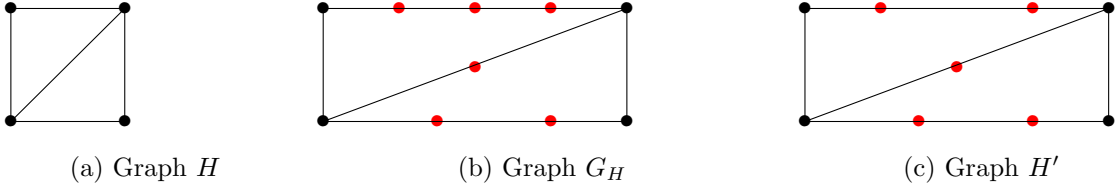
\begin{figure}
\begin{subfigure}{.25\linewidth}
\begin{tikzpicture}[scale=1]
\node[] (a1) at (0,0) {$\bullet$}; 
\node[] (a2) at (0,1.5) {$\bullet$}; 
\node[] (a3) at (1.5,0) {$\bullet$}; 
\node[] (a4) at (1.5,1.5) {$\bullet$};
\draw (0,0)--(0,1.5)--(1.5,1.5)--(1.5,0)--(0,0)--(1.5,1.5);
\end{tikzpicture}
\caption{Graph $H$}
\end{subfigure}
\begin{subfigure}{.39 \linewidth}
\begin{tikzpicture}[scale=1]
\node[] (a) at (0,0) {$\bullet$}; 
\node[] (a) at (0,1.5) {$\bullet$}; 
\node[] (a) at (4,0) {$\bullet$}; 
\node[] (a) at (4,1.5) {$\bullet$};
\node[red] (a) at (1,1.5) {$\bullet$};
\node[red] (a) at (2,1.5) {$\bullet$};
\node[red] (a) at (3,1.5) {$\bullet$};
\node[red] (a) at (1.5,0) {$\bullet$};
\node[red] (a) at (3,0) {$\bullet$};
\node[red] (a) at (2,0.72) {$\bullet$};
\draw (0,0)--(0,1.5)--(4,1.5)--(4,0)--(0,0)--(4,1.5);

\end{tikzpicture}
\caption{Graph $G_{H}$}
\end{subfigure}
\begin{subfigure}{.32 \linewidth}
\begin{tikzpicture}[scale=1]
\node[] (a) at (0,0) {$\bullet$}; 
\node[] (a) at (0,1.5) {$\bullet$}; 
\node[] (a) at (4,0) {$\bullet$}; 
\node[] (a) at (4,1.5) {$\bullet$};
\node[red] (a) at (1,1.5) {$\bullet$};
\node[red] (a) at (3,1.5) {$\bullet$};
\node[red] (a) at (1.5,0) {$\bullet$};
\node[red] (a) at (3,0) {$\bullet$};
\node[red] (a) at (2,0.72) {$\bullet$};
\draw (0,0)--(0,1.5)--(4,1.5)--(4,0)--(0,0)--(4,1.5);

\end{tikzpicture}
\caption{Graph $H'$}
\end{subfigure}
\caption{Illustration of the construction of $H'$ from $H$ and $G_{H}$.}
\label{fig:exampleHprime}
\end{figure}
\end{center}

Now we explain how to find a $(\delta,k)$-irrelevant vertex in the presence of a large clique minor. That is, we prove  
\autoref{thm:dellargeclique}. For convenience, we restate the theorem.

\dellargeclique*


\begin{proof}
Let $\xi=\vert R(G)\vert$. 
Towards the proof of the lemma, we need to design an algorithm to find a vertex $v$ such that for any graph $X$ on $R(G)$ and a vertex subset $S\subseteq V(G)$ of size at most $k$, the $\delta$-folio of $G'=(G\cup X)\setminus S$ (where $R(G')=R(G)\setminus S$), is equal to the the $\delta$-folio of $G' \setminus v$. Notice that the number of choices for $X$ is bounded by $2^{{\xi}^2}$, but the number of choices for $S$ is $\Omega(n^k)$. 
Towards the proof of the lemma, we  design a recursive algorithm ${\cal A}$ that marks at most ${f(\xi,\delta,k)}$ ``relevant'' graphs from $\{G_1,\ldots,G_t\}$. Since $t > {f(\xi,\delta,k)}$, there will be at least one graph $G_i$ that is not marked by the algorithm.
Then, we prove that any vertex in $V(G_i)$ is $(\delta,k)$-irrelevant. 


Our recursive algorithm ${\cal A}$ has the following specifications. 
\begin{itemize}
\setlength{\itemsep}{-2pt}
\item Input is a rooted graph $G^{\star}$, two integers $k,\delta\in {\mathbb N}$ and mutually vertex-disjoint connected subgraphs $G_1,\ldots,G_t$ such that for $1\leq i< j\leq t$ there is an edge of $G$ between $G_i$ and $G_j$ (i.e., $(G_1,\ldots,G_t)$ is a minor model of $K_t$ in $G$), where $t> f(\xi,\delta,k)$ and $\xi=\vert R(G^{\star})\vert$.   
\item Running time of ${\cal A}$ is $g(\xi,\delta,k)\cdot n^3$ for some computable function $g$. 
\item It marks at most ${f(\xi,\delta,k)}$ graphs from $\{G_1,\ldots, G_t\}$ with the following property: for any vertex $v$ in an unmarked graph $G_i$, for any graph $X$ on $R(G^{\star})$, and for any vertex subset $S\subseteq V(G^{\star})$ of size at most $k$, the $\delta$-folio of 
$G'=(G^{\star}\cup X)\setminus S$ (where $R(G')=R(G^{\star})\setminus S$) is same as the $\delta$-folio of 
$G'\setminus v$. That is, the extended $\delta$-folio of $G'$ and $G'\setminus v$ are same. 
\end{itemize}

Finally, 
we use algorithm ${\cal A}$ on  $(G,k,\delta,(G_j)_{j\in [t]})$ to find a $(\delta,k)$-irrelevant vertex.

\paragraph*{Description of ${\cal A}$: } Let $c$ be a non-negative integer constant such that the number of distinct graphs in the $\delta$-folio of $G^{\star}$ is at  most $2^{c\delta^2}\cdot \vert R(G^{\star})\vert ^{c\delta}$ (see \autoref{prop:no.ofgraphsinfolios}). Let $\beta \colon {\mathbb N}\times {\mathbb N} \mapsto {\mathbb N}$ be a function defined as $\beta (\delta,\xi)=2^{\xi^2}(2^{c\delta^2}\cdot \xi ^{c\delta})4\delta$. Let $\eta=\beta(\delta,\xi)$. 
%
We choose a monotonically increasing function $f$ satisfying the following recurrence relation. 
\begin{eqnarray*}  
f(\xi,\delta,k)&=& \eta(2\eta+k+2)+ 2\cdot 4^{\eta+k} (2\eta+k+1)^{k+3}f(\eta+k,\delta,k-1)\\
f(\xi,\delta,0)&=& \eta(2\eta+k+2)
\end{eqnarray*}
For each $i\in [t]$, let $V_i=V(G_i)$ and $U_i=N_G(V_i)$. The algorithm has following steps. 
\begin{enumerate}[label= {\bf Step \arabic*}, align=left]
\item 
\label{step111}
Using \autoref{thm:main1General} we computes the extended $\delta$-folio of $G^{\star}$\footnote{Even though in \autoref{thm:main1General}, $\vert R(G^{\star})\vert \leq 16\delta^2$, we can use it by choosing $\delta^{\star}\geq \delta$ such that $\vert R(G)\vert \leq 16 (\delta^{\star})^2$ and get the extended $\delta^{\star}$-folio which is a super set of the extended $\delta$-folio of $G^{\star}$.}. Using \autoref{thm:main1General}, we also get the following: for each graph $X$ on $R(G^{\star})$ and $H$ in the $\delta$-folio of $G^{\star}\cup X$, a realization $G_H$ of $H$ as a topological minor in $G^{\star}\cup X$, witnessed by $(\phi_{X,H},\psi_{X,H})$. Let 
$Z_0=\bigcup_{X,H}N_{G_H}[\phi_{X,H}(V(H))]$ where the union is over over graphs $X$ in $\allgraphs(R(G^{\star}))$  
and $H$ in the $\delta\mbox{-folio of }G^{\star}\cup X$. 
Observe that $R(G^{\star})\subseteq Z_0$. 
By \autoref{prop:no.ofgraphsinfolios} 
and the fact that $\vert \allgraphs(R(G^{\star}))\vert \leq 2^{\xi^2}$, we have that  $\vert Z_0\vert$ is upper bounded by $2^{\xi^2}(2^{c\delta^2}\cdot \xi ^{c\delta})4\delta=\beta(\delta,\xi)=\eta$. 
\item For each $i\in [t]$, if $V_i\cap Z_0\neq \emptyset$, then we mark $G_i$. 
\label{stepfirstmark}
\item  For each $i\in [2\eta+k+1]$ do the following. Using \autoref{prop:unimp} compute the unique  \imsep{Z_0}{U_i}  $Z_i$ of minimum size in $G$. 
 Since $\vert Z_0\vert \leq \eta$, $\vert Z_i\vert \leq \eta$ for all $i\in [2\eta+k+1]$. 
For each $j\in [t]$, if $V_j\cap Z_i \neq \emptyset$, then we mark $G_j$. 
Let $I_i=\{ j \in [t] \colon V_j\cap Z_i=\emptyset\}$.  
\label{stepuniqueimpsepmark}
\item  If  $k=0$, we stop. Otherwise go to the next step. \label{step:stop}
\item For each $i\in [2\eta+k+1]$, let $(A_i,B_i)$ be the separation of $G^{\star}$ such that $Z_i=V(A_i)\cap V(B_i)$, 
$A_i=G^{\star}\setminus V(C)$, 
and $B_i=G^{\star}[Z_i\cup V(C)]$, where  $C$ is the connected component  containing $V_i$ in $G^{\star}\setminus Z_i$.  Now, for any $i\in [2\eta+k+1]$, any subset $D\subseteq Z_i$ of size at most $k$,  and any non-negative integer $k'<k$, recursively run algorithm ${\cal A}$ on $(B_i\setminus D,k', \delta ,(G_j)_{j\in I_i})$, where $R(B_i\setminus D)=Z_i\setminus D$. 
\label{step:rec1}
\item For each $i,j\in [2\eta+k+1]$, using \autoref{prop:impsepcout}, we compute the set ${\cal Q}_{i,j}$ of all  \imseps{Z_i}{U_j} of size at most $\vert Z_i\vert+k$. 
\label{step:imp}
\item For each $i,j\in [2\eta+k+1]$ and each $Q\in {\cal Q}_{i,j}$ we do the following. Let $(A_Q,B_Q)$ 
be the separation of $G^{\star}$ such that $Q=V(A_Q)\cap V(B_Q)$, $A_Q=G^{\star}\setminus V(C)$, and $B_Q=G^{\star}[Q\cup V(C)]$, where $C$ is the connected component containing $V_j$ in $G^{\star}\setminus Q$.  
Let $J_Q=\{q\in [t] \colon V_q\cap Q=\emptyset\}$. 
For  any subset $D\subseteq Q$ of size at most $k$,  and any non-negative integer $k'<k$, recursively run algorithm ${\cal A}$ on $(B_Q\setminus D,k',\delta,(G_q)_{q\in J_Q})$, where $R(B_Q\setminus D)=Q\setminus D$.
\label{step:rec2}
\end{enumerate}

\paragraph*{Correctness of ${\cal A}$: }
Let ${\cal G}=\{G_1,\ldots,G_t\}$.
Notice that when $k=0$, algorithm ${\cal A}$ will not make any recursive calls. Moreover, when $k>0$,  in each recursive call the value of $k$ drops by at least one. This implies that algorithm ${\cal A}$ will terminate. Next we prove that algorithm ${\cal A}$ make recursive calls with valid inputs. That is, for each recursive call on input 
$(G^{\star}_1,k',\delta, (G_j')_{j\in [t']})$, $t'>f(\xi',\delta,k')$ where $\xi'=R(G^{\star}_1)$.  First, consider the recursive calls made in \ref{step:rec1}. Fix an index $i\in [2\eta+k+1]$, a subset $D\subseteq Z_i$, and a non-negative integer $k'<k$. Since $\vert Z_i\vert \leq \eta$ and graphs in ${\cal G}$ are pairwise vertex-disjoint, $\vert I_i\vert \geq t-\eta$. Since $t>f(\xi,\delta,k)=\eta(2\eta+k+2)+ 2\cdot 4^{\eta+k} (2\eta+k+1)^{k+3}f(\eta+k,\delta,k-1)$, we have that 
$\vert I_i\vert > f(\eta,\delta,k')$. Moreover, $\vert R(B_i\setminus D)\vert \leq \eta$. Since $G^{\star}[\bigcup_{j\in I_i} V(G_j)]$ is a connected graph, $i\in I_i$, and $V(G_j)\cap Z_i =\emptyset$, $G_j$ is a subgraph of $B_i\setminus D$ for all $j\in I_i$. This implies that $(B_i\setminus D, k',\delta,(G_j)_{j\in I_i})$ is a valid input for algorithm ${\cal A}$. 
Hence, all recursive calls in \ref{step:rec1} are on valid inputs. 

Now consider the recursive calls in \ref{step:rec2}.  Fix indices $i,j\in [2\eta+k+1]$, a separator $Q\in {\cal Q}_{i,j}$, a subset $D\subseteq Q$, and a non-negative integer $k'<k$. We know that $\vert Q\vert \leq \vert Z_i\vert +k \leq \eta+k$. This implies that $\vert J_Q\vert \geq t-(\eta+k)> f(\eta+k,\delta,k')$. By arguments similar to that in the above paragraph, we can show that $G_q$ is a subgraph of $B_Q\setminus D$ for any $q\in J_Q$. Moreover, 
$\vert R(B_Q\setminus D)\vert \leq \eta+k$. 
This implies that $(B_Q\setminus D, k',\delta,(G_q)_{q\in J_Q})$ is a valid input for algorithm ${\cal A}$. 
Hence, all recursive calls in \ref{step:rec2} are on valid inputs.

\begin{claim}
Algorithm ${\cal A}$ marks at most ${f(\xi,\delta,k)}$ graphs from ${\cal G}$.
\end{claim}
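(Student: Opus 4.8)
The plan is to establish the Claim by induction on $k$, using the recurrence that defines $f$ as the inductive template. The first step is to partition the marks made by ${\cal A}$ into two sources: the \emph{direct} marks made in \ref{stepfirstmark} and \ref{stepuniqueimpsepmark}, and the \emph{inherited} marks made inside the recursive calls invoked in \ref{step:rec1} and \ref{step:rec2}. Marking is idempotent, so an upper bound obtained by summing over these sources (ignoring possible overlaps) is still valid; this is exactly what lets us match the recurrence term by term.

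For the direct marks, the key observation is that $G_1,\ldots,G_t$ are pairwise vertex-disjoint, so any vertex set of size $s$ can intersect at most $s$ of them. In \ref{stepfirstmark} at most $|Z_0|\leq\eta$ graphs are marked, since $|Z_0|\leq\eta=\beta(\delta,\xi)$ by \autoref{prop:no.ofgraphsinfolios} together with $|\allgraphs(R(G^{\star}))|\leq 2^{\xi^2}$. In \ref{stepuniqueimpsepmark}, for each of the $2\eta+k+1$ indices $i$ the separator $Z_i$ has size at most $\eta$ (as already noted in that step, since $Z_i$ is a minimum $Z_0$-$U_i$-separator and $Z_0$ itself is such a separator of size $\leq\eta$), so at most $\eta(2\eta+k+1)$ graphs are marked there. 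Hence the number of direct marks is at most $\eta+\eta(2\eta+k+1)=\eta(2\eta+k+2)$, which is precisely the non-recursive term of the recurrence. Since by \ref{step:stop} the algorithm makes no recursive calls when $k=0$, this already settles the base case.

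For the inductive step, assume the Claim for $k-1$ (and all $\xi,\delta$); fix a run with parameter $k\geq 1$. The direct marks number at most $\eta(2\eta+k+2)$ as above, and all other marks arise inside recursive calls. The calls in \ref{step:rec1} are indexed by triples $(i,D,k')$ with $i\in[2\eta+k+1]$, $D\subseteq Z_i$ of size $\leq k$, and $k'\in\{0,\ldots,k-1\}$; since $|Z_i|\leq\eta$, there are at most $(2\eta+k+1)(\eta+1)^{k}k$ of them. The calls in \ref{step:rec2} are indexed by tuples $(i,j,Q,D,k')$ with $i,j\in[2\eta+k+1]$, $Q\in{\cal Q}_{i,j}$, $D\subseteq Q$ of size $\leq k$, and $k'\in\{0,\ldots,k-1\}$; by \autoref{prop:impsepcout} we have $|{\cal Q}_{i,j}|\leq 4^{|Z_i|+k}\leq 4^{\eta+k}$ and, since $|Q|\leq\eta+k$, at most $(\eta+k+1)^{k}$ choices of $D$, so there are at most $(2\eta+k+1)^{2}4^{\eta+k}(\eta+k+1)^{k}k$ of them. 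Using $\eta+k+1\leq 2\eta+k+1$ and $k\leq 2\eta+k+1$, the two contributions together are at most $2\cdot 4^{\eta+k}(2\eta+k+1)^{k+3}$. As established in the correctness discussion, every such call is on a valid input, has recursion parameter $k'\leq k-1$, has root set of size at most $\eta+k$, and receives as its minor-model the graphs $(G_j)_{j\in I_i}$ (resp.\ $(G_q)_{q\in J_Q}$), each of which is an intact subgraph of $B_i\setminus D$ (resp.\ $B_Q\setminus D$) and hence genuinely belongs to ${\cal G}$. By the induction hypothesis and the monotonicity of $f$, each such call marks at most $f(\eta+k,\delta,k-1)$ graphs from ${\cal G}$. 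Adding the direct marks yields a total of at most $\eta(2\eta+k+2)+2\cdot 4^{\eta+k}(2\eta+k+1)^{k+3}f(\eta+k,\delta,k-1)=f(\xi,\delta,k)$, completing the induction.

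The main obstacle I anticipate is bookkeeping rather than conceptual. One must make sure (i) that the recursively marked objects are literally the same graphs $G_j$ — this is precisely why the ``valid input'' verification also records that each $G_j$ with $j\in I_i$, resp.\ each $G_q$ with $q\in J_Q$, is a subgraph of the graph on which the recursion runs — and (ii) that the count of indexing tuples for the recursive calls fits under the coefficient $2\cdot 4^{\eta+k}(2\eta+k+1)^{k+3}$, which requires carefully using $|Z_i|\leq\eta$, $|Q|\leq\eta+k$, and $|{\cal Q}_{i,j}|\leq 4^{\eta+k}$ from \autoref{prop:impsepcout}, and then absorbing the stray polynomial factors (the factor $k$, the factors $(\eta+1)^{k}$ and $(\eta+k+1)^{k}$) into $(2\eta+k+1)^{k+3}$. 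Neither point is deep, but this is where an off-by-one in the recurrence would hide.
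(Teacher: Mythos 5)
Your proof is correct and follows essentially the same route as the paper's: induction on $k$, bounding the direct marks of Steps 2--3 by $\eta(2\eta+k+2)$, and charging each recursive call at most $f(\eta+k,\delta,k-1)$ marks (via the induction hypothesis and monotonicity of $f$), with the number of calls absorbed into the coefficient $2\cdot 4^{\eta+k}(2\eta+k+1)^{k+3}$. The only differences are cosmetic bookkeeping constants (e.g.\ $(\eta+1)^k$ versus $\eta^k$), which are absorbed identically in both arguments.
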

\begin{proof}
We prove by induction on $k$ that algorithm ${\cal A}$ marks at most ${f(\xi,\delta,k)}$ graphs from ${\cal G}$.  The base case is when $k=0$. 
Since $\vert Z_0\vert \leq \eta$ and the graphs $G_1,\ldots,G_t$ are vertex-disjoint, in \ref{stepfirstmark}, we mark at most $\eta$ graphs from ${\cal G}$. 
Since $\vert Z_i\vert \leq \eta$ for all $i\in [2\eta+k+1]$ and the graphs $G_1,\ldots,G_t$ are vertex-disjoint, in \ref{stepuniqueimpsepmark}, we mark at most $\eta(2\eta+k+1)$ graphs from ${\cal G}$. Thus, when $k=0$, 
the number of marked graphs is at most $\eta(2\eta+k+2) = {f(\xi,\delta,0)}$. 

Now consider the inductive step. That is, when $k>0$. By arguments similar to that in the base case, we have that the number of graphs marked in \ref{stepfirstmark} and \ref{stepuniqueimpsepmark} is at most $\eta(2\eta+k+2)$. 
Since $\vert Z_i\vert \leq \eta$, the number of times we recursively call algorithm ${\cal A}$ in \ref{step:rec1} is  
at most $(2\eta+k+1)\eta^k k$. By induction hypothesis, each of these recursive calls marks at most $f(\eta,\delta, k-1)$ graphs in ${\cal G}$. Thus in \ref{step:rec1}, the number of graphs  marked is at most 
$(2\eta+k+1)\eta^k k \cdot f(\eta,\delta,k-1)$ which is upper bounded by $(2\eta+k+1)^{k+2}f(\eta+k,\delta,k-1)$. 
%
%
Next, we consider the number of graphs marked in \ref{step:rec2}.  For each $i,j\in [2\eta+k+1]$ and each $Q\in {\cal Q}_{i,j}$, $\vert Q\vert \leq  \vert Z_i\vert +k \leq \eta+k$. For any $i,j\in [2\eta+k+1]$,  any $Q\in {\cal Q}_{i,j}$ and any non-negative integer $k'<k$, we recursively call ${\cal A}$, where the number of roots in the input graph of the recursive call (which is an induced subgraph of $G^{\star}$) is at most $\eta+k$ and the model sets of clique minor is a subset of $\{V(G_1),\ldots, V(G_t)\}$. Since $k'<k$, by induction hypothesis, the number of graphs in ${\cal G}$ marked by each of these recursive calls is at most $f(\eta+k,\delta,k-1)$. 
By \autoref{prop:impsepcout}, 
we have that $\vert {\cal Q}_{i,j}\vert\leq 4^{\eta+k}$ for all $i,j\in [2\eta+k+1]$.
 This implies that the number of  recursive calls made in \ref{step:rec2} is at most $(2\eta+k+1)^2\cdot 4^{\eta+k} \cdot k\cdot (\eta+k)^k$ which is upper bounded by $4^{\eta+k} (2\eta+k+1)^{k+3}$. Thus, the number of graphs marked in 
 \ref{step:rec2}
  is at most $4^{\eta+k} (2\eta+k+1)^{k+3}f(\eta+k,\delta,k-1)$. By summing the number of marked graphs in all steps, we get that the total number of graphs marked in ${\cal G}$ is at most $f(\xi,\delta,k)$. 
\end{proof}

Next we prove that any vertex in an unmarked graph in ${\cal G}$ is $(\delta,k)$-irrelevant. 
Towards that it is enough to prove the following statement. For any unmarked graph $G_r\in {\cal G}$, any vertex $v\in V(G_r)$, and any vertex subset $S\subseteq V(G^{\star})$ of size at most $k$, the extended $\delta$-folio of $G^{\star}\setminus S$ is same as the extended $\delta$-folio of $(G^{\star}\setminus S) \setminus v$. We prove this statement by induction on $k$.

Before proving the statement, we first prove some auxiliary claims. 
Recall the separators $Z_1,\ldots, Z_{2\eta+k+1}$ constructed 
in \ref{stepuniqueimpsepmark} of the algorithm. Here, $Z_i$ is the unique \imsep{Z_0}{U_i} of minimum size for all $i\in [2\eta+k+1]$ 
and $\vert Z_0\vert \leq \eta$. 
Also, consider the separations $(A_i,B_i)$, $i\in [2\eta+k+1]$, constructed in \ref{step:rec1}. 
Notice that $Z_0 \subseteq V(A_i)$, $Z_i=V(A_i)\cap V(B_i)$, and $V(G_i) \subseteq V(B_i)\setminus V(A_i)$ for all  $i\in [2\eta+k+1]$. 

\begin{claim}
\label{claim:onegoodsep}
Let $I\subseteq [2\eta+k+1]$ be subset of size at least $\eta+1$. Then, 
there exists $i\in I$ such that $(A_i,B_i)$ is a separation of $G^{\star}$ satisfying the conditions $(i)-(iv)$ of \autoref{prop:RS13:6.1}.
\end{claim}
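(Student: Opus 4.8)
The plan is to obtain Claim~\ref{claim:onegoodsep} by re-running, in the present setting, the argument behind \autoref{lem:pointtovalidsep}: take the root set $Z_0$ in the role of $Z$ and $4\delta$ in the role of $\delta'$, recalling from \ref{step111} that $|Z_0|\le\eta$ while $t>f(\xi,\delta,k)$ is far larger than $\eta$. Strictly speaking one cannot quote \autoref{lem:pointtovalidsep} verbatim, since its statement fixes the number of indexed separations to $\lfloor\tfrac52|Z_0|\rfloor+3\cdot4\delta+1$, whereas the algorithm records $(A_i,B_i)$ only for $i\in[2\eta+k+1]$; so I would reproduce its short proof using only what is actually available here, namely that a suitable separation of $G^{\star}$ exists and that $|I|\ge\eta+1$.

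First I would show that $G^{\star}$ admits a separation $(A,B)$ satisfying conditions $(i)$--$(iv)$ of \autoref{prop:RS13:6.1} with respect to the clique model $G_1,\dots,G_t$ and roots $Z_0$. This needs only that some separation satisfies $(i)$ and $(ii)$: since $|Z_0|\le\eta<t$ and the $G_i$ are pairwise vertex-disjoint, some $G_q$ avoids $Z_0$, so the separation placing exactly $Z_0$ on the $A$-side satisfies $(i)$ (with index $q$) and $(ii)$ and has order $|Z_0|\le\eta$; choosing such a separation of minimum order and then with $A$ inclusion-maximal yields the desired $(A,B)$, whose order $s:=|V(A)\cap V(B)|$ is therefore at most $\eta$. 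Write $S:=V(A)\cap V(B)$ and fix $i^{\star}\in[t]$ with $V(A)\cap V(G_{i^{\star}})=\emptyset$; since $S\subseteq V(A)$, also $S\cap V(G_{i^{\star}})=\emptyset$.

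Next, $S$ meets at most $|S|\le\eta$ of the pairwise-disjoint vertex sets $V(G_1),\dots,V(G_t)$, and $|I|\ge\eta+1$, so there is $j\in I$ with $S\cap V(G_j)=\emptyset$. I would then argue as in the proof of \autoref{lem:pointtovalidsep}: because $G_{i^{\star}}$ and $G_j$ are connected, joined by an edge, and both disjoint from $S$, the set $S$ is also a \sep{Z_0}{U_j}; it has minimum size among these (a smaller one would contradict the minimality $(iii)$ of $(A,B)$), and by \autoref{prop:unimp} together with \autoref{lem:supreach} it is the unique \imsep{Z_0}{U_j} of minimum size, for if the minimum important separator $Z_j$ strictly dominated $S$ then \autoref{lem:supreach} would contradict the maximality $(iv)$ of $A$. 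Hence $Z_j=S$. Finally, $V_{i^{\star}}$ and $V_j$ lie in the same connected component $C$ of $G^{\star}\setminus S$ (joined by an edge, neither meeting $S$), so $C$ is exactly the component used to define $(A_j,B_j)$; the maximality of $A$ forces $V(A)=V(G^{\star})\setminus V(C)$, and thus $(A_j,B_j)=(A,B)$ satisfies $(i)$--$(iv)$, which is what we want.

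The part I expect to require the most care is the verification around condition $(iv)$: that the inclusion-maximal choice of $A$ makes $(A,B)$ coincide with $(A_j,B_j)$, that $Z_0\subseteq V(A)$ is maintained throughout (which uses that $S$ separates $Z_0$ from $U_{i^{\star}}$, so no vertex of $Z_0\setminus S$ can lie in the component of $V_{i^{\star}}$), and that the degenerate cases $U_{i^{\star}}=\emptyset$ or $U_{i^{\star}}\subseteq S$ cause no trouble (the former is impossible because $G_{i^{\star}}$ has an edge to another model set). All of this is routine given the observation following \autoref{prop:RS13:6.1} that $V(A)\cap V(B)$ is the unique minimum-size \imsep{Z_0}{U_{i^{\star}}}.
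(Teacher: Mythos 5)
Your proof is correct and follows essentially the same route as the paper: the paper's own proof of this claim is a one-line invocation of \autoref{lem:pointtovalidsep} (noting $t\geq\lfloor\tfrac52\eta\rfloor+3\delta'+1$ with $Z_0$ as the root set and $\delta'=4\delta$), and your argument is precisely an inline reproduction of that lemma's proof, with the existence of a separation satisfying $(i)$--$(iv)$ and the identification $(A_j,B_j)=(A,B)$ spelled out slightly more explicitly. Your worry about quoting the lemma verbatim is not really necessary, since its conclusion is stated for an arbitrary $I\subseteq[t]$ of size at least $\xi+1$, so taking $I\subseteq[2\eta+k+1]\subseteq[t]$ is permitted.
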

\begin{proof}
Notice that 
$t\geq \lfloor \frac{5}{2}\rfloor \eta + 3\delta'+1$, where $\delta'=4\delta$. Then, by \autoref{lem:pointtovalidsep}, there exists $i\in I$ such that  $(A_i,B_i)$ is a separation of $G^{\star}$ satisfying the conditions $(i)-(iv)$ of \autoref{prop:RS13:6.1}. 
\end{proof}

\begin{claim}
\label{claim:HinGminusS}
Let $i\in [2\eta+k+1]$ and $S$ be a subset of $V(B_i)\setminus V(A_i)$  such that  the $\delta'$-\mfolio\ of $B_i\setminus S$  relative to $V(A_i)\cap V(B_i)$ is generic. Then, for any graph $X$ on $R(G^{\star})$ and $H$ in the $\delta$-folio of $G'=G^{\star}\cup X$, $H$ also belongs to the $\delta$-folio of $G' \setminus S$. 
\end{claim}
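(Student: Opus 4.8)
The plan is to derive this claim as an essentially immediate consequence of \autoref{lem:folioafterdel}, applied to the graph $G'=G^{\star}\cup X$ together with a separation of $G'$ built from $(A_i,B_i)$.

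First I would fix an arbitrary graph $X$ on $R(G^{\star})$ and an arbitrary rooted graph $H$ in the $\delta$-folio of $G'=G^{\star}\cup X$, and recall the objects produced in \ref{step111}: a realization $G_H$ of $H$ in $G^{\star}\cup X$ witnessed by the pair $(\phi_{X,H},\psi_{X,H})$, together with the set $Z_{X,H}=N_{G_H}[\phi_{X,H}(V(H))]$, which by the definition of $Z_0$ satisfies $Z_{X,H}\subseteq Z_0$. Since $R(G^{\star})\subseteq Z_0$ (also observed in \ref{step111}) and $Z_0\subseteq V(A_i)$ (recorded just before \autoref{claim:onegoodsep}), we obtain the crucial containment $R(G')\cup Z_{X,H}=R(G^{\star})\cup Z_{X,H}\subseteq Z_0\subseteq V(A_i)$. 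It is essential here to work with the \emph{particular} realization chosen in \ref{step111}, since an arbitrary realization of $H$ in $G'$ need not have its branch vertices (and their neighbours) inside $Z_0$.

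Next I would upgrade $(A_i,B_i)$ to a separation of $G'$. All vertices of $X$ lie in $R(G^{\star})\subseteq Z_0\subseteq V(A_i)$, so placing every edge of $X$ that is not already present in $B_i$ on the $A$-side yields a separation $(\widehat{A},\widehat{B})$ of $G'=G^{\star}\cup X$ with $\widehat{B}=B_i$, $V(\widehat{A})=V(A_i)$, and hence $V(\widehat{A})\cap V(\widehat{B})=V(A_i)\cap V(B_i)=Z_i$. Then I would apply \autoref{lem:folioafterdel} with the rooted graph $G:=G'$, the separation $(\widehat{A},\widehat{B})$, the topological minor $H$ with realization $G_H$ witnessed by $(\phi_{X,H},\psi_{X,H})$, the set $Z:=Z_{X,H}$, the set $S$, and $\delta'=4\delta$. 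Its hypotheses hold: $R(G')\cup Z_{X,H}\subseteq V(A_i)=V(\widehat{A})$ by the previous paragraph; $S\subseteq V(B_i)\setminus V(A_i)=V(\widehat{B})\setminus V(\widehat{A})$ by assumption; and the $\delta'$-\mfolio\ of $\widehat{B}\setminus S=B_i\setminus S$ relative to $V(\widehat{A})\cap V(\widehat{B})=Z_i$ is generic, which is precisely the hypothesis of the claim. The lemma then yields that $H$ belongs to the $\delta$-folio of $G'\setminus S$; since $X$ and $H$ were arbitrary, the claim follows.

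As for the main obstacle: there is essentially none of substance, as the statement just re-packages \autoref{lem:folioafterdel}. The only points requiring a little care are that the realization whose branch-vertex neighbourhood we track must be the one fixed in \ref{step111} (so that $Z_{X,H}\subseteq Z_0\subseteq V(A_i)$), and that attaching $X$ to $G^{\star}$ does not disturb the separation $(A_i,B_i)$ — both of which are routine bookkeeping.
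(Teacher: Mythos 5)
Your proposal is correct and follows essentially the same route as the paper's own proof: both fix $X$ and $H$, use the specific realization and the set $Z_{X,H}\subseteq Z_0\subseteq V(A_i)$ computed in \ref{step111}, extend the separation $(A_i,B_i)$ to one of $G'=G^{\star}\cup X$ (the paper via induced subgraphs on $V(A_i)$ and $V(B_i)$, you by assigning the edges of $X$ to the $A$-side — an immaterial bookkeeping difference), and then invoke \autoref{lem:folioafterdel}. No gaps.
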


\begin{proof}
Fix a graph $X$ on $R(G^{\star})$ and $H$ in the $\delta$-folio of $G'=G^{\star}\cup X$. 
In \ref{step111} we computed a pair of functions $(\phi_{X,H},\psi_{X,H})$ that witnesses the fact that $H$ belongs to the $\delta$-folio of $G'$ and it is realized by the subgraph $G_H$ of $G'$. Moreover, by the construction of $Z_0$, we have that $(i)$ $N_{G_H}(\phi_{X,H}(V(H))) \subseteq Z_0$. 
Let  $A_i'=G'[V(A_i)]$ and $B_i'=G'[V(B_i)]$.  
Since $R(G^{\star})\subseteq V(A_i)$, $(A_i',B_i')$ is a separation of $G'$.  Since $B_i'\setminus S$ is a supergraph 
of $B_i\setminus S$,  the $\delta'$-\mfolio\ of $B_i\setminus S$  relative to $V(A_i)\cap V(B_i)$ is generic, and 
$V(A_i')\cap V(B_i')=V(A_i)\cap V(B_i)$, we have that the $(ii)$ $\delta'$-\mfolio\ of $B_i'\setminus S$  relative to $V(A'_i)\cap V(B'_i)$ is generic. Since $S\subseteq V(B_i)\setminus V(A_i)$, $(iii)$ $S\subseteq V(B'_i)\setminus V(A'_i)$. By statements $(i)-(iii)$ and \autoref{lem:folioafterdel} we conclude that $H$ belongs to the $\delta$-folio 
of $G'\setminus S$. 
\end{proof}

Now, using induction on $k$ we prove that for any unmarked graph $G_r\in {\cal G}$ and any vertex  $v\in V(G_r)$ and  any vertex subset $S\subseteq V(G^{\star})$ of size at most $k$, 
the extended $\delta$-folio of $G^{\star}\setminus S$ is same as the extended $\delta$-folio of $(G^{\star}\setminus S) \setminus v$. 
Let $\widehat{\cal G}$ be the set of all graphs from ${\cal G}$ that are unmarked by 
${\cal A}$  on input $(G^{\star},k,\delta,(G_j)_{j\in [t]})$. 

Consider the base case, i.e., when $k=0$. As mentioned before, in this case algorithm ${\cal A}$ will not make any recursive calls. That is, algorithm ${\cal A}$ will stop in \ref{step:stop}. 
Here, we need to show that for any unmarked graph $G_r\in {\cal G}$ and a vertex  $v\in V(G_r)$, 
the extended $\delta$-folio of $G^{\star}$ is same as the extended $\delta$-folio of $G^{\star} \setminus v$. 
By \autoref{claim:onegoodsep}, there exists $i\in [\eta+1]$ such that $(A_i,B_i)$ is a separation of $G^{\star}$ satisfying the conditions $(i)-(iv)$ of \autoref{prop:RS13:6.1}.  
Fix an integer $i\in [\eta+1]$ 
such that $(A_i,B_i)$ satisfies conditions $(i)-(iv)$ of \autoref{prop:RS13:6.1}. 
Since any graph $G_r$ in $\widehat{G}$ is not marked, $V(G_r)\subseteq V(B_i)\setminus V(A_i)$. 
Then, by \autoref{prop:RS13:6.1}, for any $G_r\in \widehat{G}$ and $v\in V(G_r)\subseteq V(B_i)\setminus V(A_i)$
the $\delta'$-\mfolio\ of $B_i\setminus v$ relative to $Z_i$ is generic, where $\delta'=4\delta$. 
Then, by \autoref{claim:HinGminusS} the extended $\delta$-folio of $G^{\star}$ is 
same as the the extended $\delta$-folio of $G^{\star}\setminus v$ for any unmarked $G_r\in {\cal G}$ and $v\in V(G_r)$.



Next consider the induction step for $k>0$. Fix an arbitrary vertex subset $S\subseteq V(G^{\star})$ of size at most $k$. We have the following three cases.

\paragraph*{Case 1: $V(A_i)\cap S\neq \emptyset$ for some $i\in [2\eta+k+1]$.} 
Recall that $Z_i=V(A_i)\cap V(B_i)$. Let $S_0=Z_i\cap S$, $S_1=S\setminus V(B_i) $, and $S_2=S\setminus V(A_i)$. 
Since $V(A_i)\cap S\neq \emptyset$, we have that $\vert S_2\vert<k$. 
Now consider the the recursive call made by the algorithm in \ref{step:rec1} for $D=S_0$ and $k'=\vert S_2\vert$.  
Let ${\cal G}'$ be the set of graphs from $\{G_j \colon j\in I_i\}$ that are unmarked  by the 
recursive call ${\cal} (B_i\setminus D,k',\delta,(G_j)_{j\in I_i})$.  Notice that $\widehat{\cal G}\subseteq {\cal G}'$ because a graph is unmarked if it is unmarked in every recursive calls. 
By induction hypothesis we know that for any  graph $G_r\in {\cal G}'$,  
any vertex  $v\in V(G_r)$, and  any vertex subset $S'\subseteq V(B_i\setminus S_0)$ of size at most $k'$, 
the extended $\delta$-folio of $B_i\setminus (S_0\cup S')$ is same as the extended $\delta$-folio of $(B_i\setminus (S_0\cup S')) \setminus v$. 
This implies that $(i)$ for any  graph $G_r\in {\cal G}'$ and  any vertex  $v\in V(G_r)$, 
 the extended $\delta$-folio of $B_i\setminus (S_0\cup S_2)$ is same as the extended $\delta$-folio of $(B_i\setminus (S_0\cup S_2)) \setminus v$. Let $A=A_i\setminus S$ and $B=B_i\setminus S$. Notice that 
$B=B_i\setminus (S_0\cup S_2)$ and $(A,B)$ is a separation of $G^{\star}\setminus S$.  Because of statement $(i)$ 
and \autoref{prop:equi_graph_replacement}, the extended $\delta$-folio of $G^{\star}\setminus S$ is 
same as the the extended $\delta$-folio of $(G^{\star}\setminus S)\setminus v$ for any $G_r\in {\cal G}'$ and any $v\in V(G_r)$. Therefore, since $\widehat{\cal G}\subseteq {\cal G}'$, the extended $\delta$-folio of $(G^{\star}\setminus S)\setminus v$ for any $G_r\in \widehat{\cal G}$ and any $v\in V(G_r)$ is same as the extended $\delta$-folio of $(G^{\star}\setminus S$.


\paragraph*{Case 2: There exist $i,j\in [2\eta+k+1]$ and $Q\in {\cal Q}_{i,j}$ such that $V(A_Q)\cap S\neq \emptyset$.}
The argument for this case is identical to the argument for Case 1 and hence we omit it.

\paragraph*{Case 3: Case 1 and 2 are not applicable.} 

Let $I=\{i\in [\eta+k+1] \colon S\cap V_i=\emptyset\}$. 
Since $\vert S\vert \leq k$, $\vert I\vert \geq 1+\eta$. 
By \autoref{claim:onegoodsep} there exists $i\in I$ such that $(A_i,B_i)$ is a separation of $G^{\star}$ satisfying the conditions $(i)-(iv)$ of \autoref{prop:RS13:6.1}. Fix an integer $i\in I$ 
such that $(A_i,B_i)$ satisfies conditions $(i)-(iv)$ of \autoref{prop:RS13:6.1}. Since Case 1 is not applicable 
we have that $S\subseteq V(B_i)\setminus V(A_i)$. 
Let $J=\{j\in [2\eta+k+1] \colon V_j\cap Z_i=\emptyset \mbox{ and } V_j\cap S=\emptyset\}$. 
Since $\vert Z_i\vert \leq \eta$, $\vert S\vert \leq k$ and $\{V_1,\ldots,V_{2\eta+k+1}\}$ are pairwise disjoint, $\vert J\vert \geq [\eta+1]$. 
Since $G^{\star}[V_i\cup V_j]$ is connected and $(V_i\cup V_j)\cap Z_i=\emptyset$ for all $j\in J$, 
$Z_i$ is a \sep{Z_0}{U_j}  in $G^{\star}$. In fact,  $(a)$
$Z_i$ is a \sep{Z_0}{U_j}  in $G^{\star}$ of minimum cardinality for any $j\in J$, because 
$(A_i,B_i)$ satisfies conditions $(i)-(iv)$ of \autoref{prop:RS13:6.1}.
Since $Z_i$ is a \sep{Z_0}{U_i}  in $G^{\star}$ we have that $Z_i\cap V_i=\emptyset$, and hence $i\in J$. 


Now we have two subcases as follows. In the first subcase there exists $j\in J$  such that the size of a minimum 
\sep{Z_i}{U_j} is strictly less than $\vert Z_i\vert$ in $G^{\star}\setminus S$. Let $Y$ be a minimum size 
\sep{Z_i}{U_j} in $G^{\star}\setminus S$. Then, $(b)$ $Y'=Y\cup S$ is a \sep{Z_i}{U_j} in $G^{\star}$. 
Since $(A_i,B_i)$ satisfies conditions $(i)-(iv)$ of \autoref{prop:RS13:6.1}, $(c)$
$Y$ is a not a  \sep{Z_i}{U_j} in $G^{\star}$. Statements $(b)$ and $(c)$ implies that 
there is a inclusion minimal \sep{Z_i}{U_j}  $W\subseteq Y'$ in $G^{\star}$  such that   $S\cap W\neq \emptyset$. 
Moreover $\vert W\vert < \vert Z_i\vert+k$. Then, there is an  \imsep{Z_i}{U_j} $Q\in {\cal Q}_{i,j}$ that dominates 
$W$.  Thus, by \autoref{lem:supreach}, $W\subseteq V(A_Q)$ (recall the construction $(A_Q,B_Q)$ in \ref{step:rec2}). This implies that $V(A_Q)\cap S\neq \emptyset$.   
This contradicts the assumption that Case 2 is not applicable. 

Next we consider the subcase that for all $j\in J$, the size of a minimum 
\sep{Z_i}{U_j} in $G^{\star}\setminus S$ is  $\vert Z_i\vert$. Then, notice that $Z_i$ 
is a minimum size \sep{Z_i}{U_j} in $G^{\star}\setminus S$ for all $j\in J$. 
Consider the separation $(A_i,B_i\setminus S)$ of $G^{\star}\setminus S$. 
Let $L=\{j'\in [t] \colon V_{j'}\cap S=\emptyset\}$. 
Notice that since $t$ is large enough, $\vert L\vert \geq \lfloor \frac{5}{2}\eta \rfloor +\delta'+1$ and 
$(G_{j'})_{j'\in L}$ form a minor model of $K_{\vert L\vert}$ in $G^{\star}\setminus S$. 
Notice that $J\subseteq L$. 
In the following claim we prove that $(A_i,B_i\setminus S)$ satisfies conditions $(i)-(iv)$ of \autoref{prop:RS13:6.1}. 

\begin{claim}
\label{claim:genafterdel}
Let $G'=G^{\star}\setminus S$, $R(G')=Z_0$. Then, 
$(A_i,B_i\setminus S)$ satisfies conditions $(i)-(iv)$ of \autoref{prop:RS13:6.1}.
\end{claim}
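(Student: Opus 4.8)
The plan is to verify, in turn, conditions $(i)$--$(iv)$ of \autoref{prop:RS13:6.1} for the separation $(A_i,B_i\setminus S)$ of $G'=G^\star\setminus S$, regarded as a rooted graph with $R(G')=Z_0$ and equipped with the $K_{|L|}$-minor model $(G_{j'})_{j'\in L}$. Two facts set the stage. First, since Case~1 is inapplicable and $i$ was fixed (via \autoref{claim:onegoodsep}) so that $(A_i,B_i)$ already satisfies $(i)$--$(iv)$ of \autoref{prop:RS13:6.1} \emph{in $G^\star$}, we have $S\subseteq V(B_i)\setminus V(A_i)=V(C)$; hence $A_i$ is literally the same subgraph inside $G'$, $(A_i,B_i\setminus S)$ genuinely is a separation of $G'$, and its order equals $|V(A_i)\cap V(B_i)|=|Z_i|$. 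Second, we are in the subcase where, for every $j\in J$, the minimum size of a $Z_i$-$U_j$-separator in $G'$ equals $|Z_i|$. Conditions $(i)$ and $(ii)$ now transfer for free: $V(A_i)\cap V(G_i)=\emptyset$ is condition $(i)$ for $(A_i,B_i)$ in $G^\star$, and $i\in J\subseteq L$ because $V_i\cap Z_i=\emptyset$ and $V_i\cap S=\emptyset$, giving $(i)$; while $Z_0\subseteq V(A_i)$ holds in $G^\star$ and $A_i$ is unchanged, giving $(ii)$.

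Condition $(iii)$ is the substantive part: I must show that no separation $(A,B)$ of $G'$ with $Z_0\subseteq V(A)$ and $V(A)\cap V(G_{j'})=\emptyset$ for some $j'\in L$ has order below $|Z_i|$. The key intermediate statement to establish is that for every $j\in J$ the minimum size of a $Z_0$-$N_{G'}(V_j)$-separator in $G'$ is exactly $|Z_i|$. The inequality ``$\le$'' is clear, since $Z_i$ is such a separator; for ``$\ge$'' I will build $|Z_i|$ pairwise vertex-disjoint $Z_0$--$N_{G'}(V_j)$ paths in $G'$ by concatenating, at the vertices of $Z_i$, two families: (a) $|Z_i|$ disjoint $Z_0$--$Z_i$ paths, which exist already in $G^\star$ because $Z_i$ is a minimum $Z_0$-$U_i$-separator (Menger, truncated at the first crossing of $Z_i$) and which survive in $G'$ since they live inside $V(A_i)$, a region disjoint from $S$; and (b) $|Z_i|$ disjoint $Z_i$--$N_{G'}(V_j)$ paths inside $B_i\setminus S$, which exist because the subcase hypothesis makes the minimum $Z_i$-$U_j$-separator in $G'$ have $|Z_i|$ vertices --- exactly the size of the source set $Z_i$ --- so Menger forces one disjoint path out of each vertex of $Z_i$, with a short-cutting argument keeping these inside $V(C)\cup Z_i$. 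Given this, for an arbitrary $(A,B)$ as above with cut $Y=V(A)\cap V(B)$, pigeonhole (using $|J|\ge\eta+1>|Y|$ and disjointness of the $V_j$'s) yields $j''\in J$ with $V_{j''}\cap Y=\emptyset$; since $G'[V_{j'}\cup V_{j''}]$ is connected while $V_{j'}$ lies strictly on the $B$-side, $V_{j''}$ lies strictly on the $B$-side too, so $Y$ is a $Z_0$-$N_{G'}(V_{j''})$-separator and $|Y|\ge|Z_i|$, as required.

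For condition $(iv)$ (maximality of $A_i$), the plan is: suppose $(A,B)$ is a separation of $G'$ satisfying $(i)$--$(iii)$ with $V(A)\supsetneq V(A_i)$, so its order is exactly $|Z_i|$ by $(iii)$, and lift it back to a separation of $G^\star$ of the \emph{same} order by returning $S$, together with its $G^\star$-edges, to the $B$-side; this lifted separation satisfies $(i)$--$(iii)$ in $G^\star$ and has $Z_0$-side $V(A)\supsetneq V(A_i)$, contradicting condition $(iv)$ for $(A_i,B_i)$ in $G^\star$. I expect the main obstacle to be exactly the verification that this lift does not inflate the order: one has to show that $N_{G^\star}(S)$ cannot meet the strict $Z_0$-side of such a competing separation $(A,B)$, and this is where the inapplicability of \emph{both} Cases~1 and~2 is needed, together with the disjoint-paths structure from the previous paragraph, which constrains the shape of the order-$|Z_i|$ separations of $G'$.

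Once the claim is established, \autoref{prop:RS13:6.1} applied to $(A_i,B_i\setminus S)$ in $G'$ shows that every $v\in V(C)\setminus S$ --- in particular every non-deleted vertex of an unmarked $G_r$, since $V(G_r)\subseteq V(C)$ --- is irrelevant to the $\delta'$-\mfolio\ of $G'$ and leaves $(B_i\setminus S)\setminus v$ generic relative to $Z_i$, where $\delta'=4\delta$; \autoref{lem:folioafterdel} then yields that the extended $\delta$-folio of $G'$ equals that of $G'\setminus v$, completing Case~3 of the induction.
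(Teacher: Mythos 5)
Your treatment of conditions $(i)$--$(iii)$ is essentially the paper's: the same concatenation of $|Z_i|$ disjoint $Z_0$--$Z_i$ paths inside $A_i$ (untouched by $S$) with $|Z_i|$ disjoint $Z_i$--$U_j$ paths inside $B_i\setminus S$ (from the subcase hypothesis via Menger), followed by a pigeonhole/\autoref{lem:pointtovalidsep}-style argument to rule out a smaller competing separation. That part is fine.

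The gap is in condition $(iv)$, and you have flagged it yourself without closing it. Your plan is to take a competing separation $(A,B)$ of $G'$ with $V(A)\supsetneq V(A_i)$ and order $|Z_i|$, and lift it to a separation of $G^\star$ of the same order by returning $S$ to the $B$-side. For that lift to be a separation of $G^\star$ without inflating the order you need $N_{G^\star}(S)\cap(V(A)\setminus V(B))=\emptyset$, and you only write that you ``expect'' this to follow from the inapplicability of Cases~1 and~2 together with the disjoint-paths structure. That is precisely the step that requires an argument, and it is not at all automatic: nothing established so far prevents a vertex of $S$ from having a $G^\star$-neighbour strictly on the $A$-side of an order-$|Z_i|$ separation of $G'$. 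The paper does not lift the separation at all. Instead, it takes the witness of the failure of $(iv)$ --- a \sep{Z_0}{U_\ell} $Z$ in $G'$ with $|Z|=|Z_i|$ dominating $Z_i$ --- finds $j\in J$ for which $Z$ is also a \sep{Z_i}{U_j} in $G'$, observes that $Z\cup S$ is a \sep{Z_i}{U_j} in $G^\star$ while $Z$ alone is not (else, via \autoref{lem:supreach}, $(A_i,B_i)$ would already violate $(iv)$ in $G^\star$), extracts an inclusion-minimal \sep{Z_i}{U_j} $W\subseteq Z\cup S$ with $W\cap S\neq\emptyset$ and $|W|\le|Z_i|+k$, and then uses the enumerated important separators ${\cal Q}_{i,j}$ from \ref{step:imp} together with \autoref{lem:supreach} to produce $Q\in{\cal Q}_{i,j}$ with $S\cap V(A_Q)\neq\emptyset$, contradicting the inapplicability of Case~2. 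So the contradiction is routed through the important-separator enumeration, not through a lifted separation; to complete your proof you would either have to supply the missing containment for the lift (which, as far as I can see, would end up reproving exactly this important-separator argument) or switch to the paper's route.
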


\begin{proof}
First we prove that for any $j\in J$, there are $\vert Z_i\vert$ vertex-disjoint paths from $Z_0$ to $U_j$ in $G'$. By the assumption of second subcase, there are $\vert Z_i\vert$ vertex-disjoint paths from $Z_i$ to $U_j$ 
in $G'$ for any $j\in J$. This follows from Menger's  theorem and the fact that the size of a minimum  \sep{Z_i}{U_j} in $G'$ is  $\vert Z_i\vert$ for all $j\in J$. 
Thus, since $(A_i,B_i\setminus S)$ is a separation of $G'$ with $V(A_i)\cap V(B_i\setminus S)=Z_i$, 
and $U_j\in B_i\setminus S$, there are $\vert Z_i\vert$ vertex-disjoint paths from $Z_i$ to $U_j$ in $B_i\setminus S$.  
Since $Z_i$ is a   minimum  \sep{Z_0}{U_j} in $G^{\star}$ (because $(A_i,B_i)$ is a separation satisfying conditions $(i)-(iv)$ of \autoref{prop:RS13:6.1} and $V_j\subseteq V(B_i)\setminus V(A_i)$), and $S\subseteq V(B_i\setminus V(A_i)$, 
there are $\vert Z_i\vert$ vertex-disjoint paths from $Z_0$ to $Z_i$ in $A_i$. Combining both the arguments, we get that  
for any $j\in J$, there are $\vert Z_i\vert$ vertex-disjoint paths from $Z_0$ to $U_j$ in $G'$ for all $j\in J$.

Suppose there is a separation $(A,B)$  of $G'$ such that $\vert A\cap B\vert< \vert Z_i\vert$ and satisfies 
conditions $(i)-(iv)$ of \autoref{prop:RS13:6.1}. 
Then, by \autoref{lem:pointtovalidsep}, there is a separation $(A_j',B_j')$ of $G'$ such that $V(A_j')\cap V(B_j')$ is the unique \imsep{Z_0}{U_j} in $G'$ for some $j\in J$ and $\vert V(A_j')\cap V(B_j')\vert <\vert Z_i\vert$.   
This contradicts the fact that there are $\vert Z_i\vert$ vertex-disjoint paths from $Z_0$ to $U_j$ in $G'$. Therefore $(A_i,B_i\setminus S)$ satisfies conditions $(i)-(iii)$ of \autoref{prop:RS13:6.1}. Now we prove that $(A_i,B_i\setminus S)$ satisfies condition $(iv)$ as well. For the sake of contradiction suppose $(A_i,B_i\setminus S)$ does not satisfy condition $(iv)$. This implies that there is a \sep{Z_0}{U_{\ell}} $Z$ that dominates $Z_i=V(A_i)\cap V(B_i\setminus S)$ and $\vert Z\vert=\vert Z_i\vert$ for some $\ell\in L$. 
Also, since $\vert Z\vert \leq \eta$ and $\vert J\vert \geq 1+\eta$, there exists $j\in J$ such that $Z$ is also 
a  \sep{Z_i}{U_{j}} in $G'$. This implies that $Z'=Z\cup S$ is a \sep{Z_i}{U_{j}} in $G^{\star}$.  
We claim that $Z$ is not a \sep{Z_i}{U_{j}} in $G^{\star}$. 
For the sake of contradiction, suppose $Z$ is a \sep{Z_i}{U_{j}} in $G^{\star}$. 
Let $A_{Z}=G^{\star}\setminus V(C)$ and $B_{Z}=G^{\star} [Z\cup V(C)]$, where $C$ is the connected component in $G^{\star}\setminus Z$ containing $V_j$.  
Notice that $(A_{Z},B_{Z})$ is a \sep{Z_0}{U_j} 
in $G^{\star}$ and $V(A_i)\subset V(A_{Z})$, because $Z$ dominates $Z_i$ (see \autoref{lem:supreach}).  Thus, if $Z$ is a \sep{Z_i}{U_{j}} in $G^{\star}$, then it contradicts the assumption that $(A_i,B_i)$ satisfies condition $(iv)$ of \autoref{prop:RS13:6.1}. 
Therefore, there is an inclusion minimal \sep{Z_i}{U_j} $W$ such that $W\subseteq Z'$ and $W\cap S\neq \emptyset$. This implies that there is a separator $Q\in {\cal Q}_{i,j}$ such that $S\cap V(A_Q)\neq \emptyset$. This is a contradiction to the assumption that Case $2$ is not applicable. This completes the proof of the claim. 
\end{proof}

For any graph $G_r\in {\cal G}$, if $V_r\cap V(A_i)\neq \emptyset$, then $V_r\cap Z_i\neq \emptyset$ (because $G[V_i\cup V_r]$ is connected and $V_i\subseteq V(B_i)\setminus V(A_i)$), and we mark 
$G_r$ in \ref{stepuniqueimpsepmark}. That is, for any unmarked graph $G_r$, $V_r\subseteq V(B_i)\setminus V(A_i)$. 
Therefore, by \autoref{claim:genafterdel} and \autoref{prop:RS13:6.1}, for any unmarked graph $G_r\in {\cal G}$ 
and any vertex $v\in V(G_r)$, the $\delta'$-\mfolio\ of $(B_i\setminus S)\setminus v$ relative to $Z_i$ is generic. 
Then, by \autoref{claim:HinGminusS} the extended $\delta$-folio of $G^{\star}\setminus S$ is 
same as the the extended $\delta$-folio of $(G^{\star}\setminus S)\setminus v$ for any unmarked $G_r\in {\cal G}$ and $v\in V(G_r)$.
%

\paragraph*{Running time analysis.}
Let $T(n,\xi, \delta,k)$ be the running time of algorithm ${\cal A}$, where $n=\vert V(G^{\star})\vert$ and $\xi=\vert R(G^{\star})\vert$.  By \autoref{thm:main1General},  \ref{step111} takes $f_1(\xi, \delta)n^3$ time for some computable function $f_1$. \ref{stepfirstmark} takes $\OO(n)$ time. By \autoref{prop:unimp}, the time required to execute \ref{stepuniqueimpsepmark} is $f_2(\xi,\delta,k)n^2$ for some computable function $f_2$.  \ref{step:rec1} takes time $f_3(\xi,\delta,k) T(n,\eta,\delta,k-1)$ for some computable function $f_3$, where $\eta=\beta(\delta,\xi)$ (mentioned in the algorithm).  By \autoref{prop:impsepcout}, \ref{step:imp} takes time $f_4(\xi,\delta,k)n^2$ for some computable function $f_4$. \ref{step:rec2} takes $f_5(\xi,\delta,k) T(n,\eta+k,\delta,k-1)$ time for some computable function $f_3$. Therefore, there exist computable functions $g_1$ and $g_2$ such that $T(n,\xi, \delta,k)$ satisfies the following recurrence relation. 
\begin{eqnarray*}
T(n,\xi, \delta,k)&=&g_1(\xi,\delta,k) T(n,\beta(\xi,\delta)+k, \delta,k-1)+ g_2(\xi,\delta,k) n^3\\
T(n,\xi, \delta,0)&=&g_2(\xi,\delta,k) n^3
\end{eqnarray*}
Above recurrence relation implies that there is a computable function $g$ such that $T(n,\xi, \delta,k)$ is  
at most $g(\xi,\delta,k) n^3$. This completes the proof of the lemma. 
\end{proof}

\section{Graphs with small treewidth}\label{sec:smalltw}

Recall the definition of $\delta$-representative (see \autoref{def:rep}). 
We prove that in the case of a graph of small treewidth, we can find a small $\delta$-representative.

\begin{lemma}\label{lem:representativeTW}
There exists a computable function $g$ and an algorithm that, given a rooted graph $G$, a nice tree  decomposition of $G$ of width $\tw$,  and $\delta\in\mathbb{N}$ such that $|R(G)|\leq \boundary$, computes a $\delta$-representative $F$ of $G$ (with the same set of roots) such that $|V(F)|\leq g(\delta,\tw)$ in time 
$2^{\OO((\delta+\boundary)^2)} \cdot (\vert R(G)\vert+\tw)^{\OO(\delta+\vert R(G)\vert+\tw)}\cdot n^2$, 
where $g(\delta,\tw)=2^{2^{\OO((\delta+\boundary)^2)} \cdot (\boundary+\tw)^{\OO(\delta+\boundary)}}$. 
\end{lemma}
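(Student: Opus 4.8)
## Proof Plan for Lemma \ref{lem:representativeTW}

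My plan is to compute the $\delta$-representative by a bottom-up dynamic programming over the given nice tree decomposition, where the table stored at each node $v$ records enough information about $G[\gamma(v)]$ — together with its ``interface'' to the rest of the graph through the bag $\beta(v)$ and through $R(G)$ — to reconstruct the extended $\delta$-folio. Concretely, at a node $v$ I would store the extended $\delta'$-folio of the rooted graph $G_v := G[\gamma(v)]$ with root set $R(G) \cap \gamma(v)$ augmented by $\beta(v)$ (so the effective number of roots at each node is at most $|R(G)| + \tw + 1 \le \boundary + \tw + 1$), where $\delta' = \delta + \boundary + \tw + 1$ is the detail bound we need to propagate through separators of size $\le \boundary + \tw + 1$ (by \autoref{obs:flatfolioonly}(a), knowing the $\delta'$-folio with respect to a root set of size $s$ suffices to recover the extended $\delta$-folio after removing any part of the root set). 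The standard leaf/introduce/forget/join transitions for topological-minor folios (the same flavor of DP behind Propositions~\ref{prop:twDPNoDec} and \ref{prop:twDP}, extended to the rooted, extended-folio setting) let me compute the table at $v$ from the tables at its children, and the number of possible table entries is bounded by the number of extended $\delta'$-folios on $s$ roots, which by Observation~\ref{obs:allGraphsNumber} and Proposition~\ref{prop:no.ofgraphsinfolios} is at most $2^{2^{\OO((\delta')^2)}\cdot s^{\OO(\delta')}}$.

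Once the table at the root $r$ is computed (with $\beta(r) = \emptyset$, so the table at $r$ is exactly the extended $\delta$-folio of $G$ with respect to $R(G)$, after one application of Observation~\ref{obs:flatfolioonly}(a)), I need to actually \emph{produce} a small graph $F$ realizing this extended $\delta$-folio. For this I would argue that the extended $\delta$-folio of $G$ is ``witnessed'' by a bounded number of small subgraphs: for each graph $X$ on $R(G)$ and each $H$ in the $(X,\delta)$-folio, fix one realization $G_{H,X}$ of $H$ in $G \cup X$; the union $F_0$ of all these realizations, intersected with $V(G)$ and rooted at $R(G)$, has at most $|\allgraphs(R(G))| \cdot (\text{number of graphs in a }\delta\text{-folio}) \cdot 2\delta \le 2^{\OO(\boundary^2)}\cdot (2^{\OO(\delta^2)}\boundary^{\OO(\delta)})\cdot 2\delta = 2^{\OO((\delta+\boundary)^2)}\cdot\boundary^{\OO(\delta)}$ vertices, and I claim $F := F_0$ (with root set $R(G)$, keeping isolated copies of roots not otherwise hit) is a $\delta$-representative: trivially every $H$ in the $(X,\delta)$-folio of $G\cup X$ survives in $F_0 \cup X$ since its chosen realization lies in $F_0 \cup X$; conversely any topological minor of $F_0 \cup X$ of detail $\le\delta$ is a topological minor of $G \cup X$ since $F_0 \subseteq G$. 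This gives $|V(F)| \le g(\delta,\tw)$ with the stated bound (in fact not even depending on $\tw$ for the size, which is consistent with — and stronger than — the stated bound). To extract the realizations $G_{H,X}$ algorithmically I would have the DP also carry, alongside each folio entry, a pointer/witness structure from which a realization can be read off in linear time by a top-down pass, exactly as is routine for tree-decomposition DPs.

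For the running time: the tree decomposition has $\OO(n)$ nodes; at each node the table has size bounded by a function of $\delta$ and $\tw$ only (the $2^{2^{\OO((\delta')^2)}}(\boundary+\tw)^{\OO(\delta')}$ bound above), and each introduce/forget/leaf transition costs a function of $\delta,\tw$ while each join transition costs the square of the table size times a function of $\delta,\tw$ — so the DP runs in $(\text{function of }\delta,\tw)\cdot n$ time. The extra factor of $n$ and the $n^2$ in the statement come from the witness-extraction passes (reconstructing realizations $G_{H,X}$ may require re-running a linear-time top-down pass for each of the boundedly-many $(H,X)$ pairs, and possibly re-indexing), which I would absorb into a single $n^2$ factor. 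Matching the precise exponents $2^{\OO((\delta+\boundary)^2)}\cdot(\boundary+\tw)^{\OO(\delta+\boundary)}$ for the time and $2^{2^{\OO((\delta+\boundary)^2)}\cdot(\boundary+\tw)^{\OO(\delta+\boundary)}}$ for $g(\delta,\tw)$ is then just bookkeeping: the double exponential in $g$ is the count of distinct extended folios (one exponential from $|\allgraphs(\cdot)|$ on the roots, a second from the number of entries), and the single-exponential-times-polynomial time bound is the per-node cost times $n$, with the join step contributing the square.

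The main obstacle I expect is \emph{not} the counting or the representative-size argument but getting the DP transitions for the \emph{extended} $\delta$-folio (rather than the plain folio) exactly right at the join and forget nodes: when we forget a bag vertex $x$ it moves from being an ``interface root'' to being an ordinary internal vertex, and we must correctly project the extended-folio information (which quantifies over \emph{all} graphs $X$ on the current root set) down to the smaller root set, which is where Observation~\ref{obs:flatfolioonly}(b) and the choice of the inflated detail parameter $\delta' = \delta + \boundary + \tw + 1$ are essential — we need the folio we carry to be detailed enough that re-rooting and splicing at a separator of size $\le \boundary + \tw + 1$ loses no information relevant to detail-$\le\delta$ topological minors. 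I would handle this by phrasing the invariant at node $v$ as ``the table stores the $\delta'$-folio of $G_v$ with root set $(R(G)\cap\gamma(v))\cup\beta(v)$'' and verifying, using Proposition~\ref{prop:equi_graph_replacement} and Lemma~\ref{lem:folioafterdel}-style splicing arguments, that this invariant is maintained by each transition and that it implies the claimed output at the root.
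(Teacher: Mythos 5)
Your first half (computing, at every node $v$ of the nice tree decomposition, a folio of $G[\gamma(v)]$ rooted at $\beta(v)$ augmented with $R(G)$, with an inflated detail parameter) is essentially the same preprocessing the paper does, except that the paper simply tests each candidate pattern $H$ against $G_v$ via the known bounded-treewidth algorithm (Proposition~\ref{prop:twDPNoDec}) rather than designing bespoke introduce/forget/join transitions for extended folios; either route works and the difference is cosmetic. The genuine problem is in your construction of the small representative $F$.

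You define $F_0$ as the union, over all $X\in\allgraphs(R(G))$ and all $H$ in the $(X,\delta)$-folio, of one fixed realization $G_{H,X}$, and you bound $|V(F_0)|$ by (number of pairs $(H,X)$)${}\cdot 2\delta$. This bound is false: a realization of $H$ as a topological minor is a \emph{subdivision} of $H$ sitting inside $G$, so while it has at most $2\delta$ branch vertices, the paths $\varphi(e)$ can each have up to $n$ internal vertices. Hence $|V(F_0)|$ is not bounded by any function of $\delta$, $\boundary$, or even $\tw$ — it can be $\Omega(n)$. (The correctness of $F_0$ as a representative does rely on it being a literal subgraph of $G$, so you cannot simply suppress the degree-$2$ interior vertices either: after suppression the graph is no longer a subgraph of $G$ and, more concretely, suppressing kills patterns such as a bare path on three non-root vertices. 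The fact that your claimed size bound is independent of $\tw$ should itself have been a warning: the paper does eventually prove a $\tw$-independent bound (Lemma~\ref{lem:representative}), but only via the flat-wall and clique-minor irrelevant-vertex machinery, not by an elementary union-of-witnesses argument.)

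The paper's actual mechanism for shrinking $G$ is a replacement (pumping) argument on the tree decomposition: after computing the $\delta^{\star}$-folio of $G_v$ for every node $v$ (where $\delta^{\star}=\delta+|R(G)|$), the number of distinct folios is at most some $q(\delta,\tw)$, so if the depth of the decomposition exceeds $q(\delta,\tw)$ there are an ancestor $u$ and a descendant $v$ with identical folios; replacing $G_u$ by $G_v$ in the separation at $\beta(u)$ preserves the extended $\delta$-folio by Proposition~\ref{prop:equi_graph_replacement} and strictly shrinks the graph. Iterating until the depth is at most $q(\delta,\tw)$, and using that the tree is binary, yields $|V(F)|\le 2^{q(\delta,\tw)}$ roughly, which is exactly the doubly-exponential $g(\delta,\tw)$ in the statement. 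You would need to replace your union-of-realizations step by an argument of this kind (or something equivalent) for the proof to go through.
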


\begin{proof}[Proof sketch of \autoref{lem:representativeTW}]
Let $\delta^{\star}=\delta+\vert R(G)\vert$. 
We present an algorithm \alg{Alg} that will output a {\em small} graph with the same $\delta^{\star}$-folio as $G$. Then by \autoref{obs:flatfolioonly}(a) the correctness follows. 


The input of \alg{Alg} is a rooted graph $G$, a nice tree decomposition  $(T',\beta')$ of width $\tw$ and $\delta\in\mathbb{N}$ such that $|R(G)|\leq \boundary$. 
First we add $R(G)$ to each bag of the decomposition $(T',\beta')$. Let $(T,\beta)$ be the resulting tree decomposition of $G$ of width $\tw+\vert R(G)\vert$. Notice that for any $v\in V(T)$, $R(G)\subseteq \beta(v)$ and the number of children of $v$ is at most $2$. Without loss of generality assume that $\rho_{G}(R(G))=[\vert R(G)\vert]$. For any $v\in V(T)$ define a rooted graph $G_v$ as follows: $G_v=G[\gamma(v)]$ and $R(G_v)=\beta(v)$. The function $\rho_{G_v}$ is an arbitrary injective map from $R(G_v)$ to $\{1,\ldots, \vert R(G)\vert + \tw+1\}$ such that $\rho_{G_v}(x)=\rho_{G}(x)$ for all $x\in R(G)$ and $\rho_{G_v}(x)\notin [\vert R(G)\vert]$ for all $x\in \beta(v)\setminus R(G)$.

Next \alg{Alg} compute the  $\delta^{\star}$-folio of $G_v$ for all $v\in V(T)$ as follows. 
For each rooted graph $H$ where $|E(H)|+\isolated(H)\leq \delta^{\star}$ with roots mapped 
to $\rho_{G_v}(R(G_v))$,  \alg{Alg} computes whether $H$ is a topological minor in $G_v$ or not using \autoref{prop:twDPNoDec}. This step of the algorithm takes time $(\vert R(G)\vert+\tw)^{\OO(\vert R(G)\vert+\tw)}n$. 
By \autoref{prop:no.ofgraphsinfolios}, $(i)$ number of graphs in the $\delta^{\star}$-folio of $G_v$, for any $v\in V(T)$, is upper bounded by  $2^{\OO((\delta^{\star})^2)} \cdot (\vert R(G)\vert+\tw)^{\OO(\delta^{\star})}$. 
Therefore the $\delta^{\star}$-folio of $G_v$ for all $v\in V(T)$ together can be computed in time 
$2^{\OO((\delta^{\star})^2)} \cdot (\vert R(G)\vert+\tw)^{\OO(\delta+\vert R(G)\vert+\tw)}\cdot n^2$.

Because of  $(i)$, the number of distinct $\delta^{\star}$-folios 
of $G_v,v\in V(T)$, is upper bounded by  $2^{\OO((\delta^{\star})^2)}\cdot (\vert R(G)\vert+\tw)^{\OO(\delta^{\star})}=q(\delta,\tw)$ (for some function $q$). 
%
Now if the depth of the tree-decomposition $(T,\beta)$ is strictly more than $q(\delta,\tw)$, then there exist two nodes $u,v\in V(T)$, $u$ is an ancestor of $v$ and the $\delta^{\star}$-folios of $G_u$ and $G_v$ are same. 
Thus the graph $G'$ obtained by replacing $G_u$ with $G_v$ in the separation $(G_u,(G\setminus (\gamma(u)\setminus \beta(u)))\setminus E(\beta(u)))$,  has the same $\delta^{\star}$-folio of $G$ with respect to roots $R(G)$ (by \autoref{prop:equi_graph_replacement} and \autoref{obs:flatfolioonly}(b)). 
Moreover, a  tree decomposition $(T^{\star},\beta^{\star})$  of $G'$ of width $\tw+\vert R(G)\vert$ and with the number of  nodes strictly less than that of $(T,\beta)$ can be derived from $(T,\beta)$ as follows. Let $T_u$ and $T_v$ be the subtrees of $T$, rooted at $u$ and $v$, respectively. Then $T^{\star}$ is obtained by replacing $T_u$ with $T_v$ in the separation $(T_u,T\setminus(V(T_u)\setminus \{u\}))$. 
There is a natural injective mapping $f_{T^{\star}}$ from $V(T^{\star})$ to $V(T)$. Then $\beta^{\star}(w)=\beta(f_{T^{\star}}(w))$ for any $w\in V(T^{\star})$.  For any $w\in V(T^{\star})$, the number of children of $w$ in $T^{\star}$ is at most $2$ and $R(G)\subseteq \beta^{\star}(w)$. Also the $\delta^{\star}$-folios of $G^{\star}_w$ for all $w\in V(T^{\star})$ together  can be obtained from the previously computed $\delta^{\star}$-folios of $G_v$ (for all $v\in V(T)$) in time $\OO(n)$ as follows. For any $w\in V(T^{\star})$ $(a)$ the  $\delta^{\star}$-folio of $G^{
\star}_w$ is equal to the  $\delta^{\star}$-folio of $G_{f_{T^{\star}}(w)}$.   
When $G^{\star}_w=G_{f_{T^{\star}}(w)}$, statement $(a)$ is clearly true. Otherwise the correctness follows from \autoref{prop:equi_graph_replacement} and \autoref{obs:flatfolioonly}(b). 

So,  \alg{Alg} continues to execute this procedure until the depth of the tree  is bounded by $q(\delta,\tw)$. Since the degree of each node in the tree decompositions constructed in each step is at most $3$,  
when the procedure stops we will have a graph $F$ that is a $\delta$-representative of $G$  and $|V(F)|\leq 2^{q(\delta,\tw)}$.  
The computations of the $\delta^{\star}$-folio of $G_v$ for all $v\in V(T)$ altogether take time 
$2^{\OO((\delta^{\star})^2)} \cdot (\vert R(G)\vert+\tw)^{\OO(\delta+\vert R(G)\vert+\tw)}\cdot n^2$.  
The subsequent computations of the $\delta^{\star}$-folios in each step take time $\OO(n)$. 
Thus, the overall running time of the algorithm is upper bounded by 
$2^{\OO((\delta^{\star})^2)} \cdot (\vert R(G)\vert+\tw)^{\OO(\delta+\vert R(G)\vert+\tw)}\cdot n^2$.  
\end{proof}

We would like to mention that 
we did not optimize the running time of the algorithm for \autoref{lem:representativeTW} 
as the  the running time of our main algorithm  is $\Omega(n^2)$. 
As a corollary to \autoref{lem:representativeTW}, we have the following result.

\begin{corollary}\label{cor:dp}
There exists an algorithm that, given a rooted graph $G$ of treewidth at most $\tw$  and $\delta\in\mathbb{N}$ such that $|R(G)|\leq \boundary$, solves the instance $(G,\delta)$ of \FindFolio\ in time $f(\delta,\tw)n^2$ 
where $f(\delta,\tw)=2^{\OO((\delta+\boundary)^2)} \cdot (\vert R(G)\vert+\tw)^{\OO(\delta+\vert R(G)\vert+\tw)}$.
\end{corollary}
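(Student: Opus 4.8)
The plan is to reduce \FindFolio\ on $(G,\delta)$ to the same problem on a graph whose size is bounded by a function of $\delta$ and $\tw$ alone, and then solve that bounded-size instance by brute force. First I would invoke \autoref{lem:representativeTW} to compute, in time $2^{\OO((\delta+\boundary)^2)} \cdot (\vert R(G)\vert+\tw)^{\OO(\delta+\vert R(G)\vert+\tw)}\cdot n^2$, a $\delta$-representative $F$ of $G$ with $R(F)=R(G)$ and $|V(F)|\leq g(\delta,\tw)$. By \autoref{def:rep}, $F$ and $G$ have the same extended $\delta$-folio, so it suffices to output the extended $\delta$-folio of $F$.

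Second, I would compute the extended $\delta$-folio of $F$ directly from \autoref{def:extfolio}. Its domain is $\allgraphs(R(G))$, which by \autoref{obs:allGraphsNumber} has size $2^{\binom{|R(G)|}{2}}\leq 2^{\OO(\boundary^2)}$. For each $X\in\allgraphs(R(G))$ I need the $(X,\delta)$-folio of $F$, i.e.\ the $\delta$-folio of the rooted graph $F\cup X$. Since $|V(F\cup X)|=|V(F)|\leq g(\delta,\tw)$, I can enumerate all rooted graphs $H$ on at most $2\delta$ vertices with $|E(H)|+\isolated(H)\leq\delta$ and roots labelled consistently with $\rho_G$ — by \autoref{prop:no.ofgraphsinfolios} there are at most $2^{\OO(\delta^2)}\cdot\vert R(G)\vert^{\OO(\delta)}$ of them up to isomorphism — and for each such $H$ check by exhaustive search over all injective maps $\phi\colon V(H)\to V(F\cup X)$ and all choices of internally disjoint connecting paths whether $H$ is a root-respecting topological minor of $F\cup X$. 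Each such check runs in time depending only on $g(\delta,\tw)$ and $\delta$, hence only on $\delta$ and $\tw$; equivalently, since $F\cup X$ has bounded size it has bounded treewidth and one may instead apply \autoref{prop:twDPNoDec}. In either case this step contributes only an additive term bounded by a computable function of $\delta$ and $\tw$, and in particular does not touch $n$.

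Combining the two steps yields the extended $\delta$-folio of $G$, and the running time is dominated by the call to \autoref{lem:representativeTW}, giving the claimed bound $f(\delta,\tw)n^2$ with $f(\delta,\tw)=2^{\OO((\delta+\boundary)^2)} \cdot (\vert R(G)\vert+\tw)^{\OO(\delta+\vert R(G)\vert+\tw)}$. There is essentially no obstacle here: the only point needing a line of justification is that replacing $G$ by its $\delta$-representative $F$ does not change the answer, which is immediate from \autoref{def:rep}, and that the subsequent brute force on the bounded-size graph $F$ costs only a parameter-dependent constant that is absorbed into $f(\delta,\tw)$; the quadratic dependence on $n$ is inherited verbatim from \autoref{lem:representativeTW}.
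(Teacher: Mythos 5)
Your algorithm computes the correct object, but the running-time accounting does not deliver the specific bound $f(\delta,\tw)=2^{\OO((\delta+\boundary)^2)}\cdot(\vert R(G)\vert+\tw)^{\OO(\delta+\vert R(G)\vert+\tw)}$ claimed in the corollary, and the detour through the representative is also not how the paper gets there. The issue is quantitative: \autoref{lem:representativeTW} only guarantees $|V(F)|\leq g(\delta,\tw)=2^{2^{\OO((\delta+\boundary)^2)}\cdot(\boundary+\tw)^{\OO(\delta+\boundary)}}$, which is \emph{doubly} exponential in the parameters, whereas $f(\delta,\tw)$ is not. Your second phase — exhaustive search over injective maps into $V(F\cup X)$ and over systems of internally disjoint paths — costs at least $|V(F)|^{\Omega(\delta)}$, i.e.\ a doubly-exponential quantity, so it cannot be "absorbed into $f(\delta,\tw)$" as a parameter-dependent additive term; the sentence "the running time is dominated by the call to \autoref{lem:representativeTW}" is therefore false as justified. (The variant using \autoref{prop:twDPNoDec} on $F\cup X$ can be salvaged, but only by invoking $|V(F)|\leq n$ rather than $|V(F)|\leq g(\delta,\tw)$, so that its cost is $(\boundary+\tw)^{\OO(\boundary+\tw)}\cdot n$ per pattern and per $X$ — a point your write-up explicitly disclaims by saying the step "does not touch $n$".)

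The paper's route is more direct and avoids the representative entirely. By \autoref{obs:flatfolioonly}(a), the extended $\delta$-folio of $G$ is computable from the $\delta^\star$-folio of $G$ with $\delta^\star=\delta+\vert R(G)\vert$, and the first phase of the algorithm in the proof of \autoref{lem:representativeTW} already computes exactly this: it runs the rooted version of \autoref{prop:twDPNoDec} on $G_v$ for every node $v$ of the tree decomposition (with $R(G)$ added to every bag), and at the root $G_v=G$. Enumerating the $2^{\OO((\delta^\star)^2)}\cdot(\vert R(G)\vert+\tw)^{\OO(\delta^\star)}$ candidate patterns (\autoref{prop:no.ofgraphsinfolios}) and testing each in time $(\vert R(G)\vert+\tw)^{\OO(\vert R(G)\vert+\tw)}n$ gives precisely $f(\delta,\tw)\cdot n^2$ (in fact $f(\delta,\tw)\cdot n$ if one only evaluates the root). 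I recommend either switching to this direct argument or repairing your second phase as indicated above.
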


\section{Recursive Understanding for Topological Minor Containment}\label{sec:recursive}

In this section, we prove \autoref{thm:main1General} under the assumption that \autoref{thm:main1Flat} holds. The proof of  \autoref{thm:main1Flat} is the crux of this paper, and it is given in the following sections. 
For our current purpose, we apply the method of recursive understanding, which means that at each stage (besides basis cases, among which is \autoref{thm:main1Flat}) we will separate our graphs into two subgraphs, ``understand'' the behavior of one of these subgraphs, and then replace it with a so called representative that has the same ``behavior'' and is smaller.
Combining Lemmata \ref{lem:mainexrele}, \ref{lem:representativeTW} 
 and \ref{lem:representativeClique},  
 we have the following lemma, which says that a $\delta$-representative of small size exists.

\begin{lemma}\label{lem:representative}
There exists a computable function $g$ such that for any instance $(G,\delta)$ of \FindFolio\ where $|R(G)|\leq \boundary$, there is a $\delta$-representative $F$ of $G$ (with the same set of roots) such that $|V(F)|\leq g(\delta)$, where $g(\delta)=2^{2^{c'(\delta+\boundary)^2}r}$ and $r=h(2^{c(\delta+\boundary)^2})$ for some constants $c$ and $c'$. 
\end{lemma}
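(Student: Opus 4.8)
The plan is to establish \autoref{lem:representative} by a clean case analysis driven by \autoref{lem:flatWallCombinedintro} (equivalently \autoref{lem:flatWallCombined}) applied with carefully chosen parameters, and then in each case invoking the appropriate ``representative existence'' result already proved. Concretely, set $\delta^{\star}=\delta+\boundary$ and pick the grid/clique parameters so that one of the three outcomes of \autoref{lem:flatWallCombined} is guaranteed: a tree decomposition of bounded width, a large clique minor $K_t$, or a large nice flat wall. The target bound $g(\delta)=2^{2^{c'(\delta+\boundary)^2}r}$ with $r=h(2^{c(\delta+\boundary)^2})$ dictates how large we must take $t$ and the wall size $w$, so I would first fix $t$ to exceed the threshold $h'(\delta)$ of \autoref{lem:representativeClique} (which is $2^{\OO((\delta+\boundary)^2)}$) and the threshold $g(\delta,t)$ of \autoref{lem:mainexrele}, and then choose $g$ in \autoref{lem:flatWallCombinedintro} large enough that $g\geq ct^{48}(t^2+w)$ with $w\geq g(\delta,t)$ in the sense required by \autoref{lem:mainexrele}. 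Note that we only need \emph{existence} here, not an algorithm, so I do not even need the running-time bookkeeping of \autoref{lem:flatWallCombined} — I can simply quote the structural dichotomy (large clique minor, or bounded treewidth, or flat wall), or equivalently apply the flat wall theorem directly.

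The three cases are then dispatched as follows. First, if $G$ has treewidth at most $\tw = g^{\OO(1)}$ (a function of $\delta$ and $\boundary$ only), then \autoref{lem:representativeTW} directly produces a $\delta$-representative $F$ of size at most $g(\delta,\tw)=2^{2^{\OO((\delta+\boundary)^2)}(\boundary+\tw)^{\OO(\delta+\boundary)}}$, which is bounded by a function of $\delta$ and $\boundary$; one checks this fits under the stated double-exponential-times-$r$ bound (in fact it is smaller, and $r\geq 1$ since $h(k)\geq k$, so the bound is not tight here). Second, if $G$ has a $K_t$ minor with $t\geq h'(\delta)$, then \autoref{lem:representativeClique} gives a vertex $v$ such that $G\setminus v$ has the same extended $\delta$-folio as $G$, i.e.\ $G\setminus v$ is a $\delta$-representative of $G$ with the same roots but one fewer vertex. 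Third, if $G$ contains a flat wall of size $w\times w$ with $w\geq g(\delta,t)$ (the function from \autoref{lem:mainexrele}), then \autoref{lem:mainexrele} guarantees an irrelevant vertex $v$, so again $G\setminus v$ is a $\delta$-representative of $G$ with one fewer vertex.

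Combining: while $|V(G)|$ exceeds the bound $g(\delta,\tw)$ from \autoref{lem:representativeTW}, apply \autoref{lem:flatWallCombinedintro}; in the treewidth case we are already done, and in each of the other two cases we can delete an irrelevant vertex and recurse on $G\setminus v$, which still has $|R(G\setminus v)|=|R(G)|\leq\boundary$ and the same extended $\delta$-folio. Since the vertex count strictly decreases, after finitely many steps we land in the treewidth case (or run out of vertices), and the final graph $F$ is a $\delta$-representative of the original $G$ with $|V(F)|\leq g(\delta,\tw)$, which is at most $g(\delta)=2^{2^{c'(\delta+\boundary)^2}r}$ for suitable constants $c,c'$ (recalling that $r=h(2^{c(\delta+\boundary)^2})\geq 2^{c(\delta+\boundary)^2}$). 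Transitivity of the representative relation — ``same extended $\delta$-folio'' is an equivalence relation — makes the iteration valid; I would state this explicitly as a one-line observation from \autoref{def:rep}.

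The main obstacle is purely the arithmetic of reconciling the parameter choices: one must verify that $t$ can be taken as a function of $\delta,\boundary$ simultaneously above $h'(\delta)=2^{\OO((\delta+\boundary)^2)}$ and large enough that $g(\delta,t)$ (from \autoref{lem:mainexrele}, which is $2^{\OO((\delta+\boundary)^2)}t\cdot r$ with $r=h(\delta+\boundary+t)$) together with $g\geq ct^{48}(t^2+w)$, $w=g(\delta,t)$, still yields a treewidth bound $\tw$ whose induced representative size $2^{2^{\OO((\delta+\boundary)^2)}(\boundary+\tw)^{\OO(\delta+\boundary)}}$ collapses into the advertised form $2^{2^{c'(\delta+\boundary)^2}r}$. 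This is a routine but slightly delicate chain of inequalities: since $\tw$ is polynomial in $g$ which is polynomial in $t$ and $w$, and $w$ is essentially $t\cdot h(\delta+\boundary+t)$, the double exponential in the representative-size bound absorbs all the polynomial blow-ups, leaving a clean $r=h(\text{poly-exp in }(\delta+\boundary))$ in the exponent — but writing it out carefully and choosing $c,c'$ is where the only real work lies. I would relegate this to a short ``choice of constants'' paragraph and otherwise keep the proof at the level of the case analysis above.
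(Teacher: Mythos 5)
Your proposal is correct and rests on exactly the same ingredients as the paper's proof (\autoref{lem:flatWallCombined} with the same parameter cascade $t=h'(\delta)$, $w=g_1(\delta,ct^{24})$, $\tw=cg_2^{19}\log^c g_2$, dispatched via Lemmata \ref{lem:representativeTW}, \ref{lem:representativeClique} and \ref{lem:mainexrele}); the only difference is packaging — the paper takes a minimum-size representative $F$ and argues by minimality that the clique-minor and flat-wall outcomes are impossible, whereas you iteratively delete irrelevant vertices until the treewidth case fires. Both are valid; the extremal version just sidesteps the explicit termination/transitivity bookkeeping that you correctly flag as needed for the iteration.
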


\begin{proof}
Let $F$ be a rooted graph with minimum number of vertices such that $F$ is a $\delta$-representative of $G$ and $R(F)=R(G)$. 
Let $h'$ be the function defined in \autoref{lem:representativeClique} and $t=h'(\delta) \in 2^{\OO((\delta+\boundary)^2)}$. 
Let $g_0$ be the function $g$ defined in \autoref{lem:representativeTW} and $g_1$ be the function $g$ defined in  \autoref{lem:mainexrele}. Let $c$ be the constant mentioned in \autoref{lem:flatWallCombined} and let $t_1=ct^{24}$. 
Let $w=g_1(\delta,t_1)\in 2^{\OO((\delta+\boundary)^2)}t_1 \cdot r=2^{\OO((\delta+\boundary)^2)}\cdot r$, where 
$r=h(\delta+\boundary+t_1)$. 
Let $g_2$ be the constant $ct^{48}(t^2+w)$.  That is, $g_2=2^{c_1(\delta+\boundary)^2}r$, where $c_1$ is a constant.

Now we define the function $g$ to be as follows: $g(\delta)=g_0(\delta,\tw)$, where $\tw=cg_2^{19}\log^c g_2$. 
That is, $g(\delta)=g_0(\delta, \tw)=2^{2^{c'(\delta+\boundary)^2}r}$, for some constant $c'$.  
We claim that $\vert V(F)\vert \leq g(\delta)$. By applying \autoref{lem:flatWallCombined} on $F$ and integers $g_2,w$ and $t$, we know that one of the following is true. 
\begin{itemize}
\item[$(i)$] The treewidth of $F$ is at most $\tw=cg_2^{19}\log^c g_2$.
\item[$(ii)$] $K_t$ is a minor of $F$.
\item[$(iii)$] There is a subset $A\subseteq V(G)$ of size at most $t_1=ct^{24}$ and a $w\times w$ flat wall $W$ in $G\setminus A$. 
\end{itemize}
First notice that because of the minimality of $V(F)$ and by Lemmata~\ref{lem:mainexrele} and \ref{lem:representativeClique}, the statements $(ii)$ and $(iii)$ above are false. So statement $(i)$ above is true. 
That is, the treewidth of $F$ is at most $t^{\star}$. Then, by the minimality of $V(F)$ and by \autoref{lem:representativeTW}, we conclude that $\vert V(F)\vert \leq g_0(\delta,t^{\star})=g(\delta)$. 
\end{proof}

From  \autoref{lem:representative} we further have the following corollary that is more useful for our purposes.


\begin{corollary}\label{cor:representativeAlg}
There are constants $c$ and $c'$, and an algorithm that, given the extended $\delta$-folio of some graph $G$ where $|R(G)|\leq \boundary$, outputs a graph $F$ with the same extended $\delta$-folio (and the same set of roots) such that $|V(F)|\leq 2^{2^{c'(\delta+\boundary)^2}r}$ in time $2^{2^{2^{c'(\delta+\boundary)^2}r}}$, where  $r=h(2^{c(\delta+\boundary)^2})$.   
\end{corollary}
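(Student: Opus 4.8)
The statement to prove is Corollary~\ref{cor:representativeAlg}, which claims there is an algorithm that, given the extended $\delta$-folio of a rooted graph $G$ with $|R(G)|\leq\boundary$, outputs a small graph $F$ with the same extended $\delta$-folio within the stated time bound.

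\medskip\noindent\textbf{Plan.} The plan is to combine \autoref{lem:representative} (which guarantees the \emph{existence} of a $\delta$-representative $F$ with $|V(F)|\leq g(\delta)=2^{2^{c'(\delta+\boundary)^2}r}$) with a brute-force search over all candidate graphs of that size. First I would observe that a rooted graph $F$ with $R(F)=R(G)$ and at most $g(\delta)$ vertices is determined, up to the naming of its non-root vertices, by its vertex count, its edge set, and the choice of which of its vertices are roots together with the injection $\rho_F$. Since $\rho_G(R(G))\subseteq\mathbb{N}$ is already fixed and $|R(G)|\leq\boundary$, we only need to enumerate graphs on at most $g(\delta)$ labelled vertices, of which there are at most $2^{\binom{g(\delta)}{2}}\cdot g(\delta)^{\OO(\boundary)}=2^{2^{2^{c'(\delta+\boundary)^2}r}}$ (using \autoref{obs:allGraphsNumber} and the fact that the root labels can be assigned in at most $g(\delta)^{|R(G)|}$ ways); this is where the triple-exponential running time comes from.

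\medskip\noindent For each candidate rooted graph $F$ in this enumeration, I would compute its extended $\delta$-folio directly: by \autoref{cor:dp} (or simply by brute force, since $|V(F)|$ is bounded by a function of $\delta$ and $\boundary$ alone, so $n$ here is a constant in $\delta,\boundary$), for each graph $X\in\allgraphs(R(F))$ we test, for every rooted graph $H$ with $|E(H)|+\isolated(H)\leq\delta$, whether $H$ is a topological minor of $F\cup X$; this yields the $(X,\delta)$-folio and hence the extended $\delta$-folio of $F$. We then compare this computed extended $\delta$-folio with the input extended $\delta$-folio of $G$ (comparison is a finite check over $\allgraphs(R(G))$ and over the $2^{\OO(\delta^2)}\cdot\boundary^{\OO(\delta)}$ many graphs in each folio, by \autoref{prop:no.ofgraphsinfolios}). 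We output the first $F$ for which the two agree. \autoref{lem:representative} guarantees that at least one such $F$ exists among the enumerated candidates, so the search succeeds, and by construction $|V(F)|\leq g(\delta)=2^{2^{c'(\delta+\boundary)^2}r}$, which is exactly the claimed bound.

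\medskip\noindent\textbf{Running time and the main obstacle.} The number of candidates is $2^{2^{2^{c'(\delta+\boundary)^2}r}}$, and the work per candidate — computing its extended folio and comparing it with the input — is bounded by a function of $\delta$ and $\boundary$ only (since the candidate has a bounded number of vertices), so it is absorbed into the leading double-exponential factor inside the exponent. Thus the total running time is $2^{2^{2^{c'(\delta+\boundary)^2}r}}$ as claimed, possibly after adjusting the constant $c'$. I do not expect a genuine mathematical obstacle here: the only subtlety is bookkeeping — making sure the enumeration of candidate rooted graphs correctly ranges over enough graphs to capture the representative promised by \autoref{lem:representative} (in particular handling the root labels $\rho_F$ and the identification of roots with those of $G$ correctly), and making sure the representation of the input extended $\delta$-folio as a finite object (a tuple indexed by $\allgraphs(R(G))$, per the remark after \autoref{def:extfolio}) matches the representation produced when we recompute the folio of a candidate $F$. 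These are routine once the conventions are fixed, and the running-time accounting merely requires noting that each ``inner'' computation is constant in $n$ because it operates on graphs whose size depends only on $\delta$ and $\boundary$.
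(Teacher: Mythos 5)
Your proposal is correct and matches the paper's own (very terse) proof: invoke \autoref{lem:representative} for existence of a small representative, then brute-force enumerate all rooted graphs of that size, compute each candidate's extended $\delta$-folio, and output one that agrees with the input folio. The extra bookkeeping you supply on enumeration of root labels and the per-candidate cost is consistent with what the paper leaves implicit.
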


\begin{proof}
By \autoref{lem:representative}, there is a $\delta$-representative $F$ of $G$ (with the same set of roots) such that 
$|V(F)|\leq 2^{2^{c''(\delta+\boundary)^2}r}$.
Thus, to compute such a representative, we can simply consider all rooted graphs on at most 
$2^{2^{c''(\delta+\boundary)^2}r}$ vertices, compute the extended $\delta$-folio of each of them, and output one of the graphs among them that will have the same extended $\delta$-folio as $G$.
\end{proof}

Now, we state a lemma which we will invoke to resolve our problem once we discover a large clique minor in a graph.   
The correctness of the lemma is proved in \cite{DBLP:conf/stoc/GroheKMW11} and it uses the algorithm of  \autoref{thm:main1General} for finding the $(\delta-1)$-folio in a graph.


\begin{lemma}\label{lem:clique}
Let $\delta\in {\mathbb N}$ and $\delta'=\delta-1$. 
Suppose there is an algorithm ${\cal A}$ for \FindFolio\ (for finding the extended $\delta'$-folio of a graph) running in time $T(n,\delta')$, on $n$-vertex graphs with $\vert R(G)\vert\leq \alpha(\delta')$, where $\alpha(\delta')=16 (\delta')^2$.  
Then, there exists an algorithm that, given an instance $(G,\delta,t,\phi)$ of \cFindFolio\  where $|R(G)|\leq \boundary$ and $L\in\mathbb{N}$ such that 
$t\geq \max \{10\delta, L\}+\vert R(G)\vert$, uses algorithm ${\cal A}$  and  does one of the following. 
\begin{itemize}
\setlength\itemsep{0em}
\item Runs in time $2^{2^{(L+\delta+\alpha(\delta))^{\OO(1)}}}n^2+ 2^{\OO(\boundary)}T(n,\delta-1)$ and  solves $(G,\delta)$. 
\item Runs in time $2^{2^{(L+\delta+\alpha(\delta))^{\OO(1)}}}n^2$ and outputs a separation $(G_1,G_2)$ of $G$ of order $\leq 4\delta^2$ such that $\min\{|V(G_1)|,|V(G_2)|\}\geq L$.
\end{itemize}
\end{lemma}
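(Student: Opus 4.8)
\textbf{Proof proposal for Lemma~\ref{lem:clique}.}
The plan is to follow the standard argument used by Grohe et al.~\cite{DBLP:conf/stoc/GroheKMW11} to handle the large clique minor case, adapting it to our bookkeeping of rooted graphs. We are given a minor model $\phi$ of $K_t$ in $G$ with $t\geq \max\{10\delta,L\}+\vert R(G)\vert$, so we can extract $t$ mutually vertex-disjoint connected subgraphs $G_1,\ldots,G_t$ with an edge between any two of them. First I would enrich the set of roots: since $t$ is larger than $\frac{5}{2}\vert R(G)\vert + 3\delta' + 1$ for $\delta'=4\delta$ (or, more precisely, larger than the relevant threshold of \autoref{prop:RS13:6.1}), the trivial separation $(G[R(G)], G\setminus E(R(G)))$ satisfies conditions $(i)$ and $(ii)$ of \autoref{prop:RS13:6.1}, so there exists a separation $(A,B)$ satisfying conditions $(i)$--$(iv)$. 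I would compute it by first finding, for each $i$, the unique \imsep{R(G)}{N_G(V(G_i))} of minimum size via \autoref{prop:unimp}, and then applying \autoref{lem:pointtovalidsep} to identify an index $i$ for which the associated separation $(A_i,B_i)$ satisfies $(i)$--$(iv)$; this takes polynomial time in $n$ (with the polynomial factor of order $n^2$ absorbed into the claimed running time). Set $R'=V(A)\cap V(B)$; by the observation following \autoref{prop:RS13:6.1}, $\vert R'\vert\leq \vert R(G)\vert\leq\alpha(\delta)=16\delta^2$, but we will actually need the tighter bound $\vert R'\vert\leq 4\delta^2$ which follows because a minimum \sep{R(G)}{N_G(V(G_i))} has size at most $\vert R(G)\vert$; if $\vert R(G)\vert\leq 4\delta^2$ (which holds since $\boundary\le 16\delta^2$ and in fact in all our invocations $\vert R(G)\vert$ is small) we are fine, otherwise we pass to the representative of the $A$ side.

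Next I would branch on whether $\min\{\vert V(A)\vert,\vert V(B)\vert\}\geq L$. If $\vert V(B)\setminus V(A)\vert$ is large and $\vert V(A)\vert$ is large, then $(A,B)$ (viewed as subgraphs of $G$, with $A$ slightly shrunk so that both sides have at least $L$ vertices — note $\vert V(B)\setminus V(A)\vert\geq t-\vert R(G)\vert\geq L$ because each $G_i\subseteq V(B)\setminus V(A)$ and the $G_i$ are disjoint and there are $\geq L$ of them avoiding $A$) is a valid separation of order $\leq 4\delta^2$ to output, giving the second bullet. The running time here is dominated by computing the important separators and the separation, which is $2^{\OO(\delta)}n^2$, comfortably within $2^{2^{(L+\delta+\alpha(\delta))^{\OO(1)}}}n^2$. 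Otherwise, $\min\{\vert V(A)\vert,\vert V(B)\vert\}< L$; since $\vert V(B)\setminus V(A)\vert\geq L$, it must be $\vert V(A)\vert < L$. In this case $A$ is small (at most $L$ vertices), so I would compute its extended $\delta$-folio by brute force over all graphs $X$ on $R(A)$ and all candidate topological minors $H$ with detail $\leq\delta$ — there are $2^{\OO(\vert R(A)\vert^2)}$ choices of $X$ and, by \autoref{prop:no.ofgraphsinfolios}, at most $2^{\OO(\delta^2)}\vert R(A)\vert^{\OO(\delta)}$ relevant $H$, and checking each takes time $2^{\OO((L)^2)}$ since $\vert V(A)\vert<L$ — all of which is bounded by $2^{2^{(L+\delta+\alpha(\delta))^{\OO(1)}}}$. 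By \autoref{cor:representativeAlg} (or directly by \autoref{lem:representative}) I replace $A$ with a representative $A'$ of size bounded by a function of $\delta$ and $\vert R(A)\vert$, preserving the extended $\delta$-folio of $G$ by \autoref{prop:equi_graph_replacement}.

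Having replaced $A$ by a bounded-size $A'$, the key structural fact (from \autoref{prop:RS13:6.1}) is that every vertex $v\in V(B)\setminus V(A)$ is irrelevant to the $\delta'$-minor-folio of $G$, and moreover the $\delta'$-minor-folio of $B\setminus v$ relative to $R'$ is generic; combined with \autoref{lem:folioafterdel} this means that the extended $\delta$-folio of $G$ is completely determined by the extended $\delta$-folio of $A'$ together with $R'$ and the genericity of the $B$-side. Concretely, I would argue that the $(X,\delta)$-folio of $G\cup X$ for each graph $X$ on $R(G)$ can be computed from the $(\delta-1)$-folio (equivalently extended $(\delta-1)$-folio, using \autoref{obs:flatfolioonly}) of a bounded-size graph obtained by gluing $A'\cup X$ to a generic gadget on $R'$; here I invoke algorithm $\mathcal{A}$ once for each of the $2^{\OO(\boundary)}$ choices of $X$, each call on an $n$-vertex graph with at most $\alpha(\delta-1)$ roots, contributing the $2^{\OO(\boundary)}T(n,\delta-1)$ term. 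The step I expect to be the main obstacle is verifying precisely this reduction to the $(\delta-1)$-folio: one must carefully account for how a realization of a pattern $H$ of detail $\leq\delta$ in $G$ decomposes across the separation $(A',B)$ — the portion of $H$ realized inside $B$ may reuse up to all of $R'$ as branch vertices and connecting paths, so one reduces to asking, for each way the pattern "enters" $B$ (a bounded number of such patterns, each of detail $\leq\delta-1$ once we contract the parts living in $A'$ and along $R'$), whether that residual pattern lies in the folio of the other side; checking the generic side is free by \autoref{prop:RS13:6.1}, and checking the $A'\cup X$ side costs one $\mathcal{A}$-call. Making the "detail drops by one" accounting airtight — i.e.\ that the residual pattern genuinely has detail $\leq\delta-1$ and not $\delta$, which is why the lemma is stated for $\delta'=\delta-1$ — is the delicate point, and I would follow the treatment in \cite{DBLP:conf/stoc/GroheKMW11} for this bookkeeping. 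The final running time is the sum $2^{2^{(L+\delta+\alpha(\delta))^{\OO(1)}}}n^2 + 2^{\OO(\boundary)}T(n,\delta-1)$ as claimed.
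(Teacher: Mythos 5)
First, note that the paper offers no proof of this lemma at all --- it is imported from Grohe et al.\ with the remark that its correctness is proved in \cite{DBLP:conf/stoc/GroheKMW11} --- so your decision to defer the delicate ``detail drops by one'' bookkeeping to that reference mirrors what the paper itself does. The genuine gap is in the scaffolding around it, specifically in how you reach the second bullet. The separation $(A,B)$ supplied by \autoref{prop:RS13:6.1} has order bounded only by $\vert R(G)\vert\leq\boundary=16\delta^2$, whereas the lemma demands order at most $4\delta^2$. Your attempted repair (``if $\vert R(G)\vert\leq 4\delta^2$ \dots\ we are fine, otherwise we pass to the representative of the $A$ side'') does not work: $\vert R(G)\vert$ genuinely grows to $12\delta^2$ inside the recursion of \autoref{thm:main1General} --- the new root set there is half the old roots plus the separator, and the arithmetic $\boundary/2+4\delta^2\leq\boundary$ is exactly what the $4\delta^2$ bound is needed for --- and replacing $A$ by a representative changes neither the order of the separation nor the fact that a separation of $G$ itself must be returned.

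The missing idea is that the dichotomy should be driven by the minimum order of a separation between the (enriched) terminal set and the clique model, with threshold $4\delta^2$, rather than by whether the separation from \autoref{prop:RS13:6.1} is balanced. When that minimum order is at least $4\delta^2$ there is no separation to output, and this is precisely the well-linked regime in which the instance is solved outright: with $4\delta^2$ disjoint paths from the terminals into a (minor-)generic clique, one edge of each pattern of detail $\leq\delta$ can be rerouted through the clique, so deciding membership in the $\delta$-folio of $G\cup X$ reduces to $2^{\OO(\boundary)}$ queries to the extended $(\delta-1)$-folio of rooted modifications of the \emph{whole} $n$-vertex graph. This is also why the stated running time contains $T(n,\delta-1)$: your placement of the $(\delta-1)$-folio calls on a bounded-size gadget glued from $A'$ and a generic piece would only cost $T(g(\delta,L),\delta-1)$ and is inconsistent with both the statement and with where the recursion on $\delta$ actually enters. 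Only when the minimum order drops below $4\delta^2$ does a candidate separation for the second bullet exist (its clique side automatically has at least $t-4\delta^2\geq L$ vertices), and the remaining sub-case --- root side smaller than $L$ --- is where your brute-force-plus-representative step belongs.
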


Now we are ready to prove \autoref{thm:main1General} assuming \autoref{thm:main1Flat}.

\begin{proof}[Proof of \autoref{thm:main1General}]
Suppose that \autoref{thm:main1Flat} is correct.
 Accordingly, let  $g_{wall}$  and \alg{Alg-Wall} be the function and algorithm as specified by \autoref{thm:main1Flat}. Let $\delta^{\star}=\delta+\boundary$.  
 Let $f_{wall}$ be a function such that the running time of 
\alg{Alg-Wall} is $f_{wall}(\delta,t,s)n=2^{2^{\OO(((t+\delta^{\star})^2+r_1)\log (t+\delta^{\star}+r_1))}} s^{\OO(s)} n$   where  
 $r_1=h(\delta+\boundary+t)$.  Also, $g_{wall}(\delta,t)=2^{\OO(((t+\delta^{\star})^2+r_1)\log (t+\delta^{\star}+r_1))}$. 
Let  \alg{Alg-Rep} be the  algorithm as specified by \autoref{cor:representativeAlg} 
and $f_{rep},g_{rep}$  be the functions such that $f_{rep}(\delta)=2^{2^{2^{c_2(\delta^{\star})^2}r_2}}$ and 
$g_{rep}(\delta)={2^{2^{c_2(\delta^{\star})^2}r_2}}$, where $r_2=h(2^{c_2(\delta^{\star})^2})$ for some constant $c_2$. (Here, $c_2$ is the maximum of $c'$ and $c$ in \autoref{cor:representativeAlg}). Let $f_{tw}$ and \alg{Alg-TW} be the function and algorithm as specified by \autoref{cor:dp}. That is, $f_{tw}(\delta,\tw)=2^{\OO((\delta+\boundary)^2)} \cdot (\boundary+\tw)^{\OO(\delta+\boundary+\tw)}$.
In addition, let $c'$ and \alg{Alg-Decomp} be the constant and algorithm, respectively, specified by \autoref{lem:flatWallCombined}. 
Moreover, let $f_{dec}$ be the function such that the running time of the algorithm of \autoref{lem:flatWallCombined} is $f(g)n\log^2 n=2^{\OO(g^{58})}n\log^2n$. 

The proof is by induction on $\delta$. Clearly, if $\delta=1$, then the lemma is correct as the extended $\delta$-folio simply consists of a collection of isolated vertices and edges.
 Now, for some $\delta'\geq 2$, let us suppose that the lemma is correct for all instances where $\delta\leq \delta'-1$, and let us now prove it for instances where $\delta=\delta'$. 
 Let 
 \alg{Alg-Clique} be the  algorithm  specified by \autoref{lem:clique} when restricted to instances where $\delta\leq\delta'$.

\paragraph{Algorithm.} We now describe a recursive algorithm \alg{Alg-Main} that, given an instance $(G,\delta)$ of \FindFolio\ where $|R(G)|\leq \boundary$ and $\delta\leq\delta'$ solves it in time $f(\delta)n^3$,
where $f$  is the function defined in the statement of the theorem. We use $T(n,\delta)$ to denote  the running time of \alg{Alg-Main}. 
Later we will prove that $T(n,\delta)\leq f(\delta)n^3$. 
%
%
We fix some constants as follows. 
\begin{eqnarray*}
\boundary &=& 16\delta^2\\
L&=&\max\{16\delta^2+1,g_{rep}(\delta)+1\}\\
\widehat{t}&=&\max \{10\delta, L\}+\vert R(G)\vert\\
t&=&c'\widehat{t}^{24}\\
\widehat{w}&=&g_{wall}(\delta,t)\\
\widehat{g}&=&c'\widehat{t}^{48}(\widehat{t}^2+\widehat{w})\\
\tw&=&c'\widehat{g}^{19}\log^{c'}\widehat{g}
\end{eqnarray*}

First, if $n\leq g_{rep}+2$, then by \alg{Alg-TW} of \autoref{cor:dp}, \alg{Alg-Main} solves $(G,\delta)$ in time $f_{tw}(\delta, g_{rep}+2)n^2$. Now suppose that $n>g_{rep}+2$. 
By algorithm \alg{Alg-Decomp} of \autoref{lem:flatWallCombined} on $(G, \widehat{g},\widehat{w},\widehat{t})$, in time $f_{dec}(\widehat{g})n\log^2 n =2^{\OO(\widehat{g}^{58})}n \log^2n$, \alg{Alg-Main} obtains one of the following objects:
\begin{enumerate}
\item\label{recursive:case1} A nice tree decomposition of $G$ of width at most $\tw$.
\item\label{recursive:case2} A function $\phi$ witnessing that $K_{\widehat{t}}$ is a minor of $G$.
\item\label{recursive:case3} A subset $A\subseteq V(G)$ of at most $t$ vertices and a $(\tw)$-nice flat wall $W$ of size at least $(\widehat{w}\times \widehat{w})$
 in $G\setminus A$. In addition, it obtains a flatness tuple $(A',B',C,\tilde{G},G_0,G_1,\ldots,G_k)$ for $(G,\widehat{w},t,A,W)$. 
\end{enumerate}

Besed on the outcome above we have three cases. 

\medskip
\noindent
{\bf Case 1:}
\alg{Alg-Main} uses \autoref{cor:dp} to solve $(G,\delta)$ in time $f_{tw}(\delta,\tw)n^2$. 


\medskip
\noindent
{\bf Case 2:}
In this case, 
\alg{Alg-Main} calls the algorithm \alg{Alg-Clique}  with the instance $(G,\delta,\widehat{t},\phi)$ of \cFindFolio\  and $L$. \alg{Alg-Clique} does one of the following. 

\begin{itemize}
\setlength\itemsep{0em}
\item Runs in time $2^{2^{(L+\delta+\alpha(\delta))^{\OO(1)}}}n^2+ 2^{\OO(\boundary)} T(n,\delta-1)$ and  solves $(G,\delta)$. 
\item Runs in time $2^{2^{(L+\delta+\alpha(\delta))^{\OO(1)}}}n^2$ and outputs a separation $(G'_1,G'_2)$ of $G$ of order $\leq 4\delta^2$ such that $\min\{|V(G'_1)|,|V(G'_2)|\}\geq L$.
\end{itemize}

\medskip
\noindent
{\bf Case 2(a):}
In the first subcase, the execution is already complete. 

\medskip
\noindent
{\bf Case 2(b):}
We now describe how the execution proceeds when the second subcase occurs. Let $S=V(G'_1)\cap V(G'_2)$. Since $|R(G)|\leq \boundary$, there exists $i\in[2]$ such that $|V(G'_i)\cap R(G)|\leq |R(G)|/2\leq (\boundary)/2$. Without loss of generality, suppose that $i=1$. Now we set $R(G'_1)=(V(G'_1)\cap R(G))\cup S$. Then, since $\vert S\vert \leq 4\delta^2$ and $\boundary=16\delta^2$, we have that $|R(G'_1)|\leq (\boundary)/2 + 4\delta^2\leq \boundary$. Notice that $R(G'_1)\subseteq R(G)\cup S$. Furthermore, since $|V(G'_2)|\geq L\geq 16\delta^2+1$ and $\vert S\vert \leq 4\delta^2$, we have that 
$|V(G'_1)|\leq n-1$. Now, \alg{Alg-Main} calls itself recursively with $(G'_1,\delta)$ as input. The output of this recursive call is  the extended $\delta$-folio of $G'_1$.
Next, by algorithm \alg{Alg-Rep} of \autoref{cor:representativeAlg}, \alg{Alg-Main} obtains in time $f_{rep}(\delta)$ a $\delta$-representative $G_1''$ of $G_1'$ such that $|V(G_1'')|\leq g_{rep}(\delta)$. Let $G'$ be the graph obtained by replacing $G_1'$ with $G_1''$ in the separation $(G_1',G_2')$. By \autoref{prop:equi_graph_replacement}, we know that the extended $\delta$-folio of $G$ with respect to roots $R(G)\cup S$  is same as the extended $\delta$-folio of  $G'$ with respect to roots $R(G)\cup S$.  
Therefore, by \autoref{obs:flatfolioonly}(b),  
the extended $\delta$-folio of $G$ with respect to roots $R(G)$ is the same as the extended $\delta$-folio of  $G'$ with respect to roots $R(G)$. So now we set $R(G')=R(G)$.  That is, $(G,\delta)$ and $(G',\delta)$ are equivalent instances of \FindFolio.
Since $\vert V(G''_1)\vert<\vert V(G'_1)\vert$, we have that $\vert V(G')\vert \leq n-1$. 
%
%
Next, \alg{Alg-Main} calls itself recursively with $(G',\delta)$ as input. The output of this recursive call is the extended $\delta$-folio of $G'$ as well as $G$. 

\medskip
\noindent
{\bf Case 3:}
Finally, consider Case \ref{recursive:case3}. 
Here, \alg{Alg-Main} has an instance of \fFindFolio\ with arguments 
$w=\widehat{w}$ and $s=\tw$. 
Then, since $|A|\leq t$ and $\widehat{w}\geq g_{wall}(\delta,t)$, \alg{Alg-Main} can simply call algorithm \alg{Alg-Wall} of  \autoref{thm:main1Flat} to 
find  an irrelevant vertex $v$ in time $f_{wall}(\delta,t,s)n$. Then, the algorithm further performs a recursive call with $(G\setminus v,\delta)$ as input, and returns the output of this call. 



\paragraph{Analysis.} We prove that the execution of \alg{Alg-Main}, given an instance $(G,\delta)$ of \FindFolio\ where $|R(G)|\leq \boundary$ and $\delta\leq\delta'$, is correct 
and that \alg{Alg-Main} runs in time $f(\delta)n^3$. The proof is done by induction on $n$. It is clear that when $n\leq g_{rep}(\delta)+2$, the correctness of the execution is guaranteed, and also if we assume that the correctness of the execution is satisfied for all $n'\leq n-1$, then the correctness of the execution is also guaranteed for $n$. Thus, in what follows, we only analyze running time. Let $f(\delta)=2^{2^{(\beta\cdot r_1^{3})}} 2^{2^{2^{(r_2 \cdot 2^{\beta\cdot \delta^{4}})} }}$, where $\beta$ is a large constant we choose to satify the runing time bound in all the cases of the analysis.  
We prove by induction on the number of vertices that \alg{Alg-Main} on instance $(G,\delta)$ runs in time $T(n,\delta) \leq f(\delta) (n-g_{rep}(\delta)-2)n^2$, where $n=\vert V(G)\vert$.  


Before moving to the analysis, let us simplify $f_{wall}(\delta,t,\tw)$ and $g_{wall}(\delta,t)$, and prove a claim which we will use later. Recall that $r_1=h(\delta+\boundary+t)$, which is equal to $h(2^{2^{(c_1\cdot \delta^{4})}r_2})$ for some constant $c_1$ and  $r_2=h(2^{c_2(\delta^{\star})^2})$.  
We know that 
\begin{eqnarray*}
g_{wall}(\delta,t)&=&2^{\OO(((t+\delta^{\star})^2+r_1)\log (t+\delta^{\star}+r_1))}=2^{\OO(r_1^3)}   \qquad\qquad \mbox{Because $h(k)\geq k$}\\
f_{wall}(\delta,t,\tw)&=&2^{2^{\OO(((t+\delta^{\star})^2+r_1)\log (t+\delta^{\star}+r_1))}} \tw^{\OO(\tw)}\\
&=&2^{2^{\OO(r_1^3)}} 2^{\OO((\widehat{g})^{20})} \qquad\qquad\qquad\qquad\qquad\qquad \mbox{Because $h(k)\geq k$}\\
&=&2^{2^{\OO(r_1^3)}} 2^{(t\cdot \widehat{w})^{\OO(1)}}  \\
&=&2^{2^{\OO(r_1^3)}} 2^{(g_{wall}(\delta,t))^{\OO(1)}} \qquad\qquad\qquad\qquad\qquad \mbox{Because $g_{wall}(\delta,t)\geq t$} \\
&=&2^{2^{\OO(r_1^3)}} 
\end{eqnarray*}
That is, there is a constant $c$ such that $g_{wall}(\delta,t)\leq 2^{cr_1^3}$ and $f_{wall}(\delta,t,\tw)\leq 2^{2^{cr_1^3}}$.  Similarly, there is a constant $\beta_1$ such that $f_{dec}(\widehat{g})\leq 2^{2^{\beta_1 \cdot r_1^3}}$. Now we upper bound $f_{tw}(\delta,\tw)$. 

\begin{eqnarray*}
f_{tw}(\delta,\tw)&=&{2^{\OO((\delta^{\star})^2)}} (\boundary +\tw)^{\OO(\delta^{\star}+\tw)}\\
&=&{2^{\OO((\delta^{\star})^2)}} (\tw)^{\OO(\tw)}\\
&=& (\tw)^{\OO(\tw)}\\
&=&2^{2^{\OO(r_1^3)}} 
\end{eqnarray*}
Therefore, there is a constant $\beta_2$ such that $f_{tw}(\delta,\tw)\leq 2^{2^{\beta_2 \cdot r_1^3}}$. There is a constant $\beta_3$ 
such that $g_{rep}(\delta)\leq {2^{2^{\beta_3(\delta)^4}r_2}}$ and $f_{rep}(\delta)\leq 2^{2^{2^{\beta_3(\delta)^4}r_2}}$. The following claim follows from the above simplification.

\begin{claim}
\label{claim:gamma}
There is a constant $\gamma$ such that $$f_{dec}(\widehat{g})+f_{tw}(\delta,\tw)+f_{rep}(\delta)+f_{wall}(\delta,t,\tw) \leq 2^{2^{(\gamma \cdot r_1^3)}} {2^{2^{(r_2\cdot 2^{(\gamma\cdot \delta^4)})}}}.$$
\end{claim}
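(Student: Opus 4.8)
\textbf{Proof plan for Claim~\ref{claim:gamma}.}
The plan is to bound each of the four terms on the left-hand side separately, using the simplifications already carried out in the paragraphs immediately preceding the claim, and then combine the bounds into a single expression of the stated shape. Recall that in those paragraphs we established the three estimates $g_{wall}(\delta,t)\le 2^{cr_1^3}$ and $f_{wall}(\delta,t,\tw)\le 2^{2^{cr_1^3}}$ for some constant $c$; $f_{dec}(\widehat g)\le 2^{2^{\beta_1 r_1^3}}$ for some constant $\beta_1$; $f_{tw}(\delta,\tw)\le 2^{2^{\beta_2 r_1^3}}$ for some constant $\beta_2$; and $g_{rep}(\delta)\le 2^{2^{\beta_3\delta^4}r_2}$ together with $f_{rep}(\delta)\le 2^{2^{2^{\beta_3\delta^4}r_2}}$ for some constant $\beta_3$. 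So all four terms are already bounded; what remains is purely a bookkeeping step.

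First I would set $\gamma:=\max\{c,\beta_1,\beta_2,\beta_3\}$ (increasing it by a small additive constant if needed, as explained below), so that each of the first three terms $f_{dec}(\widehat g),\ f_{tw}(\delta,\tw),\ f_{wall}(\delta,t,\tw)$ is bounded by $2^{2^{\gamma r_1^3}}$, and the fourth term $f_{rep}(\delta)$ is bounded by $2^{2^{2^{\gamma\delta^4}r_2}}=2^{2^{(r_2\cdot 2^{\gamma\delta^4})}}$. Adding the first three such terms gives at most $3\cdot 2^{2^{\gamma r_1^3}}\le 2^{2^{\gamma r_1^3}+2}\le 2^{2^{\gamma' r_1^3}}$ for a slightly larger constant $\gamma'$ (here I use $r_1=h(\delta+\boundary+t)\ge 1$, which holds since $h(k)\ge k$ and the arguments are positive, so that absorbing the additive $+2$ into the exponent costs only a constant factor in $\gamma$). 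Hence the whole left-hand side is at most $2^{2^{\gamma' r_1^3}}+2^{2^{(r_2\cdot 2^{\gamma'\delta^4})}}\le 2^{2^{\gamma' r_1^3}}\cdot 2^{2^{(r_2\cdot 2^{\gamma'\delta^4})}}$, which is exactly the claimed form after renaming $\gamma'$ to $\gamma$.

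There is essentially no obstacle here: the only point that requires a line of justification is the absorption of the constant factors ($3$ from the three summands, and the $+2$ in the inner exponent) into the constant $\gamma$, and this is legitimate precisely because $r_1\ge 1$ and $\delta\ge 1$, so replacing $\gamma$ by a sufficiently large constant multiple of itself (depending only on the fixed constants $c,\beta_1,\beta_2,\beta_3$ and not on $\delta$) dominates all lower-order terms. The one bit of care needed is to make sure the bound on $f_{rep}(\delta)$ is written with the inner quantity as $r_2\cdot 2^{\gamma\delta^4}$ rather than $2^{\gamma\delta^4}\cdot r_2$ written the other way, but these are literally equal, so this is cosmetic. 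I would present the argument in two or three short sentences, since everything substantive was already done in the displayed computations above the claim statement.
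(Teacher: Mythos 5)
Your proposal is correct and matches the paper's intent exactly: the paper gives no separate proof of this claim, stating only that it "follows from the above simplification," and your bookkeeping (taking $\gamma$ to be a sufficiently large constant dominating $c,\beta_1,\beta_2,\beta_3$, bounding the sum of the three doubly-exponential-in-$r_1^3$ terms by a single such term, and then bounding the remaining sum by the product of the two factors) is precisely the intended argument.
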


%

Now we are ready to analyse the running time. 
When $n\leq g_{rep}+2$, the running time is upper bounded by $f_{tw}(\delta,g_{rep}+2)n^2 \leq f_{tw}(\delta, \tw)n^2$.
Because of \autoref{claim:gamma}, we choose $\beta\geq \gamma$ (where $\gamma$ is the constant mentioned in \autoref{claim:gamma}) and the running time will be upper bounded by $f(\delta)n^2$. 
We $n>g_{rep}+2$, we upper bound the running time for each of the above cases. 

\medskip
\noindent
{\bf Case 1:}
In this case the running time is $f_{dec}(\widehat{g})n\log^2 n+f_{tw}(\delta,\tw)n^2$, which is upper bounded by $f(\delta) n^2$, by \autoref{claim:gamma}, when $\beta\geq \gamma$ (where $\gamma$ is the constant mentioned in \autoref{claim:gamma}). 


\medskip
\noindent
{\bf Case 2:}
Since $L=g_{rep}(\delta)+1$ and $g_{rep}(\delta)\leq {2^{2^{\beta_3(\delta)^4}r_2}}$, there is a constant $\beta_4$ such that 
the running time  of \alg{Alg-Clique}  is bounded as follows. 

\begin{itemize}
\setlength\itemsep{0em}
\item Runs in time $2^{2^{2^{(r_2\cdot 2^{(\beta_4(\delta)^4)})}}}n^2+ 2^{(\beta_4\cdot \boundary)}T(n-1,\delta-1)$ and  solves $(G,\delta)$. 
\item Runs in time $2^{2^{2^{(r_2\cdot 2^{(\beta_4(\delta)^4)})}}}n^2$ and outputs a separation $(G'_1,G'_2)$ of $G$ of order $\leq 4\delta^2$ such that $\min\{|V(G'_1)|,|V(G'_2)|\}\geq L$.
\end{itemize}

\medskip
\noindent
{\bf Case 2(a):} 
There is a constant $\beta^{\star}$ such that if $\beta>\beta^{\star}$, then $2^{\beta_4\boundary} f(\delta-1)\leq f(\delta)$. 
In this case the running time is upper bounded as follows (where we will choose $\beta>\beta_4,\beta^{\star},\gamma$). 
\begin{eqnarray*}
f_{dec}(\widehat{g})n\log^2 n &+& 2^{2^{2^{(r_2\cdot 2^{(\beta_4(\delta)^4)})}}}n^2+ 2^{(\beta_4\cdot \boundary)} f(\delta-1) (n-1-g_{rep}(\delta-1)-2)(n-1)^2\\
&\leq&f(\delta)n\log^2 n+ f(\delta)n^2 + 2^{(\beta_4\boundary)}f(\delta-1) (n-1-g_{rep}(\delta-1)-2)(n-1)^2\\
&\leq& 2f(\delta)n^2 + f(\delta)(n-1-g_{rep}(\delta-1)-2)(n-1)^2\\
&\leq&  f(\delta)(n-1-g_{rep}(\delta-1))n^2\\
&\leq&  f(\delta)(n-g_{rep}(\delta)-2)n^2\qquad\qquad\qquad \mbox{(Because $g_{rep}(\delta-1)\leq g_{rep}(\delta)-1$)}
\end{eqnarray*}



\medskip
\noindent
{\bf Case 2(b):}
We know that $(G_1',G_2')$ is a separation of $G$ of order $p\leq 4\delta^2$ and $L\leq \vert V(G_1')\vert,\vert V(G_2')\vert$ $\leq n-1$. Let $n_1=\vert V(G_1')\vert$ and $n_2=\vert V(G_2')\vert$. That is $n_1+n_2-p=n$. Also notice that $\vert V(G_1'')\vert\leq g_{rep}(\delta)$. Then $n'=\vert V(G')\vert \leq n_2+\vert V(G_1'')\vert -p \leq n_2-p+g_{rep}(\delta)$. Therefore, in this case the running time can be upper bounded as follows (where $\beta'$ is a constant). 

\begin{eqnarray*}
f_{dec}(\delta)n\log^2 n &+& 2^{2^{2^{(r_2\cdot 2^{(\beta_4(\delta)^4)})}}}n^2+ f_{rep}(\delta)+f(\delta) (n_1-g_{rep}-2)n_1^2 + f(\delta) (n_2-p-2) (n')^2\\
&\leq &2f(\delta)n^2+f(\delta) (n_1-g_{rep}-2)n_1^2 + f(\delta) (n_2-p-2) (n')^2\\
&\leq &f(\delta) (n_1-g_{rep}+n_2-p-2)n^2 \\
&\leq &f(\delta) (n-g_{rep}-2)n^2 
\end{eqnarray*}
The first inequality follows from \autoref{claim:gamma} and by choosing $\beta>\gamma,\beta_4$.

\medskip
\noindent
{\bf Case 3:}
In this case the running time is upper bounded as follows. 
\begin{eqnarray*}
f_{dec}(\delta)n\log^2 n &+& f_{wall}(\delta,t,\tw)n+f(\delta) (n-1-g_{rep}-2)(n-1)^2 \\
&\leq& f(\delta) n\log^2 n +f(\delta) (n-1-g_{rep}-2)n^2\\
&\leq &f(\delta) (n-g_{rep}-2)n^2 
\end{eqnarray*}
Here the first inequality follows from \autoref{claim:gamma}. This completes the proof of the theorem. 
\end{proof}

\begin{figure}[t]
\begin{center}
\begin{tikzpicture}[scale=1]

\node (rect) at (4,2) [draw,line width=1mm,minimum width=5cm,minimum height=2cm] {\makecell[l]{\Large{\FindFolio~$(G,\delta, \alpha(\delta))$}\\\\ Solves $(G,\delta)$}};


\node (rect) at (4,7.1) [draw,line width=0.5mm,minimum width=5cm,minimum height=1.5cm] {\Large{\cFindFolio~$(G,\delta, k)$}};

\node (a) at (-1.4,7.1) [] {Good separation};

\node (a) at (9,7.1) [] {Solves $(G,\delta)$}; 

\draw[->, line width=0.5mm,red] (0.7,7.1) -- (0,7.1);
\draw[->, line width=0.5mm,red] (7.3,7.1) -- (7.8,7.1);

\draw[->, line width=0.5mm] (4,3) -- (4,6.35);

\draw[<-, line width=0.5mm] (6,3) -- (8,6.8);

\node (a) at (9, 5.6) [] {$2^{\OO(\boundary)}$ times}; 


\node (a) at (8.75, 4.5) [] {$(n,\delta-1,\alpha(\delta-1))$}; 

\node (a) at (9.25, 2.5) [] {$(n_1,\delta,\alpha(\delta))$}; 

\node (a) at (9.25, 2) [] {$(n',\delta,\alpha(\delta))$};

\node (a) at (9.25, 1.3) [] {(where $n_1+n'\leq n+g_{rep}(\delta)$)};

\node (a) at (3.2, 4.5) [] {$(n,\delta,\widehat{t})$};

\node (rect) at (1,-4.1) [draw,line width=0.5mm,minimum width=5cm,minimum height=1.5cm] {\makecell[l]{\Large{\fFindFolio~$(G,\delta,t, w,s)$}\\\\ Finds an irrelevant vertex}};

\draw[->, line width=0.5mm] (2,1) -- (2,-3.25);
\draw[->, line width=0.5mm] (6,1) -- (6,-3.5);

\draw [->,line width=0.5mm] (6.7,2.8) -- (8,2.8)--(8,1.6)--(6.7,1.6);

\node (a) at (-0.1, -1) [] {$\left(n,\delta,{t}, g_{wall}(\delta,t),\tw\right)$}; 


\node (rect) at (8.2,-4.35) [draw,line width=0.5mm,minimum width=5cm,minimum height=1cm] {\makecell[l]{\Large{${\sf tw}$-\FindFolio~$(G,\delta, {\sf tw})$}\\\\ Solves $(G,\delta)$}};

\node (a) at (7, -1) [] {$\left(n,\delta, \tw\right)$}; 


\end{tikzpicture}
\end{center}
\caption{A schematic diagram of how different algorithms are used to solve \FindFolio. Output of each algorithm is specified in the respective box except for \cFindFolio. There are two kinds of output for \cFindFolio: it either solves $(G,\delta)$ or outputs a {\em good separation}. \cFindFolio\  calls \FindFolio\  ($2^{\OO(\boundary)}$ times) with parameter $\delta-1$ only when it outputs the extended $\delta$-folio of the given input.  The dependence of each value in the arguments on $\delta$ is mentioned in the proof  of \autoref{thm:main1General}.
}
\label{fig:overview}
\end{figure}

A schematic diagram of how each procedure calls others in the proof of  \autoref{thm:main1General}  is given in \autoref{fig:overview}. 
Next we prove \autoref{thm:findIrrelevant} assuming \autoref{thm:main1Flat}.

\begin{proof}[Proof of \autoref{thm:findIrrelevant}]
Towards the proof it is enough to find an irrelevant  vertex to the $\delta$-folio of $G$, with $R(G)=\emptyset$ where $\delta=(h^{\star})^2$. 
We will use \autoref{thm:main1Flat} to find an irrelevant vertex. 
Notice that $G$ is a plane graph and  we set $\delta=(h^{\star})^2$, $t=0$, $s=0$, and $\boundary=0$. 
Therefore we have that $g_{wall}(\delta,t)=2^{\OO(r^2)}$, where $r=h(k)=2^{ck}$, because $G$ is a plane graph 
(see \autoref{prop:disPathIrrelevant} later in the paper), where $c$ is a constant. Hence, $g_{wall}(\delta,t)=2^{2^{c_1k}}$ for some constant $c_1$. Then, by \autoref{prop:gridTWplanaralgof}, there is a constant $c$ such that if $\tw(G)\geq 2^{2^{ck}}$, then the algorithm of \autoref{prop:gridTWplanaralgof} outputs $(w\times w)$-wall, where $w\geq 2^{2^{c_1k}}$. Now we apply \autoref{thm:main1Flat} (where $w''=1$) to find an irrelevant vertex. The running time of the algorithm follows by substituting 
$\delta=(h^{\star})^2$, $t=0$, $s=0$, $g=2^{2^{ck}}$ and $\boundary=0$ in the runtime of \autoref{thm:main1Flat} and \autoref{prop:gridTWplanaralgof}. 
\end{proof}

Now the only thing left is the proof of \autoref{thm:main1Flat}. 
Let $(G,\delta,t,w',s')$ be an instance of \fFindFolio. Recall that a vertex $v\in V(G)$ is {\em irrelevant} if 
the extended $\delta$-folios of $G$ and $G\setminus v$ are same. 
%
Because of \autoref{obs:flatfolioonly}, we define \fFindFoliostar\ where the input instance is same as \fFindFolio, but the objective is to find the $(\delta+\vert R(G)\vert)$-folio of $G$. \autoref{thm:main1Flat} follows from the following theorem and  in the rest of the paper mainly focuses on its proof. 



\begin{restatable}{theorem}{mainflatstar}
\label{thm:main1Flatstar}
There is a computable function $g$ and an algorithm that, given an instance $(G,\delta,t,w',s')$ of \fFindFoliostar\ such that  $|R(G)|\leq \boundary$ and $w'\geq g(\delta^{\star},t)$, and $w''\in\mathbb{N}$, finds a $w''\times w''$ flat wall within the input $w'\times w'$ flat wall such that the set of all vertices of the output inner flat wall is irrelevant. The algorithm 
runs in 
$ 2^{2^{\OO(((t+\delta^{\star})^2+r)\log (t+\delta^{\star}+r))}}(s')^{\OO(s')}(w'')^{\OO(w'')} n$ time 
where  $g(\delta^{\star},t)=(t+\delta^{\star}+r)^{\OO((t+\delta^{\star})^2+r)}w''$  and $r=h(\delta^{\star}+t)$.

\end{restatable}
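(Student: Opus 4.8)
\textbf{Proof proposal for \autoref{thm:main1Flatstar}.}

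The plan is to follow the overview in \autoref{subsec:irrelevant}, turning the informal description of a ``workspace'' into an algorithm that locates a large irrelevant flat subwall. First I would use \autoref{lem:largeWall} to pass from the given $w'\times w'$ flat wall to a large grid minor inside the plane graph $\tilde{G}$ of the flatness tuple; this is what lets us work entirely inside a bounded-treewidth planar piece. Next I would construct the $(p,2q)$-workspace $(M,\mathcal{N})$: a minimum enclosing noose $M$ around a connected chunk $\tilde{G}[\inNoose(M)\cap V(G)]$ of treewidth at most $p$, where $p=s'^{\OO(1)}$ plus a contribution from $\delta^\star$, together with a $(2q\times 2q)$-noose grid $\mathcal{N}$ partitioned into $q$ frames $\fr[0],\dots,\fr[q-1]$. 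The parameter $q$ is chosen as a sufficiently large function of $\delta^\star$, $t$, $r$, and $w''$, and the base wall size $g(\delta^\star,t)=(t+\delta^\star+r)^{\OO((t+\delta^\star)^2+r)}w''$ is picked so that such a workspace exists inside the wall. This uses \autoref{lem:flatWallCombined}-type extraction plus the treewidth/grid bounds of \autoref{prop:surfaceMinor} and \autoref{prop:gridTWplanaralgof} (or their general analogues), all running in linear-ish time.

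The combinatorial heart is the existence of an $(\ell,\eta)$-untangled solution for \fFindFoliostar. Here I would invoke \autoref{lem:RoSe} (the Unique Linkage Theorem, or its planar/bounded-genus strengthening \autoref{prop:disPathIrrelevant} when applicable), which guarantees that once a vertex is insulated by $r=h(\delta^\star+t)$ concentric cycles with no terminals inside, it is irrelevant for \DisjointPaths, hence the terminals of \emph{any} solution cannot all sit deep in the workspace. I would then run a sequence of ``cleaning'' steps: start from an arbitrary solution $\mathcal{S}$, repeatedly reroute the realizations $G_H$ so that they avoid the horizontal nooses of a chosen frame and use only $\eta$ nooses of $\fr[\ell]$, each entered by a single crossing subpath. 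Each step asserts the existence of a ``nicer'' solution; combining them yields that for some $\ell$ and some small $\eta$ (both functions of $\delta^\star$) there is an $(\ell,\eta)$-untangled solution. Then, choosing $\mu$ frames $s_1<\dots<s_\mu$ with equal partial-solution profiles (which exists because the number of distinct profiles is bounded by a function of $\eta$ and $\delta^\star$, using \autoref{prop:no.ofgraphsinfolios}), and running standard DP on the treewidth-$p$ decomposition of $\tilde{G}[\inNoose(M)\cap V(G)]$ to compute these profiles, I can ``patch'' the solution from frame $s_1$ so that it avoids all top-horizontal nooses of $\fr[s_1],\dots,\fr[s_1+w'']$. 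The subwall of $W$ living in exactly those avoided nooses is then irrelevant: no solution needs it, and the same patch works after deleting any $Y\subseteq V(G)$ outside the wall, since the untangling argument only used the planar insulation, not the rest of $G$.

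The running time analysis is routine once the structure is in place: the DP over a width-$p$ tree decomposition with states recording partial realizations of detail $\le\delta^\star$ costs $2^{2^{\OO(((t+\delta^\star)^2+r)\log(t+\delta^\star+r))}}(s')^{\OO(s')}n$, the enumeration over frames contributes a $(w'')^{\OO(w'')}$ factor (the output wall has size $w''\times w''$), and all the noose/grid extraction is near-linear; summing gives the claimed bound. \autoref{thm:main1Flat} then follows from \autoref{thm:main1Flatstar} via \autoref{obs:flatfolioonly}, since an irrelevant subwall for the $(\delta+|R(G)|)$-folio is irrelevant for the extended $\delta$-folio.

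I expect the main obstacle to be the untangling argument: proving that one can always reroute an arbitrary solution so that it interacts with some frame $\fr[\ell]$ in a controlled, $\eta$-bounded way while keeping that frame terminal-free. This requires carefully peeling off ``crossing subpaths,'' bounding how many times a single realization path $\varphi(e)$ can cross a frame (using planarity of $\tilde{G}$ and the concentric-cycle insulation), and ensuring the rerouting does not create new collisions among the internally-disjoint paths of $G_H$ or move terminals into the protected frames. Handling the apex set $A$ and the protrusions $G_1,\dots,G_k$ (which I suppressed in the overview but which are present in the actual \fFindFoliostar\ instance) adds a further layer of bookkeeping, since crossing subpaths may dip into protrusions of treewidth $\le s'$; this is where the $(s')^{\OO(s')}$ factor and the extra technical conditions from \autoref{obs:cflatmore} get used.
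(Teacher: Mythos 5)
Your proposal follows essentially the same route as the paper: extract a bounded-treewidth $(p,2q)$-workspace with a noose grid inside $\tilde{G}$, prove via the Unique Linkage Theorem that some solution can be cleaned into an $(\ell,\eta)$-untangled one, compute frame-wise partial-solution profiles (the paper's snapshots/albums) by DP on the width-$p$ decomposition, locate a long run of frames with identical profiles, and patch a partial solution from the innermost such frame past the others while avoiding the up-nooses, which makes the subwall living there irrelevant. The components, the parameter choices, the source of each running-time factor, and the technical obstacles you flag (bounding frame crossings, handling $A$ and the protrusions $G_1,\dots,G_k$) all match the paper's own proof.
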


Also, to prove \autoref{lem:mainexrele}, it is enough to prove the following lemma.

\begin{lemma}\label{lem:mainexrelestar}
There is a computable function $g$ such that for  any  instance $(G,\delta,t,w,s)$ of \fFindFoliostar\ with  $|R(G)|\leq \boundary$ and $w\geq g(\delta,t)$, there is an irrelevant vertex in $G$, where 
$g(\delta,t)=2^{\OO((\delta^{\star})^2)} t \cdot r$, 
and $r=h(\delta+\boundary+t)$. 
\end{lemma}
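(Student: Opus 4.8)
The plan is to locate a vertex $v$ that lies ``deep'' inside the planar part of the given flat wall --- so deep that it is enclosed by $r=h(\delta^\star+t)$ pairwise disjoint concentric cycles whose interiors avoid the (few) vertices that the $\delta^\star$-folio genuinely cares about, where $\delta^\star:=\delta+\boundary$. The Unique Linkage Theorem of Robertson and Seymour (\autoref{lem:RoSe}, \cite{RobertsonS12}) will then certify that $v$ cannot be essential for realizing any topological minor of detail at most $\delta^\star$, so deleting $v$ leaves the $\delta^\star$-folio unchanged; this is exactly the irrelevance required by \fFindFoliostar.

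First I would fix, for each graph $H$ in the $\delta^\star$-folio of $G$, one realization $G_H$ witnessed by a pair $(\phi_H,\varphi_H)$, and collect the branch vertices into $Z=\bigcup_H\phi_H(V(H))$. By \autoref{prop:no.ofgraphsinfolios} the $\delta^\star$-folio has at most $2^{\OO((\delta^\star)^2)}|R(G)|^{\OO(\delta^\star)}$ graphs, each with at most $2\delta^\star$ vertices, so $|Z|\le 2^{\OO((\delta^\star)^2)}$; together with the apex set $A$ ($|A|\le t$) and the roots $R(G)$ this yields a ``forbidden set'' $W^\star:=Z\cup A\cup R(G)$ of size at most $2^{\OO((\delta^\star)^2)}\cdot t$. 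To produce concentric cycles I would use \autoref{lem:largeWall}: the plane graph $\widetilde G$ of the flatness tuple has a $(w-2)\times(w-2)$-grid minor, hence by \autoref{obs:gridToWall} and \autoref{obs:wallminortosub} it contains, as a subgraph, a wall of size $\Omega(w)\times\Omega(w)$ drawn in the plane; rerouting its cycles through the planarly attached protrusions $G_1,\dots,G_k$ turns these into $\Omega(w)$ pairwise vertex-disjoint cycles of $G$ that are nested in the plane. Splitting the wall into a grid of $(|W^\star|+1)^2$ disjoint subwalls each of size $\Omega(r)$, and using the hypothesis $w\ge g(\delta,t)=2^{\OO((\delta^\star)^2)}\,t\cdot r$, some such subwall has its whole interior region disjoint from $W^\star$; inside it I would pick $r$ nested cycles $C_1\supset\cdots\supset C_r$ and a vertex $v$ strictly inside $C_r$. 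In particular $v\notin A\cup R(G)$ and $v$ is not a branch vertex of any fixed realization.

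It remains to show that every $H$ in the $\delta^\star$-folio is still a topological minor of $G\setminus v$ (the reverse containment is trivial, since deleting a vertex never creates topological minors). With the branch map $\phi_H$ frozen, the existence of a realization of $H$ with that branch map is an instance of \DisjointPathsLong\ with $|E(H)|\le\delta^\star$ pairs; the apex vertices that $G_H$ happens to traverse are pinned down and absorbed either as $\OO(t)$ additional pairs over $G\setminus A$, or --- more cleanly --- kept on the non-planar side, so that the disc bounded by $C_1$, together with the bounded-treewidth protrusions attached inside it and the $\le t$ apex vertices adjacent to it, forms an ``$\OO(t)$-near-planar'' piece. All terminals of this instance lie in $W^\star$, hence outside $C_1$, while $v$ is insulated by $r=h(\delta^\star+t)$ concentric cycles, so \autoref{lem:RoSe} gives that the instance has the same answer in this graph and after deleting $v$. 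Since it is a yes-instance (witnessed by $G_H$), it remains one, and $H$ is realizable in $G\setminus v$; hence the $\delta^\star$-folios of $G$ and $G\setminus v$ coincide, and $v$ is irrelevant, as desired.

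I expect the main obstacle to be the last step: carrying out the reduction to \DisjointPathsLong\ so that \emph{all} terminals of the reduced instance stay outside the region enclosed by $C_1$ even though the original paths $\varphi_H(e)$ may wander into it, and simultaneously folding the apex set and the protrusions attached inside $C_1$ into a structure whose interface with the rest of the graph has size $\OO(t)$ --- it is exactly this that forces the cycle-insulation parameter to be $\delta+\boundary+t$ rather than merely $\delta^\star$, and thereby dictates the form of $g(\delta,t)$.
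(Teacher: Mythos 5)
Your overall strategy is the same one the paper uses: bound the number of ``special'' vertices by $2^{\OO((\delta^{\star})^2)}\cdot t$, pigeonhole to find a large region of the planar piece avoiding them, insulate a vertex there by $r=h(\delta^{\star}+t)$ concentric cycles, and invoke the Unique Linkage Theorem once per folio element via a \DisjointPathsLong{} reduction. The paper packages the pigeonhole through its workspace/frame machinery (an $(\ell,d)$-terminal-free band of frames, then \autoref{lem:fewCrossingsFrame} and \autoref{obs:irr:vacant}) because that machinery is needed anyway for the algorithmic version; your direct concentric-cycles phrasing is a legitimate shortcut for the pure existence statement.

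There is, however, one concrete gap, and it is exactly the point you flag at the end but do not resolve. Your forbidden set $W^{\star}=Z\cup A\cup R(G)$ does not contain the terminals that the \DisjointPathsLong{} reduction actually produces. When a realization $G_H$ is cut along $A$, the resulting paths have endpoints in $\phi_H(V(H))\cup N_{G_H}(A\cap V(G_H))$, and the set $N_{G_H}(A\cap V(G_H))$ can lie arbitrarily deep inside your chosen subwall (a path $\varphi_H(e)$ may enter the disc bounded by $C_1$ and leave it through an apex vertex). In that case the reduced instance has terminals strictly inside $C_1$, and neither option you sketch works: ``$\OO(t)$ extra pairs'' still places terminals inside the insulating cycles, and ``keeping $A$ on the non-planar side'' puts a vertex of $V(\Gamma\cap K)$ inside $C_r$, so $v$ is no longer $h$-insulated from $V(\Gamma\cap K)$ as \autoref{lem:RoSe} requires. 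The fix is to enlarge $W^{\star}$ during the pigeonhole: for each folio element, also mark the at most $2t\delta^{\star}$ non-apex endpoints of edges of $G_H$ incident to $A$, and the (at most three) attachment vertices $V(G_i)\cap V(G_0)$ of every protrusion $G_i$ that contains a terminal of some $G_H$. This is precisely the second and third bullets in the paper's definition of an $(\ell,d)$-terminal-free solution and the ``$\delta^{\star}t$'' term in its counting; it increases $|W^{\star}|$ only by another $2^{\OO((\delta^{\star})^2)}\cdot t$, so $g(\delta,t)$ is unaffected and the rest of your argument then goes through.
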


Now the instance  $(G,\delta,t,w',s')$ of \fFindFoliostar\ is given along with 
a subset $A\subseteq V(G)$, a separation $(A',B')$
 and  
a flat wall $W$ in $G\setminus A$.  We are also given subgraphs $G_0,\ldots,G_k$ of $B'$, 
a cycle $C$ (not necessarily in $G$) containing all the pegs of $W$ in the order determined by the boundary of $W$  
 and an edge minimal supergraph $\tilde{G}$ of 
$G_0$ satisfying all the seven conditions mentioned in \autoref{def:KWT} and \autoref{obs:cflatmore} 
for $B'+E(C)$ being a  $C$-flat. 
That is, we have an embedding of $\tilde{G}$ in a disc with $C$ being the boundary of  the plane graph $\tilde{G}$.  
Without loss of generality we assume that $\tilde{G}$ is a nicely drawn plane graph. 
That is, if $G_0$ has many components, then we prune $G_0$ to be the component containing $W$. 
We fix $\delta^{\star}$ to denote the number $\delta+\vert R(G)\vert$. 
We refer to a set 
${\cal S}=\{(H,\phi_{H},\varphi_{H}) : H \mbox{ in the $\delta^{\star}$-folio of  $G$ and $(\phi_{H},\varphi_{H})$ is a witness for it}\}$  
as a {\em solution} of $(G,\delta,t,w',s')$.  We would like to mention that for any $H$ in the $\delta^{\star}$-folio of  $G$, there is only one tuple $(H,\phi,\varphi)$ in ${\cal S}$. 
We call $T=\bigcup_{(H,\phi_H,\varphi_H)\in {\cal S}} \phi_H(V(H))$ 
the set of {\em terminals} (with respect to ${\cal S}$).   

{\bf Whenever we refer to an instance $(G,\delta,t,w',s')$, we assume that we are given $A\subseteq V(G)$,  
a separator $(A',B')$ of $G\setminus A$, a cycle $C$, subgraphs $G_0,G_1,\ldots,G_k$ of $B'$ and 
a nicely drawn plane graph $\tilde{G}$. Moreover, these notations are fixed and we will not state it explicitly in the statement of results, but will use in their proofs. 
}



\section{Irrelevant Vertices for Disjoint Paths}\label{sec:disjirr}


In this section we state the results about finding an irrelevant vertex for \DisjointPathsLong\ on planar graphs as well as on general graphs. Then we use these results to show the 
existence of  a {\em region of irrelevant vertices} in the plane graph $\tilde{G}$ for \fFindFoliostar.

 \defparproblem{\DisjointPathsLong\ (\DisjointPaths)}{An undirected graph $G$, and a set $T=\{\{s,t\}: s,t\in V(G)\}$.}{$|T|=k$}{Does $G$ contain a set of internally vertex disjoint paths that for every $\{s,t\}\in T$, contain one path whose endpoints are $s$ and $t$?}

The special case of \DisjointPaths\ where the input graph $G$ is a planar graph is known as \pDP. Moreover, a vertex $v\in V(G)$ such that there exists a pair $P\in T$  and $v\in P$, is called a {\em terminal}. Let $(G,H,k)$ be an instance of \DisjointPaths. We say that a vertex $v\in V(G)$ is {\em irrelevant} if $(G,T,k)$ is a \yes-instance, then it holds that $(G\setminus v,T,k)$ is a \yes-instance. 
Notice that if $(G\setminus v,T,k)$ is a \yes-instance, then $(G,T,k)$ is a \yes-instance.  
The main ingredient of the known algorithms for \DisjointPaths\ both on planar as well as general graphs
is the proof that a vertex which is {\em sufficiently insulated} in planar region having no terminals is irrelevant. Towards that we 
first start with the definition of concentric cycles in a planar graph and state a result about 
irrelevant vertices for \DisjointPaths\ on planar graphs, and then we consider a similar result for general graphs.

\begin{definition}[{\bf Concentric Cycles}, \cite{DBLP:journals/jct/AdlerKKLST17}]\label{def:concentricCycles}
Let $G$ be a plane graph, and let ${\cal C}=(C_0,C_1,\ldots,C_s)$ be a sequence  in $G$. We say that $\cal C$ is {\em concentric} if the cycles in $\cal C$ are pairwise vertex-disjoint and for all $i\in[s-1]_0$, $C_i$ is contained in the inner face of $C_{i+1}$. Furthermore, we say that $\cal C$ is {\em tight} if it is concentric and the following conditions are satisfied.
\vspace{-0.5em}
\begin{itemize}
\itemsep0em 
\item There does not exist a path between two vertices in $V(C_0)$ that is contained in the interior face of $C_0$.
\item For all $i\in[s]$, there does not exist a cycle $C$ that is contained in the inner face of $C_i$ but is not equal to $C_i$, such that $C_{i-1}$ is contained in the inner face of $C$ and does not intersect $C_{i-1}$.
\end{itemize}
\end{definition}

\begin{proposition}[\cite{DBLP:journals/jct/AdlerKKLST17}]\label{prop:disPathIrrelevant}
There exists $c\in\mathbb{N}$ such that for every instance $(G,T,k)$ of \pDP\ and tight sequence ${\cal C}=(C_0,C_1,\ldots,C_{2^{ck}})$ of cycles in $G$ with $T$ being in the exterior face of $C_{2^{ck}}$, any vertex 
contained in inner face of $C_0$ is irrelevant.
\end{proposition}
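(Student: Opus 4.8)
The statement is the planar specialisation of the Unique Linkage Theorem, so the plan is to follow the classical rerouting scheme of Robertson and Seymour in its combinatorial, planar form. Write $D_i$ for the closed region consisting of $C_i$ together with its interior face, so that $D_0\subseteq D_1\subseteq\cdots\subseteq D_{2^{ck}}$, the target vertex $v$ lies in $D_0$, and every terminal of $T$ lies outside $D_{2^{ck}}$. Since the backward direction is trivial (deleting a vertex cannot create a linkage), it suffices to turn an arbitrary \yes-certificate into one avoiding $v$: given a family $\mathcal{L}$ of at most $k$ internally vertex-disjoint paths realising $T$, we must produce such a family disjoint from $v$. The case where $v$ lies in the \emph{open} interior face of $C_0$ is immediate from the first tightness condition of \autoref{def:concentricCycles}: a path of $\mathcal{L}$ through $v$ would have both endpoints outside $D_{2^{ck}}\supseteq D_0$, hence would enter and leave the interior of $C_0$, restricting to a $V(C_0)$--$V(C_0)$ path inside that region, which is forbidden; so no path of $\mathcal{L}$ uses $v$ and $\mathcal{L}$ itself already works. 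The remaining genuine case is $v\in V(C_0)$, and for it we will in fact produce a certificate disjoint from all of $D_0$.

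Next I would normalise $\mathcal{L}$ inside the big disc: among all \yes-certificates, pick one minimising, lexicographically, the number of vertices it places on $C_{2^{ck}},C_{2^{ck}-1},\dots,C_1$ (in that priority order). Using the tightness conditions of \autoref{def:concentricCycles}, which forbid the local shortcuts one would otherwise use to save a crossing, this forces the restriction of $\mathcal{L}$ to $D_{2^{ck}}$ to be a disjoint union of arcs with both endpoints on $C_{2^{ck}}$ (terminals are outside), each of which dips into the interior of any $D_i$ at most once; hence $\mathcal{L}$ meets each $C_i$ in at most $2k$ vertices. For $i\in\{1,\dots,2^{ck}\}$ define the \emph{trace} $\pi_i$ of $\mathcal{L}$ on $C_i$: the cyclic sequence of the (at most $2k$) vertices of $\mathcal{L}\cap C_i$ in their order on $C_i$, together with the pairing that matches the two endpoints of each maximal subpath of $\mathcal{L}$ contained in $D_i$. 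Because these subpaths are pairwise-disjoint curves in a disc, the pairing is a non-crossing matching on the cyclically ordered points, so $\pi_i$ ranges over a set of size at most $4^{2k}\cdot\mathrm{poly}(k)\le 2^{ck}$ for a suitable absolute constant $c$, which we fix. Since there are $2^{ck}$ cycles $C_1,\dots,C_{2^{ck}}$, the pigeonhole principle yields indices $1\le i<j\le 2^{ck}$ with $\pi_i=\pi_j$.

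The crux is then a rerouting inside the closed annulus $R$ bounded by $C_i$ (inside) and $C_j$ (outside), which contains no terminal and contains the nested cycles $C_{i+1},\dots,C_{j-1}$. Keep the part of $\mathcal{L}$ lying outside $D_j$ unchanged, discard everything $\mathcal{L}$ does inside $D_j$, and replace it by a new collection of internally disjoint subpaths that live in $R$ and reconnect the vertices of $\mathcal{L}\cap C_j$ according to the matching of $\pi_j$; splicing these onto the untouched outer parts gives a new family realising $T$. Since $R\subseteq D_j$ and $R$ is disjoint from the interior of $D_i\supseteq D_0\ni v$, the new family avoids $v$, as required. What remains is to show such a reconnection exists: this is the planar Unique Linkage statement for an annulus. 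Because $\pi_i=\pi_j$, the \emph{same} non-crossing matching appears on both boundary cycles of $R$, and the $j-i-1$ concentric cycles inside $R$ supply enough parallel structure to transport the realisation of that matching from $C_i$ outward and realise it entirely within $R$ — concretely, cutting $R$ along a short curve disjoint from $\mathcal{L}$ reduces the task to a disjoint-paths problem in a disc whose prescribed boundary order is compatible with a non-crossing matching, solvable thanks to the parallel paths coming from $C_{i+1},\dots,C_{j-1}$.

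I expect this last rerouting step to be the main obstacle: making precise the sense in which equality of the traces $\pi_i,\pi_j$ together with the intervening concentric cycles is enough to re-realise the linkage inside $D_j$ without dipping into $D_i$, while keeping all paths internally disjoint and compatible with the fixed outer portions, amounts to a self-contained proof of the annular case of the Unique Linkage Theorem and requires a careful normal-form and homotopy analysis of disjoint paths in an annulus (this is precisely where the planar quantitative bound $2^{ck}$ is extracted, as opposed to the unspecified function $h$ in the general case). By comparison, the normal-form reductions that invoke tightness, the counting that pins down the constant $c$, and the final splicing argument are routine bookkeeping.
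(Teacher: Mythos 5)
The paper does not prove this proposition at all: it is imported verbatim as a black box from Adler, Kolliopoulos, Krause, Lokshtanov, Saurabh and Thilikos \cite{DBLP:journals/jct/AdlerKKLST17}, so there is no in-paper argument to compare yours against. Judged on its own terms, your proposal is a plausible high-level skeleton (normalise the linkage, bound the number of traces, pigeonhole two equal traces, reroute in the annulus between them), but it contains a genuine gap exactly where you flag one: the annulus-rerouting lemma is asserted, not proved, and it is not ``routine'' --- it is essentially the entire content of the theorem. Equality of the traces $\pi_i=\pi_j$ tells you that the same abstract non-crossing matching is realised inside $D_i$ and inside $D_j$, but it does not by itself give you vertex-disjoint paths \emph{inside the annulus $R$} realising that matching on the points of $\mathcal{L}\cap C_j$: the concentric cycles $C_{i+1},\dots,C_{j-1}$ provide ``horizontal'' routing material, but you also need sufficiently many disjoint ``radial'' connections between consecutive cycles at the right places, and tightness of the sequence does not directly supply these. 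Your final sentence, cutting the annulus along a curve disjoint from $\mathcal{L}$ and invoking a disc-linkage argument, is precisely the step that in the literature requires the full unique-linkage machinery (or, in the planar case, the delicate extremal analysis of \cite{DBLP:journals/jct/AdlerKKLST17}); without it the proof does not close.

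Two smaller points. First, the claim that your lexicographic minimisation forces every path to cross each $C_i$ at most twice (hence $|\mathcal{L}\cap C_i|\le 2k$) also needs an argument via the tightness conditions; it is standard but not free, since a priori a single path may enter and leave $D_{2^{ck}}$ many times. Second, be aware that the actual proof in \cite{DBLP:journals/jct/AdlerKKLST17} achieving the single-exponential bound does not proceed by this naive pigeonhole-on-matchings scheme; the trace-pigeonhole strategy is the classical route to the \emph{qualitative} statement (with a much worse function of $k$), and even that route still requires proving the annular rerouting lemma you have left open.
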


Because of the work of Robertson and Seymour~\cite{RobertsonS12}, in fact if there is a planar portion in a general graph with a {\em large} sequence of 
concentric cycles ${\cal C}=(C_0,\ldots,C_h)$ with $T$ being in the exterior face of $C_h$, then any vertex contained in the inner  face of 
$C_0$ is irrelevant for \DisjointPaths\ on general graphs. Next, we explain the result  (\autoref{lem:RoSe}) of Robertson and Seymour 
and deduce a corollary for general graphs similar to the one in \autoref{prop:disPathIrrelevant}, which is 
a special case of \autoref{lem:RoSe}. 
Then we will prove a  similar result when the graph has a large flat wall.

We start with some definitions and notations needed to state the result of Robertson and Seymour~\cite{RobertsonS12}.
A {\em surface} is a connected compact $2$-manifold with boundary.    
For a surface $\Sigma$, $bd(\Sigma)$ denotes the boundary of $\Sigma$.  A {\em drawing}  in $\Sigma$ is a pair $(U,V)$, 
where $U\subseteq \Sigma$ is closed, $V\subseteq U$ is finite, $U\cap bd(\Sigma)\subseteq V$, $U\setminus V$ 
has only finitely many arc-wise connected components called {\em edges}, and for each edge $e$, either $\overline{e}$ 
is homeomorphic to a circle and $\vert \overline{e}\cap V\vert=1$, or $\overline{e}$ is homeomorphic to the unit 
interval $[0,1]$ and $\overline{e}\cap V$ is the set of ends of $\overline{e}$, where $\overline{e}$ denotes the 
topological closure of $e$.  For a drawing $\Gamma=(U,V)$ in $\Sigma$, we write $U(\Gamma)=U$ and $V(\Gamma)=V$. 
For a plane graph $\Gamma$, we also use $\Gamma$ to represent its drawing in a surface $\Sigma$.  
For a subgraph $L$ of a graph $G$ and $Z\subseteq V(G)$, the {\em effect of $L$ on $Z$} is the partition of 
$V(L)\cap Z$ in which two vertices belonging to the same block of the partition if and only if they belong to the 
same component of $L$.

\begin{definition}[\cite{RobertsonS12}]
\label{def:h-insulated}
Let $h$ be a positive integer, $\Gamma$  be a drawing on a surface $\Sigma$ and $Y\subseteq \Sigma$. We say that a point $x\in \Sigma$ is {\em $h$-insulated (in $\Sigma$) from $Y$ (by $\Gamma$)} if there are $h$ disjoint circuits of $\Gamma$, all bounding discs (of circuits) in $\Sigma$ disjoint from $Y$ and containing $x$ in their interiors. More formally, there are $h$ closed discs $\Delta_1,\ldots,\Delta_h\subseteq \Sigma$ with the following properties. 
\begin{itemize}
\item $x\in \Delta_1 \setminus bd(\Delta_1)$ and $Y\cap \Delta_h=\emptyset$,
\item  for $i\in [h-1]$, $\Delta_{i} \subseteq \Delta_{i+1}\setminus bd(\Delta_{i+1})$,
\item for $i\in [h]$, $bd(\Delta_i)\subseteq U(\Gamma)$. 
\end{itemize}
\end{definition}

In the above definition, we have that $\Delta_1\subseteq \Delta_2\subseteq \ldots \subseteq \Delta_h$ unlike 
the definition in ~\cite{RobertsonS12}. In the definition given in~\cite{RobertsonS12}, $\Delta_h\subseteq \Delta_{h-1}\subseteq \ldots \subseteq \Delta_1$. 
This reversal of order is to match  our notations for the planar case. 

Our proof is based on the following variant of the unique linkage theorem of Robertson and Seymour from  \cite{RobertsonS12}.

\begin{proposition}[(3.1) \cite{RobertsonS12}]
\label{lem:RoSe}
For every non-negative integer $p$, there exists $h(p)\geq 1$ with the following property. Let $\Gamma,K$ be subgraphs 
of a graph $G$ with $G=\Gamma \cup K$, and let $\Gamma$ be a drawing in a surface $\Sigma$. 
 Let $v\in V(\Gamma)$ 
($v$ is a vertex in $\Gamma$) be $h(p)$-insulated from $V(\Gamma\cap K)$ by $\Gamma$, let $Z\subseteq V(K)$ with 
$\vert Z\vert\leq p$, and let $L$ be a subgraph of $G$. Then there is a subgraph $L'$ of $G\setminus v$ with the same effect on 
$Z$ as $L$ and $L'\cap K$ is a subgraph of $L$.  
\end{proposition}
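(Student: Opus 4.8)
The plan is to observe that \autoref{lem:RoSe} is exactly statement (3.1) of Robertson and Seymour \cite{RobertsonS12}, transcribed into the notation fixed here, so that ``proving'' it amounts to setting up a precise dictionary between the two formulations and checking that nothing is lost in translation; no new combinatorics is needed on our side. In particular I would take $h(p)$ to be literally the function furnished by (3.1) of \cite{RobertsonS12}, and leave its value (and its known computability) entirely to that reference.

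First I would line up the hypotheses. Given $\Gamma, K$ with $G=\Gamma\cup K$, a drawing of $\Gamma$ in a surface $\Sigma$, a vertex $v\in V(\Gamma)$, a set $Z\subseteq V(K)$ with $|Z|\le p$, and a subgraph $L$ of $G$, the only point where our conventions differ from those of \cite{RobertsonS12} is the indexing of the nested discs in the definition of ``$h$-insulated'' (as already flagged in the remark after \autoref{def:h-insulated}): here the chain runs $\Delta_1\subseteq\Delta_2\subseteq\cdots\subseteq\Delta_{h(p)}$ with $v$ strictly inside the innermost disc $\Delta_1$ and $V(\Gamma\cap K)$ avoided by the outermost disc $\Delta_{h(p)}$, whereas in \cite{RobertsonS12} the same chain is written $\Delta_{h(p)}\subseteq\cdots\subseteq\Delta_1$. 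Applying the relabeling $\Delta_i\mapsto\Delta_{h(p)+1-i}$ turns one list of conditions into the other verbatim (strict nesting, every disc boundary drawn along $\Gamma$, $v$ in the interior of the innermost disc, and $Y=V(\Gamma\cap K)$ missed by the outermost disc), so ``$v$ is $h(p)$-insulated from $V(\Gamma\cap K)$ by $\Gamma$'' in the sense of \autoref{def:h-insulated} is equivalent to the insulation hypothesis of \cite{RobertsonS12}. With the hypotheses identified, (3.1) of \cite{RobertsonS12}, invoked with $Y=V(\Gamma\cap K)$ and the same $Z$ and $L$, produces a subgraph $L'$ of $G\setminus v$ that has the same \emph{effect on $Z$} as $L$ (i.e., induces the same partition of $V(L)\cap Z$ into blocks of mutually connected vertices) and satisfies $L'\cap K\subseteq L$; this is word for word the conclusion of \autoref{lem:RoSe}, so the derivation is finished.

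The hard part will not lie in this paper at all: it is \cite{RobertsonS12} itself, the ``unique linkage''/leanness theorem of the Graph Minors series, which I would use strictly as a black box --- the introduction already stresses that no explicit bound on $h$ has been published, and \autoref{thm:generalGraphsAlgorithmTMC} is phrased precisely in terms of this function. For completeness I would add the remark that in the two special cases used elsewhere one can bypass \cite{RobertsonS12} entirely: when $\Gamma$ is planar one invokes \autoref{prop:disPathIrrelevant} (using a \emph{tight} sequence of concentric cycles, with the explicit bound $2^{ck}$), and when $\Gamma$ has bounded Euler genus one invokes Mazoit's single-exponential bound \cite{mazoit2013single}; both give $h(p)=2^{\OO(p)}$, which is what drives the improved running times in \autoref{thm:mainIntroGenus} and \autoref{thm:main}. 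The only mild care needed when quoting those planar or bounded-genus versions in the flat-wall setting is the bookkeeping that identifies $K$ together with its attachment set $V(\Gamma\cap K)$ with the ``terminals'' and the circuits bounding the insulating discs with the images of the concentric cycles; once that is in place the deduction is again immediate.
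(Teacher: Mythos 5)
Your proposal is correct and matches the paper exactly: the paper states \autoref{lem:RoSe} as an imported result, citing (3.1) of \cite{RobertsonS12} without proof, and the only "work" it does is the same notational remark you make — that the nesting order of the discs $\Delta_1\subseteq\cdots\subseteq\Delta_h$ is reversed relative to \cite{RobertsonS12}. Treating the unique linkage theorem as a black box (with the planar and bounded-genus specializations handled separately via \autoref{prop:disPathIrrelevant} and \cite{mazoit2013single}) is precisely the paper's approach.
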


Now we are ready to restate this result for our need.

\begin{corollary}
\label{corr:RoSe1}
Let $(G,T,k)$ be an instance of \DisjointPaths. 
Let $\Gamma,K$ be subgraphs 
of a graph $G$ with $G=\Gamma \cup K$, $\Gamma$ is a plane graph, and $V(K)\cap V(\Gamma)$  is on the exterior face of $\Gamma$.  
Then, there exists an integer $h(k)\geq 1$ with the following property. 
If there is a sequence ${\cal C}=(C_0,\ldots,C_{h(k)})$ of concentric cycles in $\Gamma$ 
such that no terminal vertex in $T$ is contained in the inner face of $C_{h(k)}$,  
then any vertex  contained in the inner face of $C_0$ is irrelevant. 
\end{corollary}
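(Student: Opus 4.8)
The plan is to deduce \autoref{corr:RoSe1} from \autoref{lem:RoSe} by supplying the two ingredients that \autoref{lem:RoSe} needs: the topological notion of $h$-insulation and an encoding of a \DisjointPaths{} solution as a subgraph $L$ whose ``effect on $Z$'' captures the connection pattern between the terminal pairs. First I would set up the surface: take $\Sigma$ to be the sphere (or, equivalently, the closed disc bounded by the outer face of $\Gamma$ together with the point at infinity), so that the plane graph $\Gamma$ is drawn in $\Sigma$ and $V(K)\cap V(\Gamma)$ lies on the outer face, hence in the unbounded region which we push to the ``pole'' opposite to the nested cycles. I would let $h(k)$ be the value $h(p)$ from \autoref{lem:RoSe} with $p = 2k$, since the set $Z$ will be the set of at most $2k$ terminal vertices of $T$.

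Next I would translate the concentric cycles into discs. Given the sequence ${\cal C}=(C_0,\dots,C_{h(k)})$ of pairwise vertex-disjoint cycles with each $C_i$ in the inner face of $C_{i+1}$, each cycle $C_i$ is a circuit of the drawing $\Gamma$ and bounds a closed disc $\Delta_i\subseteq\Sigma$ (its inner face together with the cycle). Concentricity gives $\Delta_0\subseteq \Delta_1\setminus bd(\Delta_1)\subseteq\cdots$, i.e.\ the nesting condition of \autoref{def:h-insulated}. A vertex $v$ contained in the inner face of $C_0$ lies in $\Delta_0\setminus bd(\Delta_0)$. The hypothesis that no terminal vertex of $T$ lies in the inner face of $C_{h(k)}$, together with the fact that $V(K)\cap V(\Gamma)$ lies on the outer face (hence outside all the $\Delta_i$), ensures that $\Delta_{h(k)}$ is disjoint from $Y := V(\Gamma\cap K)$; actually I should take the relevant $Y$ to be $V(\Gamma\cap K)$ as in \autoref{lem:RoSe}, and note it sits in the outer region. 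Thus $v$ is $h(k)$-insulated from $V(\Gamma\cap K)$ by $\Gamma$ in the sense of \autoref{def:h-insulated}.

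Then I would invoke \autoref{lem:RoSe}. Suppose $(G,T,k)$ is a \yes-instance; let $L$ be the union of the $k$ internally vertex-disjoint paths realizing the pairs in $T$, regarded as a subgraph of $G$. Set $Z$ to be the set of terminals, so $|Z|\le 2k = p$. By \autoref{lem:RoSe} there is a subgraph $L'$ of $G\setminus v$ with the same effect on $Z$ as $L$. ``Same effect on $Z$'' means $L'$ induces the same partition of $Z$ into components as $L$ does; in particular, for every pair $\{s,t\}\in T$, the vertices $s$ and $t$ lie in the same component of $L'$, so $L'$ contains an $s$–$t$ path. A standard cleanup argument — take a minimal subgraph of $L'$ that still connects each prescribed pair, then shortcut to extract internally vertex-disjoint $s$–$t$ paths for the pairs — shows $(G\setminus v, T, k)$ is a \yes-instance. (This ``decode a linkage from a subgraph with the right effect'' step is routine and also appears implicitly in the \DisjointPaths{} literature; I would state it as a short observation.) The converse direction, that a solution avoiding $v$ lifts to a solution in $G$, is immediate since $G\setminus v$ is a subgraph of $G$. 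Hence $v$ is irrelevant, which is exactly the claim.

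The main obstacle, and the only place requiring care, is the passage from ``$L'$ has the same effect on $Z$ as $L$'' to ``$G\setminus v$ has a genuine linkage for $T$'': one must be slightly careful that the partition-of-components formulation really does force $s$ and $t$ into the same component for each pair (this is why $Z$ must contain all $2k$ terminals, not just one endpoint per pair), and that vertex-disjointness of the final paths can be recovered — but since the pairs are distinct vertices and we only need internally vertex-disjoint paths, a greedy/minimality extraction suffices. A secondary bookkeeping point is making sure the surface $\Sigma$ and the placement of $V(\Gamma\cap K)$ on the outer face are compatible with \autoref{def:h-insulated}'s requirement that the outermost disc $\Delta_h$ be disjoint from $Y$; this is handled by choosing $\Sigma$ to be the sphere with $\Gamma$ drawn so that its outer face (containing $V(\Gamma\cap K)$) is the complement of $\Delta_{h(k)}$. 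Everything else is a direct, essentially mechanical, specialization of \autoref{lem:RoSe}.
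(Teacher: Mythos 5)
Your overall strategy is the same as the paper's — both proofs are direct specializations of \autoref{lem:RoSe}, and your construction of the discs $\Delta_i$ from the concentric cycles and the verification of $h$-insulation match the paper's. The difference is in how you instantiate $Z$ and $L$. You take $Z$ to be the $2k$ terminals and $L$ the whole linkage (so $p=2k$); the paper instead removes the terminals, taking $L={\cal P}\setminus\bigl(\bigcup_{S\in T}S\bigr)$ and $Z=\bigcup_{S\in T}N_{\cal P}(S)$, the path-neighbours of the terminals (so $p=k$), and afterwards glues the terminals and their incident path-edges back onto $L'$.

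This difference matters at the one step you flag as delicate. Your decoding of $L'$ is under-justified as written: knowing only that each pair $\{s,t\}$ lies in a common component of $L'$ does \emph{not} let a "minimality plus shortcutting" argument produce internally vertex-disjoint paths — if two pairs sat in the same component, the extraction can genuinely fail (e.g.\ a star component forces the two paths through its centre). What rescues the argument is that "same effect" means the \emph{same partition} of $Z$, not a refinement of it: distinct blocks of the partition induced by $L$ must remain in distinct components of $L'$, so the extracted paths live in different components and are automatically disjoint. You should state this explicitly. Even then, your instantiation breaks in the case the problem definition permits, namely when two pairs of $T$ share a terminal: then the two paths of ${\cal P}$ meet at that terminal, the corresponding blocks of the effect of $L$ on $Z$ merge into one, and a single component of $L'$ need not contain two internally vertex-disjoint paths realizing both pairs. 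The paper's choice of $Z$ and $L$ is designed exactly to avoid this — stripping the terminals makes the $k$ truncated paths pairwise vertex-disjoint, so the effect is always a partition into $k$ blocks, and reattaching the terminals at the end only ever re-identifies endpoints, preserving internal disjointness. Either restrict to instances with pairwise vertex-disjoint solution paths, or adopt the paper's terminal-stripping device; as stated, your proof has a gap in the shared-terminal case.
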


\begin{proof}
$\Gamma$ is a plane graph and hence we also denote $\Gamma$ as drawing in a surface $\Sigma$. 
We define disks $\Delta_1,\ldots,\Delta_h$ as follows. For any $i\in [h]$, $\Delta_i$ is the inner 
face of $C_i$. Let $v\in V(C_0)$.  Since $V(K\cap \Gamma)$ is on the exterior face of $\Gamma$, by \autoref{def:h-insulated}, 
we have that $v$ is $h$-insulated from $V(\Gamma\cap K)$. Suppose $(G,T,k)$ be an \yes-instance  of  \DisjointPaths\  and ${\cal P}$ be a subgraph of $G$, which realizes that  $(G,T,k)$ is an \yes-instance. 
That is, ${\cal P}$ is a collection of internally vertex disjoint paths 
such that for each $\{s,t\}\in T$, there is a path from $s$ to $t$ in ${\cal P}$. 
Without loss of generality we assume that for any $\{u,v\}\in T$, $e=\{u,v\}\notin E(G)$. Otherwise, we modify our instance 
to $(G\setminus e, T\setminus \{\{u,v\}\}, k-1)$ and apply the corollary. 
Now we substitute $p=k$, $Z=\bigcup_{S\in T} N_{\cal P}(S)$ and $L={\cal P}\setminus (\bigcup_{S\in T} S)$ in \autoref{lem:RoSe}. 
That is, $L$ is a collection of $k$ vertex disjoint paths in $G$.  Since no vertex in $T$ is contained in the inner face of $C_h$, no vertex in $Z$ is contained in the inner face of $C_{h-1}$. 
By \autoref{lem:RoSe} (where $h'=h-1$),
there is a subgraph $L'$ of $G\setminus v$ with the same effect on $Z$ as $L$ and no terminal vertex 
in $T$ belongs to $L'$. That is, $L'$ is a collection of vertex disjoint paths such that there is a path $P$ with end vertices $u$ and $v$ in $L'$ if and only if there is a paths $P'$ in $L'$ with end vertices $u$ and $v$. Moreover, $V(L')\cap (\bigcup_{S\in T} S)=\emptyset$. 
Let $L^{\star}$ be the graph obtained from $L'$ by adding vertices $\bigcup_{S\in T} S$ and edges $\{\{u,v\}\colon \{u,v\}\in {\cal P},u\in (\bigcup_{S\in T} S)\}$. 
This implies that for each $\{s,t\}=S\in T$, there is a path $P_S$ from $s$ to $t$ in  $L^{\star}$ and these paths are  internally vertex disjoint. 
This implies that $(G\setminus v,T,k)$ is a \yes-instance of  \DisjointPaths.   
\end{proof}

Now we are ready to use \autoref{corr:RoSe1}, to prove a result similar to \autoref{prop:disPathIrrelevant},   
for general graphs in the presence of a large flat wall.

\begin{lemma}
\label{lemma:irrelevantflat}
Let $(G,T,k)$ be an instance of \DisjointPaths. 
Let $W$ be a flat wall in $G$, witnessed by $((A',B'),C,(G_0,G_1,\ldots,G_t),\tilde{G})$ 
Then, there exists $h(k)\geq 1$ with the following property. 
Suppose there is a sequence ${\cal C}=(C_0,\ldots,C_{h(k)})$ of concentric cycles in $\tilde{G}$. 
Let $U$ be the vertices of $\tilde{G}$ in the inner face of $C_{h(k)}$. 
Let $T^{\star}=\bigcup_{S\in T}S$. 
Suppose $U\cap T^{\star}=\emptyset$,  and for any $i\in [t]$, $V(G_i)\cap T^{\star}=\emptyset$ if $V(G_i)\cap U\neq \emptyset$.
Then any vertex  of $\tilde{G}$ contained in the inner face of $C_0$ is irrelevant. 
\end{lemma}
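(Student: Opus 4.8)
\textbf{Proof plan for \autoref{lemma:irrelevantflat}.} The plan is to reduce the statement to \autoref{corr:RoSe1} by exhibiting a suitable decomposition $G = \Gamma \cup K$ together with a plane subgraph $\Gamma$ in which the concentric cycles $\mathcal{C}$ live. The natural candidate is to take $\Gamma$ to be a planar subgraph of $G$ sitting ``inside'' the flat wall, and $K$ to be the rest of $G$. However, one cannot simply take $\Gamma = \tilde{G}$, because $\tilde{G}$ is not literally a subgraph of $G$: it is a supergraph of $G_0$ in which the torsos of the protrusions $G_1,\ldots,G_t$ have been turned into cliques. So the first step will be to pass from $\tilde{G}$ to an honest planar subgraph of $G$. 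Concretely, I would let $\Gamma_0$ be the subgraph of $G_0$ obtained by restricting to the part of $\tilde{G}$ weakly enclosed by $C_{h(k)}$ (the cycles $C_0,\ldots,C_{h(k)}$ and everything they bound), and then ``re-inflate'' each protrusion $G_i$ that meets this region: since $U\cap T^\star=\emptyset$ and, by hypothesis, any $G_i$ touching $U$ is itself terminal-free, attaching those protrusions to $\Gamma_0$ keeps $\Gamma$ a subgraph of $G$ that is still planar (because a protrusion with $\le 3$ attachment vertices forming a face of $\tilde{G}$ can be drawn in that face) and still terminal-free inside $C_{h(k)}$.

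The second step is bookkeeping on the interface between $\Gamma$ and $K := G[(V(G)\setminus V(\Gamma)) \cup (V(A')\cap V(B'))]$ together with all edges of $G$ not placed in $\Gamma$. I need: (a) $G = \Gamma\cup K$; (b) $\Gamma$ planar, with $V(\Gamma)\cap V(K)$ drawn on the outer face of $\Gamma$; (c) the sequence $(C_0,\ldots,C_{h(k)})$ is still concentric in $\Gamma$ and still has no terminal in the inner face of $C_{h(k)}$. Point (b) is where the ``cycles bound discs disjoint from $V(\Gamma\cap K)$'' requirement of \autoref{corr:RoSe1} is verified: because $C_{h(k)}$ is a cycle of $\tilde{G}$, no protrusion has vertices on both sides of it (this is exactly the observation used in the proof of \autoref{lem:flatWallCombined} that a protrusion's torso is a clique in $\tilde{G}$ and hence lies entirely inside or entirely outside any cycle of $\tilde{G}$), so the part of $G$ strictly inside $C_{h(k)}$ is genuinely separated from the rest of $G$ by $V(C_{h(k)})\subseteq V(\Gamma)$, and in particular every vertex of $G$ that is adjacent to something outside $\Gamma$ can be placed on or outside $C_{h(k)}$, i.e.\ on the outer face of $\Gamma$. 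Then I apply \autoref{corr:RoSe1} with this $\Gamma$, $K$, and the same constant $h(k)$ (the one coming from \autoref{lem:RoSe} via \autoref{corr:RoSe1}), to conclude that every vertex of $\Gamma$ — in particular every vertex of $\tilde{G}$ — inside the inner face of $C_0$ is irrelevant for $(G,T,k)$.

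I expect the main obstacle to be the careful handling of the protrusions $G_1,\ldots,G_t$ in the re-inflation step. There are two subtleties: first, a protrusion could attach to $\Gamma_0$ at vertices that lie \emph{on} one of the concentric cycles $C_i$ rather than strictly inside or outside, in which case I must argue the cycle $C_i$ is still a cycle of $\Gamma$ after re-inflation (it is, since re-inflation only adds vertices/edges, never deletes) and that re-inflating it does not destroy the ``bounding disc'' structure — here I use that the three attachment vertices of $G_i$ form a face of $\tilde{G}$, so $G_i$ can be drawn inside that face without crossing any $C_j$. Second, I must make sure that a protrusion straddling $C_{h(k)}$ (attaching both inside and outside) is simply \emph{not} re-inflated into $\Gamma$ — it goes into $K$ — and that this is consistent with $G=\Gamma\cup K$; the hypothesis that such a straddling protrusion, if it meets $U$, is terminal-free is what guarantees no terminal is wrongly pushed into the interior. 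Once these topological placements are pinned down, the rest is a direct invocation of \autoref{corr:RoSe1}, and the conclusion follows.
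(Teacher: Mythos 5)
Your reduction to \autoref{corr:RoSe1} is the right target, but the ``re-inflation'' step contains a genuine error: you claim that attaching a protrusion $G_i$ to the planar piece keeps $\Gamma$ planar ``because a protrusion with $\le 3$ attachment vertices forming a face of $\tilde{G}$ can be drawn in that face.'' The graphs $G_1,\ldots,G_t$ are \emph{not} planar in general --- the flat wall definition only bounds the size of the interface $V(G_i)\cap V(G_0)$ (and, in the nice case, the treewidth of $G_i$), not the structure of $G_i$ itself; $G_i$ could contain, say, a large clique attached at three vertices. This is exactly why \autoref{def:Cflat1} speaks of a $C$-reduction being drawable in a disc rather than the graph itself. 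So the $\Gamma$ you build is not a plane graph, and \autoref{corr:RoSe1} (which requires $\Gamma$ to be a drawing with $V(\Gamma\cap K)$ on its outer face) does not apply. Moving those protrusions into $K$ instead does not rescue the argument either, since their attachment vertices would then lie in $V(\Gamma\cap K)$ deep inside the concentric cycles, destroying the insulation hypothesis.

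The paper resolves this by going in the opposite direction: it first proves an exchange claim showing that for any terminal-free protrusion $G_i$, deleting $V(G_i)\setminus V(G_0)$ and completing $V(G_i)\cap V(G_0)$ into a clique yields an \emph{equivalent} instance of \DisjointPaths{} --- at most one solution path can enter such a protrusion (the interface has $\le 3$ vertices and contains no terminals), its subpath through $G_i$ can be replaced by a single torso edge, and conversely a used torso edge can be re-expanded into a path through $G_i$ via condition~\ref{conditionsix} of \autoref{obs:cflatmore}. Applying this to every protrusion meeting the inner face of $C_{h(k)}$ (all terminal-free by hypothesis) produces an equivalent instance whose inner region is a genuine plane subgraph bounded by $C_{h(k)}$, and only then is \autoref{corr:RoSe1} invoked. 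Your proposal is missing this flattening claim, which is the substantive content of the lemma's proof; the rest of your bookkeeping (no protrusion straddles $C_{h(k)}$ because its interface is a clique of the plane graph $\tilde{G}$, terminals stay outside) matches the paper.
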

\begin{proof}
We first prove the following claim. 
\begin{claim}
\label{claim:irr1}
Let $i\in [t]$ be such that $V(G_i)\cap T^{\star}=\emptyset$. 
Let $H_i$ be the graph obtained from 
$G$, by deleting $V(G_i)\setminus V(G_0)$ and adding edges between every pair of vertices in $V(G_0)\cap V(G_i)$. 
Then, $(G,T,k)$ is a \yes-instance if and only if $(H_i,T,k)$ is a \yes-instance.  
\end{claim}
\begin{proof}
Suppose $(G,T,k)$ is \yes-instance. Let ${\cal P}=\{P_1,\ldots,P_k\}$ be a solution for the instance $(G,T,k)$. 
Because of Conditions~\ref{condition4}  and \ref{condition5} of  \autoref{def:KWT}, and  $V(G_i)\cap T^{\star}=\emptyset$, at most one path in ${\cal P}$ 
can intersect with the vertices in $V(G_i)\setminus V(G_0)$. 
If $V({\cal P})\cap (V(G_i)\setminus V(G_0))=\emptyset$, then the set of paths ${\cal P}$ is present in $H_i$ and hence 
$(H_i,T,k)$ is a \yes-instance. Otherwise  let $P\in {\cal P}$ be the path such that  $V({P})\cap (V(G_i)\setminus V(G_0))\neq \emptyset$. 
Let $P'$ be a maximal subpath of $P$ such that $V(P)\subseteq V(G_i)$. 
This implies that the end vertices $u$ and $v$ of $P'$ are in $V(G_i)\cap V(G_0)$, because $T^{\star}\cap V(G_i)=\emptyset$. 
Because of Conditions~\ref{condition4}, and \ref{condition5} of \autoref{def:KWT}, and  $V(G_i)\cap T^{\star}=\emptyset$, we have that $V(P)\setminus V(P')$ 
does not intersect with $V(G_i)\setminus V(G_0)$. Now by replacing the subpath $P'$, with $u-v$ (notice that $\{u,v\}\in E(H_i)$), we get a path $P^{\star}$ 
in $H_i$. This implies that $({\cal P}\setminus \{P\})\cup \{P^{\star}\}$ is a solution for $(H_i,T,k)$.   

Now we prove the reverse direction. 
Suppose $(H_i,T,k)$ is a \yes-instance. Let ${\cal P}'=\{P'_1,\ldots,P'_k\}$ be a solution to $(H_i,T,k)$.
Since  $E(H_i)\setminus E(G)$ forms a clique on at most $3$ vertices at most 
one edge in  $E(H_i)\setminus E(G)$ is used by any path in ${\cal P}'$.
Suppose no edge in $E(H_i)\setminus E(G)$ is used by any path in ${\cal P}'$, then ${\cal P}'$ is a solution 
to $(G,T,k)$ as well. Otherwise, let $P$ be the path ${\cal P}'$ such that 
$P$ has exactly one edge $\{u,v\}$ from $E(H_i)\setminus E(G)$. Then by replacing the edge $\{u,v\}$ of $P$
with a path $P'$ from $u$ to $v$ in $G_i$ using internal vertices from $V(G_i)\setminus V(G_0)$ 
(see Condition~\ref{conditionsix} of \autoref{obs:cflatmore}), we get 
a path  $P^{\star}$ in $G$ connecting the same end vertices of $P$. This implies that $({\cal P}'\setminus \{P\})\cup \{P^{\star}\}$ 
is a solution  for $(G,T,k)$. 
\end{proof}
Suppose for all $i\in [t]$, $V(G_i)\cap V(\tilde{G})$ is in the exterior face 
of $C_{h(k)}$. Then, let $\Gamma$ be the subgraph of $G$ induced by the vertices in the inner face of 
$C_{h(k)}$. Let $K$ be the minimal subgraph of $G$ such that $G=\Gamma\cup K$. Then, by \autoref{corr:RoSe1}, 
any vertex  of $\tilde{G}$ contained in the inner face of $C_0$ is irrelevant.  Otherwise, without loss 
of generality assume that there exists $j\in [t]$ such that for each $i\in [j]$, $V(G_i)\cap V(\tilde{G})$ contains 
a vertex from the inner face of $C_h$ and for each $i\in [t]\setminus [j]$, $V(G_i)\cap V(\tilde{G})$ does not contain
a vertex from the inner face of $C_h$. 
By our assumption we know that for any $i\in [j]$, $V(G_i)\cap T^{\star}=\emptyset$. 
Let $H_0=G$ and for any $i\in [j]$, $H_i$ is obtained from 
$H_{i-1}$ by deleting $V(G_i)\setminus V(G_0)$ (which is equal to $V(G_i)\setminus V(H_{i-1})$) and adding  edges between every pair of vertices in $V(G_0)\cap V(G_i)$ (which is equal to $V(H_{i-1})\cap V(G_{i})$). 
Then, by repeatedly applying \autoref{claim:irr1} on $(H_0,T,k),\ldots, (H_j,T,k)$ we have 
that these instances are equivalent. That is, $(a)$ for any $0\leq i<i'\leq j$, $(H_i,T,k)$ is a \yes-instance if and only if 
$(H_{i'},T,k)$ is a \yes-instance. Now consider the instance $(H_j,T,k)$. Let $\Gamma'$ be the subgraph of $H_j$ induced by the vertices in the inner face of 
$C_h$ and this graph is a plane graph with $C_h$ being the boundary. Let $K'$ be the subgraph of $H_j$ such that $H_j=\Gamma'\cup K'$. Then, by \autoref{corr:RoSe1},  
any vertex  of $H_j$ contained in the inner face of $C_0$ is irrelevant for the instance $(H_j,T,k)$. Hence, 
by statement $(a)$,  any vertex  of $\tilde{G}$ contained in the inner face of $C_0$ is irrelevant for $(H_0,T,k)=(G,T,k)$. 
This completes the proof of the lemma. 
\end{proof}

Now we would like to discuss a scenario where we will be able to say that many elements in a graph 
are irrelevant, first for \DisjointPaths\ and  then for \fFindFoliostar,  by being able to find many long sequences of terminal free concentric cycles.  
Towards that, we first define the notion of a wrapped noose.

\begin{definition}[{\bf Wrapped Noose}]\label{def:wrapped} 
Let $G$ be a graph and  $W$ be a flat wall in $G$ witnessed by $((A',B'),C,(G_0,G_1,\ldots,G_t),\tilde{G})$ 
where $\tilde{G}$ is a plane graph
 with $C$ being the boundary (if $G$ is planar, then $G=B'=G_0=\tilde{G}$). 
We say that a noose $F$ is {\em wrapped} in $\tilde{G}$ if there exists a sequence ${\cal C}=(C_0,C_1,\ldots,C_{r})$ of cycles in $\tilde{G}$, called a {\em wrap}, such that $\inNoose_{\tilde{G}}(F)$ is contained in the inner face of $C_0$. 
(Here,  
$r=2^{ck}$ with $c$ being the constant in \autoref{prop:disPathIrrelevant} if $G$ is planar, 
and otherwise,  $r=h(k)$, the constant mentioned in \autoref{lemma:irrelevantflat}.)

Furthermore, we say that a sequence of noose-element pairs ${\cal F}=((F_1,Z_1),\ldots,(F_s,Z_s))$ for some $s\in\mathbb{N}$, 
where $F_i$ is a noose and $Z_i\subseteq \inNoose_G^\star(F_i)$ is a set of vertices for all $i\in [s]$, is {\em wrapped} if for all $i\in[s]$, $F_i$ is wrapped in $G\setminus(\bigcup_{j\in[i-1]}Z_j)$.
\end{definition}

Throughout the section, the constant $r$ is fixed and it is the one mentioned in \autoref{def:wrapped}.

\begin{lemma}\label{lem:wrapped}
Let $(G,T,k)$ be an instance of \DisjointPaths. 
Let $W$ be a flat wall in $G$ witnessed by $((A',B'),C,(G_0,G_1,\ldots,G_t),\tilde{G})$ 
where $\tilde{G}$ is a plane graph 
with $C$ being the boundary (if $G$ is planar, then $G=B'=G_0=\tilde{G}$). 
Let ${\cal F}=((F_1,Z_1),\ldots,(F_s,Z_s))$ for some $s\in\mathbb{N}$ be a wrapped sequence of
 nooses-element pairs in $\tilde{G}$, and for all $i\in[s]$, let ${\cal C}^i=(C^i_0,C^i_1,\ldots,C^i_{r})$ be a wrap of $F_i$ in $G\setminus(\bigcup_{j\in[i-1]}Z_j)$. 
Let $U$ be the union of vertices of $\tilde{G}$ in the inner face of $C^i_r$, $i\in [s]$. 
Let $T^{\star}=\bigcup_{S\in T}S$. 
Suppose $U\cap T^{\star}=\emptyset$  and for any $i\in [t]$, $V(G_i)\cap T^{\star}=\emptyset$ if $V(G_i)\cap U\neq \emptyset$.
Then, $(G,T,k)$ is a \yes-instance if and only if $(G\setminus \bigcup_{i\in[s]}Z_i,T,k)$ is a \yes-instance.
\end{lemma}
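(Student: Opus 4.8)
\textbf{Proof plan for Lemma~\ref{lem:wrapped}.}
The plan is to prove this by induction on $s$, peeling off one noose-element pair at a time and repeatedly invoking the single-noose irrelevance statement \autoref{lemma:irrelevantflat}. For the base case $s=0$ there is nothing to prove, so assume $s\geq 1$ and that the claim holds for wrapped sequences of length $s-1$ in any graph satisfying the hypotheses. First I would deal with $(F_1,Z_1)$: by definition of a wrapped sequence, $F_1$ is wrapped in $G$ itself, so there is a wrap ${\cal C}^1=(C^1_0,\dots,C^1_r)$ of cycles in $\tilde G$ with $\inNoose_{\tilde G}(F_1)$ contained in the inner face of $C^1_0$, and $Z_1\subseteq \inNoose^\star_G(F_1)$. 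By the hypothesis on $U$ and $T^\star$, no terminal of $T$ lies in the inner face of $C^1_r$, and every bubble $G_i$ meeting that inner region is terminal-free. Hence \autoref{lemma:irrelevantflat}, applied with the sequence ${\cal C}^1$, tells us that \emph{every} vertex of $\tilde G$ in the inner face of $C^1_0$ is irrelevant for $(G,T,k)$. Since $Z_1$ is a set of such vertices, deleting the vertices of $Z_1$ one at a time (each remains in the inner face of $C^1_0$, which is unchanged because $C^1_0$ passes through none of them) preserves the answer: $(G,T,k)$ is a \yes-instance iff $(G\setminus Z_1,T,k)$ is.

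Next I would set $G':=G\setminus Z_1$ and verify that the shortened sequence ${\cal F}':=((F_2,Z_2),\dots,(F_s,Z_s))$ still satisfies all the hypotheses of the lemma relative to $G'$. This requires three checks. (i) The flat wall structure survives: $Z_1\subseteq\inNoose^\star_G(F_1)$ is disjoint from the boundary cycle $C^1_0$ and in particular, after possibly re-examining which $G_i$'s it touches, $W$ (or a sub-wall of it) is still a flat wall in $G'$ witnessed by a restricted tuple, so that we may continue to speak of nooses, wraps, and the plane graph $\tilde G$ inside $G'$; this is essentially bookkeeping using Conditions \ref{condition4}--\ref{condition5} of \autoref{def:KWT} together with \autoref{obs:cflatmore}. (ii) ${\cal F}'$ is wrapped in $G'$: by the very definition of a wrapped sequence, for each $i\in\{2,\dots,s\}$ the noose $F_i$ is wrapped in $G\setminus(\bigcup_{j\in[i-1]}Z_j)=G'\setminus(\bigcup_{j=2}^{i-1}Z_j)$, which is exactly the wrappedness condition for ${\cal F}'$ inside $G'$; the same wraps ${\cal C}^i$ serve as witnesses. (iii) The terminal-freeness hypothesis persists: the set $U'$ of vertices of $\tilde G$ in the inner faces of the cycles $C^i_r$ for $i\in\{2,\dots,s\}$ is a subset of the original $U$, so $U'\cap T^\star=\emptyset$, and the condition ``$V(G_i)\cap T^\star=\emptyset$ whenever $V(G_i)\cap U'\neq\emptyset$'' follows a fortiori. (Note $T$ itself does not change, since $Z_1$ contains no terminal.)

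With these checks in place, the induction hypothesis applied to $(G',T,k)$ and ${\cal F}'$ gives that $(G',T,k)$ is a \yes-instance iff $(G'\setminus\bigcup_{i=2}^{s}Z_i,T,k)$ is, i.e. iff $(G\setminus\bigcup_{i=1}^{s}Z_i,T,k)$ is. Chaining this with the base step for $(F_1,Z_1)$ yields the claimed equivalence, completing the induction.

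\textbf{Anticipated main obstacle.} The conceptual core — that a vertex insulated by enough terminal-free concentric cycles is irrelevant — is entirely outsourced to \autoref{lemma:irrelevantflat} (itself resting on \autoref{corr:RoSe1} and the Unique Linkage Theorem), so the difficulty here is purely the inductive bookkeeping of step (i): after deleting $Z_1$ we must argue carefully that the flat-wall witness tuple $((A',B'),C,(G_0,\dots,G_t),\tilde G)$ can be legitimately restricted so that the remaining wraps ${\cal C}^2,\dots,{\cal C}^s$ still live in a genuine plane subgraph $\tilde G$ of $G'$ with the boundary/bubble conditions intact, and that ``inner face of $C^i_0$'' means the same thing before and after the deletion. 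One subtlety to handle explicitly is that a deleted vertex of $Z_1$ could, in principle, be a corner vertex $V(G_i)\cap V(G_0)$ of some bubble $G_i$; but since $Z_1\subseteq\inNoose^\star_G(F_1)$ lies in the \emph{strict} interior of $F_1$ and the wrap cycles $C^1_0,\dots,C^1_r$ are cycles of $\tilde G$, such a vertex cannot lie on any wrap cycle, and the bubbles meeting the relevant interiors are terminal-free by hypothesis, so the deletion only removes interior material and the witness restricts cleanly. Spelling this out rigorously is the only non-routine part of the argument.
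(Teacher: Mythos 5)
Your proposal is correct and follows essentially the same route as the paper: the paper also reduces to a single-noose claim (handling one pair $(F_i,Z_i)$ by deleting its vertices one at a time inside the unchanged wrap and invoking \autoref{lemma:irrelevantflat}), and then chains these steps by induction along the sequence, which is just your induction on $s$ phrased as induction on the index $i$. The bookkeeping you flag as the main obstacle (that the flat-wall witness and the wraps survive deletion of strictly interior vertices) is treated equally lightly in the paper's own proof sketch.
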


\begin{proof}[Proof sketch]
Let $G'=G\setminus \bigcup_{i\in[s]}Z_i$. 
First consider the reverse direction of the proof. 
Suppose $(G,T,k)$ is a \no-instance. Then, since $G'$ is a subgraph of $G$,  
$(G',T,k)$ is a \no-instance. 

Now we consider the forward direction of the proof. 
That is, we assume that $(G,T,k)$ is a \yes-instance. 
To prove $(G',T,k)$ is a \yes-instance, we first prove the following claim.   
\begin{claim}
\label{claim:conc1}
Let $F$  be a wrapped noose in $\tilde{G}$ and $Z\subseteq \inNoose_{\tilde{G}}^\star(F)$. Let ${\cal C}=(C_0,C_1,\ldots,C_{r})$ be a wrap of $F$ in $G$ and $U'$ be the set of vetices of $\tilde{G}$ in  the inner face of $C_{r}$.  
Suppose $U'\cap T^{\star}=\emptyset$  and for any $i\in [t]$, $V(G_i)\cap T^{\star}=\emptyset$ if $V(G_i)\cap U'\neq \emptyset$.
If $(G,T,k)$ is a \yes-instance, then $(G\setminus Z,T,k)$ is a \yes-instance.
\end{claim}
\begin{proof}
Suppose $(G,T,k)$ is a \yes-instance. 
Notice that ${\cal C}=(C_0,C_1,\ldots,C_{r})$ is a set of concentric cycles in $G\setminus Z$. 
%
%
Since $F$ is contained in the inner face of $C_0$ 
and $Z$ is strictly contained in the interior of $F$, the sequence of cycles ${\cal C}=(C_0,C_1,\ldots,C_{r})$ is present in $G\setminus Z$. This implies that for any subset $Z'\subseteq Z$, there is a sequence of tight concentric cycles $(C_0',\ldots,C_r')$ in $G\setminus Z'$ such that at least one element from $Z\setminus Z'$ is in the 
inner face of $C_0'$.  Hence, there is a permutation  $z_1,\ldots,z_{\ell}$ of elements in $Z$ 
such that by repeatedly applying  
\autoref{lemma:irrelevantflat} (\autoref{prop:disPathIrrelevant}  if $G$ is planar) on $(G,T,k),(G\setminus \{z_1\},T,k), \ldots, (G\setminus \{z_1,\ldots,z_{\ell}\},T,k)$, we get that $(G\setminus Z,T,k)$ is a \yes-instance, because 
$(G,T,k)$ is a \yes-instance by assumption.  
\end{proof}

Notice that $(a)$ for any $i\in [s]$, $F_i$ is wrapped in $G\setminus(\bigcup_{j\in[i-1]}Z_j)$.
By using statement $(a)$ and \autoref{claim:conc1}, with the method of  induction on $i$, 
one can show   
the following statement:  
For any $i\in [s]$, $(G\setminus \bigcup_{j\in[i-1]}Z_j,T,k)$ is a \yes-instance. 
This completes the proof of the lemma. 
\end{proof}

Now we prove  the main corollary of the subsection which we will be using in the later sections.

\begin{corollary}\label{cor:wrapped}
Let $(G,\delta,t,w',s')$ be an instance of \fFindFoliostar. Recall that $\delta^{\star}=\delta+\vert R(G)\vert$. 
Let $k'=\delta^{\star}+t$. 
%
Let ${\cal F}=((F_1,Z_1),\ldots,(F_s,Z_s))$ for some $s\in\mathbb{N}$ be a wrapped sequence of nooses-element pairs 
in $\tilde{G}$, and for all $i\in[s]$, let ${\cal C}^i=(C^i_0,C^i_1,\ldots,C^i_{r})$ be a wrap of $F_i$ in $G\setminus \bigcup_{j\in[i-1]}Z_j$ (recall that $r=h(k')$).
Let $U$ be the union of vertices of $\tilde{G}$ in the inner face of $C^i_r$, $i\in [s]$.
Let ${\cal S}=\{(H,\phi_H,\varphi_H) : H \mbox{ in the $\delta^{\star}$-folio of } G\}$ be a  solution with the following properties. 
\begin{itemize}
\item There does not exist a terminal (with respect to ${\cal S}$) in $U$,
\item  there does not exist a terminal (with respect to ${\cal S}$) in $V(G_i)$, where $V(G_i)\cap U\neq \emptyset$,  and
\item there is no edge $\{u,v\}$  
in the image of $\varphi_H$ for any $H \mbox{ in the $\delta^{\star}$-folio of } G$
such that $u\in A$ and $v\in U \cup \{V(G_i) : i\in [k], V(G_i)\cap U\neq \emptyset\}$. 
\end{itemize}
Then the  $\delta^{\star}$-folios for $G$ and $G\setminus (\bigcup_{i\in[s]} Z_i)$ are the same. 
Moreover, 
$(G,\delta,t,w',s')$ has a solution ${\cal S}^{\star}=\{(H,\phi'_H,\varphi'_H) : H \mbox{ in the $\delta^{\star}$-folio of } G\}$ with the following properties.
\begin{itemize}
\item[(i)] No element from $\bigcup_{i\in[s]} Z_i$ is in the image of $\varphi'_H$ for any $H \mbox{ in the $\delta^{\star}$-folio of } G$, and  
\item[(ii)] the set of terminals with respect to ${\cal S}$ and ${\cal S}^{\star}$  are same. 
\end{itemize}
\end{corollary}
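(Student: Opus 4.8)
The plan is to reduce \autoref{cor:wrapped} to \autoref{lem:wrapped} by constructing, for each $H$ in the $\delta^{\star}$-folio, an appropriate \DisjointPaths{} instance encoding the realization $G_H$ witnessed by $(\phi_H,\varphi_H)$. First I would fix an arbitrary solution ${\cal S}$ with the stated three properties, and fix the wrapped sequence ${\cal F}$ together with the wraps ${\cal C}^i$. For a single tuple $(H,\phi_H,\varphi_H)\in{\cal S}$, I would turn the realization $G_H$ into a linkage: introduce terminal pairs $\{s_e,t_e\}$ for each edge $e=\{x,y\}\in E(H)$ with $s_e,t_e$ being fresh copies (or the branch vertices $\phi_H(x),\phi_H(y)$ together with small gadgets so that distinct paths of $G_H$ become internally disjoint paths in the usual \DisjointPaths{} sense), obtaining an instance $(G,T_H,k_H)$ with $k_H=|E(H)|\le\delta^{\star}$ that is a \yes{}-instance precisely because $G_H$ exists. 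The terminal set $T_H^{\star}=\bigcup_{S\in T_H}S$ consists of the branch vertices $\phi_H(V(H))$ (plus possibly their immediate neighbours, which by the construction in \autoref{lem:folioafterdel} / \autoref{sec:cliqueminor} are still ``terminal-like'' and should be excluded from $U$ — here I would be careful to match the definition of terminals used in \autoref{sec:overviewintro}). The first two hypotheses of \autoref{cor:wrapped} give exactly $U\cap T_H^{\star}=\emptyset$ and $V(G_i)\cap T_H^{\star}=\emptyset$ whenever $V(G_i)\cap U\neq\emptyset$, which are the two terminal-freeness hypotheses needed to invoke \autoref{lem:wrapped}.

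Next I would address the one genuinely new ingredient: the apex set $A$. \autoref{lem:wrapped} is stated for the flat wall witness $((A',B'),C,(G_0,\ldots,G_t),\tilde G)$ without apices, but in \fFindFoliostar{} we have the full separation with $A\subseteq V(G)$, and $A$-vertices may be adjacent deep inside $\tilde G$. This is where the third hypothesis of \autoref{cor:wrapped} is used: since no edge $\{u,v\}$ in the image of any $\varphi_H$ has $u\in A$ and $v\in U\cup\{V(G_i):V(G_i)\cap U\neq\emptyset\}$, the paths of the linkage $L$ coming from $G_H$ never use an $A$-to-$U$ edge, so when we apply the Unique Linkage machinery (\autoref{lem:RoSe}/\autoref{lemma:irrelevantflat}/\autoref{lem:wrapped}) to the planar piece $\tilde G$, the vertices of $A$ play the role of ``$K$-side'' vertices that the relevant linkage simply avoids near $U$. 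Concretely, I would either (a) feed $A$ into the $K$ part of the $G=\Gamma\cup K$ decomposition used inside the proof of \autoref{lem:wrapped} and observe that the hypothesis guarantees $L$ restricted to the insulated region is genuinely a linkage in the planar part, or (b) contract/delete $A$-adjacencies to $U$ harmlessly; option (a) is cleaner since \autoref{lem:wrapped} already handles the $G_i$-protrusions via \autoref{claim:irr1}, and the $A$-vertices behave like an extra block of $K$.

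Having set this up, for each $H$ in the $\delta^{\star}$-folio I get from \autoref{lem:wrapped} that $(G,T_H,k_H)$ is a \yes{}-instance iff $(G\setminus\bigcup_{i\in[s]}Z_i,T_H,k_H)$ is, and since it is a \yes{}-instance in $G$, there is a linkage avoiding $\bigcup_i Z_i$; translating this linkage back through the gadget reduction yields a realization $G_H'$ of $H$ in $G\setminus\bigcup_i Z_i$ with the same branch vertices, i.e.\ $\phi_H'=\phi_H$ and an image of $\varphi_H'$ disjoint from $\bigcup_i Z_i$. Doing this for every $H$ simultaneously (the $Z_i$ are a fixed set, independent of $H$) and collecting ${\cal S}^{\star}=\{(H,\phi_H,\varphi_H'):H\text{ in the }\delta^{\star}\text{-folio of }G\}$ gives properties (i) and (ii): no element of $\bigcup_i Z_i$ is in any image of $\varphi_H'$, and the terminal set is $\bigcup_H\phi_H(V(H))$, unchanged. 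Since every $H$ realizable in $G$ is realizable in $G\setminus\bigcup_i Z_i$ (and the reverse inclusion of folios is trivial because $G\setminus\bigcup_i Z_i$ is a subgraph), the $\delta^{\star}$-folios coincide. The main obstacle I anticipate is bookkeeping around the precise definition of ``terminal'' (branch vertices versus their closed neighbourhoods as in \autoref{lem:folioafterdel}) and making sure the gadget reduction from ``realization of $H$'' to ``\DisjointPaths{} instance'' does not introduce spurious terminals inside $U$ or inside a relevant $G_i$; once the hypothesis list of \autoref{cor:wrapped} is matched symbol-for-symbol to the hypothesis list of \autoref{lem:wrapped} (with $T^{\star}:=T_H^{\star}$ and $A$ absorbed into $K$), the rest is a routine application performed once per $H$, of which there are at most $2^{\OO((\delta^{\star})^2)}\cdot|R(G)|^{\OO(\delta)}$ by \autoref{prop:no.ofgraphsinfolios}.
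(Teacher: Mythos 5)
Your overall strategy -- encode the realization $G_H$ of each $H$ as a \DisjointPathsLong{} instance and invoke \autoref{lem:wrapped}, then reassemble via the fact that any relinkage with the same endpoints yields another realization of $H$ -- is exactly the paper's strategy, and the bookkeeping you worry about (terminals avoiding $U$ and the relevant $G_i$) goes through as you expect. The gap is in the one step you yourself flag as the genuinely new ingredient: the apex set $A$. Your option (a), absorbing $A$ into the $K$-side of the decomposition $G=\Gamma\cup K$, does not work. The insulation hypothesis of the Unique Linkage Theorem (\autoref{lem:RoSe}, and the requirement in \autoref{corr:RoSe1} that $V(K)\cap V(\Gamma)$ lie on the exterior face of $\Gamma$) is a property of the \emph{graph} decomposition, not of the particular linkage: if some apex vertex has a neighbour inside $U$ -- which the third hypothesis of the corollary does \emph{not} forbid, since it only constrains edges used by the solution -- then $V(\Gamma\cap K)$ meets the interior of the concentric cycles and the wrapped vertices are no longer $r$-insulated from $V(\Gamma\cap K)$. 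So ``the linkage avoids $A$-to-$U$ edges'' is not enough to apply the theorem with $A\subseteq K$.

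The paper's fix, which your option (b) gestures at but does not carry out, is to delete $A$ from the graph entirely and work in $G^{\star}=G\setminus A$: the realization $J=G_H$ loses the vertices $A\cap V(J)$, and $J\setminus A$ decomposes into at most $\delta^{\star}+t$ internally vertex-disjoint paths whose endpoints are $\phi_H(V(H))\cup N_J(A\cap V(J))$; these endpoints become the terminals of the \DisjointPathsLong{} instance in $G^{\star}$. This is precisely why the corollary sets $k'=\delta^{\star}+t$ and $r=h(\delta^{\star}+t)$, whereas your reduction uses only $k_H=|E(H)|\le\delta^{\star}$ pairs and never accounts for the extra terminals created by apex removal. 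The second and third hypotheses of the corollary then guarantee that these \emph{augmented} terminals (including $N_J(A\cap V(J))$) avoid $U$ and the relevant $G_i$, so \autoref{lem:wrapped} applies in $G^{\star}$, and the returned linkage is glued back to the untouched part $Q$ of $J$ (which contains the $A$-incidences) to recover a realization of $H$ in $G\setminus\bigcup_i Z_i$. With that correction, and dropping the gadgets in favour of directly decomposing $J\setminus A$ into paths, your argument matches the paper's.
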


\begin{proof}
Let $G^{\star}=G\setminus A$. Recall that $\vert A\vert \leq t$. Let $H$ be a graph in the $\delta^{\star}$-folio of $G$ and consider the tuple  $(H,\phi_H,\varphi_H)$ in ${\cal S}$. Let $J$ be the realization of $H$ in $G$ obtained through $(\phi_H,\varphi_H)$. Let $J^{\star}=J\setminus A$. Since $\vert A\vert \leq t$ and $H$ is a $\delta^{\star}$-folio of $G$. The graph $J^{\star}$ can be decomposed into a set ${\cal P}$  of at most  $\delta^{\star}+t$ internally vertex disjoint paths with end vertices being in $\phi_H(H)\cup N_J(A\cap V(J))$.  That is, there exists a set $T$ of terminal pairs 
such that ${\cal P}$ is a solution of the instance $(G^{\star},T,\vert T\vert)$ of  \DisjointPaths\ such that 
$\vert T\vert \leq \delta^{\star}+t=k$ and $(\bigcup_{S\in T}S)\subseteq \phi_H(H)\cup N_J(A\cap V(J))$. 
\begin{observation}
\label{obs:jointotopo}
Let $Q$ be the minimal subgraph of $J$ such that $J={\cal P}\cup Q$. Let ${\cal P}'$ be a solution to  the instance $(G^{\star},T,\vert T\vert)$ of  \DisjointPaths. 
Then ${\cal P}'\cup Q$ is a realization of the topological minor $H$ in $ G$. 
\end{observation}

Let $Z=\bigcup_{i\in[s]} Z_i$
and $T^{\star}=\bigcup_{S\in T} S$. By the assumption  any terminal of $(\phi_H,\varphi_H)$ and any vertex in $N_J(A\cap V(J))$, does not belong to  $U \cup \{V(G_i) : i\in [k], V(G_i)\cap U\neq \emptyset\}$. This implies that 
$U\cap T^{\star}=\emptyset$  and for any $i\in [t]$, $V(G_i)\cap T^{\star}=\emptyset$ if $V(G_i)\cap U\neq \emptyset$.
Hence, by \autoref{lem:wrapped}, $(G^{\star}\setminus Z,T,\vert T \vert)$ is a \yes-instance of \DisjointPaths, and 
let ${\cal P}'$ be a solution. Then by \autoref{obs:jointotopo}, $J'={\cal P}'\cup Q$ is a realization of the topological minor $H$ in $G$ (and hence in $G\setminus Z$, because $Z\cap V(J')=\emptyset$). This completes the proof of the corollary. 
\end{proof}

\newcommand{\findfolio}{{\sc FindFolio}}


\section{Workspace}
\label{sec:workspace}

We want to find an irrelevant vertex for the given instance of \fFindFoliostar. Towards that, we 
have to use \autoref{cor:wrapped}. To apply this corollary, we need to prune down to a large ``terminal free portion'' of $\tilde{G}$, which is a plane graph. The difficulty to get such a terminal free portion is because of the fact that we do not know which are the terminal vertices.  As a first step towards that, we seek a noose-enclosed ``workspace'' that would contain an irrelevant vertex in the planar portion $\tilde{G}$, but which would also have low treewidth. 
 We define the {\em treewidth of a noose} $N$ as the treewidth of the graph $\tilde{G}[\inNoose(N)]$. In addition, we need the following definition.

\begin{definition}[{\bf Noose Grid}]\label{def:nooseGrid}
Let $G$ be a plane graph, and let $\cal N$ be a set of $a\cdot b$ nooses ordered as $N_{i,j}$ for all $i\in[a]$ and $j\in[b]$.
We say that $\cal N$ is an {\em $a\times b$-noose grid} with respect to $G$ if the following properties are satisfied.
\begin{itemize}
\item[(a)] For all $N,N'\in{\cal N}$, $\inNoose(N)\cap\inNoose(N')=\emptyset$.
\item[(b)] $\bigcup_{N\in{\cal N}}\inNoose(N) \cap V(G) = V(G)$.
\item[(c)] For all $N\in{\cal N}$, $G[\inNoose(N)\cap V(G)]$ is a connected graph such that $N$ is the minimum noose that encloses it.
\item[(d)] For all $i,i'\in[a]$ and $j,j'\in[b]$ such that $|i-i'|+|j-j'|=1$, there exist $u\in\inNoose(N_{i,j})\cap V(G)$ and $v\in\inNoose(N_{i',j'})\cap V(G)$ such that $\{u,v\}\in E(G)$.
\end{itemize}
\end{definition}

When the graph $G$ is clear from context, we omit the reference ``with respect to $G$'' in the definition above. We now formally define our notion of a workspace.

\begin{definition}[{\bf Workspace}]\label{def:workspace}
Let $G$ be a nicely drawn plane graph. 
A pair $(M,{\cal N})$ is a {\em $q$-workspace} if $G[\inNoose(M)\cap V(G)]$ is a connected graph, $M$ is the minimum noose that encloses $G[\inNoose(M)\cap V(G)]$ and $\cal N$ is a $q\times q$-noose grid with respect to $G[\inNoose(M)\cap V(G)]$. If in addition the treewidth of $M$ is a most $p$, then $(M,{\cal N})$ is a {\em $(p,q)$-workspace}.
\end{definition}

Notice that if $(M,{\cal N})$ is a $q$-workspace, then all the nooses in ${\cal N}$ are enclosed by $M$. 
Let us explicitly state the following observation concerning workspaces, which follows directly from their definition.

\begin{observation}\label{obs:workAdj}
Let $G$ be a plane graph with a $q$-workspace $(M,{\cal N})$. For every edge $\{u,v\}\in E(G)$ such that $u\in\inNoose(N_{i,j})$ for some $i,j\in[q]$, one of the following conditions holds:
\vspace{-0.5em}
\begin{itemize}
\itemsep0em 
\item $v\in N_{i',j'}$ for some $i'\in\{i-1,i,i+1\}$ and $j'\in\{j-1,j,j+1\}$.
\item $i\in \{1,q\}$ or $j\in \{1,q\}$, and $v\notin\inNoose(M)$.
\end{itemize}
\end{observation}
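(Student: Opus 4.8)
\textbf{Proof plan for Observation~\ref{obs:workAdj}.}

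The statement is a direct structural consequence of Definition~\ref{def:workspace} together with Definition~\ref{def:nooseGrid} and Observation~\ref{obs:nooseOuterface}, so the plan is essentially to unpack the definitions and reason about how an edge can cross noose boundaries in a plane drawing. First I would fix the edge $\{u,v\}\in E(G)$ with $u\in\inNoose(N_{i,j})$ for some $i,j\in[q]$, and recall that by Definition~\ref{def:workspace} the collection $\cal N$ is a $q\times q$-noose grid with respect to $G'=G[\inNoose(M)\cap V(G)]$, with all nooses of $\cal N$ enclosed by $M$. Since $u\in\inNoose(N_{i,j})\cap V(G')$, property~(b) of Definition~\ref{def:nooseGrid} guarantees that $v$, if it lies in $\inNoose(M)$, must lie in $\inNoose(N_{i',j'})$ for exactly one pair $(i',j')$ (uniqueness comes from property~(a), the pairwise-disjointness of the nooses' interiors). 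So there are two cases to separate at the top level: either $v\in\inNoose(M)$, in which case I must show $(i',j')$ is ``grid-adjacent or equal'' to $(i,j)$, i.e.\ $|i-i'|\le 1$ and $|j-j'|\le 1$; or $v\notin\inNoose(M)$, in which case I must show that $i\in\{1,q\}$ or $j\in\{1,q\}$.

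For the first case, the key geometric point is that the curve representing the edge $\{u,v\}$ goes from the (closed) interior of $N_{i,j}$ to the (closed) interior of $N_{i',j'}$, and since the noose interiors are pairwise disjoint closed regions partitioning $V(G')$, the only way such a short curve (a single edge, hence a simple arc meeting no other vertex of $G'$) can pass between two noose interiors without threading through a third is if the two nooses are ``neighbouring'' in the grid. More carefully, I would argue that the cyclic/planar arrangement of the noose grid forces that two nooses whose interiors are joined by an edge of $G'$ must be consecutive in both the row and the column ordering, which is exactly $|i-i'|+|j-j'|\le 2$ with each coordinate difference at most $1$; this is the intended reading of the conclusion $i'\in\{i-1,i,i+1\},\ j'\in\{j-1,j,j+1\}$. (One should note that property~(d) of Definition~\ref{def:nooseGrid} is the converse-type statement guaranteeing edges between grid-adjacent nooses do exist; here we need that no edges jump further, which follows from planarity of the drawing and the disjointness of noose interiors.) The main obstacle I anticipate is pinning down this planarity argument rigorously without a lengthy topological digression: one has to be careful that the noose grid is laid out in the plane so that non-adjacent cells are genuinely ``separated'' by the intervening cells, and that the edge arc, being a subset of $U(\Gamma)$ meeting $V(\Gamma)$ only at its endpoints, cannot sneak around. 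I expect this is handled by appealing to the way $\cal N$ arises in the construction (to be given in the sections following this one), but for the observation as stated it should suffice to invoke the disjointness and covering properties plus the Jordan-curve-type fact that each noose separates its interior from its exterior.

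For the second case, suppose $v\notin\inNoose(M)$. Then the edge $\{u,v\}$ crosses the noose $M$ (it starts inside $\inNoose(M)$, at $u$, and ends outside). By Observation~\ref{obs:nooseOuterface}, $M$ is the minimum noose enclosing the connected plane graph $G'$, so it coincides with the boundary of the outer face of $G'$; hence $u$ must lie on (or arbitrarily near) the outer boundary of $G'$, and in particular the cell $N_{i,j}$ containing $u$ must be one of the boundary cells of the $q\times q$-noose grid. The boundary cells of a $q\times q$ grid are exactly those with $i\in\{1,q\}$ or $j\in\{1,q\}$, which is what we wanted. I would make this precise by observing that if $i,j\in\{2,\dots,q-1\}$ then $\inNoose(N_{i,j})$ is ``surrounded'' in $G'$ by the interiors of its eight grid-neighbours (together these lie strictly inside $M$), so every edge incident to a vertex of $\inNoose(N_{i,j})$ stays inside $\inNoose(M)$ by the first case, contradicting $v\notin\inNoose(M)$.

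Finally I would assemble these two cases into the disjunction stated in the observation and remark that they are not mutually exclusive (a vertex on the boundary of the grid that happens to have all its neighbours still inside $M$ falls only under the first bullet), which is why the statement uses ``one of the following conditions holds'' rather than an exclusive alternative. Overall the proof is short; the only real care is in the planarity/separation argument for the grid-adjacency claim, and I would keep that at the level of ``by planarity and the pairwise disjointness of noose interiors'' rather than formalising the topology in full, consistent with the level of rigour used for the surrounding observations.
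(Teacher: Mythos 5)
The paper does not actually prove this statement: it is introduced with ``which follows directly from their definition'' and no argument is given. So there is nothing to match your proof against line by line; what I can say is that your two-case decomposition (is $v$ inside $\inNoose(M)$ or not?) and the Jordan-curve/separation idea for interior cells are exactly the intended reading, and your write-up is already more explicit than the paper's.

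That said, the step you yourself flag as delicate is a genuine soft spot, and it is worth naming precisely. Definitions~\ref{def:nooseGrid} and~\ref{def:workspace} constrain the noose grid only \emph{combinatorially}: disjoint interiors, covering of $V(G')$, connectivity of each cell, and existence of an edge between cells whose indices differ by one in one coordinate. They say nothing about the \emph{geometric} layout of the nooses in the plane, so the indexing $N_{i,j}$ need not a priori coincide with the planar arrangement, and planarity of the contracted skeleton alone does not forbid ``long'' edges. A concrete configuration your argument does not yet exclude: $u\in\inNoose(N_{1,1})$, $v\in\inNoose(N_{1,3})$, with the edge curve routed in the outer face of the grid skeleton (above the top row) but still inside $M$ --- here the first bullet fails since $j'=j+2$, and the second fails since $v\in\inNoose(M)$. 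To close this, you need the additional fact that the workspaces in question arise from a grid minor of the plane graph (Lemma~\ref{lem:surfaceMinor} and Lemma~\ref{lem:workspace}), whose essentially unique plane embedding forces the cells to appear in grid order; only then does the ring of cells around an interior cell, or the column of cells between $N_{1,1}$ and $N_{1,3}$ together with the tightness of the minimum noose $M$, yield a closed curve that the edge would have to cross. Since the paper leaves exactly the same gap by labelling this an observation, I would not call your proposal wrong, but if you intend to write it out you should either add the construction-specific embedding facts as a hypothesis or prove the embedding rigidity of the noose grid; ``by planarity and disjointness'' is not sufficient on its own.
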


The rest of the section is devoted to prove the following lemma. 

\begin{lemma}\label{lem:workspace}
There exists a constant $c$ such that for every nicely drawn plane graph $G$ and integers $p,q$ such that $2<q\leq \frac{p}{2c}$, in time $p^{\OO(1)}n$ one can either compute a $(p,q)$-workspace or a tree decomposition of $G$ of width at most $cp$.
\end{lemma}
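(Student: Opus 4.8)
The plan is to construct the workspace greedily using grid minors as a guide, peeling off concentric layers of the plane graph. First I would apply \autoref{prop:gridTWplanaralgof} (or the analogous algorithm for embedded graphs) with parameter roughly $\Theta(p)$: if it returns a tree decomposition of width $\OO(p)$ we are done, so assume it returns a wall of size $\Theta(p) \times \Theta(p)$ in $G$. Using planarity, such a wall yields a large collection of nested (concentric) cycles in $G$; equivalently, $G$ contains a $(\Theta(p)\times\Theta(p))$-grid minor drawn in the plane whose ``radial'' structure gives $\Omega(p)$ pairwise vertex-disjoint concentric cycles $D_1 \supset D_2 \supset \cdots \supset D_m$ with $m = \Omega(p)$, together with $\Omega(p)$ ``spoke'' paths. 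The noose $M$ will be taken to be (essentially) the minimum noose enclosing the subgraph drawn inside $D_{m'}$ for a suitable middle index $m'$, after trimming so that $G[\inNoose(M)\cap V(G)]$ is connected and $M$ is minimal — this is exactly the shape of \autoref{obs:nooseOuterface} and \autoref{obs:connnoose}. The bound $q \le p/(2c)$ is there to guarantee that after we set aside the outermost $\Theta(p)$ layers (which we need later for the ``wrap'' arguments), the grid that remains is still large enough to carve a $q\times q$ noose grid out of.

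\textbf{Key steps.} (1) Run the grid/treewidth dichotomy algorithm; in the treewidth branch output the decomposition and stop. (2) In the grid/wall branch, fix the planar embedding and extract from the wall a system of $\Theta(p)$ concentric cycles and $\Theta(p)$ transversal paths; intersecting them produces a ``combinatorial grid'' $\{B_{i,j}\}$ of bounded-diameter connected pieces of $G$, drawn in the plane in grid position. (3) Choose $M$ to enclose the union of an inner $(q \times q)$ block of these pieces plus some padding, and contract/group the pieces $B_{i,j}$ into exactly $q^2$ groups so that each group $G[\inNoose(N_{i,j})\cap V(G)]$ is connected, the $N_{i,j}$ have disjoint interiors, every vertex inside $M$ lands in exactly one $N_{i,j}$, and grid-adjacent groups are joined by an edge — i.e.\ verify properties (a)–(d) of \autoref{def:nooseGrid}. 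Here one uses \autoref{obs:connnoose} repeatedly to realize each connected group as the strict interior of a noose, and \autoref{obs:workAdj}-type reasoning to check that no vertex inside $M$ has a neighbour escaping to a non-adjacent cell without leaving $\inNoose(M)$. (4) Bound the treewidth of $\tilde G[\inNoose(M)\cap V(G)] = G[\inNoose(M)\cap V(G)]$: since this is a planar subgraph cut out by a noose of our choosing, and we built $M$ from a grid of ``radius'' $\Theta(p)$ around the center, \autoref{lem:planargridtw} (the linear grid--treewidth bound on planar graphs) gives treewidth $\OO(p)$, establishing the ``$(p,q)$-workspace'' rather than merely a ``$q$-workspace.'' (5) Track the running time: the dichotomy algorithm runs in $p^{\OO(1)} n$, and all the combinatorial surgery (extracting cycles/paths from the wall, grouping pieces, computing minimum nooses) is linear in the size of the current plane graph, so the total is $p^{\OO(1)} n$.

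\textbf{Main obstacle.} The hard part will be step (3): turning the clean ``combinatorial grid of pieces'' coming from the wall into a genuine $q\times q$-noose grid that simultaneously satisfies all four conditions of \autoref{def:nooseGrid} together with the minimality-of-noose condition in \autoref{def:workspace}. The difficulties are (i) vertices and edges of $G$ drawn inside $M$ that are not part of the wall must be assigned to exactly one cell, and their incident edges must not create ``long-range'' adjacencies that would violate \autoref{obs:workAdj} — this forces a careful choice of the noose curves $N_{i,j}$ so that each cell absorbs the full planar region it controls, which is where \emph{nicely drawn} is used (each connected component lives cleanly in one cell's interior); and (ii) ensuring connectivity of each $G[\inNoose(N_{i,j})\cap V(G)]$ after absorbing these extra vertices, which may require routing small ``bridges'' within a cell and re-minimizing the noose. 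A secondary technical point is making $M$ itself minimum while keeping the grid structure: one picks $M$ to be the minimum noose enclosing the (connected) union of all the $\inNoose(N_{i,j})$, and then argues via \autoref{obs:nooseOuterface} that this $M$ is consistent with the chosen embedding and still has treewidth $\OO(p)$. I expect the bulk of the write-up to be these planar-topology bookkeeping lemmas; the high-level algorithm and the treewidth/running-time accounting are routine given the propositions already cited.
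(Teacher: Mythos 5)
There is a genuine gap at your step (4), the treewidth bound on $G[\inNoose(M)\cap V(G)]$. You invoke \autoref{lem:planargridtw} to conclude that the region enclosed by $M$ has treewidth $\OO(p)$ "since we built $M$ from a grid of radius $\Theta(p)$ around the center," but that lemma goes in the opposite direction: it converts large treewidth into a grid minor, not a grid neighbourhood into a treewidth upper bound. Nothing about carving $M$ out of a $\Theta(p)\times\Theta(p)$ wall controls what is drawn \emph{inside} $M$. When you extend the branch sets of the grid minor to absorb every vertex of $G$ (as you must, to satisfy condition (b) of \autoref{def:nooseGrid}), a single cell can swallow an arbitrarily large planar subgraph — e.g.\ a $\sqrt{n}\times\sqrt{n}$ grid hiding in one face of the wall — so the $q^2$ cells you enclose can have treewidth far exceeding $p$. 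The $(p,q)$-workspace definition requires $\tw(G[\inNoose(M)\cap V(G)])\le p$, and your construction does not certify this.

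The paper closes this gap with an iteration you are missing. It first proves (your steps (1)–(3), essentially) that one can always produce either a $2q$-noose-grid workspace or a width-$\OO(q)$ tree decomposition, with the planar bookkeeping you anticipate (greedy extension of branch sets, a laminarity argument showing at most one "bad" cell whose minimum noose engulfs the rest, and taking a quarter of the grid away from it). It then \emph{recurses}: split the current $2q\times 2q$ noose grid into four $q\times q$ quadrants, pick a quadrant $U$ with $|U|\le|V(G^\star)|/4$, verify that the minimum noose $M$ around $G[U]$ satisfies $\inNoose(M)\cap V(G)=U$, and re-run the dichotomy on $G[U]$ with parameter $p/c\ge 2q$ (this is the actual role of the hypothesis $q\le p/(2c)$, not reserving outer layers for wrap arguments). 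If the dichotomy reports treewidth $\le p$, the previous level's quadrant is a genuine $(p,q)$-workspace; otherwise it yields a fresh $2q$-workspace inside $U$ and the process repeats. Since $|U|$ shrinks by a factor of $4$ each round, the iteration terminates and the total running time is a geometric series summing to $\OO(p^2 n)$. Without some such re-certification of the interior's treewidth, your argument cannot deliver the "$(p,q)$" part of the statement.
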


Using \autoref{prop:surfaceMinor}, we derive the following lemma.

\begin{lemma}\label{lem:surfaceMinor}
There exists a constant $c$ such that for any connected plane graph $G$ and integer $r\in\mathbb{N}$, in time $\OO(r^2n)$ one can compute either an $r$-workspace with respect to $G$ or a tree decomposition of $G$ of width at most $cr$.
\end{lemma}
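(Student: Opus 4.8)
The plan is to obtain \autoref{lem:surfaceMinor} as a direct corollary of \autoref{prop:surfaceMinor}, turning the grid minor that proposition produces into an actual $r$-noose grid together with an enclosing minimum noose. First I would run the algorithm of \autoref{prop:surfaceMinor} on the connected plane graph $G$ with parameter $r$ (or a small constant multiple of $r$). If it returns a tree decomposition of width at most $cr$, we are done in the second case. Otherwise it returns an $r\times r$-grid as a minor of $G$, say with model sets $\{M_{i,j} : i,j\in[r]\}$ where $G[M_{i,j}]$ is connected and for each grid-adjacent pair $(i,j),(i',j')$ there is an edge of $G$ between $M_{i,j}$ and $M_{i',j'}$. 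The main task is to ``thicken'' these model sets into a partition of $V(G)$ and to certify each part by a minimum enclosing noose, so as to meet Definition~\ref{def:nooseGrid}.

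The key steps, in order. (1) Extend the model $\{M_{i,j}\}$ to a partition of $V(G)$: greedily assign every unassigned vertex of $G$ to one of the model sets while keeping each part connected (possible because $G$ is connected — process vertices in BFS order from the current union of model sets and attach each newly reached vertex to the part of its discovering neighbour). After this each $V_{i,j}\supseteq M_{i,j}$ induces a connected subgraph, the $V_{i,j}$ partition $V(G)$, and grid-adjacency still yields an edge between $V_{i,j}$ and $V_{i',j'}$. (2) Make the drawing of $G$ nicely drawn (it is already connected, so this is automatic, and the minimum enclosing noose of the whole graph is the one from \autoref{obs:nooseOuterface}). (3) For each $i,j$, since $G[V_{i,j}]$ is connected and (after possibly passing to $G$ restricted to the relevant region) the drawing is nice, invoke \autoref{obs:connnoose} / \autoref{obs:nooseOuterface} to obtain a noose $N_{i,j}$ that is the minimum noose enclosing $G[V_{i,j}]$, so that $\inNoose(N_{i,j})\cap V(G)=V_{i,j}$; because the $V_{i,j}$ are disjoint and each induces the exact point set cut out by its outer face, the $\inNoose(N_{i,j})$ are pairwise disjoint, which is condition (a). Conditions (b), (c), (d) are then immediate from the construction. (4) Let $M$ be the minimum noose enclosing $G$ itself (again via \autoref{obs:nooseOuterface}); then $G[\inNoose(M)\cap V(G)]=G$ is connected, so $(M,{\cal N})$ with ${\cal N}=\{N_{i,j}\}$ is an $r$-workspace by Definition~\ref{def:workspace}. (5) Bookkeeping on running time: the call to \autoref{prop:surfaceMinor} costs $\OO(r^2 n)$, the BFS extension costs $\OO(n+m)=\OO(n)$ for a planar graph, and computing outer faces / minimum enclosing nooses of the parts costs $\OO(n)$ in total, so the overall bound is $\OO(r^2 n)$.

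I expect the main obstacle to be step (3): carefully arguing that the partition parts $V_{i,j}$, which are induced-connected but are arbitrary pieces of a plane graph, can each be enclosed by a noose whose interior meets $V(G)$ in exactly $V_{i,j}$, and that these nooses are pairwise interior-disjoint. The subtlety is that $G[V_{i,j}]$ need not be a ``face-bounded'' region of $G$ — other vertices of $G$ (belonging to other parts) may be drawn inside the outer face of $G[V_{i,j}]$, in which case the minimum enclosing noose of $G[V_{i,j}]$ would wrongly capture them. To handle this cleanly I would first contract each $V_{i,j}$ to a single vertex (a plane-minor operation that keeps the drawing plane), obtaining a plane graph on $r^2$ vertices that contains the $r\times r$-grid, and then pull back small open discs around each contracted vertex; equivalently, one chooses the partition in step (1) more carefully, namely by using the \emph{faces} of a suitable plane representation so that each $V_{i,j}$ occupies a topological disc. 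Either way, the interaction between the combinatorial grid-minor structure and the fixed planar embedding is where the real care is needed; the rest is routine given Observations~\ref{obs:nooseOuterface}, \ref{obs:connnoose} and \ref{obs:workAdj} and \autoref{prop:surfaceMinor}.
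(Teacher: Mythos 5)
Your opening moves coincide with the paper's: invoke \autoref{prop:surfaceMinor}, and if a grid minor is returned, grow the model sets greedily (BFS from the current union) into a partition $\{V_{i,j}\}$ of $V(G)$ with connected parts that still realize the grid adjacencies. You also correctly isolate the crux: the minimum noose enclosing $G[V_{i,j}]$ captures everything outside the outer face of $G[V_{i,j}]$, so if other parts are drawn inside an interior face of $G[V_{i,j}]$, conditions (a)--(c) of \autoref{def:nooseGrid} fail. The gap is that your two proposed remedies do not resolve this. Contracting each part and ``pulling back small open discs'' implicitly re-embeds the graph: the workspace must consist of nooses in the \emph{given} drawing of $G$, and contraction of a part whose interior faces contain the rest of the graph produces a drawing in which those vertices are no longer nested, so the discs you pull back do not correspond to nooses of the original plane graph. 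Re-choosing the partition ``using faces'' also cannot work in general: if the greedy process yields a part $V_{i,j}$ such that all of $G\setminus V_{i,j}$ lies in an interior face of $G[V_{i,j}]$, then \emph{every} noose enclosing the connected graph $G[V_{i,j}]$ encloses the whole graph, and no repartition keeping $V_{i,j}$ roughly where it is can make its noose interior-disjoint from the others. This is a topological obstruction, not a bookkeeping issue.

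The paper's proof handles it as follows, and this is the ingredient your sketch is missing. One requests a $2r\times 2r$ grid from \autoref{prop:surfaceMinor} (so the threshold becomes $\widehat{c}(2r)=cr$, fixing $c=2\widehat{c}$). After the greedy extension, each $G[V_{h_{i,j}}]$ \emph{and} its complement $G\setminus V_{h_{i,j}}$ are connected, so each lies in a single face of the other. One then shows that at most one cell $(i,j)$ can be ``inverted'', i.e.\ have $G\setminus V_{h_{i,j}}$ contained in an interior face of $G[V_{h_{i,j}}]$: if $(i,j)$ is inverted, every other cell is surrounded by $G[V_{h_{i,j}}]\subseteq G\setminus V_{h_{i',j'}}$ and hence is not inverted, and moreover the regions enclosed by the minimum nooses of any two non-inverted cells are disjoint. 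If no cell is inverted, the $2r$-workspace is already valid; otherwise one discards the quadrant containing the inverted cell and outputs the opposite $r\times r$ quadrant $U$, together with the minimum noose enclosing $G[U]$, as the $r$-workspace. Without the factor-$2$ slack in the grid and the ``at most one inverted cell'' argument, the construction does not go through.
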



\begin{proof}
Let $M$ be the minimum noose that encloses $G$. Moreover, fix $c$ to be equal to $2\widehat{c}$, where $\widehat{c}$ denotes the constant in \autoref{prop:surfaceMinor}. First, we apply the algorithm given by \autoref{prop:surfaceMinor} to obtain in time $\OO(r^2n)$ either a $2r\times 2r$ grid as a minor of $G$ or a tree decomposition of width at most $\widehat{c}(2r)=cr$. In the latter case, we are done. Thus, we next suppose that we have a $2r\times 2r$ grid $H$ whose vertex set is accordingly denoted as $\{h_{i,j}: i\in[a],j\in[b]\}$, where $\phi$ is a function witnessing that $H$ is a minor of $G$.

Initialize $V_h=\phi(h)$ for all $h\in V(H)$. Now, since $G$ is a connected graph, as long as $\bigcup_{h\in V(H)}V_h\neq V(G)$, there exist vertices $u\notin \bigcup_{h\in V(H)}V_h$ and $v\in V_h$ for some $h\in V(H)$ such that $\{u,v\}\in E(G)$. Then, insert $u$ into $V_h$. Clearly, this process terminates in less than $n$ iterations. At the end of this process, the following properties are satisfied.
\begin{enumerate}
\item\label{item:surfaceMinor1} For all $h,h'\in V(H)$, $V_h\cap V_{h'}=\emptyset$.
\item $\bigcup_{h\in V(H)} V_h = V(G)$.
\item For all $h\in V(H)$, $G[V_h]$ is a connected graph.
\item\label{item:surfaceMinor4} For all $i,i',j,j'\in[r']$ such that $|i-i'|+|j-j'|=1$, there exist $u\in V_{h_{i,j}}$ and $v\in V_{h_{i',j'}}$ such that $\{u,v\}\in E(G)$.
\end{enumerate}


For all $i,j\in[2r]$, we define $N^\star_{i,j}$ as the minimum noose that encloses $G[V_{h_{i,j}}]$. We thus defined a set ${\cal N}^\star$ of $(2r)^2$ nooses ordered as $N^\star_{i,j}$ for all $i,j\in[2r]$. If for all $i,j\in[2r]$, $\inNoose(N^\star_{i,j})\cap V(G)= V_{h_{i,j}}$,\footnote{We remark that this condition may be false as given an arbitrarily noose $N$, it is not necessarily true that $N$ encloses $G[\inNoose(N)]$.} then the proof is complete as by properties \ref{item:surfaceMinor1} to \ref{item:surfaceMinor4} above, we have that $(M,{\cal N}^\star)$ is an $2r$-workspace (from which we can clearly derive an $r$-workspace).
In what follows, we therefore suppose that there exist $i,j\in[2r]$ such that $\inNoose(N^\star_{i,j})\cap V(G)\neq V_{h_{i,j}}$. Clearly, this means that $\inNoose(N^\star_{i,j})\cap V(G)\supset V_{h_{i,j}}$. We assume w.l.o.g.~that $i,j\in[r]$.  Let $U$ be the union of all sets $V_{h_{i,j}}$ such that $i,j>r$. 

For the next argument, recall that $G$ is a plane graph, and observe that for all $i',j'\in[2r]$, 
both $G[V_{h_{i',j'}}]$ and $G\setminus V_{h_{i',j'}}$ are connected graphs. We thus have that for all $i',j'\in[2r]$, $G[V_{h_{i',j'}}]$ is contained in a face (possibly the outer face) of $G\setminus V_{h_{i',j'}}$ as well as $G\setminus V_{h_{i',j'}}$ is contained in a face of $G[V_{h_{i',j'}}]$. In particular, for all $i',j',\widehat{i},\widehat{j}\in[2r]$ such that $(i',j')\neq(\widehat{i},\widehat{j})$, the areas enclosed by $N^\star_{i',j'}$ and $N^\star_{\widehat{i},\widehat{j}}$ are either disjoint or one of them is contained in the other. However, as $\inNoose(N^\star_{i,j})\cap V(G)\neq V_{h_{i,j}}$ and 
$N^\star_{i,j}$ is the minimum noose that encloses $G[V_{h_{i,j}}]$, we have that $G\setminus V_{h_{i,j}}$ is contained in an interior face of $G[V_{h_{i,j}}]$. This implies that for all $i',j'\in[2r]$ such that $(i',j')\neq(i,j)$, $G[V_{h_{i',j'}}]$ is contained in an interior face of $G\setminus V_{h_{i',j'}}$. Indeed, if for some $i',j'\in[2r]$, $G[V_{h_{i',j'}}]$ were contained in the outer face of $G\setminus V_{h_{i',j'}}$, then it would not have been possible for $G\setminus V_{h_{i,j}}$ to be contained in an interior face of $G[V_{h_{i,j}}]$ (because $G[V_{h_{i',j'}}]$ is a subgraph of $G\setminus V_{h_{i,j}}$). In particular, we have that for all $i',j',\widehat{i},\widehat{j}\in[2r]$ such that $(i',j'), (\widehat{i},\widehat{j})$ and $(i,j)$ are three distinct pairs, it holds that the areas enclosed by $N^\star_{i',j'}$ and $N^\star_{\widehat{i},\widehat{j}}$ are disjoint. Indeed, if for some such $i',j',\widehat{i},\widehat{j}\in[2r]$, it were true that, say, the area enclosed by $N^\star_{i',j'}$ is contained in the area enclosed by $N^\star_{\widehat{i},\widehat{j}}$, then $G[V_{h_{\widehat{i},\widehat{j}}}]$ would have not been contained in an interior face of $G\setminus V_{h_{\widehat{i},\widehat{j}}}$ (because $G[V_{h_{i',j'}}]$ is a subgraph of $G\setminus V_{h_{\widehat{i},\widehat{j}}}$).

Now, observe that $G[U]$ and $G\setminus U$ are connected graphs. Thus, since for all $i',j'\in[2r]$ such that $(i',j')\neq(i,j)$, $G[V_{h_{i',j'}}]$ is contained in an interior face of $G\setminus V_{h_{i',j'}}$, we have that $G[U]$ is contained in a face of $G\setminus U$. Therefore, the minimum noose $M^\star$ that encloses $G\setminus U$ satisfies $\inNoose(M^\star)=U$. Thus, the desired $r$-workspace is $(M^\star,\{N^\star_{i,j}: i,j>r\})$.
\end{proof}

As a corollary of \autoref{lem:surfaceMinor}, we have the following result.

\begin{corollary}\label{cor:prelimWork}
There exists a constant $c$ such that for any plane graph $G$, connected subgraph $G'$ of $G$ such that the minimum noose $M$ enclosing $G'$ satisfies $\inNoose(M)=V(G')$, and integer $r\in\mathbb{N}$, in time $\OO(r^2n)$ one can compute either an $r$-workspace with respect to $G'$ or a tree decomposition of $G'$ of width at most $cr$.
\end{corollary}

Having \autoref{cor:prelimWork} at hand, we can now prove \autoref{lem:workspace}.

\begin{proof}[Proof of \autoref{lem:workspace}]
Let $\widehat{c}$ denote the constant in \autoref{cor:prelimWork}. Let us fix $c=2\widehat{c}$. Let $C_1,C_2,\ldots,C_t$ denote the connected components of $G$. Since $G$ is nicely drawn, for every component $C_i$ of $G$, the minimum noose $M$ enclosing $C_i$ satisfies $\inNoose(M)=V(C_i)$.
Hence, we can apply \autoref{cor:prelimWork} to every connected component of $G$, and thus either find a connected component $C_i$ of $G$ with a $2q$-workspace $(M_i,{\cal N}_i)$, or find a tree decomposition $(T_i,\beta_i)$ for every connected component $C_i$ of $G$ of width at most $2\widehat{c}q$. In the latter case, for all $i\in[t-1]$, we add an edge between some vertex of $T_i$ and some vertex of $T_{i+1}$ to obtain a tree decomposition of $G$ of width at most $2\widehat{c}q=cq\leq cp$, which completes the proof. Thus, we next suppose that we face the former case. The total time spent so far is $\OO(q^2n)$.

We initialize $G^\star$ to be the connected component $C_i$, and denote $(M^\star,{\cal N}^\star)=(M_i,{\cal N}_i)$.
Since $G$ is nicely drawn, the minimum noose $M$ enclosing $G^\star$ satisfies $\inNoose(M)\cap V(G)=V(G^\star)$. Denote ${\cal N}^1=\{N^\star_{i,j}\in{\cal N}^\star: i,j\leq q\}$ and $U^1=V(G)\cap \bigcup_{N^1_{i,j}\in{\cal N}^1}\inNoose(N^1_{i,j})$, ${\cal N}^2=\{N^\star_{i,j}\in{\cal N}^\star: i\leq q,j>q\}$ and $U^2=V(G)\cap\bigcup_{N_{i,j}^2\in{\cal N}^2}\inNoose(N^2_{i,j})$, ${\cal N}^3=\{N^\star_{i,j}\in{\cal N}^\star: i>q,j\leq q\}$ and $U^3=V(G)\cap\bigcup_{N_{i,j}^3\in{\cal N}^3}\inNoose(N^3_{i,j})$, and ${\cal N}^4=\{N^\star_{i,j}\in{\cal N}^\star: i,j> q\}$ and $U^4=V(G)\cap\bigcup_{N^4_{i,j}\in{\cal N}^4}\inNoose(N^4_{i,j})$. Observe that for all distinct $i,j\in[4]$, $U^i\cap U^j=\emptyset$. Thus, there exists $i\in[4]$ such that $|U^i|\leq |V(G^\star)|/4$. 
We arbitrarily pick such $i\in[4]$, and denote $({\cal N},U)=({\cal N}^i,U^i)$. Let $M$ be the minimum noose enclosing $G[U]$. We now argue that $M$ satisfies $\inNoose(M)\cap V(G)=U$. First, the condition $\inNoose(M)\cap (V(G)\setminus V(G^\star))=\emptyset$ follows from the fact that $M^\star$ encloses $G^\star$ and satisfies $\inNoose(M^\star)\cap V(G)=V(G^\star)$. Now, consider the condition $\inNoose(M)\cap (V(G^\star)\setminus U)=\emptyset$. Since $G[U]$ and $G^\star\setminus U$ are connected graphs, each one of these two subgraphs is contained in a face (possibly the outer face) of the other. However, due to the grid structure of these two subgraphs, each of their {\em inner} faces contains on its boundary vertices that belong to $\inNoose(N^\star_{i,j})$ for at most four nooses $N^\star_{i,j}\in{\cal N}^\star$. In addition, as $q>2$, each of these two subgraphs contains a set of vertices $W$ such that the set of vertices in $N_G(W)$ that belong to the other subgraph cannot be enclosed by at most four nooses in ${\cal N}^\star$. Thus, each subgraph among $G[U]$ and $G^\star\setminus U$ is contained in the outer face of the other subgraph. Hence, we derive that $\inNoose(M)\cap (V(G^\star)\setminus U)=\emptyset$.

We have thus shown that $(M,{\cal N})$ is a $q$-workspace with respect to $G[U]$.
By using the algorithm provided by 
\autoref{cor:prelimWork}, 
since $p/c\geq 2q$, in time $\OO(p^2|U|)$ we compute either a $p/c$-workspace $(M',{\cal N}')$ with respect to $G[U]$ or a tree decomposition of $G[U]$ of width at most $p$. In the latter case, we are done, as then it holds that $(M,{\cal N})$ is a $(p,q)$-workspace with respect to $G$. In addition, if $|U|\leq p+1$, we are also done as then again $\tw(G)\leq p$. In the former case, supposing $|U|>p+1$, we update $G^\star$ to $G[U]$ and ${\cal N}^\star$ to ${\cal N}$, and then repeat the computation starting with the identification of $({\cal N}^i,U^i)$ for all $i\in[4]$. Clearly, the process terminates, at the latest once $|U|\leq p+1$.

Finally, let us analyze the running time of our algorithm. At each iteration, the size of the current set $U$ is at most $1/4$ of the size of the former set $U$, and the running time to execute the iteration is $\OO(p^2|U|)$. Thus, the running time of all the iterations together is $\OO(p^2n/4^0 + p^2n/4^1 + p^2n/4^2 + \cdots + p^2n/4^x)=\OO(p^2n)$, where $x$ is smallest integer such that $n/4^x\leq p+1$. Thus, the total running time is $\OO(q^2n+p^2n)=\OO(p^2n)$. This completes the proof.
\end{proof}

%


\section{Fitting Solutions into Frames}\label{sec:frame}

In this section we define the frames of a $2q$-workspace of the plane graph $\tilde{G}$ and prove that 
some solutions behave {\em nicely} in some frames. That is, we prove that  there are ``$(\ell,\eta)$-untangled" solutions in {\em any} large terminal free region. The main result used towards proving the existence of such nice solutions is \autoref{cor:wrapped} (derived from the unique linkage theorem). 
%
In the later sections these nice solutions allow us to define {\em nice partial solutions}
and thereby  find an irrelevant vertex.

\subsection{Frames, Terminal Free Frames, Vacant Frames and Few Crossings Frames}

In this subsection we start with the definition of {\em frames} and various properties of it with respect to a solution 
of \fFindFoliostar. Solution satisfying these properties will be  very useful, because the partial solutions
arising from these solutions are {\em small}. 

\begin{definition}[{\bf Frame}]\label{def:frame}
Let $\tilde{G}$ be a plane graph with a $2q$-workspace $(M,{\cal N})$, and $\ell\in[q-1]_0$. The {\em $\ell$-frame} of $(M,{\cal N})$ is the set of nooses $\fr[{\cal N},\ell] \triangleq \{N_{i,q-\ell}\in{\cal N}: i\in \{q-\ell,\ldots,q+\ell+1\}\}\cup \{N_{i,q+\ell+1}\in{\cal N}: i\in \{q-\ell,\ldots,q+\ell+1\}\}\cup \{N_{q-\ell,j}\in{\cal N}: j\in \{q-\ell,\ldots,q+\ell+1\}\}\cup \{N_{q+\ell+1,j}\in{\cal N}: j\in \{q-\ell,\ldots,q+\ell+1\}\}$. When ${\cal N}$ is clear from context, denote $\fr[\ell]=\fr[{\cal N},\ell]$. (See \autoref{fig:frame}.)
\end{definition}

Next we define and classify the nooses in a frame into four groups and define an order among these nooses. 

\begin{definition}[{\bf Ordered Frame}]\label{def:orderedFrame}
Let $(G,\delta,t,w',s')$ be an instance of \fFindFoliostar.
Let $(M,{\cal N})$ be a $2q$-workspace in $\tilde{G}$ and $\ell\in[q-1]_0$.
The {\em order $<$} on $\fr[\ell]$ is defined as follows. First, a noose $N_{i,j}\in\fr[\ell]$ is categorized as {\bf (i)} {\em up-noose} if $i=q-\ell$, {\bf (ii)} {\em right-noose} if $q-\ell<i<q+\ell+1$ and $j=q+\ell+1$, {\bf (iii)} {\em down-noose} if $i=q+\ell+1$, and {\bf (iv)} {\em left-noose} if $q-\ell<i<q+\ell+1$ and $j=q-\ell$.
For any two nooses $N_{i,j},N_{i',j'}\in\fr[\ell]$, $N_{i,j}<N_{i',j'}$ if and only if one of the following conditions hold.
\vspace{-0.5em}
\begin{itemize}
\itemsep0em
\item The number of the type of $N_{i,j}$ is strictly smaller than the number of the type of $N_{i',j'}$.
\item $N_{i,j}$ and $N_{i',j'}$ are up-nooses such that $j<j'$.
\item $N_{i,j}$ and $N_{i',j'}$ are right-nooses such that $i<i'$.
\item $N_{i,j}$ and $N_{i',j'}$ are down-nooses such that $j>j'$.
\item $N_{i,j}$ and $N_{i',j'}$ are left-nooses such that $i>i'$.
\end{itemize}
\end{definition}

We would like to have small representations for solutions induced on the vertices of nooses in $\bigcup_{i\leq \ell}\fr[i]$, 
for some $\ell$, which we could use algorithmically. To achieve this goal, some property which we would like to have is as follows: if there is a solution, then  there is a solution whose ``interaction'' with $\fr[\ell]$  (which in some sense act as a separator) is bounded and moreover this interaction behaves ``nicely'' with respect to $\fr[\ell]$.    This will allow us to get small representations for solutions.  As we have seen in Section~\ref{sec:disjirr}, one of the important requirements for finding an  irrelevant vertex is to have a region free from terminals.

\begin{definition}[{\bf Terminal-Free Frame}]\label{def:termFreeFrame}
Let 
$(G,\delta,t,w',s')$ be an instance of \fFindFoliostar.
Let $(M,{\cal N})$ be a $2q$-workspace in $\tilde{G}$, $\ell\in[q-1]_0$ and $d\in[\ell]$ where $\ell+d\leq q-1$. A solution ${\cal S}=\{(H,\phi_H,\varphi_H) : H \mbox{ in the $\delta^{\star}$-folio of } G\}$ is {\em $(\ell,d)$-terminal free} if the 
following conditions hold, where $Q=\bigcup_{\ell-d\leq i \leq \ell+d}\fr[i]$ and $U= \bigcup_{N\in Q}\inNoose_{\tilde{G}}(N)\cap V(\tilde{G})$.  
\begin{itemize}
\item There does not exist a terminal (with respect to ${\cal S}$) in $U$,
\item  there does not exist a terminal (with respect to ${\cal S}$) in $V(G_i)$, where $V(G_i)\cap U\neq \emptyset$, $i\in [k]$,  and
\item there is no edge $\{u,v\}$  
in the image of $\varphi_H$ for any $H \mbox{ in the $\delta^{\star}$-folio of } G$
such that $u\in A$ and $v\in U \cup \{V(G_i) : i\in [k], V(G_i)\cap U\neq \emptyset\}$. 
\end{itemize}
\end{definition}

If we are given a flat wall $W$ in $G\setminus A$, witnessed by $(A',B',C ,\tilde{G}, G_0,G_1,\ldots,G_k)$ and if there is a $G_i$ for some $i\in [k]$ and a subpath of $\varphi_H(e)$ (for some edge $e$ of a graph $H$ in the $\delta^{\star}$-folio of $G$)  between two vertices in $V(G_i)\cap V(G_0)$ with internal vertices in $V(G_i)\setminus V(G_0)$, 
 then in fact we could use the path $u-v$ in $\tilde{G}$ to encode it. Consequently, we define a {\em representation of a solution} of \fFindFoliostar, which 
is simply the replacement of paths in $G_i$ with edges in $\tilde{G}$. 

\begin{definition}
Let $(G,\delta,t,w',s')$ be an instance of \fFindFoliostar\ 
and let $H$ be a  $\delta^{\star}$-folio of $G$ witnessed by $(\phi,\varphi)$. 
The representation of  $(H,\phi,\varphi)$ in $\tilde{G}\cup G$ is the tuple $(H,\tilde{\phi},\tilde{\varphi})$, which is defined as follows. For each $e\in E(H)$ and $\{u,v\}\in E(\tilde{G})\setminus E(G)$, if there is a subpath $P$ of $\varphi(e)$, from $u$ to $v$ with internal vertices from $V(G_i)\setminus V(G_0)$ for some $i\in [k]$, then replace  $P$ with $u-v$. The resulting graph is a subgraph of $\tilde{G}\cup G$ which witnesses that $H$ is topological minor of   $\tilde{G}\cup G$ and  $(\tilde{\phi},\tilde{\varphi})$ denotes  this witness. 
\end{definition}

\begin{definition}
Let 
$(G,\delta,t,w',s')$ be an instance of \fFindFoliostar\ 
and let 
${\cal S}=\{(H,\phi_H,\varphi_H) : H \mbox{ in the $\delta^{\star}$-folio of } G\}$ be a solution. 
The representation of  ${\cal S}$ is defined as 
$\tilde{\cal S}=\{(H,\tilde{\phi}_H,\tilde{\varphi}_H) :  (H,\tilde{\phi}_H,\tilde{\varphi}_H) \mbox{ is the representation of } (H,\phi_H,\varphi_H) \mbox{ in $\tilde{G}\cup G$ and } (H,\phi_H,\varphi_H)  \in {\cal S}\}$.
\end{definition}

The following easy observation can be proved from  the definitions of an $(\ell,d)$-terminal free solution and its 
representation, and \autoref{obs:cflatmore}.

\begin{observation}
\label{obs:allpresent}
Let 
$(G,\delta,t,w',s')$ 
be an instance of \fFindFoliostar. 
Let $(M,{\cal N})$ be a $2q$-workspace in $\tilde{G}$, $\ell\in[q-1]_0$ and $d\in[\ell]$ where $\ell+d\leq q-1$. 
Let 
${\cal S}$
be an $(\ell,d)$-terminal free solution to $(G,\delta,t,w',s')$ and $\tilde{S}$ be its representation.  
Let 
 $Q=\bigcup_{\ell-d\leq i \leq \ell+d}\fr[i]$ and $U=\bigcup_{N\in Q}\inNoose_{\tilde{G}}(N)\cap V(\tilde{G})$. 
Let $G'=\tilde{G}\cup G$ and $U'=  U \cup \{V(G_i)~:~i\in [k], V(G_i)\cap U\neq \emptyset\}$. 
Then, no edge in $E(G[U'])\setminus E(\tilde{G})$ and $E_G(U',A)$ is used by $\tilde{\cal S}$. That is, 
if an edge $e\in E(G[U'])$ is used by $\tilde{\cal S}$, then $e\in E(\tilde{G})$.
\end{observation}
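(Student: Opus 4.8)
\textbf{Proof plan for Observation~\ref{obs:allpresent}.}

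The plan is to argue that the representation $\tilde{\cal S}$, obtained from the $(\ell,d)$-terminal free solution ${\cal S}$ by replacing each ``$G_i$-detour'' of a path $\varphi_H(e)$ with the corresponding edge $\{u,v\}\in E(\tilde{G})$, cannot use any edge lying outside the plane graph $\tilde{G}$ once we restrict attention to the region $U' = U \cup \{V(G_i) : i\in[k],\ V(G_i)\cap U\neq\emptyset\}$, nor any edge joining $U'$ to an apex vertex of $A$. First I would recall from \autoref{obs:cflatmore} (condition~\ref{conditionsix}) and the definition of representation that every subpath of a witnessing path $\varphi_H(e)$ whose internal vertices lie in some $V(G_i)\setminus V(G_0)$ is collapsed to a single edge of $\tilde{G}$; hence after the replacement, $\tilde{\varphi}_H(e)$ uses no vertex of $V(G_i)\setminus V(G_0)$ for any $i\in[k]$ with $V(G_i)\cap U\neq\emptyset$ --- such usage would contradict the $(\ell,d)$-terminal-freeness of ${\cal S}$ precisely because a maximal traversal of such a $G_i$ either enters and leaves through $V(G_i)\cap V(G_0)$ (and is thus collapsed) or contains a terminal/apex-incident endpoint inside $V(G_i)$, which the third bullet of \autoref{def:termFreeFrame} forbids.

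Next I would classify the edges $e\in E(G[U'])$ that a path of $\tilde{\cal S}$ could possibly use. Write $e=\{x,y\}$ with $x,y\in U'$. There are three cases. If both $x,y\in U$, then $e\in E(\tilde{G})$ by condition~\ref{conditionseven} of \autoref{obs:cflatmore} together with the fact that $\tilde{G}$ has $V(\tilde{G})=V(G_0)$ and $U\subseteq V(\tilde{G})$: any edge of $G$ between two vertices of $V(G_0)$ is either already in $E(G_0)\subseteq E(\tilde{G})$, or it is an edge of some $G_i$ between two vertices of $V(G_i)\cap V(G_0)$, in which case --- if it is traversed by $\tilde{\cal S}$ after collapsing --- it coincides with the collapsed $\tilde{G}$-edge $\{x,y\}$, hence again lies in $E(\tilde{G})$. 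If exactly one endpoint, say $x$, lies in $V(G_i)\setminus V(G_0)$ for some $i$ with $V(G_i)\cap U\neq\emptyset$, then $x$ cannot be used by $\tilde{\varphi}_H(\cdot)$ by the previous paragraph, so $e$ is not used. If both endpoints lie in such $G_i$-interiors, the same argument applies. The only remaining subtlety is that the same $G_i$ could have $V(G_i)\cap U\neq\emptyset$ while also containing a vertex of $V(G_0)\setminus U$; but conditions~\ref{condition4intro}--\ref{condition5intro} (equivalently \ref{condition4}--\ref{condition5} of \autoref{def:KWT}) guarantee $|V(G_i)\cap V(G_0)|\leq 3$ and that $V(G_i)\cap V(G_0)$ forms a clique bounding a single finite face, so a path entering the interior of $G_i$ and not being collapsible would have to revisit $V(G_i)\cap V(G_0)$ --- contradicting internal-vertex-disjointness of the witnessing paths or the terminal-freeness hypothesis. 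I would spell this out as a short claim.

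Finally, for the edges in $E_G(U',A)$: suppose $\tilde{\cal S}$ used an edge $\{u,v\}$ with $u\in A$ and $v\in U'$. Collapsing a $G_i$-detour never introduces an edge incident to $A$ (the collapsed edges all lie in $\tilde{G}$, which contains no vertex of $A$ since $W$ is flat in $G\setminus A$), so such an edge $\{u,v\}$ would already be used by the original ${\cal S}$. If $v\in U$ this directly contradicts the third bullet of \autoref{def:termFreeFrame}; if $v\in V(G_i)\setminus V(G_0)$ for some $i$ with $V(G_i)\cap U\neq\emptyset$, the same bullet again gives a contradiction. Hence no such edge is used, and combining the three cases yields the statement: every edge of $E(G[U'])$ used by $\tilde{\cal S}$ lies in $E(\tilde{G})$, and no edge of $E_G(U',A)$ is used.

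I do not expect a genuine obstacle here --- the statement is essentially a bookkeeping consequence of the definition of representation and the terminal-freeness conditions. The one step requiring care, and which I would write most carefully, is ruling out a path that dips into the interior of a protrusion $G_i$ with $V(G_i)\cap U\neq\emptyset$ without being collapsed; this is where I invoke the clique/face structure of $V(G_i)\cap V(G_0)$ from \autoref{def:KWT} together with condition~\ref{conditionsix} of \autoref{obs:cflatmore} to force such a traversal to be collapsible, and then the $(\ell,d)$-terminal-freeness to handle the endpoints.
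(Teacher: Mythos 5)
The paper gives no written proof of this observation---it merely asserts that it ``can be proved from the definitions of an $(\ell,d)$-terminal free solution and its representation, and \autoref{obs:cflatmore}''---and your argument is exactly the fleshed-out version of that intended reasoning: every maximal excursion of a witnessing path into $V(G_i)\setminus V(G_0)$ (for $G_i$ meeting $U$) must be flanked on both sides by attachment vertices of $V(G_i)\cap V(G_0)$, since an endpoint inside $G_i$ is excluded by the second bullet of \autoref{def:termFreeFrame} and an exit through $A$ by the third, so the excursion is collapsed to a $\tilde{G}$-edge guaranteed by condition~\ref{condition4} of \autoref{def:KWT}. The only cosmetic slip is attributing the exclusion of terminals inside $V(G_i)$ to the third bullet of \autoref{def:termFreeFrame} rather than the second; otherwise your case analysis is sound and supplies precisely the details the paper leaves implicit.
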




Now we define the notion of vacant frames, which, intuitively, says that a solution 
will not use large portions from frames. Later, we prove the existence of a solution with this property 
using  \autoref{cor:wrapped}. Afterwards, we will modify such solutions using large unused portions to make solutions that cross frames ``nicely''.

\begin{definition}[{\bf Vacant Frame}]\label{def:vacantFrame}
Let $(G,\delta,t,w',s')$  be an instance of \fFindFoliostar.
 Let ${\cal S}$
 be a solution 
and $\{(H,\tilde{\phi}_H,\tilde{\varphi}_H) : H \mbox{ in the $\delta^{\star}$-folio of } G\}$  
be the representation of it.  
Let $(M,{\cal N})$ be a $2q$-workspace in $\tilde{G}$, $\ell\in[q-1]_0$, $d,\alpha \in[\ell]$ where $\ell+d< q$ and $d+\alpha <\ell$. 
Then, ${\cal S}$ 
 is {\em $(\ell,d,\alpha)$-vacant} if for any vertex $v$ in the image of $\tilde{\varphi}_H$, 
 $H$ in the $\delta^{\star}$-folio of  $G$, 
 belonging to $\inNoose(N_{i,j})$ 
for some  $N_{i,j}\in \bigcup_{\ell-d\leq t \leq \ell+d} \fr[t]$,  
it holds that $i\in\{q-\ell+d+1,\ldots,q-\ell+d+\alpha\}\cup\{q+1+\ell-d-\alpha,\ldots,q+\ell-d\}$.
\end{definition}

See \autoref{fig:frametypes} for an illustration of $(\ell,d)$-terminal free and $(\ell,d,\alpha)$-vacant solutions. 
The following observation follows from \autoref{def:vacantFrame}.

\begin{observation}
\label{obs:irr:vacant}
Let $(G,\delta,t,w',s')$  be an instance of \fFindFoliostar. Let $(M,{\cal N})$ be a $2q$-workspace in $\tilde{G}$, $\ell\in[q-1]_0$, $d,\alpha \in[\ell]$ where $\ell+d< q$ and $d+\alpha <\ell$. If there is an  $(\ell,d,\alpha)$-vacant solution, then  the vertices in $\bigcup_{q-\ell\leq j\leq q+1+\ell}\inNoose(N_{q-\ell,j})\cap V(\tilde{G})$ are irrelevant. 
\end{observation}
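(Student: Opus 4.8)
\textbf{Plan for \autoref{obs:irr:vacant}.}
The statement is that if an $(\ell,d,\alpha)$-vacant solution exists, then every vertex of $\tilde G$ lying inside one of the ``top'' nooses $N_{q-\ell,j}$ (for $q-\ell \le j \le q+1+\ell$) of the $\ell$-frame is irrelevant to the extended $\delta^{\star}$-folio. Since the $\ell$-frame sits inside the workspace, and vacancy guarantees that the solution's representation $\{(H,\tilde\phi_H,\tilde\varphi_H)\}$ uses only vertices with row index in $\{q-\ell+d+1,\dots,q-\ell+d+\alpha\}\cup\{q+1+\ell-d-\alpha,\dots,q+\ell-d\}$ within frames $\fr[t]$ for $\ell-d\le t\le \ell+d$, the top nooses (row index $q-\ell$, which is strictly less than $q-\ell+d+1$ since $d\ge 1$) are never touched by the solution at all. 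So the plan is: show that deleting these top-noose vertices does not destroy any member of the folio, and conversely deleting vertices never \emph{adds} members, so the $\delta^{\star}$-folio — and hence the extended $\delta^{\star}$-folio — is unchanged.

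The key steps, in order. First I would fix a graph $X$ on $R(G)$ and a graph $H$ in the $(X,\delta^{\star})$-folio of $G$; it suffices to realize $H$ in $(G\cup X)\setminus V^{\top}$ where $V^{\top}=\bigcup_{q-\ell\le j\le q+1+\ell}\inNoose(N_{q-\ell,j})\cap V(\tilde G)$. Since we only need \emph{existence} of an irrelevant set of this form and \autoref{obs:irr:vacant} posits a single $(\ell,d,\alpha)$-vacant solution ${\cal S}$, we may assume ${\cal S}$ is a solution for the all-edges graph $X$ (or, more carefully, invoke the convention used throughout this section that a ``solution'' is taken with respect to the relevant $X$); I would state this reduction cleanly, observing that an irrelevant set for the $\delta^{\star}$-folio of $G\cup X$ for every $X$ is exactly an irrelevant set for the extended $\delta^{\star}$-folio, and that the vacancy hypothesis is assumed to hold in the form needed. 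Second, I would take the witness $(\tilde\phi_H,\tilde\varphi_H)$ from the representation $\tilde{\cal S}$ of ${\cal S}$, and argue that no vertex of $V^{\top}$ appears in the image of $\tilde\varphi_H$: by \autoref{def:vacantFrame} any vertex in the image lying inside some noose of $\bigcup_{\ell-d\le t\le\ell+d}\fr[t]$ has row index in the stated set, which excludes $q-\ell$; and the top nooses $N_{q-\ell,j}$ for the relevant range of $j$ are precisely nooses of the frames $\fr[\ell], \fr[\ell-1],\dots$ up to $\fr[\ell-d]$ (they are the ``up-nooses'' of these frames), hence the argument applies. Third, having a witness avoiding $V^{\top}$ in $\tilde G\cup G$, I would translate it back to a genuine realization in $G\setminus V^{\top}$: each edge of $\tilde G\setminus G$ used is replaced by a path through some $G_i\setminus V(G_0)$ (which exists by \autoref{obs:cflatmore}, condition~\ref{conditionsix}, and is disjoint from $V^{\top}\subseteq V(\tilde G)$), giving a topological-minor model of $H$ in $(G\cup X)\setminus V^{\top}$. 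Fourth, the trivial direction: $(G\cup X)\setminus V^{\top}$ is a subgraph of $G\cup X$, so every topological minor of the former is one of the latter, hence the $\delta^{\star}$-folios coincide; by \autoref{obs:flatfolioonly}(a) this gives equality of the extended $\delta$-folios.

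The main obstacle — or at least the point requiring care — is the bookkeeping about \emph{which} nooses the top-noose set $V^{\top}$ belongs to and the precise interplay of the indices: I must verify that for every $j$ in the range $q-\ell\le j\le q+1+\ell$ the noose $N_{q-\ell,j}$ indeed lies in some frame $\fr[t]$ with $\ell-d\le t\le \ell+d$ (it is an up-noose of $\fr[\ell]$ when $q-\ell\le j\le q+\ell+1$, directly from \autoref{def:frame} and \autoref{def:orderedFrame}), so that the vacancy conclusion — phrased in terms of vertices inside nooses of $\bigcup_{\ell-d\le t\le\ell+d}\fr[t]$ — actually applies to all of $V^{\top}$, and that the row index $q-\ell$ is genuinely excluded, i.e.\ $q-\ell < q-\ell+d+1$ and $q-\ell < q+1+\ell-d-\alpha$, both of which follow from $d\ge 1$ and $d+\alpha<\ell$ together with $\ell\ge 1$. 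A secondary subtlety is handling the edges incident to the apex set $A$ and the ``protrusions'' $G_i$: I would note that since $V^{\top}\subseteq V(\tilde G)=V(G_0)$ and the representation replaces only subpaths internal to $G_i\setminus V(G_0)$, deleting $V^{\top}$ cannot interfere with the re-expansion of those subpaths, and that no edge of the form $\{u,v\}$ with $u\in A$, $v\in V^{\top}$ is used because $A$ has no neighbours strictly inside the disc bounded by $C$ and in particular none in $V^{\top}$ (this is where one uses that the flat wall's apices attach only on $C$, or more simply that $V^{\top}$ is disjoint from $\image(\tilde\varphi_H)$ so the issue does not arise at all). Once these index checks are dispatched, the proof is a short chain of ``subgraph $\Rightarrow$ folio-preserved'' plus one application of the re-expansion observation.
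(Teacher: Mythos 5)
Your proposal is correct and is essentially the argument the paper intends: the paper states that the observation "follows from \autoref{def:vacantFrame}," and your index check (row $q-\ell$ is excluded from the allowed set $\{q-\ell+d+1,\dots\}\cup\{q+1+\ell-d-\alpha,\dots\}$ since $d\ge 1$ and $d+\alpha<\ell$) plus the trivial subgraph direction is exactly that. The only remark is that your re-expansion step via \autoref{obs:cflatmore} is not needed — since $V^{\top}\subseteq V(\tilde G)=V(G_0)$ and the representation only contracts subpaths whose internal vertices lie in $V(G_i)\setminus V(G_0)$, the original realization already avoids $V^{\top}$ whenever its representation does — but including it is harmless.
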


\begin{figure}[t]
\begin{center}
\begin{tikzpicture}[scale=0.8]
\draw [fill=green, green] (-4,0.25) rectangle (0,1.5);
\draw [fill=green, green] (-4,4.25) rectangle (0,3);

\draw [fill=green, green] (4.5,0.25) rectangle (8.5,1.5);
\draw [fill=green, green] (4.5,4.25) rectangle (8.5,3);

\draw[red] (-5,-5)--(9.5,-5)--(9.5,9.5)--(-5,9.5)--(-5,-5); 
\draw[blue] (-4,-4)--(8.5,-4)--(8.5,8.5)--(-4,8.5)--(-4,-4); 
\draw[red] (-3,-3)--(7.5,-3)--(7.5,7.5)--(-3,7.5)--(-3,-3); 
\draw[red] (-2,-2)--(6.5,-2)--(6.5,6.5)--(-2,6.5)--(-2,-2); 

\draw[blue] (0,4.5)--(0,0)--(4.5,0)--(4.5,4.5)--(0,4.5);

\node[] (a1) at (2,10) {$\fr[q-1]$}; 
\node[] (a1) at (2,4.65) {$\fr[\ell-d]$}; 
\node[] (a1) at (2,8.65) {$\fr[\ell+d]$}; 
\node[] (a1) at (2,6.65) {$\fr[\ell]$}; 

\node[] (a1) at (2,7.65) {$\fr[\ell+d_2]$};

\draw[<->] (1.5,0)--(1.5,-2); 
\draw[<->] (2.5,-2)--(2.5,-4); 

\draw[<->] (3.9,-2)--(3.9,-3);

\node[] (a1) at (1.8,-1) {$d$}; 
\node[] (a1) at (2.8,-3) {$d$}; 

\node[] (a1) at (4.2,-2.5) {$d_2$};

\draw[<->] (5.5,3)--(5.5,4.25); 
\node[] (a1) at (5.8,3.65) {$\alpha$};

\node[] (a1) at (-5,9.5) {$\bullet$}; 
\node[] (a1) at (-4.5,9.5) {$\bullet$}; 
\node[] (a1) at (-4,9.5) {$\bullet$}; 
\node[] (a1) at (-3.5,9.5) {$\bullet$}; 
\node[] (a1) at (-3,9.5) {$\bullet$}; 
\node[] (a1) at (-2.5,9.5) {$\bullet$}; 
\node[] (a1) at (-2,9.5) {$\bullet$}; 
\node[] (a1) at (-1.5,9.5) {$\bullet$}; 
\node[] (a1) at (-1,9.5) {$\bullet$}; 
\node[] (a1) at (-0.5,9.5) {$\bullet$}; 
\node[] (a1) at (0,9.5) {$\bullet$}; 
\node[] (a1) at (0.5,9.5) {$\bullet$}; 
\node[] (a1) at (1,9.5) {$\bullet$}; 
\node[] (a1) at (1.5,9.5) {$\bullet$}; 
\node[] (a1) at (2,9.5) {$\bullet$}; 
\node[] (a1) at (2.5,9.5) {$\bullet$}; 
\node[] (a1) at (3,9.5) {$\bullet$};
\node[] (a1) at (3.5,9.5) {$\bullet$};  
\node[] (a1) at (4,9.5) {$\bullet$};
\node[] (a1) at (4.5,9.5) {$\bullet$}; 
\node[] (a1) at (5,9.5) {$\bullet$};
\node[] (a1) at (5.5,9.5) {$\bullet$};  
\node[] (a1) at (6,9.5) {$\bullet$};
\node[] (a1) at (6.5,9.5) {$\bullet$};  
\node[] (a1) at (7,9.5) {$\bullet$};
\node[] (a1) at (7.5,9.5) {$\bullet$};  
\node[] (a1) at (8,9.5) {$\bullet$};
\node[] (a1) at (8.5,9.5) {$\bullet$};  
\node[] (a1) at (9,9.5) {$\bullet$};
\node[] (a1) at (9.5,9.5) {$\bullet$};

\node[] (a1) at (-5,-5) {$\bullet$};
\node[] (a1) at (-5,-4.5) {$\bullet$};
\node[] (a1) at (-5,-4) {$\bullet$};
\node[] (a1) at (-5,-3.5) {$\bullet$};
\node[] (a1) at (-5,-3) {$\bullet$};
\node[] (a1) at (-5,-2.5) {$\bullet$};
\node[] (a1) at (-5,-2) {$\bullet$};
\node[] (a1) at (-5,-1.5) {$\bullet$};
\node[] (a1) at (-5,-1) {$\bullet$};
\node[] (a1) at (-5,-0.5) {$\bullet$};
\node[] (a1) at (-5,0) {$\bullet$};
\node[] (a1) at (-5,0.5) {$\bullet$};
\node[] (a1) at (-5,1) {$\bullet$};
\node[] (a1) at (-5,1.5) {$\bullet$};
\node[] (a1) at (-5,2) {$\bullet$};
\node[] (a1) at (-5,2.5) {$\bullet$};
\node[] (a1) at (-5,3) {$\bullet$};
\node[] (a1) at (-5,3.5) {$\bullet$};
\node[] (a1) at (-5,4) {$\bullet$};
\node[] (a1) at (-5,4.5) {$\bullet$};
\node[] (a1) at (-5,5) {$\bullet$};
\node[] (a1) at (-5,5.5) {$\bullet$};
\node[] (a1) at (-5,6) {$\bullet$};
\node[] (a1) at (-5,6.5) {$\bullet$};
\node[] (a1) at (-5,7) {$\bullet$};
\node[] (a1) at (-5,7.5) {$\bullet$};
\node[] (a1) at (-5,8) {$\bullet$};
\node[] (a1) at (-5,8.5) {$\bullet$};
\node[] (a1) at (-5,9) {$\bullet$};

\node[] (a1) at (9.5,-5) {$\bullet$};
\node[] (a1) at (9.5,-4.5) {$\bullet$};
\node[] (a1) at (9.5,-4) {$\bullet$};
\node[] (a1) at (9.5,-3.5) {$\bullet$};
\node[] (a1) at (9.5,-3) {$\bullet$};
\node[] (a1) at (9.5,-2.5) {$\bullet$};
\node[] (a1) at (9.5,-2) {$\bullet$};
\node[] (a1) at (9.5,-1.5) {$\bullet$};
\node[] (a1) at (9.5,-1) {$\bullet$};
\node[] (a1) at (9.5,-0.5) {$\bullet$};
\node[] (a1) at (9.5,0) {$\bullet$};
\node[] (a1) at (9.5,0.5) {$\bullet$};
\node[] (a1) at (9.5,1) {$\bullet$};
\node[] (a1) at (9.5,1.5) {$\bullet$};
\node[] (a1) at (9.5,2) {$\bullet$};
\node[] (a1) at (9.5,2.5) {$\bullet$};
\node[] (a1) at (9.5,3) {$\bullet$};
\node[] (a1) at (9.5,3.5) {$\bullet$};
\node[] (a1) at (9.5,4) {$\bullet$};
\node[] (a1) at (9.5,4.5) {$\bullet$};
\node[] (a1) at (9.5,5) {$\bullet$};
\node[] (a1) at (9.5,5.5) {$\bullet$};
\node[] (a1) at (9.5,6) {$\bullet$};
\node[] (a1) at (9.5,6.5) {$\bullet$};
\node[] (a1) at (9.5,7) {$\bullet$};
\node[] (a1) at (9.5,7.5) {$\bullet$};
\node[] (a1) at (9.5,8) {$\bullet$};
\node[] (a1) at (9.5,8.5) {$\bullet$};
\node[] (a1) at (9.5,9) {$\bullet$};

\node[] (a1) at (-4.5,-5) {$\bullet$}; 
\node[] (a1) at (-4,-5) {$\bullet$}; 
\node[] (a1) at (-3.5,-5) {$\bullet$}; 
\node[] (a1) at (-3,-5) {$\bullet$}; 
\node[] (a1) at (-2.5,-5) {$\bullet$}; 
\node[] (a1) at (-2,-5) {$\bullet$}; 
\node[] (a1) at (-1.5,-5) {$\bullet$}; 
\node[] (a1) at (-1,-5) {$\bullet$}; 
\node[] (a1) at (-0.5,-5) {$\bullet$}; 
\node[] (a1) at (0,-5) {$\bullet$}; 
\node[] (a1) at (0.5,-5) {$\bullet$}; 
\node[] (a1) at (1,-5) {$\bullet$}; 
\node[] (a1) at (1.5,-5) {$\bullet$}; 
\node[] (a1) at (2,-5) {$\bullet$}; 
\node[] (a1) at (2.5,-5) {$\bullet$}; 
\node[] (a1) at (3,-5) {$\bullet$};
\node[] (a1) at (3.5,-5) {$\bullet$};  
\node[] (a1) at (4,-5) {$\bullet$};
\node[] (a1) at (4.5,-5) {$\bullet$}; 
\node[] (a1) at (5,-5) {$\bullet$};
\node[] (a1) at (5.5,-5) {$\bullet$};  
\node[] (a1) at (6,-5) {$\bullet$};
\node[] (a1) at (6.5,-5) {$\bullet$};  
\node[] (a1) at (7,-5) {$\bullet$};
\node[] (a1) at (7.5,-5) {$\bullet$};  
\node[] (a1) at (8,-5) {$\bullet$};
\node[] (a1) at (8.5,-5) {$\bullet$};  
\node[] (a1) at (9,-5) {$\bullet$};

\node[] (a1) at (1,1) {$\bullet$}; 
\node[] (a1) at (1.5,1) {$\bullet$}; 
\node[] (a1) at (2,1) {$\bullet$}; 
\node[] (a1) at (2.5,1) {$\bullet$}; 
\node[] (a1) at (3,1) {$\bullet$};
\node[] (a1) at (3.5,1) {$\bullet$};  

\node[] (a1) at (1,1.5) {$\bullet$}; 
\node[] (a1) at (1.5,1.5) {$\bullet$}; 
\node[] (a1) at (2,1.5) {$\bullet$}; 
\node[] (a1) at (2.5,1.5) {$\bullet$}; 
\node[] (a1) at (3,1.5) {$\bullet$};
\node[] (a1) at (3.5,1.5) {$\bullet$};  

\node[] (a1) at (1,2) {$\bullet$}; 
\node[] (a1) at (1.5,2) {$\bullet$}; 
\node[] (a1) at (2,2) {$\bullet$}; 
\node[] (a1) at (2.5,2) {$\bullet$}; 
\node[] (a1) at (3,2) {$\bullet$};
\node[] (a1) at (3.5,2) {$\bullet$};  

\node[] (a1) at (1,2.5) {$\bullet$}; 
\node[] (a1) at (1.5,2.5) {$\bullet$}; 
\node[] (a1) at (2,2.5) {$\bullet$}; 
\node[] (a1) at (2.5,2.5) {$\bullet$}; 
\node[] (a1) at (3,2.5) {$\bullet$};
\node[] (a1) at (3.5,2.5) {$\bullet$};  

\node[] (a1) at (1,3) {$\bullet$}; 
\node[] (a1) at (1.5,3) {$\bullet$}; 
\node[] (a1) at (2,3) {$\bullet$}; 
\node[] (a1) at (2.5,3) {$\bullet$}; 
\node[] (a1) at (3,3) {$\bullet$};
\node[] (a1) at (3.5,3) {$\bullet$};  

\node[] (a1) at (1,3.5) {$\bullet$}; 
\node[] (a1) at (1.5,3.5) {$\bullet$}; 
\node[] (a1) at (2,3.5) {$\bullet$}; 
\node[] (a1) at (2.5,3.5) {$\bullet$}; 
\node[] (a1) at (3,3.5) {$\bullet$};
\node[] (a1) at (3.5,3.5) {$\bullet$};  

\draw plot [smooth] coordinates {(-4,4) (-2.3,3.2) (-2.3,3.6) (-3,3.8)(0,4)};
\node[] (a1) at (-3.5,4) {$P_1$};  
\draw plot [smooth] coordinates {(-4,1) (-3.2,1.2) (-4,0.35)};
\node[] (a1) at (-3.5,0.3) {$P_2$};  

\draw plot [smooth] coordinates {(4.5,1) (7.3,1.2) (4.5,0.35)};
\node[] (a1) at (5,0.3) {$P_3$};

\end{tikzpicture}
\end{center}
\caption{The frames between two blue colored frames is the union of $\bigcup_{\ell-d\leq j\leq \ell+d}\fr[j]$. If a solution ${\cal S}$ is $(\ell,d)$-terminal free then any vertex in this region is not a terminal w.r.t.~${\cal S}$. If ${\cal S}$ is $(\ell,d,\alpha)$-vacant, then vertices used by ${\cal S}$ from this region are only from the green colored portion of it. In a solution ${\cal S}$, $(i)$ $P_1$ is an example of an  $(\ell,d)$-up left segment which is both interior and exterior, $(ii)$ $P_2$ is an example of an $(\ell,d)$-exterior down left segment, and $(iii)$ $P_3$ is an example of an $(\ell,d)$-interior down right segment. Here, $P_1$ is $(\ell,d,d_2)$-crossing because $P_1$ hits $\fr[\ell+d_1]$. However, $P_2$ is not an $(\ell,d,d_2)$-crossing segment.}
\label{fig:frametypes}
\end{figure}

Even for solutions which are $(\ell,d)$-terminal free and $(\ell,d,\alpha)$-vacant, there are solutions which will 
interact with $\fr[\ell]$ {\em many} times. To overcome this, we define the notion of {\em few crossings} 
by solutions on $\fr[\ell]$. Towards that we need to classify the {\em segments} of paths in the solution.

\begin{definition}[{\bf Annulus Segment}]\label{def:annSegment}
Let $(G,\delta,t,w',s')$ be an instance of \fFindFoliostar.
Let ${\cal S}$
 be a solution 
and $\tilde{\cal S}=\{(H,\tilde{\phi}_H,\tilde{\varphi}_H) : H \mbox{ in the $\delta^{\star}$-folio of } G\}$  be the representation of it.  
Let $(M,{\cal N})$ be a $2q$-workspace of $\tilde{G}$, $\ell\in[q-1]_0$ and $d\in[\ell]$ where $\ell+d< q$.
A path $P$ in $G$ is an {\em $(\ell,d)$-exterior segment} (resp.~{\em $(\ell,d)$-interior segment}) of ${\cal S}$ if the following conditions are satisfied.
\vspace{-0.5em}
\begin{itemize}
\itemsep0em 
\item $P$ is  a subpath of $\tilde{\varphi}_H(e)$ for some $H \mbox{ in the  $\delta^{\star}$-folio of } G$ and $e\in E(H)$.  
\item $V(P)\subseteq V(\tilde{G})$.  
\item Each endpoint of $P$ belongs to $\inNoose_{\tilde{G}}(N)$ for some $N\in \fr[\ell-d]\cup\fr[\ell+d]$, and at least one endpoint of $P$ belongs to $\inNoose_{\tilde{G}}(N)$ for some $N\in \fr[\ell+d]$ (resp.~$N\in \fr[\ell-d]$).
\item Each internal vertex on $P$ belongs to $\inNoose_{\tilde{G}}(N)$ for some $N\in \fr[i]$ where $i\in\{\ell-d+1,\ldots,\ell+d-1\}$.\footnote{Different vertices may be enclosed by different nooses.}
\end{itemize}
We also say that $P$ is an {\em $(\ell,d)$-exterior segment} (resp.~{\em $(\ell,d)$-interior segment}) of $(H,\tilde{\phi},\tilde{\varphi})$. 
Furthermore, 
\vspace{-0.5em}
\begin{enumerate}
\itemsep0em 
\item {\bf Up-Left.} $P$ is called up-left segment if  $V(P)\subseteq \bigcup_{i,j\leq q}\inNoose_{\tilde{G}}(N_{i,j})$,
\item {\bf Up-Right.}  $P$ is called up-right segment if  $V(P)\subseteq \bigcup_{i\leq q, j>q}\inNoose_{\tilde{G}}(N_{i,j})$,
\item {\bf Down-Left.} $P$ is called down-left segment if  $V(P)\subseteq \bigcup_{i>q,j\leq q}\inNoose_{\tilde{G}}(N_{i,j})$, and
\item {\bf Down-Right.} $P$ is called down-right segment if  $V(P)\subseteq \bigcup_{i,j> q}\inNoose_{\tilde{G}}(N_{i,j})$.
\end{enumerate}
\end{definition}

See \autoref{fig:frametypes}, for an illustration of $(\ell,d)$-segments. Clearly in an $(\ell,d,\alpha)$-vacant solution, any $(\ell,d)$-$x$ segment, where $x\in \{$interior, exterior$\}$, is up-left or up-right or down-left or down-right segment, as formalized in the following observation. 

%
\begin{observation}
\label{obs:ulrb}
Let $(G,\delta,t,w',s')$ be an instance of \fFindFoliostar.
Let $(M,{\cal N})$ be a $2q$-workspace in $\tilde{G}$, $\ell\in[q-1]_0$, $d,\alpha \in[\ell]$ where $\ell+d< q$ and $d+\alpha <\ell$. 
%
Let ${\cal S}$ be an $(\ell,d,\alpha)$-vacant solution and $P$ be an $(\ell,d)$-$x$ segment, where $x\in \{$interior, exterior$\}$. Then 
$P$ is either up-left or up-right or down-left or down-right segment. 
\end{observation}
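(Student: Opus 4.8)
\textbf{Plan for the proof of Observation~\ref{obs:ulrb}.}
The statement is essentially a bookkeeping consequence of the definitions, so the plan is to unwind them carefully. Fix an $(\ell,d,\alpha)$-vacant solution ${\cal S}$, let $\tilde{\cal S}=\{(H,\tilde{\phi}_H,\tilde{\varphi}_H)\}$ be its representation, and let $P$ be an $(\ell,d)$-$x$ segment with $x\in\{\text{interior},\text{exterior}\}$. By \autoref{def:annSegment}, $V(P)\subseteq V(\tilde{G})$, every internal vertex of $P$ lies in some noose $N_{i,j}\in\fr[t]$ with $t\in\{\ell-d+1,\ldots,\ell+d-1\}$, and each endpoint lies in some noose of $\fr[\ell-d]\cup\fr[\ell+d]$. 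Thus \emph{every} vertex of $P$ lies in some noose $N_{i,j}$ with $t\in\{\ell-d,\ldots,\ell+d\}$, i.e.~$P$ is contained in $\bigcup_{\ell-d\le t\le \ell+d}\fr[t]$.

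First I would observe that since $P$ is a subpath of $\tilde{\varphi}_H(e)$ contained in $V(\tilde{G})$, every vertex of $P$ lies in the image of $\tilde{\varphi}_H$. Now apply the vacancy hypothesis: by \autoref{def:vacantFrame}, for every vertex $v$ of $P$ belonging to $\inNoose(N_{i,j})$ with $N_{i,j}\in\bigcup_{\ell-d\le t\le \ell+d}\fr[t]$, the row index satisfies $i\in\{q-\ell+d+1,\ldots,q-\ell+d+\alpha\}\cup\{q+1+\ell-d-\alpha,\ldots,q+\ell-d\}$. In particular, because $d+\alpha<\ell$, neither of these two index ranges meets the ``middle band'' of rows, and more importantly neither range contains any index equal to $q$ or $q+1$: the first range sits strictly above row $q+1$ only if $q-\ell+d+\alpha<q$, which holds since $\alpha+d<\ell$; the second range sits strictly below row $q$ only if $q+1+\ell-d-\alpha>q+1$, again since $\ell>d+\alpha$. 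So every vertex of $P$ lies in a noose $N_{i,j}$ with either $i\le q-\ell+d+\alpha<q$ (hence $i\le q$) or $i\ge q+1+\ell-d-\alpha>q+1$ (hence $i\ge q+1>q$).

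Next I would note that $P$ is \emph{connected} and the two row-index blocks $\{i\le q\}$ and $\{i>q\}$ are separated in the noose grid: by \autoref{obs:workAdj}, two vertices lying in nooses $N_{i,j}$ and $N_{i',j'}$ that are joined by an edge of $\tilde{G}$ must have $|i-i'|\le 1$; since no vertex of $P$ lies in a noose with row index $q$ or $q+1$, no edge of $P$ can jump from a noose with $i\le q-\ell+d+\alpha$ to a noose with $i\ge q+1+\ell-d-\alpha$ (the gap between these blocks is at least $\ell-d-\alpha+1\ge 2$). Hence all vertices of $P$ have their row index in the ``upper'' block $\{i\le q\}$, or all in the ``lower'' block $\{i>q\}$. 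Running the identical argument with columns in place of rows (using the symmetric part of \autoref{def:vacantFrame} — the vacancy condition is stated on the row index, but the frame $\fr[\ell]$ is symmetric under the $90^\circ$ rotation of the grid, so the same conclusion applies to the column index of vertices on the left/right nooses; alternatively one invokes the analogous column version of the vacancy bound), I get that all vertices of $P$ have column index $\le q$ or all have column index $>q$. Combining the two dichotomies yields exactly the four cases of \autoref{def:annSegment}: $P$ is up-left, up-right, down-left, or down-right.

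\textbf{Main obstacle.} The only genuinely delicate point is the column-versus-row asymmetry in \autoref{def:vacantFrame}: the vacancy condition is phrased only in terms of the row index $i$, so to bound the column index of vertices lying on left- or right-nooses one must either invoke a symmetric restatement of the vacancy property or argue directly from the grid geometry that a vacant solution, when restricted to $\fr[\ell]$, avoids the two ``middle columns'' $q,q+1$ in the same way it avoids the middle rows. I expect this to be handled by the earlier convention (implicit in \autoref{fig:frametypes}) that vacancy is understood componentwise on all four sides of the frame, so the argument is a straightforward symmetry invocation rather than new work; but it is the step that needs an explicit sentence rather than being swept under ``clearly''.
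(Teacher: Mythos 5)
Your row-index argument is exactly the paper's: vacancy confines every vertex of $P$ to a noose whose row index lies in $U=\{q-\ell+d+1,\ldots,q-\ell+d+\alpha\}$ or $L=\{q+1+\ell-d-\alpha,\ldots,q+\ell-d\}$, both blocks miss rows $q$ and $q+1$ because $d+\alpha<\ell$, and connectivity of $P$ in the noose grid forces it entirely into one block. That half is fine.

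The column dichotomy is where you have a genuine gap. Your primary fix --- invoking a ``symmetric part'' of \autoref{def:vacantFrame} or a $90^\circ$ rotation --- does not work, because the vacancy condition as defined is genuinely asymmetric: it constrains only the row index $i$, and there is no column analogue to invoke. The correct argument (which is what the paper's terse introduction of the set $J=\{q-\ell+d+1,\ldots,q+\ell-d\}$ is doing) is the following. Every vertex of $P$ lies in a noose of $\fr[t]$ for some $t\in\{\ell-d,\ldots,\ell+d\}$, and a noose of $\fr[t]$ is either an up-/down-noose (row index $q-t$ or $q+t+1$) or a left-/right-noose (column index $q-t$ or $q+t+1$). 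The up-rows of these frames are $\{q-\ell-d,\ldots,q-\ell+d\}$ and the down-rows are $\{q+\ell-d+1,\ldots,q+\ell+d+1\}$, and both sets are disjoint from $U\cup L$ (e.g.\ $U$ starts at $q-\ell+d+1$, one past the innermost up-row). Hence the row constraint alone forces every vertex of $P$ onto a \emph{left- or right-noose}, whose column index is automatically $q-t\leq q-\ell+d<q$ or $q+t+1\geq q+\ell-d+1>q+1$. Connectivity then gives the column dichotomy exactly as in the row case. So the missing step is not a symmetry of the vacancy definition but a consequence of the row constraint combined with the geometry of a frame; you flagged the right obstacle but did not supply the resolution, and the resolution you sketched would require a definition the paper does not have.
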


\begin{proof}
Let $U=\{q-\ell+d+1,\ldots,q-\ell+d+\alpha\}$ and $L=\{q+1+\ell-d-\alpha,\ldots,q+\ell-d\}$.
Let $P$ be an $(\ell,d)$-$x$ segment, where $x\in \{\mbox{exterior, interior}\}$. 
We know that for any vertex $v\in V(P)$,  $v$ belongs to $\inNoose_{\tilde{G}}(N_{i,j})$ for some $N_{i,j}\in \fr[i']$ where $i'\in\{\ell-d,\ldots,\ell+d\}$ and $i\in U\cup L$ (because ${\cal S}$ is {\em $(\ell,d,\alpha)$-vacant}). 
Since $d+\alpha<\ell$ (by assumption), we have that the largest index $z$ in $U$ is strictly less than $q$, 
the smallest index $z'$ in $L$ is strictly more than $q$.   Also, since $P$ is a path, which is connected, 
we get that $(a)$ either $V(P)\subseteq \bigcup_{i<q,j\in [2q]}\inNoose_{\tilde{G}}(N_ij)$ or  
$V(P)\subseteq \bigcup_{i>q,j\in [2q]}\inNoose_{\tilde{G}}(N_ij)$.  
Consider the set $J=\{q-\ell+d+1,\ldots,q+\ell-d\}$, where the smallest index is strictly less than $q$ 
(because $d<\ell-\alpha\leq \ell-1$, by assumption) and the largest index is strictly more than $q$ (because $\ell-d>\alpha\geq1$, 
by assumption).  
 Also, since $P$  is connected, 
we get that $(b)$ either $V(P)\subseteq \bigcup_{i\in [2q],j<q}\inNoose_{\tilde{G}}(N_ij)$ or  
$V(P)\subseteq \bigcup_{i\in [2q], j>q}\inNoose_{\tilde{G}}(N_ij)$.
Now the observation follows from statements $(a)$ and $(b)$. 
\end{proof}

In order to define  solutions behaving nicely on some frames we want to have the number of segments that cross some frames to be as few as possible.

\begin{definition}[{\bf Crossing Segment}]\label{def:crossingPath}
Let $(G,\delta,t,w',s')$ be an instance of \fFindFoliostar, 
and let ${\cal S}$ be a solution. Let $(M,{\cal N})$ be a $2q$-workspace of $\tilde{G}$, $\ell\in[q-1]_0$ and $d_1,d_2\in[\ell]$ where $\ell+d_1< q$ and $d_2<d_1$. An $(\ell,d_1)$-exterior $y$ segment (resp.~$(\ell,d_1)$-interior $y$ segment) $P$ of ${\cal S}$, 
where $y\in\{$up-left, up-right, down-left, down-right$\}$,   
 is called an {\em $(\ell,d_1,d_2)$-exterior $y$ crossing} (resp.~{\em $(\ell,d_1,d_2)$-interior $y$ crossing}),  if $P$ contains at least one vertex from $\inNoose_{\tilde{G}}(N)$ for some $N\in \fr[\ell+d_2]$ (resp.~$N\in \fr[\ell-d_2]$). For all $x\in\{$interior, exterior$\}$ and $y\in\{$up-left, up-right, down-left, down-right$\}$, an $(\ell,d_1,d_2)$-$x$ $y$ crossing $P$ is also called an {\em $(\ell,d_1,d_2)$-crossing}. We also call $P$ an $(\ell,d_1,d_2)$-crossing of $(H,\tilde{\phi},\tilde{\varphi})$ (where $(H,\tilde{\phi},\tilde{\varphi})$ is a tuple in the representation of  ${\cal S}$) if $P$ is a subpath of $\tilde{\varphi}(e)$ for some $e\in E(H)$.
\end{definition}


\begin{definition}[{\bf Few-Crossings Frame}]\label{def:fewCrossFrame}
Let $(G,\delta,t,w',s')$ be an instance of \fFindFoliostar\ and
let ${\cal S}$
 be a solution.
Let $(M,{\cal N})$ be a $2q$-workspace of $\tilde{G}$, $\ell\in[q-1]_0$, $d_1,d_2\in[\ell]$ where $\ell+d_1< q$ and $d_2<d_1$ and $\beta\in\mathbb{N}$. The solution ${\cal S}$ has {\em $(\ell,d_1,d_2,\beta)$-few crossings} if 
for any $(H,\tilde{\phi},\tilde{\varphi})$ in the representation of ${\cal S}$, 
it does not have more than $\beta$ $(\ell,d_1,d_2)$-$x$ $y$ crossings of $(H,\tilde{\phi},\tilde{\varphi})$ 
for every $x\in\{$exterior,interior$\}$ and $y\in\{$up-left, up-right, down-left, down-right$\}$ (individually). 
\end{definition}

Now we define oriented rows and columns in a workspace, which we use shortly in a proof; this definition will
be useful later throughout the paper.

\begin{definition}[{\bf Oriented Column/Row}]\label{def:orientedColumn}
Let $(M,{\cal N})$ be a $[2q]$-workspace of a plane graph $\tilde{G}$. For all $j\in[2q]$, a path $P$ in $\tilde{G}$ is a {\em $j$-column} of $(M,{\cal N})$ if the following conditions are satisfied.
\vspace{-0.5em}
\begin{itemize}
\itemsep0em 
\item For every vertex $v\in V(P)$, there exists $N_{i,j}\in{\cal N}$ for some $i\in[2q]$ such that $v\in\inNoose_{\tilde{G}}(N_{i,j})$.
\item For every $i\in[2q-1]$, there exists exactly one edge $\{u,v\}\in E(P)$ such that $u\in\inNoose_{\tilde{G}}(N_{i,j})$ and $v\in\inNoose_{\tilde{G}}(N_{i+1,j})$.
\end{itemize}
Furthermore, we call $P$ a {\em $j$-oriented column} 
to imply that the start-vertex of $P$ is in  $\inNoose_{\tilde{G}}(N_{1,j})$ and the end-vertex is in $\inNoose_{\tilde{G}}(N_{2q,j})$. 

A {\em $j$-row} and a {\em $j$-oriented row}, for any $j\in [2q]$, are defined analogously. 
\end{definition}

\begin{lemma}
\label{lemma:pathintersect}
Let $(G,\delta,t,w',s')$ be an instance of \fFindFoliostar.
Let $(M,{\cal N})$ be a $2q$-workspace of $\tilde{G}$, $\ell\in[q-1]_0$, $d_1,d_2,\alpha\in[\ell]$ where $\ell+d_1< q$, $d_2<d_1$ 
and $d_1+\alpha < \ell$. 
Let ${\cal S}$ 
 be a $(\ell,d_1,\alpha)$-vacant solution and let  
$P$ be an {\em $(\ell,d_1,d_2)$-$x$  $y$ crossing} of ${\cal S}$, where  
$x\in\{$exterior,interior$\}$ and $y\in\{$up-left, up-right, down-left, down-right$\}$. 
For all $j\in [2q]$, let $P_{j}$ be a $j$-oriented column.  Then, 
\begin{itemize}
\item[(a)] if $x=$exterior and $y\in \{$up-left,down-left$\}$, then $P$ intersects with $P_{q-\ell-d_1+1}, \ldots, P_{q-\ell-d_2-1}$, 
\item[(b)] if $x=$interior and $y\in \{$up-left,down-left$\}$, then $P$ intersects with $P_{q-\ell+d_2+1}, \ldots, P_{q-\ell+d_1-1}$,
\item[(c)]  if $x=$exterior and $y\in \{$up-right,down-right$\}$, then $P$ intersects with $P_{q+\ell+d_2+2}, \ldots, P_{q+\ell+d_1}$, and 
\item[(d)] if $x=$interior and $y\in \{$up-right,down-right$\}$, then $P$ intersects with $P_{q+\ell-d_1+2}, \ldots, P_{q+\ell-d_2}$. 
\end{itemize}
\end{lemma}

\begin{proof}
Here, we prove condition $(a)$. The proofs of all other cases are using arguments similar to 
the proof of condition $(a)$ and hence omitted. 
Let ${\cal A}=\{N_{i,j}\in {\cal N}:i\in \{q-\ell+d_1+1, \ldots,q-\ell+d_1+\alpha\}, j \in \{q-\ell-d_1,\ldots,q-\ell+d_2\}\}$ 
and $U=\bigcup_{N\in {\cal A}} \inNoose_{\tilde{G}}(N) \cap V(\tilde{G})$.  For any $i\in \{q-\ell-d_1+1,\ldots, q-\ell-d_2-1\}$, 
$V(P_i)$ is an $(s,t)$-separator for any $s\in \bigcup_{i\in [2q],j=q-\ell-d_1}\inNoose_{\tilde{G}}(N)$ 
and $t\in \bigcup_{i\in [2q],j=q-\ell-d_2}\inNoose_{\tilde{G}}(N)$. 
Any {\em $(\ell,d_1,d_2)$-exterior  $y$ crossing} $P$ of ${\cal S}$, where $y\in \{$up-left,down-left$\}$, is a path 
(which is connected subgraph) with at least one vertex in $\bigcup_{i\in [2q],j=q-\ell-d_1}\inNoose_{\tilde{G}}(N)$ 
and at least one vertex in $\bigcup_{i\in [2q],j=q-\ell-d_2}\inNoose_{\tilde{G}}(N)$. 
Hence, condition $(a)$ follows. 
\end{proof}


We require the following auxiliary result to prove the main lemma (\autoref{lem:fewCrossingsFrame}) of the subsection.  



\begin{lemma}
\label{lem:flowgrid}
Let $G$ be a graph and $t\in {\mathbb N}$. 
Let $\{P_1,\ldots,P_{6t}\}$ and $\{P'_1,\ldots,P'_{6t}\}$ be two sets of vertex disjoint paths 
such that for any $i,j\in [6t]$, $V(P_i)\cap V(P_j')\neq \emptyset$ and $H=P_1\cup\ldots P_{6t}\cup P_1'\cup\ldots\cup P_{6t}'$
be a planar graph.  Then there is a $t\times t$ grid minor in $H$.   
\end{lemma}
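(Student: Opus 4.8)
The statement is a classical ``grid-from-two-families-of-crossing-paths'' fact, and the cleanest route is to reduce it to the planar grid theorem. First I would form the intersection graph $B$ of the two path families: $B$ is bipartite with one side $\{P_1,\dots,P_{6t}\}$ and the other side $\{P'_1,\dots,P'_{6t}\}$, and every vertex of one side is adjacent to every vertex of the other side, so $B$ is the complete bipartite graph $K_{6t,6t}$, which contains $K_{3t,3t}$ and in particular $K_{t,t}$. The heuristic is that a ``crossing pattern'' of two families of disjoint paths in a planar graph behaves like a grid, so one expects $H$ to contain a large grid minor. The subtlety is that the intersection graph being complete bipartite does \emph{not} by itself force a grid minor --- one must use planarity of $H$ to control how the paths cross each other.

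Concretely, the plan is as follows. Contract each path $P_i$ to a single vertex $p_i$ and each path $P'_j$ to a single vertex $p'_j$; since the $P_i$ are pairwise vertex-disjoint and the $P'_j$ are pairwise vertex-disjoint, we must be slightly careful: a vertex of $G$ may lie on some $P_i$ and some $P'_j$ simultaneously, so these contractions are not independent. Instead I would argue directly on the plane graph $H$. Fix the planar embedding of $H$. Each path $P_i$ is a simple arc in the plane; along $P_i$, the paths $P'_1,\dots,P'_{6t}$ appear in \emph{some} cyclic-free linear order (reading along $P_i$), and by reversing indices if necessary we may assume the order in which $\{P'_j\}$ first meet $P_1$ is $P'_1,P'_2,\dots,P'_{6t}$. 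The main structural claim, which is where planarity enters, is that one can pass to subfamilies of size $t$ --- say $P_{i_1},\dots,P_{i_t}$ and $P'_{j_1},\dots,P'_{j_t}$ with $i_1<\dots<i_t$, $j_1<\dots<j_t$ --- along which the crossing pattern is ``grid-like'': along each selected $P_{i_a}$ the selected $P'_{j_1},\dots,P'_{j_t}$ are met in the order $P'_{j_1},\dots,P'_{j_t}$ (or its reverse, uniformly). This is a Dilworth / Erd\H{o}s--Szekeres type extraction: the relevant orders on the index set $[6t]$ either all agree or pairwise we can find monotone subsequences, and $6t$ is chosen large enough (using $6t \ge$ a small polynomial, here the factor $6$ comes from iterating Erd\H{o}s--Szekeres-type arguments a bounded number of times, once for the ``up/down'' reversal of each family and once for consistency) that a common monotone subsequence of length $t$ survives.

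Once we have the $t$ selected ``horizontal'' paths $Q_1,\dots,Q_t$ and $t$ selected ``vertical'' paths $Q'_1,\dots,Q'_t$ meeting in a consistent grid-like order, I would exhibit the $t\times t$ grid minor directly. For each $a,b\in[t]$ let $x_{a,b}$ be a vertex in $V(Q_a)\cap V(Q'_b)$ (nonempty by hypothesis); for the branch set $B_{a,b}$ of the $(a,b)$ grid vertex take the subpath of $Q_a$ from $x_{a,b-1}$ (exclusive) to $x_{a,b}$ (inclusive) --- i.e., cut $Q_a$ at its successive intersection points with $Q'_1,\dots,Q'_t$ in order --- unioned with nothing else; then $B_{a,b}$ is connected (a subpath), the $B_{a,b}$ for fixed $a$ are disjoint and in order along $Q_a$, the $B_{a,b}$ for distinct $a$ are disjoint because the $Q_a$ are pairwise disjoint, horizontal adjacency $B_{a,b}\sim B_{a,b+1}$ holds because they are consecutive subpaths of $Q_a$, and vertical adjacency $B_{a,b}\sim B_{a+1,b}$ holds because the subpath of $Q'_b$ between its intersection with $Q_a$ and its intersection with $Q_{a+1}$ connects a vertex of $B_{a,b}$ to a vertex of $B_{a+1,b}$ (here the consistent grid-like order, guaranteed by the extraction step, ensures there is no ``interleaving'' that would make this fail). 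This yields a model of the $t\times t$ grid in $H$.

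\textbf{Main obstacle.} The routine parts are the final grid-construction bookkeeping and the counting of how large the two families must be. The genuinely delicate step is the planar extraction: showing that from $6t$ disjoint horizontal paths and $6t$ disjoint vertical paths that pairwise cross, planarity lets us select $t$ of each whose crossing pattern is monotone/consistent (no ``braiding''). The clean way is to observe that for any three of the horizontal paths, the cyclic order in which they cross a fixed vertical path is constrained by planarity (two disjoint arcs crossing a third arc cannot ``swap'' without creating a crossing), reduce this to the statement that the permutation recording how the order along $P_1$ compares to the order along $P_i$ is, up to global reversal, the identity for all $i$, and then apply an Erd\H{o}s--Szekeres/Dilworth argument to pull out a common length-$t$ subsequence; the factor $6$ (rather than $\Theta(t^2)$) is what one has to justify carefully, presumably by noting that planarity forces the pairwise orders to be ``laminar'' enough that only a bounded number of reversals occur, so a single linear-loss extraction suffices. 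I would expect to spend most of the write-up pinning down exactly this consistency lemma.
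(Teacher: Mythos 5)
Your proposal has a genuine gap at exactly the point you flag as delicate: the ``consistency lemma'' asserting that from $6t$ horizontal and $6t$ vertical pairwise-intersecting disjoint paths one can extract $t+t$ paths whose intersection pattern is monotone and grid-like. You never prove this, and it is not a routine Erd\H{o}s--Szekeres application: a naive monotone-subsequence extraction loses a quadratic factor, so $6t$ paths would only yield roughly $\sqrt{6t}$ consistent ones. Your hope that planarity upgrades this to a linear loss is plausible in spirit but is essentially the hard content of the planar excluded-grid theorem itself, and here the situation is worse than for topological crossings: $P_i$ and $P'_j$ intersect as \emph{vertex sets}, possibly in many disconnected pieces, so there is no well-defined single ``crossing point'' and no clean linear order of the $P'_j$ along $P_i$. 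Your final grid-model construction also glosses over a real issue: vertical adjacency needs the connecting subpath of $Q'_b$ between $Q_a$ and $Q_{a+1}$ to be absorbed into a branch set without colliding with the other branch sets, which again requires the unproven non-interleaving property. Finally, your guess about where the factor $6$ comes from is wrong; it has nothing to do with iterated Erd\H{o}s--Szekeres.

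The paper's proof avoids all of this with a two-line argument: the $(6t)^2$ connected subgraphs $P_i\cup P'_j$ form a bramble in $H$ (any two of them share a vertex), and any hitting set of size less than $6t$ misses some $P_i$ entirely and some $P'_j$ entirely, hence misses $P_i\cup P'_j$; so the bramble has order at least $6t$, and by the Seymour--Thomas duality $\tw(H)\ge 6t-1$. Since $H$ is planar, the linear planar excluded-grid theorem (\autoref{lem:planargridtw}: treewidth larger than $6g-5$ forces a $g\times g$ grid minor) immediately yields a $t\times t$ grid minor --- this is precisely where the constant $6$ comes from. If you want to keep your direct-construction route you would need to actually prove the planar consistency/extraction lemma, at which point you would in effect be reproving the planar grid theorem; citing it via the bramble argument is the intended shortcut.
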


\begin{proof} 
The proof is based on the relation between the treewidth and the bramble number 
of a graph. A bramble in a graph $G'$ is a set of connected subgraphs
of $G'$ such that any two of these subgraphs have a nonempty
intersection or are joined by an edge. The order of a bramble is
the least number of vertices required to hit all the subgraphs 
in the bramble. The bramble number of a graph $G'$ is the maximum of the orders of all the brambles of $G'$. 
Seymour and Thomas~\cite{SEYMOUR199322} proved that $(a)$ the bramble number of a graph $G'$ is equal to $\tw(G')+1$.

Consider the graph $H$.
and the set of 
$(6t)^2$ subgraphs  ${\cal H}=\{P_i\cup P'_j  : i,j\in [6t] \}$. 
Since  $V(P_i)\cap V(P_j')\neq \emptyset$ for any $i,j\in [6t]$, we have that  
 all the graphs in ${\cal H}$ are connected subgraphs and they are pairwise intersecting. This implies 
that ${\cal H}$ forms a bramble. Now we claim that at least $6t$ vertices are required 
to hit all the graphs  in ${\cal H}$. Suppose not. Let $S$ be a set of vertices  such that 
$\vert S\vert < 6t$ and $S$ is a hitting set for ${\cal H}$. Since $\vert S\vert <6t$, there exist $i,j\in [6t]$ such that $V(P_i)\cap S =\emptyset$ 
and $V(P'_i)\cap S=\emptyset$. This contradicts the assumption that $S$ is a hitting set 
for  ${\cal H}$. Hence, by statement $(a)$, we conclude that $\tw(H)\geq 6t-1$. 
Thus, by \autoref{lem:planargridtw}, $H$ has a $t\times t$ grid minor.  
\end{proof}

Now we are ready to prove the main lemma of the subsection.  
We prove that if there is an $(\ell,d)$-terminal free solution, then there is an $(\ell,d_1)$-terminal free solution that is also $(\ell,d_1,\alpha)$-vacant and has $(\ell,d_1,d_2,\beta)$-few crossings, for appropriate values $d,d_1,d_2,\alpha$ and $\beta$.

\begin{lemma}\label{lem:fewCrossingsFrame}
Let $(G,\delta,t,w',s')$ be an instance of \fFindFoliostar.
Let $r$ be the constant mentioned in \autoref{cor:wrapped}.
Let $(M,{\cal N})$ be a $2q$-workspace, $\ell\in[q-1]_0$, $d,d_1,d_2,\alpha\in[\ell]$ where $\ell+d< q$, $d_2<d_1<d_1+\alpha< d<\ell$,  
$\alpha>r,d-d_1>r$ and $\beta\in\mathbb{N}$, such that  $d_1-d_2-1>2\beta>24(r+1)$. 
If there is an $(\ell,d)$-terminal free solution ${\cal S}$ to $(G,\delta,t,w',s')$, then $(G,\delta,t,w',s')$ has an $(\ell,d_1)$-terminal free solution that is also $(\ell,d_1,\alpha)$-vacant and has $(\ell,d_1,d_2,\beta)$-few crossings.
\end{lemma}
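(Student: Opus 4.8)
The plan is to transform a given $(\ell,d)$-terminal free solution into progressively ``nicer'' solutions, losing a bit of the frame-margin at each step, until we reach one that is simultaneously $(\ell,d_1)$-terminal free, $(\ell,d_1,\alpha)$-vacant, and has $(\ell,d_1,d_2,\beta)$-few crossings. Since $d_1<d$, shrinking the terminal-free radius from $d$ to $d_1$ is free: any $(\ell,d)$-terminal free solution is automatically $(\ell,d_1)$-terminal free. So the work is entirely in achieving vacancy and few-crossings \emph{while} staying terminal-free, which is why the hypothesis $d_1+\alpha<d<\ell$ is imposed — the ``scratch space'' we use to reroute lies strictly inside the terminal-free annulus $\bigcup_{\ell-d\le i\le \ell+d}\fr[i]$, so rerouting never introduces a terminal into a forbidden region.

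First I would establish \emph{vacancy}. Fix a tuple $(H,\tilde\phi_H,\tilde\varphi_H)$ in the representation of ${\cal S}$. By \autoref{obs:allpresent}, inside the terminal-free annulus every edge of $\tilde{\cal S}$ lies in $E(\tilde G)$, so the portion of the solution inside this annulus is genuinely a planar object drawn in $\tilde G$. The solution uses only $\OO(\delta^\star)$ paths, hence crosses each frame only a bounded number of times in total; but ``bounded'' here depends on $\delta^\star$, not on $\alpha$, so some of the $\fr[i]$ with $\ell-d\le i\le \ell+d$ are entirely untouched. Concretely, since the number of path-segments is $\OO(\delta^\star)$ and we have $\gg \delta^\star$ frames available, there is a sub-annulus of $\alpha$ consecutive frames, at horizontal position within $\{q-\ell+d+1,\dots,q-\ell+d+\alpha\}$ (and symmetrically on the right), that the solution avoids. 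I would then \emph{push} the solution off the horizontal nooses of $\fr[\ell-d_1],\dots,\fr[\ell+d_1]$ into these free columns: using \autoref{cor:wrapped}, if a wrapped noose sequence separates a vertex-set from all terminals, those vertices can be deleted without changing the $\delta^\star$-folio; apply this to the vertices in the horizontal nooses we wish to vacate. The hypotheses $\alpha>r$ and $d-d_1>r$ guarantee we can fit a length-$r$ wrap of concentric cycles (the $r$ from \autoref{cor:wrapped}) around each such region, so \autoref{cor:wrapped} applies and yields a new solution with the same terminal set that avoids those nooses; this is exactly $(\ell,d_1,\alpha)$-vacancy.

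Next I would reduce the number of crossings to $\beta$. Having vacancy, \autoref{obs:ulrb} tells us every $(\ell,d_1)$-interior or exterior segment is of one of the four types up-left/up-right/down-left/down-right. Suppose, for contradiction, that after the above step some $(H,\tilde\phi,\tilde\varphi)$ still has more than $\beta$ crossings of a fixed type, say exterior up-left. Each such crossing is a path from a noose in $\fr[\ell+d_1]$ to a noose in $\fr[\ell-d_1]$ passing through $\fr[\ell+d_2]$, and by \autoref{lemma:pathintersect} each one must intersect all of the oriented columns $P_{q-\ell-d_1+1},\dots,P_{q-\ell-d_2-1}$ — that is $d_1-d_2-1>2\beta$ columns. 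If we also had $2\beta$ such crossings, the crossings together with these columns would, by \autoref{lem:flowgrid} (with $6t\le 2\beta$, so $t\approx \beta/3 > 4(r+1)$ by $2\beta>24(r+1)$), produce a large planar grid minor inside the terminal-free annulus — large enough to host a wrap of $r$ concentric cycles with no terminal inside. Then \autoref{cor:wrapped} lets us delete an interior vertex of that grid, which forces \emph{all} the crossing segments through it to be rerouted; rerouting them within the now-available space (the columns and the vacated frames) strictly decreases the crossing count. Iterating drives the count down to at most $\beta$ for each of the eight (type $\times$ side) combinations, establishing $(\ell,d_1,d_2,\beta)$-few crossings.

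\textbf{Main obstacle.}
The delicate part is the bookkeeping of the simultaneous guarantees: each rerouting step must not destroy vacancy, must not push a terminal into $\bigcup_{\ell-d_1\le i\le\ell+d_1}\fr[i]$, and must not increase the crossing count of a different type. This is why the precise nesting $d_2<d_1<d_1+\alpha<d<\ell$ with the slacks $\alpha>r$, $d-d_1>r$, $d_1-d_2-1>2\beta>24(r+1)$ matters — it reserves enough disjoint ``corridors'' so that vacancy-repair and crossing-repair use genuinely disjoint pieces of the annulus and can be carried out in either order without interference. I expect that the formal proof will invoke \autoref{cor:wrapped} several times, each time verifying that the relevant noose-element pairs are wrapped (i.e.\ admit $r$ concentric cycles with no terminal inside), and that the combinatorial counting via \autoref{lem:flowgrid} and \autoref{lemma:pathintersect} is what converts ``too many crossings'' into ``a grid large enough to apply the irrelevant-vertex lemma''. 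The rest is routine rerouting within a planar disc.
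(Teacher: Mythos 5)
Your vacancy step is essentially the paper's: one applies \autoref{cor:wrapped} to noose--element pairs covering exactly the positions of $\bigcup_{\ell-d_1\le i\le\ell+d_1}\fr[i]$ that must be vacated, with the wraps supplied by the slacks $\alpha>r$ and $d-d_1>r$ inside the terminal-free annulus. (Your pigeonhole remark about finding an untouched sub-annulus is unnecessary and slightly off --- the permitted strips in \autoref{def:vacantFrame} are fixed by the definition, not chosen --- but the mechanism you then invoke is the right one.)

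The few-crossings step, however, has a genuine gap. You propose: if more than $\beta$ crossings of a type survive, build a grid via \autoref{lemma:pathintersect} and \autoref{lem:flowgrid}, delete an interior vertex by \autoref{cor:wrapped}, and claim rerouting ``strictly decreases the crossing count,'' then iterate. But \autoref{cor:wrapped} only guarantees a solution with the same terminal set avoiding the deleted vertices; it gives no control over how the new solution crosses the frames, so there is no monotone progress measure and no termination argument. Moreover, you have not ensured that the deleted vertex lies on a crossing segment rather than on one of the oriented columns; deleting a column vertex accomplishes nothing. The paper sidesteps iteration entirely: it applies \autoref{cor:wrapped} exactly once, to a \emph{maximal} wrapped sequence that begins with the four vacating regions and is then extended greedily by noose--element pairs constrained to avoid a fixed family of pairwise distance-$\ge 2$ (``alternate'') oriented columns $P_j$, $j\in X$. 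Few crossings then follows by contradiction with maximality: if the resulting solution had $\beta+1$ crossings of some type, these together with the surviving alternate columns would yield (via \autoref{lem:flowgrid}, using $d_1-d_2-1>2\beta>24(r+1)$) a $(2r+2)\times(2r+2)$ grid whose innermost cycle contains a vertex off every column --- hence on a crossing segment --- giving a wrapped pair that could have extended the supposedly maximal sequence. Both the maximality device and the alternate-column restriction (which is what guarantees the witness vertex lies on a crossing path and that the columns survive to the final solution) are the key ideas missing from your argument.
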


\begin{proof}
The basic idea of the proof is to define a sequence of noose-element pairs ${\cal F}$ which 
is fully contained in the terminal free region 
$\bigcup_{\ell-d\leq i \leq \ell+d} \fr[i]$ and then using  
\autoref{cor:wrapped}, we prove the required conclusion of the lemma. 
First of all, we fix a $j$-oriented column $P_j$ for all $j\in [2q]$. 
Next we define some sets of indices. 
\begin{eqnarray*}
UI&=&\{q-\ell-d,\ldots,q-\ell+d\} \qquad\qquad\qquad\quad(\emptyset \neq UI\subset {\mathbb N}, \mbox{ because }\ell+d<q) \\
UI_1=LJ&=&\{q-\ell-d_1,\ldots,q-\ell+d_1\}  \qquad\qquad\qquad(\emptyset \neq LJ\subset {\mathbb N}, \mbox{ because }\ell+d_1<q) \\
BI&=&\{q+1+\ell-d,\ldots,q+1+\ell+d\} \qquad\qquad(\emptyset \neq BI\subset {\mathbb N}, \mbox{ because }d<\ell)\\
BI_1=RJ&=&\{q+1+\ell-d_1,\ldots,q+1+\ell+d_1\} \qquad\quad(\emptyset \neq RJ\subset {\mathbb N}, \mbox{ because }d_1<\ell)\\
MI&=&\{q-\ell+d_1+1,\ldots,q+\ell-d_1\}\qquad\quad\qquad(\emptyset \neq MI\subset {\mathbb N}, \mbox{ because }d_1<\ell-1)\\
I_{\alpha}&=&\{q-\ell+d_1+\alpha+1,\ldots,q+\ell-d_1-\alpha\}\qquad(\emptyset \neq I_{\alpha}\subset {\mathbb N}, \mbox{ because }d_1+\alpha<\ell)\\
J&=&\{q-\ell-d,\ldots,q+1+\ell+d\} \qquad\qquad\qquad(\emptyset \neq J\subset {\mathbb N}, \mbox{ because }\ell+d<q)\\
J_1&=&\{q-\ell-d_1,\ldots,q+1+\ell+d_1\} \qquad\qquad\quad(\emptyset \neq J_1\subset {\mathbb N}, \mbox{ because }\ell+d_1<q)
\end{eqnarray*} 

Notice that, since $d_1<d_1+\alpha<d$, we have that $(a)$  $UI_1\subseteq UI, BI_1\subseteq BI, I_{\alpha}\subseteq MI$ and $LJ,RJ\subseteq J_1\subseteq J$. 
Since $d_1<\ell$, we have that $(b)$ $LJ\cap RJ=UI_1\cap BI_1=\emptyset$ , $UI_1\cap I_{\alpha} = \emptyset$, 
$BI_1\cap I_{\alpha} = \emptyset$ and $UI_1\cap BI_1 = \emptyset$. 
Now, we define eight sets of nooses in ${\cal N}$ as follows. 
\begin{eqnarray*}
{\cal A}_1&=& \{N_{i,j}\in {\cal N}:i\in UI_1, j \in J_1\}\\
{\cal A}_2&=& \{N_{i,j}\in {\cal N}:i\in BI_1, j \in J_1\} \\
{\cal A}_3&=& \{N_{i,j}\in {\cal N}:i\in I_{\alpha}, j \in LJ\} \\
{\cal A}_4&=& \{N_{i,j}\in {\cal N}:i\in I_{\alpha}, j \in RJ\}\\
{\cal B}_1&=& \{N_{i,j}\in {\cal N}:i\in UI, j \in J\}\\
{\cal B}_2&=& \{N_{i,j}\in {\cal N}:i\in BI, j \in J\} \\
{\cal B}_3&=& \{N_{i,j}\in {\cal N}:i\in MI, j \in LJ\} \\
{\cal B}_4&=& \{N_{i,j}\in {\cal N}:i\in MI, j \in RJ\}
\end{eqnarray*}

Statement $(a)$ implies that ${\cal A}_i\subseteq {\cal B}_i$ for all $i\in [4]$. 
Statement $(b)$ implies that 
$\{{\cal A}_i : i\in [4]\}$ is pairwise disjoint.  
By \autoref{obs:allpresent}, we have that for any $(\ell,d)$-$x$ segment $P$ of ${\cal S}$, $x\in \{$interior,exterior$\}$, 
$E(P)\subseteq E(\tilde{G})$. 
For any $i\in [4]$, let ${U}_i= \bigcup_{N\in {\cal A}_i} (\inNoose_{\tilde{G}}(N)\cap V(\tilde{G}))$.  Since $\{{\cal A}_i : i\in [4]\}$ is a family of pairwise disjoint set of nooses, we have that the family of sets $\{U_i~:~i\in [4]\}$ is pairwise disjoint. Moreover, by the properties 
$(c)$ and $(d)$ of \autoref{def:nooseGrid}, we have that $\tilde{G}[U_i]$ is a connected subgraph for all $i\in [4]$. 
Hence, by   \autoref{obs:connnoose}, we have that for any $i\in [4]$, there is a noose $F_i$ such that 
$U_i\cup E(\tilde{G}[U_i])= \inNoose^{\star}_{\tilde{G}}(F_i)$. 

Now, for each $F_i$, we define its wrap such that $\{(F_1,U_i),\ldots, (F_4,U_4)\}$ forms a  wrapped sequence  of noose-element pairs. 
For any $i\in [4]$, consider the graph $\tilde{G}_i'$ induced by the union of vertices in the nooses from ${\cal B}_i\setminus {\cal A}_i$. 
Since $d-d_1>r$ and $\alpha > r$, there is a sequence of concentric cycles ${\cal C}_i=C^i_0,\ldots, C^i_{r}$ in $\tilde{G}_i'$, 
such that $F_i$ is in the inner face of $C^i_0$ and the inner face of $C^{i}_{r}$.   
Since $d_1+\alpha>d$ and $d<\ell$ (by assumption), we have that for any $i,j\in[4],i\neq j$, ${\cal B}_i$ is disjoint from ${\cal A}_j$. 
This implies that for any $i,j\in [4]$, $i\neq j$, $V(\tilde{G}_i')$ is disjoint from $U_j$. 
Hence, we have that   $\{(F_1,U_1),\ldots, (F_4,U_4)\}$ is a  wrapped sequence  of noose-element pairs. 
Now we would like to extend the sequence $\{(F_1,U_1),\ldots, (F_4,U_4)\}$ 
by appending as many noose-element pairs as possible from $\tilde{G}^{\star}$, 
a  graph induced by the vertices from the nooses in ${\cal B}_3\cup {\cal B}_4$. 
%
For any $j\in [2q]$, let $P_j$ be a $j$-oriented column. 
Let $((N_1,Z_1),\ldots,(N_{q'},Z_{q'}))$ be a   maximal   
sequence of wrapped noose-element pairs in $\tilde{G}^{\star}$ 
with the property  that 
$${\cal F}=((F_1,U_1),(F_2,U_2),(F_3,U_3),(F_4,U_4),(N_1,Z_1),\ldots,(N_{q'},Z_{q'}))$$ is 
a sequence of  noose-element pairs and $Z_i\subseteq V(\tilde{G})\setminus \bigcup_{j\in X} V(P_j)$  
for any $i\in [q']$, where $X=\{q-\ell-d_1+1, q-\ell-d_1+3,\ldots, q-\ell-d_2-1\}$ (assume that both $d_1$ and $d_2$ are even numbers). 
By \autoref{cor:wrapped}, we know that there is a solution  ${\cal S}^{\star}=\{(\phi'_H,\varphi'_H) : H \mbox{ in $\delta^{\star}$-folio of } G\}$ such that 
\begin{itemize}
\item[$(i)$] No element from $\bigcup_{i\in [4]} U_i \cup\; \bigcup_{i\in [q']} Z_i$ is in the image of $\varphi'_H$ for 
any $H \mbox{ in $\delta^{\star}$-folio of } G$, and 
\item[$(ii)$] the sets of terminals with respect to ${\cal S}$ and ${\cal S}^{\star}$  are same.
\end{itemize}


Because of property $(ii)$, $d_1<d$, and the fact that ${\cal S}$ is $(\ell,d)$-terminal free,   
we conclude that ${\cal S}^{\star}$ is  an  $(\ell,d_1)$-terminal free solution. 
Now we show that ${\cal S}^{\star}$ is an $(\ell,d_1,\alpha)$-vacant solution. 
From the definition of $\bigcup_{i\in [4]}U_i$, we have that any vertex $v$ in $\inNoose_{\tilde{G}}(N_{i,j})$ for some $N_{i,j} \in \bigcup_{\ell-d_1\leq t \leq \ell+d_1} \fr[t]$, with $i\notin \{q-\ell+d_1+1,\ldots,q-\ell+d_1+\alpha\}\cup\{q+1+\ell-d_1-\alpha,\ldots,q+\ell-d_1\}$, also belongs to $\bigcup_{i\in [4]}U_i$. Hence, by property $(i)$, we conclude that ${\cal S}^{\star}$ is an $(\ell,d_1,\alpha)$-vacant solution.

Now we prove that ${\cal S}^{\star}$ has $(\ell,d_1,d_2,\beta)$-few crossings. 
Since ${\cal S}^{\star}$ is an $(\ell,d_1,\alpha)$-vacant solution, by \autoref{obs:ulrb}, we know  that 
for any  $(\ell,d)$-$x$ segment $P$, where $x\in \{$interior, exterior$\}$,  
$P$ is either up-left or up-right or down-left or down-right segment. 
Now, suppose ${\cal S}^{\star}$ is not a solution with $(\ell,d_1,d_2,\beta)$-few crossings. 
Then, there exist a tuple $(H,\phi,\varphi)$ in the representation of ${\cal S}^{\star}$, 
$x\in\{$exterior,interior$\}$ and $y\in\{$up-left, up-right, down-left, down-right$\}$, such that 
$(H,\phi,\varphi)$ has more than $\beta$ $(\ell,d_1,d_2)$-$x$ $y$ crossing. Here, we assume that 
$x=$exterior and $y=$up-left.  For all other cases, arguments in the proof are similar to those of the 
case when $x=$exterior and $y=$up-left and hence omitted. Let $H_3$ be the subgraph of $\tilde{G}$ induced on the vertices of the nooses in ${\cal B}_3$.  Let $\{Q_1,\ldots,Q_{\beta+1}\}$ be a set of $(\ell,d_1,d_2)$-exterior up-left crossings of $(H,\phi,\varphi)$. Notice that $\{Q_1,\ldots,Q_{\beta+1}\}$ are vertex disjoint paths in $H_3$. For any $j\in X$, let $P_j'$ be the maximal subpath of $P_j$ which is a path in $H_3$. Notice that $V(H_3)\cap V(P_j)=V(P_j')$ for all $j\in X$. By \autoref{lemma:pathintersect}, we know that each $Q_i$, $i\in [\beta+1]$, intersects with $P_{q-\ell-d_1+1},P_{q-\ell-d_1+3},\ldots, P_{q-\ell-d_2-1}$, and hence intersect with $P'_{q-\ell-d_1+1},P'_{q-\ell-d_1+3},\ldots, P'_{q-\ell-d_2-1}$. Recall the definition of $\tilde{G}^{\star}$. Since $d_1-d_2-1>2\beta$, by \autoref{lem:flowgrid}, we know that there is a $\lfloor\frac{\beta}{6}\rfloor\times\lfloor\frac{\beta}{6}\rfloor$ grid in the subgraph $H^{\star}=Q_1\cup\ldots \cup Q_{\beta}\cup P'_{q-\ell-d_1+1}\cup P'_{q-\ell-d_1+3}\ldots\cup P'_{q-\ell-d_2-1}$ of $\tilde{G}^{\star}$.   This implies that, since $\beta>12(r+1)$, there is a $(2r+2)\times (2r+2)$ grid in $H^{\star}$.  
Intuitively, this implies that there 
is a sequence of concentric cycles $C_0,\ldots, C_{r}$ such that the strict interior face of $C_0$ 
contains a cycle $O$. 
Also, since $P'_{q-\ell-d_1+1},P'_{q-\ell-d_1+3},\ldots, P'_{q-\ell-d_2-1}$ are pairwise disjoint paths, 
and any two vertices in two distinct paths are at distance at least $2$ in $\tilde{G}$ (because these paths are from ``alternate'' oriented columns) 
at least one vertex $v$ from $O$ is not a vertex in the paths $P'_{q-\ell-d_1+1},P'_{q-\ell-d_1+3},\ldots, P'_{q-\ell-d_2-1}$  and hence $v$ 
is vertex in $V(Q_1)\cup \ldots \cup V(Q_{\beta+1})$. This implies that there is a  wrapped noose-element 
pair $(F,\{v\})$ in $\tilde{G}^{\star}$ such that $v\in V(\tilde{G})\setminus \bigcup_{j\in X}V(P_j)$. Moreover,  no vertex of $H^{\star}$ belongs to $U_1\cup \ldots \cup U_4\cup Z_1,\ldots Z_{q'}$.  This contradicts the maximality of ${\cal F}$. 
This completes the proof of the lemma. 
\end{proof}

Next, by using \autoref{lem:fewCrossingsFrame}, we can prove a crucial lemma, namely, \autoref{lem:mainexrelestar}, which asserts the existence of an irrelevant vertex in the case of a large flat wall. 

\begin{proof}[Proof of \autoref{lem:mainexrelestar}]
Let $k'=\delta^{\star}+t$ and $r=h(k')$ be the constant mentioned in \autoref{cor:wrapped}. That is, $r$ depends on $\delta$ and $t$. Fix $d=27(r+2)$, $\beta=12(r+1)+1$, $d_1=25(r+2)$ and $\alpha=r+1$.
Let $q$ be an integer such that $q>2^{c'(\delta^{\star})^2}t d$, for some constant $c'$ (to be fixed later in the proof). 


Let ${\cal S}$ be a solution. 
By \autoref{prop:no.ofgraphsinfolios},  the number of distinct graphs (up to isomorphism) in the $\delta^{\star}$-folio of $G$ is upper bounded by  
$2^{\OO((\delta^{\star})^2)}\cdot  \vert R(G)\vert^{\OO(\delta^{\star})}=2^{\OO((\delta^{\star})^2)}$. 
This implies that the number of terminal vertices in ${\cal S}$ is upper bounded by 
$ 2^{c''(\delta^{\star})^2}$, where $c''$ is a constant. 
Moreover, for each $(H,\phi,\varphi)\in {\cal S}$, the number of edges with one end point in $A$ and other in $G_0\cup G_1 \cup \ldots \cup G_k$ is at most $\delta^{\star}t$. 
Therefore, there is a constant $c'$ such that 
if $q>2^{c'(\delta^{\star})^2}t d$ and  
there is $\ell \in [q]$ such that no terminal from ${\cal S}$ belongs $U$ and no edge in $\varphi_H$ (for any $(H,\phi_H,\varphi_H)\in {\cal S}$) with one endpoint in $U$ and other in $A$, where $U$ is the union of vertices in the nooses in $\bigcup_{\ell-d \leq j < \ell+d} \fr[j]$ and the union of vertices of $V(G_i)$, with $V(G_i)\cap V(G_0)$ contains a vertex in a noose in  $\bigcup_{\ell-d \leq j < \ell+d} \fr[j]$. 
This implies ${\cal S}$ is $(\ell,d)$-terminal free. Therefore, by \autoref{lem:fewCrossingsFrame}, 
there is an $(\ell,d_1,\alpha)$-vacant solution. Then, by \autoref{obs:irr:vacant}, there is an irrelevant vertex 
in $G$.

Let $\widehat{c}$ be the constant mentioned in \autoref{lem:workspace}. Let $p=2\widehat{c}q+1$ and $w=2\widehat{c}p$. We  are a given a $w\times w$ flat fall in $G\setminus A$.  Thus by \autoref{obs:wlalltogrid}, there is a ${w}\times {w}$-grid as a minor in $\tilde{G}$. This implies that the treewidth of $\tilde{G}$ strictly more than $\widehat{c}p$.  So by \autoref{lem:workspace}, there is a $(p,q)$-workspace of $\tilde{G}$.
This completes the proof of the lemma. 
\end{proof}

\subsection{Regret-Free Frames}

Even though we proved the existence of solutions with few $(\ell,d_1,d_2)$-crossings, these crossing segments may 
not be both interior and exterior segments (for example, see segment $P_3$ in \autoref{fig:frametypes}). Intuitively, if $d_1-d_2$ is large, then in a minimal solution we may not have such crossing segments. Towards formalizing it, we define when does such crossing segments are called  {\em regret}, regret cost associated with a solution and then solutions satisfying ``$(\ell,d_1,d_2,d_3)$-regret free''.

\begin{definition}[{\bf Regret}]\label{def:regret}
Let $(G,\delta,t,w',s')$ be an instance of \fFindFoliostar\ and let ${\cal S}$ be a solution.  
%
Let $(M,{\cal N})$ be a $2q$-workspace of $\tilde{G}$, $\ell\in[q-1]_0$ and $d_1,d_2\in[\ell]$ where $\ell+d_1< q$ and $d_2<d_1$. An 
$(\ell,d_1)$-$x$ $y$ segment $P$ of ${\cal S}$, where $x\in\{$exterior,interior$\}$ and $y\in\{$up-left, up-right, down-left, down-right$\}$, is an {\em $(\ell,d_1,d_2)$-$x$ $y$ regret} if $P$ is an $(\ell,d_1,d_2)$-$x$ $y$ crossing and  it does not have one endpoint in $\inNoose_{\tilde{G}}(N)$ for some $N\in \fr[\ell+d_1]$ and the other endpoint in $\inNoose_{\tilde{G}}(N')$ for some $N'\in \fr[\ell-d_1]$.
\end{definition}

\begin{definition}[{\bf Regret Cost}]\label{def:regretCost}
Let $(G,\delta,t,w',s')$ be an instance of \fFindFoliostar, 
and ${\cal S}$ be a solution. 
%
Let $(M,{\cal N})$ be a $2q$-workspace of $\tilde{G}$, $\ell\in[q-1]_0$ and $d_1,d_2\in[\ell]$ where $\ell+d_1< q$ and $d_2<d_1$. The {\em regret cost} of an $(\ell,d_1,d_2)$-$x$ $y$ regret $P$ of ${\cal S}$, where $x\in\{$exterior,interior$\}$ and $y\in\{$up-left, up-right, down-left, down-right$\}$, is a pair $(h,t)$ defined as follows. If $x=$exterior (resp.~$x=$interior), then $h$ is the smallest (resp.~largest) $h'\in\{\ell-d_2,\ldots,\ell+d_2\}$ for which there exist $v\in V(P)$ and $N\in\fr[h']$ such that $v\in\inNoose_{\tilde{G}}(N)$, and $t$ is the number of vertices $v\in V(P)$ for which there exists $N\in\fr[h']$ such that $v\in\inNoose_{\tilde{G}}(N)$.

Let $(H,\phi,\varphi)$ be a tuple in the representation of ${\cal S}$. The {\em $(\ell,d_1,d_2)$-exterior $y$ regret cost} ({\em $(\ell,d_1,d_2)$-interior $y$ regret cost}) of $(H,\phi,\varphi)$, 
for $y\in\{$up-left, up-right, down-left, down-right$\}$, is the tuple $(t_{\ell-d_2}, t_{\ell-d_2+1},\ldots,t_{\ell+d_2})$ (resp.~$(t_{\ell+d_2},t_{\ell+d_2-1},\ldots,t_{\ell-d_2})$) where for every $i\in\{\ell-d_2,\ldots,\ell+d_2\}$, $t_i$ is the sum of the second argument $t$ in the regret cost of every $(\ell,d_1,d_2)$-exterior $y$ regret  ($(\ell,d_1,d_2)$-interior $y$ regret) $P$ of $(H,\phi,\varphi)$,
 whose first argument is $h=i$.
 The {\em $(\ell,d_1,d_2)$-regret cost}  of $(H,\phi,\varphi)$ is  the entry-wise sum of the {\em $(\ell,d_1,d_2)$-$x$ $y$ regret cost}  of $(H,\phi,\varphi)$,  where the summation varies over  $x\in \{$exterior, interior$\}$ and $y\in\{$up-left, up-right, down-left, down-right$\}$. 
\end{definition}

The following definition explains how to compare regret costs of two solutions. 

\begin{definition}
Let $(G,\delta,t,w',s')$ be an instance of \fFindFoliostar\ and let $H$ be a graph in $\delta^{\star}$-folio of $G$. 
Let $(\phi_1,\varphi_1)$ and $(\phi_2,\varphi_2)$ be two witnesses for $H$ being a topological minor in $G$. 
Let $(H,\phi'_1,\varphi'_1)$ and $(H,\phi'_2,\varphi'_2)$ be the representations of $(H,\phi_1,\varphi_1)$ and $(H,\phi_2,\varphi_2)$.  
%
Let $(M,{\cal N})$ be a $2q$-workspace of $\tilde{G}$, $\ell\in[q-1]_0$ and $d_1,d_2\in[\ell]$ where $\ell+d_1< q$ and $d_2<d_1$. 
We say that the $(\ell,d_1,d_2)$-regret cost $\eta_1$ of $(H,\phi'_1,\varphi'_1)$ is smaller than $(\ell,d_1,d_2)$-regret cost $\eta_2$
of $(H,\phi'_2,\varphi'_2)$, if $\eta_1$ is lexicographically smaller than $\eta_2$.
\end{definition}

Next we define $(\ell,d_1,d_2,d_3)$-regret free solutions and prove that such a solution exists if there is an $(\ell,d)$-terminal free solution for appropriate values of $d,d_1,d_2$ and $d_3$. 

\begin{definition}[{\bf Regret-Free Frame}]\label{def:regretFreeFrame}
Let $(G,\delta,t,w',s')$ be an instance of \fFindFoliostar.
%
Let $(M,{\cal N})$ be a $2q$-workspace of $\tilde{G}$,
$\ell\in[q-1]_0$, $d_1,d_2,d_3\in[\ell]$ where $\ell+d_1< q$ and $d_3<d_2<d_1$. A solution ${\cal S}$ is {\em $(\ell,d_1,d_2,d_3)$-regret free} if it does not have an $(\ell,d_1,d_2)$-$x$ $y$ regret $P$ for any $x\in\{$exterior,interior$\}$ and $y\in\{$up-left, up-right, down-left, down-right$\}$
such that there is a vertex $v\in V(P)\cap \inNoose_{\tilde{G}}(N)$ for some $N\in\bigcup_{\ell-d_3 \leq t \leq \ell+d_3}\fr[t]$. 
\end{definition}

\begin{lemma}\label{lem:regretFreeFrame}
Let $(G,\delta,t,w',s')$ be an instance of \fFindFoliostar\ and $r$ be the constant mentioned in \autoref{cor:wrapped}.
%
Let $(M,{\cal N})$ be a $2q$-workspace of $\tilde{G}$, $\ell\in[q-1]_0$, $d,d_1,d_2,d_3,\alpha\in[\ell]$ where 
$\ell+d_1< q$ and $d_3<d_2<d_1<d$, $\alpha,d-d_1>r$ and $\beta\in\mathbb{N}$, such that  $d_1-d_2-1>2\beta>24(r+1)$ and $d_2-d_3-1>\beta$. Let ${\cal S}$ be an $(\ell,d)$-terminal free solution. Then, $(G,\delta,t,w',s')$ has an $(\ell,d_1)$-terminal free solution that is also $(\ell,d_1,\alpha)$-vacant, has $(\ell,d_1,d_2,\beta)$-few crossings and is $(\ell,d_1,d_2,d_3)$-regret free.
\end{lemma}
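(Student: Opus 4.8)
The plan is to start from the $(\ell,d)$-terminal free solution $\mathcal{S}$ and first invoke Lemma~\ref{lem:fewCrossingsFrame} to obtain an $(\ell,d_1')$-terminal free solution that is $(\ell,d_1',\alpha)$-vacant and has $(\ell,d_1',d_2',\beta)$-few crossings, for suitable intermediate parameters $d_1',d_2'$ sitting between $d_1$ and $d$ (chosen so that the numeric hypotheses $d_1'-d_2'-1>2\beta>24(r+1)$, $\alpha>r$, $d-d_1'>r$ of Lemma~\ref{lem:fewCrossingsFrame} are met). Among all such solutions, I would then pick one, call it $\mathcal{S}^\star$, whose $(\ell,d_1,d_2)$-regret cost is lexicographically minimum (this is well-defined since regret-cost tuples are compared lexicographically and take values in a well-ordered set; here one uses $d_1<d_1'$ and $d_2<d_2'$ so that the regret structure is measured on the tighter frames). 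The claim will be that this minimizer $\mathcal{S}^\star$ is automatically $(\ell,d_1,d_2,d_3)$-regret free, and it remains $(\ell,d_1)$-terminal free, $(\ell,d_1,\alpha)$-vacant, and with $(\ell,d_1,d_2,\beta)$-few crossings because passing from $d_1'$ to $d_1$ only restricts the relevant frames.

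The heart of the argument is the minimality exchange step. Suppose for contradiction that $\mathcal{S}^\star$ is not $(\ell,d_1,d_2,d_3)$-regret free, so there is a tuple $(H,\tilde\phi,\tilde\varphi)$ in its representation with an $(\ell,d_1,d_2)$-$x$ $y$ regret segment $P$ that touches some noose in $\bigcup_{\ell-d_3\le j\le \ell+d_3}\fr[j]$. By definition of regret, $P$ is an $(\ell,d_1,d_2)$-crossing that fails to have its two endpoints on $\fr[\ell+d_1]$ and $\fr[\ell-d_1]$ respectively; combined with $(\ell,d_1',\alpha)$-vacancy and Observation~\ref{obs:ulrb}, $P$ is confined to one quadrant (up-left/up-right/down-left/down-right). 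I would reroute $P$ using an unused ``corridor'' between the frames $\fr[\ell\pm d_2]$ and $\fr[\ell\pm d_3]$: since $\mathcal{S}^\star$ has $(\ell,d_1,d_2,\beta)$-few crossings and $d_2-d_3-1>\beta$, there is at least one ``layer'' strictly between frame $\ell\pm d_3$ and $\ell\pm d_2$ that is not crossed by any crossing segment of $(H,\tilde\phi,\tilde\varphi)$ in the relevant quadrant, hence along which $P$ can be pushed back without colliding with other segments of the same $H$ (distinct $H$'s have disjoint realizations by the convention that $\mathcal{S}^\star$ uses one tuple per $H$). The rerouting replaces the portion of $P$ that penetrates the inner frames by a detour that stays on $\fr[\ell\pm d_2']$-side, strictly decreasing the $t$-entry of the regret cost at index $h$ (or eliminating that regret entirely), while not creating any new regret at a smaller index and not affecting terminal-freeness, vacancy, or the few-crossings bound (the number of crossings can only go down). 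This contradicts the lexicographic minimality of the $(\ell,d_1,d_2)$-regret cost of $\mathcal{S}^\star$.

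To make the rerouting concrete I would work in the planar graph $\tilde G$ using the oriented columns/rows of Definition~\ref{def:orientedColumn}: an exterior up-left regret, say, lives in $H_3$-type subgraphs and can be cut at its last vertex on $\fr[\ell+d_2']$ and redirected along a sub-path of an uncrossed oriented column/row back toward $\fr[\ell+d_1]$, using that $P$ itself, being a crossing in the vacant solution, already intersects the relevant oriented columns (as in Lemma~\ref{lemma:pathintersect}) so the splice points exist. The resulting graph is still a valid realization of $H$ as a topological minor: the endpoints of the edge $e$ with $P\subseteq\tilde\varphi(e)$ are unchanged, internal vertex-disjointness with the other paths is preserved because we stayed inside the uncrossed corridor, and no terminal is introduced inside the corridor since the solution is $(\ell,d)$-terminal free and $d_1'<d$. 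The main obstacle I anticipate is bookkeeping the precise parameter inequalities so that (i) there is always an uncrossed corridor of the required width (this is where $d_2-d_3-1>\beta$ is used), (ii) the detour genuinely decreases the lexicographic regret cost rather than merely moving regret around, and (iii) the detour does not spoil the $(\ell,d_1,d_2,\beta)$-few-crossings property — which should follow because a detour confined to a single uncrossed layer adds at most a bounded, and in fact zero net, number of new crossings of the monitored frames. Once these inequalities are pinned down, the contradiction with minimality closes the proof.
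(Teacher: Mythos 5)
Your overall strategy --- apply Lemma~\ref{lem:fewCrossingsFrame}, take a solution extremal with respect to $(\ell,d_1,d_2)$-regret cost, and derive a contradiction by rerouting an offending regret --- matches the paper's. But the step you call the heart of the argument has a genuine gap. You claim that because there are at most $\beta$ crossings and $d_2-d_3-1>\beta$ layers between $\fr[\ell\pm d_3]$ and $\fr[\ell\pm d_2]$, some layer is ``not crossed by any crossing segment'' and $P$ can be pushed back along it. This pigeonhole is false: a crossing segment is a connected path, so any regret that penetrates to $\fr[\ell\pm d_3]$ --- in particular the offending $P$ itself --- intersects \emph{every} oriented column/row between $\fr[\ell\pm d_2]$ and its deepest point (this is exactly Lemma~\ref{lemma:pathintersect}). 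There is no uncrossed corridor, and the proposed detour for $P$ is obstructed precisely by regrets nested inside the region bounded by $P$ and the frame. The paper resolves this by a structural counting argument rather than a direct corridor: it forms, for each regret $Q_i$, the cycle $C_i$ obtained from $Q_i$ together with a spanning tree of the graph on the nooses of $\fr[\ell+d_1]$ in that quadrant, shows these cycles' inner faces form a laminar family, and proves by induction (Claim~9.3) that a regret cycle meeting $P_{q-\ell-d_2+i}$ must contain at least $i$ nested regret cycles. The minimality exchange appears only inside that induction: at each level, either the candidate replacement subpath of an oriented column is free of other regrets (and then swapping it in strictly decreases the lexicographic regret cost, contradicting extremality), or it meets another regret, which must be nested and feeds the induction. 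Since $P$ reaches depth $d_2-d_3-1>\beta$, this forces more than $\beta$ nested regrets, contradicting the few-crossings bound. Without this nesting argument your contradiction does not close.

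A secondary issue: you route through intermediate parameters $d_1',d_2'$ and assert that vacancy and few-crossings ``transfer'' down to $(d_1,d_2)$. Neither transfer holds: $(\ell,d_1',\alpha)$-vacancy confines used vertices to rows indexed relative to $d_1'$, which is a different set of rows from those required for $(\ell,d_1,\alpha)$-vacancy; and a single $(\ell,d_1')$-segment can oscillate and contain several distinct $(\ell,d_1,d_2)$-crossings, so the crossing count can grow. The fix is simply to invoke Lemma~\ref{lem:fewCrossingsFrame} directly with the given $d_1,d_2,\alpha$, whose hypotheses are already supplied by the statement of the present lemma.
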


\begin{proof}
By \autoref{lem:fewCrossingsFrame}, we know that  $(G,\delta,t,w',s')$ has an $(\ell,d_1)$-terminal free solution which is $(\ell,d_1,\alpha)$-vacant, and has $(\ell,d_1,d_2,\beta)$-few crossings. Among all such solutions let ${\cal S}''$ be a solution such that $(a)$ the maximum $(\ell,d_1,d_2)$-regret cost $\eta$ of any $(H,\phi,\varphi)$ in the representation of ${\cal S}'$  is minimized. 
Among the solutions which are $(\ell,d_1)$-terminal free, $(\ell,d_1,\alpha)$-vacant,  with $(\ell,d_1,d_2,\beta)$-few crossings and satisfying statement $(a)$, choose a solution ${\cal S}'$ such that the number of tuples $(H,\phi,\varphi)$ in the representation of ${\cal S}'$ with $(\ell,d_1,d_2)$-regret cost $\eta$, is minimized. 
Now fix a tuple $(H,\phi,\varphi)$ in the representation of ${\cal S}'$ with 
 $(\ell,d_1,d_2)$-regret cost $\eta$. 

If  ${\cal S}'$ is $(\ell,d_1,d_2,d_3)$-regret free, then we are done. 
Otherwise, there is an $(\ell,d_1,d_2)$-$x$ $y$ regret $P$ (where $P$ is an $(\ell,d_1,d_2)$-$x$ $y$ crossing of $(H,\phi,\varphi)$) for some $x\in\{$exterior,interior$\}$ and $y\in\{$up-left, up-right, down-left, down-right$\}$
such that there is a vertex $v\in V(P)\cap \inNoose_{\tilde{G}}(N)$ for some $N\in\bigcup_{\ell-d_3 \leq t \leq \ell+d_3}\fr[t]$.
Here, we assume that $x=$ exterior and $y=$ up-left.  The proofs for all other cases are similar in arguments that of 
the case when $x=$ exterior and $y=$ up-left, and hence omitted. 
Notice that the first argument in the regret cost of $P$ is less than or equal to $\ell+d_3$ because $P$ is an $(\ell,d_1,d_2)$-exterior crossing.



Let us fix $j$-oriented columns 
$P_j$ for all $j\in \{q-\ell-d_1,\ldots,q-\ell+d_2\}$.   
Let ${\cal Q}=\{Q_1,\ldots, Q_{\beta'}\}$ be the set of  $(\ell,d_1,d_2)$-exterior up-left  crossings 
which are  $(\ell,d_1,d_2)$-exterior up-left regrets of $(H,\phi,\varphi)$. 
Since ${\cal S}'$  has $(\ell,d_1,d_2,\beta)$-few crossings, we have that $\beta' \leq \beta$.  
Notice that $P\in \{Q_1,\ldots,Q_{\beta'}\}$ and hence $q\geq 1$. 
Since there is a vertex $v\in V(P)\cap \inNoose_{\tilde{G}}{N}$ for some $N\in\bigcup_{\ell-d_3 \leq t \leq \ell+d_3}\fr[t]$, 
$P$ is an $(\ell,d_1,d_3)$-crossing as well. Hence, \autoref{lemma:pathintersect}, 
$P$ intersects with $P_j$ for any $j\in \{q-\ell-d_1+1,\ldots,q-\ell+d_3\}$. 
For each path $Q_i\in {\cal Q}$, let $u_i$ and $u_i'$ the end points of $Q_i$. Since each $Q_i$ is an  $(\ell,d_1,d_2)$-exterior up-left regret, both $u_i$ and $u_i'$ belong to $\bigcup_{i\in [2q]} \inNoose(N_{i,q-\ell-d_1})$ and no internal vertex of $Q_i$ is in $\bigcup_{i\in [2q]} \inNoose(N_{i,q-\ell-d_1})$. By properties $(c)$ and $(d)$ of \autoref{def:nooseGrid}, we know that the subgraph $J$ of $\tilde{G}$ induced on $\bigcup_{i\in [2q]} \inNoose_{\tilde{G}}( N_{i,q-\ell-d_1})\cap V(\tilde{G})$ is a connected graph. Let $J'$ be a spanning tree of $J$.  
Since each $Q_i\in {\cal Q}$ is an $(\ell,d_1)$-exterior up-left segment, 
$Q_i$ hits on $J'$ only on $u_i$ and $u_i'$. So there is a unique cycle formed by 
$J'$ and $Q_i$ and let $C_i$ be that cycle. 

\begin{claim}
\label{claim:laminar}
The inner faces of $\{C_1,\ldots,C_{\beta'}\}$ form a laminar family.
\end{claim}
\begin{proof}
Since for any $i\in [\beta']$, $Q_i$ does not contain a terminal 
with respect to ${\cal S}'$, ${\cal Q}$ is a set of vertex disjoint paths. This implies that 
for $i\neq j$, either $C_i$ is contained in the interior face of $C_j$ or  $C_j$ is contained in the interior face of $C_j$ or their inner faces are disjoint. 
Thus, we have that the inner faces of $\{C_1,\ldots,C_{\beta'}\}$ form a laminar family.
\end{proof}

\begin{claim}
\label{claim:nonhitcrossing}
Any $(\ell,d_1,d_2)$-exterior up-left crossing $Y$, which is not in ${\cal Q}$, does not hit the inner face of $C$ for any $C\in \{C_1,\ldots,C_{\beta'}\}$. 
\end{claim}

\begin{claim}
\label{claim:nocycles}
For any $C\in \{C_1,\ldots,C_{\beta'}\}$, if $C$ hits $P_{q-\ell-d_2+i}$ for any $i\in {\mathbb N}$, then  
the inner face of $C$ contains at least $i$ cycles from $\{C_1,\ldots, C_{\beta'}\}$.  
\end{claim}

\begin{proof}
Recall that $Q_1,\ldots,Q_{\beta'}$ are the $(\ell,d_1,d_2)$-exterior up-left crossings of $(H,\phi,\varphi)$, which are  $(\ell,d_1,d_2)$-exterior up-left regrets of $(H,\phi,\varphi)$.
We prove the claim using induction on $i$. The base case is when $i=1$. The inner face of $C$ 
contains $C$ itself and hence the statement follows. Now consider the induction step, where $i>1$. 
The cycle $C$ hits the path $P_{q-\ell-d_2+i}$. This implies that there is a subpath $R$ 
of $P_{q-\ell-d_2+i-1}$, of length at least one, such that $R$ is contained in the interior face of $C$, with end vertices
being (say $u$ and $v$) on $C$, 
and the path $P_C$  from $u$ to $v$ in $C$ intersects $P_{q-\ell-d_2+i}$. 
Suppose the path $R$ does not intersects with any $C'\in \{C_1,\ldots,C_r\}\setminus \{ C\}$. 
Then by \autoref{claim:nonhitcrossing}, we conclude that $R$ does not have any $(\ell,d_1,d_2)$-exterior up-left crossing. 
This implies that the internal vertices of $R$ are not in the image of $\varphi$. 
 Then by replacing the path $P_C$ with $R$, we get a new solution with smaller $(\ell,d_1,d_2)$-regret cost, which is a contradiction to the choice of ${\cal S}'$. 
Suppose $C'\in \{C_1,\ldots,C_{\beta'}\}\setminus \{ C\}$ be a cycle such that $C'$ is contained in the interior face 
of $C$, and intersects  with $R$. 
Since $C'$ intersects  with $R$, a subpath of  $P_{q-\ell-d_2+i-1}$, by induction hypothesis, 
we get that the inner face of $C'$ contains $i-1$ cycles ${\cal C}$ from $\{C_1,\ldots, C_{\beta'}\}$. These cycles 
are contained in the interior face of $C$, and by \autoref{claim:laminar}, we have that $C\notin {\cal C}$. 
Thus, ${\cal C}\cup \{C\}$ is the required set of $i$ cycles in the interior face of $C$. 
This completes the proof of the claim.  
\end{proof}

Since $P\in \{Q_1,\ldots,Q_{\beta'}\}$ and $P$ intersects with $P_{q-\ell-d_3-1}$, there is a cycle $C\in \{C_1,\ldots,C_q\}$ 
such that $C$ intersects with $P_{q-\ell-d_3-1}=P_{q-\ell-d_2+(d_2-d_3-1)}$. Thus, by \autoref{claim:nocycles}, 
we have that the interior face of $C$ contains at least $d_2-d_3-1$ cycles from $\{C_1,\ldots,C_{\beta'}\}$. 
This contradicts the assumption that $\beta'\leq \beta$, because $d_2-d_3-1>\beta$. 
 This completes the proof of the lemma. 
\end{proof}


The proof of the following lemma  is similar in arguments to that of  \autoref{lemma:pathintersect}. 

\begin{lemma}\label{lem:intersection}
Let $(G,\delta,t,w',s')$ be an instance of \fFindFoliostar.
Let $(M,{\cal N})$ be a $2q$-workspace of $\tilde{G}$, $\ell\in[q-1]_0$ and $d_1,d_2\in[\ell]$ where $\ell+d_1\leq q-1$ and $d_3<d_2<d_1$. Let ${\cal S}$ be a solution that is $(\ell,d_1,d_2,d_3)$-regret free. Let $P$ be an $(\ell,d_1,d_3)$-$y$  crossing of ${\cal S}$. 
Then, 
\begin{itemize}
\item[(a)] if $y\in \{$up-left,down-left$\}$, then $P$ intersects with $P_{q-\ell-d_1+1}, \ldots, P_{q-\ell+d_1-1}$, and 
\item[(b)] if $y\in \{$up-right,down-right$\}$, then $P$ intersects with $P_{q+\ell-d_1+1}, \ldots, P_{q+\ell+d_1-1}$.
\end{itemize}
\end{lemma}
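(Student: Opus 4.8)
\textbf{Proof plan for \autoref{lem:intersection}.}

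The plan is to mimic the proof of \autoref{lemma:pathintersect} almost verbatim, with the crucial difference that here we no longer need the vacancy hypothesis to control on which ``side'' of the workspace the crossing segment lives: the regret-freeness is doing the analogous job, forcing every crossing that touches the inner $d_3$-band to actually run all the way from $\fr[\ell-d_1]$ to $\fr[\ell+d_1]$. So first I would unfold the definitions: let $P$ be an $(\ell,d_1,d_3)$-$y$ crossing of ${\cal S}$, say with $y\in\{\text{up-left},\text{down-left}\}$ (the right-hand cases being symmetric and obtained by shifting all column indices by the appropriate amount, exactly as in \autoref{lemma:pathintersect}). By \autoref{def:crossingPath}, $P$ is an $(\ell,d_1)$-$x$ $y$ segment for some $x\in\{\text{exterior},\text{interior}\}$ that additionally contains a vertex of $\inNoose_{\tilde{G}}(N)$ for some $N\in\fr[\ell+d_3]$ (if $x=\text{exterior}$) or $N\in\fr[\ell-d_3]$ (if $x=\text{interior}$); in either case $P$ contains a vertex in $\inNoose_{\tilde G}(N)$ for some $N\in\bigcup_{\ell-d_3\le t\le \ell+d_3}\fr[t]$.

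The key step is to invoke the hypothesis that ${\cal S}$ is $(\ell,d_1,d_2,d_3)$-regret free. By \autoref{def:regretFreeFrame}, since $P$ is an $(\ell,d_1,d_2)$-$x$ $y$ crossing (note $d_3<d_2$, so touching $\fr[\ell\pm d_3]$ implies $P$ is in particular an $(\ell,d_1,d_2)$-crossing) and $P$ has a vertex in some $\inNoose_{\tilde G}(N)$ with $N\in\bigcup_{\ell-d_3\le t\le\ell+d_3}\fr[t]$, $P$ cannot be an $(\ell,d_1,d_2)$-$x$ $y$ regret. By \autoref{def:regret}, the only way an $(\ell,d_1,d_2)$-$x$ $y$ crossing fails to be a regret is that it has one endpoint in $\inNoose_{\tilde G}(N)$ for some $N\in\fr[\ell+d_1]$ and the other in $\inNoose_{\tilde G}(N')$ for some $N'\in\fr[\ell-d_1]$. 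Thus $P$ runs from a noose of $\fr[\ell-d_1]$ to a noose of $\fr[\ell+d_1]$, with all internal vertices enclosed by nooses of intermediate frames (this last fact is part of \autoref{def:annSegment}).

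Now I would finish with the planarity/separator argument of \autoref{lemma:pathintersect}. Fix $j\in\{q-\ell-d_1+1,\dots,q-\ell+d_1-1\}$ and a $j$-oriented column $P_j$. By \autoref{def:orientedColumn}, $V(P_j)$ separates the union of $\inNoose_{\tilde G}(N)$ over the nooses $N_{i,j-1}$ (column $j-1$) from the union over the nooses $N_{i,j+1}$ (column $j+1$); more to the point, for an up-left (resp.\ down-left) segment that stays in $\bigcup_{i\le q}$ (resp.\ $\bigcup_{i>q}$) and travels from a frame-$(\ell-d_1)$ noose to a frame-$(\ell+d_1)$ noose, it must pass through the band of columns indexed $q-\ell-d_1,\dots,q-\ell+d_1$, hence cross every intermediate oriented column $P_j$ with $q-\ell-d_1+1\le j\le q-\ell+d_1-1$. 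Concretely, $P$ has a vertex enclosed by a noose in column $q-\ell-d_1$ and a vertex enclosed by a noose in column $q-\ell+d_1$, and $P$ is connected, so $V(P_j)\cap V(P)\ne\emptyset$ for each such $j$. That is exactly conclusion (a); conclusion (b) follows by the identical argument with the column indices of the right-hand frames, or by the reflection symmetry of the workspace. I do not expect a genuine obstacle here — the only mild care needed is to make sure the up-left/down-left case distinction correctly localizes $P$ to the left half of the workspace (so that the relevant oriented columns are the ones with indices near $q-\ell$), which is immediate from the ``up-left''/``down-left'' clauses of \autoref{def:annSegment} together with the fact that $P$'s endpoints lie on $\fr[\ell\pm d_1]$.
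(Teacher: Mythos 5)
Your overall plan — use regret-freeness to upgrade the $(\ell,d_1,d_3)$-crossing to a full traversal with one endpoint on $\fr[\ell+d_1]$ and the other on $\fr[\ell-d_1]$, then run the connectivity/separator argument of \autoref{lemma:pathintersect} against the oriented columns — is the intended one. But there is a genuine gap at the step where you write that ``$P$ has a vertex enclosed by a noose in column $q-\ell-d_1$ and a vertex enclosed by a noose in column $q-\ell+d_1$,'' and you dismiss this as immediate. It is not. A frame $\fr[\ell']$ restricted to the up-left quadrant $\{i,j\le q\}$ contains both its left-nooses (column $j=q-\ell'$) \emph{and} its up-nooses (row $i=q-\ell'$, with $j$ ranging up to $q$). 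Regret-freeness tells you only that the endpoints of $P$ lie somewhere on $\fr[\ell+d_1]$ and $\fr[\ell-d_1]$; it says nothing about which nooses of those frames they occupy. Concretely, a path running straight down column $q$ from the up-noose $N_{q-\ell-d_1,q}\in\fr[\ell+d_1]$ to the up-noose $N_{q-\ell+d_1,q}\in\fr[\ell-d_1]$ is a legitimate $(\ell,d_1)$-up-left segment, is an $(\ell,d_1,d_3)$-crossing (it meets the up-noose $N_{q-\ell-d_3,q}$ of $\fr[\ell+d_3]$), is not a regret, and yet meets none of the oriented columns $P_{q-\ell-d_1+1},\dots,P_{q-\ell+d_1-1}$.

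The ingredient you explicitly discarded — vacancy — is exactly what closes this hole in \autoref{lemma:pathintersect}: under $(\ell,d_1,\alpha)$-vacancy the row index of every used noose in the annulus lies in $\{q-\ell+d_1+1,\dots,q-\ell+d_1+\alpha\}\cup\{q+1+\ell-d_1-\alpha,\dots,q+\ell-d_1\}$, which excludes the rows $q-\ell-d_1$ and $q-\ell+d_1$ of the up-nooses (and likewise the down-nooses) of $\fr[\ell\pm d_1]$, so the endpoints of an up-left/down-left segment are forced into the \emph{left-noose columns} $q-\ell-d_1$ and $q-\ell+d_1$, after which your separator argument goes through verbatim. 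So regret-freeness is not ``doing the analogous job'' to vacancy; the two hypotheses control different things (which frames the endpoints reach versus which nooses of a frame can be touched), and both are needed. The paper's statement of \autoref{lem:intersection} omits the vacancy hypothesis, but its one-line proof (``similar in arguments to \autoref{lemma:pathintersect}'') and the context in which the lemma is meant to apply (solutions produced by \autoref{lem:regretFreeFrame}, which are $(\ell,d_1,\alpha)$-vacant \emph{and} regret-free) make clear that vacancy is still being used; your write-up should either add that hypothesis or otherwise justify why the endpoints cannot sit on up- or down-nooses.
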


\subsection{Untangled Frames}

In this subsection we define $(\ell,\eta)$-untangled solutions; such a solution behaves nicely on frame $\ell$, and its restriction to the frames numbered $\{0,\ldots, \ell\}$ has small size. These properties will make the computation of partial solution feasible in the later section.  In this subsection we prove that if there is an $(\ell,d)$-terminal free solution, then there is an $(\ell,\eta)$-untangled solution. Let us start with two definitions of special subgraphs in $\tilde{G}$ and $G$. 

\begin{definition}
\label{def:gltilde}
Let $(G,\delta,t,w',s')$ be an instance of \fFindFoliostar. Let $(M,{\cal N})$ be a $2q$-workspace of $\tilde{G}$ and $\ell\in[q-1]_0$. Let $U_{\ell}$ be the union of the vertices from the nooses in $\bigcup_{i \in [\ell]_0}\fr[i]$. Let ${\cal G} \subseteq \{G_1,\ldots,G_k\}$ be the graphs such that for any $i\in [k]$, $V(G_i)\cap V(G_0)\subseteq U_{\ell}$ if and only if $G_i\in {\cal G}$. Then, the graph $\tilde{G}_{\ell}$ and $G^{\star}_{\ell}$ are defined as the induced subgraphs of  $\tilde{G}\cup G_1\cup\ldots \cup G_k$ and ${G}\cup G_1\cup\ldots \cup G_k$, respectively, induced by the vertices $U_{\ell}\cup \; \bigcup_{J\in {\cal G}} V(J)$.  
\end{definition}

Next, we define disk segments and disk dangling segments, which are required to define what is an $(\ell,\eta)$-untangled solution.

\begin{definition}[{\bf Disk Segment}]\label{def:diskSegment}
Let $(G,\delta,t,w',s')$ be an instance of \fFindFoliostar.
Let ${\cal S}$ be a solution 
and ${\cal S}'$
be the representation of it.  Let $(H,{\phi}'_H,{\varphi}'_H)\in {\cal S}'$. 
Let $(M,{\cal N})$ be a $2q$-workspace of $\tilde{G}$ and $\ell\in[q-1]_0$. A path $P$ (with endpoints $u$ and $v$) in $\tilde{G}_{\ell}$ is an {\em $\ell$-interior segment} (or simply an~{\em $\ell$-segment}) of $(H,{\phi}'_H,{\varphi}'_H)$ if  there exists $e\in E(H)$  such that $P$ is a subpath of $P'=\varphi'_H(e)$ with the following properties.

\vspace{-0.5em}
\begin{enumerate}
\item There exist {\em distinct} nooses $N_u,N_v\in \fr[\ell]$ such that $u\in\inNoose_{\tilde{G}}(N_u)$ and $v\in\inNoose_{\tilde{G}}(N_v)$. 
\item $P$ can be partitioned into three subpaths, $P_u$, $P^{m}$ and $P_v$, such that every vertex on $P_u$ belongs to $\inNoose_{\tilde{G}}(N_u)$, every vertex in $P_v$ belongs to $\inNoose_{\tilde{G}}(N_v)$, and $P^m$ contains at least one vertex and every vertex on $P^m$ belongs to $V(\tilde{G}_{\ell-1})$.
\item $P'$ contains a vertex that is {\bf (i)} adjacent to $u$, {\bf (ii)} does not belong to $P$, and {\bf (iii)} does not belong to 
$V(\tilde{G}_{\ell})$. 
\item $P'$ contains a vertex that is {\bf (i)} adjacent to $v$, {\bf (ii)} does not belong to $P$, and {\bf (iii)} does not belong to 
$V(\tilde{G}_{\ell})$. 
\end{enumerate}
\end{definition}

\begin{definition}[{\bf Dangling Disk Segment}]\label{def:diskDanglingSegment}
Let $(G,\delta,t,w',s')$ be an instance of \fFindFoliostar.
Let ${\cal S}$
 be a solution 
and ${\cal S}'$
be the representation of it.  Let $(H,{\phi}'_H,{\varphi}'_H)\in {\cal S}'$. 
Let $(M,{\cal N})$ be a $2q$-workspace of $\tilde{G}$ and $\ell\in[q-1]_0$. A path $P$ (with endpoints $u$ and $v$) in $\tilde{G}_{\ell}$
is an {\em $\ell$-dangling segment} of $(H,{\phi}'_H,{\varphi}'_H)$ if there exists 
$e\in E(H)$ 
 such that $P$ is a subpath of $P'=\varphi'_H(e)$ with the following properties.  

\vspace{-0.5em}
\begin{enumerate}
\item There exists a noose $N_u\in \fr[\ell]$ such that $u\in\inNoose_{\tilde{G}}(N_u)$.
\item $P$ can be partitioned into two subpaths, $P_u$ and $P^{r}$, such that every vertex in $P_u$ belongs to $\inNoose_{\tilde{G}}(N_u)$, $P^{r}$ contains at least one vertex and every vertex on $P^{r}$ belongs to $V(\tilde{G}_{\ell-1})$. 
\item $P'$ contains a vertex that is {\bf (i)} adjacent to $u$, {\bf (ii)} does not belong to $P$, and {\bf (iii)} does not belong to 
$V(\tilde{G}_{\ell})$. 
\end{enumerate}
\end{definition}


\begin{definition}[{\bf Used Noose}]\label{def:usedNoose}
Let $(G,\delta,t,w',s')$ be an instance of \fFindFoliostar.
Let $(M,{\cal N})$ be a $2q$-workspace in $\tilde{G}$. 
Let ${\cal S}$
 be a solution 
and ${\cal S}'=\{(H,{\phi}'_H,{\varphi}'_H) : H \mbox{ in the $\delta^{\star}$-folio of } G\}$  
be the representation of it.  
 A noose $N\in{\cal N}$ is {\em used} by $(H,{\phi}'_H,{\varphi}'_H)\in {\cal S}'$ if there exists a vertex in the image of $\varphi'_H$ 
 that belongs to $\inNoose_{\tilde{G}}(N)$.
\end{definition}


Now we are ready to define what is an $(\ell,\eta)$-untangled solution and a lemma about its existence if there is an $(\ell,d)$-terminal free solution.

\begin{definition}[{\bf Untangled Frame}]\label{def:untangledFrame}
Let $(G,\delta,t,w',s')$ be an instance of \fFindFoliostar.
Let $(M,{\cal N})$ be a $2q$-workspace in $\tilde{G}$, $\ell\in[q-1]_0$ and $\eta\in\mathbb{N}$.
Let ${\cal S}$
 be a solution 
and ${\cal S}'=\{(H,{\phi}'_H,{\varphi}'_H) : H \mbox{ in the $\delta^{\star}$-folio of } G\}$  
be the representation of it.  
The solution ${\cal S}$ is {\em $(\ell,\eta)$-untangled} if it is $(\ell,3)$-terminal free and the three following conditions are satisfied for any  $(H,{\phi}'_H,{\varphi}'_H)\in {\cal S}'$. 
\vspace{-0.5em}
\begin{enumerate}
\itemsep0em 
\item At most $\eta$ nooses in $\fr[\ell]$ are used by $(H,{\phi}'_H,{\varphi}'_H)$. Each noose in $\fr[\ell]$ used by  $(H,{\phi}'_H,{\varphi}'_H)$ is either a right-noose or a left-noose.  
\item Every vertex $v$ in the image of $\varphi'_H$, belonging to a noose $N\in \fr[\ell]$, is also a vertex of an $\ell$-segment or an $\ell$-dangling segment of $(H,{\phi}'_H,{\varphi}'_H)$.
\item There do not exist a noose $N\in \fr[\ell]$ and two distinct $\ell$-(dangling) segments of $(H,{\phi}'_H,{\varphi}'_H)$, $P$ and $P'$, such that both $P$ and $P'$ intersect $\inNoose_{\tilde{G}}(N)$.
\end{enumerate}
\end{definition}

\begin{lemma}\label{lem:untangledFrame}
Let $(G,\delta,t,w',s')$ be an instance of \fFindFoliostar.
%
Let $r$ be the constant mentioned in \autoref{cor:wrapped}. Let $(M,{\cal N})$ be a $2q$-workspace in $\tilde{G}$, $\ell\in[q-1]_0$, $d,d_1,\beta\in[\ell]$  such that $4\beta+6<d_1<d<\ell$, $\ell+d_1+r< q$,  $\beta>12(r+1)$, and $d-d_1>r$. If $(G,\delta,t,w',s')$ has  an $(\ell,d)$-terminal free solution,  then it has an $(\ell,d_1)$-terminal free solution that is also $(\ell,\eta)$-untangled, where $\eta= 4\beta$. 
\end{lemma}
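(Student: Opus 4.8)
The plan is to start from an $(\ell,d)$-terminal free solution and massage it in three stages, each time invoking an earlier lemma of this section, until we arrive at a solution satisfying all three conditions in Definition~\ref{def:untangledFrame}. First I would fix a choice of intermediate radii: pick $d_2,d_3,\alpha$ with $d_1<d_1+\alpha<d_2<d_3<d$ arranged so that the gaps satisfy the hypotheses of \autoref{lem:regretFreeFrame}, namely $d_2-d_3-1>2\beta>24(r+1)$ (here reusing the $\beta$ of the statement) and the analogous inequality one level down, and also $\alpha>r$, $d-d_1>r$. Applying \autoref{lem:regretFreeFrame} to the given $(\ell,d)$-terminal free solution then yields an $(\ell,d_1')$-terminal free solution (for the outermost intermediate radius) that is simultaneously $(\ell,d_1',\alpha)$-vacant, has $(\ell,d_1',d_2,\beta)$-few crossings, and is $(\ell,d_1',d_2,d_3)$-regret free. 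These four properties are exactly the structural control we need: vacancy forces all segments touching the band to be up-left/up-right/down-left/down-right (\autoref{obs:ulrb}), ``few crossings'' bounds their number, and ``regret free'' forces every segment that enters deeply enough into the band near level $\ell$ to be a genuine crossing from $\fr[\ell+d_1']$ to $\fr[\ell-d_1']$, i.e.\ it passes straight through level $\ell$ rather than turning back.

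Next I would take such a solution and, among all of them, choose one that is minimal in a suitable sense — e.g.\ minimizing the total number of vertices used in $\bigcup_{i\le \ell} \fr[i]$, or lexicographically minimizing regret cost together with the number of used nooses in $\fr[\ell]$. For this minimal solution I would argue the three conditions of Definition~\ref{def:untangledFrame} one at a time. Condition (1) (at most $\eta=4\beta$ used nooses in $\fr[\ell]$, each a left- or right-noose): since the solution is $(\ell,3)$-terminal free and $(\ell,d_1',\alpha)$-vacant with $\alpha>r\ge 1$, no vertex of $\fr[\ell]$ lying in an up-noose or down-noose can be used — those indices are excluded by the vacancy definition (the allowed index windows $\{q-\ell+d_1'+1,\dots\}\cup\{\dots,q+\ell-d_1'\}$ avoid the corner rows/columns when $\alpha$ is small). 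For the count $4\beta$: every used left-noose (resp.\ right-noose) of $\fr[\ell]$ must be visited by some segment of the solution restricted to the band; by ``few crossings'' there are at most $\beta$ exterior and $\beta$ interior up-left crossings, and similarly for down-left, so at most $4\beta$ left-nooses are used, and symmetrically $4\beta$ right-nooses — but actually by a parity/alternation argument each crossing visits a contiguous run of nooses on one side and, using \autoref{lem:intersection} together with the regret-free property, each such run can be charged to a single crossing, giving the bound $4\beta$ on the total. Condition (2) (every used vertex of $\fr[\ell]$ lies on an $\ell$-segment or $\ell$-dangling segment): this is where the regret-free hypothesis does the work — if a path of the representation enters $\inNoose(N)$ for $N\in\fr[\ell]$ but the maximal subpath through the band is neither an $\ell$-segment nor an $\ell$-dangling segment, then it turns back before reaching $\tilde{G}_{\ell-1}$ or fails the ``dangling on both sides'' conditions, and one can reroute it strictly inside $\fr[\ell]$ (using an oriented row/column and the connectivity in Definition~\ref{def:nooseGrid}) to avoid $N$, contradicting minimality. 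Condition (3) (distinct segments do not share a noose of $\fr[\ell]$): again by minimality — if two $\ell$-(dangling) segments of the same tuple both meet $\inNoose(N)$, one can splice them through $\tilde{G}[\inNoose(N)]$ (which is connected) and shortcut, reducing the number of used nooses or vertices, contradicting the choice of solution.

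The main obstacle I expect is \textbf{Condition (1) with the explicit constant $\eta=4\beta$}, and relatedly making the rerouting arguments for Conditions (2) and (3) rigorous without breaking the already-established properties (few crossings, vacancy, regret-freeness, and $(\ell,d_1)$-terminal freeness). Each reroute must stay within $\tilde{G}_\ell$, must not create new crossings of $\fr[\ell\pm d_2]$, and must not disturb the other graphs $H'$ in the solution (since the solution is a whole family $\{(H,\phi_H,\varphi_H)\}$ indexed over the $\delta^\star$-folio, and rerouting one tuple must keep internal vertex-disjointness within that tuple and cannot affect another). The safe way to handle this is to do all rerouting \emph{inside a single noose's enclosed subgraph} or along a dedicated oriented row/column that the ``vacant'' band guarantees is unused, so that disjointness across tuples is automatic; the counting bound $4\beta$ then follows by observing that the used left-nooses of $\fr[\ell]$, ordered by the order $<$ on $\fr[\ell]$, break into at most $2\beta$ maximal intervals (one per exterior + interior up-left/down-left crossing, by \autoref{lem:intersection}), and after a final shortcut step each interval is reduced to a bounded number — ultimately two — of used nooses, and symmetrically on the right, yielding $\eta=4\beta$. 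I would present the counting as the crux and relegate the rerouting lemmas to short paragraphs citing Definition~\ref{def:nooseGrid}(c),(d), \autoref{obs:connnoose}, \autoref{obs:workAdj}, and \autoref{lem:intersection}.
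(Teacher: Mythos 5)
Your first stage is exactly the paper's: fix intermediate radii and invoke \autoref{lem:regretFreeFrame} to get a solution that is $(\ell,d_1)$-terminal free, $(\ell,d_1,\alpha)$-vacant, has $(\ell,d_1,d_2,\beta)$-few crossings and is $(\ell,d_1,d_2,d_3)$-regret free (modulo a slip in your ordering of the radii: you need $d_3<d_2<d_1<d$ with $d_1-d_2-1>2\beta$ and $d_2-d_3-1>\beta$; the paper takes $d_3=\beta+3$, $d_2=2\beta+5$, $\alpha=r+1$). But the heart of the lemma is what you correctly flag as the main obstacle, and your proposal does not resolve it. After \autoref{lem:regretFreeFrame}, each $(\ell,d_1,d_3)$-crossing segment can still enter and leave $\fr[\ell]$ many times and two segments can share a noose; the paper handles this not by a minimality/exchange argument but by an \emph{explicit global rerouting}: for each corner it replaces the at most $\beta$ crossing segments $S_1,\dots,S_{\beta'}$ by new vertex-disjoint paths $Y_i=W_iI_i\overleftarrow{Z}_i$, where $W_i$ and $Z_i$ are built from the old segments together with oriented columns (via a laminar-family argument and two technical claims), and $I_i$ runs along a \emph{dedicated} oriented row $J_{q-\ell+i}$ at depth $i$, so that $Y_i$ meets $\fr[\ell]$ in exactly one noose and distinct $Y_i$ meet distinct nooses. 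Conditions (1)--(3) of untangledness are then read off the construction rather than extracted from a minimal solution.

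Two concrete problems with your sketch. First, the count: regret-freeness makes every crossing that reaches $\fr[\ell]$ a full traversal from $\fr[\ell+d_1]$ to $\fr[\ell-d_1]$, hence simultaneously an exterior and an interior crossing; so there are at most $\beta$ per corner and $4\beta$ in total. Your tally ``at most $\beta$ exterior and $\beta$ interior up-left crossings, and similarly for down-left, so at most $4\beta$ left-nooses'' double-counts and, combined with your later admission that each surviving interval contributes ``ultimately two'' nooses, lands at $8\beta$ rather than the required $\eta=4\beta$; the bound only comes out right once each crossing is rerouted to use exactly one noose. Second, your fix for condition (3) --- splicing two $\ell$-segments that share a noose through $\tilde{G}[\inNoose(N)]$ --- is not a legal operation when the two segments belong to $\varphi(e)$ and $\varphi(e')$ for distinct edges $e\neq e'$ of $H$: merging them changes the topological minor being realized. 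So the minimality route, as described, has a genuine gap precisely at the step the lemma exists to establish.
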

 
\begin{proof}
First, we fix three integers $d_2,d_3,\alpha \in {\mathbb N}$ as follows: $d_3=\beta+3,\alpha=r+1$ and $d_2=2\beta+5$.  
Notice that $d_3<d_2<d_1<d$, $\alpha>r$, $d_2-d_3-1>\beta>12(r+1)$, $d_1-d_2-1>2\beta>24(r+1)$ and $\ell+d_1<q$.
Therefore,  
since $(G,\delta,t,w',s')$ has  an $(\ell,d)$-terminal free solution, by \autoref{lem:regretFreeFrame}, we know that there is a solution  ${\cal S}$ which is $(\ell,d_1)$-terminal free, $(\ell,d_1,\alpha)$-vacant, $(\ell,d_1,d_2,\beta)$-few crossings and $(\ell,d_1,d_2,d_3)$-regret free.
Let ${\cal S}'=\{(H,\phi_H,\varphi_H) : H \mbox{ in the $\delta^{\star}$-folio of } G\}$ be a representation of ${\cal S}$. Let 
 $Q=\bigcup_{\ell-d_1\leq i \leq \ell+d_1}\fr[i]$ and $U=\bigcup_{N\in Q}\inNoose_{\tilde{G}}(N)\cap V(\tilde{G})$. 
Let $G'=\tilde{G}\cup G$ and $U'=  U \cup \{V(G_i)~:~i\in [k], V(G_i)\cap U\neq \emptyset\}$. 
By \autoref{obs:allpresent}, we know that there is no edge in $E(G[U'])\setminus E(\tilde{G})$ and $E_G(U',A)$ which is used by ${\cal S'}$. This implies that the edges used by ${\cal S}'$ in $G'[U]$  are only from the ``planar portion'' of $\tilde{G}[U]$.  
Now we modify ${\cal S}'$, to get a representation of a solution which is  $(\ell,\eta)$-untangled.

Since 
${\cal S}$ is $(\ell,d_1,\alpha)$-vacant, for any vertex $v$ in the image of $\varphi_H$ belonging to $\inNoose_{\tilde{G}}(N_{i,j})$ 
for some  $N_{i,j}\in \bigcup_{\ell-d_1\leq i' \leq \ell+d_1} \fr[i']$,  it holds that $i\in\{q-\ell+d_1+1,\ldots,q-\ell+d_1+\alpha\}\cup\{q+1+\ell-d_1-\alpha,\ldots,q+\ell-d_1\}$. Moreover, since ${\cal S}$ is $(\ell,d_1,d_2,d_3)$-regret free, we have that any $(\ell,d_1,d_3)$-crossing segment has one end point in a noose in $\fr[\ell+d_1]$ (we call it starting vertex) and other in a noose in $\fr[\ell-d_1]$ (we call it ending 
vertex) and these are the only  $(\ell,d_1)$-segments that intersect with $\fr[\ell]$.  Since  ${\cal S}$ is $(\ell,d_1,d_2,\beta)$-few crossings and $(\ell,d_1,\alpha)$-vacant, for any $(H,\phi_H,\varphi_H)\in {\cal S}'$, number of $(\ell, d_1,d_2)$-crossing segments of $(H,\phi_H,\varphi_H)$, and hence the number of $(\ell, d_1,d_3)$-crossing segments of $(H,\phi_H,\varphi_H)$ are upper bounded by $4\beta=\eta$. Even though the number 
of  $(\ell, d_1,d_3)$-crossing segments are bounded, they may intersect with $\fr[\ell]$ many times and as a result, the solution 
may not be  $(\ell,\eta)$-untangled.  

In what follows we show how to construct a representation ${\cal Z}'$ of a solution ${\cal Z}$   from ${\cal S}'$, such that 
${\cal Z}$ is $(\ell,\eta)$-untangled. Towards 
that we fix a tuple $(H,\phi_H,\varphi_H)\in {\cal S'}$ and reroute the $(\ell,d_1,d_3)$-crossing segments $(H,\phi_H,\varphi_H)$ such that we can get the required properties. Notice that $(\ell,d_1,d_3)$-crossing segments can be partitioned into $(\ell,d_1,d_3)$-up-left crossing segments, $(\ell,d_1,d_3)$-down-left crossing segments, $(\ell,d_1,d_3)$-up-right crossing segments and $(\ell,d_1,d_3)$-down-right crossing segments.
Let 

\begin{eqnarray*}
UL&=&\bigcup_{\substack{q-\ell+d_1+1\leq i\leq q-\ell+\alpha\\ q-\ell-d_1\leq j \leq q-\ell+d_1}}\inNoose_{\tilde{G}}(N_{i,j})\\
DL&=&\bigcup_{\substack{q+1+\ell-d_1-\alpha\leq i\leq q+1+\ell-d_1\\ q-\ell-d_1\leq j \leq q-\ell+d_1}}\inNoose_{\tilde{G}}(N_{i,j})\\
UR&=&\bigcup_{\substack{q-\ell+d_1+1\leq i\leq q-\ell+d_1+\alpha\\ q+1+\ell-d_1\leq j \leq q+1+\ell+d_1}}\inNoose_{\tilde{G}}(N_{i,j})\\
DR&=&\bigcup_{\substack{q+1+\ell-d_1-\alpha\leq i\leq q+1+\ell-d_1\\ q+1+\ell-d_1\leq j \leq q+1+\ell+d_1}}\inNoose_{\tilde{G}}(N_{i,j})
\end{eqnarray*}

 While re-routing $(\ell,d_1,d_3)$-up-left crossing segments, $(\ell,d_1,d_3)$-down-left crossing segments, 
$(\ell,d_1,d_3)$-up-right crossing segments and $(\ell,d_1,d_3)$-down-right crossing segments,  the new paths will use only vertices 
from  $UL,DL,UR$ and $DR$, respectively. Since the sets  $UL,DL,UR$ and $DR$ are pairwise disjoint, these re-routing will not intersect each other. 
Here, we explain how to re-route  $(\ell,d_1,d_3)$-up-left crossing segments towards satisfying the properties of $(\ell,\eta)$-untangled. Other cases are symmetric to the case  of $(\ell,d_1,d_3)$-up-left crossing segments and hence omitted.

Let us fix four integers $w=q-\ell-d_1$, $e=q-\ell+d_1$, $n=q-\ell+d_1+1$ and $s=q-\ell+d_1+\alpha$. We know that the number of $(\ell,d_1,d_3)$-up-left crossing segments is at most $\beta$ and any such segment $S$ will have its starting vertex in a noose $N_{i,w}$ and ending vertex in a noose $N_{i',e}$, where  $i,i'\in \{n,\ldots,s\}$.  Moreover, each internal vertex of $V(S)$ will be in a noose  $N_{i,j}$ where $i \in \{n,\ldots,s\}$ and $j\in  \{w+1,\ldots,e-1\}$. Let $S_1,\ldots,S_{\beta'}$ be all the  $(\ell,d_1,d_3)$-up-left crossing segments, where $\beta'\leq \beta$. For any $i\in [\beta']$, let $u_i$ be the start-vertex of $S_i$ and $u_i'$ be the end-vertex of $S_i$.  Now we fix a $j$-oriented column $P_j$ for all $j\in \{w,\ldots,e\}$. Now fix four vertices $a \in \inNoose(N_{n-1,w-1})$, $a' \in \inNoose(N_{n-1,e+1})$, $b\in \inNoose(N_{s+1,w-1})$   and $b'\in \inNoose(N_{s+1,e+1})$. Let $Q_A$ be a $a$-$a'$ path in $\bigcup_{ w-1\leq j \leq e+1} \inNoose(N_{n-1,j})$ 
such that for any $j\in \{w,\ldots,e\}$, $V(P_j)\cap V(Q_A)$ appears consecutively in 
the path $Q_A$ and forms a subpath of $P_j$.   
Let  $Q_B$ be a $b$-$b'$ path  
in $\bigcup_{ w-1\leq j \leq e+1} \inNoose(N_{s+1,j})$ 
such that for any $j\in \{w,\ldots,e\}$, $V(P_j)\cap V(Q_B)$ appears consecutively in 
the path $Q_B$ and forms a subpath of $P_j$.  
Let $C$ be the unique cycle formed $P_w$, $Q_B$, $P_{e}$ 
and $Q_A$. Let $p_j$ be the last vertex of $P_j$ which intersects $Q_A$ for any $j\in \{w,\ldots,e\}$. 

Now we claim that for any $i\in [\beta']$, $V(S_i)$ is contained in the inner face of $C$. 
Notice that all the internal vertices of $V(S_i)$ is strictly contained in the interior face of $C$. 
Hence for any $i\in [\beta']$, $V(S_i)$ is contained in the interior face of $C$. 
Also, for any $i\in [\beta']$, $S_i$ intersects neither with $Q_A$ nor with $Q_B$.
This implies that $u_i$ is either in the  interior face of $C$ or intersects with $P_w$ and $u_i'$ is either in the  interior face of $C$ or intersects with $P_e$.   
Since ${\cal S}$ is an $(\ell,d_1)$-terminal free solution, the set of paths $\{S_1,\ldots,S_{\beta'}\}$ 
is pairwise vertex disjoint.
For each $i\in [\beta']$, let $R_i$, $R_i'$ be  shortest paths from $u_i$ to $P_w$ in $\tilde{G}[\bigcup_{i}\inNoose(N_{i,w})]$ and $u_i'$ to $P_e$ in $\tilde{G}[\bigcup_{i}\inNoose(N_{i,e})]$, respectively, 
such that $\vert \bigcup_{i\in [\beta']} E(R_i)\cup E(R_i')\vert$ is minimized among all such potential choices.  
Let $X_i=R_iS_iR_i'$.  Let $x_i$ and $x_i'$ be the start-vertex and end-vertex of 
$X_i$, respectively.  

For any $i\in [\beta']$, let $C_i$ be the cycle formed by $X_i,P_e,Q_A$ and $P_w$.
Notice that $(a)$ any point {\em close} to $X_i$ and at the {\em left side} of the curve corresponding 
to the path $X_i$ {\em from} $x_i$ to $x_i'$, and $(b)$   
any point {\em close} to $Q_A$ and at the {\em right side} of the curve corresponding 
to the path $Q_A$ {\em from} $a$ to $a'$ are in the strict interior face of $C_i$.

\begin{observation}
\label{obs:intcontain}
For any $i,j\in [\beta']$, $i\neq j$, interior face of $C_i$ is contained in the interior face of 
$C_j$ or vice versa. 
\end{observation}

\begin{proof}

We first prove that either $R_i$ and $R_j$ are disjoint or $R_i=RA_i$ and $R_j=RA_j$  
such that $A_i$ and $A_j$ intersects only at the starting vertex.  Suppose not, then 
there exist a cycle containing $R_i$ and $R_j$. This will contradict the minimality assumption of 
$\vert \bigcup_{i\in [\beta']} E(R_i)\cup E(R_i')\vert$. 
By similar arguments we can show that  either $R_i'$ and $R_j'$ are disjoint or $R_i'=R'A_i'$ and $R_j'=R'A_j'$  
such that $A'_i$ and $A'_j$ intersects only at the starting vertex.
Moreover, we know that $S_i$ and $S_j$ are vertex disjoint. As a result the paths 
$X_i=R_iS_iR_i'$ and $X_j=R_jS_jR_j'$ will have one of the following forms. 
\begin{itemize}
\item $X_i$ and $X_j$ are vertex disjoint, or 
\item $X_i=W_iZ$, $X_j=W_jZ$ such that $W_i$ and $W_j$ intersects only at the end-vertex, or 
\item $X_i=ZW_i$, $X_j=ZW_j$ such that $W_i$ and $W_j$ intersects only at the start-vertex, or 
\item $X_i=Z_1W_iZ_2$, $X_j=Z_1W_jZ_2$ such that $W_i$ and $W_j$ intersects only at the start-vertex and end-vertex. 
\end{itemize}
Notice that any point to the right side of the curve $Q_A$ and close to $Q_A$ is part of both $C_i$ and $C_j$. 
Also, since $X_i$ and $X_j$ have the above mentioned properties, the claim follows. 
\end{proof}

Due to \autoref{obs:intcontain}, there is a linear  order $\subset$ over $\{C_1,\ldots,C_{\beta'}\}$. 
Without loss of generality assume that 
$$C_1\subset C_2\subset\ldots \subset C_{\beta'}.$$

Let $P$ be a path. Recall that $P_j$ is  a $j$-oriented column for some $j\in \{w,\ldots,e\}$. Then, 
if $P$ intersects with $P_j$, 
 then $\topm(P,P_j)$
 is defined as the first vertex on $P_j$ that belongs to $P$.

\begin{claim}
\label{claim:top}
Let $j\in \{w,\ldots,e\}$.
For any $i,i'\in [\beta']$ and $i<i'$, $\topm(S_i,P_j)$ precedes $\topm(S_{i'},P_j)$. 
\end{claim}
\begin{proof}
Notice that $C_i\subset C_{i'}$ and point close to $p_j$, but after $p_j$ in the curve 
corresponding to $P_j$ is contained in $C_i$. 
Now follow the curve  corresponding to the oriented column $P_j$ from $p_j$ and since $C_i\subset C_{i'}$ and $S_i$ is vertex disjoint from $S_j$, this curve first hits the boundary of $C_i$ strictly before the boundary of $C_{i'}$. 
\end{proof}

Since $X_i=R_iS_iR_i'$ and $R_i$ and $R_i'$ are paths $\tilde{G}[\bigcup_{j}\inNoose(N_{j,w})]$ and  $\tilde{G}[\bigcup_{j}\inNoose(N_{j,e})]$, respectively, we have that 
if $v$ is a vertex in $X_i$ which intersect with $P_{q-\ell-d_3+i}$, then $v\in V(S_i)$.  
Let $t_i=\topm(S_i,P_{q-\ell-d_3+i})=\topm(X_i,P_{q-\ell-d_3+i})$ and let $P'_i$ be the subpath of $P_{q-\ell-d_3+i}$ from $p_{q-\ell-d_3+i}$ to $t_i$ for all $i\in [\beta']$. (Recall that $p_{q-\ell-d_3+i}$ is the last vertex of $P_{q-\ell-d_3+i}$ which intersects $Q_A$.)  Let $S_i'$ be the subpath of $S_i$ from $u_i$ to $t_i$.

\begin{claim}
\label{claim:ul1}
There is a set of  vertex disjoint paths $\{W_1,\ldots,W_{\beta'}\}$ with the following conditions.
\begin{enumerate}
\item For all $i\in [\beta']$, $W_i$ is a path from $u_i$ to $t_i$, $\topm(W_i,P_{q-\ell-d_3+i})=t_i$ and $E(W_i)\subseteq E(S_i)\cup \left( \bigcup_{w\leq j\leq e} E(P_{j})\right)$. 
\item For all $i\in [\beta']\setminus \{1\}$, $W_i$ does not intersect with $P'_j$ for any $j<i$.
\item For all $i\in [\beta']\setminus \{1\}$, $W_i$ is in the outer face of the cycle formed by $P_w,R_{i-1}, W_{i-1},P'_{i-1}$ and $Q_A$.  
\item  For all $i\in [\beta']$, $W_i$ is a path in the graph  $\tilde{G}[\bigcup_{\substack {n\leq i\leq s \\ w \leq j\leq e}} \inNoose_{\tilde{G}}(N_{i,j}) ]$. 
 \item For all $i\in [\beta']$, $W_i$ does not intersects with $P_{q-\ell-d_3+j}$ for any $j>i$. 
\end{enumerate}
\end{claim}

\begin{proof}
First we show that there exists a set of paths satisfying conditions $(1)-(4)$. 
Consider the set of paths $\{S'_1,\ldots, S'_{\beta'}\}$.  
From the definition of $t_i$ and $S_i'$ for all $i\in [\beta']$, condition $(1)$ follows. 
Condition $(2)$ follows from \autoref{claim:top} and the definition of $P_j'$ for all $j\in [\beta']$.   
Condition $(3)$ follows from \autoref{obs:intcontain}. 
Since $\{S_1,\ldots,S_{\beta'}\}$ is the set of $(\ell,d_1,d_3)$-crossing segments of a $(\ell,d_1,\alpha)$-vacant solution, the 
condition $(4)$ follows for the  paths $S_1',\ldots,S'_{\beta'}$.

Among the set of paths satisfying properties $(1)-(4)$, let ${\cal W}=\{W_1,\ldots,W_{\beta'}\}$ be a set of paths which 
minimizes the edges on these paths from $E(\tilde{G})\setminus  \left(\bigcup_{w\leq j\leq e} E(P_{j})\right)$. 
We claim that condition $(5)$ holds for the paths $W_1,\ldots,W_{\beta'}$. 
For the sake of contradiction, suppose condition $(5)$ is false for the paths $W_1,\ldots, W_{\beta'}$. 
Let $i$ be the least integer such that $W_i$ intersects with $P_{q-\ell-d_3+j}$ for some $j>i$.
Let 
$C_R$ be the cycle formed by $P_{q-\ell-d_3+i},P_e,Q_A$ and $Q_B$ and $C'$ be the cycle formed by $R_i,W_i,P_i',Q_A$ and $P_w$. Since $t_i=\topm(W_i,P_{q-\ell-d_3+i})$ and $W_i$ intersects with $P_{q-\ell-d_3+j}$ for some $j>i$,  there is a vertex $x$ of $W_i$ which belongs to $P_{q-\ell-d_3+i}$, and the second edge $e$ incident to $x$ on the {\em oriented} path $W_i$ from $u_i$ to $t_i$ is in the interior face of the cycle $C_R$ (notice that $e\notin E(P_{q-\ell-d_3+i})$). Let $x^{\star}$ be the first such vertex on $W_i$ and $e^{\star}_2$ be the second edge incident on $x^{\star}$ in $W_i$. Let $e^{\star}$ be the first edge incident on $x^{\star}$ in the oriented path $P_{q-\ell-d_3+i}$. 

\smallskip
\noindent
{\bf Case 1: $e^{\star} \notin E(W_i)$.}  
Then $e^{\star}$ is in the left side of the curve corresponding to the path $W_i$ from $u_i$ to $t_i$ and hence, 
$e^{\star}$ is in the inner face of $C'$. Since $t_i=\topm(W_i,P_{q-\ell-d_3+i})$ appears before $x^{\star}$ on $P_{q-\ell-d_3+i}$ and 
$t_i$ is the last vertex of $W_i$, there is a subapth $P^{\star}$ of $P_{q-\ell-d_3+i}$ with one endpoint $x^{\star}$, other endpoint on $V(W_i)\cap V(P_{q-\ell-d_3+i})$,  which is either $t_i$ or after $t_i$ on oriented path $P_{q-\ell-d_3+i}$ (call it $y^{\star}$) such that 
$V(P^{\star})$ is contained in the interior face of $C'$. Moreover the subpath $W^{\star}$ of $W_i$ from $x^{\star}$ to $y^{\star}$ contains the edge $e_2^{\star}$. That is, $(a)$ $E(W^{\star})\cap (E(\tilde{G})\setminus  \left(\bigcup_{w\leq j\leq e} E(P_{j})\right))\neq \emptyset$. Let $W^f$ be the subpath of $W_i$ from $u_i$ to $x^{\star}$ and let $W^{l}$ is the subpath 
of $W_i$ from $y^{\star}$ to $t_i$. Now consider the path $W'_i=W^fP^{\star}W^l$.   For any $i'<i$, $W_i$ does not intersect with $W'_i$, because condition (4) holds for all $i'<i$ (by the choice of $i$). Since $P^{\star}$ is in the interior face of $C'$, by condition $(3)$, $W_j$ is internally vertex disjoint from $W_i'$ for any $j>i$. This implies that ${\cal W}'=\{W_1,\ldots,W_{i-1},W_i',W_{i+1},W_{\beta'}\}$ is a set of vertex disjoint paths. Moreover, since $V(P^{\star})$ are after the vertex $t_i$ on the oriented path $P_{q-\ell-d_3+i}$ and $E(P^{\star})\subseteq E(P_{q-\ell-d_3+i})$, we have that  conditions $(1)$ and $(2)$ hold for ${\cal W}'$. Since $P^{\star}$ is in the interior face of $C'$, ${\cal W}'$ satisfies condition $(3)$.  It is easy to see that $W_i'$ is a path in $\tilde{G}[\bigcup_{\substack {n\leq i\leq s \\ w \leq j\leq e}} \inNoose(N_{i,j}) ]$. This implies that ${\cal W}'$ satisfies condition $(4)$. Finally, because of statement $(a)$, the number edges in $\left(E(\tilde{G})\setminus  \left(\bigcup_{w\leq j\leq e} E(P_{j})\right) \right)  \cap E({\cal W}')$ 
is strictly less than the number of edges in $\left(E(\tilde{G})\setminus  \left(\bigcup_{w\leq j\leq e} E(P_{j})\right) \right) \cap E({\cal W})$. 
This is a contradiction to our assumption. 

\smallskip
\noindent
{\bf Case 2: $e^{\star} \in E(W_i)$.} Let $P_e^{\star}$ be the maximal subpath of $P_{q-\ell-d_3+i}$ such that $P_{e^{\star}}$ ends at $e^{\star}$ and $P_e^{\star}$ is a subpath of $W_i$. Let $z^{\star}$ be the first vertex in the path $P_e^{\star}$. Notice that $P_e^{\star}$ is a path from $z^{\star}$ to $x^{\star}$ and $z^{\star}$ appears before $x^{\star}$ in the oriented path $P_{q-\ell-d_3+i}$. In this case, the first edge $e_z^{\star}$  in  $P_{q-\ell-d_3+i}$, incident on $z^{\star}$, is in the inner face of $C'$ and $e_z^{\star}\notin E(W_i)$. Similar to Case 1, there is a subapth $P^{\star}$ of $P_{q-\ell-d_3+i}$ with one endpoint $z^{\star}$, other endpoint on $V(W_i)\cap V(P_{q-\ell-d_3+i})$,  which is either $t_i$ or after $t_i$ on oriented path$P_{q-\ell-d_3+i}$ (call it $y^{\star}$) such that 
$V(P^{\star})$ is contained in the interior face of $C'$. Moreover the subpath of $W_i$ from $z^{\star}$ to $y^{\star}$ contains the edge $e_2^{\star}$. Here we replace this subpath with $P^{\star}$. The correctness proof in this case is similar to that of Case~$1$. 
\end{proof}

Let $t_i'=\topm(S_i,P_{q-\ell+d_3-i})=\topm(X_i,P_{q-\ell+d_3-i})$ and let $P''_i$ be the subpath of $P_{q-\ell+d_3-i}$ from $p_{q-\ell+d_3-i}$ to $t_i'$ for all $i\in [\beta']$.  Let $S_i''$ be the subpath of $\overleftarrow{S_i}$ from $u_i'$ to $t_i'$.  
The proof of the following claim is by using arguments similar to that of \autoref{claim:ul1} and hence omitted. 

\begin{claim}
\label{claim:ul2}
There is a set of  vertex disjoint paths $\{Z_1,\ldots,Z_{\beta'}\}$ with the following properties.
\begin{enumerate}
\item For all $i\in [\beta']$, $Z_i$ is a path from $u_i'$ to $t_i'$, $\topm(Z_i,P_{q-\ell+d_3-i})=t_i'$ and $E(Z_i)\subseteq E(S_i)\cup \left( \bigcup_{w\leq j\leq e} E(P_{j})\right)$. 
\item For all $i\in [\beta']\setminus \{1\}$, $Z_i$ does not intersect with $P''_j$ for any $j<i$.
\item For all $i\in [\beta']\setminus \{1\}$, $Z_i$ is in the outer face of the cycle formed by $P_e,R_{i-1}', Z_{i-1},P''_{i-1}$ and $Q_A$.  
\item  For all $i\in [\beta']$, $Z_i$ is path in the graph induced on $\tilde{G}[\bigcup_{\substack {n\leq i\leq s \\ w \leq j\leq e}} \inNoose_{\tilde{G}}(N_{i,j}) ]$
 \item For all $i\in [\beta']$, $Z_i$ does not intersects with $P_{q-\ell+d_3-j}$ for any $j>i$. 
\end{enumerate}
\end{claim}

Using \autoref{claim:ul1} and \autoref{claim:ul2}, we reroute the segments $S_i$ as follows.
Let $\{W_1,\ldots,W_{\beta}'\}$ and $\{Z_1,\ldots, Z_{\beta}'\}$ 
be the set of paths specified by Claims~\ref{claim:ul1} and \ref{claim:ul2}, respectively. These paths are disjoint because $d_3>\beta+2$. 
 Let $J_i$ be an $i$-oriented row  for $i \in \{q-\ell+1,\ldots,q-\ell+\beta'\}$ and these paths will not intersect with 
$W_i$s and $Z_i$s because $\beta'< d_1$. 
For any $i\in [\beta']$, let $I_i$ be the unique path from $t_i$ to $t_i'$ in the tree $P'_i \cup P''_i \cup J_{q-\ell+i}$. 
The set of paths $\{I_1,\ldots,I_{{\beta}'}\}$ are pairwise vertex disjoint. 
Let $Y_i$ be the path $W_iI_i\overleftarrow{Z}_i$. 
That is, $Y_i$ is a path from $u_i$ to $u_i'$. The set of paths $\{Y_1,\ldots,Y_{\beta'}\}$ are pairwise 
vertex disjoint because of 
Claims~\ref{claim:ul1} and \ref{claim:ul2}. 
Thus, we replace each $(\ell,d_1,d_3)$-up-left crossing 
segment $S_i$ with a path $Y_i$. 
In a similar manner we replace each of the $(\ell,d_1,d_3)$-crossings in ${\cal S}$. 
This implies that for each $H \mbox{ in the $\delta^{\star}$-folio of } G$, 
we get a new witness for $H$ being a topological minor in $\tilde{G}$. 
Let ${\cal R}'=\{(H,\phi'_H,\varphi'_H) : H \mbox{ in the $\delta^{\star}$-folio of } G\}$ be the 
set constructed from ${\cal S}'$.

Now we prove that indeed ${\cal R}'$ is a representation of a solution to $(G,\delta,t,w',s')$, 
which is $(\ell,d_1)$-terminal free and $(\ell,\eta)$-untangled. Towards that we first 
prove that ${\cal R}'$ is a representation of a solution ${\cal R}$. Notice that ${\cal R}'$ 
is obtained from a representation ${\cal S}'$ of a $(\ell,d_1)$-terminal free solution ${\cal S}$ 
by replacing some subpaths of paths specified by ${\cal S}'$.  
Moreover these subpaths as well as its 
replacement fully lie inside the graph induced by the nooses in frames $\bigcup_{i\leq d_1}\fr[\ell-i]\cup \fr[\ell+i]$.
This implies that if  an edge $e'=\{u,v\}\in E(\tilde{G})\setminus E(G)$ is used in a replacement subpath, then there is a 
graph $G_i,i\in[k]$ such that no vertex from $V(G_i)\setminus V(G_0)$ is used by ${\cal S}'$ and hence by ${\cal R}'$ (see \autoref{obs:allpresent}). 
This implies that we can reroute the  path $u-v$ with a  path from $u$ to $v$ in $G_i$ with internal vertices 
being in $V(G_i)\setminus V(G_0)$  (see Condition~\ref{conditionsix} of \autoref{obs:cflatmore}).  
Hence, we have that indeed  ${\cal R}'$ is the representation of a solution ${\cal R}$ to $(G,\delta,t,w',s')$.  
Moreover in the subpath replacement process no terminal vertex is modified and since 
${\cal S}$ is $(\ell,d_1)$-terminal free solution, ${\cal R}$ is also a $(\ell,d_1)$-terminal free solution.  

Now we show that ${\cal R}$ is $(\ell,4\beta)$-untangled. Since ${\cal R}$ is $(\ell,d_1)$-terminal free and $d_1\geq 3$, it is also 
$(\ell,3)$-terminal free.  Notice that,   by the above construction for $(H,\phi_H',\varphi'_H)\in {\cal R}'$, 
the set of $(\ell,d_1,d_3)$-up-left crossing segments are $\{Y_1,\ldots,Y_{\beta'}\}$. 
Each of these segment intersect on one noose in $\fr[\ell]$, and moreover, these segments intersect different 
nooses from $\fr[\ell]$, because the vertices from any noose in $\fr[\ell]$, which belongs to $Y_i$ is from 
the path $J_{q-\ell+i}$. This implies that nooses in $\fr[\ell]$ used by ${\cal R}'$ are from the right-nooses and/or from the left-nooses of $\fr[\ell]$.  
Also, notice that the number of  $(\ell,d_1,d_3)$-up-left crossing segments of $(H,\phi_H',\varphi'_H)\in {\cal R}'$ is  $\beta'\leq \beta$, 
and ${\cal R}$ is a $(\ell,d_1,d_2,d_3)$-regret free solution (this follows from the fact that ${\cal S}$ is $(\ell,d_1,d_2,d_3)$-regret free).  
Since ${\cal R}$ is an $(\ell,d_1,d_2,d_3)$-regret free, the only $(\ell,d_1)$-segments 
which contains a vertex from a noose in $\fr[\ell]$ are $(\ell,d_1,d_3)$-crossing segments and the total number of such segments for $(H,\phi_H',\varphi'_H)\in {\cal R}'$ is upper bounded by $4\beta$ (from all the four corners).
Moreover, any $(\ell,d_1, d_3)$-crossing segment in ${\cal R}$ uses only one noose from $\fr[\ell]$.  
 As a result, condition $(1)$ of \autoref{def:untangledFrame} is satisfied.

Now we prove that condition $(2)$ of \autoref{def:untangledFrame} holds for ${\cal R}$. Let $v$ be a vertex in the image of $\varphi'_H$, 
such that  $v\in \inNoose_{\tilde{G}}(N)$ for some $\fr[\ell]$. Since ${\cal R}$ is $(\ell,3)$-terminal free solution $v\neq \phi'_H(h)$ for any $h\in V(H)$. Thus  $v$ is an internal vertex in $P=\varphi_H'(e)$ for some $e=\{h_1,h_2\}\in E(H)$. Since ${\cal R}$ is $(\ell,d_1,d_2,d_3)$-regret free, there is a subpath $P'$ of  $P$ containing $v$, whose endpoints are in $N$ and $N'$, where $N\in \fr[\ell+d_1]$ and $N'\in \fr[\ell-d_1]$. Let $u\in \inNoose(N')$ be one of the endpoint of $P'$. Let $P_1$ be the subpath of $P'$ with one endpoint $v$, the other endpoint in $\{\phi'_H(h_1),\phi'_H(h_2)\}$ and $u$ is an internal vertex in $P_1$.  Either an internal vertex of $P_1$, is a noose in $\fr[\ell]$,  or no internal vertex of $P_1$ is in a noose $N''$ for any $N''\in \fr[\ell]$. In the former case we have that $v$ is a vertex of an $\ell$-segement and in the later case we have that $v$ is a vertex of 
an $\ell$-dangling segment. This implies that ${\cal R}$ satisfies condition $(2)$ of \autoref{def:untangledFrame}. 


Now we show that ${\cal R}$ satisfies condition $(3)$ of \autoref{def:untangledFrame}. From the construction of ${\cal R}$, we know that any noose $N\in \fr[\ell]$ is hit by at most one $(\ell,d_1,d_3)$-crossing segment $Y$ of $(H,\phi_H',\varphi'_H)$ and we also know that ${\cal R}$ is a $(\ell,3)$-terminal free. The segment $Y$ is a path from a vertex from $\fr[\ell+d_1]$ to $\fr[\ell-d_1]$.  Moreover, the vertices from $\fr[\ell]$ in $Y$ form a subpath $Z$ of $Y$.  This path $Z$ will be part of a $\ell$-(dangling) segment. The arguments about its proof is similar to the one in the above paragraph. This completes the proof of the lemma. 
%
%
\end{proof}


\section{Taking Snapshots}\label{sec:snapshot}

We have proved the existence of an $(\ell,\eta)$-untangled solution. However, to find an irrelevant vertex, we have to consider the computational aspect of our arguments. In the next two sections, we exhibit an integer $\ell^{\star}<\ell$ that can be used to prove that {\em partial solutions realizable} in $G^{\star}_{\ell^{\star}}$ are the same as the those {\em realizable} in $G^{\star}_{\ell}$. Moreover, a partial solution in $G^{\star}_{\ell^{\star}}$ can be extended to a solution without using vertices from the up-nooses of $\fr[\ell^{\star}]$ (and some outermore frames). As a result, any vertex in any up-noose of $\fr[\ell^{\star}]$ is irrelevant.

Without loss of generality we assume that in an instance $(G,\delta,t,w',s')$  of \fFindFoliostar, $\rho_G(R(G))=[\vert R(G)\vert]$. 
Recall that $\delta^{\star}$ and $\boundary$ are fixed constants depending only on $\delta$. The set $A$ is a set of at most $t$ vertices 
which may have neighbours anywhere, including $G_0,\ldots,G_k$. Because of this we consider $3(\delta^{\star}+t)$-folios of $G$ instead of $\delta^{\star}$-folios while constructing partial solutions, whenever we use a vertex from $A$, all its neighbors will be considered as terminals. Below we define the notion of a snapshot, which is the sufficient  information needed for restriction of a tuple in a solution to $G^{\star}_{\ell}$.  

\begin{definition}[{\bf Snapshot}]\label{def:snapshot}
A {\em snapshot} is a tuple $(C,T,T',\rho,{\cal P}, f,{\sf out})$, 
where $T$ is a set of size at most $6(\delta^{\star}+t)$,  $T'\subseteq T$,  $\rho \colon T'\rightarrow [\boundary]$, ${\sf out} \colon T\setminus T' \rightarrow 2^{A}$, $C$ is a cycle (i.e, a graph which is a cycle), ${\cal P}$ is a subset of pairs of $T$ and $f$ is function from $V(C)$ to $T\cup [V(C)]$ such that for all $w\in[V(C)]$, $\vert f^{-1}(w)\vert \in \{0, 1\}$ and $\vert T\vert +\vert {\cal P}\vert \leq 3(\delta^{\star}+t)$.  
Furthermore, given $\eta\in\mathbb{N}$, an {\em $\eta$-snapshot} is a snapshot $(C,T,T',\rho,{\cal P}, f, {\sf out})$ where $|V(C)|\leq \eta$. 
Finally, $\SNP[\eta]$ denotes the set of all $\eta$-snapshots.
\end{definition}

\begin{definition}[{\bf Snapshot Equivalence}]\label{def:snapshotEquiv}
Let $(C,T,T',\rho,{\cal P}, f,{\sf out})$ and $(\widehat{C},\widehat{T},\widehat{T}',\widehat{\rho},\widehat{\cal P}, \widehat{f}, \widehat{\sf out})$ be two snapshots. These snapshots are {\em equivalent}, denoted by $(C,T,T',\rho,{\cal P}, f, {\sf out})\equiv (\widehat{C},\widehat{T},\widehat{T}',$ $\widehat{\rho},\widehat{\cal P}, \widehat{f},\widehat{\sf out})$, if $T=\widehat{T}$, ${\cal P}=\widehat{\cal P}$, $\rho(t)=\widehat{\rho}(t)$ for any $t\in T'=\widehat{T}'$, ${\sf out}(t)=\widehat{{\sf out}}(t)$ for any $t\in T\setminus T'$, 
and there exists an isomorphism $g: V(C)\rightarrow V(\widehat{C})$ such that for all $v\in V(C)$, $f(v)=\widehat{f}(g(v))$. 
\end{definition}

\begin{observation}\label{obs:numSnapshots}
The number of distinct $\eta$-snapshots is upper bounded by $(t+\delta^{\star})^{\OO(t^2+t \delta^{\star})}\cdot\eta^{\OO(\eta)}$. 
\end{observation}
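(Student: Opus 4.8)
\textbf{Plan of proof for Observation~\ref{obs:numSnapshots}.}
The plan is to bound the number of choices for each component of a snapshot tuple $(C,T,T',\rho,{\cal P},f,{\sf out})$ subject to the size constraints in Definition~\ref{def:snapshot}, and then multiply. Recall that $\vert T\vert \le 6(\delta^\star + t)$, $\vert T\vert + \vert {\cal P}\vert \le 3(\delta^\star + t)$, $T'\subseteq T$, $\rho\colon T'\to [\boundary]$ where $\boundary = 16\delta^2 = (\delta^\star)^{\OO(1)}$, ${\sf out}\colon T\setminus T'\to 2^A$ with $\vert A\vert \le t$, $C$ is an abstract cycle with $\vert V(C)\vert \le \eta$, ${\cal P}$ is a set of (unordered) pairs from $T$, and $f\colon V(C)\to T\cup [\,\vert V(C)\vert\,]$ with $\vert f^{-1}(w)\vert \le 1$ for each label $w$ in the $[\,\vert V(C)\vert\,]$-part.

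First I would note that since snapshots are counted up to the equivalence of Definition~\ref{def:snapshotEquiv} (which identifies isomorphic cycles together with compatible $f$), we may assume $T$ is a fixed set of abstract symbols of size $m \le 6(\delta^\star+t)$; there are $\OO(\delta^\star+t)$ choices for the integer $m$ itself, which is absorbed into the final bound. Then: the number of choices for $T'\subseteq T$ is $2^m = 2^{\OO(\delta^\star+t)}$; the number of maps $\rho\colon T'\to[\boundary]$ is at most $\boundary^{m} = (\delta^\star)^{\OO(\delta^\star+t)}$; the number of maps ${\sf out}\colon T\setminus T'\to 2^A$ is at most $(2^t)^{m} = 2^{\OO(t(\delta^\star+t))} = 2^{\OO(t^2 + t\delta^\star)}$; and the number of choices for ${\cal P}$, a subset of the at most $\binom{m}{2} = \OO((\delta^\star+t)^2)$ unordered pairs, is at most $2^{\OO((\delta^\star+t)^2)} = 2^{\OO(t^2 + (\delta^\star)^2 + t\delta^\star)}$. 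For the cycle $C$ with $\vert V(C)\vert = c \le \eta$, up to isomorphism there is essentially one cycle on $c$ labelled-only-by-position vertices; more conservatively, there are at most $\eta$ choices for $c$ and then the vertex set can be taken as $[c]$, so the combinatorial type of $C$ contributes a factor $\OO(\eta)$. Finally, for the function $f\colon V(C)\to T\cup[c]$: each of the $c\le\eta$ vertices of $C$ has at most $m + c \le 6(\delta^\star+t) + \eta$ possible images, so the number of such functions is at most $(6(\delta^\star+t)+\eta)^{\eta} = (\delta^\star + t)^{\OO(\eta)}\cdot \eta^{\OO(\eta)}$ (the injectivity constraint on the $[c]$-part only reduces this count).

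Multiplying these bounds, the dominant factors are $(\delta^\star+t)^{\OO(\delta^\star+t)}$ from $\rho$, $2^{\OO(t^2+t\delta^\star)}$ from ${\sf out}$ and ${\cal P}$, and $(\delta^\star+t)^{\OO(\eta)}\eta^{\OO(\eta)}$ from $f$. Since $2^{\OO(t^2+t\delta^\star)} = (t+\delta^\star)^{\OO(t^2+t\delta^\star)}$ and $(\delta^\star+t)^{\OO(\delta^\star+t)}$ is absorbed into $(t+\delta^\star)^{\OO(t^2+t\delta^\star)}$, the whole product is bounded by $(t+\delta^\star)^{\OO(t^2+t\delta^\star)}\cdot \eta^{\OO(\eta)}$, which is exactly the claimed bound $(t+\delta^{\star})^{\OO(t^2+t \delta^{\star})}\cdot\eta^{\OO(\eta)}$.

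The only mildly delicate point — and the ``main obstacle'', though it is minor — is to check that counting up to snapshot equivalence does not hide any additional multiplicative blow-up: one must verify that fixing a canonical representative for the abstract set $T$ and a canonical cyclic vertex-ordering for $C$ loses nothing, i.e.\ that every equivalence class contains a representative of this canonical form. This is immediate from Definition~\ref{def:snapshotEquiv}, since the equivalence explicitly allows an arbitrary relabelling isomorphism $g\colon V(C)\to\widehat C$ while forcing $T=\widehat T$, $T'=\widehat T'$, ${\cal P}=\widehat{\cal P}$, and agreement of $\rho$ and ${\sf out}$ on the nose. Hence the count above is an over-count of the number of equivalence classes, which suffices for an upper bound, and the observation follows.
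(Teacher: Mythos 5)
Your proof follows essentially the same route as the paper's: a direct product of per-component counts for $T,T',\rho,{\sf out},{\cal P},C,f$, with the cycle and $f$ contributing the $\eta^{\OO(\eta)}$ factor and the rest contributing the $(t+\delta^\star)^{\OO(t^2+t\delta^\star)}$ factor. Your treatment of the equivalence classes (fixing a canonical $T$ and a canonical cyclic ordering of $C$, and noting that over-counting representatives is harmless for an upper bound) is actually more careful than the paper's, which does not discuss it.

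There is, however, one step that does not go through as written: your bound on the number of choices for ${\cal P}$. You count all subsets of the $\binom{m}{2}$ unordered pairs of $T$, obtaining $2^{\OO((\delta^\star+t)^2)}=2^{\OO(t^2+(\delta^\star)^2+t\delta^\star)}$, and then claim this is absorbed into $(t+\delta^\star)^{\OO(t^2+t\delta^\star)}$. It is not: the latter equals $2^{\OO((t^2+t\delta^\star)\log(t+\delta^\star))}$, which is asymptotically smaller than $2^{(\delta^\star)^2}$ whenever $t$ is small compared to $\delta^\star/\log\delta^\star$ (e.g.\ $t$ constant and $\delta^\star$ large), and the observation asserts no relation between $\eta$ and $\delta^\star$ that could rescue the absorption via the $\eta^{\OO(\eta)}$ factor. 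The fix is to use the cardinality constraint from Definition~\ref{def:snapshot}, namely $\vert T\vert+\vert{\cal P}\vert\le 3(\delta^\star+t)$: then ${\cal P}$ ranges over subsets of pairs of size at most $3(\delta^\star+t)$, giving at most $\bigl(\binom{m}{2}+1\bigr)^{3(\delta^\star+t)+1}=(t+\delta^\star)^{\OO(t+\delta^\star)}$ choices, which does fit. (Separately, note that both you and the paper absorb the $\rho$-count $\boundary^{\vert T'\vert}=(\delta^\star)^{\OO(t+\delta^\star)}$ into $(t+\delta^\star)^{\OO(t^2+t\delta^\star)}$, which strictly speaking requires $t\ge 1$; since the paper makes the identical move, I do not hold that against you, but be aware the exponent is really $\OO((t+\delta^\star)^2)$ if one wants a bound valid for $t=0$.)
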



\begin{proof}
The number of choices for $T,T'$, and $\rho$ in snapshots is bounded by $(\boundary+2)^{\OO(t+ \delta^{\star})}$.
 The number of choices of $C$ in $\eta$-snapshots is at most $\eta$. The number of choices for $f$ in $\eta$-snapshots is bounded by $(6t+ 6\delta^{\star}+\eta)^{\eta}$.  The quantity $(6t+ 6\delta^{\star}+\eta)^{\eta}$ is upper bounded by $\eta^{\OO(\eta)}$, when $t + \delta^{\star}\leq \eta$, and otherwise it is upper bounded $(t+ \delta^{\star})^{\OO(t+ \delta^{\star})}$. The number of choices for ${\sf out}$ is at most $2^{t (6t+6\delta^{\star})}$. Therefore  the number of distinct $\eta$-snapshots is upper bounded by  $(t+\delta^{\star})^{\OO(t^2+ t\delta^{\star})}\cdot \eta^{\OO(\cdot \eta)}$
\end{proof}



The following definition is required to define partial solution. 

\begin{definition}
Let $H$ be a rooted graph. 
First, $\sd(H)$ denotes the set of rooted graphs that can be obtained from a subgraphs  of $H$ by adding at most $d_H(v)$ pendant vertices on each vertex $v$ of $H$. For simplicity, the vertices in a graph in $\sd(H)$ originating from vertices in $H$ are assumed to have the identities of the corresponding vertices in $H$.  Second, $\sd^\eta(H)$ denotes the set of rooted graphs that can be obtained from a graph in $\sd(H)$ by adding to it at most $\eta$ new connected components that are each a path on two vertices. Moreover, the roots of the graphs in $\sd(H)$ and $\sd^\eta(H)$ are the roots of $H$ present in these graphs.
\end{definition}



Recall the definition of $G^{\star}_{\ell}$ and $\tilde{G}_{\ell}$ from \autoref{def:gltilde}.

\begin{definition}[{\bf Partial Solution}]\label{def:partial}
Let $(G,\delta,t,w',s')$ be an instance of \fFindFoliostar.
Let $(M,{\cal N})$ be a $2q$-workspace in $\tilde{G}$, $\ell\in[q-1]_0$ and $\eta\in\mathbb{N}$.
An {\em $(\ell,\eta)$-partial solution} is a tuple $(H, H',\phi,\varphi, {\sf out})$ with the following properties. 
\vspace{-0.5em}
\begin{enumerate}
\itemsep0em
\item $H$ is a rooted graph with $|E(H)|+\isolated(H)\leq 3(\delta^{\star}+t)$ and $H'$ is a graph in $\sd^\eta(H)$. 
\item  $H'$ is a topological  minor in $G^{\star}_{\ell}$ witnessed by some $(\phi_{H'},\varphi_{H'})$,  $(H',\phi,\varphi)$ is the representation of $(H',\phi_{H'},\varphi_{H'})$ in $\tilde{G}_{\ell}$ and ${\sf out} \colon V(H') \setminus R(H')\rightarrow 2^A$ with the following conditions. 
	\begin{enumerate}
	\item Every vertex in $\image(\varphi)\setminus (\phi(V(H')\setminus V(H))$ belongs to $V(\tilde{G}_{\ell-1})$. 
	\item Every vertex in $\phi(V(H')\setminus V(H))$ belongs to $\inNoose_{\tilde{G}}(N)$ for some $N\in\fr[\ell]$ and $N$ is either a left-noose or a right noose in $\fr[\ell]$. 
\item For any $N\in\fr[\ell]$, $|\inNoose_{\tilde{G}}(N)\cap \phi(V(H'))|\leq 1$. 
\item For any $v\in (V(H')\cap V(H)) \setminus R(H')$, ${\sf out}(v)\subseteq A\cap (N_G(\phi(v)))$.
	\end{enumerate}
\end{enumerate}
\end{definition}


The validity of the definition below implicitly relies on the properties of an $(\ell,\eta)$-partial solution.

\begin{definition}[{\bf Camera}]\label{def:camera}
Let $(G,\delta,t,w',s')$ be an instance of \fFindFoliostar. Let $(M,{\cal N})$ be a $2q$-workspace in $\tilde{G}$, $\ell\in[q-1]_0$ and $\eta\in\mathbb{N}$. The {\em $(\ell,\eta)$-camera} is the function $\camera_{\ell,\eta}$ whose domain is the set of all $(\ell,\eta)$-partial solutions, whose codomain is $\SNP[\eta]$, and which maps an $(\ell,\eta)$-partial solution $(H,H',\phi,\varphi,{\sf out}')$ to an $\eta$-snapshot $(C,T,T',\rho,{\cal P}, f, {\sf out})$ as follows.
\vspace{-0.5em}
\begin{itemize}
\itemsep0em
\item $T=V(H')\cap V(H)$ and $T'=T\cap R(H)$.
\item ${\sf out}={\sf out}'$. 
\item ${\cal P}=E(H')\cap E(H)$.
\item $C$ is a cycle on $s=|V(H')\setminus V(H)|$ vertices.
\item Denote $C=v_1-v_2-\cdots-v_s-v_1$. Moreover, let $N_1,N_2,\ldots,N_s$ denote that nooses in $\fr[\ell]$ that enclose vertices in $\phi(V(H')\setminus V(H))$, ordered according to $<$. For all $i\in[s]$, $u^G_i$ denotes the (unique) vertex in $\phi(V(H')\setminus V(H))$ enclosed by $N_i$, $u^H_i=\phi^{-1}(u^G_i)$, and $w^H_i$ denotes the other endpoint of the (unique) edge in $H'$ incident to $u^H_i$. Then, for all $i\in[s]$, $f$ is defined as follows.
  \vspace{-0.5em}
	\begin{itemize}
	\itemsep0em
	\item If $w^H_i\in T$, then $f(v_i)=w^H_i$
	\item Else, let $j\in[s]$ be the (unique) index such that $\phi(w^H_i)=u^G_j$.  Then 
$f(v_i)=v_j$.
	\end{itemize}
\end{itemize}
\end{definition}

\begin{observation}
\label{obs:snapshotiso}
Let $(G,\delta,t,w',s')$ be an instance of \fFindFoliostar. Let $(M,{\cal N})$ be a $2q$-workspace in $\tilde{G}$, $\ell\in[q-1]_0$ and $\eta\in\mathbb{N}$. 
Let  $(H,H',\phi,\varphi, {\sf out})$  be an $(\ell,\eta)$-partial solution  and $(C,T,T',\rho,{\cal P}, f, {\sf out})=\camera_{\ell,\eta}(H,H',\phi,\varphi,{\sf out})$. Let $J$ be a rooted graph on the vertex set $T\cup V(C)$, with roots $R(J)=T'$ and the edges set 
$$E(J)={\cal P}\cup \{\{w_1,w_2,\}\colon w_1,w_2\in V(C), f(w_1)=w_2\}\cup \{\{w,z\}\colon w\in V(C),z\in T, f(w)=z\}.$$ 
Then there is an isomorphism ${\sf iso}$ from $J$ to $H'$ such that ${\sf iso}$ restricted on $T$ is the identity map.  
\end{observation}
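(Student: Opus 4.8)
The plan is to exhibit the isomorphism explicitly and verify it by a short case analysis driven by the structure of $\sd^\eta(H)$. Recall from \autoref{def:camera} that $T=V(H')\cap V(H)$, that $C=v_1-v_2-\cdots-v_s-v_1$ with $s=|V(H')\setminus V(H)|$, and that the camera records, for each $i\in[s]$, the noose $N_i\in\fr[\ell]$ enclosing a vertex $u^G_i\in\phi(V(H')\setminus V(H))$, the vertex $u^H_i=\phi^{-1}(u^G_i)$, and the vertex $w^H_i$ that is the other endpoint of the unique edge of $H'$ incident to $u^H_i$. I will define ${\sf iso}\colon V(J)\to V(H')$ by ${\sf iso}(z)=z$ for $z\in T$ and ${\sf iso}(v_i)=u^H_i$ for $i\in[s]$. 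First I would check this is a bijection: $V(H')$ partitions as $T\sqcup(V(H')\setminus V(H))=T\sqcup\{u^H_1,\dots,u^H_s\}$; the $u^G_i$ are pairwise distinct (they lie in pairwise disjoint nooses, by condition 2(c) of \autoref{def:partial}), so the $u^H_i$ are pairwise distinct since $\phi$ is injective; hence $v_i\mapsto u^H_i$ is a bijection onto $V(H')\setminus V(H)$ and ${\sf iso}$ is a bijection.

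Next I would record the two structural facts about $H'\in\sd^\eta(H)$ that make the argument go through: (i) every vertex of $V(H')\setminus V(H)$ has degree exactly $1$ in $H'$ (it is either a pendant added to a vertex of $H$ or a leaf of a two-vertex path component), so each $w^H_i$ is well defined; and (ii) every edge of $H'$ with both endpoints in $V(H)\cap V(H')=T$ is already an edge of $H$, hence lies in $E(H')\cap E(H)={\cal P}$ (adding pendants or fresh two-vertex components creates no new edges among vertices of $H$). I would also note the symmetry of $f$ on $V(C)$: if $f(v_i)=v_j$ then $w^H_i=u^H_j$, and since $u^H_j\notin T$ its unique incident edge in $H'$ has other endpoint $u^H_i$, so $w^H_j=u^H_i$ and $f(v_j)=v_i$. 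Using these, the edge check is routine in both directions. Forward: an edge of ${\cal P}$ maps to itself, which lies in $E(H')$; an edge $\{v_i,v_j\}$ of $J$ with $f(v_i)=v_j$ corresponds to $\{u^H_i,u^H_j\}=\{u^H_i,w^H_i\}\in E(H')$; an edge $\{v_i,z\}$ of $J$ with $z=f(v_i)\in T$ corresponds to $\{u^H_i,w^H_i\}\in E(H')$. Backward: an edge $\{a,b\}$ of $H'$ with $a,b\in T$ lies in ${\cal P}$ by (ii); if exactly one endpoint, say $b=u^H_i$, is outside $T$, then by (i) $a=w^H_i\in T$, so $f(v_i)=a$ and $\{v_i,a\}\in E(J)$; if both endpoints are outside $T$, say $u^H_i$ and $u^H_j$, then by (i) they form a two-vertex component, $w^H_i=u^H_j\notin T$, so $f(v_i)=v_j$ and $\{v_i,v_j\}\in E(J)$. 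Finally, since $R(H)\subseteq V(H)$ we have $T'=T\cap R(H)=V(H')\cap R(H)=R(H')$, so ${\sf iso}$ maps $R(J)=T'$ identically onto $R(H')$, and because ${\sf iso}$ is the identity on $T$ the root labels agree; thus ${\sf iso}$ is an isomorphism of rooted graphs restricting to the identity on $T$.

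I do not expect a genuine obstacle here: the statement is essentially a ``definition unwinding'' lemma, and the only thing requiring care is the complete case analysis of edges of $H'$ (three cases according to how many endpoints lie in $T$) together with the two facts about $\sd^\eta(H)$ and the symmetry of $f$ on $V(C)$. The mildly delicate point is making sure that the map from nooses to the cycle vertices $v_1,\dots,v_s$ induced by the ordering $<$ is consistently used on both sides so that $f$ and the incidence structure of $H'$ line up; since ${\sf iso}$ is pinned down by $v_i\mapsto u^H_i$, this is automatic once one tracks the indices carefully.
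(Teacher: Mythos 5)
Your proposal is correct and follows essentially the same route as the paper's proof: define ${\sf iso}$ as the identity on $T$ and send $v_i$ to the unique vertex of $V(H')\setminus V(H)$ whose image lies in $N_i$, then verify the edge correspondence by a case analysis on how many endpoints lie in $T$. Your version is in fact somewhat more complete than the paper's sketch — you make explicit the bijectivity check, the two structural facts about $\sd^\eta(H)$ that justify the backward direction, the symmetry of $f$ on $V(C)$, and the agreement of roots — all of which the paper leaves implicit.
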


\begin{proof}
As $V(H')\cap V(H)=T$, we set ${\sf iso}$ restricted on $T$ to be the identity map. Now we explain how to map vertices in  $V(C)$. 
Let $C_1=v_1-v_2-\ldots-v_{s}-v_1$ and let $N_1,N_2,\ldots,N_{s}$ denote the nooses in $\fr[\ell]$ that enclose vertices in $\phi'(V(H')\setminus V(H))$, ordered according to $<$ and the function $f$ is defined as mentioned in \autoref{def:camera}.   Let $u_i'$ be the (unique) vertex in $V(H')\setminus V(H)$,  such that $\phi(u'_i)\in N_i$ for all $i\in [s]$. Then ${\sf iso}(v_i)=u_i'$. 

Now we prove that ${\sf iso}$ is indeed an isomorphism from $J$ to  $H'$.  Let $x,y \in V(J)$, $x'={\sf iso}(x)$, and $y'={\sf iso}(y)$. Suppose $x,y \in T$. Then  $\{x,y\}\in E(J)$ if and only if $\{x',y'\}\in E(H')$, because ${\cal P}=E(H')\cap E(H)$.  Suppose $x,y \in V(C)$.  Let $N_i$ and $N_j$ be the two nooses such that $\phi(x)\in N_i$ and  $\phi(y)\in N_j$, respectively. There is an edge between $x$ and $y$ in $J$ if and only if $f(v_i)=v_j$. From the last condition in \autoref{def:camera}, we conclude that there is an edge between $\phi(x)$ and $\phi(y)$ in $H'$ if and only if $f(v_i)=v_j$. Suppose $x\in V(C)$ and $y\in T$ and let $N_i$ be the noose in $\fr[\ell]$ such that $\phi(x)\in N_i$. There is an edge between $x$ and $y$ in $J$ if and only if $f(x)=y$. From the last condition in \autoref{def:camera}, we conclude that there is an edge between $\phi(x)$ and $\phi(y)=y$ in $H'$ if and only if $f(v_i)=y$. This completes the proof of the observation. 
%
%
\end{proof}

As mentioned before, whenever a tuple in a solution uses a vertex in $A$, we consider its closed neighborhood are terminal. This fact is formalized in the below definition.  

\begin{definition}[{\bf Apex-Addition}]\label{def:apexaddition}
Let $(G,\delta,t,w',s')$ be an instance of \fFindFoliostar. Let $(M,{\cal N})$ be a $2q$-workspace in $\tilde{G}$, $\ell\in[q-1]_0$ and $\eta\in\mathbb{N}$. Let ${\cal S}'$ be the representation of  a solution of $(G,\delta,t,w',s')$. Let $(H,\phi,\varphi)\in {\cal S}'$. 
The {\em apex-addition} of $(H,\phi,\varphi)$, denoted by $\apex(H,\phi,\varphi)$, is a tuple $(H^{\star},\phi^{\star},\varphi^{\star})$ defined as follows. 
	\begin{itemize}
	\itemsep0em
         \item Let $G'$ be the realization of topological minor $H$ in $G\cup \tilde{G}$, witnessed by $(\phi,\varphi)$. Let $N_{G'}[A\cap V(G')]\setminus \phi(V(H))=Z=\{z_1,\ldots,z_b\}$.  
         \item $H^{\star}$ is a graph on vertex set $V(H)\cup \{z'_1,\ldots,z'_b\}$ (edge set will be defined later), which is a topological minor in $G$, witnessed by  $(\phi^{\star},\varphi^{\star})$ and realized by $G'$. 
\item $\phi^{\star}(v)=\phi(v)$ for all $v\in V(H)$ and $\phi^{\star}(z_i')=z_i$ for all $i\in [b]$. 
\item  For each $e\in E(H)$, if $\phi(e)=P_1\ldots P_s$ with end-vertices of $P_j$s in $\phi^{\star}(H^{\star})$ and internal vertices not in $\phi^{\star}(H^{\star})$, we have $\{u,v\}\in E(H^{\star})$ and $\varphi^{\star}(\{u,v\})=P_j$ where $P_j$ is a path between $u$ and $v$.    
         \end{itemize}
\end{definition}

In an $(\ell,\eta)$-partial solution, vertices in $V(H')\setminus V(H)$ are mapped to a vertices in nooses in $\fr[\ell]$, and these vertices are not actual terminal vertices, but internal vertices of paths mapped by the edges of $H$. To identify these vertices for each tuple in a solution, we have the following definition.

\begin{definition}[{\bf Extension}]\label{def:extend}
Let $(G,\delta,t,w',s')$ be an instance of \fFindFoliostar. Let $(M,{\cal N})$ be a $2q$-workspace in $\tilde{G}$, $\ell\in[q-1]_0$ and $\eta\in\mathbb{N}$. Let ${\cal S}'$ be the representation of  an $(\ell,\eta)$-untangled solution of $(G,\delta,t,w',s')$. Let $(H,\phi,\varphi)\in {\cal S}'$. 
The {\em $(\ell,\eta)$-extension} of $(H,\phi,\varphi)$, denoted by $\extend_{\ell,\eta}(H,\phi,\varphi)$, is defined as follows. 
\begin{itemize}
\itemsep0em
\item Let  $(H^{\star},\phi^{\star}, \varphi^{\star})=\apex(H,\phi,\varphi)$.
\item Initialize $\widehat{H}=H^{\star}$, $\widehat{\phi}=\phi^{\star}$ and $\widehat{\varphi}=\varphi^{\star}$. For every edge $e=\{u,v\}\in E(H)$:
	\begin{itemize}
	\itemsep0em
	\item Denote $\varphi^{\star}(e)=P=w_1-w_2-\cdots-w_r$. Let $I$ denote the set of all indices $i\in[r]$ such that there exists $N\in\fr[\ell]$ that encloses $w_i$, and there exists $N\in\fr[\ell']$ for some $\ell'< \ell$, that encloses  either $w_{i-1}$ or $w_{i+1}$.
	\item Subdivide $e$ $|I|$ times, update $\widehat{H}$ accordingly, and denote the new path in $\widehat{H}$ by $x_0=u-x_1-x_2-\ldots-x_{|I|}-v=x_{|I|+1}$.  Let $I=\{j_1,\ldots,j_s\}$. 
	\item Extend $\widehat{\phi}$ so that for all $i\in [|I|]=[s]$, $\widehat{\phi}(x_i)=w_{j_i}$. Moreover, remove $\{u,v\}$ from $\domain(\widehat{\varphi})$, and then extend $\widehat{\varphi}$ so that for all $i\in[|I|+1]$, $\widehat{\varphi}$ equals the subpath of $P$ between $\phi^{\star}(x_{i-1})$ and $\phi^{\star}(x_i)$.
	\end{itemize}
\item The resulting $(\widehat{H},\widehat{\phi},\widehat{\varphi})$ is 
$\extend_{\ell,\eta}(H,\phi,\varphi)$. 
\end{itemize}
\end{definition}

Next we define restriction of a tuple in an $(\ell,\eta)$-untangled solution to $G^{\star}_{\ell}$. 

\begin{definition}[{\bf Restriction}]\label{def:restrict}
Let $(G,\delta,t,w',s')$ be an instance of \fFindFoliostar. Let $(M,{\cal N})$ be a $2q$-workspace in $\tilde{G}$, $\ell\in[q-1]_0$ and $\eta\in\mathbb{N}$. 
Let ${\cal S}'$ be the representation of  an $(\ell,\eta)$-untangled solution of $(G,\delta,t,w',s')$. Let $(H,\phi,\varphi)\in {\cal S}'$. The {\em $(\ell,\eta)$-restriction} of $(H,\phi,\varphi)$, denoted by $\restrict_{\ell,\eta}(H,\phi,\varphi)$, is defined as follows. 
Let $(H^{\star},\phi^{\star}, \varphi^{\star})=\apex(H,\phi,\varphi)$ and $\extend_{\ell,\eta}(H,\phi,\varphi)=(\widehat{H},\widehat{\phi},\widehat{\varphi})$, and define $\restrict_{\ell,\eta}(H,\phi,\varphi)=(H^{\star}, H',\phi',\varphi',{\sf out}')$ as follows. Remove from $\widehat{H}$  the vertices $v\in V(\widehat{H})$, if there is a noose $N$ in  $\fr[\ell']$ for some $\ell'>\ell$ that either encloses 
 $\widehat{\phi}(v)$ (when $\widehat{\phi}(v)\in V(G_0)$)    or encloses a vertex of $G_i$ (when $\widehat{\phi}(v)\in V(G_i)$ for some $i\in [k]$). 
Discard all of the vertices and edges removed from the domains of $\widehat{\phi}$ and $\widehat{\varphi}$, respectively. Let $H'$ be the resulting graph $\phi', \varphi'$ be the resulting functions. 
Let $G'$ be the realization obtained through $(\phi^{\star}, \varphi^{\star})$. For any $v\in (V(H')\cap V(H^{\star}))\setminus R(H^{\star})$, ${\sf out}'(v)=A\cap N_{G'}(\phi^{\star}(N_{H^{\star}}(v)))$. This completes the definition of  $(H^{\star},H',\phi',\varphi', {\sf out}')$. 
\end{definition}

%

Now we prove that our notion of $(\ell,\eta)$-partial solution is indeed a correct notion of partial solution.

\begin{lemma}\label{def:restrictIsPartial}
Let $(G,\delta,t,w',s')$ be an instance of \fFindFoliostar. Let $(M,{\cal N})$ be a $2q$-workspace in $\tilde{G}$, $\ell\in[q-1]_0$ and $\eta\in\mathbb{N}$. 
Let ${\cal S}$ be an $(\ell,\eta)$-untangled solution and let  ${\cal S}'$ be the representation of ${\cal S}$.  
Then, for any $(H,\phi,\varphi)\in {\cal S}'$, $\restrict_{\ell,\eta}(H,\phi,\varphi)$ is an $(\ell,\eta)$-partial solution.
\end{lemma}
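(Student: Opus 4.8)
The claim is that $\restrict_{\ell,\eta}(H,\phi,\varphi)$ satisfies all the requirements in \autoref{def:partial}. I would proceed by simply unrolling the definitions of $\apex$, $\extend_{\ell,\eta}$ and $\restrict_{\ell,\eta}$ and checking each of the numbered conditions of \autoref{def:partial} one at a time. Write $(H^\star,\phi^\star,\varphi^\star)=\apex(H,\phi,\varphi)$, $(\widehat{H},\widehat{\phi},\widehat{\varphi})=\extend_{\ell,\eta}(H,\phi,\varphi)$, and $(H^\star,H',\phi',\varphi',{\sf out}')=\restrict_{\ell,\eta}(H,\phi,\varphi)$, so that $H'$ is obtained from $\widehat H$ by deleting the vertices mapped (in the relevant sense) strictly outside $\fr[\ell]$, and restricting $\widehat\phi,\widehat\varphi$ accordingly.

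\textbf{Condition 1: $|E(H^\star)|+\isolated(H^\star)\le 3(\delta^\star+t)$ and $H'\in\sd^\eta(H^\star)$.} For the size bound, note that $(H,\phi,\varphi)$ comes from the representation of a solution, so $|E(H)|+\isolated(H)\le\delta^\star$; $\apex$ only adds at most $t$ new vertices (one per vertex of $A$), each incident to at most the edges arising from subdividing existing paths at neighbours of $A$, and the total number of such subdivisions is bounded because every edge of $H$ is subdivided at most $|A|\le t$ times along its $\varphi$-path at vertices of $N_{G'}[A\cap V(G')]$. A short counting argument, using the fact that the realization is made of at most $\delta^\star$ internally-disjoint paths and $|A|\le t$, gives $|E(H^\star)|+\isolated(H^\star)\le 3(\delta^\star+t)$. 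For $H'\in\sd^\eta(H^\star)$: the operation $\extend_{\ell,\eta}$ subdivides edges of $H^\star$ (producing a graph built from a subgraph of $H^\star$ by adding pendant/internal vertices — this lands in $\sd(H^\star)$), and then $\restrict_{\ell,\eta}$ deletes vertices outside $\fr[\ell]$; because the solution is $(\ell,\eta)$-untangled, each edge of $H^\star$ contributes at most a bounded number of surviving pieces, and the pieces that become separate connected components (those entirely inside $\fr[\ell]$, coming from $\ell$-segments that are detached at both ends) number at most $\eta$ by condition (1) of \autoref{def:untangledFrame}. This is exactly the definition of $\sd^\eta(H^\star)$.

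\textbf{Condition 2: topological-minor witness in $G^\star_\ell$ and its representation.} The pair $(\phi',\varphi')$ is by construction $(\widehat\phi,\widehat\varphi)$ restricted to the surviving vertices/edges, and $(\widehat\phi,\widehat\varphi)$ is a valid topological-minor witness of $\widehat H$ in $G\cup\tilde G$ obtained by subdividing $(\phi^\star,\varphi^\star)$; restricting to vertices enclosed by nooses in $\fr[j]$ with $j\le\ell$ (together with the attached $G_i$'s as in \autoref{def:gltilde}) gives a witness of $H'$ in $\tilde G_\ell$, hence — after undoing the $E(\tilde G)\setminus E(G)$ substitutions on the $G_i$-sides via \autoref{obs:cflatmore} condition (6) — a topological minor of $G^\star_\ell$ whose representation in $\tilde G_\ell$ is exactly $(H',\phi',\varphi')$. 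Then I check the four sub-conditions (a)–(d): (a) every vertex of $\image(\varphi')$ not in $\phi'(V(H')\setminus V(H))$ lies in $V(\tilde G_{\ell-1})$ — this is because $\extend_{\ell,\eta}$ introduced a new branch vertex $x_i$ precisely at every point where a $\varphi^\star$-path crosses from an inner frame into $\fr[\ell]$, so any internal path vertex of $\varphi'$ that remained must lie in an inner frame; (b) the new branch vertices of $H'$ (those in $V(H')\setminus V(H)$) lie in left- or right-nooses of $\fr[\ell]$ — this follows from \autoref{def:untangledFrame} condition (1), which says the only nooses of $\fr[\ell]$ used are left/right nooses, combined with the fact that every vertex in the image of $\varphi'$ on $\fr[\ell]$ is on an $\ell$-segment or $\ell$-dangling segment (condition (2)) and the $x_i$ are the endpoints of these; (c) at most one branch vertex per noose of $\fr[\ell]$ — this is \autoref{def:untangledFrame} condition (3); (d) ${\sf out}'(v)\subseteq A\cap N_G(\phi'(v))$ — immediate from the definition ${\sf out}'(v)=A\cap N_{G'}(\phi^\star(N_{H^\star}(v)))$ together with the construction of $\apex$, which only places a vertex $z_i'$ with $\phi^\star(z_i')=z_i\in A$ when $z_i$ is adjacent to $\phi(V(H))$, so $A\cap N_{G'}(\phi^\star(v))$ is captured by ${\sf out}'$ on the neighbours.

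\textbf{Main obstacle.} The bookkeeping in Condition 1 — proving $|E(H^\star)|+\isolated(H^\star)\le 3(\delta^\star+t)$ and, especially, nailing down that the extra connected components of $H'$ number at most $\eta$ so that $H'\in\sd^\eta(H^\star)$ — is the delicate part, since it forces one to track exactly how $\apex$ and $\extend_{\ell,\eta}$ inflate the pattern and to invoke the precise untangledness bounds. Everything else is a careful but essentially mechanical verification against the definitions. I would structure the written proof as: (i) fix notation for the three tuples; (ii) bound sizes and identify the new components (Condition 1); (iii) verify the witness and sub-conditions (a)–(d) (Condition 2), each in one short paragraph citing \autoref{def:untangledFrame}, \autoref{def:extend}, \autoref{obs:cflatmore}, and \autoref{def:restrict}.
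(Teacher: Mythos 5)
Your plan follows essentially the same route as the paper's proof: unroll $\apex$, $\extend_{\ell,\eta}$ and $\restrict_{\ell,\eta}$, then verify the two conditions of the partial-solution definition by invoking the three untangledness properties (bounded, left/right-only noose usage for 2(b); one segment per noose for 2(c); segment structure for 2(a)), and it correctly identifies the $\sd^\eta(H^\star)$ membership as the delicate step. Be aware that this step needs more than counting the at most $\eta$ new vertices: you must show each new vertex is either pendant on $H^\star$ or lies in a component that is exactly an isolated edge on two new vertices, which the paper establishes by a case analysis (ruling out a path $w-u-v$ with $u,v$ new) that derives a contradiction with the definitions of $\ell$-segments and $\ell$-dangling segments, together with a separate claim that every edge-path of $\widehat H$ either stays inside $\tilde G_{\ell-1}$ or leaves $\tilde G_\ell$.
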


\begin{proof}
Fix an arbitrary tuple $(H,\phi,\varphi)\in {\cal S}'$. Let $(H^{\star},\phi^{\star}, \varphi^{\star})=\apex(H,\phi,\varphi)$, $(\widehat{H},\widehat{\phi},\widehat{\varphi})$ $=\extend_{\ell,\eta}(H,\phi,\varphi)$ and $\restrict_{\ell,\eta}(H,\phi,\varphi)=(H^{\star}, H',\phi',\varphi',{\sf out}')$. 
We want to show that $(H^{\star}, H',\phi',\varphi')$ is an $(\ell,\eta)$-partial solution. 
Since ${\cal S}$ is an $(\ell,\eta)$-untangled solution and $(H,\phi,\varphi)\in {\cal S}'$, we have the following. 
\begin{enumerate}[label=(\alph*)]
\itemsep0em 
\item\label{statementa} ${\cal S}$ is $(\ell,3)$-terminal free. 
\item\label{statementb} At most $\eta$ nooses in $\fr[\ell]$ are used by $(H,\phi,\varphi)$. Moreover, each noose in $\fr[\ell]$ used by  $(H,\phi,\varphi)$ is either a left-noose or a right-noose in $\fr[\ell]$. 
\item\label{statementc} Every vertex $v$ in the image of $\varphi$, belonging to a noose $N\in \fr[\ell]$, is also a vertex of an $\ell$-segment or an $\ell$-dangling segment of $(H,\phi,\varphi)$.
\item\label{statementd} There do not exist a noose $N\in \fr[\ell]$ and two distinct $\ell$-(dangling) segments $P$ and $P'$ of $(H,\phi,\varphi)$, such that both $P$ and $P'$ intersect $\inNoose_{\tilde{G}}(N)$.
\end{enumerate}

Let $Q=\bigcup_{\ell-3\leq i \leq {\ell}+3}\fr[i]$ and $U= \bigcup_{N\in Q}\inNoose_{\tilde{G}}(N)\cap V(\tilde{G})$.  Since ${\cal S}$ is $(\ell,3)$-terminal free, the following conditions hold. 
\begin{itemize}
\item There does not exist a terminal (with respect to ${\cal S}$) in $U$.
\item  There is no terminal (with respect to ${\cal S}$) in $V(G_i)$, where $V(G_i)\cap U\neq \emptyset$.
\item There is no edge $\{u,v\}$  in the image of $\varphi$ such that $u\in A$ and $v\in U \cup \{V(G_i) : i\in [k], V(G_i)\cap U\neq \emptyset\}$. 
\end{itemize}

Consider the tuple $(H^{\star},\phi^{\star},\varphi^{\star})$. 
Let $P_e=\varphi^{\star}(e)$, for all $e\in E(H^{\star})$. 
Since ${\cal S}$ is $(\ell,3)$-terminal free (by statement~\ref{statementa}), any end-vertex of $P_e,e\in E(H^{\star})$, 
does not belong to $U$. That is, 

\begin{itemize}
\item[$(1)$] There does not exist a terminal with respect to $(H^{\star},\phi^{\star},\varphi^{\star})$ in $U$,
\item[$(2)$]  There is no terminal with respect to $(H^{\star},\phi^{\star},\varphi^{\star})$ in $V(G_i)$, where $V(G_i)\cap U\neq \emptyset$.
\item[$(3)$] There is no edge $\{u,v\}$  in the image of $\varphi^{\star}$ such that $u\in A$ and $v\in U \cup \{V(G_i) : i\in [k], V(G_i)\cap U\neq \emptyset\}$. 
\end{itemize}

Because of statement~\ref{statementb}, the total number of times all these paths {\em cross}  $\fr[\ell]$ is at most $\eta$ as explained below. For each $\phi^{\star}(e)=P_e=w^e_1,\ldots,w^e_r$, let $I_e$ denote the set of indices in $[r]$, such that there exists $N\in\fr[\ell]$ that encloses $w^e_i$, and there exists $N\in\fr[\ell']$ for some $\ell'< \ell$, that encloses  either $w^e_{i-1}$ or $w^e_{i+1}$.  Then, because of statement~\ref{statementb}, $\bigcup_{e}\vert I_e\vert \leq \eta$.  
Therefore, $\widehat{H}$ is obtained by subdividing each edge $e$ of $H^{\star}$, $\vert I_e\vert$ times and hence the total number of newly added vertices is at most $\eta$.  
Let $u=x^e_0-x^e_1-\ldots-x^e_{\vert I_e\vert}-x^e_{\vert I_e\vert+1}=v$ be the path in $\widehat{H}$ which is the result of subdividing $e=\{u,v\}$ in $H^{\star}$, $\vert I_e\vert$ times.   Let $I_e=\{j_1,\ldots,j_s\}$. Recall the definition of $(\widehat{\phi},\widehat{\varphi})$. For all $i\in [|I_e|]$, $\widehat{\phi}(x^e_i)=w^e_{j_i}$ and $\widehat{\varphi}(\{x^e_{i-1},x^e_{i}\})$ is mapped to the subpath of $P_e$ between $\phi^{\star}(x^e_{i-1})$ and $\phi^{\star}(x^e_i)$. Because of statement~$(1)$ above, $(i)$ for any $N\in \fr[\ell]$, $\inNoose_{\tilde{G}}(N)\cap \widehat{\phi}(V(H^{\star}))=\emptyset$. Because of statement~\ref{statementd}, $(ii)$ for any $N\in \fr[\ell]$, $|\inNoose_{\tilde{G}}(N)\cap \widehat{\phi}(V(\widehat{H}))|\leq 1$. By the definition of $\widehat{\phi}$, $(iii)$ any vertex in $\widehat{\phi}(V(\widehat{H})\setminus V(H^{\star}))$ belongs to $\inNoose_{\tilde{G}}(N)$ for some $N\in\fr[\ell]$.

\begin{claim}
\label{claim:wellbehavepath}
For any edge $\widehat{e}\in E(\widehat{H})$, either all the internal vertices of $\widehat{\varphi}(\widehat{e})$ belong to   $V(\tilde{G}_{\ell-1})$
 or at least one internal vertex in $\widehat{\varphi}(\widehat{e})$ belongs to $V(G)\setminus V(\tilde{G}_{\ell})$. 
\end{claim}

\begin{proof}
For the sake of contradiction assume that there is an edge $\widehat{e}\in E(\widehat{H})$ such that no internal vertex of $\widehat{\varphi}(\widehat{e})$ belongs to $V(G)\setminus V(\tilde{G}_{\ell})$ and there is an internal vertex $v$ in $V(\tilde{G}_{\ell})\setminus V(\tilde{G}_{\ell-1})$.   Then either there is a noose $N\in \fr[\ell]$ such that $v\in \inNoose_{\tilde{G}}(N)$ or $v\in V(G_i)\setminus V(G_0)$ for some $G_i$ such that $V(G_i)\cap U\neq \emptyset$. The latter case is not possible because of statements $(2)$ and $(3)$ as explained below. No terminal with respect to $(H^{\star},\phi^{\star},\varphi^{\star})$ is present in $V(G_i)$, for any $G_i$ with $V(G_i)\cap U \neq \emptyset$ and any edge between $V(G_i)$ and $A$ is not used by $\varphi^{\star}$.  
  This implies that any vertex used by $\varphi^{\star}$ are not from $V(G_i)\setminus V(G_0)$, where  $V(G_i)\cap U \neq \emptyset$ 
(because $(H^{\star},\phi^{\star},\varphi^{\star})$ is derived from a representation $(H,\phi,\varphi)$ of tuple in a solution). Therefore any vertex used by $\widehat{\varphi}$ are not from $V(G_i)\setminus V(G_0)$. Now suppose that $v\in \inNoose_{\tilde{G}}(N)$ for some $N\in \fr[\ell]$.

By our assumption, $v$ is an internal vertex in $\widehat{\varphi}(\widehat{e})$. Let $x$ and $y$ be the vertices adjacent to $v$ in 
$\widehat{\varphi}(\widehat{e})$. Since $v$ is an internal vertex of $\widehat{\varphi}(\widehat{e})$, by the construct of $\widehat{H}$, both the adjacent vertices $x$ and $y$ of $v$ in  $\widehat{\varphi}(e)$ belong to  $\inNoose_{\tilde{G}}(N_x)$ and $\inNoose_{\tilde{G}}(N_y)$ for some $N_x,N_y\in \fr[\ell]$ (otherwise $v\in V(\widehat{H})$ and it will lead to a contradiction to the assumption that $v$ is an internal vertex in $\widehat{\varphi}(\widehat{e})$). Let $e\in E(H)$ such that $v\in \varphi(e)$.
Because of statement $(c)$ above, $N_x=N_y=N$. Again because of statement $(c)$, there is subpath $P$ in $\varphi(e)$ which is of the form of either $x-v-y-w_1-\ldots-w_r-z$ or $y-v-x-w_1-\ldots-w_r-z$ where all vertices except $z$ are in $\inNoose_{\tilde{G}}(N)$ and $z$ does not belongs to $V(\tilde{G}_{\ell})$ (see last two condition in Definitions~\ref{def:diskSegment} and \ref{def:diskDanglingSegment}).
 Moreover $P$ is a subpath of $\widehat{\varphi}(\widehat{e})$ and $z$ is not a terminal vertex because of statement $(1)$. This implies that $z$ is an internal vertex in $\widehat{e}$ and belong to $V(G)\setminus V(\tilde{G}_{\ell})$, a contradiction to our assumption. 
\end{proof}

Recall that $(H^{\star},H',\phi',\varphi',{\sf out}')=\restrict_{\ell,\eta}(H,\phi,\varphi)$. Notice that $H'$ is obtained from $\widehat{H}$ by deleting all the vertices $v$ such that $\widehat{\varphi}(v)$ either belongs to a noose in $\fr[\ell']$ for some $\ell'>\ell$ or belongs to $V(G_i)$, where some noose in $\fr[\ell']$ for some $\ell'>\ell$ encloses a vertex in $V(G_i)$, and edges incident on these vertices. 
The functions $\phi'$ and $\varphi'$ are the restrictions of functions $\widehat{\phi}$ and $\widehat{\varphi}$ to the domains $V(H')$ and $E(H')$, respectively.  To prove that $(H^{\star},H',\phi',\varphi',{\sf out}')$ is an $(\ell,\eta)$-partial solution we need to prove the conditions in \autoref{def:partial}. Clearly, since $(H,\phi,\varphi)\in {\cal S}'$ (a representation of an $(\ell,\eta)$-solution), we have that $|E(H)|+\isolated(H)\leq \delta^{\star}$. Also, since $\vert A\vert \leq t$, we have that $|E(H^{\star})|+\isolated(H^{\star})\leq 3(\delta^{\star}+t)$. 

 Now we prove that $H'$ is a graph in $\sd^\eta(H^{\star})$. Towards that we first prove that if there is an edge $e'=\{u,v\}\in H'$ such that $u,v\in V(H')\setminus V(H^{\star})$, then $e'$ is a connected component in $H'$. For the sake of contradiction assume that $w-u-v$ is a path in $H'$ and $u,v\in V(H')\setminus V(H^{\star})$. Let $N_u$ and $N_v$ be the nooses ${\cal N}$ such that $\phi'(u)\in N_u$ and $\phi'(v)\in N_v$. 
By statements $(ii)$ and $(iii)$ above, we have that $N_u\neq N_v$ and $N_u,N_v\in \fr[\ell]$. 
Moreover, the paths $\varphi'(\{u,v\})=\widehat{\varphi}(\{u,v\})$ and $\varphi'(\{w,u\})=\widehat{\varphi}(\{w,u\})$ are subpaths of $\varphi^{\star}(e)$ for some $e\in E(H^{\star})$.  By \autoref{claim:wellbehavepath}, all the internal vertices of the paths  $\varphi'(\{u,v\})$ and $\varphi'(\{w,u\})$ are in  $V(\tilde{G}_{\ell-1})$. 

\begin{claim}
\label{clm:uvinternal}
$\varphi'(\{u,v\})$ contains at least one internal vertex. 
\end{claim}
\begin{proof}
Notice that since ${\cal S}$ is $(\ell,\eta)$-untangled, $\phi'(u)$ is a vertex of an $\ell$-(dangling) segment of $(H,\phi,\varphi)$. But since $\phi'(u)$ is an internal vertex in $\varphi^{\star}(e)$, if  $\varphi^{\star}(e)$ contains a subpath $\phi'(u)-\phi'(v)$ (i.e., a path on two vertices, one in $N_u$ and other in $N_v$), then $\phi'(u)$ is not part of any   $\ell$-(dangling) segment of $(H,\phi,\varphi)$ which is a contradiction. 
\end{proof}

The proof of the following claim is  similar in arguments to the proof of \autoref{clm:uvinternal}

\begin{claim}
\label{clm:wuinternal}
If $w \in V(H')\setminus V(H^{\star})$, then $\varphi'(\{u,v\})$ contains at least one internal vertex. 
\end{claim}

By \autoref{clm:uvinternal}, the subpath $P$ of $\varphi^{\star}(e)$ (which is a subpath of $\varphi(e_1)$ for some $e_1\in E(H)$), between $\phi'(u)$ and $\phi'(v)$ contains at least one internal vertex and by \autoref{claim:wellbehavepath}, all the internal vertices of $P$ are in $V(\tilde{G}_{\ell-1})$.  
If $w \in V(H^{\star})$ then $\varphi'(\{w,u\})$ is a subpath of $\varphi^{\star}(e)$ with all vertices except $\phi'(u)$ is in $V(\tilde{G}_{\ell-1})$.  That is, there is subpath $P^{\star}$ of $\varphi^{\star}(e)$ (which is a subpath of $\varphi(e_1)$ for some $e_1\in E(H)$),  
where all the vertices except $\phi'(u)$  are in $V(\tilde{G}_{\ell-1})$. This implies that $\phi'(u)$ is not part any $\ell$-(dangling) segment of $(H,\phi,\varphi)$, which is a contradiction. Now consider the case $w \in V(H')\setminus V(H^{\star})$. Let $N_w$ be the noose in ${\cal N}$ such that $\phi'(w)\in N_w$. By statements $(ii)$ and $(iii)$ above, we have that $N_u,N_v,N_w\in \fr[\ell]$ and they all are distinct.  
By \autoref{clm:wuinternal}, $\varphi'(\{w,u\})$ contains at least one internal vertex. Therefore, we have  that the subpaths $P$ and $P'$ of $\varphi(e)$, between $\phi'(u)$ and $\phi'(v)$, and  between $\phi'(w)$ and $\phi'(u)$, contain at least one internal vertices each, and by \autoref{claim:wellbehavepath}, all the internal vertices of $P$ and $P'$ are in $V(\tilde{G}_{\ell-1})$.  This implies that $\phi'(u)$ is not part any $\ell$-(dangling) segment of $(H,\phi,\varphi)$, which is a contradiction. That is, now  we proved that if there is an edge $e'=\{u,v\}\in H'$ such that $u,v\in V(H')\setminus V(H)$, then $e'$ is a connected component in $H'$.  
Since any vertex $x$ in the image $\varphi$, belonging to a noose $N\in \fr[\ell]$, is also a vertex of $\ell$-(dangling) segment, no vertex in $V(H')\setminus V(H)$ is an isolated vertex in $H'$. 
 Also, since $\vert V(H')\setminus V(H)\vert \leq \eta$, we conclude that $H'\in \sd^\eta(H)$. Therefore, condition $(1)$ of \autoref{def:partial} is satisfied. 
 
 Now we prove that $H'$ is a topological minor in $G^{\star}_{\ell}$ and $(\phi',\varphi')$ is a representation of it in $\tilde{G}_{\ell}$. Notice that the realization of $(\widehat{\phi},\widehat{\varphi})$ is the same as the realization of $(\phi,\varphi)$. Moreover, since  $(\phi,\varphi)$ is a representation of a pair of functions witnessing that $H$ is a topological minor in $G$, for all the edges used  by the realization of $(\phi,\varphi)$ (an hence in the $(\widehat{\phi},\widehat{\varphi})$), can be replaced with paths in $G$  whose vertices are not used by $(\widehat{\phi},\widehat{\varphi})$. Therefore $H'$ is a topological minor in $G^{\star}_{\ell}$ and $(\phi',\varphi')$ is a representation of it in $\tilde{G}_{\ell}$.  The condition $2(a)$ of $(\phi',\varphi')$ follows from \autoref{claim:wellbehavepath}, and from the construction of $(\phi',\varphi')$ from $(\widehat{\phi},\widehat{\varphi})$. Condition $2(b)$ of $(\phi',\varphi')$ follows from statements $(iii)$  and \ref{statementb}.   Condition $2(c)$ follows from statement $(ii)$ and condition $2(d)$ follows from the definition of ${\sf out}'$ (see \autoref{def:restrict}).  This completes the proof of the lemma. 
 \end{proof}


%
%

Next we define realizable snapshots and prove a lemma about their computation.

\begin{definition}[{\bf Realizable Snapshot}]\label{def:realizableSnapshot}
Let $(G,\delta,t,w',s')$ be an instance of \fFindFoliostar. Let $(M,{\cal N})$ be a $2q$-workspace in $\tilde{G}$, $\ell\in[q-1]_0$ and $\eta\in\mathbb{N}$. 
An $\eta$-snapshot $(C,T,T',\rho,{\cal P}, f,{\sf out})$ is {\em $(\ell,\eta)$-realizable} if there is an $(\ell,\eta)$-partial solution $(H,H',\phi,\varphi,{\sf out})$ such that $(C,T,T',\rho,{\cal P},$ $f,{\sf out})=\camera_{\ell,\eta}(H,H',\phi,\varphi,{\sf out})$.
\end{definition}

\begin{observation}
\label{obs:Hprime:partial}
Let $(G,\delta,t,w',s')$ be an instance of \fFindFoliostar. Let $(M,{\cal N})$ be a $2q$-workspace in $\tilde{G}$, $\ell\in[q-1]_0$ and $\eta\in\mathbb{N}$. Given an $\eta$-snapshot $(C,T,T',\rho,{\cal P}, f, {\sf out})$, there is a unique $H'$ such that if $(C,T,T',\rho,{\cal P}, f,{\sf out})$ is $(\ell,\eta)$-realizable, then there is an $(\ell,\eta)$-partial solution $(H,H',\phi,\varphi,{\sf out})$ such that $\camera_{\ell,\eta}(H,H',\phi,\varphi,{\sf out})=(C,T,T',\rho,{\cal P}, f,{\sf out})$.  Moreover, the graph $H'$ can be constructed in time $(\eta+\delta^{\star}+t)^{\OO(1)}$ and to test whether $(C,T,T',\rho,{\cal P}, f,{\sf out})$ is $(\ell,\eta)$-realizable, it is enough to test whether $H'$ satisfies condition $(2)$ of  \autoref{def:partial}. 
\end{observation}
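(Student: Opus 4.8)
\textbf{Proof plan for Observation~\ref{obs:Hprime:partial}.}
The plan is to make the snapshot-to-$H'$ map explicit and show it is well-defined, then reduce realizability to a single local check. First I would invoke \autoref{obs:snapshotiso}: given an $\eta$-snapshot $(C,T,T',\rho,{\cal P},f,{\sf out})$, define the rooted graph $J$ on vertex set $T\cup V(C)$ with $R(J)=T'$ (with the labelling $\rho$ on $R(J)$) and edge set ${\cal P}\cup\{\{w_1,w_2\}: w_1,w_2\in V(C),\, f(w_1)=w_2\}\cup\{\{w,z\}: w\in V(C),z\in T, f(w)=z\}$. I will declare $H'\triangleq J$. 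The construction from the snapshot data is purely combinatorial — building a graph on at most $6(\delta^{\star}+t)+\eta$ vertices with edge set read off directly from ${\cal P}$ and $f$ — so it takes time $(\eta+\delta^{\star}+t)^{\OO(1)}$, giving the running-time claim. I should also fix the companion graph $H$: it is obtained from $H'=J$ by deleting all vertices coming from $V(C)$ (these are precisely the vertices in $V(H')\setminus V(H)$ in any partial solution realizing the snapshot, by the definition of $\camera_{\ell,\eta}$ in \autoref{def:camera}), with roots $R(H)=T'$; this $H$ is likewise determined by the snapshot.

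Next I would prove uniqueness of $H'$: if $(H,H',\phi,\varphi,{\sf out})$ is any $(\ell,\eta)$-partial solution with $\camera_{\ell,\eta}(H,H',\phi,\varphi,{\sf out})=(C,T,T',\rho,{\cal P},f,{\sf out})$, then by \autoref{obs:snapshotiso} there is an isomorphism ${\sf iso}:J\to H'$ fixing $T$ pointwise. Since in a partial solution the vertices in $V(H')\cap V(H)$ retain the identities of the corresponding vertices of $H$ (by the convention on $\sd^\eta(\cdot)$), and $T=V(H')\cap V(H)$, the map ${\sf iso}$ identifies $J$ with $H'$ up to relabelling only the vertices of $V(H')\setminus V(H)$, whose labels are irrelevant (they are unrooted, and the structure is entirely captured by ${\cal P}$ and $f$). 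Hence $H'$ is unique up to this inessential relabelling, and we may take $H'=J$ as the canonical representative. This establishes the first sentence of the observation.

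Finally I would establish the reduction of realizability to condition~(2) of \autoref{def:partial}. One direction is immediate: if the snapshot is $(\ell,\eta)$-realizable then by definition there is an $(\ell,\eta)$-partial solution $(H,H',\phi,\varphi,{\sf out})$ with $H'=J$ (by the uniqueness just shown), and in particular $H'$ satisfies condition~(2). For the converse, I would observe that condition~(1) of \autoref{def:partial} depends only on $H$ and $H'$, and both are determined by the snapshot: the bound $|E(H)|+\isolated(H)\le 3(\delta^{\star}+t)$ follows from $|T|+|{\cal P}|\le 3(\delta^{\star}+t)$ in the snapshot definition (after checking $H$ has no isolated vertices outside what is allowed — any isolated vertex of $H'$ coming from $V(C)$ would correspond to a used-but-unlinked noose, which cannot occur since every $v\in V(C)$ has $f(v)$ defined), and $H'\in\sd^\eta(H)$ holds because $H'$ is built from $H$ by adding at most $d_H(\cdot)$ pendant-type vertices (encoded by $f$ mapping $V(C)$ into $T$) plus at most $\eta$ two-vertex components (encoded by $f$-pairs within $V(C)$), and $|V(C)|\le\eta$. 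Thus condition~(1) is automatically satisfied, and the only genuinely geometric/graph-theoretic requirement left — that $H'$ be a topological minor of $G^{\star}_{\ell}$ with the enclosure and ${\sf out}$ constraints of condition~(2) — is exactly the thing that must be verified; when it holds one recovers the witnessing maps $(\phi,\varphi,{\sf out})$ and hence the partial solution realizing the snapshot.

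\textbf{Main obstacle.} The delicate point is checking that condition~(1) is truly free — i.e.\ that the numerical bounds in the \emph{snapshot} definition ($|T|\le 6(\delta^{\star}+t)$, $|T|+|{\cal P}|\le 3(\delta^{\star}+t)$, $|V(C)|\le\eta$) translate exactly into $|E(H)|+\isolated(H)\le 3(\delta^{\star}+t)$ and $H'\in\sd^\eta(H)$, accounting correctly for degree bounds when pendant vertices are attached and ruling out pathological $H'$ (e.g.\ spurious isolated vertices, or $f$-values that would force $H'\notin\sd^\eta(H)$). This is a careful but routine bookkeeping argument matching \autoref{def:snapshot} against \autoref{def:partial} and the definition of $\sd^\eta$; I do not expect conceptual difficulty, only the need to be meticulous.
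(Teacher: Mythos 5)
The paper states this Observation without proof, and your proposal reconstructs exactly the intended argument: take $H'$ to be the graph $J$ of \autoref{obs:snapshotiso}, get uniqueness (up to relabelling of $V(H')\setminus V(H)$) from the isomorphism there, note the construction is polynomial in the snapshot size, and observe that condition~(1) of \autoref{def:partial} is governed by the snapshot's own numerical constraints so that only condition~(2) remains to be verified. One small correction: the companion graph $H$ is \emph{not} determined by the snapshot (an actual partial solution's $H$ may have vertices and edges entirely outside $H'$, which the camera forgets), and the Observation claims uniqueness only for $H'$; for the converse direction you merely need to exhibit \emph{some} admissible $H$, and your candidate $(T,{\cal P})$ may violate the pendant bound $d_H(v)$ in the definition of $\sd(H)$ when several $w\in V(C)$ have $f(w)=z$ — so the "routine bookkeeping" you defer does require choosing $H$ with enough incident edges at such $z$ while respecting $|E(H)|+\isolated(H)\le 3(\delta^{\star}+t)$, which is exactly the delicate point you correctly flag.
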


\begin{lemma}\label{lem:testRealizable}
Let $(G,\delta,t,w',s')$ be an instance of \fFindFoliostar. Let $(M,{\cal N})$ be a $(p,2q)$-workspace in $\tilde{G}$, $\ell\in[q-1]_0$ and $\eta\in\mathbb{N}$. 
Given an $\eta$-snapshot $(C,T,T',\rho,{\cal P}, f, {\sf out})$, it can be determined whether $(C,T,T',\rho,{\cal P}, f, {\sf out})$ is $(\ell,\eta)$-realizable in time $\Delta^{\OO(\Delta)}n+(\eta+\delta^{\star}+t)^{\OO(1)}$, where $\Delta=\max \{p,s'+3\}$. 
\end{lemma}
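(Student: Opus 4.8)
\textbf{Proof plan for \autoref{lem:testRealizable}.}

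The plan is to reduce the question of $(\ell,\eta)$-realizability of a given $\eta$-snapshot to a single \textsc{TMC}-style query on the bounded-treewidth graph $\tilde{G}[\inNoose(M)\cap V(G)]$ (augmented by the bounded-treewidth pieces $G_1,\ldots,G_k$), and then invoke the dynamic-programming machinery already set up in the paper. First I would apply \autoref{obs:Hprime:partial}: from the snapshot $(C,T,T',\rho,{\cal P},f,{\sf out})$ I extract in time $(\eta+\delta^{\star}+t)^{\OO(1)}$ the unique candidate graph $H'$ (built from $T$, ${\cal P}$, $V(C)$ and $f$ exactly as in \autoref{obs:snapshotiso}), together with its root set $R(H')=T'$ and root labels given by $\rho$. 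By that observation, deciding realizability amounts to deciding whether $H'$ is a topological minor of $G^{\star}_{\ell}$ via a witness whose representation in $\tilde{G}_{\ell}$ satisfies conditions $2(a)$--$2(d)$ of \autoref{def:partial}, i.e.\ a \emph{rooted} topological-minor embedding subject to the extra geometric constraints that (a) the image of $\varphi$ outside $\phi(V(H')\setminus V(H))$ stays inside $V(\tilde{G}_{\ell-1})$, (b) each vertex of $V(H')\setminus V(H)$ lands in a distinct left- or right-noose of $\fr[\ell]$, (c) each noose of $\fr[\ell]$ receives at most one branch vertex, and (d) the ${\sf out}$ requirements on neighborhoods into $A$ hold.

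Next I would encode all these side conditions into the host graph so that a plain rooted-\textsc{TMC} solver suffices. The key observation is that $\fr[\ell]$ is a fixed, explicitly known set of at most $\OO(q)$ nooses, and the pattern $H'$ has at most $\eta+3(\delta^{\star}+t)$ vertices; moreover ${\sf out}$ assigns to each non-root vertex of $H'$ a subset of $A$, and $|A|\le t$. So I would (i) restrict attention to the subgraph $\tilde{G}_{\ell}$ (equivalently $G^{\star}_{\ell}$), whose treewidth is at most $\max\{p,s'\}+\OO(1)$: it is obtained from the treewidth-$p$ graph $\tilde{G}[\inNoose(M)\cap V(G)]$ by gluing in the graphs $G_i$ of treewidth $\le s'$ along $\le 3$-vertex interfaces, and a standard bag-union argument gives treewidth $\le\max\{p,s'+3\}=\Delta$; (ii) mark the vertices lying in left/right-nooses of $\fr[\ell]$ (the number of such nooses is $\OO(q)\le\OO(p)$) and, for each assignment of the $\le\eta$ vertices of $V(H')\setminus V(H)$ to distinct such nooses — but rather than branching over the $q^{\OO(\eta)}$ assignments, I would enforce ``distinct noose'' and ``at most one branch vertex per noose'' purely through colors/labels placed on vertices and a correspondingly labeled pattern, so that a single labeled-\textsc{TMC} call handles all of them at once; (iii) encode the ${\sf out}$ condition by adding, for each non-root $v\in V(H')\cap V(H)$, pendant pattern edges to fresh pattern vertices that are forced (by labels) to map to the prescribed elements of $A$ — here I use that $A$ is small and explicitly given; (iv) encode condition $2(a)$ (no interior vertex of $\varphi$ escaping $\tilde{G}_{\ell-1}$ except at the designated $\fr[\ell]$ branch vertices) by simply deleting from the host the forbidden region, namely everything not in $V(\tilde{G}_{\ell})$, and treating $\fr[\ell]$-vertices as the only allowed ``exit'' points for the extended pattern $H'$; and (v) handle the root-label matching $\rho$ by the rooted-\textsc{TMC} convention. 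After this encoding, realizability of the snapshot is equivalent to a single rooted labeled-topological-minor query on a graph of treewidth $\le\Delta$ with a pattern of size $\le\eta+3(\delta^{\star}+t)$; by \autoref{prop:twDPNoDec} (or the labeled variant of the standard Courcelle/Scheffler DP) this is solvable in time $\Delta^{\OO(\Delta)}n$, to which I add the $(\eta+\delta^{\star}+t)^{\OO(1)}$ preprocessing, giving the claimed bound $\Delta^{\OO(\Delta)}n+(\eta+\delta^{\star}+t)^{\OO(1)}$.

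The correctness argument then has two directions. For the forward direction, given an $(\ell,\eta)$-partial solution $(H,H',\phi,\varphi,{\sf out})$ with the stated $\camera_{\ell,\eta}$ image, the pair $(\phi,\varphi)$ restricted to $\tilde{G}_{\ell}$ directly yields a valid labeled embedding of the encoded pattern into the encoded host; the distinctness of nooses and the single-branch-vertex-per-noose property come from conditions $2(b)$--$2(c)$, and the ${\sf out}$-pendants are satisfied by condition $2(d)$. For the backward direction, any labeled embedding produced by the DP, once the pendant/label gadgets are stripped, gives functions $\phi,\varphi$ on $H'$ whose image in $\tilde{G}_{\ell}$ satisfies exactly conditions $2(a)$--$2(d)$ — condition $2(a)$ because the host was truncated to $\tilde{G}_{\ell}$ with $\fr[\ell]$ as the only permitted boundary, and because $H'\in\sd^{\eta}(H)$ means the $\le\eta$ extra components are single edges whose $\fr[\ell]$-endpoint is the designated branch vertex — and so $(H,H',\phi,\varphi,{\sf out})$ is an $(\ell,\eta)$-partial solution; recomputing its camera returns the given snapshot by construction and by \autoref{obs:Hprime:partial}.

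The main obstacle I expect is the bookkeeping in step (ii): making the ``each of the $\le\eta$ pattern vertices of $V(H')\setminus V(H)$ goes to a \emph{distinct} left/right-noose of $\fr[\ell]$, with at most one branch vertex per noose overall'' condition expressible by a \emph{fixed-size} set of vertex labels (so that we do not pay a $q^{\OO(\eta)}$ branching factor and stay linear in $n$). The clean way is to observe that the relevant combinatorial data is already captured by the snapshot's cycle $C$ and the ordering $<$ on the used nooses: the DP over the tree decomposition can, at the cost of only $\Delta^{\OO(\Delta)}$ in the table size, track for each bag which of the $\le\eta$ ``slots'' have been filled and in which left/right-noose, since a bag of size $\le\Delta$ meets only $\OO(1)$ nooses and a noose meets vertices in only $\OO(1)$ consecutive bags (by \autoref{obs:workAdj}); thus the ``distinct noose'' constraint is local to the DP and costs nothing beyond the stated bound. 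Once this is set up, the remaining steps are routine applications of \autoref{prop:twDPNoDec} and \autoref{obs:Hprime:partial}.
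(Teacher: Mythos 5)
Your proposal follows essentially the same route as the paper's (sketchy) proof: construct the unique candidate $H'$ via \autoref{obs:Hprime:partial}, bound the treewidth of $\tilde{G}_{\ell}$ by $\Delta=\max\{p,s'+3\}$ through the same gluing of the tree decompositions of $\tilde{G}[\inNoose(M)\cap V(\tilde G)]$ and the $G_i$ along their $\le 3$-vertex clique interfaces, and then test condition $(2)$ of \autoref{def:partial} by a dynamic program in the spirit of \autoref{prop:twDPNoDec}. You in fact supply more implementation detail on encoding conditions $2(a)$--$2(d)$ than the paper does (which simply asserts the DP can be done ``along the lines of \autoref{prop:twDPNoDec}''), and apart from the slightly over-strong claim that a noose meets only $\OO(1)$ bags, your bookkeeping is sound and consistent with the stated running time.
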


\begin{proof}[Proof sketch]
Let $\tilde{G}^{\star}=\tilde{G}[[\inNoose_{\tilde{G}}(M)\cap V(\tilde{G})]]$. Since $(M,{\cal N})$ be a $(p,2q)$-workspace, we have that the treewidth of $\tilde{G}^{\star}$ is at most $p$.  
Let ${\cal G} \subseteq \{G_1,\ldots,G_k\}$ be the collection of graphs such that for any $i\in [k]$, $V(G_i)\cap V(G_0)\subseteq  V(\tilde{G}^{\star})$ if and only if $G_i\in {\cal G}$. Without loss of generality assume that ${\cal G}=\{G_1,\ldots, G_s\}$ for some $s\in [k]$. 
Let $G^{\star}=\tilde{G}^{\star}\cup \bigcup_{i\in [s]} G_i$. 
We claim that the treewidth of $G^{\star}$ is $\max \{p,s'+3\}$. 
Let $(T,\beta)$ be a tree decomposition of $\tilde{G}^{\star}$ of width at most $p$. 
For each $i\in [s]$, let $(T_i,\beta_i)$ be a tree decomposition of $G_i$ of width at most $s'$. 
For any $i\in [s]$, $V(G_i)\cap V(\tilde{G}^{\star})$ is a clique (see condition $(4)$ in \autoref{def:KWT}). Therefore for any $i\in [s]$, there is a node $t_{i}\in T$ such that $\beta(t_{i})$ contains $V(G_i)\cap V(\tilde{G}^{\star})$. So to get a tree decomposition of $G^{\star}$ we add an edge between the root of $T_i$ and $t_i$. Moreover for all $i\in [s]$, we add $V(G_i)\cap V(G\tilde{G}^{\star})$ 
to all the bags corresponding to the nodes in $T_i$. 
One can show the constructed pair is indeed a tree decomposition of $G^{\star}$ of width 
at most $\max \{p,s'+3\}$. Let $\Delta=\max \{p,s'+3\}$. Since $\tilde{G}_{\ell}$ is a subgarph of $G^{\star}$, the treewidth of $\tilde{G}_{\ell}$ is at most $\Delta$ for any $\ell\in[q-1]_0$.

Given an $\eta$-snapshot $(C,T,T',\rho,{\cal P}, f, {\sf out})$, we construct (the unique) $H'$ using \autoref{obs:Hprime:partial}. To test whether $(C,T,T',\rho,{\cal P}, f,{\sf out})$ is $(\ell,\eta)$-realizable, one can test whether $H'$ satisfies condition $(2)$ of  \autoref{def:partial} in time $\Delta^{\OO(\Delta)}n$ along the lines of \autoref{prop:twDPNoDec}. 
\end{proof}


\section{Creating Photo Albums}\label{sec:album}

In this section we will point to an irrelevant vertex and a  sufficient condition for being an irrelevant vertex. In the next section we combine several of the lemmas proved so far to find an irrelevant vertex. 

\subsection{Redundant Albums}\label{sec:redundant}

An $(\ell,\eta)$-snapshot collection is the set of all $\eta$-snapshots that are $(\ell,\eta)$-realizable, defined as follows.

\begin{definition}[{\bf Snapshot Collection}]
Let $(G,\delta,t,w',s')$ be an instance of \fFindFoliostar. Let $(M,{\cal N})$ be a $2q$-workspace in $\tilde{G}$, $\ell\in[q-1]_0$ and $\eta\in\mathbb{N}$. 
The {\em $(\ell,\eta)$-snapshot collection}, denoted by $\col_{\ell,\eta}$, is the set of all $\eta$-snapshots that are $(\ell,\eta)$-realizable.
\end{definition}

Clearly, there is a total ordering among $\col_{i,\eta}$, $i\in [q-1]_0$.

\begin{lemma}\label{lem:snapshotSetMonotone}
Let $(G,\delta,t,w',s')$ be an instance of \fFindFoliostar. Let $(M,{\cal N})$ be a $2q$-workspace in $\tilde{G}$ and $\eta\in\mathbb{N}$. For all $i\in[q-2]_0$, $\col_{i,\eta}\subseteq \col_{i+1,\eta}$. Moreover, for any $\eta$-snapshot $(C,T,T',\rho,{\cal P}, f, {\sf out})\in \col_{i,\eta}$  and an $(\ell,\eta)$-partial solution $(H,H',\phi,\varphi,{\sf out})$ such that $\camera_{i,\eta}(H,H',\phi,\varphi,{\sf out})=(C,T,T',\rho,{\cal P}, f,{\sf out})$, there is an $(i+1,\eta)$-partial solution $(H,H',\phi',$ $\varphi',{\sf out})$ such that $\camera_{i+1,\eta}(H,H',\phi',\varphi',{\sf out})=(C,T,T',\rho,{\cal P}, f,{\sf out})$ and $\phi'(V(H)\cap V(H'))\subseteq V(\tilde{G}_{i-1})$. 
\end{lemma}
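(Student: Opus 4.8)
\textbf{Proof plan for Lemma~\ref{lem:snapshotSetMonotone}.}
The plan is to prove the ``moreover'' part directly, since the inclusion $\col_{i,\eta}\subseteq\col_{i+1,\eta}$ is an immediate consequence: if $(C,T,T',\rho,{\cal P},f,{\sf out})\in\col_{i,\eta}$, then by \autoref{def:realizableSnapshot} there is an $(i,\eta)$-partial solution realizing it, and the ``moreover'' part produces an $(i+1,\eta)$-partial solution realizing the \emph{same} snapshot, witnessing that it lies in $\col_{i+1,\eta}$. So I will focus entirely on transforming a given $(i,\eta)$-partial solution $(H,H',\phi,\varphi,{\sf out})$ with $\camera_{i,\eta}(H,H',\phi,\varphi,{\sf out})=(C,T,T',\rho,{\cal P},f,{\sf out})$ into an $(i+1,\eta)$-partial solution $(H,H',\phi',\varphi',{\sf out})$ realizing the same snapshot and additionally satisfying $\phi'(V(H)\cap V(H'))\subseteq V(\tilde G_{i-1})$.

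First I would recall the structure of $\tilde G_{i}$ versus $\tilde G_{i+1}$: by \autoref{def:gltilde}, $\tilde G_{i+1}$ is obtained from $\tilde G_{i}$ by adding the vertices enclosed by nooses in $\fr[i+1]$ (plus the protrusion graphs $G_j$ attaching entirely inside $\bigcup_{t\le i+1}\fr[t]$). In particular $\tilde G_i$ is a subgraph of $\tilde G_{i+1}$, and $G^\star_i$ is a subgraph of $G^\star_{i+1}$. Hence the realization of $H'$ in $G^\star_i$ witnessed by the functions underlying $(\phi,\varphi)$ is already a realization of $H'$ in $G^\star_{i+1}$; I only need to re-index which vertices are viewed as living ``on the boundary frame'' so that the conditions of \autoref{def:partial} hold with $\ell=i+1$. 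Concretely, the vertices of $\phi(V(H')\setminus V(H))$ currently sit in nooses of $\fr[i]$, and the paths of $\varphi$ currently use internal vertices in $V(\tilde G_{i-1})$. The key step is to \emph{push} these boundary images outward by a constant amount: for each vertex $u\in V(H')\setminus V(H)$ with $\phi(u)$ in a left- or right-noose $N\in\fr[i]$, and for each $v\in V(H)\cap V(H')$ with $\phi(v)$ possibly sitting in a noose of some $\fr[j]$ with $j\ge i$, I route short detours through the noose grid (using properties (c),(d) of \autoref{def:nooseGrid} and \autoref{obs:workAdj}) that move these images into nooses of $\fr[i+1]$ and into $\tilde G_{i-1}$, respectively, all the while keeping the abstract graph $H'$ and the incidence pattern encoded by $f,{\cal P},\rho,{\sf out}$ intact. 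Because the noose grid between $\fr[i-1]$ and $\fr[i+1]$ is ``wide'' in the sense that adjacent nooses of consecutive frames are connected by edges, such local rerouting exists and stays disjoint across the (at most $\eta$ boundary vertices plus $|T|\le 6(\delta^\star+t)$ terminal vertices) objects being moved, provided the $(i,\eta)$-partial solution does not already saturate these frames — and it does not, since a partial solution uses at most one vertex per noose of $\fr[\ell]$ and at most $\eta$ nooses there, and the width of the grid dwarfs $\eta$.

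After rerouting I would verify each clause of \autoref{def:partial} for the new tuple with parameter $i+1$: clause 1 is unchanged ($H,H'$ are the same); clause 2(a) holds because the new $\varphi'$ has all internal vertices (outside the images of $V(H')\setminus V(H)$) inside $V(\tilde G_{(i+1)-1})=V(\tilde G_i)$ — indeed, after the push they lie in $V(\tilde G_{i-1})\subseteq V(\tilde G_i)$, which also gives the extra conclusion $\phi'(V(H)\cap V(H'))\subseteq V(\tilde G_{i-1})$; clauses 2(b),(c) hold since I placed the images of $V(H')\setminus V(H)$ into left/right nooses of $\fr[i+1]$, one per noose; clause 2(d) (the ${\sf out}$ condition) is preserved because ${\sf out}$ is unchanged and the apex-neighbourhood structure of the terminal images is unchanged. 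Finally I would check $\camera_{i+1,\eta}(H,H',\phi',\varphi',{\sf out})=(C,T,T',\rho,{\cal P},f,{\sf out})$ by running through \autoref{def:camera}: $T,T',{\cal P},{\sf out}$ depend only on $H,H'$; the cycle $C$ has $|V(H')\setminus V(H)|$ vertices, which is unchanged; and the map $f$ is determined by the cyclic order of the nooses of $\fr[i+1]$ enclosing the images of $V(H')\setminus V(H)$ together with the adjacency pattern in $H'$ — and since the rerouting was designed to preserve the relative order of these images around the frame (I would insist on this when constructing the detours) and $H'$ is unchanged, $f$ is preserved.

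The main obstacle will be establishing that the outward-pushing detours can be chosen simultaneously for all the $O(\eta+\delta^\star+t)$ relevant vertices while (i) remaining internally disjoint from each other and from the unchanged portion of the realization, (ii) staying inside the appropriate slice of the noose grid, and (iii) preserving the cyclic/linear order around $\fr[i+1]$ so that $f$ is reproduced exactly. This is a routing-in-a-planar-grid argument: I would handle it by noting that between $\fr[i-1]$ and $\fr[i+1]$ lies a constant-depth band of the noose grid that is essentially a planar grid of width $2q$ (far exceeding $\eta$), in which $O(\eta)$ order-respecting disjoint paths connecting prescribed entry points on its inner boundary to prescribed exit points on its outer boundary trivially exist; the images of terminal vertices of $H$ that happen to sit outside $\tilde G_{i-1}$ are, by clause 2 of the camera/partial-solution setup applied with $\ell=i$ together with $(\ell,3)$-terminal-freeness implicit in realizability, already confined to a thin collar and can be pulled inward analogously. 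I expect this to be routine given the machinery already developed (\autoref{obs:workAdj}, \autoref{def:nooseGrid}, and the rerouting style used in the proof of \autoref{lem:untangledFrame}), so the bulk of the write-up is bookkeeping rather than a genuinely new idea.
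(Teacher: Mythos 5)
Your proposal is correct and follows essentially the same route as the paper: keep $H$, $H'$ and the realization unchanged, and for each image of a vertex of $V(H')\setminus V(H)$ sitting in a (left- or right-) noose of $\fr[i]$, append a short path entirely inside that noose and a distinct adjacent noose of $\fr[i+1]$, then verify the clauses of \autoref{def:partial} and \autoref{def:camera}; the inclusion $\col_{i,\eta}\subseteq\col_{i+1,\eta}$ then follows as you say. Two small remarks: the terminal images need no rerouting at all, since condition 2(a) of \autoref{def:partial} already places every vertex of $\image(\varphi)\setminus\phi(V(H')\setminus V(H))$ — in particular $\phi(V(H)\cap V(H'))$ — in $V(\tilde{G}_{i-1})$, so the paper simply sets $\phi'=\phi$ on $V(H)\cap V(H')$ (your plan to ``pull them inward'' is unnecessary and would risk disturbing condition 2(d) on ${\sf out}$); and the paper additionally reroutes any edge of the appended detours that lies in $E(\tilde{G})\setminus E(G)$ through the corresponding protrusion $G_j$ using condition~(6) of \autoref{obs:cflatmore}, so as to obtain a genuine realization in $G^\star_{i+1}$ — a technical step your sketch omits.
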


\begin{proof}
Fix an arbitrary $i\in[q-2]_0$ and an $\eta$-snapshot $(C,T,T',\rho,{\cal P}, f, {\sf out})\in \col_{i,\eta}$. We need to prove that  $(C,T,T',\rho,{\cal P}, f, {\sf out})\in \col_{i+1,\eta}$. Let  $(H,H',\phi,\varphi,{\sf out})$ be an $(i,\eta)$-partial solution such that $\camera_{i,\eta}(H,H',\phi,\varphi,{\sf out})=(C,T,T',\rho,{\cal P}, f, {\sf out})$. Since $(H,H',\phi,\varphi,{\sf out})$ is an $(i,\eta)$-partial solution, we know that $H'$ is a topological minor  in $G^{\star}_{i}$ witnessed by $(\phi_{H'},\varphi_{H'})$ and $(H',\phi,\varphi)$ is a representation of $(H',\phi_{H'},\varphi_{H'})$ in $\tilde{G}_{i}$ with the following properties. 
	\begin{enumerate}[label=(\alph*)]
	\item\label{propertya} Every vertex in $\image(\varphi)\setminus (\phi(V(H')\setminus V(H))$ belongs to $V(\tilde{G}_{i-1})$. 
	\item\label{propertyb} Every vertex in $\phi(V(H')\setminus V(H))$ belongs to $\inNoose_{\tilde{G}}(N)$ for some $N\in\fr[\ell]$ and $N$ is either a left-noose or a right noose in $\fr[\ell]$.
\item\label{propertyc} For any $N\in\fr[\ell]$, $|\inNoose_{\tilde{G}}(N)\cap \phi(V(H'))|\leq 1$. 
\item For any $v\in (V(H')\cap V(H)) \setminus R(H')$, ${\sf out}(v)\subseteq A\cap (N_G(\phi(v)))$.
	\end{enumerate}

Now we construct $(\phi',\varphi')$ from $(\phi,\varphi)$ and then prove that $\camera_{i+1,\eta}(H,H',\phi',\varphi', {\sf out})=(C,T,T',\rho,{\cal P}, f, {\sf out})$. Towards that let $\{v_1,\ldots,v_{s}\}=V(H')\setminus V(H)$ and $u_j=\phi(v_j)$ for all $j\in [s]$. Let $N_j$ be the noose in $\fr[i]$  such that $u_j\in \inNoose_{\tilde{G}}(N_j)$. By property~\ref{propertyc}, we have that these nooses $N_1,\ldots,N_s$ are distinct and by property~\ref{propertyb}, each of these nooses is either a left noose or a right noose  in $\fr[i]$. There exist distinct nooses $N_1',\ldots, N_s'$ in $\fr[i+1]$ such that $N_j$ is adjacent to $N_j'$ for all $j\in [s]$, in the noose grid ${\cal N}$ and these nooses are from the union of the right-nooses and left-nooses of $\fr[i+1]$.  Let $P_j$ be a shortest path from $u_j$ to a vertex in the noose $N_j'$ in $\tilde{G}[\inNoose_{\tilde{G}}(N_j)\cup \inNoose_{\tilde{G}}(N_j')]$.  Notice that all the vertices in $P_j$, except the ending vertex are in $\inNoose_{\tilde{G}}(N_j)$ and the ending vertex $w_j$ of $P_j$, belong to $\inNoose_{\tilde{G}}(N_j')$.

For any $\ell\in [q-1]_0$, let $U_{\ell}$ be the union of the vertices from the nooses in $\bigcup_{i \in [\ell]_0}\fr[i]$. Let ${\cal G}_{\ell} \subseteq \{G_1,\ldots,G_k\}$ be the graphs such that for any $i\in [k]$, $V(G_i)\cap V(G_0)\subseteq U_{\ell}$ if and only if $G_i\in {\cal G}$. 
Let $F$ be the set of edges in the paths $P_1,\ldots,P_s$ such that no edge in $F$ is in $G^{\star}_{i+1}$. This implies that for any edge $\{w,w'\}\in F$, there is a graph $G_j$ such that $G_j\in {\cal G}_{i+1}\setminus {\cal G}_{i-1}$ and $w,w'\in V(G_j)$. Because of property~\ref{propertya}, no vertex in $V(G_j)\setminus V(G_0)$ is used by 
$\varphi$ and $\varphi_{H'}$.  Therefore using condition~\ref{conditionsix} of \autoref{obs:cflatmore}, we can reroute  the edges in $F$ with vertex disjoint paths from $G^{\star}_{i+1}$ and these new vertices are not used by $\varphi$ and $\varphi_{H'}$. Let $P_1^{\star},\ldots,P_s^{\star}$ be paths obtained from $P_1,\ldots,P_s$ as explained above. 


Now we define $(\phi',\varphi')$ and $(\phi'_{H'},\varphi'_{H'})$ such that $(\phi',\varphi')$ is a representation of 
$(\phi'_{H'},\varphi'_{H'})$ and $(H,H', \phi'_{H'},\varphi'_{H'}, {\sf out})\in \col_{i+1,\eta}$.  For each $v\in V(H')\cap V(H)$, $\phi'(v)=\phi(v)$. For any $v_j\in V(H')\setminus V(H)$, $\phi'(v_j)=w_j$. We set $\phi'_{H'}=\phi'$. Now we define $\varphi'$ and $\varphi'_{H'}$ as follows. Fix an edge $e=\{x,y\}\in E(H')$. If $x,y\in V(H')\cap V(H)$, then $\varphi'(e)=\varphi(e)$ and $\varphi'_{H'}(e)=\varphi_{H'}(e)$. If $x\in V(H')\cap V(H)$ and $y=v_j\in V(H')\setminus V(H)$, then $\varphi'(e)$ is the concatenation of paths $\varphi(e)$ and $P_j$ and $\varphi'_{H'}(e)$ is the concatenation of paths $\varphi_H(e)$ and $P^{\star}_j$. If $x,y \in V(H')\setminus V(H)$, then let $x=v_j$ and $y=v_{j'}$. Then $\varphi'(e)$ is a concatenation of paths $P_j$, $\varphi(e)$, and $P_{j'}$ and $\varphi'_{H'}(e)$ is a concatenation of paths $P^{\star}_j$, $\varphi_H(e)$, and $P^{\star}_{j'}$.  

From the construction, we have that $(H',\phi',\varphi')$ is a representation of $(H',\phi_{H'},\varphi_{H'})$  in $G^{\star}_{i+1}$. Now we prove that $(H',\phi',\varphi')$ satisfies condition $(2)$ of  \autoref{def:partial}. 
Condition $2(a)$ follows from the fact that the only vertices in $\fr[i+1]$ used by $\varphi'$ is $\{w_1,\ldots,w_b\}$. Condition $2(b)$, follows from the definition $\phi'$ and the fact that $N_1',\ldots,N_s'$ are nooses from the union of right-nooses and left-nooses of $\fr[i+1]$.  Condition $3(b)$ follows from the fact that $w_1,\ldots,w_b$ belong to distinct nooses in $\fr[i+1]$. Since $\phi(v)=\phi'(v)$ for all $v\in V(H')\cap V(H)$, condition $2(d)$ is satisfied for the function ${\sf out}$. Therefore $(H,H',\phi',\varphi', {\sf out})$ is an $(i+1,\eta)$-partial solution. 

Now we prove that $\camera_{i+1,\eta}(H,H',\phi',\varphi',{\sf out})=(C,T,T',\rho,{\cal P}, f,{\sf out})$. By assumption we have that $\camera_{i,\eta}(H,H',\phi,\varphi, {\sf out})=(C,T,T',\rho,{\cal P}, f, {\sf out})$ and therefore the first four conditions in  \autoref{def:camera} clearly follows. By replacing $N_1,\ldots,N_s$ with $N_1,\ldots,N_s$ and $u_j^G$ with $w_j$, one can verify the fifth condition is also satisfied.  

From the construction of $\phi'$, we have that $\phi'(V(H)\cap V(H'))=\phi(V(H)\cap V(H'))\subseteq V(\tilde{G}_{i-1})$ (by property~\ref{propertya}). This completes the proof of the lemma. 
\end{proof}

Intuitively, an album is defined to be the series of snapshot collections at regular intervals.

\begin{definition}[{\bf Album}]\label{def:album}
Let $(G,\delta,t,w',s')$ be an instance of \fFindFoliostar. Let $(M,{\cal N})$ be a $2q$-workspace in $\tilde{G}$ 
and $\eta,\lambda\in\mathbb{N}$ where $\lambda$ divides $q$. The {\em $(\eta,\lambda)$-album} is the function $\album_{\eta,\lambda}: [q/\lambda]\rightarrow 2^{\SNP[\eta]}$ where for all $i\in[q/\lambda]$, $\album_{\eta,\lambda}(i)=\col_{(i\lambda-1),\eta}$. 
\end{definition}

As a corollary to \autoref{lem:snapshotSetMonotone}, we have the following statement.

\begin{corollary}\label{cor:snapshotSetMonotone}
Let $(G,\delta,t,w',s')$ be an instance of \fFindFoliostar, $(M,{\cal N})$ be a $2q$-workspace in $\tilde{G}$ 
and $\eta,\lambda\in\mathbb{N}$ where $\lambda$ divides $q$.
For all $i\in[q/\lambda-1]$, $\album_{\eta,\lambda}(i)\subseteq \album_{\eta,\lambda}(i+1)$.
\end{corollary}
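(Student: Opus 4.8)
\textbf{Proof proposal for \autoref{cor:snapshotSetMonotone}.}

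The statement is a direct consequence of \autoref{lem:snapshotSetMonotone}, and the plan is to unwind the definition of the album and then telescope. First I would recall that by \autoref{def:album}, for each $i\in[q/\lambda]$ we have $\album_{\eta,\lambda}(i)=\col_{(i\lambda-1),\eta}$. So the claim $\album_{\eta,\lambda}(i)\subseteq\album_{\eta,\lambda}(i+1)$ is literally the claim $\col_{(i\lambda-1),\eta}\subseteq\col_{((i+1)\lambda-1),\eta}$, i.e.\ $\col_{(i\lambda-1),\eta}\subseteq\col_{(i\lambda+\lambda-1),\eta}$. Since $\lambda\geq 1$ and, for $i\in[q/\lambda-1]$, both indices $i\lambda-1$ and $i\lambda+\lambda-1$ lie in $[q-2]_0$ (the upper one is $(i+1)\lambda-1\leq q-1$, and one checks $(i+1)\lambda-1\leq q-2$ whenever $i\leq q/\lambda-1$ and $\lambda\geq 2$; the edge case $\lambda=1$ is handled because then $i\leq q-1$ forces $i\lambda+\lambda-1=i\leq q-1$, and we only need indices valid for \autoref{lem:snapshotSetMonotone}, which gives $\col_{j,\eta}\subseteq\col_{j+1,\eta}$ for all $j\in[q-2]_0$), we may apply \autoref{lem:snapshotSetMonotone} repeatedly.

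Concretely, I would argue as follows. Fix $i\in[q/\lambda-1]$ and set $a=i\lambda-1$ and $b=(i+1)\lambda-1=a+\lambda$. By \autoref{lem:snapshotSetMonotone}, for every $j$ with $a\leq j\leq b-1$ we have $\col_{j,\eta}\subseteq\col_{j+1,\eta}$ (each such $j$ is in the valid range $[q-2]_0$, since $j\leq b-1=(i+1)\lambda-2\leq q-2$). Chaining these $\lambda$ inclusions gives
\[
\col_{a,\eta}\subseteq\col_{a+1,\eta}\subseteq\cdots\subseteq\col_{b,\eta},
\]
hence $\col_{a,\eta}\subseteq\col_{b,\eta}$, which is exactly $\album_{\eta,\lambda}(i)\subseteq\album_{\eta,\lambda}(i+1)$.

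There is essentially no obstacle here; the only point that requires a moment of care is the bookkeeping on the index ranges, ensuring that every intermediate frame index $j$ to which we apply \autoref{lem:snapshotSetMonotone} is a legitimate frame index of the $2q$-workspace (i.e.\ lies in $[q-2]_0$), and that the hypothesis ``$\lambda$ divides $q$'' together with $i\leq q/\lambda-1$ indeed keeps $b=(i+1)\lambda-1$ below $q-1$. Since \autoref{lem:snapshotSetMonotone} already packages the substantive content (transporting an $(i,\eta)$-partial solution across one frame while preserving its snapshot), the corollary is purely a matter of iterating that lemma $\lambda$ times and reading off the definition of the album. I would therefore present the proof in two or three sentences: unwind \autoref{def:album}, invoke \autoref{lem:snapshotSetMonotone} along the chain from $i\lambda-1$ to $(i+1)\lambda-1$, and conclude by transitivity of $\subseteq$.
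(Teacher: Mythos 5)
Your proof is correct and is exactly the intended argument: the paper states this as an immediate corollary of \autoref{lem:snapshotSetMonotone} without writing out a proof, and your unwinding of \autoref{def:album} followed by chaining the one-step inclusions $\col_{j,\eta}\subseteq\col_{j+1,\eta}$ for $j$ from $i\lambda-1$ to $(i+1)\lambda-2$ (all of which lie in $[q-2]_0$) is precisely what is meant. The only nitpick is the remark in your first paragraph about needing $(i+1)\lambda-1\leq q-2$; as your own second paragraph correctly shows, one only needs the left endpoints $j\leq b-1\leq q-2$ of each application, so $b=(i+1)\lambda-1$ may equal $q-1$ without issue.
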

%

If snapshot collections are the same for a large consecutive sequence in an album, then we we say that it is a redundant album.

\begin{definition}[{\bf Redundant Album}]\label{def:redundantAlbum}

Let $(G,\delta,t,w',s')$ be an instance of \fFindFoliostar. Let $(M,{\cal N})$ be a $2q$-workspace in $\tilde{G}$ 
and $\eta,\lambda,\mu \in\mathbb{N}$ where $\lambda$ divides $q$.
The $(\eta,\lambda)$-album is {\em $\mu$-redundant} if there exists $s\in[q/\lambda-\mu+1]$, called a {\em $\mu$-redundancy stamp}, such that $\album_{\eta,\lambda}(s)=\album_{\eta,\lambda}(s+1)=\cdots=\album_{\eta,\lambda}(s+\mu-1)$.
\end{definition}

In the following lemma, we prove that there is an efficient algorithm to compute a $\mu$-redundancy stamp.

\begin{lemma}\label{lem:redundantAlbumExistence}
Let $(G,\delta,t,w',s')$ be an instance of \fFindFoliostar. Let $(M,{\cal N})$ be a $(p,2q)$-workspace in $\tilde{G}$ 
and $\eta,\lambda,\mu \in\mathbb{N}$ where $\lambda$ divides $q$. 
There is a constant $c$ such that if $q/\lambda-\mu+ 1>(t+\delta^{\star})^{c(t^2+t\delta^{\star})}\cdot\eta^{c\eta} \cdot 2\mu$, 
then the $(\eta,\lambda)$-album is {\em $\mu$-redundant} and a $\mu$-redundancy stamp for the $(\eta,\lambda)$-album can be found in time 
 $((t+\delta^{\star})^{\OO(t^2+t\delta^{\star})}\cdot\eta^{\OO(\eta)}\cdot \mu) \Delta^{\OO(\Delta)} n$
 where $\Delta=\max \{p,s'+3\}$. 
\end{lemma}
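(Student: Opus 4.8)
The plan is to combine a counting bound on the number of distinct $\eta$-snapshots with the monotonicity established in \autoref{cor:snapshotSetMonotone}, and then turn the resulting purely combinatorial statement into an algorithm using the realizability test of \autoref{lem:testRealizable}. First I would recall from \autoref{obs:numSnapshots} that the total number of distinct $\eta$-snapshots is at most $(t+\delta^{\star})^{\OO(t^2+t\delta^{\star})}\cdot\eta^{\OO(\eta)}$; fix the constant $c$ in the statement so that $(t+\delta^{\star})^{c(t^2+t\delta^{\star})}\cdot\eta^{c\eta}$ is an upper bound on this number. By \autoref{cor:snapshotSetMonotone} the sequence $\album_{\eta,\lambda}(1)\subseteq\album_{\eta,\lambda}(2)\subseteq\cdots\subseteq\album_{\eta,\lambda}(q/\lambda)$ is a non-decreasing chain of subsets of $\SNP[\eta]$, so it can strictly increase at most $|\SNP[\eta]|\le(t+\delta^{\star})^{c(t^2+t\delta^{\star})}\cdot\eta^{c\eta}$ times. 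Hence among the first $q/\lambda-\mu+1$ "windows" of $\mu$ consecutive indices each — of which there are more than $(t+\delta^{\star})^{c(t^2+t\delta^{\star})}\cdot\eta^{c\eta}\cdot 2\mu$ by hypothesis, and in particular more than $(t+\delta^{\star})^{c(t^2+t\delta^{\star})}\cdot\eta^{c\eta}$ disjoint such windows fit — at least one window must consist entirely of equal album entries; this is exactly a $\mu$-redundancy stamp. The clean way to phrase this: partition (a prefix of) $[q/\lambda]$ into blocks of $\mu$ consecutive indices; the chain can change value in at most $|\SNP[\eta]|$ blocks, so if there are strictly more than $|\SNP[\eta]|$ blocks within $[q/\lambda-\mu+1]$, one block is constant, i.e. $\album_{\eta,\lambda}(s)=\cdots=\album_{\eta,\lambda}(s+\mu-1)$ for the starting index $s$ of that block.

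Next I would address the algorithmic part. To locate the stamp we must compute, or at least compare, the album entries $\album_{\eta,\lambda}(i)=\col_{(i\lambda-1),\eta}$ for the relevant indices $i$. For each index $i$ we iterate over all $\eta$-snapshots — there are at most $(t+\delta^{\star})^{\OO(t^2+t\delta^{\star})}\cdot\eta^{\OO(\eta)}$ of them by \autoref{obs:numSnapshots} — and for each one test $(i\lambda-1,\eta)$-realizability via \autoref{lem:testRealizable} in time $\Delta^{\OO(\Delta)}n+(\eta+\delta^{\star}+t)^{\OO(1)}$, where $\Delta=\max\{p,s'+3\}$. Thus one album entry costs $(t+\delta^{\star})^{\OO(t^2+t\delta^{\star})}\cdot\eta^{\OO(\eta)}\cdot\Delta^{\OO(\Delta)}n$. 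Rather than computing all $q/\lambda$ entries, I would exploit the chain structure: it suffices to examine only the block boundaries. Concretely, using the block partition from the combinatorial argument, I would compute $\col_{(i\lambda-1),\eta}$ only at the first index of each block; the chain changes value in at most $|\SNP[\eta]|$ blocks, and actually it is enough to scan blocks until I find two consecutive block-start entries that are equal — but since the chain need not be strictly increasing outside changing blocks, the simplest correct procedure is: compute the album entry at the start and end index of each of the first $|\SNP[\eta]|+1$ blocks inside $[q/\lambda-\mu+1]$, and output the start index $s$ of any block whose start and end entries coincide (which forces the whole block to be constant, by monotonicity). This touches only $\OO(|\SNP[\eta]|\cdot\mu)=(t+\delta^{\star})^{\OO(t^2+t\delta^{\star})}\cdot\eta^{\OO(\eta)}\cdot\mu$ many album entries — actually $\OO(|\SNP[\eta]|)$ entries if we only look at the two boundary indices per block — giving total running time $((t+\delta^{\star})^{\OO(t^2+t\delta^{\star})}\cdot\eta^{\OO(\eta)}\cdot\mu)\,\Delta^{\OO(\Delta)}n$, as claimed.

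I expect the main obstacle to be getting the bookkeeping in the combinatorial pigeonhole exactly right — specifically, being careful that the hypothesis $q/\lambda-\mu+1>(t+\delta^{\star})^{c(t^2+t\delta^{\star})}\cdot\eta^{c\eta}\cdot 2\mu$ really does force a constant block of length $\mu$ \emph{starting at an index in $[q/\lambda-\mu+1]$} (so that "$s+\mu-1$" is a valid index, i.e. $s+\mu-1\le q/\lambda$), and that "block start equals block end implies block constant" is a valid use of monotonicity. The factor $2\mu$ rather than $\mu$ in the hypothesis gives the needed slack: it guarantees strictly more than $|\SNP[\eta]|$ full length-$\mu$ blocks fit inside the index range $[1,q/\lambda-\mu+1]$ (one extra factor of $\mu$ accounts for aligning blocks to a grid and for staying within $[q/\lambda-\mu+1]$ rather than $[q/\lambda]$), so by pigeonhole one of these blocks lies in a region where the chain does not change, hence is constant. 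The running time accounting is then routine: number of album entries examined times cost per entry, plus the $(\eta+\delta^{\star}+t)^{\OO(1)}$ additive term from \autoref{lem:testRealizable} which is absorbed. I would also note explicitly that $\lambda$ dividing $q$ ensures $\album_{\eta,\lambda}$ is well-defined on $[q/\lambda]$, which is needed for the indices $i\lambda-1$ to be at most $q-1$ and hence to lie in $[q-1]_0$ as required by the definition of $\col_{\cdot,\eta}$.
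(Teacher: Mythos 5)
Your proposal is correct and follows essentially the same route as the paper: bound $|\SNP[\eta]|$ via \autoref{obs:numSnapshots}, combine the monotone chain from \autoref{cor:snapshotSetMonotone} with a pigeonhole over length-$\mu$ blocks to get existence, and compute album entries by testing each $\eta$-snapshot with \autoref{lem:testRealizable}. Your refinement of only evaluating block-boundary entries is a minor optimization over the paper's computation of all entries in $[q/\lambda-\mu+1]$, but it does not change the stated asymptotic running time.
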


\begin{proof}
By \autoref{obs:numSnapshots}, we know that the number of distinct $\eta$-snapshots is upper bounded by $(t+\delta^{\star})^{c(t^2+t\delta^{\star})}\cdot\eta^{c\eta}$, where $c$ is a constant.  
By \autoref{cor:snapshotSetMonotone}, $\album_{\eta,\lambda}(i)\subseteq \album_{\eta,\lambda}(i+1)$. Therefore if  $q/\lambda-\mu+ 1>(t+\delta^{\star})^{c(t^2+t\delta^{\star})}\cdot\eta^{c\eta} \cdot 2\mu$,
then there exists  $s\in [q/\lambda-\mu+ 1]$ such that $\album_{\eta,\lambda}(s)=\album_{\eta,\lambda}(s+1)=\cdots=\album_{\eta,\lambda}(s+\mu-1)$.

To compute a $\mu$-redundancy stamp $s\in [q/\lambda-\mu+ 1]$, it is enough to compute $\album_{\eta,\lambda}(i)$ for all  $i\in [q/\lambda-\mu+ 1]$. Towards that we need to use \autoref{lem:testRealizable} 
$(t+\delta^{\star})^{c(t^2+t\delta^{\star})}\cdot\eta^{c\eta} \cdot 2\mu+1$ times. 
This implies that the running time to compute a $\mu$-redundancy stamp  is 
$(t+\delta^{\star})^{c(t^2+t\delta^{\star})}\cdot\eta^{c\eta} \cdot 2\mu \cdot \Delta^{\OO(\Delta)} (\eta+\delta^{\star}+t)^{\OO(1)} n$, which is equal to 
$((t+\delta^{\star})^{\OO(t^2+t\delta^{\star})}\cdot\eta^{\OO(\eta)}\cdot \mu) \Delta^{\OO(\Delta)} n$, 
where $\Delta=\max \{p,s'+3\}$.
\end{proof}

%
%

If there is a $\mu$-redundancy stamp $s$ for a large $\mu$, then there is a $(\ell,\lambda-1)$-terminal free solution, where $\ell\geq (s+1)\lambda$.

\begin{observation}\label{obs:existsTermFreeFrame}
Let $(G,\delta,t,w',s')$ be an instance of \fFindFoliostar\ and ${\cal S}$ be a solution. Let $(M,{\cal N})$ be a $2q$-workspace in $\tilde{G}$ and $\eta,\lambda,\mu \in\mathbb{N}$ where $\lambda$ divides $q$ such that 
$q/\lambda-\mu+ 1>(t+\delta^{\star})^{c(t^2+t\delta^{\star})}\cdot\eta^{c\eta} \cdot 2\mu$, 
where 
$c$ is the constant mentioned in  \autoref{lem:redundantAlbumExistence}. Let $s$ be a $\mu$-redundancy stamp for the $(\eta,\lambda)$-album. Then, there is a constant $c'$ such that if 
$\mu> 2^{c'(\delta^{\star})^2}t$, 
then  there exists $j\in \{s+1,\ldots, s+\mu-1\} $ such that ${\cal S}$ is $(j\lambda,\lambda-1)$-terminal free.
\end{observation}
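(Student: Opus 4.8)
\textbf{Proof plan for Observation~\ref{obs:existsTermFreeFrame}.}

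The plan is to exploit the fact that a $\mu$-redundancy stamp $s$ gives us $\mu$ consecutive frames (indexed by $s\lambda-1, (s+1)\lambda-1, \ldots, (s+\mu-1)\lambda-1$) whose snapshot collections are all identical, and to count terminals to find among them one frame-band $\bigcup_{j\lambda - (\lambda-1) \leq i \leq j\lambda + (\lambda-1)} \fr[i]$ that is completely terminal-free with respect to ${\cal S}$. First I would recall that, by \autoref{prop:no.ofgraphsinfolios}, the number of distinct graphs in the $\delta^{\star}$-folio of $G$ is at most $2^{\OO((\delta^{\star})^2)}\cdot\vert R(G)\vert^{\OO(\delta^{\star})} = 2^{\OO((\delta^{\star})^2)}$ (using $\vert R(G)\vert \leq \boundary$), so the total number of terminal vertices with respect to ${\cal S}$ is at most $2^{c''(\delta^{\star})^2}$ for some constant $c''$; moreover, across all tuples $(H,\phi_H,\varphi_H)\in{\cal S}$, the total number of edges of $\varphi_H$-images with one endpoint in $A$ is at most $\delta^{\star} t$, so the number of ``bad'' vertices (terminals, endpoints of such apex-incident edges, and — via Condition~\ref{condition4} of \autoref{def:KWT} — vertices of the at most $t$ protrusions $G_i$ that are adjacent to $A$) is bounded by $2^{c'(\delta^{\star})^2}t$ for a suitable constant $c'$.

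Next I would use a pigeonhole argument over the $\mu$ consecutive frame indices $j\in\{s+1,\ldots,s+\mu-1\}$. The frame-bands $\bigcup_{j\lambda-(\lambda-1)\leq i\leq j\lambda+(\lambda-1)}\fr[i]$ for distinct $j$ in this range — or, if one is worried about overlaps, one can just use the middle frames $\fr[j\lambda-1]$ together with the property that the noose grids partition $V(\tilde G)$ — are pairwise disjoint regions of $\tilde{G}$ (by property (a) of \autoref{def:nooseGrid}, $\inNoose(N)\cap\inNoose(N')=\emptyset$ for distinct nooses), and together with the associated protrusions $G_i$ and their $A$-incident edges they involve pairwise disjoint sets of the bad vertices counted above. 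Since there are at most $2^{c'(\delta^{\star})^2}t$ bad vertices in total and $\mu - 1 > 2^{c'(\delta^{\star})^2}t$ candidate frame-bands, at least one index $j\in\{s+1,\ldots,s+\mu-1\}$ has the property that its frame-band $\bigcup_{j\lambda-(\lambda-1)\leq i\leq j\lambda+(\lambda-1)}\fr[i]$ and the associated protrusions contain no bad vertex at all. By the definition of $(\ell,d)$-terminal free (\autoref{def:termFreeFrame}) with $\ell = j\lambda$ and $d = \lambda - 1$, this exactly says that ${\cal S}$ is $(j\lambda,\lambda-1)$-terminal free. (Here I would need to double-check the indexing: $\ell = j\lambda$ requires $\ell-d\geq s\lambda - 1 > 0$ and $\ell + d < q$, which follow from $j\geq s+1 \geq 2$ and $j\leq s+\mu-1$ together with the hypothesis $q/\lambda - \mu + 1 > 0$; a small adjustment of the constant in the bound on $\mu$ absorbs any off-by-one in how many frame-bands actually fit.)

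The role of the redundancy hypothesis and the album machinery here is essentially just to guarantee that the range $\{s+1,\ldots,s+\mu-1\}$ is a legitimate index range inside $[q/\lambda]$, so that all the frames $\fr[j\lambda - 1\pm(\lambda-1)]$ genuinely exist inside the $2q$-workspace; the substantive content is the counting argument, which does not use redundancy at all. The main obstacle I anticipate is bookkeeping rather than conceptual: being careful that (i) the three separate ``badness'' conditions of \autoref{def:termFreeFrame} are all captured by a single counting bound of the form $2^{c'(\delta^{\star})^2}t$, in particular that a protrusion $G_i$ adjacent to $A$ contributes only $O(1)$ to the count and each protrusion intersects the planar part $V(G_0)$ — and hence the noose grid — in a clique of size at most $3$, so it can ``belong'' to at most a bounded number of frame-bands; and (ii) that choosing the constant $c'$ (here to be at least $c''$ plus a small additive constant, as already indicated in the statement and in the proof of \autoref{lem:mainexrelestar}) makes the inequality $\mu > 2^{c'(\delta^{\star})^2}t \Rightarrow$ ``some $j$ works'' go through. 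Since a nearly identical counting step already appears inside the proof of \autoref{lem:mainexrelestar}, I expect the write-up to be short, mostly a matter of quoting \autoref{prop:no.ofgraphsinfolios} and doing the pigeonhole cleanly.
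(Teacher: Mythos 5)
Your proposal is correct and follows essentially the same route as the paper: bound the number of terminals via \autoref{prop:no.ofgraphsinfolios} and the number of apex-incident edges by $2^{\OO((\delta^{\star})^2)}\cdot\delta^{\star}t$, then pigeonhole over the $\mu$ frame-bands around the indices $j\lambda$ for $j\in\{s+1,\ldots,s+\mu-1\}$ (the paper uses the bands $\bigcup_{i\lambda\leq j<(i+2)\lambda}\fr[j]$ and concludes $((i+1)\lambda,\lambda-1)$-terminal freeness, with the constant $c'$ absorbing the bookkeeping). The paper's proof is exactly this counting argument and, like yours, uses the redundancy stamp only to place the index range inside $[q/\lambda]$.
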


\begin{proof}
By \autoref{prop:no.ofgraphsinfolios},  the number of distinct graphs (upto isomorphism) in the $\delta^{\star}$-folio of $G$ is upper bounded by  
$2^{\OO((\delta^{\star})^2)}\cdot  \vert R(G)\vert^{\OO(\delta^{\star})}=2^{\OO((\delta^{\star})^2)}$. 
This implies that the number of terminal vertices in ${\cal S}$ is upper bounded by 
$ 2^{c''(\delta^{\star})^2}$, where $c''$ is a constant. 
Moreover, for each $(H,\phi,\varphi)\in {\cal S}$, the number of edges with one end point in $A$ and other in $G_0\cup G_1 \cup \ldots \cup G_k$ is at most $\delta^{\star}t$. 
Therefore, there is a constant $c'$ such that 
if $\mu>2^{c'(\delta^{\star})^2}t$ and  $s$ be a $\mu$-redundancy stamp for the $(\eta,\lambda)$-album, there is $i\geq s$ such that no terminal from ${\cal S}$ belongs $U$ and no edge in $\varphi_H$ (for any $(H,\phi_H,\varphi_H)\in {\cal S}$) with one endpoint in $U$ and other in $A$, where $U$ is the union of vertices in the nooses in $\bigcup_{i\lambda\leq j < (i+2)\lambda} \fr[j]$ and the union of vertices of $V(G_i)$, with $V(G_i)\cap V(G_0)$ contains a vertex in a noose in  $\bigcup_{i\lambda\leq j < (i+2)\lambda} \fr[j]$. 
This implies ${\cal S}$ is 
$((i+1)\lambda,\lambda-1)$-terminal free.
\end{proof}



\subsection{Patching Snapshots}\label{sec:patch}

In this subsection we prove that if there is a $\mu$-redundancy stamp $s$ for a large $\mu$, then any vertex in any up-noose of $\fr[s\lambda-1]$ is irrelevant. Recall that in the last subsection we proved that there is an $(\ell,\lambda-1)$-terminal free solution for some $\ell\geq (s+1)\lambda$. We prove that there is a ``patch'' between  an $(s\lambda-1,\eta)$-partial solution and the ``future part'' of the corresponding $(\ell,\eta)$-partial solution without using any up-noose of  $\fr[s\lambda-1]$. 

\begin{definition}[{\bf Zoom}]\label{def:solutionToSnap}
Let $(G,\delta,t,w',s')$ be an instance of \fFindFoliostar. Let $(M,{\cal N})$ be a $2q$-workspace in $\tilde{G}$, $\ell\in[q-1]_0$ and $\eta\in\mathbb{N}$. Let ${\cal S}'$ be the representation of  an $(\ell,\eta)$-untangled solution of $(G,\delta,t,w',s')$. Let $(H,\phi,\varphi)\in {\cal S}'$. The {\em $(\ell,\eta)$-zoom} of $(H,\phi,\varphi)$, denoted by $\zoom_{\ell,\eta}(H,\phi,\varphi)$, is  $\camera_{\ell,\eta}(\restrict_{\ell,\eta}(H,\phi,\varphi))$.
\end{definition}

\begin{definition}[{\bf Miniature}]\label{def:miniature}
Let $(G,\delta,t,w',s')$ be an instance of \fFindFoliostar.
Let $(M,{\cal N})$ be a $2q$-workspace in $\tilde{G}$ and $\eta,\lambda,\mu \in\mathbb{N}$ where $\lambda$ divides $q$. Let $s$ be a $\mu$-redundancy stamp for the $(\eta,\lambda)$-album and $\ell\geq (s+1)\lambda$. Let ${\cal S}'$ be a representation of an $(\ell,\eta)$-untangled solution and $(H,\phi,\varphi)\in {\cal S}'$. An $(s\lambda-1,\eta)$-partial solution $(H^{\star},H',\phi',\varphi',{\sf out})$ is an {\em $(\ell,s)$-miniature of $(H,\phi,\varphi)$} if $\zoom_{\ell,\eta}(H,\phi,\varphi)=\camera_{s\lambda-1,\eta}(H^{\star},H',\phi',\varphi',{\sf out})$. 
\end{definition}

The following observation follows from \autoref{obs:snapshotiso}. 

%
%

\begin{observation}
\label{obs:graphsequal}
Let $(G,\delta,t,w',s')$ be an instance of \fFindFoliostar.
Let $(M,{\cal N})$ be a $2q$-workspace in $\tilde{G}$ and $\eta,\lambda,\mu \in\mathbb{N}$ where $\lambda$ divides $q$. Let $s$ be a $\mu$-redundancy stamp for the $(\eta,\lambda)$-album and $\ell\geq (s+1)\lambda$. Let ${\cal S}'$ be the representation of an $(\ell,\eta)$-untangled solution, and $(H,\phi,\varphi)\in {\cal S}'$. Let $(H^{\star},H',\phi',\varphi',{\sf out})$ be  an $(\ell,s)$-miniature of $(H,\phi,\varphi)$. Let $(H^{\star}_1,H'_1,\phi'_1,\varphi'_1,{\sf out}_1)=\restrict_{\ell,\eta}(H,\phi,\varphi)$. Then,  ${\sf out}={\sf out}_1$, there exists an isomorphism ${\sf iso}'$ from  $H'$ to  $H_1'$ such that ${\sf iso}'$  maps  $V(H')\setminus V(H^{\star})$ to $V(H'_1)\setminus V(H_1^{\star})$  and ${\sf iso}'$ restricted to $V(H)\cap V(H')$ is an identity map (Here $V(H^{\star})\cap V(H')=V(H^{\star}_1)\cap V(H_1')$). 
\end{observation}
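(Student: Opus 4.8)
\textbf{Plan for the proof of \autoref{obs:graphsequal}.}
The statement is essentially a translation of \autoref{obs:snapshotiso} into the specific setting of miniatures, so the plan is to unwind the three relevant definitions (\autoref{def:miniature}, \autoref{def:solutionToSnap}, \autoref{def:restrict}) and then apply \autoref{obs:snapshotiso} twice. First I would fix the notation: write $(H^\star_1,H'_1,\phi'_1,\varphi'_1,{\sf out}_1)=\restrict_{\ell,\eta}(H,\phi,\varphi)$, and let $(C,T,T',\rho,{\cal P},f,{\sf out}_2)=\camera_{\ell,\eta}(H^\star_1,H'_1,\phi'_1,\varphi'_1,{\sf out}_1)=\zoom_{\ell,\eta}(H,\phi,\varphi)$ be the common snapshot. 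By definition of an $(\ell,s)$-miniature we also have $\camera_{s\lambda-1,\eta}(H^\star,H',\phi',\varphi',{\sf out})=(C,T,T',\rho,{\cal P},f,{\sf out}_2)$, so in particular (since $\camera$ outputs its input's ${\sf out}$ component verbatim, by \autoref{def:camera}) we get ${\sf out}={\sf out}_2={\sf out}_1$, which is the first assertion.

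Next I would establish that $V(H^\star)\cap V(H')=V(H^\star_1)\cap V(H'_1)$ and that this set equals the $T$-component of the common snapshot. By \autoref{def:camera}, $T=V(H')\cap V(H^\star)$ for the $(s\lambda-1,\eta)$-partial solution $(H^\star,H',\phi',\varphi',{\sf out})$, and likewise $T=V(H'_1)\cap V(H^\star_1)$ for the $(\ell,\eta)$-partial solution $(H^\star_1,H'_1,\phi'_1,\varphi'_1,{\sf out}_1)$; since both equal the same $T$, the two intersections coincide. Similarly $T'=T\cap R(H^\star)=T\cap R(H^\star_1)$, so the two base graphs $H^\star$ and $H^\star_1$ agree on which of these vertices are roots. (They also both equal $\apex(H,\phi,\varphi)$'s vertex graph up to the irrelevant naming of apex-copies, by \autoref{def:restrict}, but for this observation only their restriction to $T$ matters.)

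Now I would invoke \autoref{obs:snapshotiso} once for each of the two partial solutions that realize the common snapshot $(C,T,T',\rho,{\cal P},f,{\sf out}_2)$. Applied to $(H^\star_1,H'_1,\phi'_1,\varphi'_1,{\sf out}_1)$, it produces an isomorphism ${\sf iso}_1$ from the abstract graph $J$ built from $(C,T,T',\rho,{\cal P},f)$ onto $H'_1$, restricting to the identity on $T$. Applied to $(H^\star,H',\phi',\varphi',{\sf out})$, it produces an isomorphism ${\sf iso}_2$ from the \emph{same} abstract $J$ onto $H'$, again restricting to the identity on $T$. Then ${\sf iso}'={\sf iso}_1\circ{\sf iso}_2^{-1}$ is an isomorphism from $H'$ onto $H'_1$, and on $V(H')\cap V(H^\star)=T$ it is the composition of two identities, hence the identity. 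Finally, since under \autoref{obs:snapshotiso} the vertex set $V(C)$ of $J$ maps to $V(H')\setminus V(H^\star)$ and to $V(H'_1)\setminus V(H^\star_1)$ respectively, ${\sf iso}'$ carries $V(H')\setminus V(H^\star)$ onto $V(H'_1)\setminus V(H^\star_1)$, as required.

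\textbf{Expected main obstacle.} There is no deep obstacle here; the only care needed is bookkeeping, namely making sure that the two applications of \autoref{obs:snapshotiso} genuinely produce isomorphisms from one and the same abstract graph $J$ (this hinges on the fact that the $\eta$-snapshot is literally equal, not just equivalent up to isomorphism, in the definition of miniature — cf.\ \autoref{def:miniature} and \autoref{def:redundantAlbum}, which use genuine equality of snapshot collections), and that ``$V(H^\star)\cap V(H')=V(H^\star_1)\cap V(H'_1)$'' is read off correctly as both being the $T$-coordinate. If instead the miniature definition were only up to snapshot \emph{equivalence}, one would need to compose with the extra isomorphism $g$ of \autoref{def:snapshotEquiv} and check it fixes $T$ pointwise; but \autoref{def:redundantAlbum} uses strict equality, so this complication does not arise.
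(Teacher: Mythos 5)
Your proposal is correct and follows essentially the same route as the paper: apply \autoref{obs:snapshotiso} to each of the two partial solutions realizing the common snapshot and compose the resulting isomorphisms, reading off ${\sf out}={\sf out}_1$ and $T=V(H^\star)\cap V(H')=V(H^\star_1)\cap V(H'_1)$ directly from \autoref{def:camera}. The only difference is that the paper's own proof interprets the ``$=$'' in \autoref{def:miniature} as snapshot \emph{equivalence} and therefore inserts the cycle isomorphism $g$ of \autoref{def:snapshotEquiv} (extended by the identity on $T$) between ${\sf iso}^{-1}$ and ${\sf iso}_1$ --- which is exactly the fallback you already describe, so your argument goes through either way.
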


\begin{proof}[Proof sketch]
By assumption we have that   
$\camera_{s\lambda-1,\eta}(H^{\star},H',\phi',\varphi',{\sf out}')=\camera_{\ell,\eta}(H^{\star}_1,H'_1,\phi'_1,$ $\varphi'_1,{\sf out}_1')$. That is, there exist two equivalent $\eta$-snapshot $(C_1,T_1,T'_1,\rho_1,{\cal P}_1, f_1,{\sf out}_1)$ and $(C,T,T',$ $\rho,{\cal P}, f,{\sf out})$ such that 
\begin{eqnarray*}
\label{eqn:isomopartial}
(C,T,T',\rho,{\cal P}, f, {\sf out})&=& \camera_{s\lambda-1,\eta}(H^{\star},H',\phi',\varphi',{\sf out}')\\
(C_1,T_1,T'_1,\rho_1,{\cal P}_1, f_1, {\sf out}_1)&=&   \camera_{\ell,\eta}(H^{\star}_1,H'_1,\phi'_1,\varphi'_1,{\sf out}_1') 
\end{eqnarray*}

Since $(C_1,T_1,T'_1,\rho_1,{\cal P}_1, f_1,{\sf out}_1)$ and $(C,T,T',\rho,{\cal P}, f,{\sf out})$, 
we have that ${\sf out}={\sf out}_1={\sf out}'={\sf out}_1'$. 
Let $J$ and $J_1$ be the two graphs defined from 
$(C,T,T',\rho,{\cal P}, f,{\sf out})$ and $(C_1,T_1,T'_1,\rho_1,{\cal P}_1, f_1,$ ${\sf out}_1)$, 
respectively, as mentioned in \autoref{obs:snapshotiso}. Let ${\sf iso}$ and ${\sf iso}_1$ be the two isomorphisms from $J$ to $H'$ and $J_1$ to $H'_1$, respectively, as mentioned in \autoref{obs:snapshotiso}. 

Since $(C,T,T',\rho,{\cal P}, f,{\sf out})$ and $(C_1,T_1,T'_1,\rho_1,{\cal P}_1, f_1,{\sf out}_1)$ are equivalent, there is an isomorphism $g$ from $C$ to $C_1$ such that for all $v\in V(C)$, $f(v)=f_1(g(v))$. 
The function $g$ can be extended to an isomorphism from $J_1$ to $J_2$ as follows. For any $t\in T=T_1$, we set $g(t)=t$. To get the required isomorphism ${\sf iso}$ from $H'$ to $H'_1$, we apply ${\sf iso}^{-1}$, then $g$, and then ${\sf  iso}_2$. That is, for any $v\in V(H')$, ${\sf iso}'(v)={\sf iso}_1(g({\sf iso}^{-1} (v)))$. 
\end{proof}

Because of \autoref{obs:graphsequal}, whenever we say that $(H^{\star},H',\phi',\varphi',{\sf out})$ is an $(\ell,s)$-miniature of  $(H,\phi,\varphi)$, we assume that $(H^{\star}_1,H',\phi'_1,\varphi'_1,{\sf out})=\restrict_{\ell,\eta}(H,\phi,\varphi)$, for some witness $(\phi'_1,\varphi'_1)$ of $H'$. 
Next we formally define a patch and prove that indeed it exists. 

\begin{definition}[{\bf Patch}]\label{def:patch}
Let $(G,\delta,t,w',s')$ be an instance of \fFindFoliostar, ${\cal S}$ be a solution and ${\cal S}'$ be its representation. Let $(M,{\cal N})$ be a $2q$-workspace in $\tilde{G}$ and $\eta,\lambda,\mu \in\mathbb{N}$ where $\lambda$ divides $q$. Let $s$ be a $\mu$-redundancy stamp for the $(\eta,\lambda)$-album.  
Let $\ell\geq (s+1)\lambda$ be such that ${\cal S}$ is $(\ell,\eta)$-untangled. Let $(H,\phi,\varphi)\in {\cal S'}$ and $(H^{\star},H',\phi',\varphi', {\sf out})$ be an $(\ell,s)$-miniature of $(H,\phi,\varphi)$. 
Denote $(H^{\star}_1,H',\phi'_1,\varphi'_1,{\sf out})=\restrict_{\ell,\eta}(H,\phi,\varphi)$.  
A {\em patch} from  $(H^{\star},H',\phi',\varphi', {\sf out})$ to $(H,\phi,\varphi)$ is a set $\cal P$ of $|V(H')\setminus V(H^{\star})|$ vertex-disjoint paths where for all $v\in V(H')\setminus V(H^{\star})$, there exists $P\in{\cal P}$ between $\phi'(v)$ and $\phi'_1(v)$ such that every internal vertex on $P$ is enclosed either by the noose in $\cal N$ that encloses $\phi'(v)$ or by a noose $N\in\fr[i]$ for some $i\in\{s\lambda,s\lambda+1,\ldots,\ell-1\}$.

For any integer $f\in\mathbb{N}$, we say that a patch $\cal P$ is {\em $f$-empty} if for every vertex $v$ of every path in $\cal P$ and for every up-noose $N_{i',j'}\in\fr[i]$ for all $i\in\{\lambda s-1,\lambda s,\ldots, \lambda s+f-2\}$, $v$ does not belong to $\inNoose_{\tilde{G}}(N_{i',j'})$. 
\end{definition}

Observe that by the definition of a miniature and a partial solution, any patch in the definition above must be $1$-empty.


\begin{lemma}\label{lem:patchExists}
Let $(G,\delta,t,w',s')$ be an instance of \fFindFoliostar, ${\cal S}$ be a solution and ${\cal S}'$ be its representation. Let $(M,{\cal N})$ be a $2q$-workspace in $\tilde{G}$ and $\eta,\lambda,\mu \in\mathbb{N}$ where $\lambda$ divides $q$ and $\lambda \geq 2\eta+1$. Let $s$ be a $\mu$-redundancy stamp for the $(\eta,\lambda)$-album.  Let $\ell\geq (s+1)\lambda$ be such that ${\cal S}$ is $(\ell,\eta)$-untangled. Let $(H,\phi,\varphi)\in {\cal S'}$ and $(H^{\star},H',\phi',\varphi', {\sf out})$ be an $(\ell,s)$-miniature of $(H,\phi,\varphi)$. 
Then, there exists a patch from  $(H^{\star},H',\phi',\varphi', {\sf out})$ to $(H,\phi,\varphi)$ that is $f$-empty for $f=\lambda-2\eta$. 
\end{lemma}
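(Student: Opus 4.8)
\textbf{Proof plan for Lemma~\ref{lem:patchExists}.}
The plan is to exhibit, for each $v \in V(H')\setminus V(H^\star)$, a path connecting its ``present'' location $\phi'(v)$ (a vertex in a noose $N_v \in \fr[s\lambda-1]$) to its ``future'' location $\phi'_1(v)$ (a vertex in a noose $N'_v \in \fr[\ell]$), where the restriction $\restrict_{\ell,\eta}(H,\phi,\varphi)$ places $v$, and to do this so that the resulting collection of paths is vertex-disjoint and avoids all up-nooses of the first $f = \lambda-2\eta$ frames starting from $\fr[s\lambda-1]$. First I would recall from \autoref{def:restrict} and the proof of \autoref{def:restrictIsPartial} that $\phi'_1(v)$ lies in a left-noose or right-noose of $\fr[\ell]$, and that $\phi'(v)$ (being the image of a vertex of $H'$ produced by the camera applied to the restriction at level $s\lambda-1$) also lies in a left-noose or right-noose of $\fr[s\lambda-1]$; moreover $v$ corresponds, under the isomorphism of \autoref{obs:graphsequal}, to the same vertex of $H'$ in both the miniature and the restriction, so the two endpoints $\phi'(v)$ and $\phi'_1(v)$ correspond to the same $\ell$-(dangling) segment of $(H,\phi,\varphi)$. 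The key structural fact is that an $\ell$-(dangling) segment of the untangled solution, restricted to the annular region between $\fr[s\lambda-1]$ and $\fr[\ell]$, already contains a subpath of $\varphi(e)$ (for the appropriate $e$) joining a vertex of $N_v$-region to a vertex of $N'_v$-region, because the segment $\varphi(e)$ by definition of an $\ell$-segment passes through nooses of every intermediate frame.

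The main step is the choice of the actual paths: I would take, for each $v$, the portion of the relevant path $\varphi(e)$ of the untangled solution lying in $\tilde{G}$ between the $\fr[s\lambda-1]$-crossing and the $\fr[\ell]$-crossing (which is where $\phi'_1(v)$ sits), and then re-route this portion so that it avoids the up-nooses. Concretely, since the solution is $(\ell,\eta)$-untangled, at most $\eta$ nooses in each frame $\fr[i]$ for $i$ in the range are used by $(H,\phi,\varphi)$; hence within each frame there is a run of at least $\lambda - 2\eta$ consecutive left-nooses and of at least $\lambda - 2\eta$ consecutive right-nooses that are unused. Because $\phi'(v)$ and $\phi'_1(v)$ are already in left- or right-nooses, and the segment $\varphi(e)$ already moves monotonically ``inward'' through the frames on its left or right side (this follows from $(\ell,d_1,d_2,d_3)$-regret-freeness, which is part of how the untangled solution was built in \autoref{lem:untangledFrame}), I can choose a reroute that stays entirely within left-nooses (or entirely within right-nooses) of the frames $\fr[s\lambda-1],\ldots,\fr[\ell-1]$, thereby never touching an up-noose. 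The condition $\lambda \geq 2\eta+1$ guarantees there is room to do this, and $f = \lambda - 2\eta \geq 1$ gives the claimed empty-ness depth. Vertex-disjointness of the collection $\cal P$ follows from vertex-disjointness of the original paths $\varphi(e)$ in the solution together with a careful ordering argument: different vertices $v$ correspond to different segments, which by condition~(3) of \autoref{def:untangledFrame} touch different nooses of $\fr[\ell]$, and one can order the reroutes by their left/right position and route them in ``parallel tracks'' using distinct unused columns, exactly as in the rerouting arguments of \autoref{lem:untangledFrame} (Claims~\ref{claim:ul1} and \ref{claim:ul2}).

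The hard part will be verifying that the reroutes can be made simultaneously vertex-disjoint while respecting the endpoints $\phi'(v)$ and $\phi'_1(v)$ \emph{exactly} (not just ``up to which noose they sit in''), and that in the process no new crossing of an up-noose is introduced at the innermost frame $\fr[s\lambda-1]$ or the outermost $\fr[\ell]$. This is where I expect to invoke the $j$-oriented column/row machinery (\autoref{def:orientedColumn}) and essentially re-run a localized version of the crossing-segment rerouting from the proof of \autoref{lem:untangledFrame}: pick oriented columns through the unused left-nooses (respectively right-nooses), use them as ``rails'', and then patch each segment onto its rail, minimizing the number of edges used outside the rails so that the resulting paths are forced into a laminar/nested structure and hence disjoint. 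Once the geometric picture is set up this is routine, but writing it carefully is the bulk of the work; the conceptual content is entirely contained in the untangled-solution structure and the pigeonhole on unused nooses per frame.
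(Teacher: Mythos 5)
There is a genuine gap, and it is in the very first step of your plan. You propose to build each patch path by ``taking the portion of the relevant path $\varphi(e)$ of the untangled solution lying between the $\fr[s\lambda-1]$-crossing and the $\fr[\ell]$-crossing'' and then rerouting it. But the miniature $(H^{\star},H',\phi',\varphi',{\sf out})$ is \emph{not} the restriction of $(H,\phi,\varphi)$ to level $s\lambda-1$; it is an arbitrary $(s\lambda-1,\eta)$-partial solution whose only relation to $(H,\phi,\varphi)$ is that it produces an equivalent snapshot (this is exactly why the redundancy stamp is needed). Consequently $\phi'(v)$ need not lie on any path of $\varphi$ at all, and your assertion that ``$\phi'(v)$ and $\phi'_1(v)$ correspond to the same $\ell$-(dangling) segment of $(H,\phi,\varphi)$'' is false for $\phi'(v)$: there is no subpath of $\varphi(e)$ starting at $\phi'(v)$ to reroute. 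The supporting claim that an $\ell$-segment ``by definition passes through nooses of every intermediate frame'' is also wrong --- an $\ell$-segment has its endpoints in $\fr[\ell]$ and its interior in $\tilde{G}_{\ell-1}$, with no obligation to visit $\fr[s\lambda-1]$ --- and the appeal to $(\ell,d_1,d_2,d_3)$-regret-freeness is unavailable here: the lemma only assumes $(\ell,\eta)$-untangledness, and regret-freeness is in any case a local property near $\fr[\ell]$, not a monotonicity statement across the $\lambda$ or more frames separating $\fr[s\lambda-1]$ from $\fr[\ell]$.

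What the paper does instead is a pure linkage argument in the grid annulus between the two frames, with the original solution's paths playing no role in the construction of the patch. The equivalence of the two snapshots --- which is precisely what the definitions of camera and miniature encode, via the cycle $C$, the ordering $<$ on the frame, and \autoref{obs:graphsequal} --- guarantees that the endpoints $\phi'(v)$ on $\fr[s\lambda-1]$ and $\phi'_1(v)$ on $\fr[\ell]$ appear in the \emph{same cyclic order} on the inner and outer boundary cycles of the annulus, and that all of them sit in left- or right-nooses (away from the up side). Given matching cyclic orders on the two boundaries of a sufficiently wide grid annulus, a standard disjoint-paths routing (the paper's concluding observation about a $(2a)\times(2a)$ grid with an inner $2r\times 2r$ subgrid) produces the vertex-disjoint linkage while avoiding the prescribed rectangle $X$ of up-nooses of the first $\lambda-2\eta$ frames; the hypothesis $\lambda\geq 2\eta+1$ is what makes the annulus wide enough relative to the at most $\eta$ pairs. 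Your ``rails through unused left/right nooses'' fallback gestures at this routing, but as written it is subordinate to the unworkable rerouting idea and omits the cyclic-order-matching fact without which the required non-crossing planar linkage need not exist.
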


The correctness of Lemma \ref{lem:patchExists} directly following from the following simple observation and the definitions of $(\ell,\eta)$-untangled and miniature.

\begin{observation}
Let $a,r,t\in\mathbb{N}$ such that $t\leq 2r< 2a$ and $2t<a-r$. Let $G$ be a $(2a)\times (2a)$ grid with $V(G)=\{v_{i,j}: i,j\in\{1,2,\ldots,2a\}\}$, and $H$ be a $2r\times 2r$ subgrid of $G$ where $V(H)=\{v_{i,j}: i,j\in\{a-(r-1),\ldots,a,a+1,\ldots,a+r\}\}$. Denote $C_G$ and $C_H$ denote the outermost cycles of $G$ and $H$, i.e., the cycles induced by their set of vertices of degree smaller than $4$. Let ${\cal M}=\{(p_\ell,q_\ell): \ell\in\{1,2,\ldots,t\}, p_\ell\in \{v_{i,j}\in V(C_G): i\neq 2a\},q_\ell\in \{v_{i,j}\in V(C_H): i\neq a+r\}\}$ be a collection of vertex pairs such that when we traverse $C_G$ and $C_H$ in cyclic order starting at $p_1$ and $q_1$, we encounter $p_1,p_2,\ldots,p_t$ and $q_1,q_2,\ldots,q_t$ in this order, respectively. Then, there exists $t$ vertex disjoint paths $P_1,P_2,\ldots,P_t\}$ in $G$ such that for every $i\in\{1,2,\ldots,t\}$, the endpoints of $P_i$ are $p_i$ and $q_i$, and its internal vertices belong to $V(G)\setminus (V(H)\cup V(C_G)\cup X)$ where $X=\{v_{i,j}: i\in\{a+r,a+r+1,\ldots,2a-2t\}, j\in\{2t+1,2t+2,\ldots,2a-2t+1\}\}$.   
\end{observation}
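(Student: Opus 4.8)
The plan is to read this off as a routine planar linkage. Write $B=V(H)\cup X$ and let $Y=V(G)\setminus(V(H)\cup V(C_G)\cup X)$ be the set of vertices permitted as internal vertices; the region in which we must route is $R:=G[Y]$ together with the $2t$ terminal vertices, each $p_\ell$ attached to $Y$ through an inward neighbour and each $q_\ell$ through an outward neighbour. The three inequalities $t\le 2r<2a$ and $2t<a-r$ are used only to check geometry. First, $B$ is contained in the box $\{a-r+1,\dots,2a-2t\}\times\{2t+1,\dots,2a-2t+1\}$, which is strictly interior to the grid, so $R$ is an annular region surrounding $B$, with $p_1,\dots,p_t$ on its outer boundary in the cyclic order inherited from $C_G$ and $q_1,\dots,q_t$ on its inner boundary in the cyclic order inherited from $C_H$ --- and these two cyclic orders agree. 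Second, for $t\ge 2$ the concentric square circuits $\Gamma_1,\dots,\Gamma_t$ at $L^\infty$-distance $1,\dots,t$ from $C_G$ all lie in $Y$: they miss $C_G$ because $d\ge 1$, miss $H$ because $t<a-r$, and miss $X$ because, with $d\le t\le 2t-2$, their left, right and bottom sides stay strictly left of column $2t+1$, right of column $2a-2t+1$, and below row $2a-2t$; so $R$ has ``width at least $t$'' all the way around. Third, because no $p_\ell$ lies on the bottom side of $C_G$, no $q_\ell$ on the bottom side of $C_H$, and the whole $X$-portion of $B$'s boundary lies strictly below row $a+r$, there is a ``bottom corridor'' of $R$ that is free of terminals.

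I would then cut $R$ open along a shortest radial path inside that bottom corridor, turning the annulus into a disc. Since the cyclic orders of the $p$'s along $C_G$ and of the $q$'s along $C_H$ coincide, after the cut the $2t$ terminals appear on the boundary of the disc as $p_1,\dots,p_t$ followed by $q_t,\dots,q_1$, so the pairs $\{p_\ell,q_\ell\}$ are non-crossing. It then suffices to prove the elementary grid fact that in such a disc of width at least $t$ a vertex-disjoint linkage realising $t$ non-crossing boundary pairs, with internal vertices off the boundary, exists; I would do this by induction on $t$, routing the outermost pair $\{p_1,q_1\}$ so that it hugs the outer circuit (its lateral part confined to $\Gamma_1$, its final descent to $q_1$ a radial segment leaving $C_H$ through its top, left or right side --- never the bottom, since $q_1$ avoids the bottom of $C_H$, so the descent never meets $X$), which cuts off a sub-disc of width at least $t-1$ still containing $p_2,\dots,p_t$ and $q_2,\dots,q_t$ in matching order, and then recursing. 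Because $X$ was deleted from $R$ at the outset, every path produced automatically has its internal vertices in $Y=V(G)\setminus(V(H)\cup V(C_G)\cup X)$, and the case $t\le 1$ is handled by a single direct routing through the top of the annulus.

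The main work is making all of this rigorous at the vertex level rather than topologically: one has to check that cutting along a bottom corridor genuinely separates no pair (this is exactly where both the no-bottom-terminal hypotheses and the matching cyclic order are needed), that the layers $\Gamma_1,\dots,\Gamma_t$ really do leave $t+1-\ell$ usable layers when the $\ell$-th pair is routed, and that the radial descents to the various $q_\ell$ --- which leave $C_H$ through its top, left and right sides --- can be taken pairwise disjoint and disjoint from the lateral ring-parts of the other paths (distinct $q_\ell$ on a common side of $C_H$ sit in distinct rows or columns, and $2t<a-r$, $t\le 2r$ provide the slack). I expect no conceptual difficulty, only a careful case analysis, plus a short remark to dispose of terminals sitting at the corners of $C_G$ or $C_H$, which either cannot be linked at all or do not arise in the application to \autoref{lem:patchExists}.
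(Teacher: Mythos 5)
The paper offers no proof of this observation (it is stated as ``simple'' and left unproved), so there is nothing to compare against; judging your argument on its own, there is a genuine gap at the cutting step. You claim that after cutting the annulus along a terminal-free bottom corridor, the boundary of the resulting disc reads $p_1,\dots,p_t,q_t,\dots,q_1$, so the pairs are non-crossing. This does not follow from the hypotheses. Matching \emph{cyclic} orders plus avoidance of the bottom rows only tells you that, reading from the bottom cut, the $p$'s appear as some rotation $p_k,\dots,p_t,p_1,\dots,p_{k-1}$ and the $q$'s as some rotation $q_m,\dots,q_t,q_1,\dots,q_{m-1}$; nothing forces $k=m$. Concretely, take $t=2$, put $p_1$ on the top row of $C_G$ and $p_2$ on the left column of $C_G$ near (but not on) the bottom row, and put $q_1$ on the top side of $C_H$ and $q_2$ on the right side of $C_H$. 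Both cyclic orders are $1,2$ and all four terminals avoid the forbidden bottom rows, yet from a bottom cut the $p$'s read $p_2,p_1$ while the $q$'s read $q_1,q_2$: the pairs interleave in the cut-open disc, so one of the two paths \emph{must} cross the bottom cut. Your entire routing scheme (concentric circuits $\Gamma_1,\dots,\Gamma_t$ near $C_G$ plus radial descents through the top/left/right of $C_H$) never enters the strip of rows $2a-2t+1,\dots,2a-1$ below $X$ or the columns $2,\dots,2t$ and $2a-2t+2,\dots,2a-1$ beside it, which is exactly the corridor that a winding path such as $P_2$ in this example must use.

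The fact that your proof never uses the slack in the definition of $X$ is itself a warning sign: if the cut-and-non-crossing argument were valid, one could enlarge $X$ all the way to the boundary of $C_G$ and the observation would survive with the same proof, but then the example above has no linkage at all. A correct proof has to quantify the ``offset'' between the two rotations and route the (at most $t$) paths that wind past the bottom through the corridor of width $\Theta(t)$ that $X$ deliberately leaves free (this is where the precise bounds $2t+1$ and $2a-2t+1$ in the definition of $X$, and the hypothesis $2t<a-r$, earn their keep), while the remaining paths are handled by your concentric-layer argument. The rest of your geometric verifications (that $\Gamma_1,\dots,\Gamma_t$ avoid $H$, $C_G$ and $X$ for $t\ge 2$, and that the radial descents to non-bottom vertices of $C_H$ avoid $X$) are correct, but the reduction to a non-crossing disc linkage is not.
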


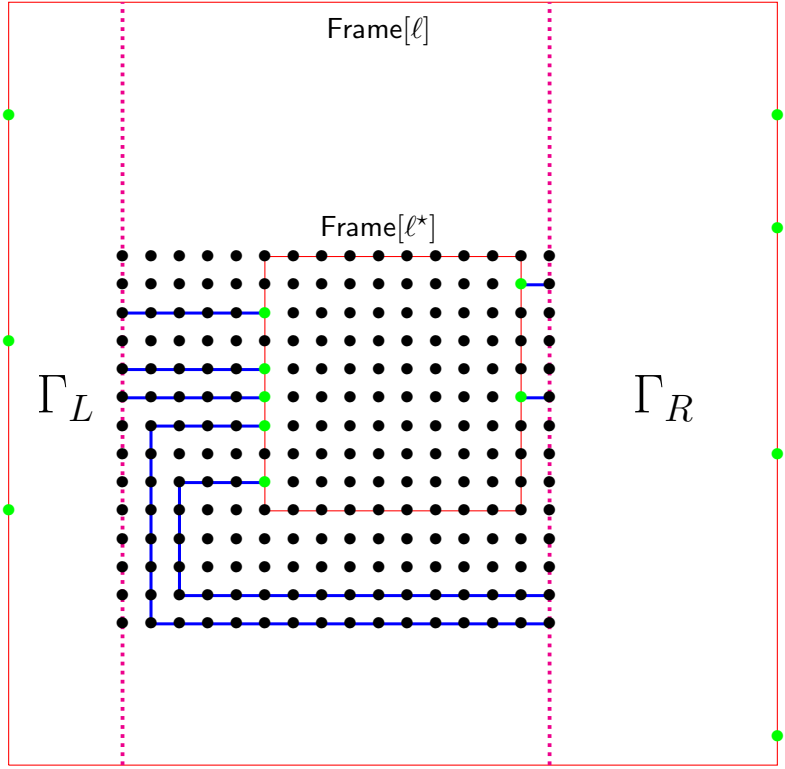
\begin{figure}
\begin{center}
\begin{tikzpicture}[scale=0.75]

\draw[red] (-4.5,-4.5)--(9,-4.5)--(9,9)--(-4.5,9)--(-4.5,-4.5); 

\node[green] (a1) at (9,7) {$\bullet$}; 
\node[green] (a1) at (9,5) {$\bullet$}; 
\node[green] (a1) at (9,1) {$\bullet$}; 
\node[green] (a1) at (9,-4) {$\bullet$};

\node[green] (a1) at (-4.5,0) {$\bullet$}; 
\node[green] (a1) at (-4.5,3) {$\bullet$}; 
\node[green] (a1) at (-4.5,7) {$\bullet$};

\draw[magenta, dotted, line width=0.5mm] (5,-4.5)--(5,9);

\draw[magenta, dotted, line width=0.5mm] (-2.5,-4.5)--(-2.5,9);

\draw[blue, line width=0.4mm](0,0.5)--(-1.5,0.5)--(-1.5,-1.5)--(5,-1.5);

\draw[blue, line width=0.4mm](0,1.5)--(-2,1.5);
\draw[blue, line width=0.4mm](-2,1.5)--(-2,-2)--(5,-2);

\draw[blue, line width=0.4mm](0,2)--(-2.5,2);

\draw[blue, line width=0.4mm](0,2.5)--(-2.5,2.5);
\draw[blue, line width=0.4mm](0,3.5)--(-2.5,3.5);

\node[] (a1) at (2,5) {$\fr[\ell^{\star}]$}; 
\node[] (a1) at (2,8.5) {$\fr[\ell]$}; 
\node[] (a1) at (7,2) {\huge{$\Gamma_R$}}; 

\node[] (a1) at (-3.5,2) {\huge{$\Gamma_L$}}; 

\node[] (a1) at (-2.5,-2) {$\bullet$}; 
\node[] (a1) at (-2,-2) {$\bullet$}; 
\node[] (a1) at (-1.5,-2) {$\bullet$}; 
\node[] (a1) at (-1,-2) {$\bullet$}; 
\node[] (a1) at (-0.5,-2) {$\bullet$}; 

\node[] (a1) at (0,-2) {$\bullet$}; 
\node[] (a1) at (0.5,-2) {$\bullet$}; 
\node[] (a1) at (1,-2) {$\bullet$}; 
\node[] (a1) at (1.5,-2) {$\bullet$}; 
\node[] (a1) at (2,-2) {$\bullet$}; 
\node[] (a1) at (2.5,-2) {$\bullet$}; 
\node[] (a1) at (3,-2) {$\bullet$};
\node[] (a1) at (3.5,-2) {$\bullet$};  
\node[] (a1) at (4,-2) {$\bullet$};
\node[] (a1) at (4.5,-2) {$\bullet$}; 
\node[] (a1) at (5,-2) {$\bullet$};

\node[] (a1) at (-2.5,-1.5) {$\bullet$}; 
\node[] (a1) at (-2,-1.5) {$\bullet$}; 
\node[] (a1) at (-1.5,-1.5) {$\bullet$}; 
\node[] (a1) at (-1,-1.5) {$\bullet$}; 
\node[] (a1) at (-0.5,-1.5) {$\bullet$}; 

\node[] (a1) at (0,-1.5) {$\bullet$}; 
\node[] (a1) at (0.5,-1.5) {$\bullet$}; 
\node[] (a1) at (1,-1.5) {$\bullet$}; 
\node[] (a1) at (1.5,-1.5) {$\bullet$}; 
\node[] (a1) at (2,-1.5) {$\bullet$}; 
\node[] (a1) at (2.5,-1.5) {$\bullet$}; 
\node[] (a1) at (3,-1.5) {$\bullet$};
\node[] (a1) at (3.5,-1.5) {$\bullet$};  
\node[] (a1) at (4,-1.5) {$\bullet$};
\node[] (a1) at (4.5,-1.5) {$\bullet$}; 
\node[] (a1) at (5,-1.5) {$\bullet$};

\node[] (a1) at (-2.5,-1) {$\bullet$}; 
\node[] (a1) at (-2,-1) {$\bullet$}; 
\node[] (a1) at (-1.5,-1) {$\bullet$}; 
\node[] (a1) at (-1,-1) {$\bullet$}; 
\node[] (a1) at (-0.5,-1) {$\bullet$}; 

\node[] (a1) at (0,-1) {$\bullet$}; 
\node[] (a1) at (0.5,-1) {$\bullet$}; 
\node[] (a1) at (1,-1) {$\bullet$}; 
\node[] (a1) at (1.5,-1) {$\bullet$}; 
\node[] (a1) at (2,-1) {$\bullet$}; 
\node[] (a1) at (2.5,-1) {$\bullet$}; 
\node[] (a1) at (3,-1) {$\bullet$};
\node[] (a1) at (3.5,-1) {$\bullet$};  
\node[] (a1) at (4,-1) {$\bullet$};
\node[] (a1) at (4.5,-1) {$\bullet$}; 
\node[] (a1) at (5,-1) {$\bullet$};

\node[] (a1) at (-2.5,-0.5) {$\bullet$}; 
\node[] (a1) at (-2,-0.5) {$\bullet$}; 
\node[] (a1) at (-1.5,-0.5) {$\bullet$}; 
\node[] (a1) at (-1,-0.5) {$\bullet$}; 
\node[] (a1) at (-0.5,-0.5) {$\bullet$}; 

\node[] (a1) at (0,-0.5) {$\bullet$}; 
\node[] (a1) at (0.5,-0.5) {$\bullet$}; 
\node[] (a1) at (1,-0.5) {$\bullet$}; 
\node[] (a1) at (1.5,-0.5) {$\bullet$}; 
\node[] (a1) at (2,-0.5) {$\bullet$}; 
\node[] (a1) at (2.5,-0.5) {$\bullet$}; 
\node[] (a1) at (3,-0.5) {$\bullet$};
\node[] (a1) at (3.5,-0.5) {$\bullet$};  
\node[] (a1) at (4,-0.5) {$\bullet$};
\node[] (a1) at (4.5,-0.5) {$\bullet$}; 
\node[] (a1) at (5,-0.5) {$\bullet$};

\draw[red] (4.5,0)--(4.5,4.5);
\draw[red] (0,4.5)--(0,0);
\draw[red] (4.5,0)--(0,0);

\node[] (a1) at (-2.5,0) {$\bullet$}; 
\node[] (a1) at (-2,0) {$\bullet$}; 
\node[] (a1) at (-1.5,0) {$\bullet$}; 
\node[] (a1) at (-1,0) {$\bullet$}; 
\node[] (a1) at (-0.5,0) {$\bullet$}; 

\node[] (a1) at (0,0) {$\bullet$}; 
\node[] (a1) at (0.5,0) {$\bullet$}; 
\node[] (a1) at (1,0) {$\bullet$}; 
\node[] (a1) at (1.5,0) {$\bullet$}; 
\node[] (a1) at (2,0) {$\bullet$}; 
\node[] (a1) at (2.5,0) {$\bullet$}; 
\node[] (a1) at (3,0) {$\bullet$};
\node[] (a1) at (3.5,0) {$\bullet$};  
\node[] (a1) at (4,0) {$\bullet$};
\node[] (a1) at (4.5,0) {$\bullet$}; 
\node[] (a1) at (5,0) {$\bullet$};

\node[] (a1) at (-2.5,0.5) {$\bullet$}; 
\node[] (a1) at (-2,0.5) {$\bullet$}; 
\node[] (a1) at (-1.5,0.5) {$\bullet$}; 
\node[] (a1) at (-1,0.5) {$\bullet$}; 
\node[] (a1) at (-0.5,0.5) {$\bullet$}; 

\node[green] (a1) at (0,0.5) {$\bullet$}; 
\node[] (a1) at (0.5,0.5) {$\bullet$}; 
\node[] (a1) at (1,0.5) {$\bullet$}; 
\node[] (a1) at (1.5,0.5) {$\bullet$}; 
\node[] (a1) at (2,0.5) {$\bullet$}; 
\node[] (a1) at (2.5,0.5) {$\bullet$}; 
\node[] (a1) at (3,0.5) {$\bullet$};
\node[] (a1) at (3.5,0.5) {$\bullet$};  
\node[] (a1) at (4,0.5) {$\bullet$};
\node[] (a1) at (4.5,0.5) {$\bullet$}; 
\node[] (a1) at (5,0.5) {$\bullet$};

\node[] (a1) at (-2.5,1) {$\bullet$}; 
\node[] (a1) at (-2,1) {$\bullet$}; 
\node[] (a1) at (-1.5,1) {$\bullet$}; 
\node[] (a1) at (-1,1) {$\bullet$}; 
\node[] (a1) at (-0.5,1) {$\bullet$}; 

\node[] (a1) at (0,1) {$\bullet$}; 
\node[] (a1) at (0.5,1) {$\bullet$}; 
\node[] (a1) at (1,1) {$\bullet$}; 
\node[] (a1) at (1.5,1) {$\bullet$}; 
\node[] (a1) at (2,1) {$\bullet$}; 
\node[] (a1) at (2.5,1) {$\bullet$}; 
\node[] (a1) at (3,1) {$\bullet$};
\node[] (a1) at (3.5,1) {$\bullet$};  
\node[] (a1) at (4,1) {$\bullet$};
\node[] (a1) at (4.5,1) {$\bullet$}; 
\node[] (a1) at (5,1) {$\bullet$};

\node[] (a1) at (-2.5,1.5) {$\bullet$}; 
\node[] (a1) at (-2,1.5) {$\bullet$}; 
\node[] (a1) at (-1.5,1.5) {$\bullet$}; 
\node[] (a1) at (-1,1.5) {$\bullet$}; 
\node[] (a1) at (-0.5,1.5) {$\bullet$}; 

\node[green] (a1) at (0,1.5) {$\bullet$}; 
\node[] (a1) at (0.5,1.5) {$\bullet$}; 
\node[] (a1) at (1,1.5) {$\bullet$}; 
\node[] (a1) at (1.5,1.5) {$\bullet$}; 
\node[] (a1) at (2,1.5) {$\bullet$}; 
\node[] (a1) at (2.5,1.5) {$\bullet$}; 
\node[] (a1) at (3,1.5) {$\bullet$};
\node[] (a1) at (3.5,1.5) {$\bullet$};  
\node[] (a1) at (4,1.5) {$\bullet$};
\node[] (a1) at (4.5,1.5) {$\bullet$}; 
\node[] (a1) at (5,1.5) {$\bullet$};

\node[] (a1) at (-2.5,2) {$\bullet$}; 
\node[] (a1) at (-2,2) {$\bullet$}; 

\draw[blue, line width=0.4mm](4.5,2)--(5,2);

\node[] (a1) at (-1.5,2) {$\bullet$}; 
\node[] (a1) at (-1,2) {$\bullet$}; 
\node[] (a1) at (-0.5,2) {$\bullet$}; 
\node[green] (a1) at (0,2) {$\bullet$}; 
\node[] (a1) at (0.5,2) {$\bullet$}; 
\node[] (a1) at (1,2) {$\bullet$}; 
\node[] (a1) at (1.5,2) {$\bullet$}; 
\node[] (a1) at (2,2) {$\bullet$}; 
\node[] (a1) at (2.5,2) {$\bullet$}; 
\node[] (a1) at (3,2) {$\bullet$};
\node[] (a1) at (3.5,2) {$\bullet$};  
\node[] (a1) at (4,2) {$\bullet$};
\node[green] (a1) at (4.5,2) {$\bullet$}; 
\node[] (a1) at (5,2) {$\bullet$};

\node[] (a1) at (-2.5,2.5) {$\bullet$}; 
\node[] (a1) at (-2,2.5) {$\bullet$}; 
\node[] (a1) at (-1.5,2.5) {$\bullet$}; 
\node[] (a1) at (-1,2.5) {$\bullet$}; 
\node[] (a1) at (-0.5,2.5) {$\bullet$}; 

\node[green] (a1) at (0,2.5) {$\bullet$}; 
\node[] (a1) at (0.5,2.5) {$\bullet$}; 
\node[] (a1) at (1,2.5) {$\bullet$}; 
\node[] (a1) at (1.5,2.5) {$\bullet$}; 
\node[] (a1) at (2,2.5) {$\bullet$}; 
\node[] (a1) at (2.5,2.5) {$\bullet$}; 
\node[] (a1) at (3,2.5) {$\bullet$};
\node[] (a1) at (3.5,2.5) {$\bullet$};  
\node[] (a1) at (4,2.5) {$\bullet$};
\node[] (a1) at (4.5,2.5) {$\bullet$}; 
\node[] (a1) at (5,2.5) {$\bullet$};

\node[] (a1) at (-2.5,3) {$\bullet$}; 
\node[] (a1) at (-2,3) {$\bullet$}; 
\node[] (a1) at (-1.5,3) {$\bullet$}; 
\node[] (a1) at (-1,3) {$\bullet$}; 
\node[] (a1) at (-0.5,3) {$\bullet$}; 

\node[] (a1) at (0,3) {$\bullet$}; 
\node[] (a1) at (0.5,3) {$\bullet$}; 
\node[] (a1) at (1,3) {$\bullet$}; 
\node[] (a1) at (1.5,3) {$\bullet$}; 
\node[] (a1) at (2,3) {$\bullet$}; 
\node[] (a1) at (2.5,3) {$\bullet$}; 
\node[] (a1) at (3,3) {$\bullet$};
\node[] (a1) at (3.5,3) {$\bullet$};  
\node[] (a1) at (4,3) {$\bullet$};
\node[] (a1) at (4.5,3) {$\bullet$}; 
\node[] (a1) at (5,3) {$\bullet$};

\node[] (a1) at (-2.5,3.5) {$\bullet$}; 
\node[] (a1) at (-2,3.5) {$\bullet$}; 
\node[] (a1) at (-1.5,3.5) {$\bullet$}; 
\node[] (a1) at (-1,3.5) {$\bullet$}; 
\node[] (a1) at (-0.5,3.5) {$\bullet$}; 

\node[green] (a1) at (0,3.5) {$\bullet$}; 
\node[] (a1) at (0.5,3.5) {$\bullet$}; 
\node[] (a1) at (1,3.5) {$\bullet$}; 
\node[] (a1) at (1.5,3.5) {$\bullet$}; 
\node[] (a1) at (2,3.5) {$\bullet$}; 
\node[] (a1) at (2.5,3.5) {$\bullet$}; 
\node[] (a1) at (3,3.5) {$\bullet$};
\node[] (a1) at (3.5,3.5) {$\bullet$};  
\node[] (a1) at (4,3.5) {$\bullet$};
\node[] (a1) at (4.5,3.5) {$\bullet$}; 
\node[] (a1) at (5,3.5) {$\bullet$};

\draw[blue, line width=0.4mm](4.5,4)--(5,4);

\node[] (a1) at (-2.5,4) {$\bullet$}; 
\node[] (a1) at (-2,4) {$\bullet$}; 
\node[] (a1) at (-1.5,4) {$\bullet$}; 
\node[] (a1) at (-1,4) {$\bullet$}; 
\node[] (a1) at (-0.5,4) {$\bullet$}; 

\node[] (a1) at (0,4) {$\bullet$}; 
\node[] (a1) at (0.5,4) {$\bullet$}; 
\node[] (a1) at (1,4) {$\bullet$}; 
\node[] (a1) at (1.5,4) {$\bullet$}; 
\node[] (a1) at (2,4) {$\bullet$}; 
\node[] (a1) at (2.5,4) {$\bullet$}; 
\node[] (a1) at (3,4) {$\bullet$};
\node[] (a1) at (3.5,4) {$\bullet$};  
\node[] (a1) at (4,4) {$\bullet$};
\node[green] (a1) at (4.5,4) {$\bullet$}; 
\node[] (a1) at (5,4) {$\bullet$};

\draw[red] (0,4.5)--(4.5,4.5);

\node[] (a1) at (-2.5,4.5) {$\bullet$}; 
\node[] (a1) at (-2,4.5) {$\bullet$}; 
\node[] (a1) at (-1.5,4.5) {$\bullet$}; 
\node[] (a1) at (-1,4.5) {$\bullet$}; 
\node[] (a1) at (-0.5,4.5) {$\bullet$}; 

\node[] (a1) at (0,4.5) {$\bullet$}; 
\node[] (a1) at (0.5,4.5) {$\bullet$}; 
\node[] (a1) at (1,4.5) {$\bullet$}; 
\node[] (a1) at (1.5,4.5) {$\bullet$}; 
\node[] (a1) at (2,4.5) {$\bullet$}; 
\node[] (a1) at (2.5,4.5) {$\bullet$}; 
\node[] (a1) at (3,4.5) {$\bullet$};
\node[] (a1) at (3.5,4.5) {$\bullet$};  
\node[] (a1) at (4,4.5) {$\bullet$};
\node[] (a1) at (4.5,4.5) {$\bullet$}; 
\node[] (a1) at (5,4.5) {$\bullet$}; 

\end{tikzpicture}
\end{center}
\caption{Illustration of a patch which avoid all up-nooses of intermediate frames.}
\label{fig:patch}
\end{figure}

\begin{lemma}\label{lem:internalLayerTermFree}
Let $(G,\delta,t,w',s')$ be an instance of \fFindFoliostar\ and $r=h(3\delta^{\star}+3t)$ be the constant mentioned in \autoref{cor:wrapped}.
Let $\mu>  2^{c'(\delta^{\star})^2}t$, where $c'$ is the constant mentioned in  \autoref{obs:existsTermFreeFrame}. 
Let $(M,{\cal N})$ be a $2q$-workspace in $\tilde{G}$ and $\eta,\lambda \in\mathbb{N}$, such that 
$q/\lambda-\mu+ 1>(t+\delta^{\star})^{c(t^2+t\delta^{\star})}\cdot\eta^{c\eta} \cdot 2\mu$, 
$\lambda$ divides $q$ and 
$\lambda\geq 50(r+5)$, 
where $c$ is the constant mentioned in \autoref{lem:redundantAlbumExistence}, $\eta=48(r+2)$.  Let $s$ be a $\mu$-redundancy stamp for the $(\eta,\lambda)$-album.  
Then, if ${\cal S}_1$ is a representation for a solution, then there also exists a representation $\cal S$ such that ${\cal S}_1$ and $\cal S$ are identical outside the given workspace, and ${\cal S}$ does not consist of any vertex that belongs to any up-noose of $\fr[i]$ for $i\in\{\lambda s-1, \lambda s,\ldots, \lambda s +(\lambda-2\eta)-2\}$. (In particular, the set of all vertices in these up-nooses is irrelevant.)
\end{lemma}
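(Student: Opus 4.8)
The plan is to combine all the machinery developed in the preceding sections into a single chain of implications. I would fix the parameters $r = h(3\delta^\star+3t)$, $\eta = 48(r+2)$, $\mu > 2^{c'(\delta^\star)^2}t$, and $\lambda \geq 50(r+5)$ exactly so that the hypotheses of the intermediate lemmas are satisfied, and I would assume we are given a $\mu$-redundancy stamp $s$ for the $(\eta,\lambda)$-album. The starting point is that if ${\cal S}_1$ is the representation of some solution ${\cal S}^{(1)}$, then by \autoref{obs:existsTermFreeFrame} (whose hypotheses hold by our choice of $\mu$ and the bound on $q/\lambda - \mu + 1$) there exists $j \in \{s+1,\ldots,s+\mu-1\}$ such that ${\cal S}^{(1)}$ is $(j\lambda, \lambda-1)$-terminal free. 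Set $\ell = j\lambda \geq (s+1)\lambda$.

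Next I would invoke \autoref{lem:untangledFrame} with $d = \lambda-1$ and an appropriate $d_1$ (say $d_1 = \lambda - 1 - r$, so that $d - d_1 = r$; one needs $4\eta + 6 < d_1$, which holds since $\lambda \geq 50(r+5)$ and $\eta = 48(r+2)$) to conclude that there is an $(\ell, d_1)$-terminal free solution ${\cal S}$ that is also $(\ell, \eta)$-untangled. Now the key geometric step: for each tuple $(H,\phi,\varphi)$ in the representation ${\cal S}'$ of ${\cal S}$, \autoref{def:restrictIsPartial} gives that $\restrict_{\ell,\eta}(H,\phi,\varphi)$ is an $(\ell,\eta)$-partial solution; taking its camera yields an $\eta$-snapshot in $\col_{\ell-1,\eta}$, hence — since $\ell - 1 \geq (s+1)\lambda - 1 > s\lambda - 1$ and $\col$ is monotone along the album (\autoref{cor:snapshotSetMonotone}), and since $s$ is a $\mu$-redundancy stamp with $\lfloor (\ell-1)/\lambda \rfloor$ landing in the redundant window — it already appears in $\col_{s\lambda-1,\eta} = \album_{\eta,\lambda}(s)$. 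Therefore each such snapshot is $(s\lambda-1,\eta)$-realizable, giving an $(s\lambda-1,\eta)$-partial solution $(H^\star, H', \phi', \varphi', \mathsf{out})$ which, by definition, is an $(\ell,s)$-miniature of $(H,\phi,\varphi)$. By \autoref{lem:patchExists} (using $\lambda \geq 2\eta+1$, which holds) there is a patch from this miniature to $(H,\phi,\varphi)$ that is $f$-empty for $f = \lambda - 2\eta \geq 2$. Splicing the miniature's part inside $\tilde G_{s\lambda-1}$ to the ``future part'' of $(H,\phi,\varphi)$ (the portion outside $\tilde G_{s\lambda-1}$, which is unchanged) via the $f$-empty patch produces a new realization of $H$; doing this for every $H$ in the folio yields a new representation ${\cal S}$ of a solution that agrees with ${\cal S}_1$ outside the workspace and whose vertices avoid all up-nooses of $\fr[i]$ for $i \in \{\lambda s - 1, \ldots, \lambda s + (\lambda - 2\eta) - 2\}$.

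The main obstacle I anticipate is verifying that the splice is globally consistent: each tuple $(H,\phi,\varphi)$ is patched independently, but the resulting paths for different edges of different pattern graphs $H$ must remain internally vertex-disjoint as required by the definition of a realization, and the patch paths (which live in the annulus $\bigcup_{i=s\lambda}^{\ell-1}\fr[i]$) must not collide with one another or with the unmodified outside parts. The vertex-disjointness of patches \emph{within} a single tuple is guaranteed by \autoref{lem:patchExists}, and disjointness of the outside parts is inherited from ${\cal S}$ being a genuine solution; the point requiring care is that the miniature parts and the future parts, being restrictions of the \emph{same} $(\ell,\eta)$-untangled solution ${\cal S}$, are automatically disjoint across all tuples, and the patch edges can be routed inside the annulus using the grid structure there (the large unused region guaranteed by $f$-emptiness and the $(\ell,\eta)$-untangledness, which limits to $\eta$ the number of nooses of $\fr[\ell]$ touched by each tuple) without interfering. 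Once this consistency bookkeeping is done — essentially re-running the argument of \autoref{def:restrictIsPartial} and \autoref{lem:patchExists} simultaneously over all tuples — the irrelevance of the up-nooses of $\fr[\lambda s - 1]$ follows immediately: any vertex in such an up-noose does not appear in ${\cal S}$, so deleting it does not destroy any element of the folio, and since ${\cal S}$ realizes the full $\delta^\star$-folio (the folio of $G$ equals the folio of $G$ minus these vertices), the extended $\delta$-folio is preserved by \autoref{obs:flatfolioonly}.
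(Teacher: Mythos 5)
Your proposal follows essentially the same route as the paper's proof: the redundancy stamp together with \autoref{obs:existsTermFreeFrame} to obtain a terminal-free frame at some $\ell\geq(s+1)\lambda$, \autoref{lem:untangledFrame} to untangle it, monotonicity of the snapshot collections to obtain an $(\ell,s)$-miniature, \autoref{lem:patchExists} to produce the $f$-empty patch, and a splice of miniature, patch and "future part" whose cross-tuple disjointness you correctly single out as the remaining bookkeeping (handled in the paper by the five-case construction of $\varphi_{\widehat H}$ and \autoref{claim:Hhattopo}). The only slip is in your instantiation of \autoref{lem:untangledFrame}: choosing $d_1=\lambda-1-r$ gives $d-d_1=r$ rather than the required strict inequality $d-d_1>r$, and the condition to verify is $4\beta+6<d_1$ with $\eta=4\beta$ (not $4\eta+6<d_1$, which in fact fails for your parameters); taking $\beta=12(r+2)$ and $d_1=4\beta+7$ as in the paper repairs both.
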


\begin{proof}
Let $\ell^{\star}=\lambda s-1$ and ${\cal S}_1$ be an arbitrary solution. 
By \autoref{obs:existsTermFreeFrame}, we know that there exists $j\in \{s+1,\ldots,s+\mu-1\}$ such that ${\cal S}_1$ is $(j\lambda,\lambda-1)$-terminal free. We fix $\ell\geq (s+1)\lambda$ such that ${\cal S}_1$ is $(\ell,\lambda-1)$-terminal free. Let $\beta=12(r+2)$ and $d_1=4\beta+7$. Let $d=\lambda-1> 50(r+5)-2$. 
 Then by \autoref{lem:untangledFrame}, there is an $(\ell,\eta)$-untangled solution ${\cal S}_2$.  Let ${\cal S}_2'$ be the representation of ${\cal S}_2$. From ${\cal S}_2'$ we construct a representation ${\cal S}'$ of a solution ${\cal S}$, which will not use any vertex from the up-nooses of $\fr[i]$ for all $i\in\{\lambda s-1, \lambda s,\ldots, \lambda s +(\lambda-2\eta)-2\}$ and that will conclude the proof. 
 Towards that for each $(H,\phi_2,\varphi_2)\in {\cal S}'_2$, we construct a tuple $(H,\phi,\varphi)$ which is a representation of 
a tuple $(H,\phi',\varphi')$, where $(\phi',\varphi')$ witnesses that $H$ is topological minor in $G$ and 
no vertex from any up-noose of $\fr[i]$ for $i\in\{\lambda s-1, \lambda s,\ldots, \lambda s +(\lambda-2\eta)-2\}$ is in the image of $\varphi'$.

Fix  a tuple $(H,\phi_2,\varphi_2)$ in ${\cal S}'_2$. Let $(H^{\star},H',\phi_2^{\star},\varphi_2^{\star},{\sf out})$ be an $(\ell,s)$-miniature of $(H,\phi_2,\varphi_2)$. Let $\extend_{\ell,\eta}(H,\phi_2,\varphi_2)=(\widehat{H},\widehat{\phi}_2,\widehat{\varphi}_2)$, and define $\restrict_{\ell,\eta}(H,\phi_2,\varphi_2)=(H^{\star}_2,H',\phi'_2,\varphi'_2, {\sf out})$ (Recall that because of \autoref{obs:graphsequal}, the second and fifth arguments of $\restrict_{\ell,\eta}(H,\phi_2,\varphi_2)$ and an $(\ell,s)$-miniature of $(H,\phi_2,\varphi_2)$ are same). Notice that $V(\widehat{H})=V(H_2^{\star})\cup V(H')$. Let $s=\vert V(H')\setminus V(H^{\star}_2)\vert=\vert V(H')\setminus V(H^{\star})\vert$ (see \autoref{obs:graphsequal}). 
 By \autoref{lem:patchExists}, there exists a patch ${\cal P}$ from  $(H^{\star},H',\phi_2^{\star},\varphi_2^{\star},{\sf out})$ to $(H,\phi_2,\varphi_2)$. Here,  ${\cal P}$ is a set of vertex-disjoint paths where for all $v\in V(H')\setminus V(H^{\star})$, there is a path $P_v$ between $\phi^\star_2(v)$ and $\phi'_2(v)=\widehat{\phi}_2(v)$ such that every internal vertex on $P_v$ is enclosed either by the noose in $\cal N$ that encloses $\phi^\star_2(v)$ or by a noose $N\in\fr[j]$ for some $j\in\{s\lambda,s\lambda+1,\ldots,\ell\}$. Further, no vertex from any up-noose of $\fr[i]$ for $i\in\{\lambda s-1, \lambda s,\ldots, \lambda s +(\lambda-2\eta)-2\}$ belongs to any path in the patch.


Now we define a pair $(\phi_{\widehat{H}},\varphi_{\widehat{H}})$ which witnesses that $\widehat{H}$ is a topological minor in $\tilde{G}\cup G$ and the set of vertices and edges used by $(\phi_{\widehat{H}},\varphi_{\widehat{H}})$ in $\tilde{G}^{\star}_{\ell-1}$ is a subset of the set of vertices and edges used by $(\phi_2^{\star},\varphi_2^{\star})$ and ${\cal P}$.  This will imply the correctness of the lemma. 

Since $\extend_{\ell,\eta}(H,\phi_2,\varphi_2)=(\widehat{H},\widehat{\phi}_2,\widehat{\varphi}_2)$  and $\restrict_{\ell,\eta}(H,\phi_2,\varphi_2)=(H^{\star}_2,H',\phi'_2,\varphi'_2, {\sf out})$ we know that $\widehat{H}$ is obtained by subdividing the edges in $H_2^{\star}$ and $H'$ is an induced subgraph of $\widehat{H}$. For any $v\in  V(H') \subseteq V(\widehat{H})$, we set $\phi_{\widehat{H}}(v)=\phi_2^{\star}(v)$. For any $v\in V(\widehat{H})\setminus V(H')$, $\phi_{\widehat{H}}(v)=\widehat{\phi}_2(v)$. 


Now we define $\varphi_{\widehat{H}}$. Fix an edge $\{u,v\}\in E(\widehat{H})$. Notice that either $\{u,v\}\in E(H')$ or $\{u,v\}\in E(\widehat{H})\setminus E(H')$.  We have the following cases.

\medskip
\noindent
{\bf Case 1: $\{\phi_{\widehat{H}}(u),\phi_{\widehat{H}}(v)\}\cap A\neq \emptyset$.}
We claim that $\{\phi_{\widehat{H}}(u),\phi_{\widehat{H}}(v)\}\in E(G)$. Recall that $(\widehat{H},\widehat{\phi}_2,\widehat{\varphi}_2)$ is constructed from  $\apex_{\ell,\eta}(H,\phi_2,\varphi_2)$ (see \autoref{def:extend}). This implies that $u$ and $v$ are terminal vertices in $\widehat{H}$ and $\{\widehat{\phi}_2(u),\widehat{\phi}_2(v)\}\in E(G)$. Since  $\{\phi_{\widehat{H}}(u),\phi_{\widehat{H}}(v)\}\cap A\neq \emptyset$, at least one of $\phi_{\widehat{H}}(u)$ or $\phi_{\widehat{H}}(v)$ is in $A$. Say $\phi_{\widehat{H}}(u)\in A$. Then $u\notin V(H')$. 
That  is, $\phi_{\widehat{H}}(u)=\widehat{\phi}_2(u)$. Suppose $v$ also does not belong to $V(H')$. Then $\phi_{\widehat{H}}(v)=\widehat{\phi}_2(v)$ and hence $\{\phi_{\widehat{H}}(u),\phi_{\widehat{H}}(v)\}=\{\widehat{\phi}_2(u),\widehat{\phi}_2(v)\}\in E(G)$. Suppose $v$ belongs to $V(H')$. 
Since ${\cal S}_2$ is $(\ell,\eta)$-untangled, $\{\widehat{\phi}_2(u),\widehat{\phi}_2(v)\}$ is an edge used by $\varphi_2$ and $\widehat{\phi}_2(u)\in A$, we have that $v\in V(H_2^{\star})\cap V(H')$. 
If $v\in R(H')$, then $\phi_{\widehat{H}}(v)=\phi_{2}^{\star}(v)=\phi'_2(v)$ and hence $\{\phi_{\widehat{H}}(u),\phi_{\widehat{H}}(v)\}\in E(G)$. Otherwise $\widehat{\phi}_2(u)\in {\sf out}(v)$. Also, since $(H^{\star},H',\phi_2^{\star},\varphi_2^{\star},{\sf out})$ is an  $(\ell,s)$-miniature of $(H,\phi_2,\varphi_2)$, we have that $\widehat{\phi}_2(u)=\phi^{\star}_2(u) \in N_G(\phi_2^{\star}(v))$. 
Hence $\{\phi_{\widehat{H}}(u),\phi_{\widehat{H}}(v)\}\in E(G)$. So we set $\varphi_{\widehat{H}}(\{u,v\})=\phi_{\widehat{H}}(u)-\phi_{\widehat{H}}(v)$. 

\medskip
\noindent
{\bf Case 2: $\{u,v\}\in E(H')$.}
In this case $\{\phi_{\widehat{H}}(u),\phi_{\widehat{H}}(v)\}\cap A= \emptyset$. Here we set $\varphi_{\widehat{H}}(\{u,v\})=\varphi^{\star}_2(\{u,v\})$.

\medskip
\noindent
{\bf Case 3: $u,v\in V(H')$ and $\{u,v\}\notin E(H')$.} 
In this case $\{\phi_{\widehat{H}}(u),\phi_{\widehat{H}}(v)\}\cap A= \emptyset$. 
Since $\{u,v\}\notin E(H')$ and $u,v\in V(H')$, $\widehat{\phi}_2(v)=\phi'_2(v)$ is  a vertex  present in  a noose $N_{v}$ in $\fr[\ell]$ and $\widehat{\phi}_2(u)$ is a vertex in a noose $N_u$ in $\fr[\ell]$. Moreover, all the vertices in the path $\widehat{\varphi}_2(\{u,v\})$ belong to nooses in $\bigcup_{\ell'\geq  \ell}\fr[\ell']$, because $(H,\phi_2,\varphi_2)$ is  in $S_2'$, the representation $(\ell,\eta)$-untangled solution ${\cal S}_2$.  Let $J_{u,v}$ the graph $P_u\cup P_v\cup \widehat{\varphi}_2(\{u,v\})$, which is connected graph. Choose a path $P_{u,v}$ from $J_{u,v}$ between $\phi^{\star}_2(u)$ and $\phi^{\star}_2(v)$. Then we set   $\varphi_{\widehat{H}}(\{u,v\})=P_{u,v}$.

\medskip
\noindent
{\bf Case 4: $u,v \in V(\widehat{H})\setminus V(H')$ and $\{\phi_{\widehat{H}}(u),\phi_{\widehat{H}}(v)\}\cap A= \emptyset$.}
In this case $\{u,v\}\in E(\widehat{H})\setminus E(H')$. 
The vertices $\widehat{\phi}_2(u)$ and $\widehat{\phi}_2(v)$ belong to  nooses  in $\bigcup_{\ell'>\ell}\fr[\ell']$. Moreover all the vertices in the path $\widehat{\varphi}_2(\{u,v\})$ belong to nooses in $\bigcup_{\ell'> \ell}\fr[\ell']$, because $(H,\phi_2,\varphi_2)$ is  
in $S_2'$, the representation $(\ell,\eta)$-untangled solution ${\cal S}_2$.  So we set $\varphi_{\widehat{H}}(\{u,v\})=\widehat{\varphi}_2(\{u,v\})$.

\medskip
\noindent
{\bf Case 5: $u\in V(\widehat{H})\setminus V(H')$, $v\in V(H')$, and $\{\phi_{\widehat{H}}(u),\phi_{\widehat{H}}(v)\}\cap A= \emptyset$.}
Since $u\in V(\widehat{H})\setminus V(H')$, $\extend_{\ell,\eta}(H,\phi_2,\varphi_2)=(\widehat{H},\widehat{\phi}_2,\widehat{\varphi}_2)$, $\restrict_{\ell,\eta}(H,\phi_2,\varphi_2)=(H^{\star}_2,H',\phi'_2,\varphi'_2, {\sf out})$, and $\{u,v\}\in E(\widehat{H})$, we have that  $v\notin H_2^{\star}$. That is $\widehat{\phi}_2(v)=\phi'_2(v)$ is  a vertex  present in  a noose $N_{v}$ in $\fr[\ell]$ and $\widehat{\phi}_2(u)$ is a vertex in a noose $N_u$ in $\bigcup_{\ell'>\ell}\fr[\ell']$. 
Moreover all the vertices in the path $\widehat{\varphi}_2(\{u,v\})$ belong to nooses in $\bigcup_{\ell'\geq  \ell}\fr[\ell']$, because ${\cal S}_2$ is an $(\ell,\eta)$-untangled solution. 
Let $J_{u,v}$ be the (connected) graph $\widehat{\varphi}_2(\{u,v\})\cup P_v$. Choose a path $P_{u,v}$ from $J_{u,v}$ between $\phi_{\widehat{H}}(u)=\widehat{\phi}_2(u)$ and $\phi_{\widehat{H}}(v)=\phi^{\star}_2(v)$. Then we set   $\varphi_{\widehat{H}}(\{u,v\})=P_{u,v}$. 


This completes the definition $\varphi_{\widehat{H}}$. Notice that in the above construction we used ``the portion of $\widehat{\varphi}_2$ which belong to $\bigcup_{\ell'\geq \ell}\fr[\ell']$'', the edges used by $\varphi'_2$ from 
$\bigcup_{\ell'\leq \ell^{\star}}\fr[\ell']$ and a subset of edges from the patch ${\cal P}$.  That is, any vertex in any up-noose in $\fr[\ell^{\star}]$ does not belong to the image of $\varphi_{\widehat{H}}$. Thus, if we prove that $(\widehat{H}, \phi_{\widehat{H}},\varphi_{\widehat{H}})$  is a representation of $(\widehat{H},f_{\widehat{H}},g_{\widehat{H}})$, where $(f_{\widehat{H}},g_{\widehat{H}})$ is a witness for $\widehat{H}$ being a topological minor in $G$, then we are done, because $\widehat{H}$ is subdivision of $H$.  Towards that, we first prove the following claim. 

\begin{claim}
\label{claim:Hhattopo}
$(\phi_{\widehat{H}},\varphi_{\widehat{H}})$ witnesses that $\widehat{H}$ is a topological minor in $\tilde{G}\cup G$.  
\end{claim}

\begin{proof}[Proof sketch]
Clearly for any $\{u,v\}\in E(\widehat{H})$,  the end-vertices of $\varphi_{\widehat{H}}(\{u,v\})$ are $\phi_{\widehat{H}}(u)$ and $\phi_{\widehat{H}}(v)$. Now we prove that the  paths in $\varphi_{\widehat{H}}(E(\widehat{H}))$ are internally vertex disjoint. This follows from the following facts. 
\begin{itemize}
\item The set of paths ${\cal P}$ are vertex disjoint and all the vertices in $V({\cal P})$  are from the  nooses in $\bigcup_{\ell^{\star}\leq j \leq \ell}\fr[j]$. 
\item $(H^{\star},H',\phi_2^{\star},\varphi_2^{\star}, {\sf out})$ is an $(\ell^{\star},\eta)$-partial solution and hence condition $(2)$ in \autoref{def:partial} is satisfied. That is, the paths in $\varphi^{\star}_2(E(H'))$ are internally vertex disjoint and the internal vertices are from $V(\tilde{G}_{\ell^{\star}-1})$. 
\item The set of paths $\widehat{\varphi}_2(E(\widehat{H}))\setminus \widehat{\varphi}_2(E(H'))$ are internally vertex disjoint and the internal vertices in these paths $\widehat{\varphi}_2(E(\widehat{H}))\setminus \widehat{\varphi}_2(E(H^{\star}))$ are  from  $V(G)\setminus (V(\tilde{G}_{\ell-1})\cup A)$. 
\item If an end-vertex of a path $P$ in $\varphi_{\widehat{H}}(E(\widehat{H}))$ is a vertex in $A$, then its length is one and hence $P$ has no internal vertex (see Case $1$ above). 
\end{itemize}
The paths in the first two items together are internally vertex disjoint. The vertices in paths in third item is disjoint from the the vertices of paths in the second item.  The vertices in paths in the first item may not be internally vertex disjoint from the paths in third item.  But, because ${\cal S}_2$ is $(\ell,\eta)$-untangled, if any two paths intersect, then one of the endpoints of those paths are same and this endpoint is an internal vertex only in one path in $\varphi_{\widehat{H}}(E(\widehat{H}))$. 
\end{proof}

From $(\phi_{\widehat{H}},\varphi_{\widehat{H}})$, we construct a pair $(f_{\widehat{H}},g_{\widehat{H}})$ such that it is a witness for $\widehat{H}$ being a topological minor in $G$ and that will complete the proof.  
%
%
%
We set $f_{\widehat{H}}=\phi_{\widehat{H}}$. 
By \autoref{claim:Hhattopo}, we know that $(\phi_{\widehat{H}},\varphi_{\widehat{H}})$ is witness for the topological minor  $\widehat{H}$ in $\tilde{G}\cup G$. 
If no edge of $E(\tilde{G})\setminus E(G)$ is used by  $\varphi_{\widehat{H}}$, then $g_{\widehat{H}}=\varphi_{\widehat{H}}$.  Otherwise we modify $\varphi_{\widehat{H}}$ to get $g_{\widehat{H}}$. Let 
$F\subseteq E(\tilde{G})\setminus E(G)$ be the set of edges outside $E(G)$, used by $\varphi_{\widehat{H}}$. 
Notice that the edge used by $\varphi_{\widehat{H}}$ can be partitioned into $E_1\uplus E_2\uplus E_3$ such that 
$(i)$ $E_3$ used by $\widehat{\varphi}_2$  and 
disjoint from $E(\tilde{G}_{\ell-1})$, 
 $(ii)$ $E_2$ is used by 
${\cal P}$ and 
disjoint from $E(\tilde{G}_{\ell^{\star}-1})$
and $(iii)$ $E_1$ is used by $\varphi^{\star}_2$.  

\begin{claim}
\label{claim:E3nice}
Let $Q_{\ell}=\bigcup_{ i \leq \ell}\fr[i]$ and $U_{\ell}= \bigcup_{N\in Q_{\ell}}\inNoose_{\tilde{G}}(N)\cap V(\tilde{G})$. There is no $G_i, i\in [k]$, such that $V(G_i)\cap U_{\ell}\neq \emptyset $ and $V(G_i)$ contain a terminal with respect to ${\cal S}_2$. 
\end{claim}
\begin{proof}
The proof follows from the fact that ${\cal S}_2$ is $(\ell,3)$-terminal free (because ${\cal S}$ is $(\ell,\eta)$-untangled).
\end{proof}

We partition the edges 
$F$ into $F_1\uplus F_2\uplus F_3$ such that $F_i=F\cap E_i$ for all $i\in [3]$.  For all  edges in $F_1$, $(a)$ there exist internally vertex disjoint paths where the internal vertices are from $V(G_i)\setminus V(G_0)$ some $G_i$ with $V(G_i)\cap V(G_0)\subseteq V(\tilde{G}_{\ell^{\star}-1})$ (see condition $(2)$ in \autoref{def:partial}). For 
all edges in $F_2$, $(b)$ there exist internal vertex disjoint paths where the internal vertices are from $V(G_i)\setminus V(G_0)$ for some $G_i$ with $V(G_i)$ contains the endpoints of the corresponding edges. Moreover the paths are internally vertex disjoint from the vertices used by the paths mentioned in statement $(a)$. 
For all edges in $F_3$, there exit internal vertex disjoint paths where the internal vertices are from $V(G_i)\setminus V(G_0)$ some $G_i$ with $V(G_i)$ contains the the endpoints of the corresponding edges (because $\widehat{\varphi}_2$ is derived from a representation of a solution). Moreover, because of \autoref{claim:E3nice}, the internal vertices in these paths are disjoint from the paths mentioned in statements $(a)$ and $(b)$. This completes the construction of $g_{\widehat{H}}$. Therefore $\widehat{H}$ is a topological minor in $G$, witnessed by $(f_{\widehat{H}},g_{\widehat{H}})$ and  $(\widehat{H},\phi_{\widehat{H}},\varphi_{\widehat{H}})$ is the  representation of $(\widehat{H},f_{\widehat{H}},g_{\widehat{H}})$. As  vertices in the up-nooses of $\fr[i]$ for all $i\in\{\lambda s-1, \lambda s,\ldots, \lambda s +(\lambda-2\eta)-2\}$ are not  used by $\varphi_{\widehat{H}}$, those vertices are also not used by $g_{\widehat{H}}$. This completes the proof of the lemma. 
\end{proof}

Notice that in Lemma \ref{lem:internalLayerTermFree}, the graph induced by the specified irrelevant set of vertices has a $(\lambda-2\eta)\times(\lambda-2\eta)$ grid as a minor.  Thus, we derive the following corollary to Lemma \ref{lem:internalLayerTermFree}.

\begin{corollary}\label{cor:internalLayerTermFree}
Let $(G,\delta,t,w',s')$ be an instance of \fFindFoliostar\ and $r=h(3\delta^{\star}+3t)$ be the constant mentioned in \autoref{cor:wrapped}.
Let $\mu>  2^{c'(\delta^{\star})^2}t$, where $c'$ is the constant mentioned in  \autoref{obs:existsTermFreeFrame}. 
Let $(M,{\cal N})$ be a $2q$-workspace in $\tilde{G}$ and $\eta,\lambda \in\mathbb{N}$, such that 
$q/\lambda-\mu+ 1>(t+\delta^{\star})^{c(t^2+t\delta^{\star})}\cdot\eta^{c\eta} \cdot 2\mu$, 
$\lambda$ divides $q$ and 
$\lambda\geq 50(r+5)$, 
where $c$ is the constant mentioned in \autoref{lem:redundantAlbumExistence}, $\eta=48(r+2)$.  Let $s$ be a $\mu$-redundancy stamp for the $(\eta,\lambda)$-album.  
Then, there exists a $w''\times w''$ flat wall within the input flat wall that is computable in polynomial time such that if ${\cal S}_1$ is a representation of a solution, then there also exists a representation $\cal S$ that is identical to ${\cal S}_1$ outside the input flat wall and which does not consist of any vertex of the output $w''\times w''$ flat wall, where $w''=\frac{\lambda}{2}-\eta$.
\end{corollary}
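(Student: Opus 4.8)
The plan is to derive the corollary directly from \autoref{lem:internalLayerTermFree}, by packaging the ``irrelevant strip of up-nooses'' delivered by that lemma into a genuine flat subwall. First I note that the hypotheses of the corollary are exactly those of \autoref{lem:internalLayerTermFree}, so I would begin by applying it to obtain the set $Z$ of all vertices lying in $\inNoose_{\tilde{G}}(N)$ for some up-noose $N\in\fr[i]$ with $i\in\{\lambda s-1,\lambda s,\ldots,\lambda s+(\lambda-2\eta)-2\}$, together with the conclusion that for every representation ${\cal S}_1$ of a solution there is a representation ${\cal S}$ which agrees with ${\cal S}_1$ outside the workspace and uses no vertex of $Z$. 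Since $(M,{\cal N})$ lies inside $\tilde{G}$, which is part of the flatness data of the input wall (so $V(\tilde{G})\subseteq V(B')$), ``identical outside the workspace'' implies ``identical outside the input flat wall''.

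Second, I would exhibit a $(\lambda-2\eta)\times(\lambda-2\eta)$ grid as a minor of $\tilde{G}[Z]$. By \autoref{def:orderedFrame}, the up-nooses of $\fr[i]$ are precisely the nooses $N_{q-i,j}$ with $q-i\le j\le q+i+1$; as $i$ runs through the $\lambda-2\eta$ values of the window above, the rows $q-i$ are consecutive, and the column ranges are nested, the narrowest being that of $i=\lambda s-1$, which contains $2\lambda s\ge\lambda-2\eta$ nooses. Keeping any $\lambda-2\eta$ of those columns yields a $(\lambda-2\eta)\times(\lambda-2\eta)$ sub-noose-grid of ${\cal N}$ all of whose nooses lie inside $Z$. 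Contracting, for each such noose $N$, the subgraph $\tilde{G}[\inNoose_{\tilde{G}}(N)\cap V(\tilde{G})]$ (connected by property~(c) of \autoref{def:nooseGrid}) and using the edges between consecutive nooses supplied by property~(d) gives the grid minor; the minor model is written down explicitly from ${\cal N}$, so this costs polynomial time. Then by \autoref{obs:gridToWall} this grid contains a $(w''\times w'')$-wall as a minor with $w''=\frac{\lambda}{2}-\eta$, and by \autoref{obs:wallminortosub} a $(w''\times w'')$-wall $W''$ with $V(W'')\subseteq Z\subseteq V(\tilde{G})$ can be computed in polynomial time.

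Third, I would verify that $W''$ is a flat wall within the input flat wall. As $W''$ lies in the plane graph $\tilde{G}$, strictly inside $\fr[\lambda s-1]$ and disjoint from $\fr[i]$ for every $i\ge\lambda s+\lambda-2\eta-1$, I would choose a noose $N^{\star}$ in $\tilde{G}$ encircling exactly the nooses of ${\cal N}$ that meet $W''$, with the pegs of $W''$ appearing on $N^{\star}$ in the cyclic order inherited from the outer boundary of $W''$, and take as flatness tuple for $W''$ the restriction of the input tuple $(A',B',C,\tilde{G},G_0,\dots,G_k)$ to the part of $G$ enclosed by $N^{\star}$: the apex set $A$ is reused, the plane graph is $\tilde{G}[\inNoose_{\tilde{G}}(N^{\star})\cap V(\tilde{G})]$ with outer cycle read off from $N^{\star}$, and one retains precisely those protrusion graphs $G_i$ whose attachment $V(G_i)\cap V(G_0)$ is enclosed by $N^{\star}$ (no $G_i$ can straddle $N^{\star}$, since $V(G_i)\cap V(G_0)$ is a clique in the plane graph $\tilde{G}$). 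All seven conditions of \autoref{def:KWT} and \autoref{obs:cflatmore} are then inherited essentially verbatim. Finally, since $V(W'')\subseteq Z$, the representation ${\cal S}$ from the first step uses no vertex of $W''$, which is the required conclusion. I expect the grid-minor extraction and the reroute to be immediate from the quoted statements, and the main (though routine) obstacle to be the flat-wall bookkeeping of this last paragraph: selecting $N^{\star}$ so that the restricted tuple is a valid flatness tuple with $W''$'s pegs correctly placed on its boundary, and confirming that $W''$ sits ``within'' the input flat wall in the sense demanded by the recursion of \autoref{thm:main1Flatstar}.
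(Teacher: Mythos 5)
Your proposal follows exactly the paper's route: the paper derives this corollary from \autoref{lem:internalLayerTermFree} by the one-line remark that the irrelevant set of up-noose vertices induces a $(\lambda-2\eta)\times(\lambda-2\eta)$ grid minor, from which the $w''\times w''$ flat wall is extracted. Your write-up just supplies the grid/wall extraction and flatness bookkeeping that the paper leaves implicit, and it is correct.
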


\section{Proof of Theorem~\ref{thm:main1Flatstar}}

%

We are now ready to prove \autoref{thm:main1Flatstar}. For the sake of clarity, let us recall its statement in more detail. 


\begin{theorem}
\label{thm:flatdetail}
There is a computable function $g$ and an algorithm that, given an instance $(G,\delta,t,w',s')$ of \fFindFoliostar\ such that  $|R(G)|\leq \boundary$ and $w'\geq g(\delta^{\star},t)$, and $w''\in\mathbb{N}$, finds a $w''\times w''$ flat wall within the input $w'\times w'$ flat wall such that if ${\cal S}'$ is a representation of the solution, then there also exists a representation $\cal S$ that is identical to ${\cal S}'$ outside the input flat wall and does not use any vertex of the output inner flat wall, which runs in time 
$ 2^{2^{\OO(((t+\delta^{\star})^2+r)\log (t+\delta^{\star}+r))}}(s')^{\OO(s')}(w'')^{\OO(w'')} n$
where  $g(\delta^{\star},t)=(t+\delta^{\star}+r)^{\OO((t+\delta^{\star})^2+r)}w''$  and $r=h(\delta^{\star}+t)$.  (In particular, the set of all vertices of the output inner flat wall is irrelevant.)
\end{theorem}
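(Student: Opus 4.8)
The plan is to tie together the machinery developed in Sections~\ref{sec:workspace}--\ref{sec:album}. The algorithm proceeds as follows. Given the instance $(G,\delta,t,w',s')$ of \fFindFoliostar\ and $w''$, set $r=h(\delta^{\star}+t)$, $\eta=48(r+2)$, $\lambda=\max\{50(r+5),\,2(w''+\eta)\}$ (rounded up so that $\lambda$ divides the workspace size we will build), and $\mu=2^{c'(\delta^{\star})^2}t+1$ with $c'$ the constant of \autoref{obs:existsTermFreeFrame}. We need a $2q$-workspace of $\tilde{G}$ with $q/\lambda-\mu+1 > (t+\delta^{\star})^{c(t^2+t\delta^{\star})}\eta^{c\eta}\cdot 2\mu$, so choose $q$ accordingly (a function of $\delta^{\star},t$ only); this forces $w'\geq g(\delta^{\star},t)=(t+\delta^{\star}+r)^{\OO((t+\delta^{\star})^2+r)}w''$ via the size blow-up in \autoref{lem:workspace} together with \autoref{obs:wlalltogrid}. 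Since we are given a $w'\times w'$ flat wall in $G\setminus A$, $\tilde{G}$ contains a large grid minor (\autoref{lem:largeWall}), hence $\tw(\tilde{G})$ is large, so \autoref{lem:workspace} returns a $(p,2q)$-workspace $(M,{\cal N})$ of $\tilde{G}$ (with $p=\OO(q)$) rather than a bounded-width tree decomposition.

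Next, apply \autoref{lem:redundantAlbumExistence}: since $q/\lambda-\mu+1$ exceeds the stated threshold, the $(\eta,\lambda)$-album is $\mu$-redundant, and we compute a $\mu$-redundancy stamp $s\in[q/\lambda-\mu+1]$ in time $((t+\delta^{\star})^{\OO(t^2+t\delta^{\star})}\eta^{\OO(\eta)}\mu)\,\Delta^{\OO(\Delta)}n$ with $\Delta=\max\{p,s'+3\}$. Now invoke \autoref{cor:internalLayerTermFree} with this $s$: it produces, in polynomial time, a $w''\times w''$ flat wall inside the input flat wall (namely, the subwall carried by the up-nooses of the frames $\fr[\lambda s-1],\dots,\fr[\lambda s+(\lambda-2\eta)-2]$, which by the choice $\lambda-2\eta\geq 2w''$ contains a $w''\times w''$ subwall) with the property that every representation ${\cal S}'$ of a solution can be rerouted to a representation ${\cal S}$ agreeing with ${\cal S}'$ outside the input wall and avoiding the output wall. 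By \autoref{obs:flatfolioonly} and the fact that ${\cal S}$ realizes the same $\delta^{\star}$-folio, the set of all vertices of the output wall is irrelevant. It remains to bound the running time: the workspace computation costs $p^{\OO(1)}n$, the redundancy-stamp computation costs $(t+\delta^{\star})^{\OO(t^2+t\delta^{\star})}\eta^{\OO(\eta)}\mu\cdot\Delta^{\OO(\Delta)}n$, and \autoref{cor:internalLayerTermFree}'s wall extraction is polynomial. Substituting $\eta=\Theta(r)$, $\mu=2^{\OO((\delta^{\star})^2)}t$, $p=\OO(q)=(t+\delta^{\star}+r)^{\OO((t+\delta^{\star})^2+r)}$, and $\Delta=\max\{p,s'+3\}$, and using $w''\leq \lambda = \OO(r+w'')$ for the $(w'')^{\OO(w'')}$ factor, the total is $2^{2^{\OO(((t+\delta^{\star})^2+r)\log(t+\delta^{\star}+r))}}(s')^{\OO(s')}(w'')^{\OO(w'')}n$, as claimed.

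The conceptual heart of the argument is already encapsulated in the chain \autoref{lem:untangledFrame} $\to$ \autoref{lem:patchExists} $\to$ \autoref{lem:internalLayerTermFree}; the proof of this theorem is essentially a bookkeeping exercise assembling those pieces with the correct constants. The main obstacle in writing it out cleanly is verifying the arithmetic consistency of the parameter choices: $\eta$, $\lambda$, $\mu$, $q$, $p$, and $w''$ must simultaneously satisfy the hypotheses of \autoref{lem:workspace} ($2<q\leq p/2c$), \autoref{lem:redundantAlbumExistence} ($q/\lambda-\mu+1$ above threshold, $\lambda\mid q$), \autoref{lem:internalLayerTermFree} ($\lambda\geq 50(r+5)$, $\eta=48(r+2)$, the album threshold, $\mu>2^{c'(\delta^{\star})^2}t$), and the requirement $\lambda-2\eta\geq 2w''$ so that the extracted wall has the demanded size — all while keeping $q$, and hence $g(\delta^{\star},t)$, within the stated bound. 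One has to pick $\lambda$ as a function of both $r$ and $w''$, then $q$ as a (large) function of $\lambda,\mu$ and the $\eta$-snapshot count, and finally check that $g(\delta^{\star},t)$ grows only like $(t+\delta^{\star}+r)^{\OO((t+\delta^{\star})^2+r)}w''$; the $w''$ dependence is linear because it enters only through $\lambda$, which enters $q$ linearly. A secondary point requiring care is the translation between ``representation of a solution'' (the object \autoref{cor:internalLayerTermFree} manipulates) and the actual $\delta^{\star}$-folio, but this is handled verbatim by the representation/realization correspondence already set up before Section~\ref{sec:disjirr} (via \autoref{obs:cflatmore}, Condition~\ref{conditionsix}), so no new idea is needed there.
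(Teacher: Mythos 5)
Your proposal is correct and follows essentially the same route as the paper's own proof: the same parameter choices ($\eta=48(r+2)$, $\lambda=\max\{50(r+5),2w''+2\eta\}$, $\mu\approx 2^{c'(\delta^{\star})^2}t$, $q$ minimal above the album threshold with $\lambda\mid q$), the same chain \autoref{lem:workspace} $\to$ \autoref{lem:redundantAlbumExistence} $\to$ \autoref{cor:internalLayerTermFree}, and the same running-time accounting. Your explicit appeal to \autoref{cor:internalLayerTermFree} for extracting the $w''\times w''$ subwall is, if anything, slightly cleaner than the paper's phrasing, which still refers to outputting a single vertex via \autoref{lem:internalLayerTermFree}.
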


\begin{proof}
Let $k'=3(\delta^{\star}+t)$ and $r=h(k')$ be the constant mentioned in \autoref{cor:wrapped}.
 That is, $r$ depends on $\delta$ and $t$. Fix $\lambda=\max\{50(r+5),2w''+2\eta\}$, $\eta=48(r+2)$, and 
$\mu= 2^{c'(\delta^{\star})^2}t$, where $c'$ is the constant mentioned in  \autoref{obs:existsTermFreeFrame}. 
Let $q$ be the least integer such that 
$q/\lambda-\mu+ 1>(t+\delta^{\star})^{c(t^2+t\delta^{\star})}\cdot\eta^{c\eta} \cdot 2\mu$
 and $\lambda$ divides $q$, where 
$c$ is the constant mentioned in  \autoref{lem:redundantAlbumExistence}. 
 That is, $q\geq g(\delta^{\star},t)$ for some computable function $g$ such that $g(\delta^{\star},t)=(t+\delta^{\star})^{\OO((t+\delta^{\star})^2)}r^{\OO(r)}=(t+\delta^{\star}+r)^{\OO((t+\delta^{\star})^2+r)}$.  

Let $\widehat{c}$ be the constant mentioned in \autoref{lem:workspace}. Let $p=2\widehat{c}q+1$ and $w'=2\widehat{c}p$. 
We  are given a $w'\times w'$ flat wall in $G\setminus A$.  Thus by \autoref{obs:wlalltogrid}, there is a ${w'}\times {w'}$-grid as a minor in $\tilde{G}$. This implies that the treewidth of $\tilde{G}$ strictly more than $\widehat{c}p$.  So we apply \autoref{lem:workspace} and get a $(p,q)$-workspace of $\tilde{G}$ in time $p^{\OO(1)}n$.  Next we apply \autoref{lem:redundantAlbumExistence}, and compute a $\mu$-redundancy stamp $s$ in time 
$((t+\delta^{\star})^{\OO(t^2+t\delta^{\star})}\cdot\eta^{\OO(\eta)}\cdot \mu) \Delta^{\OO(\Delta)} n$
 where $\Delta=\max \{p,s'+3\}$.
Then because of \autoref{lem:internalLayerTermFree}, we output 
a vertex in an up-noose of $\fr[s\lambda -1]$. Therefore, the total running time of the algorithm is  
$((t+\delta^{\star})^{\OO(t^2+t\delta^{\star})}\cdot\eta^{\OO(\eta)}\cdot \mu)  
p^{\OO(p)} (s')^{\OO(s')} n= 2^{2^{\OO(((t+\delta^{\star})^2+r)\log (t+\delta^{\star}+r))}}(s')^{\OO(s')} (w'')^{\OO(w'')}n$. 
This completes the proof of the theorem. 
\end{proof}

\section{Final Argument for Flat Walls: Proof of \autoref{thm:finalFlatWall}}

\label{sec:finallyend}

Finally, based on 
\autoref{thm:flatdetail},  we are ready to prove \autoref{thm:finalFlatWall}. For the sake of clarity, let us recall its statement. 


\dellargewall*


\begin{proof}
Let $\widehat{g}$ be the computable function in \autoref{thm:flatdetail}
 where $w''=1$. We first describe the algorithm. Given $k\in\mathbb{N}$ and an instance $(G,\delta,t,w',s')$ of \fFindFolio\ such that  $|R(G)|\leq \boundary$ and $w'\geq (\widehat{g}(\delta^{\star},t))^{k+2}$, it works as follows. For every $i\in\{1,2,\ldots,k+2\}$, let $w_i=(\widehat{g}(\delta^{\star},t))^{k+3-i}$. Define the first flat wall as the input flat wall. Then, for every $i\in\{1,2,\ldots,k+1\}$, execute the following:
 Apply the algorithm in 
 \autoref{thm:flatdetail}
  with $w'$ (in that statement) being $w_i$, the $i$-th flat wall, and $w''=w_{i+1}$, and call the output the $(i+1)$-flat wall. (Here, the first call is valid because $w'\geq w_1$.) Eventually, output any vertex $v$ in the $(k+2)$-flat wall as the $(\delta,k)$-irrelevant vertex. Notice that $w_{k+2}\geq 1$, thus such a vertex exists.

The running time is upper bounded by the running time to perform $(k+1)$ calls to the algorithm in 
\autoref{thm:flatdetail}
with $w''$ being upper bounded by $w_1$. As the time to perform one such call is bounded by $2^{2^{\OO(k((t+\delta^{\star})^2+r)\log (t+\delta^{\star}+r))}}(s')^{\OO(s')}n$, the running time stated in the theorem follows (the factor $k+1$ is subsumed by the $\OO$ notation in the exponent).

We now argue that the algorithm is correct, that is,  $v$ is a $(\delta,k)$-irrelevant vertex. Targeting towards a contradiction, suppose that $v$ is not a $(\delta,k)$-irrelevant vertex. Then, there exists a set $S\subseteq V(G)$ of size at most $k$ such that $v$ is not an irrelevant vertex to the extended $\delta$-folios of $G\setminus S$.
Thus, by \autoref{obs:flatfolioonly}, the $\delta^\star$-folio of $G\setminus S$ and $G\setminus (S\cup\{v\})$ are different (clearly, the first is a superset of the second). Let $\cal S$ be the solution for the instance $(G\setminus S,\delta,t,w',s')$ of \fFindFoliostar, and let ${\cal S}'$ be a representation of it. 
 By the pigeon-hole principle, there exists $i\in\{1,2,\ldots,k\}$ such that $S$ does not contain any vertex in the difference between the $i$-th flat wall and the $(i+1)$-th flat wall. Then, by the correctness of the $i^{th}$ call to the algorithm in 
 \autoref{thm:flatdetail}, 
  there exists a representation ${\cal S}^\star$ of ${\cal S}$ with respect to $(G,\delta,t,w',s')$ that is the same as ${\cal S}'$ outside the $i$-th flat wall and does not use any vertex of the $(i+1)$-th flat wall. Because ${\cal S}^\star$ is the same as ${\cal S}'$ outside the $i$-th flat wall, it does not use any vertex of $S$ that is outside the $i$-flat wall. Further, because $S$ does not contain any vertex in the difference between the $i$-th flat wall and the $(i+1)$-th flat wall, it is trivial that ${\cal S}^\star$ does not use any vertex of $S$ that belongs to this difference (there is no such vertex), and because ${\cal S}^\star$ does not use use any vertex of the $(i+1)$-th flat wall, it is also trivial that is does not use any vertex of $S$, as well as $v$, that belongs to the $(i+1)$-th flat wall. Overall, ${\cal S}^\star$ does not use any vertex of $S\cup\{v\}$. Thus, ${\cal S}^\star$ is a representation of ${\cal S}$ with respect to $(G\setminus (S\cup\{v\},\delta,t,w',s')$, which is a contradiction.
\end{proof}

\section{Conclusion and Future Directions}\label{sec:conclusion}
In this paper we established fixed-parameter tractability of \TMH.
The {immersion} relation~\cite{RobertsonS04} is another well-studied graph containment relation with strong ties to graph minors. 
%
%
 %
For the immersion relation, an  \FPT{} algorithm for  
$\Pi$-{\sc Edge Deletion} (deleting $k$ edges instead of $k$ vertices) for all $\preceq_{im}$-closed properties follows from the fact that graphs are well-quasi ordered by immersion~\cite{RobertsonS04} and that there exists an $f(H)\cdot n^{\OO(1)}$ time algorithm for deciding whether a given graph $H$ is a immersion of $G$ ~\cite{DBLP:conf/stoc/GroheKMW11}.  
We believe that our methods could be useful in designing an \FPT\ algorithm for $\Pi$-{\sc Deletion} for all $\preceq_{im}$-closed properties. However, for now this is just speculation, and we leave this as an interesting open problem.

The running time dependence $f(h^\star,k)$ of our algorithm for \TMH\ is humongous. This should not come as a surprise---\TMH\ is a generalization of \textsc{Minor Deletion}, thus our algorithm also solves   $\Pi$-{\sc Deletion}  for every {minor-closed} property $\Pi$. On the other hand, for some special but still interesting  cases of  \textsc{Minor Deletion}, such as when the family of forbidden minors contains a planar graph,  \textsc{Minor Deletion} is solvable in time bounded by a single-exponential function of $k$~\cite{FominLMS12,KLPRRSS16}. It would be interesting if such type of results could be obtained for \TMH\ when some constraints are imposed on the graphs in the family of forbidden topological minors. 

%
\bibliographystyle{siam}
\bibliography{Refs,RefsMZ,ff,kernels,referencesPDeletion,RefFP}

\appendix

\end{document}